\documentclass[a4paper,UKenglish,leqno]{article}

\usepackage[utf8]{inputenc}
\usepackage{authblk}
\usepackage[letterpaper, top=25.4mm, bottom=25.4mm, left=25.4mm, right=25.4mm, includefoot]{geometry}
\usepackage{amsmath,amssymb,mathtools,stmaryrd}
\usepackage{enumerate}
\usepackage{xcolor}

\usepackage{imakeidx}
\makeindex
\makeindex[name=Symbol,title=Symbols]

\newcommand{\Symbol}[1]{\index[Symbol]{#1}}
\newcommand{\Index}[1]{\index{#1}}
\newcommand{\mIndex}[2][]{\ifthenelse{\equal{#1}{}}{\index{#2}}{\index{#1@#2}}\marginpar{#2}}

\newcommand{\mSymbol}[2][]{\ifthenelse{\equal{#1}{}}{\index[Symbol]{#2}}{\index[Symbol]{#1@#2}}}%

\usepackage{hyperref}
\definecolor{dark-blue}{rgb}{0.05,0.25,0.85}
\hypersetup{colorlinks=true,linkcolor=dark-blue,citecolor=dark-blue}

\usepackage{caption,subcaption}

\newcommand{\N}{\mathbb{N}}

\newcommand{\AAA}{\mathcal{A}}

\newcommand{\CCC}{\mathcal{C}}

\newcommand{\EEE}{\mathcal{E}}

\newcommand{\FFF}{\mathcal{F}}
\newcommand{\GGG}{\mathcal{G}}
\newcommand{\HHH}{\mathcal{H}}

\newcommand{\LLL}{\mathcal{L}}

\newcommand{\PPP}{\mathcal{P}}
\newcommand{\QQQ}{\mathcal{Q}}
\newcommand{\RRR}{\mathcal{R}}
\newcommand{\SSS}{\mathcal{S}}

\newcommand{\WWW}{\mathcal{W}}

\renewcommand{\SS}{\mathfrak{S}}

\newcommand{\bd}{\textup{bd}}

\newcommand*{\Abs}[1]{| #1 |}

\newcommand{\cut}{\operatorname{Cut}}

\newcommand{\Chubby}{\operatorname{Chubby}}
\newcommand{\m}{\mathfrak{m}}
\newcommand{\n}{\mathfrak{n}}

\newcommand{\ebw}[1]{\mathrm{cw}(#1)}

\newcommand{\leaves}[1]{\mathrm{leaves}(#1)}
\newcommand{\loops}{\mathrm{loops}}
\newcommand{\ol}{\mathrm{outl}}

\newcommand{\tw}[1]{\mathrm{tw}(#1)}

\newcommand{\w}{\operatorname{width}}

\newcommand{\head}{\operatorname{head}}
\newcommand{\tail}{\operatorname{tail}}

\newcommand{\rt}{\operatorname{r}}
\newcommand{\dom}{\operatorname{dom}}
\newcommand{\stitch}{\operatorname{stitch}}
\newcommand{\knit}{\operatorname{knit}}
\newcommand{\sth}{\mathrel : }
\renewcommand{\mid}{\sth}
\newcommand{\lenks}[1]{\mathrm{left}(#1)}
\newcommand{\riets}[1]{\mathrm{right}(#1)}

\newcommand{\spl}{\operatorname{split}}

\newcommand\restr[2]{{%
  \left.\kern-\nulldelimiterspace %
  #1 %
  \vphantom{|} %
  \right|_{\text{\scriptsize $#2$}} %
  }}
\newcommand{\fw}{\textup{rep}}
\usepackage{xcolor}
\usepackage{refcount}

\makeatletter
\renewcommand\footnotesize{%
   \@setfontsize\footnotesize\@ixpt{11}%
   \abovedisplayskip 8\p@ \@plus2\p@ \@minus4\p@
   \abovedisplayshortskip \z@ \@plus\p@
   \belowdisplayshortskip 4\p@ \@plus2\p@ \@minus2\p@
   \def\@listi{\leftmargin\leftmargini
               \topsep 4\p@ \@plus2\p@ \@minus2\p@
               \parsep 2\p@ \@plus\p@ \@minus\p@
               \itemsep \parsep}%
   \belowdisplayskip \abovedisplayskip
}
\makeatother

\usepackage[amsmath,thmmarks,hyperref]{ntheorem}
\usepackage{cleveref}
\crefformat{page}{#2page~#1#3}%
\Crefformat{page}{#2Page~#1#3}%
\crefformat{equation}{#2(#1)#3}%
\Crefformat{equation}{#2(#1)#3}%
\crefformat{figure}{#2Figure~#1#3}%
\Crefformat{figure}{#2Figure~#1#3}%
\crefformat{section}{#2Section~#1#3}
\Crefformat{section}{#2Section~#1#3}
\crefformat{chapter}{#2Chapter~#1#3}
\Crefformat{chapter}{#2Chapter~#1#3}
\crefformat{chapter*}{#2Chapter~#1#3}
\Crefformat{chapter*}{#2Chapter~#1#3}
\crefformat{part}{#2Part~#1#3}
\Crefformat{part}{#2Part~#1#3}
\crefformat{enumi}{#2(#1)#3}
\Crefformat{enumi}{#2(#1)#3}

\theoremnumbering{arabic}
\theoremstyle{plain}
\theoremsymbol{}
\theorembodyfont{\itshape}
\theoremheaderfont{\bfseries\scshape}
\theoremseparator{.}

\newtheorem{theorem}{Theorem}[section]
\crefformat{theorem}{#2Theorem~#1#3}
\Crefformat{theorem}{#2Theorem~#1#3}

\newcommand{\newtheoremwithcrefformat}[2]{%
  \newtheorem{#1}[theorem]{\textsc{#2}}%
  \crefformat{#1}{##2\MakeUppercase#1~##1##3}%
  \Crefformat{#1}{##2\MakeUppercase#1~##1##3}%
}

\newtheoremwithcrefformat{proposition}{Proposition}
\newtheoremwithcrefformat{observation}{Observation}
\newtheoremwithcrefformat{lemma}{Lemma}
\newtheoremwithcrefformat{conjecture}{Conjecture}
\newtheoremwithcrefformat{corollary}{Corollary}

\theorembodyfont{\upshape}
\newtheoremwithcrefformat{example}{Example}
\newtheoremwithcrefformat{definition}{Definition}
\newtheorem*{remark}{Remark}

\theoremsymbol{\ensuremath{\square}}
\theoremheaderfont{\scshape}
\theorembodyfont{\normalfont}
\theoremsymbol{\ensuremath{\square}}
\newtheorem*{proof}{Proof}

\theoremheaderfont{\slshape}
\theoremsymbol{}
\theorembodyfont{\itshape}
\newtheorem{claim}{Claim}[theorem]
\theorembodyfont{\normalfont}
\theoremsymbol{\ensuremath{\diamond}}
\newtheorem*{claimproof}{Proof}

\usepackage{tikz}
\usetikzlibrary{arrows}
\usetikzlibrary{shapes.geometric}
\usetikzlibrary{positioning,arrows.meta}

\title{Well-Quasi-Ordering Eulerian Digraphs Embeddable in Surfaces by Strong Immersion}

\DeclareRobustCommand{\authorthing}{
	\begin{center}
		Dario Cavallaro\thanks{\texttt{d.cavallaro@tu-berlin.de}} \\
		{\small Technical University Berlin, Germany} \\

		  \medskip
		Ken-ichi Kawarabayashi \thanks{\texttt{k\_keniti@nii.ac.jp} }\\
		{\small National Institute of Informatics, Japan}\\
            {\small The University of Tokyo, Japan}\\
		\medskip
		Stephan Kreutzer \thanks{\texttt{stephan.kreutzer@tu-berlin.de}} \\
		{\small Technical University Berlin, Germany} \\

\end{center}}
\author{\authorthing}

 \date{}
\begin{document}
\maketitle

\begin{abstract}
We prove that for every surface $\Sigma$, the class of Eulerian directed graphs that are Eulerian embeddable into $\Sigma$ (in particular they have degree at most $4$) is well-quasi-ordered by strong immersion. This result marks one of the most versatile directed graph classes (besides tournaments) for which we are aware of a positive well-quasi-ordering result regarding a well-studied graph relation. 

Our result implies that the class of bipartite circle graphs is well-quasi-ordered under the pivot-minor relation. Furthermore, 
this also yields two other interesting applications, namely, a polynomial-time algorithm for testing immersion closed properties of Eulerian-embeddable graphs into a fixed surface, and a characterisation of the Erd\H{o}s-P\'osa property for Eulerian digraphs of maximum degree four.

Further, in order to prove the mentioned result, we prove that Eulerian digraphs of carving width bounded by some constant $k$ (which correspond to Eulerian digraphs with bounded treewidth and additionally bounded degree) are well-quasi-ordered by strong immersion. We actually prove a stronger result where we allow for vertices of the Eulerian digraphs to be labeled by elements of some well-quasi-order $\Omega$. We complement these results with a proof that the class of Eulerian planar digraphs of tree-width at most $3$ is not well-quasi-ordered by strong immersion, noting that any antichain of bounded treewidth cannot have bounded degree.

\end{abstract}
\section{Introduction}

Well-quasi-orderings of mathematical objects have been studied for many decades in various mathematical as well as computer science research areas \cite{kruskal72,wqo_lang,Higman1952,wqo_survey}. In general, given a set of objects~$V$ and a \emph{partial order}~$\preceq$ which is a subset of $V \times V$, that is, a binary relation that is reflexive and transitive, we call~$(V,\preceq)$ a \emph{partially ordered set}. Then~$V$ is called \emph{well-quasi-ordered} if for every infinite sequence~$e_1,e_2,\ldots$ of objects~$e_i \in V$, there exist~$j_1 < j_2 $ such that~$e_{j_1} \preceq e_{j_2}$. 

In 1960, Kruskal \cite{Kru60} proved one of the first impactful graph-theoretic results regarding well-quasi-ordering, namely that trees are well-quasi-ordered by topological containment. (The exact statement is stronger and more complex, and so is its proof.) Later, Nash-Williams \cite{nash63} provided a simplified proof that finite trees are well-quasi-ordered by topological containment. 

In the late $20$th century, Robertson and Seymour finally gave a proof of Wagner's Conjecture \cite{GMXX}, stating that undirected graphs are well-quasi-ordered by the minor relation---$H$ is a \emph{minor} of $G$ if it can be obtained from a subgraph of $G$ by contracting edges---marking a major breakthrough in the area. The proof of this result culminated in the creation of a whole field of research: graph minor theory, laying the groundwork for more general structural graph theory. 

Besides the graph minor relation, another prominent example of a relation between graphs is the \emph{immersion} relation. We say that a graph~$H$ \emph{immerses} in a graph~$G$ if there exists a map~$\gamma : H \to \{P \mid P \subseteq G\}$ such that~$\restr{\gamma}{V(H)}\colon V(H) \to V(G)$ is injective and~$\restr{\gamma}{E(H)} \colon E(H) \to \{P \mid P \text{ is a path in } G\}$ such that~$P \coloneqq \gamma(\{u,v\})$ is a path starting in~$\gamma(u)$ and ending in~$\gamma(v)$, and such that for two distinct edges $e_1,e_2 \in E(H)$, the paths $\gamma(e),\gamma(e')$ are edge-disjoint. We call $\gamma$ an \emph{immersion}. 

The immersion $\gamma$ is \emph{strong} if no path $\gamma(e)$ contains a vertex of $\gamma(V(G))$ as an internal vertex. We write $\gamma\colon H \hookrightarrow G$ to mean that $H$ strongly immerses in $G$. 
An immersion that is not strong is called \emph{weak}.
Immersions and strong immersions for directed graphs are defined analogously, mapping directed edges to directed paths.  

Later on, still part of their seminal work on graph minors, Robertson and Seymour derived Nash-William's Conjecture from their work. That is, they  proved that undirected graphs are well-quasi-ordered by \emph{immersion} \cite{GMXXIII}, marking another breakthrough in the area. 

Their proof uses weak immersions and, as Robertson and Seymour discussed in \cite{GMXXIII}, their framework cannot easily be adapted to prove well-quasi-ordering of undirected graphs by \emph{strong} immersion. In fact they claim that a proof of the strong immersion result is way more complicated: ``\textit{It seemed to us at one time that we had a proof of the stronger [immersion conjecture], but even if it was correct it was very much more complicated, and it is unlikely that we will write it down}'' \cite[p2)]{GMXXIII}. As far as we are aware, the strong immersion conjecture is still open, and we hope that the framework, ideas, and results presented in this paper will help in proving it or at least reignite the research in the area. 

\smallskip

\noindent\textit{Directed graphs. } After proving Wagner's and Nash-Williams' Conjecture, Robertson and Seymour turned their interest to directed graphs. For, it is natural to ask `what are good relations suited for well-quasi-ordering directed graphs?'. Together with Johnson and Thomas, they introduced a notion of \emph{directed treewidth} \cite{dtw_def} seemingly suited to their needs---but unfortunately the natural extension of minors to directed graphs, which is \emph{butterfly minors}, does \emph{not} give a well-quasi-ordering on directed graphs, not even on directed graphs of low directed treewidth\footnote{Consider a growing sequence of alternating paths.}. 

It quickly turned out that directed graphs behave very differently from undirected graphs: for most well-studied containment relations on directed graphs, counterexamples to well-quasi-orderings have been found. 
As a positive result, Liu and Muzi \cite{alternatingpaths} prove that digraphs that do not contain long alternating paths are indeed well-quasi-ordered by the \emph{strong minor} relation. Unfortunately, not containing long alternating paths is a very strict restriction for directed graphs, in particular such graphs must have low directed and even low undirected treewidth \cite{alternatingpaths,KreutzerT2012}. 

Besides this, only a few positive results on well-quasi-ordering directed graphs seem to be known. One of the most prominent ones, due to Chudnovsky and Seymour \cite{tournaments}, states that \emph{tournaments}---graphs obtained from directing edges of undirected cliques---are well-quasi-ordered by strong immersion. This result was later extended to semi-complete digraphs by Barbero, Paul, and Pilipczuk.

\subsection{Our contributions}
A class of directed graphs that lies somewhat between general directed and undirected graphs is the class of \emph{Eulerian digraphs}. A digraph $G$ is called \emph{Eulerian}, if  for each vertex its in-degree is exactly the same as its out-degree. In other words, $G$ is the union of a set of pairwise edge-disjoint directed cycles.  Eulerian digraphs have many nice properties making them a particularly interesting class of digraphs to study. For instance, whereas it is known that the $2$-edge-disjoint-paths problem is NP-complete on general digraphs, the problem was
recently shown to be fixed-parameter tractable for Eulerian digraphs \cite{EDP_Euler}. 
See \cite[Chapter 4]{Bang-JensenG2018} for an introduction to Eulerian digraphs.

\subsubsection{Main results}

In this paper we are mostly concerned with the strong immersion relation. 
Our first result is negative in the sense that it yields a counter example for a well-quasi-ordering on general Eulerian digraphs. 

\begin{theorem}\label{thm:antichain}
    The class of Eulerian digraphs is not well-quasi-ordered under the strong immersion relation.
\end{theorem}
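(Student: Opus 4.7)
The plan is to exhibit an explicit infinite antichain $\{G_n\}_{n \geq 1}$ of Eulerian digraphs under strong immersion. The main difficulty is that many natural candidate families form chains: directed cycles, dipoles, and disjoint unions of small Eulerian components (the latter by Higman's lemma applied to the multiset of cycle lengths) all have the property that smaller instances strongly immerse into larger ones, since strong immersion allows routing paths through non-branch vertices without violating arc-disjointness. The construction must therefore exploit the directional rigidity that is specific to strong immersion of digraphs and has no straightforward undirected counterpart.

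My proposal is a Eulerian analogue of the classical alternating-path antichain witnessing non-well-quasi-ordering under butterfly minors (alluded to in the footnote above). For each $n \geq 2$, start with an alternating bipartite cycle $A_n$ on vertices $s_0, t_0, \ldots, s_{n-1}, t_{n-1}$ with arcs $s_i \to t_i$ and $s_{i+1 \bmod n} \to t_i$, in which every $s_i$ has out-degree $2$ and in-degree $0$ and every $t_i$ has in-degree $2$ and out-degree $0$. I would then Eulerianize by adding minimally invasive balancing arcs such as $t_i \to s_i$ and $t_i \to s_{i+1 \bmod n}$, producing a $4$-regular Eulerian digraph $G_n$ that still contains $A_n$ as a rigidly embedded subdigraph and whose global structure encodes $n$ via the cyclic orientation pattern.

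To establish the antichain property, I would aim to identify a strong-immersion-monotone invariant $\phi$ taking pairwise incomparable values on the family; since naive numeric invariants (length of longest alternating walk, vertex/edge count, degree sequence) live in well-quasi-ordered posets like $\mathbb{N}^d$ by Dickson's lemma and hence only certify chains, $\phi$ must capture a finer feature, for instance the labelled cyclic sequence of source/sink orientations around $A_n$ considered up to rotation. The key step, and the principal obstacle, is to show that the balancing arcs cannot be used to ``shortcut'' the alternating cycle in any attempted immersion: injectivity of $\gamma$ on branch vertices together with the preservation of local in- and out-degrees at each branch vertex should force a rigid mapping of $A_m$ into $A_n$, from which an arithmetic contradiction follows when $m \neq n$. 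The principal risk is that the Eulerianization is too permissive and opens rerouting possibilities through the balancing gadgets; should that occur, the backup strategy is to refine the gadget until it is locally rigid, so that any arc of the shadow structure used in the immersion consumes a unique arc budget at its endpoints and thus cannot simulate a missing part of the alternating pattern.
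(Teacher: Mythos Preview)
Your concrete construction fails. With the balancing arcs $t_i \to s_i$ and $t_i \to s_{i+1}$ added to $A_n$, the resulting $G_n$ is nothing but the bidirected $2n$-cycle (every edge of $C_{2n}$ replaced by a digon), and these form a chain, not an antichain. For instance, to immerse $G_2$ into $G_3$, map four consecutive vertices of $C_4$ to four consecutive vertices of $C_6$; seven of the eight arcs go to single arcs, and the two remaining arcs (closing the cycle in both directions) route through the two unused vertices via edge-disjoint paths of length $3$. This is a strong immersion since the internal vertices of those paths are not branch vertices. The ``principal risk'' you identify is exactly what happens, and your backup plan of ``refining the gadget'' is not a proof.

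There is a deeper obstruction. You are aiming for a $4$-regular antichain, but the paper conjectures (Conjecture~\ref{conj:wqo_4reg}) that $4$-regular Eulerian digraphs \emph{are} well-quasi-ordered under strong immersion, and the main theorem of the paper proves this for the Euler-embeddable ones --- a class that includes your bidirected cycles. So any bounded-degree approach runs directly against the grain of the paper. The paper's construction instead uses an alternating \emph{path} whose two endpoints are the only vertices of degree~$3$; these must map to each other under any strong immersion, which pins down the map and forces an impossible routing along the alternating path (there is no directed path between non-adjacent vertices on it). Eulerianization is then done by adding a \emph{single} hub vertex $b_i$ absorbing all the imbalance; its degree grows with $i$, so in any strong immersion $b_i$ is forced to map to $b_j$, and the rest of the argument is unchanged. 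The unbounded degree of the hub is not an accident but the mechanism that makes the antichain work for \emph{strong} immersion.
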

\begin{proof}
    For each $k\geq 1$ let $G_k$ be defined as follows. See \cref{fig:antichain} for an illustration.
    
    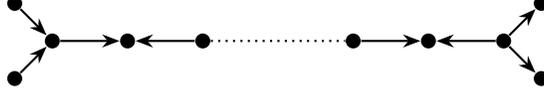
\begin{figure}
        \centering
        \begin{tikzpicture}[>=Stealth]
            \tikzstyle{vertex}=[circle, fill=black, inner sep=2pt]
            \node[vertex] (l) at (-3,0) {};
            \node[vertex] (r) at (3,0) {};
            \node[vertex] (v1) at (-2,0) {};
            \node[vertex] (v2) at (-1,0) {};
            \node[vertex] (v3) at (1,0) {};
            \node[vertex] (v4) at (2,0) {};
            \node[vertex] (l1) at (-3.5,0.5) {};
            \node[vertex] (l2) at (-3.5,-0.5) {};
            \node[vertex] (r1) at (3.5,0.5) {};
            \node[vertex] (r2) at (3.5,-0.5) {};
            \foreach \s/\t in {l2/l,l1/l,r/r1,r/r2,l/v1,v2/v1,r/v4,v3/v4}
            \draw[thick,->] (\s) to (\t) ;
            \draw[thick,dotted] (v2) to (v3);
        \end{tikzpicture}
        \caption{An infinite antichain of Eulerian digraphs with respect to strong immersions.}
        \label{fig:antichain}
    \end{figure}
    We set $V(G_k) := \{ l_1, l_2, v_0, \dots, v_{2k}, r_1, r_2 \}$ and edges from $l_1$ and $l_2$ to $v_0$, from $v_{2k}$ to $r_1, r_2$ and from $v_0$ to $v_1$ and from $v_{2k}$ to $v_{2k-1}$.
    Finally, for all $1 \leq i < k$ we add edges $(v_{2i}, v_{2i-1})$ and $(v_{2i}, v_{2i+1})$. 

    Thus, the graph $G_k$ consists of two special vertices, $v_0$ and $v_{2k}$, which are the only vertices of degree $3$ and between them a long path with edges in alternating directions. 

    We claim that if $i \not= j$ then $G_i \not\hookrightarrow G_j$. This is obvious if $i > j$, thus we assume $i < j$. But then, the two vertices $v^i_0$ and $v^i_{2i}$ of degree $3$ in $G_i$ must be mapped to the corresponding vertices $v^j_0$ and $v^j_{2j}$ of degree $3$ in $G_j$, as no other vertex in $G_j$ has degree $3$. This implies that the vertices $v^i_{1}, \dots, v^i_{2i-1}$ on the alternating path in $G_i$ must be mapped to vertices on the alternating path in $G_j$. As $j>i$ there must be two adjacent vertices $v^i_{l}$ and $v^i_{l+1}$ in $G_i$ that are mapped to two vertices $v^j_{l_1}$ and $v^j_{l_2}$ which are not adjacent in $G_j$. But then there is no directed path in $G_j$ between $v^j_{l_1}$ and $v^j_{l_2}$ and thus the edge between $v^i_{l}$ and $v^i_{l+1}$ cannot be immersed into $G_j$.

    So far the example only shows that the class of directed graphs is not well-quasi-ordered under the strong immersion relation. But the digraphs are not yet Eulerian. But we can easily make them Eulerian using a standard construction from the theory of Eulerian digraphs. 

    For $i>1$ let $G_i'$ be the digraph obtained from $G_i$ by adding a fresh vertex $b_i$ and for all  $v \in V(G_i)$ that have more incoming than outgoing edges we add sufficiently many edges from $v$ to $b_i$ so that the indegree of $v$ equals its outdegree in $G_i'$ and likewise we add edges from $b_i$ to $v \in V(G)$ if $v$ has more outgoing than incoming edges. (To avoid parallel edges one may subdivide parallel edges if need be.)

    It is easily seen that the resulting graphs $G'_i$ are Eulerian. But it is still the case that for $i < j$ there is no strong immersion of $G_i'$ in $G'_j$. For, any such immersion must map the high degree vertex $b_i$ of $G'_i$ to the corresponding vertex $b_j$ in $G'_j$ and so the remaining vertices must be mapped to each other. Thus the same argument as above shows that this is impossible.
\end{proof}

\begin{remark}
    For later reference we remark that the class $\CCC := \{ G_i' \sth i > 1 \}$ has unbounded maximum degree. Furthermore, the class $\CCC$ only is an antichain for the strong immersion relation, but not for weak immersions.  
\end{remark}

Finally we observe that the (undirected) treewidth of the graphs $G'_i$ in the proof of \cref{thm:antichain} is $\leq 3$ and the graphs admit a planar embedding.

\begin{corollary} \label{cor:antichain_tw}
    Let $k\geq3$ and let $\CCC$ be the class of Eulerian planar digraphs of treewidth $\leq k$. Then $\CCC$ is not well-quasi-ordered by strong immersion.
\end{corollary}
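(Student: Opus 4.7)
The plan is to verify that the infinite antichain $\{G'_i : i > 1\}$ produced in the proof of \cref{thm:antichain} already lies inside the class $\CCC$; once this is established, the corollary is immediate. First I would isolate the undirected structure of $G_i$: reading off its edge list, one sees that, ignoring orientations, $G_i$ is a caterpillar with spine $v_0 v_1 \dots v_{2i}$ and pendant leaves $l_1, l_2$ at $v_0$ and $r_1, r_2$ at $v_{2i}$. In particular $G_i$ is a tree, hence planar and of treewidth~$1$. The Eulerianisation step then produces $G'_i$ by adjoining a single new vertex $b_i$ connected (with appropriate multiplicities, subdivided if necessary to remove parallel edges) to every vertex of $G_i$ whose in- and out-degrees fail to agree.

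For the planarity claim I would invoke the fact that every tree is outerplanar: fix a plane embedding of $G_i$ with every vertex on the outer face, place $b_i$ in that outer face, and draw the (possibly subdivided) arcs from $b_i$ to each target vertex without crossings, nesting any parallel arcs to the same vertex inside one another. Since subdividing an edge of a plane graph preserves planarity, $G'_i$ is planar.

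For the treewidth claim I would begin with a width-$1$ tree decomposition of the tree $G_i$ and insert $b_i$ into every bag; since every new edge of $G'_i$ other than those created by subdivision has $b_i$ as one endpoint, this already yields a width-$2$ decomposition of $G'_i$ minus its subdivision vertices. Then, for each subdivision vertex $z$ lying on a length-two path $b_i z v$, I would attach a new bag $\{b_i, z, v\}$ of size $3$. The resulting tree decomposition has width at most $2$, so the treewidth of $G'_i$ is at most $2 \le 3$. Neither step poses a genuine obstacle; the only care required is the bookkeeping for the parallel-edge subdivisions, which is routine. Combining, every $G'_i$ with $i > 1$ is Eulerian, planar, and of treewidth at most $3 \le k$, so $\{G'_i : i > 1\} \subseteq \CCC$ forms an infinite antichain under strong immersion by \cref{thm:antichain}, whence $\CCC$ is not well-quasi-ordered by strong immersion.
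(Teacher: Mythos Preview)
Your proposal is correct and follows exactly the approach the paper indicates: the paper merely states, without any details, that ``the (undirected) treewidth of the graphs $G'_i$ in the proof of \cref{thm:antichain} is $\leq 3$ and the graphs admit a planar embedding'', and you have supplied precisely these verifications. Your argument in fact yields the slightly sharper bound $\tw{G'_i}\leq 2$, which is consistent with (and stronger than) the paper's stated bound of~$3$.
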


The previous example shows that to obtain a well-quasi-ordering of Eulerian digraphs under strong immersions we must restrict ourselves to classes of bounded maximal degree. Note here that one easily adapts the construction in \cref{thm:antichain} to provide an antichain for strong immersion of unbounded \emph{undirected} treewidth, by replacing the alternating paths with a version of ``grids'', where each path is alternating.

In this paper we focus on Eulerian digraphs with maximal degree $4$.\footnote{There are strong indications that the case of maximum degree four is fundamental, see the discussion in  \cite{Johnson2002,EDP_Euler} and later in the paper.} We call an Eulerian digraph of maximum degree four \emph{Eulerian embeddable in a surface $\Sigma$}, if there is an embedding of $G$ into  $\Sigma$, such that every vertex $v \in V(G)$ can be embedded so that the embedding of its incident edges alternates between in- and out-edges (given some orientation of $\Sigma$ locally around $v$). We call $v$ \emph{Eulerian embedded in $\Sigma$} and if $G$ admits an Eulerian embedding we call it \emph{Euler-embeddable}. If a vertex $v$ is \emph{not} Eulerian embedded---i.e., both in-edges are followed by both out-edges in the embedding---we call it \emph{strongly planar}. See \cref{fig:Eulerian-embedding} for an illustration.

\begin{figure}
    \centering
\begin{tikzpicture}[x=1cm,y=0.5cm,>=Stealth]
    \tikzstyle{vertex}=[black, fill=black, circle, inner sep=2pt]
    \begin{scope}
    \node[vertex] (u) at (0,0) {};
    \node[vertex] (i1) at (-1,1) {};
    \node[vertex] (i2) at (1,-1) {};
    \node[vertex] (o1) at (1,1) {};
    \node[vertex] (o2) at (-1,-1) {};
    \node[anchor=south] at (u) { $u$ };
    \node[anchor=east] at (i1) { $i_1$ };
    \node[anchor=west] at (i2) { $i_2$ };
    \node[anchor=west] at (o1) { $o_1$ };
    \node[anchor=east] at (o2) { $o_1$ };
    \draw[->] (i1) to (u) ;
    \draw[->] (i2) to (u) ;
    \draw[->] (u) to (o1) ;
    \draw[->] (u) to (o2) ;
    \node at (0,-2) { Eulerian embedding};
    \end{scope}
    \begin{scope}[xshift=5cm]
    \node[vertex] (u) at (0,0) {};
    \node[vertex] (i1) at (-1,1) {};
    \node[vertex] (o2) at (1,-1) {};
    \node[vertex] (o1) at (1,1) {};
    \node[vertex] (i2) at (-1,-1) {};
    \node[anchor=south] at (u) { $u$ };
    \node[anchor=east] at (i1) { $i_1$ };
    \node[anchor=east] at (i2) { $i_2$ };
    \node[anchor=west] at (o1) { $o_1$ };
    \node[anchor=west] at (o2) { $o_2$ };
    \draw[->] (i1) to (u) ;
    \draw[->] (i2) to (u) ;
    \draw[->] (u) to (o1) ;
    \draw[->] (u) to (o2) ;
    \node at (0,-2) { strongly planar embedding};
    \end{scope}
\end{tikzpicture}
    \caption{Euerlian and strongly planar embeddings. If the embedding of $u$ is strongly planar, two edge-disjoint paths can cross at $u$, i.e., there are two edge-disjoint paths going from $i_1$ to $o_2$ and from $i_2$ to $i_1$. This is impossible in a Eulerian embedding.}
    \label{fig:Eulerian-embedding}
\end{figure}
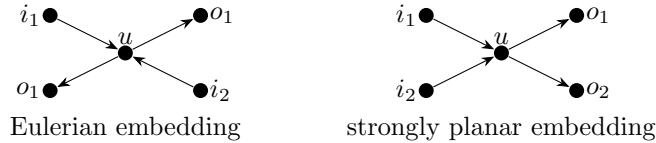

It is easy to see that the class of Eulerian embeddable 
graphs in a fixed surface $\Sigma$ is closed under (strong) 
immersion, and obviously if a graph $H$ immerses into an 
Eulerian embeddable graph $G$ then $H$ is Eulerian embeddable 
(see \cref{def:Euler-embedding} and \cref{sec:surface_defs} 
for a rigorous definition and properties of Euler-embeddable 
digraphs). 

The following is our first main result.

\begin{theorem}\label{thm:main1}
 Let $\Sigma$ be a surface (possibly with boundary). The class of all Eulerian digraphs of maximum degree four that can be Eulerian embedded in $\Sigma$ is well-quasi-ordered by the strong immersion relation.
\end{theorem}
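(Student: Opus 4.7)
The plan is to argue by contradiction, combining a minimal-bad-sequence argument with a structural dichotomy driven by carving width. Let $\CCC_\Sigma$ denote the class of Eulerian digraphs of maximum degree four that are Eulerian-embeddable in $\Sigma$; this class is closed under strong immersion by the remarks preceding the statement. Suppose for contradiction that $\CCC_\Sigma$ contains an infinite antichain under $\hookrightarrow$. By a standard Nash--Williams minimal-bad-sequence construction, I would extract an infinite antichain $(G_i)_{i \in \N}$ in $\CCC_\Sigma$ that is minimal with respect to the lexicographic order on $(|V(G_i)|,|E(G_i)|)$. The proof then splits according to whether the carving widths $\ebw{G_i}$ are bounded.

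First I would dispose of the bounded case. If $\sup_i \ebw{G_i} < \infty$, then the antichain lies in a subclass of Eulerian digraphs of uniformly bounded carving width, and the paper's companion result (the labelled bounded-carving-width WQO by strong immersion) directly contradicts the antichain property. This step is essentially a direct invocation of the other main theorem announced in the abstract.

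The substantial case is $\ebw{G_i} \to \infty$, and here I would exploit the embedding in $\Sigma$. Since $\Sigma$ has bounded Euler genus and each $G_i$ has maximum degree four, unbounded carving width forces, via a grid/wall theorem specialised to bounded-genus surfaces, an arbitrarily large embedded wall $W_i \subseteq G_i$ whose faces bound disks of $\Sigma$. In the Eulerian setting these walls can be arranged so that their rails are alternating directed paths, producing a directed-cylindrical structure compatible with the embedding. Each nested annular neighborhood of $W_i$ is then a subgraph of $G_i$ of bounded carving width, attached to the rest of $G_i$ through a bounded-size interface. Treating such interfaces as labels from a suitable well-quasi-order $\Omega$ and invoking the labelled bounded-carving-width WQO together with Higman's lemma on the sequence of concentric annuli around $W_i$, I would obtain indices $j < i$ carrying a local strong immersion of the annular core of $G_j$ into that of $G_i$. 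The minimality of the bad sequence would then be used to lift this local immersion to a strong immersion of the ``outside'' pieces of $G_j$ into those of $G_i$, glueing everything into a strong immersion $G_j \hookrightarrow G_i$ and producing the final contradiction.

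The main obstacle is precisely this last lifting-and-glueing step. For a strong immersion one must route edge-disjoint directed paths whose internal vertices avoid all branch vertices, while respecting both the Eulerian orientation and the alternation imposed by the surface embedding. This will require (i) that the directed rails of the wall support arbitrarily long edge-disjoint directed paths with the correct endpoints and orientations, (ii) that the pieces of $G_i$ outside the wall behave uniformly along the sequence, which is where minimality is exploited via the $\Omega$-labelled bounded-carving-width theorem applied to the side pieces, and (iii) that the immersions on the annular core, the outside pieces, and the interface between them are mutually compatible as strong (not merely weak) immersions. Orchestrating these three ingredients so that the combined auxiliary object stays inside a class to which Higman's lemma and the companion theorem both apply is the technical heart of the argument, and is where the Eulerian-embeddability hypothesis earns its keep by ruling out the ``strongly planar'' crossings that would otherwise break the strong-immersion routing.
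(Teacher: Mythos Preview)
Your proposal captures the bounded-carving-width case correctly, and you are right that the companion theorem handles it. However, the unbounded case is where your sketch departs substantially from the paper and, more importantly, where it has a genuine gap.

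The paper does \emph{not} proceed via a minimal bad sequence together with a wall-plus-annuli decomposition of a single graph. Instead, the argument is an induction on the Euler genus of $\Sigma$ (and secondarily on the number of root edges). The key dichotomy is not ``bounded versus unbounded carving width'' but ``high versus low representativity'' of the embedding. When representativity is high relative to $|V(G_1)|$, the paper derives $G_1 \hookrightarrow G_j$ directly from the undirected surface-minor machinery of Robertson--Seymour (GMVII), transferred to the Eulerian setting via face-cycle arguments. When representativity is low, there is a short genus-reducing, cuff-connecting, or cuff-separating curve; cutting the surface and the embedding along it yields edge-rooted Eulerian digraphs embedded in strictly simpler surfaces, and one recurses. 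This requires separate and substantial base cases for the disc and the cylinder, each with its own internal induction (on the number of root edges) and its own structural tools: \emph{foundations} built from ``bombastic nuclei'' in the disc case, and \emph{choppings} into $\theta$-short or $\theta$-fat pieces in the cylinder case.

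Your step (ii)--(iii), which you correctly flag as the crux, is not resolved by minimality of the bad sequence in any evident way. The difficulty is that the ``outside pieces'' of $G_i$ beyond the wall need not be simpler than $G_i$ in the minimal-bad-sequence ordering, nor need they have bounded carving width, so neither the minimality hypothesis nor the companion theorem applies to them directly. The paper's solution to the glueing problem is an explicit algebra of \emph{stitching} and \emph{knitting} on rooted graphs (and more generally $\Omega$-knitworks), together with a proof that the relevant pieces are \emph{well-linked}---a non-automatic property guaranteeing that the feasible routings through a stitched vertex are rich enough to compose strong immersions. Without this machinery, or an equivalent, the compatibility you need in (iii) is simply asserted rather than proved. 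In short: the missing idea is the genus induction via surface cutting, and the missing technical ingredient is the stitching/knitting calculus that makes the inductive glueing of strong immersions actually go through.
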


To the best of our knowledge, this result marks one of the most versatile directed graph classes (besides tournaments) for which we are aware of a positive well-quasi-ordering result regarding a well-studied graph relation. Note that the assumption on the graphs being \emph{Euler-embeddable} is crucial, in particular simple planarity does not suffice as demonstrated by \cref{cor:antichain_tw}. Note here that regarding strong immersion of Eulerian digraphs, Eulerian embeddings sort of correspond to planar embeddings regarding the minor relation on undirected graphs. In particular, Eulerian embeddings are the ``cross-free'' embeddings for Eulerian digraphs with respect to immersion, and there is an immersion-based structure theorem for Eulerian digraphs of maximum degree four \cite{Johnson2002,EDP_Euler} very similar to the minor-based structure theorem for undirected graphs due to Robertson and Seymour. In particular, the relevant embeddings for the Eulerian immersion-based structure theorem are Eulerian embeddings. Finally, note that a vertex of degree at least six can be used to ``cross'' paths edge-disjointly, and thus in a sense higher degree vertices help in finding strong immersions. Similarly, degree four vertices in undirected graphs allow to cross paths edge-disjointly, whence Euler-embeddable digraphs of degree at most four behave very similar to embeddable undirected cubic graphs. The \cref{thm:main1} also has several interesting applications which we explain below.

The second main result of this paper is the following theorem, which essentially proves that if in addition to the degree we bound the maximum treewidth of our class, then it is well-quasi-ordered by strong immersion.

\begin{theorem}\label{thm:main2}
     For each fixed $k\geq 0$, the class of Eulerian digraphs of carving width at most $k$, or equivalently of maximum degree and (directed) treewidth at most $k$, is well-quasi-ordered by strong immersion.
\end{theorem}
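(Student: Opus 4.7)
The plan is to follow a Robertson--Seymour / Thomas style well-quasi-ordering argument based on carving decompositions, adapted to the Eulerian directed setting. Fix an infinite sequence $G_1, G_2, \ldots$ of Eulerian digraphs of carving width at most $k$. For each $G_i$, choose a carving decomposition of width at most $k$, that is, a subcubic tree $T_i$ whose leaves are the vertices of $G_i$, such that every edge of $T_i$ induces an edge-cut of $G_i$ of at most $k$ edges. Root each $T_i$ at an arbitrary internal node. Every non-root node $t$ of $T_i$ then determines a \emph{rooted slice} $S_i^t$: the sub-digraph of $G_i$ induced by the leaves below $t$, equipped with the at most $k$ boundary edges crossing the cut at $t$, each tagged as incoming or outgoing and attached to a specified portal vertex of $S_i^t$.

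Second, I would formalize the rooted slices as boundaried Eulerian digraphs and define a quasi-order $\preceq^*$ on them that mirrors strong immersion while respecting the boundary: $S \preceq^* S'$ holds when the boundary types match and there is a strong immersion of the underlying digraph of $S$ into that of $S'$ mapping boundary edges to boundary edges and respecting their in/out tags and their portal attachments. Since boundary size is bounded by $k$, there are only finitely many boundary types. Following the strengthening announced in the abstract, I would prove the WQO statement in a labeled form in which each vertex carries a label from an arbitrary well-quasi-order $\Omega$, so that the inductive hypothesis is strong enough to be reused on children slices at each step.

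Third, I would apply a Kruskal-style argument on the carving trees $T_i$. At each internal node $t$, the rooted slice $S_i^t$ decomposes into the rooted slices of the at most two children together with a finite \emph{gluing gadget} recording how boundary half-edges of the children combine into the boundary of $t$. The gluing gadgets come from a finite set and hence are automatically WQO; by the inductive hypothesis, the multiset of children slices is WQO under $\preceq^*$; and Kruskal's theorem for trees labeled by a WQO produces indices $i<j$ together with a label-preserving topological embedding $T_i \hookrightarrow T_j$. From this tree embedding, one extracts a strong immersion $G_i \hookrightarrow G_j$ by sending each vertex of $G_i$ (a leaf of $T_i$) to the corresponding leaf of $T_j$ and routing each edge of $G_i$ through the rooted slices of $G_j$ using the boundary matchings guaranteed by $\preceq^*$.

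The main obstacle, and the technical heart of the proof, is the compatibility between $\preceq^*$ on rooted slices and gluing along carving cuts: strong immersion forbids routed paths from using branch vertices as internal vertices, a global constraint that does not decompose cleanly along a tree decomposition. To handle this, the boundary type of a rooted slice must record not just the incident in/out half-edges but also, for each portal vertex, the pairing that says which boundary in-edge is to be joined with which boundary out-edge by an interior path. This is precisely where the Eulerian hypothesis is used: because in-degree equals out-degree at every vertex, boundary edges can be matched with internal paths in a flow-consistent way, and packaging this pairing data into the boundary type ensures that paths coming from different children slices can be concatenated at shared portals without those portals becoming internal vertices of any routed path. Verifying that this enriched notion of $\preceq^*$ still has only finitely many boundary types and is compositional with respect to the gluing gadgets is the main work, and it is what turns the Kruskal-style tree embedding into a genuine strong immersion of Eulerian digraphs.
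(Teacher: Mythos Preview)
Your outline has the right overall shape—carving trees, boundaried slices, a tree-based WQO argument—but there is a genuine gap at the point where you invoke Kruskal. You write that ``by the inductive hypothesis, the multiset of children slices is WQO under $\preceq^*$'' and then apply ``Kruskal's theorem for trees labeled by a WQO''. But the slices are themselves (boundaried) Eulerian digraphs of carving width at most $k$, so proving that they are well-quasi-ordered is exactly the theorem you are trying to prove; no induction parameter you have introduced (carving width is fixed, tree depth is unbounded) breaks this circularity. Kruskal's theorem as a black box needs the label set to be WQO \emph{a priori}, and here it is not.

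The paper resolves this by replacing Kruskal with the \emph{Lemma on Cubic Trees} of Geelen--Gerards--Whittle (a refinement of the Robertson--Seymour Tree Lemma). That lemma does not require the labels to form a WQO; instead it requires a quasi-order $\preceq$ on the \emph{edges} of the carving trees with the property that $e \preceq f$ whenever $f$ lies below $e$, the two induced cuts have the same size, and every cut on the path between them is at least as large. To guarantee this property one must take \emph{linked} carvings (in the sense of Thomas / Geelen--Gerards--Whittle): linkedness ensures that between two such cuts of common size $t$ there are $t$ edge-disjoint directed paths, and the Eulerian Menger property (half of them go each way) lets one turn this linkage into a strong rooted immersion of the up-stitch at $f$ into the up-stitch at $e$. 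You never mention linked carvings, and without them the crucial monotonicity along tree paths fails, so the argument cannot get off the ground. Your final paragraph about recording portal pairings is addressing a secondary compositionality issue (relevant for the labelled/knitwork generalisation) but does not touch this central missing step.
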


Carving width---denoted by $\ebw{G}$---is a width parameter similar to the notion of treewidth---denoted by $\tw{G}$---but tailored towards edge cuts rather than vertex separations. 

The class of Eulerian digraphs in \cref{thm:main2} is more general than in our first main result in that we no longer require the maximum degree to be four. Note, however, that bounded carving width implies bounded degree. In a sense, \cref{thm:main2} is best possible: To see this note that it is well-known that for an Eulerian digraph $G$ it holds $\frac{2}{3} \tw{G} \leq \ebw{G} \leq \Delta(G) \cdot \tw{G}$, where $\Delta(G)$ is the maximum degree and $\tw{G}$ is the treewidth of the underlying undirected graph. Thus, restricting to a class of Eulerian digraphs of bounded degree, carving width is qualitatively equivalent to undirected treewidth. By \cref{cor:antichain_tw} there is an antichain for strong immersion well-quasi-ordering of Eulerian digraphs admitting bounded treewidth. \cref{thm:main2} implies that such an antichain must have unbounded degree. 

\smallskip
We believe that these results mark an important step towards proving the more general statement that Eulerian directed graphs (of maximum degree four or more generally bounded degree) are well-quasi-ordered by strong immersion. This is in line with proofs by other researchers for many different width parameters on different graph classes or related objects. For example, Geelen, Gerards, and Whittle proved that binary matroids of bounded branch-width are well-quasi-ordered by the matroid minor relation \cite{Gee02}. Oum proved that graphs of bounded rank-width are well-quasi-ordered by the vertex-minor relation \cite{wqo_rankwidth}. Kanté generalized the result due to Oum, which as a byproduct yields that directed graphs of bounded rankwidth are well-quasi-ordered by pivot-minors \cite{directed_pivotminor}. And of course the bounded treewidth case was an important first step towards a proof of Wagner's Conjecture \cite{GMIV}.

Before we discuss applications of our results, let us mention the following conjectures.
Our current results only apply to Eulerian digraphs Euler-embeddable into a fixed surface. However, we believe that, using the immersion-based structure theorem for Eulerian digraphs \cite{Johnson2002,EDP_Euler}, we can extend this to the class of all $4$-regular Eulerian digraphs---where \cref{thm:main1} marks an important base case---and therefore conjecture the following.

\begin{conjecture}\label{conj:wqo_4reg}
    The class of~$4$-regular Eulerian digraphs is well-quasi-ordered by (strong) immersion.
\end{conjecture}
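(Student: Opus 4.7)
The plan is to mimic the high-level strategy of Robertson--Seymour for the minor well-quasi-ordering of undirected graphs, with the immersion-based structure theorem for $4$-regular Eulerian digraphs~\cite{Johnson2002,EDP_Euler} playing the role of the graph minor structure theorem, and our \cref{thm:main1} and \cref{thm:main2} playing the roles of the bounded-genus and bounded-treewidth base cases respectively. Suppose for contradiction that $G_1,G_2,\dots$ is a bad sequence of $4$-regular Eulerian digraphs, i.e., $G_i\not\hookrightarrow G_j$ whenever $i<j$. A standard minimal-bad-sequence argument (a la Nash-Williams) lets us additionally assume that each $G_i$ is of minimum total size among $4$-regular Eulerian digraphs whose substitution at position $i$ still produces a bad sequence.

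Next, I would invoke the immersion-based structure theorem: each $G_i$ either has bounded carving width, in which case \cref{thm:main2} already forbids a bad subsequence, or $G_i$ admits a tree-like decomposition whose torsos are ``almost Eulerian-embeddable'' into a surface $\Sigma_i$ of bounded genus, up to a bounded number of apex vertices and vortices of bounded depth. By passing to an infinite subsequence, Ramsey-style, we fix the genus of $\Sigma_i$, and the number, depth and attachment pattern of vortices and apices. Using minimality, each individual piece of the decomposition of $G_i$ is structurally simpler than $G_i$: each vortex has bounded carving width, so the labelled version of \cref{thm:main2} (with boundary vertices labelled by elements of an auxiliary well-quasi-order $\Omega$) WQOs it; and each surface torso, once the apices and vortex-boundaries are replaced by $\Omega$-labels recording how they were attached, is WQOed by a labelled version of \cref{thm:main1}.

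With every bag-type WQOed, a labelled variant of Kruskal's tree theorem applied to the decomposition trees of the $G_i$ produces two indices $i<j$ whose decompositions embed label-preservingly into one another. The final step is to lift such an embedding of decompositions to a global strong immersion $G_i\hookrightarrow G_j$: the vortex-immersions and torso-immersion agree on the boundary labels, so edge-disjoint route systems can be concatenated along adhesions without any path revisiting a branch vertex.

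The hardest step, and the one for which the current paper only establishes the surface base case, is showing that the structure theorem can be applied in a manner compatible with \emph{strong} immersions. One must ensure that pieces can be glued back together edge-disjointly with no path traversing the interior of a branch vertex belonging to another piece, and that the labelling records enough information about the adhesion behaviour (parity of crossings, matching of in- and out-edges at vortex boundaries, and the Eulerian rotation system around apex attachments) to reconstruct a global strong immersion from piecewise ones. Handling the apex vertices, whose degree may violate the degree-four hypothesis locally and so demand a separate, stronger WQO argument, and upgrading the vortex base case to a WQO on \emph{societies} with $\Omega$-labels along the boundary, is where the main technical obstacle is expected to lie.
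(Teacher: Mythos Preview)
The statement you are attempting to prove is \cref{conj:wqo_4reg}, which in the paper is explicitly a \emph{conjecture}, not a theorem: the paper contains no proof of it. The authors state that they ``believe that, using the immersion-based structure theorem for Eulerian digraphs \cite{Johnson2002,EDP_Euler}, we can extend this to the class of all $4$-regular Eulerian digraphs---where \cref{thm:main1} marks an important base case'' but they do not carry this out. So there is no paper proof to compare against; your proposal is a sketch of a programme, not a proof, and you yourself acknowledge this in your final paragraph.

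Your outline is broadly the approach the authors have in mind, and the base cases you cite (\cref{thm:main1} for surfaces, \cref{thm:main2} for bounded carving width) are exactly the results this paper establishes for that purpose. But your sketch glosses over precisely the difficulties the paper flags as open. First, the structure theorem in \cite{Johnson2002,EDP_Euler} is not known to produce a decomposition whose adhesions interact cleanly with \emph{strong} immersion; the paper's entire knitwork framework (\cref{subsec:knitworks}) is built to handle this for the cases it does prove, and extending it to apices and vortices is nontrivial. Second, your appeal to ``a labelled variant of Kruskal's tree theorem'' hides the linkedness requirement: as the proof of \cref{thm:wqo_bounded_carvingwidth_knitworks} shows, one needs the decomposition to be linked and the links at bead vertices to be \emph{well-linked} in the sense of \cref{def:well-linked_links}, and the paper gives an explicit example after \cref{thm:wqo_bounded_carvingwidth_knitworks_red} showing that without well-linkedness the argument fails. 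Third, apex vertices have degree exceeding four, so neither \cref{thm:main1} nor the degree-four hypothesis of \cref{thm:4_reg_swirl} applies to torsos containing them; handling high-degree vertices is exactly where \cref{thm:antichain} shows things can go wrong. None of these obstacles is addressed in your proposal beyond naming them, so what you have written is a plausible research plan rather than a proof.
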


We also conjecture that this can be extended to any fixed maximum degree $d \in 2\N$.

\begin{conjecture}\label{conj:wqo_dreg}
  For every $d \geq 1$, the class of Eulerian digraphs of maximum degree $2d$ is well-quasi-ordered by strong immersion.
\end{conjecture}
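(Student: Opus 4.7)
The plan is to reduce the case of maximum degree $2d$ to the maximum-degree-four case, i.e., to a labelled strengthening of \cref{conj:wqo_4reg}. Given any infinite sequence $G_1,G_2,\ldots$ of Eulerian digraphs of maximum degree $2d$, I would construct a $4$-regular Eulerian digraph $G_i^{\ast}$ from each $G_i$, equip its vertices with labels from a fixed well-quasi-order, and argue that (i) the sequence $(G_i^{\ast})_{i\in\N}$ falls under (a labelled version of) \cref{conj:wqo_4reg}, and (ii) a label-preserving strong immersion $G_i^{\ast}\hookrightarrow G_j^{\ast}$ lifts to a strong immersion $G_i\hookrightarrow G_j$.

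For the construction I would replace every vertex $v\in V(G_i)$ of degree $2d_v$ with $2\leq d_v\leq d$ by a small gadget $\Gamma_v$ of maximum degree four, having exactly $d_v$ in-interfaces and $d_v$ out-interfaces, to which the original edges incident to $v$ are reattached. A natural choice is a directed cycle $v_1\to v_2\to\cdots\to v_{d_v}\to v_1$ where each $v_k$ absorbs one in-edge and emits one out-edge of the original $v$, so that $v_k$ becomes an Euler-embeddable vertex of in- and out-degree two. The resulting digraph $G_i^{\ast}$ is $4$-regular and Eulerian, and by construction $G_i$ is obtained from $G_i^{\ast}$ by contracting each $\Gamma_v$ to a single vertex, an operation that is compatible with strong immersion provided the gadgets can be identified.

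To enforce that identification, each vertex of $G_i^{\ast}$ is labelled in a finite well-quasi-order $\Omega$ recording whether it is a gadget-internal vertex or an interface vertex, and in the latter case, whether the interface is incoming or outgoing (and, if needed, a bounded amount of rotational data). The order on $\Omega$ is chosen so that label-comparability obliges gadget vertices to be mapped to gadget vertices of matching type. An application of the labelled form of \cref{conj:wqo_4reg} would then produce $i<j$ and a label-preserving strong immersion $\gamma\colon G_i^{\ast}\hookrightarrow G_j^{\ast}$; from $\gamma$ one extracts a map $\bar\gamma\colon V(G_i)\to V(G_j)$ sending each $v\in V(G_i)$ to the vertex $w\in V(G_j)$ whose gadget $\Gamma_w$ contains the image $\gamma(\Gamma_v)$, and the inter-gadget paths produced by $\gamma$ supply the required edge-disjoint directed paths in $G_j$.

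The principal obstacle lies in the third step: nothing a priori forces $\gamma$ to send each gadget into a single gadget, since a short directed cycle strongly immerses into longer cycles and into many other $4$-regular Eulerian sub-digraphs of $G_j^{\ast}$. The crux is therefore the design of a sufficiently \emph{rigid} gadget, or an enriched finite labelling, whose only label-preserving strong self-immersions are essentially the identity; here one might enlarge $\Gamma_v$ by a distinguished small $4$-regular Eulerian core, or exploit the transition-system structure at $v$ as additional combinatorial data. A second, equally serious obstacle is that \cref{conj:wqo_4reg} itself is not yet established; the reduction I propose therefore rests on first completing the $4$-regular case, presumably via the immersion-based structure theorem of~\cite{Johnson2002,EDP_Euler} together with \cref{thm:main1} and \cref{thm:main2} as, respectively, the surface-embeddable and bounded-carving-width base cases.
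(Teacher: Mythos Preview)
The statement you are attempting to prove is \cref{conj:wqo_dreg}, which the paper explicitly presents as an open \emph{conjecture}; there is no proof in the paper to compare your proposal against. The authors state it alongside \cref{conj:wqo_4reg} and \cref{conj:wqo_gen} as directions for future work, and the main results of the paper (\cref{thm:main1} and \cref{thm:main2}) are partial progress toward \cref{conj:wqo_4reg}, not a resolution of it.

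Your proposal is therefore not a proof but a reduction sketch, and you correctly identify its two fatal gaps yourself. First, the entire argument rests on a labelled version of \cref{conj:wqo_4reg}, which is itself open; the paper's results cover only the Euler-embeddable and bounded-carving-width subcases. Second, and more structurally, the gadget-rigidity problem you raise is genuine and not easily repaired: a strong immersion of one short directed cycle into $G_j^\ast$ is under no obligation to land inside a single gadget, and no finite labelling of gadget vertices can force this, since labels constrain only where \emph{vertices} go, not where the edge-paths between them wander. In particular, the image of $\Gamma_v$ could thread through arbitrarily many gadgets of $G_j^\ast$, making the extraction of $\bar\gamma$ ill-defined. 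Enlarging the gadget to a ``rigid core'' does not obviously help either, because any fixed $4$-regular Eulerian digraph strongly immerses into sufficiently large swirls (cf.\ \cref{thm:swirl_embedds_Eulerembeddable_graphs}), so rigidity under strong immersion is hard to engineer.

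If anything, the paper hints at the opposite intuition: vertices of degree at least six allow edge-disjoint paths to cross, so higher degree may \emph{facilitate} immersions rather than obstruct them. A direct attack extending the surface and carving-width machinery to degree $2d$ may be more promising than a reduction to degree four.
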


\Cref{thm:antichain} shows that there is no hope to extend these conjectures to the class of all Eulerian digraphs. However, we believe that this is an artefact of the \emph{strong} immersion relation. As the construction in \cref{thm:antichain} forces us to map the high degree vertices to each other, and thus make them unusable for routing other edges in a strong immersion, strongly immersing the remaining digraphs is equivalent to strong immersion in general digraphs. For weak immersions this is not the case and the high degree vertex in the target digraph can still be used to route further edges. We believe that this is not just an artefact of our construction but is a fundamental difference and we therefore conjecture the following. 

\begin{conjecture}\label{conj:wqo_gen}
    The class of Eulerian digraphs is well-quasi-ordered by weak immersion.
\end{conjecture}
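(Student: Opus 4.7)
The plan is to reduce the weak-immersion well-quasi-ordering of arbitrary Eulerian digraphs to the strong-immersion well-quasi-ordering of $4$-regular Eulerian digraphs asserted by \cref{conj:wqo_4reg}. The starting observation is the one emphasised by the authors after \cref{thm:antichain}: in a weak immersion, a high-degree vertex of the host digraph can still be used to route edges of the model whose endpoints are mapped elsewhere, so there is no structural reason for the antichain of \cref{thm:antichain} to survive once strong immersion is relaxed to weak immersion.

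First, I would encode each vertex $v$ of degree $2d > 4$ in an Eulerian digraph $G$ by a \emph{transition gadget} $\Gamma(v)$: a small $4$-regular Eulerian digraph on $2d$ terminals of ``universal'' switching capacity, in the sense that (i)~it realises \emph{every} pairing of its in-terminals with its out-terminals by a family of internally edge-disjoint directed paths, and (ii)~conversely, any edge-disjoint routing through $\Gamma(v)$ in a host digraph is determined by, and induces, such a pairing. A natural candidate is a cascade of $2 \times 2$ Eulerian switches wired so that the overall gadget is $4$-regular, Eulerian, and (ideally) planar; essentially a directed permutation network. Replacing every vertex of degree exceeding four by its gadget yields a $4$-regular Eulerian digraph $\tilde G$, and the key lemma to prove is that $H$ weakly immerses in $G$ if and only if $\tilde H$ \emph{strongly} immerses in $\tilde G$. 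The forward direction is immediate from property~(i); the converse exploits~(ii) to read off a weak immersion of $H$ into $G$ from any strong immersion of $\tilde H$ into $\tilde G$.

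With such an encoding in place, the argument becomes a standard contradiction: if $G_1, G_2, \ldots$ were a bad sequence for weak immersion among Eulerian digraphs, then $\tilde G_1, \tilde G_2, \ldots$ would be a bad sequence of $4$-regular Eulerian digraphs for strong immersion, contradicting \cref{conj:wqo_4reg}. Care is required so that vertices of $G$ already of degree four are left untouched by the encoding, and so that the reverse direction of the key lemma is clean; this is precisely the point where the edge-disjoint but vertex-sharing freedom of weak immersion must be faithfully captured by the internal routing patterns of $\Gamma(v)$.

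The main obstacle will be establishing property~(ii) of the gadget: one must show that every edge-disjoint collection of directed paths threading $\Gamma(v)$ in a host graph globally realises a single terminal pairing, notwithstanding the routing freedom weak immersions afford. This is a local but genuine combinatorial obstruction, and it may force $\Gamma(v)$ to depend on auxiliary canonical data, for instance a fixed rotation system at $v$, in order to rule out ambiguous ``shuffles'' of internal paths. A secondary obstacle is that the reduction rests on \cref{conj:wqo_4reg}, which is itself only established in the Euler-embeddable, surface-restricted form of \cref{thm:main1}; thus fully resolving \cref{conj:wqo_gen} is contingent on first lifting the bounded-degree case from a fixed surface to all $4$-regular Eulerian digraphs.
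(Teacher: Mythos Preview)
The statement you are attempting is \cref{conj:wqo_gen}, which the paper presents as an \emph{open conjecture}; there is no proof in the paper to compare against. The surrounding discussion offers only the heuristic that the antichain of \cref{thm:antichain} exploits strong immersion's rigidity at the high-degree vertex, a rigidity weak immersion lacks—this is motivation, not an argument.

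Beyond its explicit dependence on the (also open) \cref{conj:wqo_4reg}, your reduction has a genuine gap in the backward direction of the key lemma. A strong immersion $\gamma\colon \tilde H \hookrightarrow \tilde G$ need not respect the gadget decomposition: the vertices of a source gadget $\Gamma(v)$, $v\in V(H)$, may be scattered across $\tilde G$ by $\gamma$ rather than landing inside a single target gadget $\Gamma(w)$, so there is no canonical vertex of $G$ to serve as the image of $v$ under the putative weak immersion. Your property~(ii) only constrains how paths \emph{traverse} a gadget sitting in the host; it says nothing about how a gadget on the source side embeds. Without an additional rigidity mechanism forcing gadgets to map to gadgets (and no such mechanism is available for strong immersion of $4$-regular graphs in general), the equivalence fails. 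A secondary issue already arises in the forward direction: if $v\in V(H)$ of degree $2d$ is weakly mapped to $w\in V(G)$ of degree $2d'>2d$, you need $\Gamma(v)$ to strongly immerse into $\Gamma(w)$ in a terminal-respecting way for every $d\le d'$, an extra compatibility requirement on the gadget family that universal switching capacity alone does not guarantee.
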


\subsubsection{Applications}

Our two main results \Cref{thm:main1,thm:main2} yield several interesting applications that we discuss now.

\subsubsection*{A polynomial time algorithm for immersion closed families of bounded genus Eulerian digraphs of maximum degree four}

Let $\Sigma$ be a surface and let $\EEE(\Sigma)$ denote the class of all Eulerian digraphs of maximum degree $4$ which are Eulerian embeddable into $\Sigma$. 
A direct consequence of \cref{thm:main1} together with the proof in \cite{EDP_Euler} is that any class $\CCC \subseteq \EEE(\Sigma)$ that is closed under immersions can be decided in polynomial time. For, by \cref{thm:main1},  there is a finite set $\{ H_1, \dots, H_k \} \subseteq \EEE(\Sigma)$ such that a digraph $G \in \EEE(\Sigma)$ is contained in $\CCC$ if, and only if, $H_i$ does not strongly immerse into $G$, for all $1 \leq i \leq k$. As shown in \cite{EDP_Euler}, the edge-disjoint paths problem is fixed-parameter tractable. Thus for each $H_i$ we can test in polynomial time whether $H_i \hookrightarrow G$. Thus, we have shown the following theorem. 

\begin{theorem}\label{thm:polytime}
    Every immersion closed property of bounded genus Eulerian-embeddable digraphs of maximum degree four can be decided in polynomial time. 
\end{theorem}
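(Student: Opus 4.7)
The plan is to combine the well-quasi-ordering result of \cref{thm:main1} with the fixed-parameter tractability of the directed edge-disjoint paths problem on Eulerian digraphs from \cite{EDP_Euler}. First, I would observe that a property $\CCC \subseteq \EEE(\Sigma)$ being immersion closed is exactly equivalent to saying that its complement $\EEE(\Sigma) \setminus \CCC$ is upward-closed under strong immersion. Let $\FFF \subseteq \EEE(\Sigma) \setminus \CCC$ be the set of minimal elements of this complement with respect to strong immersion. By \cref{thm:main1}, the class $\EEE(\Sigma)$ is well-quasi-ordered under strong immersion, so $\FFF$ must be a finite antichain; write $\FFF = \{H_1,\dots,H_k\}$. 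Then for any $G \in \EEE(\Sigma)$, we have $G \in \CCC$ if and only if none of $H_1,\dots,H_k$ strongly immerses into $G$.

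Second, I would argue that for each fixed $H_i$, the question ``does $H_i \hookrightarrow G$?'' can be answered in polynomial time in $|G|$. Since $|V(H_i)|$ is a constant depending only on $\CCC$, one can enumerate in polynomial time all injective maps $\varphi\colon V(H_i) \to V(G)$. For each $\varphi$, the remaining task is to decide whether there exists a family of pairwise edge-disjoint directed paths in $G$, one for each edge $(u,v) \in E(H_i)$, going from $\varphi(u)$ to $\varphi(v)$ and avoiding every vertex of $\varphi(V(H_i))$ other than its two endpoints. This is a directed edge-disjoint paths instance with a fixed number $|E(H_i)|$ of demand pairs; the strongness constraint is encoded in the standard way by splitting each branch vertex $\varphi(u)$ into a gadget that exposes only the specific in- and out-pins used by the edges of $H_i$ incident to $u$. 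Because the resulting auxiliary digraph is still Eulerian and the number of terminals is bounded by a constant depending only on $\CCC$, the fixed-parameter tractability result of \cite{EDP_Euler} yields a polynomial-time decision procedure.

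Combining these two ingredients gives the algorithm: on input $G \in \EEE(\Sigma)$, test each of $H_1,\dots,H_k$ separately and accept if and only if none of them strongly immerses into $G$. Each test runs in polynomial time, and the constant number $k$ of patterns is absorbed into the constants hidden in the running time.

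The main subtlety, and the only genuine obstacle, is the non-uniformity of the algorithm: \cref{thm:main1} establishes the existence of the finite obstruction set $\FFF$ but does not provide a constructive way to compute it, so the algorithm is non-uniform in $\CCC$ in the same sense as the Robertson--Seymour meta-algorithm for minor-closed properties. A secondary but purely technical point is to verify carefully that the reduction from strong immersion to edge-disjoint paths preserves the Eulerian property used in \cite{EDP_Euler}; this should follow routinely from the standard splitting gadget, but it is the one place where a small computation is needed before invoking the black box.
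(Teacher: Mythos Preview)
Your proposal is correct and follows essentially the same approach as the paper: use \cref{thm:main1} to obtain a finite obstruction set $\{H_1,\dots,H_k\}$, then test each $H_i \hookrightarrow G$ in polynomial time via the edge-disjoint paths algorithm of \cite{EDP_Euler}. The paper's argument is in fact terser than yours and does not spell out the enumeration over vertex maps or the gadget for enforcing strongness; it simply invokes \cite{EDP_Euler} as a black box, and explicitly notes (as you effectively do by enumerating all $\varphi$) that the resulting algorithm is only in XP, with the degree of the polynomial depending on the obstruction set.
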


This is similar to the way the graph minor algorithm for finding a fixed minor~$H$ by Robertson and Seymour, together with the well-quasi-ordering theorem by the standard minor-relation (Wagner's Conjecture), implies that every minor closed property of undirected graphs can be tested in polynomial time \cite{GMXIII}\footnote{The run-time of their algorithm has first been improved to a quadratic run-time by Kawarabayashi, Kobayashi, and Reed \cite{Ken_minor_testing, KAWARABAYASHI2012424}, and more recently to almost linear time by Korhonen, Pilipczuk, and Stamoulis \cite{Giannos_almost_lin})}. 

However, unlike the minor testing algorithm, our algorithm for deciding immersion closed classes is currently only in XP, i.e. the degree of the polynomial bounding the runtime depends on the set $H_1, \dots, H_k$. 
It is conceivable that this can be improved to fixed-parameter tractability, i.e. to a runtime of the form $f(H_1, \dots, H_k) \cdot n^c$, for some fixed constant $c$ and some function $f$. We leave this for future research.

\subsubsection*{Erd\H{o}s-P\'osa property for Eulerian digraphs of maximum degree four}

A family $\FFF$ of graphs is said to satisfy the Erd\H{o}s-P\'osa property if there exists a function $f$ such that for every positive integer $k$, every graph $G$ either contains  $k$ vertex (or edge)-disjoint subgraphs in $\FFF$ or a set of at most $f(k)$ vertices (or edges) intersecting every subgraph of $\FFF$ in $G$. This property does not always hold; 
Lov\'asz and Schrijver observed that the family of odd cycles does not
satisfy the Erd\H{o}s-P\'osa property, due to certain projective planar grids. 

Robertson and Seymour \cite{GMV} characterised the classes of undirected graphs admitting the Erd\H{o}s-P\'osa property in a very general form: they show that, for a fixed graph $H$, the family of graphs $\FFF_H$ that contain $H$ as a minor satisfies the Erd\H{o}s-P\'osa property if and only $H$ is planar. 

Using \cref{thm:main1}, we can show the following general result, which can be regarded as an analogue of Robertson and Seymour's result. Informally, a \emph{$k$-swirl} $\SSS_k$ is a $4$-regular Eulerian digraph that is obtained from an undirected $k\times k$-grid by replacing each vertex by a $4$-circle such that the circles share a common vertex if and only if the respective vertices are adjacent in the grid. Then, given the canonical plane drawing, direct all circles in the same direction, resulting in an Eulerian embedding (see \cref{def:swirl} for a formal definition and \cref{fig:swirl} for an illustration).  

\begin{theorem} 
For an Eulerian digraph $H$ of maximum degree four, 
the family $\EEE_H$ of Eulerian digraphs of maximum degree four that contain $H$ as an immersion satisfies the Erd\H{o}s-P\'osa property if and only if $H$ can be immersed into some swirl $S$, or equivalently, $H$ can be Eulerian embedded in the plane. 
\end{theorem}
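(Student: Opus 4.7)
The plan is first to verify the stated equivalence ``$H$ immerses into some swirl $\iff$ $H$ is Euler-embeddable in the plane'', and then to prove each direction of the Erd\H{o}s-P\'osa characterisation. Each $\SSS_k$ is by construction a plane Euler-embedded digraph, and since the class of digraphs Euler-embeddable into a fixed surface is closed under strong immersion (noted after \cref{def:Euler-embedding}), any $H \hookrightarrow \SSS_k$ is plane Euler-embeddable. Conversely, given a plane Euler-embedding of $H$, one picks a sufficiently large swirl $\SSS_r$, places $H$'s embedded image inside its canvas, and routes each edge of $H$ along a path through the circles of $\SSS_r$, choosing orientations so that at every swirl vertex used as a waypoint the Eulerian alternation condition is respected---this is where the fact that both $H$ and $\SSS_r$ are Eulerian is essential.

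\textbf{Sufficiency of EP.} For the direction ``$H \hookrightarrow \SSS_r \Rightarrow \EEE_H$ has the Erd\H{o}s-P\'osa property'', the key ingredient is the Eulerian analogue of the excluded-grid theorem from \cite{Johnson2002,EDP_Euler}: there is a function $w \colon \N \to \N$ such that any Eulerian digraph of maximum degree four with $\ebw{G} \geq w(M)$ strongly immerses $\SSS_M$. Given $k$, choose $M = M(r,k)$ so that $\SSS_M$ contains $k$ pairwise edge-disjoint copies of $\SSS_r$ via its natural $\lfloor M/r\rfloor$-fold sub-swirl tiling, hence $k$ edge-disjoint copies of $H$. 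Then for any $G \in \EEE_H$, either $\ebw{G} \geq w(M)$, in which case we extract the $k$ disjoint $H$-immersions and are done; or $\ebw{G} < w(M)$, placing $G$ in the bounded carving width regime covered by \cref{thm:main2}. On bounded carving width Eulerian digraphs, the standard find-and-delete recursion yields EP: pick any $H$-immersion $\gamma$, delete the $\leq |E(H)|$ edges in its image (which does not increase carving width), and recurse; the process terminates within $k$ rounds either by finding $k$ disjoint $H$-immersions or by producing a hitting set of size $O(k \cdot |E(H)|)$.

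\textbf{Necessity and main obstacle.} For the converse, assume $H$ admits no plane Euler-embedding and let $\Sigma$ be a surface of smallest Euler genus into which $H$ is Euler-embeddable. The plan is to construct an infinite family $(G_n)_{n \geq 1}$ of Eulerian digraphs of maximum degree four, each Euler-embeddable in $\Sigma$, modelled after an $n \times n$ ``higher-genus swirl'' drawn on $\Sigma$---in the spirit of the projective-planar grids of Lov\'asz--Schrijver for odd cycles and of Robertson--Seymour for non-planar minors. The family must satisfy: (i) each $G_n$ strongly immerses $H$ for $n$ large; (ii) the maximum number of pairwise edge-disjoint $H$-immersions in $G_n$ is bounded by a constant $c(H,\Sigma)$, because every $H$-immersion is forced to use the topological handles or crosscaps of $\Sigma$ and only constantly many such uses can be made edge-disjoint; and (iii) the minimum hitting set grows linearly in $n$, since any set of edges destroying every $H$-image must, topologically, separate the handles from a large combinatorial portion of the swirl. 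The main obstacle is (ii): turning the topological intuition into a rigorous bound on edge-disjoint packings requires that any strong immersion of $H$ into the higher-genus swirl be trapped on topologically non-trivial routes, which is more delicate than the minor case because immersions only demand edge-disjointness of path images, and the Eulerian alternation condition at each used vertex must be propagated throughout the routing; a natural approach is to combine homological counting over $\Sigma$ with the structure theorem for Eulerian digraphs of \cite{Johnson2002,EDP_Euler} applied on a genus-preserving decomposition of $G_n$.
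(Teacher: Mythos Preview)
Your sufficiency argument has a genuine gap in the bounded carving width case. The find-and-delete recursion as you state it cannot work: if in each round you delete only $|E(H)|$ edges (one per image path) to destroy the current immersion $\gamma_i$, then the subsequent $\gamma_{i+1}, \gamma_{i+2}, \ldots$ may freely reuse the undeleted edges of $\gamma_i$'s image, so the $k$ immersions you collect need not be pairwise edge-disjoint and you never certify the packing side of EP; if instead you delete all edges of the image of $\gamma_i$ to force disjointness, the total number of deleted edges is unbounded in $k$ and you lose the hitting-set bound. Invoking \cref{thm:main2} does not help either: well-quasi-ordering of the bounded carving width class says nothing about Erd\H{o}s--P\'osa. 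The paper simply defers to the standard bounded-treewidth Erd\H{o}s--P\'osa argument from \cite{GMV}, which exploits the tree structure of the decomposition rather than a greedy deletion.

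For necessity your plan is in the right direction but misses the mechanism that makes (ii) clean. The paper's construction is more specific than a generic ``higher-genus swirl'': fix a Eulerian embedding of $H$ into its minimum-genus surface $\Sigma$, and replace every \emph{vertex of $H$} by a large swirl (duplicating the incident edges accordingly) to obtain $G$. The key observation --- which replaces your proposed homological counting --- is that in a Eulerian embedding two edge-disjoint directed paths cannot cross at a common vertex (cf.\ \cref{fig:Eulerian-embedding}). Consequently, the image of any $H$-immersion in $G$ contains a genus-reducing cycle $C$, and a second edge-disjoint $H$-immersion cannot cross $C$ at any vertex; it is therefore trapped on one side of $C$ and would yield a Eulerian embedding of $H$ into a surface of strictly smaller genus, contradicting minimality of $\Sigma$. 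This gives at most \emph{one} $H$-immersion, so (ii) holds with $c(H,\Sigma)=1$ and neither a structure theorem nor any homology is needed.
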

\begin{proof}[Sketch]
The proof is very similar to that in \cite{GMV}, and very standard in this area. So we just give a sketch of the proof. 

Assume that a swirl $S_{h}$ contains $H$ as an immersion, for some $h \geq 1$. In this case, if we take a swirl $S_{k\cdot h}$, then clearly there are $k$ edge-disjoint immersions of $H$ in $S_{k\cdot h}$. By \cref{thm:swirl_embedds_Eulerembeddable_graphs}, if the (undirected) treewidth of $G$ is at least $f(k, h)$, then $G$ admits an immersion of a swirl $S_{k\cdot h}$, and hence there are $k$ edge-disjoint immersions of $H$ in $G$. 
On the other hand, if the treewidth of $G$ is at most $f(k, h)$,  then the Erd\H{o}s-P\'osa property also holds, see \cite{GMV}. 

Assume now that there is no $h \in \N$ such that $S_{h}$  contains $H$ as an immersion (in other words, using \cref{thm:swirl_embedds_Eulerembeddable_graphs}, $H$ is not Eulerian embeddable in the sphere).
Let $\Sigma$ be a surface of minimum genus into which $H$ has a Eulerian embedding. By assumption, $\Sigma$ is not the sphere. In the Eulerian embedding of $H$ into $\Sigma$ we now replace every vertex by a large swirl and duplicate the edges of $H$ accordingly. Let $G$ be the resulting Eulerian digraph which is Eulerian embedded into $\Sigma$. Then it is not hard to see that $G$ does not immerse two disjoint copies of $H$. For, any immersion of $H$ in $G$ induces a drawing of the image of $H$ on $\Sigma$ that contains a genus-reducing cycle $C$. As the embedding of $G$ is Eulerian, if $H'$ is the image of an immersion of $H$ in $G$ which is edge-disjoint from the first immersion of $H$, then no path in $H'$ can cross the cycle $C$. Thus implies that $H'$ would be a Eulerian embedding of $H$ into a surface of lower genus which is impossible by assumption. Thus, $G$ has no two edge-disjoint immersions of $H$. But clearly any hitting set for the immersions of $H$ in $G$ has to be as big as the size of the swirls we used to replace the vertices of $H$, as a hitting set must separate the sides of at least one swirl.
\end{proof}

This is interesting in its own right, since the Eulerian digraphs admitting the Erd\H{o}s-P\'osa property have a rather strong restriction on their embedding: not only do they need to be planar but every vertex needs to be Eulerian embedded. Thus, being planar is not enough for a $4$-regular Eulerian digraph to have the Erd\H{o}s-P\'osa property, it needs to admit a Eulerian embedding into the plane.

\subsubsection*{Well-Quasi-Ordering Circle Graphs by the Pivot-Minor Relation}

A highly active research area is the study of vertex-minor and pivot-minor closed classes of undirected graphs. In this context the class of circle graphs is of special importance.
Given a circle graph one can construct an undirected Eulerian graph such that the vertex-minor operation corresponds to the immersion relation on the Eulerian graph (see \cite{rose_vertex_minors} for details). 

A similar analogy can be drawn for the pivot-minor relation. Here, with every circle graph a \emph{directed} Eulerian $4$-regular graph is constructed such that the pivot-minor relation on the circle graph corresponds to the immersion relation on the Eulerian digraph. 
See \cite{Oum09} or \cite{circlegraph_survey,rose_vertex_minors} for a discussion of this connection.

A direct consequence of this connection is that a proof of \cref{conj:wqo_4reg} would imply that circle graphs are well-quasi-ordered by pivot-minors, which, to the best of our knowledge, is a longstanding open problem. 

A well-studied subclass of circle graphs are bipartite circle graphs. It is well-known in the community working on pivot-minors that in the translation from circle graphs to Eulerian digraphs the class of bipartite circle graphs corresponds to the class of planar Eulerian $4$-regular digraphs with a Eulerian embedding into the plane. Thus we can apply \cref{thm:main1} to obtain  the following result.

\begin{corollary}\label{cor:pivot}
    The class of bipartite circle graphs is well-quasi-ordered under the pivot-minor relation. 
\end{corollary}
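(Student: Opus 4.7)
The plan is to derive the corollary as an essentially immediate consequence of \cref{thm:main1} via the well-known correspondence, already sketched in the paper, between bipartite circle graphs and planar Eulerian $4$-regular digraphs admitting a Eulerian embedding in the plane. Let $\EEE$ denote the class of Eulerian $4$-regular digraphs that are Eulerian embeddable in the sphere $S^2$. Then \cref{thm:main1} instantiated with $\Sigma = S^2$ gives that $\EEE$ is well-quasi-ordered by strong immersion $\hookrightarrow$; the bulk of the work is therefore the reduction from (bipartite circle graphs, pivot-minor) to $(\EEE, \hookrightarrow)$.

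The reduction is built on the translation $\Phi$ from bipartite circle graphs to $\EEE$ recorded in \cite{Oum09,circlegraph_survey,rose_vertex_minors}. Given a chord diagram realising a bipartite circle graph $H$, the associated double-occurrence word on the boundary circle yields a $4$-regular Eulerian (multi)graph with one vertex per chord, together with a canonical planar drawing. Bipartiteness of $H$ corresponds to the existence of a proper $2$-colouring of the chords in which crossing chords receive different colours; I would use this colouring to orient the edges so that at every vertex the two in-edges and the two out-edges alternate cyclically in the drawing, producing a digraph $\Phi(H) \in \EEE$. Conversely, every digraph in $\EEE$ gives rise to a bipartite circle graph by reading off the interlacement pattern along an Eulerian tour compatible with its planar Eulerian embedding.

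The central combinatorial task is to verify that pivoting an edge $uv$ on the circle-graph side matches exactly a strong-immersion reduction at the pair of vertices of $\Phi(H)$ corresponding to $u$ and $v$: locally, a pivot rewires the two transitions through those two Eulerian vertices, and this is precisely the kind of smoothing that strong immersion $\hookrightarrow$ is allowed to perform while preserving both Eulerianness and the planar Eulerian embedding. Iterating, $H'$ is a pivot-minor of $H$ if and only if $\Phi(H') \hookrightarrow \Phi(H)$. Granting this dictionary, the corollary follows: any infinite sequence $H_1, H_2, \ldots$ of bipartite circle graphs maps to a sequence $\Phi(H_1), \Phi(H_2), \ldots$ in $\EEE$ containing, by \cref{thm:main1}, indices $i<j$ with $\Phi(H_i) \hookrightarrow \Phi(H_j)$, and hence with $H_i$ a pivot-minor of $H_j$. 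The main obstacle is this translation step: although the analogous correspondence for vertex-minors of (arbitrary) circle graphs is standard, writing out the directed/pivot variant carefully---so that bipartiteness really matches the existence of a Eulerian \emph{planar} embedding, and pivot really matches strong immersion---is where almost all of the genuine work of the proof lies.
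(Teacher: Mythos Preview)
Your proposal is correct and follows essentially the same route as the paper: derive the corollary from \cref{thm:main1} (applied with $\Sigma$ the sphere) via the known correspondence between bipartite circle graphs under pivot-minor and planar Eulerian $4$-regular digraphs with a Eulerian embedding under (strong) immersion. The paper does not spell out the translation at all but simply invokes it as well-known, citing \cite{Oum09,circlegraph_survey,rose_vertex_minors}; your write-up is more explicit about what the dictionary must check, but the logical structure is identical.
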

Note that, as discussed in \cite[Section 8]{circlegraph_survey}, \cref{cor:pivot} is a consequence of the announced result that ``binary
matroids are well-quasi-ordered by the minor relation'', but as stated by the authors of \cite{circlegraph_survey} the proof is still being written up.

We emphasize that \cref{thm:main1} implies a stronger result than \cref{cor:pivot}: For any fixed surface $\Sigma$, the class of circle graphs with a respective Eulerian digraph Eulerian embeddable in $\Sigma$ is well-quasi-ordered by the pivot-minor relation. It is however not clear to us apart from $\Sigma$ being the sphere which class of circle graphs this exactly encodes but it may be of worth some further investigation.

\subsubsection{Strong immersions in undirected graphs}

As explained above, the question whether the class of undirected graphs is well-quasi-ordered by the \textit{strong} immersion relation is a long-standing open problem. 

While there are several simple reductions from undirected graphs to Eulerian digraphs, \cref{thm:antichain} shows that, unless a construction can be found that also bounds the maximal degree of the resulting digraphs, the strong immersion problem for undirected graphs cannot be reduced to the strong immersion problem for Eulerian digraphs (and we are not aware of a way to to prove the converse).

However, several of the tools we develop in this paper and also in our work towards proving \cref{conj:wqo_dreg} can also be applied or adapted to work for strong immersions in undirected graphs.
This makes us optimistic that eventually it can be shown that the class of undirected graphs is well-quasi-ordered under the strong immersion relation. But this is a massive project that will require much more work and we leave it to future research.

\subsection{Related Work}

In his thesis \cite{Johnson2002}, Thor Johnson studied Eulerian digraphs and their Eulerian embeddings into surfaces. The goal of his thesis is to develop a structure theorem for internally $6$-connected Eulerian digraphs of maximum degree four with respect to immersions\footnote{Immersions are defined slightly different in his thesis, namely the immersion-map need not be injective on vertices.}. Naturally, this work is closely related to our work, and some of the theorems we prove are very similar to theorems in \cite{Johnson2002}. For example, Johnson proved that Eulerian digraphs Euler-embeddable into a surface $\Sigma$ can be immersed into any Eulerian digraph Euler-embeddable in $\Sigma$ with sufficiently high representativity. Our theorem \cref{thm:high-rep} is essentially the same result, but as discussed in the paragraph before \cref{thm:high-rep}, we need a refined version that applies to \emph{rooted} Eulerian digraphs Eulerian embedded in surfaces possibly with boundary. Note however, that this is also the ``easiest'' case, as it discusses Eulerian graphs that can be embedded into a surface such that the drawing admits ``high representativity'': this case can be reduced to the undirected case using results from \cite{GMVII} (see \cref{sec:high-rep}). As far as we are aware, none of the ``low representativity'' cases have been dealt with (such as the bounded treewidth case for example), which are the cases for which we need to deploy a lot of machinery and establish a new framework suited for strong immersions. We will point out and discuss the relation to Johnson's thesis whenever we prove a related result. 

Furthermore, the results of his thesis have unfortunately never been published in a peer-reviewed journal or conference, and his thesis can be difficult to obtain. We therefore occasionally give proofs for results that also appear in his thesis or can be derived from results in his thesis, so that the paper is self-contained.
At the end of his thesis, Johnson explains that he hopes to use his results to prove well-quasi-ordering for $4$-regular Eulerian digraphs in the future, but no further steps in this direction have appeared since then. 
In this paper, we prove Johnson's conjecture for Eulerian digraphs Euler-embeddable on a fixed surface.

\section{High level overview of the proof}

At a high level, our line of argumentation follows the ideas developed by Robertson and Seymour in their seminal work on graph minors proving Wagner's Conjecture \cite{GMIV,GMVII,GMVIII,GMXVIII,GMXIX,GMXX}.
However, the details vary drastically in many ways. For one, we work with strong immersions which, as we discussed in the introduction, have very different properties than minors. Moreover, directed graphs behave very differently in many ways. As explained in the introduction, many ``natural'' results for undirected graphs---such as structure theorems and well-quasi-ordering by a suitable minor concept---fail when dealing with directed graphs. This is the case even if they admit low treewidth or low directed treewidth. Fortunately, Eulerian digraphs (of bounded degree) turn out to be a ``tameable'' and yet  structurally rich class of directed graphs, opening up a world between undirected and directed graphs. 

\smallskip

Let us briefly describe the high-level line of reasoning before we dive into a short summary of the respective sections. 
Given a sequence $(G_i)_{i \in \N}$ of (Eulerian-embeddable directed) graphs of maximum degree $4$ into a fixed surface $\Sigma$, we filter the sequence for an infinite subsequence of digraphs of ``similar type''---which we clarify as we go---and prove for each of the types that they are well-quasi-ordered under the strong immersion relation by induction on the Euler genus of $\Sigma$ and the structure of the graphs of the sequence. The first type are graphs with embeddings of ``high representativity''; intuitively this means that  their drawings use up the whole surface with strictly growing treewidth, see \cref{def:representativity} for details. This is the ``easiest'' case (at least in our setting): If $\Sigma$ is the sphere, this boils down to admitting high carving width (recall that on bounded degree graphs carving width is qualitatively equivalent to treewidth). This means that, fixing $G_1$ and $n \coloneqq \Abs{V(G_1)}$, there is $j \in \N$ such that $G_j$ has large carving width with respect to $n$. In that case, large carving width certifies the existence of a large strongly immersed swirl in $G_j$ using \cref{thm:4_reg_swirl}. It is not hard to see, and can be derived from the equivalent undirected theorem for planar graphs, that in this case $G_1 \hookrightarrow G_j$ (see \cref{thm:swirl_embedds_Eulerembeddable_graphs} for details).

Now, let us consider the case where $\Sigma$ is not the sphere. Roughly speaking, a drawing on a surface $\Sigma$ without boundary that is \emph{not} the sphere has representativity $f(n)$, for some function $f\colon\N \to \N$, if every curve $\gamma$ drawn on $\Sigma$ that only intersects edges\footnote{In the undirected setting, representativity is usually defined via curves intersecting vertices; see \cite{MoharT2001}.} either bounds a disc or must intersect at least $f(n)$ edges. That is, there are no ``short genus-reducing curves''. It turns out (as we prove as a special case of \cref{thm:high-rep} in \cref{sec:high-rep}) that if there is $j \in \N$ such that $G_j$ has representativity much larger than $\Abs{V(G_1)}$, then $G_1 \hookrightarrow G_j$. We derive this result from the analogous  result for undirected graphs in \cite{GMVII} by analysing the homotopy types of the respective paths of the underlying undirected graph: for a Eulerian embedded graph, if two undirected vertex-disjoint paths with common endpoints $u,v$ of the underlying undirected graph bound a disc $\Delta \subseteq \Sigma$ in the surface, then there exist two (not necessarily edge-disjoint) directed paths in the same disc $\Delta$ with endpoints $u,v$, one starting in $u$ and ending in $v$ and one starting in $v$ and ending in $u$. As explained above, this specific high representativity case was already proven by Johnson in his thesis \cite{Johnson2002}.  

Thus, it remains to prove \Cref{thm:main1,thm:main2} in the case that there is a function $f:\N \to \N$ (solely depending on $\Sigma$) such that for every graph in the sequence, the drawing has representativity bounded by $f(n)$, where $n = |V(G_1)|$. This is by far the most difficult case of our proof. 

\paragraph{Cutting the surface along edges and rooting Graphs.}

Since the general proof is by induction on the genus of the surface, we will extensively cut the surface and the respective drawings. Note that cutting surfaces results in new surfaces with non-empty boundary, which is why we will work with Eulerian embeddings in surfaces possibly with boundary. Recall that (after a suitable homeomorphism) the boundary $\bd(\Sigma)$ of a surface $\Sigma$ falls into disjoint components  $\zeta_1,\ldots,\zeta_k$ called \emph{cuffs}, each homeomorphic to a circle. Since our graphs and embeddings are of a very specific type---the graphs and embeddings are Eulerian---we need to be careful about the way we choose to cut the surface and embeddings, for we need the resulting graphs (and embeddings) to remain Eulerian and Eulerian embedded (see \Cref{fig:cuff_conn_cut,fig:double-torus-cut,fig:cuff-based} for examples). Since we are analysing strong immersions, it seems counter-intuitive to cut our drawings at vertices, for: 
\begin{itemize}
    \item[(1)]  it is not guaranteed that the graph remains Eulerian (nor Eulerian embedded) after splitting vertices accordingly, which would result in further restrictions on how the curves may interact with vertices, and
    \item[(2)] if we want to ``knit'' strong immersions of cut-out pieces back together to conclude the induction, it would be much harder to keep track of which vertices have already been ``used'' when duplicating vertices (recall that for strong immersion no vertex $\gamma(v)$ is allowed to be an internal vertex of a path $\gamma(e)$).
\end{itemize}

These are the two most obvious of many reasons why we choose to cut Eulerian embedded graphs along edges instead of vertices as usual in the literature. Note here that while this seems conceptually equivalent to ``switching to the line-graph'', line-graphs of Eulerian digraphs of maximum degree $4$ do in general not result in Eulerian digraphs (due to $2$-cuts, degree-$2$ vertices and loops) and moreover the respective drawings would not be Eulerian embedded (and it gets only worse for higher degree, which we hope to tackle in the future using the presented framework). There are further issues when trying to switch to the line-graph, we will, however, not discuss these here. 

Thus, instead of switching to the conceptually similar line-graphs, we simply work with edges as independent objects: graphs in this paper will be viewed as incidence structures $G=(V,E,\operatorname{inc})$ where every directed edge has exactly two incidences---$\tail$ and $\head$---and when we \emph{cut} an edge $e \in E(G)$ we mean that we replace it by two newly introduced edges $e_1,e_2$ one keeping the incidence with $\head(e)$ and the other with $\tail(e)$. While this may seem counter-intuitive at first, let us highlight some beautiful aspects of it. Given a graph $G=(V,E,\operatorname{inc})$, let $X\subseteq V$. Using the above notation $e \in \rho(X)$ if and only if $\head(e) \in X \iff \tail(e) \in \bar{X}\coloneqq V(G) \setminus X$; note that $\rho(X) = \rho(\bar X)$. It turns out that Eulerian digraphs admit \emph{Eulerian cuts}, i.e., $\Abs{\rho(X)} \in 2\N$ and $\rho(X)$ contains equally many edges with head in $X$ as with tail in $X$. Thus, if we find a genus reducing curve $\eta:[0,1] \to \Sigma$, separating the surface into two smaller surfaces $\Sigma_1,\Sigma_2$ say, and only cutting ``nicely'' through edges---not cutting an edge several times or simply touching it---then $\Sigma_1,\Sigma_2\subset \Sigma$ yield a natural partition of $V(G)$ into $X,\bar X$ respectively, where $X$ is the set of vertices drawn on $\Sigma_1$ say. In particular, $\eta$ \emph{traces} a natural induced cut in $G$, and $\eta([0,1])$ represents a new cuff in $\Sigma_1$ and $\Sigma_2$ after cutting $\Sigma$.

\paragraph{Stitching.} Given such a set $X\subset V(G)$ we define the \emph{up- and down-stitches} of $X$ as the graphs $G^X$ and $G_X$ where $G^X$ is the graph obtained by contracting $X$ into a single vertex $x^*$ and similarly $G_X$ is the graph obtained from $G$ by contracting $\bar{X}$ into a single vertex $x_*$ (we call these new vertices \emph{beads}). The resulting graphs are again Eulerian (since the cuts are Eulerian) and they satisfy $G^X \cap G_X = \rho(X)\subset E(G)$, and in particular $V(G^X) \cap V(G_X) = \emptyset$. In this way, we may decompose our graphs along cuts traced by separating curves by stitching the induced cuts that partition the vertex sets. The idea is then to guarantee that given graphs $G, G'$ with induced cuts $\rho(X),\rho(Y)$ respectively, with strong immersions $\gamma_1:G_X \hookrightarrow G_Y$ and $\gamma_2:G^X \hookrightarrow G^Y$, we can ``knit'' the respective strong immersions back together, yielding a strong immersion $\gamma: G \hookrightarrow G'$ (where, since the vertices are partitioned, it is easier to keep it strong). Note, however, that $x^*,x_*,y^*,y_*$ may be of much larger degree than $4$---in fact of degree $f(\Abs{V(G_1)})$ for example. This comes with up- and down-sides: on one hand, if all other vertices are of degree $\leq 4$, we are guaranteed that $\gamma_1(x^*)=y^*$ and $\gamma_2(x_*) = y_*$ by definition of strong immersions, which greatly helps with knitting the immersions as this guarantees that $\rho(X)$ is mapped to paths ending in $\rho(Y)$. However, on the other hand, we leave the class of Eulerian digraphs of maximum degree $4$, and, most crucially, the resulting graphs are no longer Eulerian embeddable on \emph{any} surface. Fortunately, given a surface of finite genus, there is only a finite number of times one can cut along separating genus-reducing curves until one ends up in a surface homeomorphic to a disc (see \cref{fig:double-torus-cut} for a schematic illustration of an exemplary cut). Hence there are not too many non-homeomorphic ``short'' separating genus reducing curves, whence we may ``sacrifice'' a few vertices of higher degree, and switch to Eulerian digraphs \emph{rooted in a bounded number of beads}, essentially \emph{rooting the graphs in the cuts $\rho(X),\rho(Y)$}. It is a natural idea to embed $G_X$ and $G^X$ in $\Sigma_1$ and $\Sigma_2$ respectively, where we may keep the embedding of $G_X\cap G$ in $\Sigma_1$---every vertex is Eulerian embedded---and simply assign the bead $x_*$ to the new cuff $\eta([0,1])$, drawing its edges on the cuff, defining an \emph{Eulerian embedding of $G_X$ up to beads}. More generally, think of beads as placeholder vertices representing cuts we made, then an Eulerian digraph rooted in $t$ beads is Eulerian embeddable in a surface $\Sigma$ if we can assign each bead to a cuff of $\Sigma$ and Eulerian embedd all non-bead vertices in $\Sigma$ as well as their incidences with bead vertices by drawing edges adjacent to bead-vertices on the cuffs. 
Unfortunately, as tempting as this sounds, this comes with many obstacles, corner cases, and technicalities that require a lot of work to overcome. 
\smallskip

First note that given the above sequence, it may happen that the only curves contradicting high representativity are \emph{not} separating; whence cutting along such a curve we do not end up with natural partitions $\Sigma_1,\Sigma_2$---which come with an induced cut $X$ as described above---but rather a new connected surface $\Sigma'$ of lower genus and an additional cuff $\zeta$. In that case, we cannot use the above stitching technique and need to carefully duplicate the edges to keep the graph Eulerian and introduce a new bead ``drawn in'' $\zeta$ in a different way, while keeping the graph Eulerian embeddable up to beads (see \cref{fig:cuff_conn_cut} for a schematic illustration of an exemplary construction). 

Secondly, note that while it is in theory true that the immersions of the parts introduced above satisfy $\gamma_1(x_*) = y_*$ and $\gamma_2(x^*)=y^*$, thus mapping $\rho(X)$ to paths ending in $\rho(Y)$, it may nevertheless happen that we cannot ``knit'' both immersions together because the two immersions $\gamma_1,\gamma_2$ may map the same edge $e \in \rho(X) \cap \rho(\bar X)$ to two paths $\gamma_1(e), \gamma_2(e)$ starting or ending in different edges of $\rho(F)$. Thus, in addition to the cut-edges, we need to fix an ordering on the cuts, and make sure that the immersions keep track of that order. 

Thirdly---and now the above idea unfortunately starts to show its flaws---since we cut our drawings along curves inductively, we may need to cut it several times. While it is still fairly easy to derive the high representativity result for embeddings with root-edges drawn on the boundary (see \cref{thm:high-rep} and note that here our result differs from the one provided by Johnson in \cite{Johnson2002} as the embeddings considered there are in surfaces without boundaries), we need to deal with a different notion of ``high representativity''. In essence, we may find short curves that ``cut off cuffs'', whence cutting $\Sigma$ along such a curve, we produce a cylinder and a curve $\Sigma' \cong \Sigma$ making no progress on the genus (but fortunately on the number of root edges); this is why the cylinder case needs to be proved separately. Worse, however, we may need to cut along a curve $\iota$ that starts and ends in the same cuff $\zeta$, say. Note that a priori it is not clear why cutting along such a curve results in an Eulerian digraph and, more importantly, in an Euler-embeddable graph, especially not if we still have beads hidden in holes that are connected to the cuffs. Fortunately, this can be guaranteed as we discuss in \cref{sec:surface_defs} and \cref{sec:cylinder}; see \cref{fig:cuff-based} for a schematic illustration of the respective case.

But even if it does result in an Eulerian digraph, and even if the $\iota$ is otherwise nice and separating in $\Sigma$ producing surfaces $\Sigma_1,\Sigma_2$ both of lower genus with respective vertex sets $X,\bar{X}$ where $X$ is the the set of vertices drawn on $\Sigma_1$, the following may happen: Since old beads of the cuff $\zeta$ are not drawn in $\Sigma$, they may not naturally be part of $X$ or $\bar{X}$ for they may have ends in \emph{both} surfaces $\Sigma_1,\Sigma_2$ and thus edges to vertices in $X$ and $\bar X$. Thus, when cutting along $\iota$, the inductive reasoning starts to get harder as we do not produce ``fully disconnected'' pieces (there are old beads connecting both surfaces). We will circumvent these problems by loosening the assumptions in \cref{sec:edge-rooted} and switching to digraphs that are ``Eulerian up to a few root edges'', i.e., we allow for $k$ distinguished degree-one vertices in $G$. Since we will need the bead-rooted versions for future work, and since it is ``conceptually clearer'', we will provide the first part of the paper in that setting, especially since it helps with the low treewidth case as we discuss later. 

\paragraph{The Disc and Cylinder Case.}
After some skilled cuttings, we will end up with Eulerian digraphs rooted in a bounded (in the genus of the original $\Sigma$) number of edges with Eulerian embeddings in the disc or in the cylinder, where the root edges are drawn on cuffs (think of beads being ``drawn'' in the respective holes to a cuff, not part of the rest of the drawing). Let us briefly discuss both cases, starting with the cylinder.
\smallskip

\textit{The Cylinder.} Let $\Theta$ denote a cylinder with two distinct cuffs $\zeta_1,\zeta_2$. Given a sequence $(G^i,E_1^i,E_2^i)$ of Eulerian digraphs $G_i$, and sets of root edges $E_1^i, E_2^i$ such that $G_i$ can be Eulerian embedded in $\Theta$ such that the edges $E_1^i$ are drawn on $\zeta_1$ and $E_2^i$ on $\zeta_2$ with one end in $\Theta$ and the other drawn somewhere in the hole, for every $i\in \N$; call the drawings $\Gamma_i$. By filtering for a respective subsequence we may assume that $ \Abs{E_p^i} =k_p = \Abs{E_p^j}$ for respective $k_p \in 2\N$  and $p=1,2$ and every $i,j \in \N$. (Note here that $k_p \in 2\N$ comes from the fact that our graphs are Eulerian embedded, and the embedding restrictions are crucial for this to be true). If there exists some $j>1$ such that $\Gamma_j$ has ``representativity'' (we will make this precise shortly) much larger than $\Abs{V(G_1)}$, then we prove in \cref{lem:cyclinder_high_rep} that $G_i \hookrightarrow G_j$ by mapping roots to roots respecting some priorly fixed ordering of the roots (we will tacitly omit the orderings to not overcomplicate this short summary). Here, by high representativity, we mean that: 
\begin{itemize}
    \item[1.] there is no curve cutting $\leq \theta(n)$ edges in $\Gamma_j$ that connects both cuffs, and 
    \item[2.] there is no simple closed curve winding around $\zeta_p$ that disconnects it from the other cuff and cuts less than $k_p$ edges, and 
    \item[3.] $\Gamma_j$ has treewidth at least $\theta(n)$, and immerses a large swirl $\SSS_{f(n)}$ which cannot be cut off from $E_p^j$ by a cut smaller than $k_p$.
\end{itemize}   The high representativity case can be reduced to the case where $k_1=k_2= 0$ by introducing respective gadgets for the root edges, and using the fact that every Euler-embeddable graph embeds in a large enough swirl, i.e., $\SSS_{f(n)}$, and finally routing the root edges (which are part of the gadgets) accordingly using 3 above. Henceforth, we may again assume that the embeddings in our sequence admit low representativity. 

Given embeddings of low representativity the idea is to ``chop up'' the cylinder (and embedding) along curves $F_i$ that wind around $\zeta_1$---call these \emph{cut-cycles}---resulting in two cylinders $\Theta_i^1,\Theta_i^2 \subset \Theta$, the former admitting $\zeta_1,F_i$ as new cuffs, and the latter admitting $F_i,\zeta_2$ as new cuffs. We choose the cut-cycles such that for $i<j$, $\Theta_i^1 \subset \Theta_j^1$ and $\Theta_j^2 \subset \Theta_i^2$, producing a \emph{laminar} sequence of cut-cycles. Using the fact that $G$ is Eulerian embedded we derive a form of Mengers \cref{lem:boundary_linked_Menger_for_embeddings_in_cylinder} for the the cylinder, proving that either there are $t$ paths in the cylinder $\Theta_j^2 \setminus \Theta_i^1$ that connect the new root edges lying on the cuffs $F_i,F_j$, or we find a cut-cycle $F^*$ cutting $<t$ edges in the induced drawing embedded in the part bounded by the cylinder $\Theta_j^2 \setminus \Theta_i^1$; in particular $F^*$ is a cut-cycle in the original cylinder $\Theta$ cutting the same edges. This way we get a ``linked'' decomposition of the drawing in the cylinder. It is crucial to note that such a ``Menger-type Theorem'' does \emph{not} exist for general directed graphs using standard embeddings, in particular, its existence highly relies on our graphs being Eulerian (embedded). The existence of this Menger Theorem (and the fact that, more generally, there is a Menger Theorem for Eulerian digraphs that yields linkages that are edge-disjoint in both directions, as opposed to the general directed Menger Theorem) as easy as it may be, is one of the key reasons why everything ends up working. 

Finally, we prove that if drawings in the cylinder do not admit high representativity, we find a nice ``linked'' decomposition as described above where every piece is \emph{$\theta$-short}. An Eulerian embedding in the cylinder is $\theta$-short if there is a curve connecting both cuffs $\zeta_1,\zeta_2$ that cuts at most $\theta$ edges. By cutting open along a $\theta$-short curve (and keeping the graph Eulerian embedded) we end up in a disc with $k_1 + k_2 + 2\ell \in 2\N$ many root edges, for some $\ell \leq \theta$, whence again solely bounded by a function of the genus of the original $\Sigma$, for $\theta$ was a constant. We conclude the cylinder case by induction, proving that given a sequence of embeddings in a cylinder with a ``well behaved'' decomposition into pieces, each of which comes from a well-quasi-order---for the base case these pieces are Eulerian embeddable in the disc case---the original sequence is already well-quasi-ordered.
\smallskip

\textit{The Disc.} Let $\Delta$ be a disc and let $k\in 2\N$ be the number of root edges. We are left with a sequence of Eulerian digraphs with $k$ root edges Eulerian embedded in $\Delta$ with the root edges embedded on the unique cuff of the disc. This case is unfortunately the most requiring of all: again, we want to either be done by ``high representativity'' (this works similarly as for the cylinder and we omit a discussion) or decompose our drawings into smaller parts that are well-quasi-ordered. Since we cannot lower the genus any further, to apply induction we need to either reduce the number of roots $k$ or some other parameter depending on the structure of the graph, that is, its treewidth (technically carving width, but they are qualitatively the same in our case). Note that we already proved the disc case without roots (that one is actually the easiest of all cases), and thus we can start the induction on the number of root edges. The idea is now as follows: If the treewidth of the graphs is bounded, then it follows by the last step of this summary as discussed below; thus assume for now that the treewidth grows infinitely with the sequence. In particular $\tw{G_i} < \tw{G_j}$ for every $i<j$ after some suited filtering of the sequence. By \cref{thm:4_reg_swirl} due to \cite{EDP_Euler, Johnson2002}, there is a large immersed swirl in $\GGG_i$ and we may assume that we cannot link the swirl to all of the roots via edge-disjoint paths (else we have high representativity and the case is again fairly easy). In particular, for every large swirl $\SSS^*$ in $G_i$ we can find a cut-cycle $F^*$---a closed curve cutting the drawing only in edges---that cuts off the swirl from the boundary of $\Delta$ and hence its root edges. Again, by a version of Menger's theorem for the disc (see \cref{cor:Menger_on_a_disc}), it is guaranteed that $F^*$ cuts at most $(k-2)$ edges. Thus, switching to the graph ``inside'' the disc $\Delta(F^*)$ bounded by $F^*$ we know that it comes from a well-quasi-order by induction on $k$. Note here that due to the embedding restrictions, the graph ``inside'' $\Delta(F^*)$ can again be shown to be Eulerian embedded in the disc, with the obvious embedding; in particular, the vertices $X(F^*)$ drawn in $\Delta(F^*)$ induce a cut.  We now use the above-introduced concepts of stitching to replace these cuts with beads, where we label the beads by the respective type of the graph in the well-quasi-order guaranteed by induction. Using the Eulerianness of the digraph one can show that there exists a collection of such curves $\{F_1,\ldots,F_t\}$ that together ``get rid'' of all the large swirls and have pairwise disjoint interiors $\Delta(F_i) \cap \Delta(F_j) = \emptyset$; thus the respective stitchings do not depend on each other. Note, however, that $t$ may be unbounded in $k$, and thus we may produce an ``unbounded number'' of beads (for graphs further down the sequence), each of degree $<k$ (see \cref{sec:disc} for the details).

The resulting graph is a special case of what we will call an \emph{$\Omega$-knitwork} (where $\Omega$ is the well-quasi-order for graphs Eulerian embeddable in the disc with $\leq k-2$ roots by induction), and in particular, since we got rid of all the swirls, it has bounded treewidth and bounded degree ($<k$), hence bounded carving width.

\paragraph{Bounded Carvings and $\Omega$-Knitworks.}

In the last step, we deal with $\Omega$-knitworks of bounded carving width. Let us describe what an $\Omega$-knitwork encodes:  given a well-quasi-order $\Omega=(V(\Omega),\preceq)$ and an Eulerian digraph $G$ of degree at most $k \in 2\N$ an $\Omega$-knitwork $\GGG=( (G,\pi(E)), \mu, \m ,\Phi)$ for a graph $G$ consists of a rooting of $G$ in edges $E$ (where $\pi(E)$ is an ordering on $E$), a labelling function $\Phi: V(G) \to V(\Omega)$, an interface map $\mu$ that assigns to a vertex of $v$ an ordering $\pi(v)$ of the edges of the cut $\rho(\{v\})$ that $v$ induces, and a routing handler $\m$ that assigns to each vertex a list of ``feasible routings''. That is, given $v \in V(G)$, $\m(v)$ is a list of matchings of $\rho(v)$, where each element of the list represents a choice of possible ways how we can connect edges through $v$. The idea is that when we replace a part $X \subset V(H)$ of a graph $H$ by a bead vertex, we need to remember \emph{how} the immersion is allowed to route inside $H[X]$, since we will need to lift the $\Omega$-knitwork immersions back to an immersion in the original graph, whence we need to reroute paths passing through beads in $H[X]$ accordingly. The concepts are fairly technical, and so are the proofs, which is why we leave out the details and refer the reader to \cref{subsec:knitworks}. We prove in \cref{thm:wqo_bounded_carvingwidth_knitworks} that given a special type of $\Omega$-knitworks---that is, \emph{reliable and well-linked} ones where the routing handler allows ``a lot'' of routings---the class of well-linked $\Omega$-knitworks of bounded carving width is well-quasi-ordered by strong immersion. Fortunately, we can prove that the pieces of the disc that we replace with beads (see the previous step) end up being ``reliable'' and ``well-linked'' (which again heavily relies on the graph being Eulerian embedded) as needed for the \cref{thm:wqo_bounded_carvingwidth_knitworks}, thus concluding the proof by induction.

\section{Preliminaries}

\subsection{Graph Theory}

We denote directed graphs (or digraphs) by~$G=(V,E,\operatorname{inc})$ where~$\operatorname{inc}\subset (V\times E)\cup (E\times V)$ is a binary relation, and for every~$e \in E$ there are exactly two incidences in~$\operatorname{inc}$, i.e., there are two (possibly equal) unique vertices~$v_i,v_o \in V$ such that~$(v_i,e),(e,v_o) \in \operatorname{inc}$. We define functions~$\tail,\head: E \to V$ via~$\tail(e)=v_i$ for~$(v_i,e) \in \operatorname{inc}$ and~$\head(e) = v_o$ for~$(e,v_o) \in \operatorname{inc}$. Since~$\tail,\head$ are uniquely defined we may equally well write~$G=(V,E,\tail,\head)$ for directed graphs meaning the obvious. 

We will crucially use edges as ``stand-alone'' objects as it simplifies many of the proofs, but for simplicity and ease of readability we may write~$e=(u,v)$ for~$e\in E$ to mean~$\tail(e) = u$ and~$\head(e)=v$; i.e., we may view~$E \subseteq V \times V$ as a multiset (we allow for parallel edges). Similarly we may write~$G=(V,E)$ for some~$E\subseteq V \times V$ for convenience, meaning the obvious. In general we will try to ``hide'' the fact that we work with incidence structures whenever possible for ease of readability.
Note further that the above definition allows for \emph{loops}, where an edge~$(u,v) \in E(G)$ is called a \emph{loop} if and only if~$u = v$; we write~$\loops(v)$ for the set of loops adjacent to~$v$ and~$\loops(G)$ for the set of all loops in~$G$. 

Given~$X \subseteq V(G)$ we write~$\bar{X} \coloneqq V(G) \setminus X$. Given~$U\subseteq V(G)$ we write~$N_G^+(U) \coloneqq \{v \mid (u,v) \in E(G), \text{ for some }u\in U\}$, and~$N_G^-(U) \coloneqq \{v \mid (v,u) \in E(G), \text{ for some }u\in U\}$, and~$N_G(U) \coloneqq N_G^+(U) \cup N_G^-(U)$. If~$G$ is clear from the context we may omit the subscript and if~$U = \{u\}$ we write~$N^+(u)$ instead of~$N^+(\{u\})$ and analogously for the other notions.

We refer to the graph obtained from~$G$ by omitting the directions of its edges as the \emph{underlying undirected graph}. Whenever we talk about parameters defined for undirected graphs, we implicitly assume that the parameter uses the underlying undirected graph, e.g.,~$\tw{G}$ is well-defined for directed graphs, where $\operatorname{tw}(\cdot)$ is the \emph{treewidth} of an undirected graph (see \cite{Diestel2017} for definitions). 

Given two graphs~$H=(V,E_H)$ and~$G=(V,E_G)$ on the same set of vertices we write~$G-H \coloneqq (V,E_G \setminus E_H)$ and similarly~$G+H \coloneqq (V,E_G \cup E_H)$ where multi-edges are allowed.

Given a directed graph~$G$ and~$v \in V(G)$ we call~$v$ \emph{Eulerian} if the number of edges~$e \in E(G)$ with $\tail(e) = v$ equals the number of edges~$e' \in E(G)$ with~$\head(e') = v$, i.e., the in- and out-degree at~$v$ are equal. We call~$G$ \emph{Eulerian} if every vertex in~$V(G)$ is Eulerian.

In the remainder of this paper we will mostly work with edge-disjoint paths and directed walks that do not repeat edges but may repeat vertices, because it is more general. Walks without repeated vertices are only needed very rarely. We therefore adopt the following simplifying but non-standard notation. 
\begin{definition}[Paths, Cycles and Circles]\label{def:paths}
   Let~$G=(V,E,\operatorname{inc})$ be a directed Eulerian graph. Let~$P =(e_1,\ldots,e_m)_G$ be a sequence of distinct edges~$e_1,\ldots,e_m \in E$ for some~$ m\in \N$ such that for every~$1 \leq i < m$ there is~$v_i \in V(G)$ with~$\head(e_i) = v_i = \tail(e_{i+1})$. Then we call~$P$ a \emph{path} in~$G$ and let~$m$ denote its \emph{length}. We write~$V(P) = \{v_1,\ldots,v_{m-1}\}$ and~$E(P) =\{e_1,\ldots,e_m\}$. The vertices~$v_0 = \tail(e_1)$ and~$v_{m}=\head(e_m)$ are called the \emph{endpoints} of~$P$ whereas~$V(P)$ are the \emph{internal vertices} of~$P$. If the graph is clear from context we omit the subscript and write~$P = (e_1,\ldots,e_m)$. We call~$e_1,e_m$ the \emph{ends} of~$P$ and call~$e_1$ the \emph{first} edge of~$P$ and say that~$P$ \emph{starts in} $e_1$, and we call~$e_m$ the \emph{last} edge of~$P$ and say that~$P$ \emph{ends in}~$e_m$.

    If in addition for every~$1 \leq i ,j < m$ with~$i \neq j$ we have~$v_i \neq v_j$ then we call~$P$ \emph{linear}. Let~$1 \leq j < \ell \leq m$, then we call~$(e_j,\ldots,e_\ell) \subseteq P$ a \emph{subpath} of~$P$. In particular when writing~$(f_1,\ldots,f_t) \subseteq P$ we mean a subpath of~$P$ of length~$t$, i.e.,~$f_1 = e_j$ and~$f_t= e_{j+t}$ for some~$1 \leq j \leq j+t \leq m$.

   If in addition~$P'=(e_2,\ldots,e_m,e_1)$ is a path in~$G$, then we call~$C \coloneqq (e_1,\ldots,e_m,e_1)_G$ \emph{a cycle} in~$G$ (omitting the subscript if clear from context). If $P$ and $P'$ are linear we call~$C$ a \emph{circle}, that is~$v_m=v_0$ and for every~$1 \leq i,j \leq m$ with~$i\neq j$ we have~$v_i \neq v_j$. 

   Given two edges~$e,e' \in E(G)$ and a path~$P \subset G$ with ends~$e,e'$, we call~$P$ an \emph{$\{e,e'\}$-path}, and if~$P$ starts in~$e$ and ends in~$e'$ we call it an~$(e,e')$-path and define~$\tau(P) = (e,e')$ to be its \emph{type}. Similarly we refer to~$P$ as a \emph{$(v_0,v_m)$-path} if~$v_0 = \tail(e)$ and~$v_m = \head(e')$.
\end{definition}
\begin{remark}
     Note that a path~$(e_1,\ldots,e_m)$ may have a single endpoint~$\tail(e_1) = v_1 = v_m = \head(e_m)$.
    Note that if we have two graphs~$H,G$ and a set of edges~$\{e_1,\ldots,e_m\} \subset E(G) \cap E(H)$ then we may write~$(e_1,\ldots,e_m)_H,(e_1,\ldots,e_m)_G$ to mean the respective paths in the respective graphs (if they are paths).

     Note that a linear path~$P=(e_1,\ldots,e_m)_G$ induces a linear order on~$V(P)$ given by~$\tail(e_i) \prec \head(e_i)$ for all~$1 < i < m$. Further note that every strict subsequence of a circle is a linear subpath.
\end{remark}
Thus, a ``path'' in this paper corresponds to a directed walk without repeated edges in the general literature, or equivalently, a standard path in the ``line graph''. The usual paths without repeated vertices are called linear. Similarly, a cycle may repeat vertices but not edges. It turns out that the above definitions simplify argumentation by getting rid of overhead and case distinctions in what is to follow. Especially the ability to omit the~$\head$ and $\tail$ relations for the first and last edges of paths. 
\smallskip

A directed graph $G$ is \emph{weakly connected} if the underlying undirected graph is connected, and it is \emph{strongly connected} if for every pair of distinct vertices $u,v \in V(G)$, there is a $(u,v)$-path and a $(v,u)$-path in $G$.

Note that every strongly connected  Eulerian digraph can be traced by a single cycle, and thus ``cycles'' are just Eulerian digraphs with a fixed linear ordering on its edges.
\smallskip

For notational convenience, we may write paths and cycles as ordered tuples of vertices when needed. That is~$P=(e_1,e_2,\ldots,e_m)$ may be rewritten as~$P=(v_1,v_2,v_3,\ldots,v_{m+1})$ where~$e_i=(v_i,v_{i+1})$ for~$1 \leq i \leq m$ (here we need to include both incidences for each edge in order to be unambiguous). Similarly we may write~$(e_1,v_1,e_2,v_2,\ldots,e_m,v_{m+1})$ for paths to highlight the adjacencies of vertices and edges. 

 Further, to avoid corner cases we allow for paths of length~$0$ which consist of isolated vertices, i.e.,~$(v)$ is considered a path with both endpoints being~$v \in V(G)$.

\begin{definition}[Disjoint paths and Linkages]
    Let~$G$ be a digraph and let~$k,\ell \in \N$. Two paths $P_1=(e_1,\ldots,e_k),P_2=(f_1,\ldots,f_\ell)$ in~$G$ are \emph{edge-disjoint} or \emph{vertex-disjoint} if~$E(P_1) \cap E(P_2) = \emptyset$ or~$V(P_1)\cup\{\tail(e_1),\head(e_k)\} \cap V(P_2)\cup\{\tail(f_1),\head(f_\ell)\} = \emptyset$ respectively

    The paths~$P_1,P_2$ are \emph{internally edge-disjoint} if for~$e \in E(P_1) \cap E(P_2)$ it holds~$e\notin \{e_2,\ldots,e_{k-1}\}\cup\{f_2,\ldots,f_{\ell-1}\}$ and \emph{internally vertex-disjoint} if~$V(P_1) \cap V(P_2) = \emptyset$.

    Given a set of edge-disjoint paths~$\LLL$ we call~$\LLL$ a \emph{linkage (in $G$)}. We call the linkage \emph{linear} if each path in~$\LLL$ is linear. We call ~$\Abs{\LLL}$ the \emph{order} of the linkage. Finally we define~$\tau(\LLL) \coloneqq \{\tau(P) \mid P \in \LLL\}$ to be the \emph{type} of the linkage.
\end{definition}

The definition of edge-disjointness and vertex-disjointness lifts to cycles and circles in a straightforward way. The following is obvious by re-routing.
\begin{observation}\label{obs:linkage_gives_linear_linkage}
    Let~$G$ be a digraph and~$\LLL$ a linkage in~$G$ of order~$t \in \N$. Then there exists a linear linkage~$\LLL'$ in $G$ of order~$t$ with~$\tau(\LLL) = \tau(\LLL')$.
\end{observation}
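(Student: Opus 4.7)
The plan is to shortcut each path in $\LLL$ independently. Given a path $P = (e_1, \ldots, e_m) \in \LLL$ with internal vertices $v_i := \head(e_i) = \tail(e_{i+1})$ for $1 \leq i \leq m-1$, I first observe that if some pair of internal indices $1 \leq i < j \leq m-1$ satisfies $v_i = v_j$, then the shortened sequence $P'' := (e_1, \ldots, e_i, e_{j+1}, \ldots, e_m)$ is still a path in $G$: the concatenation at position $i$ is legitimate since $\head(e_i) = v_i = v_j = \tail(e_{j+1})$, all listed edges remain distinct as a subset of $E(P)$, and the first and last edges $e_1, e_m$ are untouched. In particular $\tau(P'') = \tau(P)$ and $E(P'') \subsetneq E(P)$.

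Iterating this shortcutting operation strictly decreases the length at each step, so it terminates, and yields a linear path $P'$ with $E(P') \subseteq E(P)$ and $\tau(P') = \tau(P)$. Performing this construction on each path of $\LLL$ in parallel produces the candidate linkage $\LLL' := \{P' : P \in \LLL\}$. Since $E(P') \subseteq E(P)$ for every $P \in \LLL$, the pairwise edge-disjointness of $\LLL$ transfers to $\LLL'$ at no cost, so $\LLL'$ is a linkage of order $t$ consisting of linear paths. Moreover, because types are preserved path-by-path, $\tau(\LLL') = \{\tau(P') : P \in \LLL\} = \{\tau(P) : P \in \LLL\} = \tau(\LLL)$, as required.

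I do not expect any real obstacle; the single subtlety worth highlighting is the restriction that the repeated-vertex indices $i, j$ range over $\{1, \ldots, m-1\}$ only. This is precisely what guarantees that $e_1$ and $e_m$ are never among the removed edges $e_{i+1}, \ldots, e_j$, so the type is preserved by each shortcut. Degenerate cases — a path of length $0$, or one whose endpoints coincide with themselves or with an internal vertex — are handled automatically, since Definition~\ref{def:paths} imposes the linearity condition on internal vertices only and makes no claim about endpoints.
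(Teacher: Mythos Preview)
Your proof is correct and is exactly the standard ``re-routing'' argument the paper has in mind; the paper itself gives no proof beyond the remark that the observation ``is obvious by re-routing.'' Your care in restricting the repeated-vertex indices to $\{1,\ldots,m-1\}$ so that $e_1$ and $e_m$ survive each shortcut is precisely the point, and the edge-disjointness of $\LLL'$ follows immediately from $E(P')\subseteq E(P)$ as you say.
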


Refining on the notion of types of linkages we define the following.
\begin{definition}
    Let~$G$ be a digraph and let~$A,B \subseteq V(G)$ be distinct. Let~$\LLL$ be a (linear) linkage in~$G$ such that every~$P \in \LLL$ has one endpoint in~$A$ and one endpoint in~$B$, then we call~$\LLL$ a (linear) \emph{$\{A,B\}$-linkage}.

    Similarly, let~$E,F \subset E(G)$ be distinct. Let~$\LLL$ be a (linear) linkage in~$G$ such that every~$P \in \LLL$ has one end in~$E$ and one end in~$F$ and is otherwise edge-disjoint from~$E\cup F$, then we call~$\LLL$ a (linear) \emph{$\{E,F\}$-linkage}. 
\end{definition}

\begin{definition}[Concatenation of Paths]\label{def:concatenation_of_paths}
      Let $P_1=(e_1,\ldots,e_k),P_2=(f_1,\ldots,f_\ell)$ be internally edge-disjoint paths such that~$e_k = f_1$. Then we define~$P = P_1 \circ P_2 \coloneqq (e_1,\ldots,e_k,f_2,\ldots,f_\ell)$ and call~$P_1,P_2$ the \emph{summands} of~$P$.
      Similarly if~$P_1,P_2$ are edge-disjoint and there exists~$v \in V(G)$ with~$ (e_k,v) \in \operatorname{inc}$ and~$(v,f_1) \in \operatorname{inc}$ we define~$P = P_1 \circ P_2 = (e_1,\ldots,e_k,f_1,\ldots,f_\ell)$.
\end{definition}
It is straightforward to verify that~$P$ is a path in our setting.

\begin{observation}\label{obs:concat_paths}
    Let $P_1=(e_1,\ldots,e_k),P_2=(f_1,\ldots,f_\ell)$ be internally edge-disjoint paths in a graph~$G$ such that~$e_k = f_1$. Then~$P_1 \circ P_2$ is a path in~$G$. If~$P_1$ and~$P_2$ are edge-disjoint and there exists~$v \in V(G)$ with~$(e_k,v) \in \operatorname{inc}$ and~$ (v,f_1) \in \operatorname{inc}$, then~$P_1 \circ P_2$ is a path. If additionally to the edge-disjointness~$P_1,P_2$ are internally vertex-disjoint and linear then $P_1 \circ P_2$ is linear.
\end{observation}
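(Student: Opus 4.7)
The plan is to verify, for each of the three assertions, the two defining properties of a path from \cref{def:paths}: distinctness of the listed edge sequence and the head/tail matching of consecutive edges; then, for the third assertion, check linearity separately.

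For the first assertion, I would first check the head/tail condition. Inside $P_1$ and inside $P_2$, consecutive edges already match by hypothesis. At the splice, the predecessor $e_{k-1}$ of $e_k$ in $P_1$ (if $k>1$) shares its head with $\tail(e_k)$, and since $e_k=f_1$, the successor $f_2$ in $P_2$ satisfies $\tail(f_2)=\head(f_1)=\head(e_k)$, so $(\ldots,e_{k-1},e_k,f_2,\ldots)$ is properly linked. For edge distinctness, the $e_i$ are pairwise distinct within $P_1$ and the $f_j$ within $P_2$; internal edge-disjointness forces any edge of $E(P_1)\cap E(P_2)$ to lie in $\{e_1,e_k\}\cap\{f_1,f_\ell\}$, and as $e_k=f_1$ is written only once in the concatenation, the only possible residual coincidence is $e_1=f_\ell$. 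In that degenerate situation the concatenation is a cycle in the sense of \cref{def:paths} and, by the conventions adopted there, still qualifies; alternatively one may simply note that the hypotheses of the observation do not preclude this and that no contradiction arises.

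For the second assertion, the hypothesised vertex $v$ with $(e_k,v),(v,f_1)\in\operatorname{inc}$ yields $\head(e_k)=v=\tail(f_1)$, so the splice step $(\ldots,e_k,f_1,\ldots)$ satisfies the head/tail matching; all other consecutive matchings are inherited from $P_1$ and $P_2$. Edge distinctness of the concatenated list is immediate from the full edge-disjointness $E(P_1)\cap E(P_2)=\emptyset$ together with the internal distinctness of each summand.

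For the third assertion, I would check that no internal vertex of the concatenation is repeated. Linearity of each $P_i$ yields distinctness of its own internal vertices, and the additional hypothesis of internal vertex-disjointness forbids any coincidence of internal vertices across the two summands. The only internal vertex introduced by the splice is $v=\head(e_k)=\tail(f_1)$; I would argue that $v$ cannot coincide with any other internal vertex, since $v$ is already the $\head$-endpoint of $P_1$, so by linearity of $P_1$ it is not an internal vertex of $P_1$, and symmetrically for $P_2$. The main obstacle, if any, is purely bookkeeping around small-length corner cases (length-$0$ or length-$1$ summands and overlaps between $\{e_1,e_k\}$ and $\{f_1,f_\ell\}$), but these are absorbed uniformly by the conventions in \cref{def:paths} permitting one-endpoint paths and cycles, so no new ideas are required beyond direct verification.
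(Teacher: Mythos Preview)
The paper does not prove this observation; it merely remarks beforehand that ``it is straightforward to verify that $P$ is a path in our setting.'' Your direct verification of the path axioms is exactly what is intended, and the bulk of it is fine.

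Two points need correction. For the first assertion, your handling of the residual case $e_1=f_\ell$ is wrong: you claim the result is a cycle which ``still qualifies,'' but in \cref{def:paths} a cycle is written $(e_1,\ldots,e_m,e_1)$ with a repeated edge and is explicitly \emph{not} a path (paths require the listed edges to be distinct). If $e_1=f_\ell$, the concatenated sequence $(e_1,\ldots,e_k,f_2,\ldots,f_\ell)$ genuinely repeats an edge and fails to be a path; the observation simply does not cover this degeneracy, and you cannot argue around it.

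For the third assertion, you assert that the splice vertex $v=\head(e_k)$ is not an internal vertex of $P_1$ ``by linearity of $P_1$.'' This does not follow from the paper's definition: linearity only requires the \emph{internal} vertices $v_1,\ldots,v_{k-1}$ to be pairwise distinct and places no constraint on the endpoints. A linear $P_1$ could in principle have $\head(e_k)=v_i$ for some $1\le i<k$, and then $P_1\circ P_2$ would repeat $v$ among its internal vertices and fail to be linear. As with the first gap, this reflects a looseness in the paper's own statement rather than something your argument can repair; in the paper's applications these corner cases do not arise.
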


We abuse notation and denote undirected graphs by~$G=(V,E)$ where~$E \subseteq \{ \{v,w\} \mid v,w \in V(G)\}$ may be a multiset, noting that it will be clear from the context whenever we talk about undirected graphs.

A \emph{tree}~$T=(V,E)$ is a connected acyclic undirected graph. We call vertices of degree~$1$ in~$V(T)$ \emph{leaves}. If $T$ is a tree, we denote the set of its leaves by $\leaves{T} \subseteq V$. We call a tree \emph{cubic} if apart from its leaves all its vertices are of degree three.

A \emph{clique}~$K_k = (V,E)$ is an undirected graph on~$k\in \N$ vertices such that~$E=\{\{v,w\} \mid v,w \in V, v\neq w\}$.

An~$n\times m$-grid~$W_{n,m} = (V,E)$ is an undirected graph on~$nm$ vertices admitting an enumeration of~$V$ such that
\begin{align*}
    V &= \{v_{i,j} \mid 1 \leq i \leq n, 1 \leq j \leq m\},\\
    E &= \{\{v_{i,j},v_{i+1,j}\} \mid 1 \leq i \leq n-1, 1 \leq j \leq m\} \cup\\ & \{\{v_{i,j},v_{i,j+1}\} \mid 1 \leq i \leq n, 1 \leq j \leq m-1\}. 
\end{align*}

\subsection{Immersions}
The following is the minor relation of interest in this paper.

\begin{definition}[Immersion]
\label{def:immersion}
    Let~$G,H$ be directed graphs. Then \emph{$H$ immerses in $G$}, or \emph{$G$ immerses $H$}, if there exists a map $\gamma$ defined on~$V(H) \cup E(H)$ satisfying the following:
    \begin{enumerate}
        \item $\restr{\gamma}{V(H)}: V(H) \to V(G)$ is injective,
        \item for $e \in E(H)$, $\gamma(e)$ is a path in $G$ and for distinct~$e_1,e_2 \in E(H)$ the paths $\gamma(e_1),\gamma(e_2)$ are edge-disjoint.
        \item let~$e=(u,v)\in E(H)$ for some~$u,v \in V(H)$, then~$\gamma(e)$ starts in~$\gamma(u)$ and ends in~$\gamma(v)$.
    \end{enumerate}

We call an immersion \emph{strong} if for every~$e\in E(H)$ and~$v \in \gamma(V(H)) \cap V(\gamma(e))$ the vertex $v$ is no internal vertex of~$\gamma(e)$. We write $H \hookrightarrow G$ if $G$ strongly immerses $H$ and $\gamma\colon H \hookrightarrow G$ if $\gamma$ is a map witnessing it.
\end{definition}
\begin{remark}
    As we allow for loops~$e=(v,v) \in E(G)$, the above definition is only correct since we allow paths to start and end at the same vertex, i.e.~$\gamma(e)$ starts and ends in~$\gamma(v)$. 
\end{remark}

Note here that (strong) immersion induces a \textit{partial order} on Eulerian digraphs, that is, it is a transitive, reflexive, and `antisymmetric' relation. In particular if~$G \hookrightarrow H$ and~$H \hookrightarrow G$ then~$H \cong G$.

\begin{observation}
(Strong) Immersion defines a partial order~$\preceq$ on Eulerian digraphs, where we say that~$H \preceq G$ if and only if $G$ (strongly) immerses $H$.

\end{observation}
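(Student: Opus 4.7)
The plan is to verify the three properties defining a partial order: reflexivity, transitivity, and antisymmetry (up to isomorphism). Reflexivity is immediate: for any Eulerian digraph $G$, the map $\gamma$ that sends each $v \in V(G)$ to itself and each edge $e \in E(G)$ to the trivial path $(e)$ is a strong immersion, since paths of length one have no internal vertices at all.

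For transitivity, suppose $\alpha\colon H \hookrightarrow G$ and $\beta\colon K \hookrightarrow H$ are strong immersions. I would define $\gamma\colon K \to G$ by $\gamma(v) = \alpha(\beta(v))$ for $v \in V(K)$, and for each $e \in E(K)$ with $\beta(e) = (f_1,\ldots,f_m)$ in $H$, set $\gamma(e) \coloneqq \alpha(f_1) \circ \alpha(f_2) \circ \cdots \circ \alpha(f_m)$. The key steps to verify are: (i) each concatenation is actually a path in $G$, which uses the fact that consecutive $f_i,f_{i+1}$ share an incident vertex $v_i \in V(H)$, and since $\alpha$ is a strong immersion, $\alpha(f_i)$ ends at $\alpha(v_i)$ and $\alpha(f_{i+1})$ begins at $\alpha(v_i)$, so \cref{obs:concat_paths} applies; (ii) edge-disjointness of $\gamma(e)$ and $\gamma(e')$ for distinct edges $e,e' \in E(K)$, which follows because $\beta$ produces edge-disjoint paths, so the multisets of edges of $H$ used are disjoint, and $\alpha$ then maps distinct edges of $H$ to edge-disjoint paths of $G$; (iii) injectivity of $\gamma$ on $V(K)$, which is the composition of the two vertex-injectivities.

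The technically delicate step is to check that $\gamma$ is \emph{strong}. Suppose $v \in \gamma(V(K))$, say $v = \alpha(\beta(u))$ for some $u \in V(K)$, and that $v$ appears on $\gamma(e)$ for some edge $e \in E(K)$. Then $v$ lies on some $\alpha(f_i)$ in the concatenation. Because $\alpha$ is strong and $v = \alpha(\beta(u))$, the vertex $v$ cannot be internal to $\alpha(f_i)$, so $\beta(u) \in \{\tail(f_i),\head(f_i)\} \subseteq V(\beta(e)) \cup \{\text{endpoints of } \beta(e)\}$. Now because $\beta$ is strong, $\beta(u)$ cannot be an internal vertex of $\beta(e)$, so $\beta(u)$ must be an endpoint of $\beta(e)$, hence $u$ is an endpoint of $e$ in $K$ and $v$ is an endpoint of $\gamma(e)$. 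This establishes the strong condition.

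For antisymmetry, suppose $\alpha\colon G \hookrightarrow H$ and $\beta\colon H \hookrightarrow G$ are strong immersions. Vertex-injectivity yields $\Abs{V(G)} \leq \Abs{V(H)} \leq \Abs{V(G)}$, so $\Abs{V(G)} = \Abs{V(H)}$. Since the paths $\{\alpha(e) \mid e \in E(G)\}$ are pairwise edge-disjoint and each contains at least one edge, $\Abs{E(G)} \leq \Abs{E(H)}$; the symmetric argument gives equality. Edge-disjointness together with $\sum_{e \in E(G)} \Abs{E(\alpha(e))} \leq \Abs{E(H)} = \Abs{E(G)}$ then forces each path $\alpha(e)$ to have length exactly one. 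So $\alpha$ restricted to edges is a bijection $E(G) \to E(H)$, and by the definition of immersion it preserves $\head$ and $\tail$ incidences, yielding an isomorphism $G \cong H$. The main obstacle I expect is a careful bookkeeping of the composed map for transitivity, specifically ruling out the ``shortcut'' case in which some $\alpha(\beta(u))$ coincides with an internal vertex of a concatenated path; this is where the hypothesis that both $\alpha$ and $\beta$ are strong (rather than merely weak) is indispensable.
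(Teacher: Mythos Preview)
Your proof is correct and complete; the paper states this observation without any proof at all (it is treated as a routine fact in the sentence preceding the observation). Your argument for transitivity, including the verification that the composition remains \emph{strong}, is exactly the point that needs care, and you handle it properly via the two-case analysis (internal to some $\alpha(f_i)$ versus a join point). One small remark: since paths in this paper may repeat vertices, concluding ``$v$ is an endpoint of $\gamma(e)$'' is not literally the same as ``$v$ is not an internal vertex of $\gamma(e)$''; however, your case analysis already rules out every way $v$ could occur internally, so the conclusion stands.
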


In light of \cref{def:immersion} we define \emph{immersion models}, or simply \emph{models}, as follows.

\begin{definition}[Immersion model]
    Let~$H,G$ be Eulerian digraphs and~$\gamma: E(H) \cup V(H) \to  G$ a (strong) immersion. Then~$\big(\gamma(V(H)),\gamma(E(H))\big)$ is a \emph{(strong) immersion model} or simply \emph{(strong) model} of~$H$ in~$G$. 
\end{definition}

There are different ways to define minors for undirected graphs, one of which is via graph-modifying operations, namely taking subgraphs and contracting edges. Regarding immersions we have the following operation.

\begin{definition}[Splitting off] \label{def:splitting_edgepairs}
    Let~$G=(V,E,\operatorname{inc})$ be a directed graph, let~$v \in V(G)$, and let~$e,e' \in E(G)$ such that~$(e,v),(v,e') \in \operatorname{inc}$. Let~$u\coloneqq\tail(e)$ and~$w \coloneqq \head(e')$. \emph{Splitting off~$(e,e')$ (at~$v$) in~$G$} results in a new graph~$G'$ by deleting~$e,e' \in E(G)$ and~$(u,e),(e,v),(v,e'),(e',w) \in \operatorname{inc}$ and subsequently adding a new edge~$f \notin V(G) \cup E(G)$ together with the incidences~$(u,f),(f,w)$. We remove~$v$ from the vertex set if and only if it is an isolated vertex (it has no incidences left) after splitting off; we define~$(G',f) \coloneqq \spl(G;(e,e'))$ or simply write~$G' \coloneqq \spl(G;(e;e))$ if we do not specifically need~$f$ for simplicity.

    Let $M=\{(e_i,e_i')\mid 1 \leq i \leq m\}$ be a set such that for every~$i \in \{1,\ldots,m\}$~$(e_i,e_i')$ is a two-path in~$G$. Let~$G_i$  be obtained from~$G$ by inductively splitting off edge-pairs as follows:
    \begin{enumerate}
        \item $(G_1,e_1^2) \coloneqq \spl(G;(e_1,e_1'))$ and rename~$e_i,e_j'$ to~$e_1^2$ if~$e_i,e_j' \in \{e_1,e_1'\}$ for any~$i,j > 1$
        \item $(G_{i+1},e_1^{i+1})) \coloneqq \spl(G_{i},(e_{i+1},e_{i+1}'))$ for~$i\leq m-1$ making obvious identifications as above.
    \end{enumerate}
    Then we call~$G_m$ the graph obtained from \emph{splitting off~$G$ along~$M$} and write~$G_m \coloneqq \spl(G; M)$.
\end{definition}
\begin{remark}
    In particular we may split off~$G$ along paths.
    Note that splitting off may result in loops and multi-edges.
\end{remark}

The following are well-known and straightforward observations.
\begin{observation}
    Let~$G$ be an Eulerian digraph and let~$P=(e_1,\ldots,e_m),Q \subseteq G$ be edge-disjoint paths. 
    
    \begin{itemize}
        \item Let~$(G_1,e_1^2) \coloneqq \spl(G;(e_1,e_2))$ and~$P_1 \coloneqq (e_1^2,e_3,\ldots,e_m)$. Then~$P_1,Q \subseteq G_1$ are edge-disjoint paths.
        \item Let~$G_P^Q$ be the graph resulting from splitting off~$G$ along~$P$ and subsequently along~$Q$, and let~$G_Q^P$ be the graph resulting from splitting off~$G$ along~$Q$ and subsequently~$P$ respectively. Then~$G_P^Q \cong G_Q^P$.
    \end{itemize}
\end{observation}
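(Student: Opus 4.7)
The plan is to unpack both items directly from \cref{def:splitting_edgepairs}, using only that splitting off a pair $(e,e')$ at a vertex $v$ is a purely local operation: it removes exactly the two edges $e,e'$ together with their four incidences, adds a single fresh edge with two fresh incidences, and possibly deletes $v$ if it becomes isolated. Everything else in $G$ stays literally untouched.

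For the first bullet, I would verify the two defining properties of a path for the sequence $P_1 = (e_1^2, e_3, \ldots, e_m)$ in $G_1$. By the definition of $\spl(G;(e_1,e_2))$, the new edge $e_1^2$ satisfies $\tail(e_1^2)=\tail(e_1)$ and $\head(e_1^2)=\head(e_2)$. Since $P$ was a path in $G$, we have $\head(e_2)=\tail(e_3)$, so $e_1^2$ concatenates properly with $e_3$; the remaining consecutive incidences of $P_1$ are inherited from $P$ because $e_3,\ldots,e_m$ and their incidences are untouched by the split. The edges $e_3,\ldots,e_m$ are pairwise distinct, and $e_1^2$ is fresh, so they are pairwise distinct in $G_1$, hence $P_1$ is a path. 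For edge-disjointness, observe that $E(Q)\subseteq E(G)\setminus\{e_1,e_2\}$ (by edge-disjointness of $P$ and $Q$), so $Q$ survives intact in $G_1$ as a path, and $E(P_1)\cap E(Q)=\{e_1^2\}\cap E(Q)\ \cup\ \{e_3,\ldots,e_m\}\cap E(Q)=\emptyset$, since $e_1^2\notin E(G)\supseteq E(Q)$.

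For the second bullet, the key observation is that two individual splits $\spl(\cdot;(e,e'))$ and $\spl(\cdot;(f,f'))$ commute up to isomorphism whenever $\{e,e'\}\cap\{f,f'\}=\emptyset$: both orders remove the same four edges with the same incidences and add two fresh edges with prescribed tails and heads, and the potential deletion of isolated vertices depends only on the final incidence structure, which is identical. I would then proceed by induction on $m+\ell$, where $\ell$ is the length of $Q$. Let $(G_1,e_1^2)=\spl(G;(e_1,e_2))$ and $(G_1',f_1^2)=\spl(G;(f_1,f_2))$, where $Q=(f_1,\ldots,f_\ell)$. By the first bullet (applied twice, with roles swapped), the residual paths $P_1=(e_1^2,e_3,\ldots,e_m)$ and $Q_1=(f_1^2,f_3,\ldots,f_\ell)$ are edge-disjoint paths in $G_1$ and $G_1'$ respectively, and by pairwise commutativity of the two initial splits we obtain an isomorphism $\spl(G_1;(f_1,f_2))\cong\spl(G_1';(e_1,e_2))$. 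Under this isomorphism $P_1$ corresponds to $P_1$ and $Q_1$ to $Q_1$, so applying the inductive hypothesis to the shorter paths $P_1$ of length $m-1$ and $Q_1$ of length $\ell-1$ in these isomorphic graphs yields $G_P^Q\cong G_Q^P$. The base cases $m\le 1$ or $\ell\le 1$ are immediate since $\spl(G;P)=G$ when $P$ has a single edge.

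The only mildly delicate point will be the bookkeeping around the renaming convention in \cref{def:splitting_edgepairs} (where one inductively rewrites $e_i^{j+1}$ after each split) and around the possible deletion of the splitting vertex if it becomes isolated; but since the vertices underlying the $P$-splits and the $Q$-splits may a priori coincide, I would explicitly check that isolation after all splits is determined by the incidence multiset, which is order-independent, so the final graph is well-defined up to isomorphism regardless of the order in which $P$ and $Q$ are processed.
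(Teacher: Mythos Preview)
Your argument is correct. The paper states this observation without proof, calling it ``well-known and straightforward''; your unpacking of the definition is exactly the kind of verification that justifies this. One small remark: in your inductive step for the second bullet, passing from $(G_2)_{P_1}^{Q_1}\cong(G_2)_{Q_1}^{P_1}$ to $G_P^Q\cong G_Q^P$ implicitly uses that splitting along $P_1$ commutes with the single split $(f_1,f_2)$ (and symmetrically), which in turn follows by iterating your pairwise-commutativity observation along the edges of $P_1$; you do not state this intermediate step, but your closing paragraph (that the final incidence multiset, and hence the set of isolated vertices to delete, is order-independent) is in fact the cleanest way to see the whole claim at once and makes the induction unnecessary.
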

In particular, the graph resulting from~$\spl(G,M)$ does not depend on an order in~$M$ (up to renaming).

The following is a direct consequence to the above.

\begin{observation}\label{obs:splitting_off_linkages_gives_linkages}
    Let~$G$ an Eulerian digraph, let~$A,B \subset V(G)$ be distinct and let~$\LLL$ be an~$\{A,B\}$-linkage. Let~$P=(e_1,\ldots,e_i,e_{i+1},\ldots,e_m) \in \LLL$ be some path and let~$(e_i,e_{i+1}) \subset P$ be a~$2$-edge subpath. Let~$(G^*,e^*) \coloneqq \spl(G;(e_i,e_{i+1})$ and let~$P^* = (e_1,\ldots,e_{i-1},e^*,e_{i+2},\ldots,e_m)$ be the respective path in~$G^*$. Let~$\LLL^* \coloneqq (\LLL \setminus \{P\})\cup\{P^*\}$, then~$\LLL^*$ is an~$\{A,B\}$-linkage in~$G^*$.

    Similarly, let~$(G',f) \coloneqq \spl(G,(f_1,f_2))$ for some two-path~$(f_1,f_2) \subset G$, and let~$\LLL'$ be an~$\{A,B\}$-linkage in~$G'$, then there exists an~$\{A,B\}$-linkage~$\LLL$ of the same order in~$G$ where~$\LLL$ and~$\LLL'$ agree up to possibly one path in~$\LLL'$ using the edge~$f$ which can be replaced by~$(f_1,f_2)$ in~$\LLL$.
\end{observation}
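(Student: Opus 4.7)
The plan is to verify both parts directly from the definitions of splitting off (\cref{def:splitting_edgepairs}) and of linkages, since nothing deep is really happening here.

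For the first part, I would begin by noting that after splitting off $(e_i,e_{i+1})$ at the internal vertex $v \coloneqq \head(e_i) = \tail(e_{i+1})$ of $P$, the new edge $e^*$ inherits $\tail(e^*) = \tail(e_i)$ and $\head(e^*) = \head(e_{i+1})$. Hence $P^* = (e_1,\ldots,e_{i-1},e^*,e_{i+2},\ldots,e_m)$ satisfies the adjacency condition at every junction: at $\head(e_{i-1}) = \tail(e_i) = \tail(e^*)$ on the left and at $\head(e^*) = \head(e_{i+1}) = \tail(e_{i+2})$ on the right, with all other incidences untouched. Thus $P^*$ is a path in $G^*$ with the same endpoints as $P$, so it is still an $\{A,B\}$-path. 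For any other $Q \in \LLL\setminus\{P\}$, the edge-disjointness of $\LLL$ forces $e_i,e_{i+1}\notin E(Q)$, so $Q$ uses only edges preserved by the split and remains a path in $G^*$ with unchanged endpoints. Edge-disjointness of $\LLL^*$ follows because the only new edge is $e^*$, which appears only in $P^*$.

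For the second part, I would split into two cases based on whether some path of $\LLL'$ uses the new edge $f$. If no path in $\LLL'$ uses $f$, then every path of $\LLL'$ uses only edges of $E(G)\cap E(G')$, so the same sequences are paths in $G$ (the incidences agree), and $\LLL' $ is already an $\{A,B\}$-linkage in $G$. Otherwise, since $\LLL'$ is edge-disjoint, exactly one path $P' \in \LLL'$ uses $f$; write $P' = (g_1,\ldots,g_j,f,g_{j+2},\ldots,g_s)$. Replace $f$ by the two-path $(f_1,f_2)$ to obtain $P \coloneqq (g_1,\ldots,g_j,f_1,f_2,g_{j+2},\ldots,g_s)$. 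Then $P$ is a path in $G$: the adjacency at $\head(g_j) = \tail(f) = \tail(f_1)$, at $\head(f_1)=\tail(f_2)$, and at $\head(f_2) = \head(f) = \tail(g_{j+2})$ all follow directly from the definition of $\spl$. Moreover $P$ has the same endpoints as $P'$, so it is an $\{A,B\}$-path. Set $\LLL \coloneqq (\LLL'\setminus\{P'\})\cup\{P\}$; all other members of $\LLL'$ avoid $f$ and thus remain valid paths in $G$.

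Edge-disjointness of $\LLL$ then follows by observing that $f_1,f_2 \notin E(G')$, so no path in $\LLL'\setminus\{P'\}$ contains either of them, which means the newly inserted edges in $P$ conflict with nothing else; on all remaining edges, disjointness is inherited from $\LLL'$. The order of the linkage is preserved since we performed a one-to-one replacement. The main ``obstacle'' is simply to be careful that the assumption in \cref{def:splitting_edgepairs} about $(f_1,f_2)$ being a two-path is what makes the reverse substitution yield a genuine path at the junction $\head(f_1) = \tail(f_2)$; beyond that, the argument is a direct bookkeeping of incidences.
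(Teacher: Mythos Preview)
Your proposal is correct and carries out exactly the direct bookkeeping the paper has in mind; the paper itself states this observation without proof, merely noting that it is ``a direct consequence to the above.'' Your verification from the definitions of splitting off and of linkages is precisely the intended (and only reasonable) route.
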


The following is an extension of the above, well known, and easy to verify.

\begin{observation}\label{obs:immersion_robust_under_splitting_off}
    Let~$H,G$ be graphs such that $G$ immerses $H$. Then there exists an Eulerian subgraph~$G'$ of~$G$ and a series of splitting off edge-pairs in~$G'$ that results in a graph isomorphic to~$H$. 
    Conversely, if~$H$ can be obtained from~$G$ by taking an Eulerian subgraph~$G' \subseteq G$ and then splitting off edge-pairs at vertices, it holds that $H$ is Eulerian and $G$ immerses $H$.
\end{observation}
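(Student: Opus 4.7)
The plan is to prove both directions by directly constructing the required objects; essentially, to show that immersion models and splitting-off sequences encode the same combinatorial information.

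For the forward direction, assume $\gamma\colon H \to G$ is an immersion (so $H$ is Eulerian, since $G$ immerses it). Take $G'$ to be the subgraph consisting of all edges used by the paths $\gamma(e)$ for $e \in E(H)$, with vertex set $\gamma(V(H)) \cup \bigcup_{e} V(\gamma(e))$. I would first check that $G'$ is Eulerian: at a branch vertex $\gamma(v)$, the incoming (respectively outgoing) edges of $G'$ are exactly the last (respectively first) edges of the paths $\gamma(e)$ ending (respectively starting) at $\gamma(v)$, and their counts are $d_H^-(v)$ and $d_H^+(v)$, which agree because $H$ is Eulerian; at any internal vertex of some $\gamma(e)$, the path contributes one in-edge and one out-edge, and since the paths are pairwise edge-disjoint, the contributions sum balanced-wise. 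Then, for each edge $e \in E(H)$ with $\gamma(e) = (e_1,\dots,e_m)$, iteratively apply $\spl$ to the pair $(e_1,e_2)$, then to the new pair consisting of the resulting edge and $e_3$, and so on. Because the paths are edge-disjoint, these operations do not interfere across different $e \in E(H)$, and the final result has exactly $|E(H)|$ edges, one per edge of $H$, connecting the correct branch vertices in $\gamma(V(H))$. Hence the result is isomorphic to $H$.

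For the converse, I would proceed inductively on the number of splittings. Start from $G'$ Eulerian. A single application of $\spl(G';(e,e'))$ at a vertex $v$ replaces two edges $e,e'$ with one edge $f$ bypassing $v$; this preserves both the in-degree and out-degree at every vertex of $G'$ (in particular at $v$ itself, where one in-edge and one out-edge are removed and replaced by nothing incident to $v$). Removing $v$ if it becomes isolated does not affect Eulerianness. So the Eulerian property is maintained throughout. To build the immersion, associate to each edge $f$ produced by some splitting the two-edge path $(e,e')$ that was collapsed, and inductively unroll all splittings: each edge of the final graph $H$ corresponds to a path in $G'$ (and thus in $G$), and different edges of $H$ correspond to edge-disjoint paths, because at every splitting step the newly introduced edge is used in place of exactly those two edges which were removed. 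The vertex map is the identity on the surviving vertices of $G'$, yielding the desired immersion $H \hookrightarrow G$ into $G' \subseteq G$.

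The only real obstacle, which is not difficult but must be handled carefully, is bookkeeping during the iterated splitting so that the identifications of edges that get repeatedly involved are tracked consistently; this is exactly what the renaming step in \cref{def:splitting_edgepairs} accomplishes. Using edge-disjointness of the paths $\gamma(e)$ in the forward direction, and using the fact that each splitting operation collapses a concrete two-edge path in the current graph in the reverse direction, ensures the correspondence between immersion models and splitting sequences is well-defined.
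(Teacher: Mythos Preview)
Your proposal is correct and follows essentially the same approach as the paper's proof: take $G'$ to be the union of the image paths together with the branch vertices, split off along each path $\gamma(e)$ to recover $H$; and for the converse, unroll the splittings to recover edge-disjoint paths forming an immersion model. The paper's proof is considerably terser (it simply asserts that $G'$ is Eulerian and that the splittings yield $H$, and says the converse is ``straightforward to verify''), while you supply the degree-counting argument for Eulerianness of $G'$ and the inductive bookkeeping explicitly.

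One small quibble: your parenthetical ``so $H$ is Eulerian, since $G$ immerses it'' is not a valid justification in isolation---being immersed in some digraph does not by itself force $H$ to be Eulerian. The correct reading is that the ambient hypothesis of the paper (and implicitly of the observation) is that $H$ and $G$ are Eulerian digraphs; this is what makes your degree count at $\gamma(v)$ go through. The paper's own proof relies on the same unstated assumption, so this is a wording issue rather than a mathematical gap.
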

\begin{proof}
    Let~$\gamma:V(H) \cup E(H) \to G$ be an immersion and let~$(\gamma(V(H)),\gamma(E(H)))$ be the respective model, then~$G' = \gamma(V(H)) + \bigcup_{P \in \gamma(E(H)}P$ is an Eulerian subgraph of~$G$. One easily verifies that fixing an order~$(P_1,\ldots,P_m)$ on~$\gamma(E(H))$ and iteratively splitting off~$G$ along~$P_1,\ldots,P_m$ results in~$H$.

    Similar given~$H$ and the sequence of splitting off edge-pairs we can reconstruct a model for~$H$ in~$G'$ by undoing the splittings that resulted in the edges of~$H$ and keeping track of the paths along which the splitting off resulted in~$H$; this is straightforward to verify.
\end{proof}

Finally, we have the following easy result regarding immersions.

\begin{observation}\label{obs:immersion_maps_path_to_path}
    Let~$\gamma:E(H) \cup V(H) \to G$ be an immersion of directed graphs~$H,G$. Let~$P$ be a path in~$H$. Then~$\gamma(P)$ is a path in~$G$. 
    Further if~$\gamma$ is a strong immersion and~$P$ is a linear path in~$H$, then~$\gamma(P)$ is a linear path in~$G$.
\end{observation}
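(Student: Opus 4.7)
The plan is to set $\gamma(P) \coloneqq \gamma(e_1) \circ \gamma(e_2) \circ \cdots \circ \gamma(e_m)$ where $P = (e_1,\ldots,e_m)$, and then invoke \cref{obs:concat_paths} iteratively. For the first claim, I first verify the concatenation is well-defined: by clause (2) of \cref{def:immersion} each $\gamma(e_i)$ is already a path in $G$, and for $i \neq j$ the paths $\gamma(e_i), \gamma(e_j)$ are edge-disjoint. Since $P$ is a path in $H$, for each $1 \leq i < m$ there is a vertex $v_i \in V(H)$ with $\head(e_i) = v_i = \tail(e_{i+1})$, and by clause (3) of \cref{def:immersion} we get that $\gamma(e_i)$ ends in $\gamma(v_i)$ while $\gamma(e_{i+1})$ starts in $\gamma(v_i)$. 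Thus consecutive summands satisfy the hypotheses of the second form of concatenation in \cref{def:concatenation_of_paths}, and an induction on $m$ via \cref{obs:concat_paths} shows $\gamma(P)$ is a path in $G$. Edge-disjointness across the $\gamma(e_j)$'s together with edge-distinctness within each $\gamma(e_j)$ ensures the entire concatenation uses pairwise distinct edges, as required.

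For the second claim, assume $\gamma$ is strong and $P$ is linear. I would argue by induction on $m$, with $m \leq 1$ trivial. For the inductive step write $P = P' \circ (e_m)$ with $P' = (e_1,\ldots,e_{m-1})$, and apply the last clause of \cref{obs:concat_paths}, which promises that if $\gamma(P')$ and $\gamma(e_m)$ are edge-disjoint, linear, and internally vertex-disjoint, then their concatenation is linear. Edge-disjointness is immediate from \cref{def:immersion}. Linearity of $\gamma(P')$ holds by the induction hypothesis, and $\gamma(e_m)$ itself may be taken linear (if an internal vertex of $\gamma(e_m)$ repeated, one could shortcut along the repeated portion while preserving edge-disjointness with all other $\gamma(e_j)$, giving a replacement model in which each image path is linear; essentially the strong-immersion analogue of \cref{obs:linkage_gives_linear_linkage}).

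The main obstacle, as with similar statements in the literature, is ruling out collisions among the non-branch internal vertices of distinct $\gamma(e_j)$'s, since the immersion axioms only require edge-disjointness. The key leverage is the strong-immersion clause: no $\gamma(v)$ with $v \in V(H)$ is internal to any $\gamma(e_j)$, so the only vertices of $\gamma(P')$ that can coincide with the endpoints $\gamma(\tail(e_m)) = \gamma(v_{m-1})$ and $\gamma(\head(e_m))$ of $\gamma(e_m)$ are the corresponding connecting vertices on the $P'$-side, which are already handled by linearity of $P$ and injectivity of $\restr{\gamma}{V(H)}$. For the remaining (non-branch) internal vertices, I would invoke the standard strengthening that the paths $\gamma(e_j)$ of a strong immersion can be chosen pairwise internally vertex-disjoint; this is obtained by the usual iterative shortcutting argument, swapping tails of two paths that meet at a common internal vertex and reducing the total length until no such meeting remains, without introducing any branch vertex internally thanks to strongness. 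With this, internal vertex-disjointness of $\gamma(P')$ and $\gamma(e_m)$ follows and the linear case of \cref{obs:concat_paths} closes the induction.
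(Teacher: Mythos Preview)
Your argument for the first claim is correct and is the natural approach: concatenate the edge-disjoint paths $\gamma(e_i)$ at the shared branch vertices $\gamma(v_i)$ and invoke \cref{obs:concat_paths}.

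For the second claim there is a genuine gap. The ``swapping tails'' rerouting you propose to make the $\gamma(e_j)$ pairwise internally vertex-disjoint does not work for directed immersions: if $\gamma(e_i)$ and $\gamma(e_j)$ share an internal vertex $w$ and you exchange their suffixes at $w$, the resulting paths run from $\gamma(\tail(e_i))$ to $\gamma(\head(e_j))$ and from $\gamma(\tail(e_j))$ to $\gamma(\head(e_i))$, which are the wrong endpoints, so the result is no longer an immersion of $H$. Likewise, shortcutting each $\gamma(e_j)$ to make it linear modifies the immersion; even if that were admissible, it would not establish the claim for the \emph{given} $\gamma$.

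In fact, under the paper's definitions the second claim appears to be false as stated. Let $H$ be the linear two-edge path $P=(e_1,e_2)$ on vertices $u,v,w$, and let $G$ have vertices $u',v',w',x$ and edges $(u',x),(x,v'),(v',x),(x,w')$. Setting $\gamma(u)=u'$, $\gamma(v)=v'$, $\gamma(w)=w'$, $\gamma(e_1)=((u',x),(x,v'))$, and $\gamma(e_2)=((v',x),(x,w'))$ gives a strong immersion (the only internal vertex of either image path is $x\notin\gamma(V(H))$), yet $\gamma(P)$ has internal-vertex sequence $x,v',x$ and is not linear. The paper offers no proof of this observation and, as far as I can see, only ever invokes its first part; the linear assertion seems to require an extra hypothesis (for instance that the $\gamma(e_j)$ be pairwise internally vertex-disjoint), or should be read as asserting that a suitable rerouting exists rather than as a property of an arbitrary strong immersion.
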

\begin{remark}
    Note that if the immersion is not strong, then linear paths may be mapped to non-linear paths, i.e., they may self-intersect.
    
\end{remark}

\smallskip

\subsection{Induced Cuts}
We start by defining edge-separations and induced cuts. The way we define them is purely undirected and is thus not equivalent to the standard directed definitions of these notions. 
Observe that a Eulerian digraph is strongly connected if and only if it is weakly connected. We therefore simply say that $G$ is \emph{connected} if it is weakly connected.

\begin{definition}[Edge- and Vertex-Separations]
    Let~$G$ be a connected digraph. An \emph{edge-separation}, or a \emph{cut}, is a set~$F \subset E(G)$ such that the digraph~$G-F$ is disconnected. We refer to~$\Abs{F}$ as the \emph{order} of the cut.

    Let~$X_1,X_2 \subset V(G)$. We call~$F$ an \emph{$(X_1,X_2)$-cut} if every path in the underlying undirected graph of~$G$ with one endpoint in~$X_1$ and one endpoint in~$X_2$ uses an edge of~$F$. We define~$\delta(X_1,X_2)$ to be the minimal order of any~$(X_1,X_2)$-cut, and infinite if it does not exist.%

    Let~$G_1,G_2 \subset G$ be such that~$G= G_1 \cup G_2$ and~$E(G_1) \cap E(G_2) = \emptyset$. Then we call~$(G_1,G_2)$ a \emph{vertex-separation of~$G$} and define its \emph{order} as~$\Abs{V(G_1 \cap G_2)}$.
\end{definition}
\begin{remark}
    By definition if~$X_1 \cap X_2 \neq \emptyset$ then there exists no~$(X_1,X_2)$-cut. 
\end{remark}
    The definitions can be extended to non-connected directed graphs by extending the definition component-wise.

\begin{definition}[Induced Cuts]
    Let~$G$ be a directed graph and let~$X \subseteq V(G)$. 
    
    We define~$\rho(X,\bar{X}) \coloneqq \{(u,v) \in E(G) \mid u \in X \text{ and } v\in \bar{X}\}$ and $\rho^{+}(X) \coloneqq \rho(X, \bar X)$ and $\rho^{-}(X) \coloneqq \rho(\bar{X},X)$. %
  Finally we let~$\rho(X) \coloneqq \rho^-(X) \cup \rho^+(X)$ and refer to it as \emph{the cut induced by~$X$}. We refer to~$\delta(X) \coloneqq \Abs{\rho(X)}$ as the \emph{order of the cut}.
\end{definition}
\begin{remark}
    Clearly~$\delta(X) = \delta(\bar{X})$ by definition, and similarly~$\rho^-(X) = \rho^+(\bar{X})$.

    Note that we have abused notation, using~$\rho$ and~$\delta$ on sets and pairs of sets of vertices for the sake of readability. It may never cause confusion which one we are talking about since they are defined on disjoint domains.
\end{remark}

If~$X=\{v\}$ we may write~$\rho(v)$ instead of~$\rho(\{v\})$ for simplicity. Let~$E(v) \coloneqq \{e \in E(G) \mid v \in \tail(e)\cup \head(e)\}$ denote the set of edges incident to~$v$. The following is clear from the definition.
\begin{observation}\label{obs:cut_at_vertex}
    Let~$G$ be a digraph and~$v \in V(G)$, then~$\rho(v) = E(v) \setminus \loops(v)$.
\end{observation}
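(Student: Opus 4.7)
The plan is a direct definition chase, proving the two set inclusions separately, with the only point requiring care being how loops are treated on each side.

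First, I would unfold the left-hand side. By definition, $\rho(v) = \rho(\{v\}) = \rho^+(\{v\}) \cup \rho^-(\{v\})$, where
\[
\rho^+(\{v\}) = \{e \in E(G) \sth \tail(e) = v,\ \head(e) \in \overline{\{v\}}\} = \{e \in E(G) \sth \tail(e) = v \neq \head(e)\},
\]
and symmetrically $\rho^-(\{v\}) = \rho^+(\overline{\{v\}}) = \{e \in E(G) \sth \head(e) = v \neq \tail(e)\}$. In particular, no loop at $v$ lies in $\rho(v)$, because a loop $e$ has $\tail(e) = \head(e) = v$, so neither endpoint lies in $\overline{\{v\}}$.

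Next I would unfold the right-hand side: $E(v)$ is the set of all edges having $v$ as an endpoint (tail or head, including loops), and $\loops(v)$ is exactly the set of edges with $\tail(e) = \head(e) = v$. So $E(v) \setminus \loops(v)$ consists of those edges incident to $v$ that have their other endpoint distinct from $v$.

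For the inclusion $\rho(v) \subseteq E(v) \setminus \loops(v)$: any $e \in \rho(v)$ lies in $\rho^+(\{v\})$ or $\rho^-(\{v\})$, and in either case one endpoint equals $v$ (so $e \in E(v)$) while the other is different from $v$ (so $e \notin \loops(v)$). For the reverse inclusion, take $e \in E(v) \setminus \loops(v)$: then $v \in \{\tail(e), \head(e)\}$, and since $e$ is not a loop at $v$, exactly one of $\tail(e), \head(e)$ equals $v$ and the other is some $u \neq v$; accordingly $e \in \rho^+(\{v\})$ or $e \in \rho^-(\{v\})$, hence $e \in \rho(v)$.

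There is no real obstacle here; the statement is essentially a sanity check that the induced-cut notation $\rho(\cdot)$ behaves as expected on singletons, with the mild subtlety that $\rho$ discards loops whereas $E(v)$ keeps them, which is precisely what the $\setminus \loops(v)$ on the right-hand side compensates for.
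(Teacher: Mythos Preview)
Your proof is correct and is exactly the intended definition chase; the paper itself gives no proof beyond stating that the observation is clear from the definitions, and your argument is precisely how one would spell it out.
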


It is well-known that on induced cuts the function $\delta$ is symmetric and sub-modular. That is~$\delta(X) = \delta(\bar{X})$ and~$\delta(X) + \delta(Y) \geq \delta(X \cap Y) + \delta(X \cup Y)$ for any pair~$X,Y \subseteq V(G)$. And, of course, the notions of cut and induced cuts are closely related: The following is well-known and straightforward to verify (for it is essentially defined on the undirected underlying graph).

\begin{observation}\label{lem:relating_cut_to_induced_cuts}
    Let~$G$ be a directed graph and let~$X_1,X_2 \subset V(G)$. Then~$\delta(X_1,X_2)$ is given by the minimum~$\delta(X)$ over all~$ X \subseteq V(G)$ with~$X_1 \subseteq X$ and~$X \cap X_2 = \emptyset$; and is infinite if no such~$X$ exists.%
\end{observation}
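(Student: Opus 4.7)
The plan is to prove the identity by two matching inequalities, after disposing of the degeneracy. Throughout the argument, I would work in the underlying undirected graph of $G$, since both $\delta(X_1,X_2)$ and the induced cuts $\rho(X)$ are defined via undirected reachability.

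For the degenerate case, observe that no $X$ with $X_1 \subseteq X$ and $X \cap X_2 = \emptyset$ exists if and only if $X_1 \cap X_2 \neq \emptyset$; in that situation any common vertex is a trivial (length $0$) path between $X_1$ and $X_2$ that no set $F \subseteq E(G)$ can hit, so $\delta(X_1,X_2) = \infty$, matching the convention on the right-hand side.

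For the inequality $\delta(X_1,X_2) \leq \min_X \delta(X)$, I would take an arbitrary $X$ with $X_1 \subseteq X$ and $X \cap X_2 = \emptyset$, and check that $\rho(X)$ is an $(X_1,X_2)$-cut. Any undirected walk from a vertex in $X_1 \subseteq X$ to a vertex in $X_2 \subseteq \bar X$ must, at some step, traverse an edge with one end in $X$ and one end in $\bar X$, which by definition belongs to $\rho(X)$. Hence $\delta(X_1,X_2) \leq |\rho(X)| = \delta(X)$, and taking the minimum over admissible $X$ yields the claimed bound.

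For the reverse inequality, I would take a minimum $(X_1,X_2)$-cut $F$ and produce a witness $X$ with $\rho(X) \subseteq F$. Define
\[
X \coloneqq \{\, v \in V(G) \mid \text{there is a walk from some vertex of } X_1 \text{ to } v \text{ in the underlying undirected graph of } G-F\,\}.
\]
Then $X_1 \subseteq X$ tautologically, and $X \cap X_2 = \emptyset$ because $F$ is an $(X_1,X_2)$-cut, so no vertex of $X_2$ is reachable from $X_1$ in $G-F$. Finally, every edge $e \in \rho(X)$ must lie in $F$: otherwise $e$ would be an edge of $G-F$ between $X$ and $\bar X$, which would extend a walk certifying membership in $X$ to its other endpoint, contradicting that endpoint lying in $\bar X$. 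Hence $\rho(X) \subseteq F$, giving $\delta(X) \leq |F| = \delta(X_1,X_2)$, which combined with the previous direction closes the loop. There is no real obstacle; the only subtlety is remembering that reachability is taken in the undirected sense, as the paper has already emphasized in its definition of cuts.
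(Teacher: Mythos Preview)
Your proof is correct and is exactly the standard argument one would give here. The paper itself does not prove this observation; it simply labels it ``well-known and straightforward to verify (for it is essentially defined on the undirected underlying graph)'' and moves on, so your write-up is in fact more detailed than what the paper provides.
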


The way~$\delta$ is defined, it does not come with a nice \emph{directed} Menger property in the context of general digraphs. That is, it is not true that~$\delta(X_1,X_2)$ is equal to the maximal number of \emph{directed} edge-disjoint paths connecting~$X_1$ and~$X_2$: take for example a simple alternating path of length at least two and let~$X_1,X_2$ be its extremities. These pathologies disappear when restricted to Eulerian digraphs. 

The following is a well-known and easy fact.
\begin{observation}
    Let~$G$ be an Eulerian digraph and let $X \subseteq V(G)$. Then $\Abs{\rho(X,\bar{X})} = \Abs{\rho(\bar{X},X)}$. In particular~$\Abs{\rho(\cdot,\cdot)}$ is symmetric and~$\delta(X) \in 2\N$.
\end{observation}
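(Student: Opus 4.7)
The plan is to prove the equality $|\rho(X,\bar X)| = |\rho(\bar X,X)|$ by a standard double-counting argument using the Eulerian property of $G$, and then to immediately derive $\delta(X) \in 2\N$ from it.

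First, I would partition the set of edges incident to $X$ according to the location of their heads and tails. Every edge $e \in E(G)$ with $\tail(e) \in X$ either has $\head(e) \in X$ (both endpoints lie in $X$) or $\head(e) \in \bar X$ (in which case $e \in \rho(X,\bar X)$). Writing $E(X) \coloneqq \{ e \in E(G) \sth \tail(e),\head(e) \in X\}$ for the set of edges whose both endpoints lie in $X$ (counting loops once), summing the out-degrees over vertices in $X$ gives
\[
\sum_{v \in X} d^+(v) \;=\; |E(X)| + |\rho(X,\bar X)|,
\]
because each edge in $E(X)$ contributes exactly one tail in $X$ and each edge in $\rho(X,\bar X)$ contributes exactly one tail in $X$. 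Analogously, by counting heads in $X$, one obtains
\[
\sum_{v \in X} d^-(v) \;=\; |E(X)| + |\rho(\bar X,X)|.
\]

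Next, I would invoke the hypothesis that $G$ is Eulerian: by \cref{obs:cut_at_vertex} and the definition of Eulerian digraph, $d^+(v) = d^-(v)$ for every $v \in V(G)$, and in particular for every $v \in X$. Summing over $X$ therefore yields $\sum_{v\in X} d^+(v) = \sum_{v\in X} d^-(v)$, and the two identities above immediately give $|\rho(X,\bar X)| = |\rho(\bar X,X)|$. This proves the symmetry of $|\rho(\cdot,\cdot)|$, since by definition $\rho(X,\bar X) = \rho(\bar X, X)$ already when we swap the roles of the two arguments.

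Finally, for the parity statement, I would simply observe that $\rho(X) = \rho^+(X) \cup \rho^-(X) = \rho(X,\bar X) \cup \rho(\bar X,X)$ is a disjoint union (an edge cannot simultaneously have both its head and its tail in $X$ and in $\bar X$), so
\[
\delta(X) \;=\; |\rho(X,\bar X)| + |\rho(\bar X,X)| \;=\; 2\,|\rho(X,\bar X)| \;\in\; 2\N.
\]
There is no genuine obstacle here; the only small care needed is to handle loops correctly (loops contribute equally to $d^+$ and $d^-$ and lie in neither $\rho(X,\bar X)$ nor $\rho(\bar X,X)$, so they cancel on both sides of the identity). Otherwise, the argument is routine double counting.
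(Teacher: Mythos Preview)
Your proof is correct and is exactly the standard double-counting argument one would expect; the paper itself omits a proof and simply records this as ``a well-known and easy fact.'' The only minor quibble is that the reference to \cref{obs:cut_at_vertex} is not quite what you want (that observation is about $\rho(v)=E(v)\setminus\loops(v)$); the Eulerian condition $d^+(v)=d^-(v)$ is the definition given just before that observation, so you can invoke it directly.
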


Together with \cref{lem:relating_cut_to_induced_cuts} this implies a directed Menger-type property for Eulerian digraphs. This is an easy consequence of the known directed Menger theorem (for directed cuts) together with the fact that if~$G$ is Eulerian and there is a directed path from~$u$ to~$v$ then there exists another path---edge-disjoint from the first---that connects~$v$ to~$u$. See also \cite{frank95,EDP_Euler} for further connectivity properties of Eulerian digraphs.

\begin{lemma}\label{lem:Menger_for_Euler}
  Let~$G$ be an Eulerian digraph and~$X_1, X_2 \subseteq V(G)$ be disjoint.
  \begin{enumerate}
  \item For every~$k \in \N$, either there is a set~$\LLL$ of edge-disjoint paths in~$G$ containing~$k$ paths from~$X_1$ to~$X_2$ and~$k$ paths from~$X_2$ to~$X_1$ or there is a partition~$(X, \bar{X})$ of~$V(G)$ such that~$X_1 \subseteq X$,~$X_2 \subseteq \bar{X}$ and~$\delta(X) < 2k$.
  \item The maximum number of edge-disjoint paths from~$X_1$ to~$X_2$ is the same as the maximum number of edge-disjoint paths from~$X_2$ to~$X_1$. 
  \end{enumerate}
\end{lemma}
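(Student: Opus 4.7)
The plan is to reduce both parts to the classical directed edge-version of Menger's theorem, using the single key fact that in an Eulerian digraph $|\rho^+(X)| = |\rho^-(X)|$ for every $X \subseteq V(G)$ (sum the identity ``in-degree equals out-degree'' over $v \in X$; edges with both endpoints in $X$ or both outside cancel). In particular $\delta(X) = 2|\rho^+(X)|$ is always even, and a directed $(X_1,X_2)$-cut of order $< k$ is the same thing as an induced cut of order $< 2k$ separating $X_1$ from $X_2$.

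For part (1), apply classical directed Menger to $X_1,X_2$. If the max number of edge-disjoint $X_1\to X_2$ paths in $G$ is less than $k$, we obtain a partition $(X,\bar X)$ with $X_1\subseteq X$, $X_2\subseteq \bar X$ and $|\rho^+(X)| < k$; by the remark above, $\delta(X) < 2k$, which is the second alternative. Otherwise, fix $k$ edge-disjoint paths $P_1,\dots,P_k$ from $X_1$ to $X_2$ and form the residual digraph $G' \coloneqq G - \bigcup_i E(P_i)$. Observe that every internal vertex of each $P_i$ uses exactly one in-edge and one out-edge per visit, so its in- and out-degrees drop by the same amount; only the starting vertices in $X_1$ (which lose out-degree) and terminal vertices in $X_2$ (which lose in-degree) become unbalanced. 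To recover $k$ return paths in $G'$, attach an auxiliary vertex $z\notin V(G)$ and, for each $P_i$ with start $u_i\in X_1$ and end $v_i\in X_2$, add one edge $(z,u_i)$ and one edge $(v_i,z)$. The resulting digraph $G'^+$ is Eulerian, so it admits an Euler tour; breaking this tour at $z$ yields $k$ edge-disjoint walks in $G'$, each starting in some $v_i\in X_2$ and ending in some $u_j\in X_1$. These walks (which are paths in this paper's sense since they do not repeat edges) together with $P_1,\dots,P_k$ are the desired linkage $\LLL$.

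Part (2) is now immediate from (1). Let $k$ be the maximum number of edge-disjoint paths from $X_1$ to $X_2$. Then for this $k$ the first alternative of (1) holds, producing in particular $k$ edge-disjoint $X_2\to X_1$ paths; so the $X_2\to X_1$-maximum is $\geq k$. Swapping the roles of $X_1$ and $X_2$ (the hypothesis is symmetric since $G$ is Eulerian) gives the reverse inequality.

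The only non-routine step is the residual argument in part (1), and specifically the bookkeeping that the removed paths leave every non-endpoint vertex Eulerian; once this is observed, the super-vertex $z$ and the Euler-tour decomposition of $G'^+$ do all the work without invoking Menger a second time. Everything else is a direct bridge between the undirected $\delta(\cdot)$ used in the paper and the directed $|\rho^+(\cdot)|$ appearing in the classical cut-min form of Menger's theorem.
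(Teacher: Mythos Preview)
Your overall strategy is correct and matches the paper's (which gives only the one-line sketch preceding the lemma: directed Menger plus ``a path from $u$ to $v$ in an Eulerian digraph is accompanied by an edge-disjoint return path''). Making the return-path step explicit via the super-vertex $z$ and an Euler tour is a clean way to obtain all $k$ return paths simultaneously rather than iterating.

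There is, however, one concrete slip: the auxiliary edges are oriented the wrong way. Removing $P_1,\dots,P_k$ from $G$ deletes one \emph{out}-edge at each start $u_i\in X_1$ and one \emph{in}-edge at each end $v_i\in X_2$; hence in $G'$ the $u_i$ have in-degree exceeding out-degree and the $v_i$ have out-degree exceeding in-degree. To restore balance you must add out-edges at the $u_i$ and in-edges at the $v_i$, i.e.\ the edges $(u_i,z)$ and $(z,v_i)$, not $(z,u_i)$ and $(v_i,z)$ as you wrote. With your orientation $G'^{+}$ is not Eulerian --- the imbalance at each $u_i$ and $v_i$ is doubled rather than cancelled --- so no Euler tour exists and the argument stalls. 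With the corrected orientation the Euler tour of the weakly connected component containing $z$ (note $G'^{+}$ need not be connected globally, but every edge incident to $z$ lies in that one component, and every vertex there is balanced) decomposes at the $k$ visits to $z$ into $k$ edge-disjoint walks in $G'$, each from some $v_i\in X_2$ to some $u_j\in X_1$, exactly as you claim. Once the orientation is fixed, the rest of the argument goes through.
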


Thus note that it suffices to look at undirected induced cuts, for the Eulerianness imposes that these cuts must be of even order, and when looking at their directions half of them go from one side to the other of the partition and vice-versa. This can then be used together with Menger's theorem to either find paths witnessing that the cut is minimum, or we can find a smaller cut which in turn means we can find a smaller undirected cut and vice-versa. So the Eulerianness guarantees that the smallest undirected cut suffices to witness the maximum number of directed edge-disjoint paths between the resulting components. 

\subsection{Graph Drawings and Surfaces} 
A surface~$\Sigma$ in our setting is a compact~$2$-manifold possibly with boundary. The boundary~$\bd(\Sigma)$ of a surface is a collection of \emph{cuffs} where a cuff is a connected component of~$\bd(\Sigma)$ that can be traced by a simple closed curve. We denote the set of cuffs of a surface via~$c(\Sigma)$. Let~$c(\Sigma) =\{C_1,\ldots,C_k\}$ be the cuffs of~$\Sigma$ then $\bigcup_{ 1 \leq i \leq k} C_i = \bd(\Sigma)$. We write~$\hat{\Sigma}$ for the surface obtained from~$\Sigma$ by adding~$k$ disjoint discs $\Delta_1,\ldots,\Delta_k$ to the surface and gluing the boundary of~$\Delta_i$ to~$C_i$ in the obvious way; in particular~$\hat{\Sigma}$ has no boundary. It is well-known that every surface without boundary is homeomorphic to a surface obtained from a sphere by adding \emph{handles} and \emph{cross-caps}. Further every surface with boundary can be obtained (up to a homeomorphism) from a surface without boundary by punching finitely many holes into it, in particular each of these holes represents a cuff. We will omit the details regarding the definition of surfaces unless specifically needed and redirect the reader to \cite{Armstrong2010} or \cite{MoharT2001} for more information.

Given a set~$U \subseteq \Sigma$ we write~$\bar{U} \subseteq \Sigma$ to denote its topological closure in~$\Sigma$ and $U^\circ$ for its topological interior.

\begin{definition}[Embedding of a digraph]
    Let~$G=(V,E,\operatorname{inc})$ be a digraph, then a \emph{drawing} of~$G$ or an \emph{embedding} of~$G$ on a surface~$\Sigma$ is a pair~$(U,\nu)$ satisfying the following.
    \begin{itemize}
    \item $U \subset \Sigma$ is a closed subset of~$\Sigma$,
    \item $\nu: V \cup E \to U$ is injective,
    \item every component of~$U \setminus \nu(V \cup E)$ is homeomorphic to an open interval, and the components are in bijection with the elements of~$\operatorname{inc}$. 
    \item let~$u_1,u_2 \in V$, $e \in E$ such that~$u_1 = \tail(e)$ and~$u_2 = \head(e)$. Then there exist components~$\operatorname{inc}_1,\operatorname{inc}_2 \in U \setminus \nu(\{u_1,u_2,e\})$, each homeomorphic to an open interval such that the closure~$\overline{\operatorname{inc}_i} \subset \Sigma$ consists of~$\operatorname{inc}_i \cup \{\nu(e),\nu(u_i)\}$ for~$i=1,2$. Further, if~$e$ is a loop, then~$\nu(u_1) = \nu(u_2)$---by injectivity---and~$\overline{\operatorname{inc}_1}$ is homeomorphic to a circle.
    \end{itemize}
    We may add the respective directions of the edges to the embedding by keeping track of~$\operatorname{inc}$ in an obvious way (for example by drawing arrows on the components). Further we may write $\nu((u_1,e)),\nu((e,u_2))$ for $(e,u_2),(u_1,e) \in \operatorname{inc}$ to mean the components $\operatorname{inc}_1,\operatorname{inc}_2 \subset U$ defined as above. If a graph can be drawn on a surface in this way, and given some drawing~$(U,\nu)$, we say that the graph is \emph{embedded in~$\Sigma$}. 

     We refer to an edge~$e\in E$ with~$\nu(e) \notin \bd(\Sigma)$ as \emph{internal edge (of the embedding)}; otherwise we refer to it as a \emph{boundary edge (of the embedding)} and we define internal and boundary vertices analogously.
\end{definition}
\begin{remark}
    With the above extension of $\nu$ to $\operatorname{inc}$ we may write $\nu(H)$ for $H\subseteq G$ to mean the set $U' \subseteq U$ consisting of $\nu\big(V(H) \cup E(H) \cup \restr{\operatorname{inc}}{H\times H}\big)$.
\end{remark}

We will mostly work with \emph{Eulerian embeddings}.

\begin{definition}[Eulerian Embedding]\label{def:embedding}
    Let~$\Sigma$ be a surface. Let~$G$ be an digraph of maximum degree four and~$(U,\nu)$ an embedding of~$G$ in~$\Sigma$ without boundary vertices. Let~$v \in V(G)$ be an Eulerian vertex of degree four and let~$\gamma_v:[0,1]\to \Sigma$ be a simple closed curve winding around~$\nu(v)$ in~$\Sigma$ such that~$\gamma[0,1]$ intersects~$U$ exactly in edges~$e \in E(G)$ incident to~$v$ and in each such edge exactly once. Let~$(e_1,e_2,e_3,e_4)$ be the edges visited in that order by~$\gamma_v$ (up to a cyclic rotation). Then we call~$\nu(v)$ \emph{strongly planar} if~$e_{1},e_{2}$ are edges with~$v$ as a head and~$e_{3}$ and~$e_{4}$ are edges with~$v$ as a tail (up to a cyclic rotation). Otherwise we call it \emph{Eulerian embedded}.

    If every Eulerian vertex of~$G$ is Eulerian embedded given the embedding~$(U,\nu)$, then we call it an \emph{Eulerian embedding}. And if there exists an Eulerian embedding~$(U',\nu')$ for~$G$, then we call~$G$ \emph{Euler-embeddable}.
    \label{def:Euler-embedding}
\end{definition}
\begin{remark}
    More concisely~$G$ admits an Eulerian embedding if and only if there is an embedding without boundary vertices for which, when choosing an orientation of $\Sigma$ locally at each vertex, the in- and out-edges at each vertex are drawn alternately. 
\end{remark}

The following is obvious and well-known.
\begin{observation}\label{obs:eulerian_emb_closed_under_splitting_and_immersion}
    Let $H,G$ be Eulerian digraphs and let $G$ be Eulerian embeddable. If $H \hookrightarrow G$ then $H$ is Eulerian embeddable.
    In particular, given an Eulerian embeddable digraph $G$ and a two-path $(e,e')$ in $G$, $\spl(G,(e,e'))$ is Eulerian embeddable.
\end{observation}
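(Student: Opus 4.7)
The observation has two parts. I will treat the parenthetical ``in particular'' splitting-off claim first, since the main strong-immersion claim will reduce to iterated applications of it via \cref{obs:immersion_robust_under_splitting_off}.

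\textbf{Step 1: Splitting preserves Eulerian embeddability.} Fix an Eulerian embedding $(U,\nu)$ of $G$ on some surface $\Sigma$, and let $(e,e')$ be a two-path at a vertex $v$. The key geometric fact is that in a cyclic order of the form $(\textup{in},\textup{out},\textup{in},\textup{out})$ at a degree-four Eulerian-embedded vertex, \emph{every} in/out pair is cyclically adjacent. I would pick a small open disc $\Delta_v\subseteq\Sigma$ around $\nu(v)$ meeting $U$ only in short arc-initial segments of the edges incident to $v$. Since the entry points of $e$ and $e'$ on $\partial\Delta_v$ are adjacent in the cyclic order, I can draw a new simple arc in $\Delta_v$ joining them without crossing any other edge, and then concatenate it with the portions of $\nu(e)$ and $\nu(e')$ outside $\Delta_v$ to obtain a drawing of the new edge $f$. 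If $v$ originally had degree two it becomes isolated and is deleted; otherwise the two edges remaining at $v$ form an in/out pair, and any two-element cyclic sequence with one in- and one out-edge is trivially alternating. All other vertices are unaffected, so the new drawing is an Eulerian embedding of $\spl(G,(e,e'))$ in $\Sigma$.

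\textbf{Step 2: Strong immersion preserves Eulerian embeddability.} Let $\gamma\colon H\hookrightarrow G$. Because $G$ has maximum degree four and $\gamma$ is injective on vertices while sending edges to edge-disjoint paths starting and ending at images of vertices, $H$ also has maximum degree four. By \cref{obs:immersion_robust_under_splitting_off}, $H$ is obtained from the Eulerian subgraph $G'\coloneqq \gamma(V(H))\cup\bigcup_{e\in E(H)}\gamma(e)$ by a finite sequence of edge-pair splittings. I would first argue that the restriction of $(U,\nu)$ to $G'$ is an Eulerian embedding, and then iterate Step 1 along the splitting sequence to conclude.

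\textbf{Verifying that $G'$ inherits an Eulerian embedding (the main obstacle).} Pick any $w\in V(G')$. In the full embedding of $G$, the (at most four) edges at $w$ are arranged cyclically in an alternating $(\textup{in},\textup{out},\textup{in},\textup{out})$ pattern. The edges of $G'$ incident to $w$ form a subset $S$ of these. Strong immersion plus the Eulerianness of $H$ forces $|S|\in\{0,2,4\}$ with equal numbers of in- and out-edges: if $w=\gamma(x)$ for some $x\in V(H)$ then no path $\gamma(e)$ uses $w$ internally, so $|S|=\deg_H(x)$, which is even; if instead $w$ is an internal vertex of some paths $\gamma(e_1),\dots,\gamma(e_t)$, each such path contributes one in- and one out-edge at $w$, and the paths are edge-disjoint. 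In all cases, $S$ is either empty, all four edges (so alternation is preserved), or consists of one in- and one out-edge taken from an alternating cyclic sequence — and any such two-element subsequence alternates trivially. Hence the induced drawing of $G'$ is Eulerian. Applying Step 1 once per splitting in the sequence guaranteed by \cref{obs:immersion_robust_under_splitting_off} yields an Eulerian embedding of $H$ on $\Sigma$, which proves $H$ is Euler-embeddable.
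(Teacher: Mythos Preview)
Your proof is correct and is precisely the argument one would write out if asked to justify what the paper simply labels ``obvious and well-known'' (the paper gives no proof at all for this observation). Both parts are handled cleanly: the key point in Step~1—that in the cyclic pattern $(\mathrm{in},\mathrm{out},\mathrm{in},\mathrm{out})$ \emph{every} admissible in/out pair is cyclically adjacent—is exactly why the split edge can be drawn locally without creating a strongly planar vertex, and your reduction in Step~2 via \cref{obs:immersion_robust_under_splitting_off} together with the case analysis on $|S|\in\{0,2,4\}$ is the natural way to see that the image subgraph $G'$ inherits a Eulerian embedding.
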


We will discuss Eulerian embeddings in more detail in \cref{sec:surface_defs} as well as in \cref{sec:la-grande-inductione}; for further details on Eulerian embeddings see \cite{Johnson2002,EDP_Euler,Frank_2path}.

\subsection{Carvings and Tree Decompositions}

We assume the reader to be familiar with the definition of tree and branch decompositions of undirected graphs as well as treewidth and branchwidth; see \cite{Diestel2017} for details.
In this subsection we discuss the concept of \emph{carving width} introduced and analysed in \cite{carvingwidth}\footnote{Here the parameter is referred to as \emph{congestion}.} as well as \cite{RobertsonST1994}, a notion similar to branch width tailored towards induced cuts of a graph. 
\begin{definition}[Carving]
    \label{def:carving}
    Let~$G$ be a digraph. A~\emph{carving} of~$G$ is given by a pair~$(T,\ell)$ where
    \begin{itemize}
        \item[(i)] $T$ is a cubic undirected tree and 
        \item[(ii)] $\ell: V(G) \to \leaves{T}$ is an injective map labelling part of the leaves.
    \end{itemize}

    Let~$\{t_1,t_2\} \in E(T)$ and let~$T_1,T_2$ be the components of~$T-\{t_1,t_2\}$ containing~$t_1$ and~$t_2$ respectively. We define~$\epsilon_{(T,\ell)}:E(T) \to 2^{E(G)}$ via~$\epsilon_{(T,\ell)}(\{t_1,t_2\})\coloneqq \rho(\ell^{-1}(\leaves{T_1}))$, where for a leaf~$t$ not in the image of~$\ell$ we define~$\ell^{-1}(t) \coloneqq \emptyset$. We may simply write~$\epsilon$ if the carving is clear from context.
    
    We define the \emph{width} of~$\{t_1,t_2\}$ via~$\w(\{t_1,t_2\}) \coloneqq \Abs{\epsilon(\{t_1,t_2\})}$. Finally we define the width of the decomposition via~$\w(T,\ell) \coloneqq \max\{\w(e) \mid e \in E(T)\}$. 

    The \emph{carving width of~$G$} is defined via~$$\ebw{G} \coloneqq \min\{\operatorname{w}(T,\ell) \sth (T,\ell) \text{ is a carving of } G\}.$$
\end{definition}
\begin{remark}
    We allow for leafs~$t$ with~$\ell^{-1}(t) = \emptyset$ so that a branch decomposition always exists (since~$T$ needs to be cubic this is needed for~$\Abs{V(G)} \leq 2$. It may be neglected otherwise.
\end{remark}

The next lemma guarantees that for any $d\geq 1$, the class of Eulerian digraphs of carving width at most $d$ is closed under immersions.

\begin{lemma}[Homogeneity for Immersions]\label{lem:ebw_closed_under_immersion}
    Let~$G$ and~$H$ be Eulerian digraphs. If $G$ immerses $H$ then $\ebw{H} \leq \ebw{G}$.
\end{lemma}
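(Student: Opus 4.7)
The plan is to prove the lemma via the reduction provided by \cref{obs:immersion_robust_under_splitting_off}, which says that $H$ strongly immerses in $G$ if and only if $H$ can be obtained from some Eulerian subgraph $G' \subseteq G$ by a sequence of edge-pair splittings. Since carving width is defined via a minimum over decompositions, it suffices to check that carving width is non-increasing under two elementary operations: (a) passing to an Eulerian subgraph; and (b) splitting off an edge-pair.

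For (a), fix an optimal carving $(T,\ell)$ of $G$ of width $k = \ebw{G}$. Define $(T,\ell')$ by keeping the same cubic tree $T$ and setting $\ell' \coloneqq \restr{\ell}{V(G')}$; injectivity is immediate. For every edge $\{t_1,t_2\} \in E(T)$, if $Y := \ell^{-1}(\leaves{T_1})$ and $X := \ell'^{-1}(\leaves{T_1}) = Y \cap V(G')$, then since $E(G') \subseteq E(G)$ and $V(G') \subseteq V(G)$, every edge of $G'$ with one end in $X$ and the other in $V(G')\setminus X$ is also an edge of $G$ crossing the partition $(Y,\bar Y)$; thus $\rho_{G'}(X) \subseteq \rho_G(Y)$, and widths do not increase.

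For (b), suppose $H = \spl(G;(e,e'))$ with $e=(u,v)$, $e'=(v,w)$, and $f=(u,w)$ the new edge. Take the same tree $T$, and let $\ell'$ be $\ell$ restricted to $V(H) = V(G) \setminus \{v\}$ if $v$ becomes isolated (i.e.\ was of degree $2$ in $G$), and $\ell' = \ell$ otherwise. Consider any cut-edge $\{t_1,t_2\} \in E(T)$ with corresponding vertex set $Y = \ell^{-1}(\leaves{T_1}) \subseteq V(G)$ and $X = \ell'^{-1}(\leaves{T_1}) \subseteq V(H)$. The edges of $H$ not incident to $v$ coincide with those of $G$, so they contribute equally to $\rho_H(X)$ and $\rho_G(Y)$. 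It remains to compare the contributions of $e,e'$ (in $G$) against that of $f$ (in $H$). A short case analysis on the memberships of $u,v,w$ with respect to $Y$ shows:
\begin{itemize}
\item if $u,w$ lie on the same side of the partition as $v$ (when $v\in V(H)$, or simply on the same side as each other when $v$ is removed), neither of $e,e'$ crosses the cut and neither does $f$;
\item if exactly one of $u,w$ lies on the opposite side of $v$, exactly one of $e,e'$ crosses, and so does $f$;
\item if $u$ and $w$ lie on opposite sides while $v$ sits in between, both $e$ and $e'$ cross the cut in $G$, but $f$ does not cross the cut in $H$.
\end{itemize}
In every case $|\rho_H(X)| \leq |\rho_G(Y)|$, so $(T,\ell')$ is a carving of $H$ of width at most $k$, establishing $\ebw{H} \leq \ebw{G}$ for the splitting-off step.

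Iterating (a) once and (b) finitely many times along the decomposition supplied by \cref{obs:immersion_robust_under_splitting_off} yields $\ebw{H} \leq \ebw{G}$, as required. The main (and only mildly technical) point is the case analysis in step (b); the remainder is routine. Note that we did not use that the immersion is \emph{strong}, which is consistent with the fact that carving width is clearly monotone under (weak) immersion as well.
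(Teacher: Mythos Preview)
Your approach is exactly the paper's: reduce via \cref{obs:immersion_robust_under_splitting_off} to monotonicity under (a) Eulerian subgraphs and (b) splitting off, and verify both by reusing the same carving tree.

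There is, however, a genuine slip in your third bullet. If $u$ and $w$ lie on \emph{opposite} sides of the bipartition, then $f=(u,w)$ \emph{does} cross the cut, contradicting your stated conclusion. The case you actually want is: $u$ and $w$ on the \emph{same} side, $v$ on the other. Then both $e=(u,v)$ and $e'=(v,w)$ cross while $f$ does not, and the width drops by two. With this correction your four cases (same/same/same; $u$ opposite; $w$ opposite; $v$ opposite) are exhaustive and the inequality $|\rho_H(X)|\le|\rho_G(Y)|$ holds in each. The paper compresses this into a single line: the new edge $f$ appears in the cut iff $u$ and $w$ lie on opposite sides, and in that case at least one of $e,e'$ was already in the cut.
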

\begin{proof}
    
    Let~$H,G$ be Eulerian digraphs such that~$G$ immerses $H$. Let~$\tau=(T,\ell)$ be a carving of~$G$ witnessing its carving width. Since~$G$ contains~$H$ as an immersion, there exists a Eulerian subgraph~$G' \subseteq G$ and a sequence of splitting off operations at vertices of~$G'$ which produces $H$ by \cref{obs:immersion_robust_under_splitting_off}.  Thus it suffices to prove that carving width is closed under taking Eulerian subgraphs and splitting off at vertices. 
    
    \begin{claim}
        Let~$G' \subseteq G$ be Eulerian. Then~$\ebw{G'} \leq \ebw{G}$.
    \end{claim}
    \begin{claimproof}
        It suffices to prove this for~$G' \coloneqq G-E(C)$ for some fixed cycle~$C$ in $G$. Clearly~$(T,\ell)$ is still a valid~carving of~$G'$. Since we only deleted edges, the width of any edge in~$E(T)$ cannot increase.
    \end{claimproof}

    \begin{claim}
        Let~$G'$ be obtained from~$G$ by splitting off at~$v \in V(G)$ with respect to some two-path~$(v_l,v,v_r)$ in~$G$. Then~$\ebw{G'} \leq \ebw{G}$.
    \end{claim}
    \begin{claimproof}
        Let~$e_1=(v_l,v),e_2=(v,v_r) \in E(G)$ be the two edges adjacent to~$v$ along which we split off and let~$e'$ be the resulting edge. We keep~$(T,\ell)$ and claim the~carving witnesses that~$G'$ satisfies~$\ebw{G'} \leq \ebw{G}$. To see this note that the resulting edge~$e'$ appears in~$\epsilon(\{t_1,t_2\})$ if and only if $v_l$ and~$v_r$ lie on opposite sides of~$T \setminus \{t_1,t_2\}$. But then~$\epsilon(\{t_1,t_2\}) \cap \{e_1,e_2\} \neq \emptyset$ and thus the width of the edge cannot increase.
    \end{claimproof}
This concludes the proof.
\end{proof}
\begin{remark}
    Note that if a directed graph~$G$ contains a vertex~$v$ of degree~$k \in \N$ (not counting loops), then~$\ebw{G} \geq k$ simply because the edge adjacent to the leaf representing~$v$ in the cubic tree must have width~$k$. 
\end{remark}

Finally, there is a nice qualitative relation between carving width and treewidthdue to Bienstock \cite[Theorem 1 and preceding discussion]{carvingwidth}. 

\begin{lemma}[$ ${\cite[Theorem 1]{carvingwidth}}]
\label{thm:qualitative_equivalence_of_tw_and_ebw}
    Let~$\Delta \in \N$ and~$G$ be an Eulerian digraph of maximum degree~$\Delta$. Then~$\frac{2}{3}\tw{G} \leq \ebw{G} \leq \Delta \cdot \tw{G}$. 
\end{lemma}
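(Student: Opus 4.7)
The plan is to prove the two inequalities separately by direct translation between tree decompositions and carvings, with the lower bound going through branch width as an intermediary. Both directions rest on the classical observation that a cut in a carving and a separator in a tree decomposition are essentially dual ways of slicing the graph along an edge of a cubic host tree.

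For the upper bound $\ebw{G} \leq \Delta \cdot \tw{G}$, I would take an optimal tree decomposition $(T, \beta)$ of $G$ and refine it in the standard fashion (inserting introduce-, forget-, and join-nodes) so that $T$ is a binary tree in which each vertex $v \in V(G)$ is first introduced at its own dedicated leaf. This gives an injective map $\ell \colon V(G) \to \leaves{T}$ and hence a candidate carving $(T, \ell)$. For any edge $\{t_1, t_2\} \in E(T)$, every edge of $G$ in the cut $\epsilon(\{t_1, t_2\})$ must have both endpoints in the bag separator $\beta(t_1) \cap \beta(t_2)$, a set of at most $\tw{G}$ vertices once the ``$+1$'' is absorbed by the introduce-leaves. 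Since every such vertex has at most $\Delta$ incident edges, $\w(\{t_1, t_2\}) \leq \Delta \cdot \tw{G}$, establishing the bound.

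For the lower bound $\tfrac{2}{3}\tw{G} \leq \ebw{G}$, I would take an optimal carving $(T, \ell)$ of width $k = \ebw{G}$ and first convert it into a branch decomposition of $G$ of width at most $k$, by expanding each leaf $\ell(v)$ into a small cubic subtree whose leaves correspond bijectively to the edges of $G$ incident to $v$ (with a suitable symmetry-breaking choice, since each edge is incident to two such leaves). The key point is that the middle set at any non-leaf edge of the resulting decomposition tree coincides with the set of vertices incident to edges of the cut $\epsilon$ at that edge of the original carving, and therefore has size at most $k$. This gives $\br(G) \leq \ebw{G}$. Invoking the classical Robertson--Seymour inequality $\tw{G} + 1 \leq \tfrac{3}{2}\br(G)$ then yields $\tw{G} \leq \tfrac{3}{2}\ebw{G}$, equivalently $\tfrac{2}{3}\tw{G} \leq \ebw{G}$.

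The main obstacle is establishing the construction $\br(G) \leq \ebw{G}$: one must carefully choose, for each edge of $G$, which of its two endpoint-subtrees hosts it as a leaf, and then verify that at every internal edge of the new tree the middle set is exactly the set of cut-incident vertices of the original carving, not larger. The constant $\tfrac{2}{3}$ itself comes for free from the classical tree-width/branch-width relation, and all remaining bookkeeping (nice tree decompositions for the upper bound, parity of cuts for the lower bound) is routine.
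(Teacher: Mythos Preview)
The paper does not give its own proof of this lemma; it is stated with a citation to Bienstock~\cite{carvingwidth} and used as a black box. So there is nothing to compare against beyond noting that the result is quoted from the literature.

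Your sketch is along the right lines, but there is one concrete slip and one place where the argument as written does not close. For the upper bound, it is not true that every cut edge at $\{t_1,t_2\}$ has \emph{both} endpoints in $\beta(t_1)\cap\beta(t_2)$: if $u$'s introduce-leaf lies in $T_1$ and $v$'s in $T_2$, the tree-decomposition axioms only force \emph{one} of $u,v$ into the adhesion. That weaker fact still suffices (each cut edge is charged to a separator vertex of degree at most~$\Delta$), but it yields $\ebw{G}\le\Delta\cdot(\tw{G}+1)$; the extra $+1$ does not simply vanish ``at the introduce-leaves,'' and you should either accept the slightly weaker constant or argue more carefully.

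For the lower bound, your reduction to branch width is the standard route, but the sentence ``the middle set \dots\ coincides with the set of vertices incident to edges of the cut $\epsilon$ \dots\ and therefore has size at most $k$'' is not a valid inference: $k$ cut edges can have up to $2k$ incident vertices. What actually works is that each vertex $w$ in the middle set witnesses a cut edge whose \emph{other} endpoint hosts it, and this correspondence is injective; that gives $|\text{middle set}|\le k$ at the old carving edges. You also need to bound the middle sets at the \emph{new} tree edges inside each vertex-subtree, which you do not address; here one uses that the leaf edge of the carving at $v$ already forces $|\rho(v)|\le k$, and a careful caterpillar layout of $v$'s subtree keeps those middle sets at most $k$ as well.
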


\subsection{Well-Quasi-Ordering }
 A \emph{quasi-order}~$\Omega = (V,\preceq)$ is a tuple consisting of a set of elements~$V$ and a relation~$\preceq \subseteq V\times V$ satisfying the following. Let~$u,v,w \in V$ then~$\preceq$ is \textit{reflexiv}, i.e., $v \preceq v$, and \emph{transitive}, i.e.,~$u \preceq v$ and~$v \preceq w$ implies~$u \preceq w$. If in addition the relation \textit{antisymmetric}, i.e.,~$v \preceq w$ and~$w \preceq v$ implies~$ v = w$, we call~$\preceq$  a \emph{partial order}.

We say that~$\Omega=(V,\preceq)$ is a \emph{well-quasi-order} if for every infinite sequence~$(v_\ell)_{\ell \in \N}$ of elements~$v_\ell \in V$, there exist~$1 \leq i < j $ such that~$v_i \preceq v_j$. Given~$I \subseteq \N$ and a sequence~$(v_\ell)_{\ell \in I}$ such that for all~$i,j \in I$ with~$i<j$ it holds~$v_i \not \preceq v_j$ then we call it a \emph{bad sequence (with respect to $\preceq$)}. If for all~$i,j \in I$ it holds~$v_i \not \preceq v_j$, then we call it an \emph{antichain}, and if~$v_i \succ v_j$ for all~$i,j \in I$ with~$i < j$ we call it a \emph{strictly decreasing sequence}. Completing the above, if~$v_i \leq v_j$ for every~$i,j \in I$ with~$i<j$ we call the sequence a \emph{chain with respect to~$\preceq$}. Clearly ~$\Omega$ is a well-quasi-order if and only if there exist no infinite bad sequences with respect to $\preceq$.

We gather some well-known results regarding well-quasi-orderings.

\begin{observation}\label{obs:wqo_yields_infinite_chain}
    Let~$\Omega$ be a well-quasi-order and~$(v_\ell)_{\ell \in \N}$ a sequence with~$v_\ell \in V(\Omega)$ for every~$\ell \in \N$. Then there exists an infinite~$I\subset \N$ such that~$(v_i)_{i \in I}$ is a chain with respect to~$\preceq$.
\end{observation}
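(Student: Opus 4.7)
The plan is to use the standard dichotomy argument based on terminal indices. Call an index $i\in\N$ \emph{terminal} (with respect to the given sequence) if $v_i \not\preceq v_j$ for all $j>i$. I would first show that the set $T\subseteq \N$ of terminal indices is finite; otherwise, enumerating $T$ in increasing order as $i_1 < i_2 < \cdots$ yields a subsequence $(v_{i_k})_{k\in\N}$ which is bad with respect to $\preceq$, since for $k<\ell$ we have $i_\ell > i_k$ and $i_k$ is terminal, so $v_{i_k} \not\preceq v_{i_\ell}$. This contradicts the assumption that $\Omega$ is a well-quasi-order.

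Hence there exists $N\in\N$ such that every index $\ell \geq N$ is non-terminal, i.e.\ admits some $j>\ell$ with $v_\ell \preceq v_j$. I would then construct the chain inductively: set $\ell_1 \coloneqq N$, and, given $\ell_1 < \ell_2 < \cdots < \ell_k$ with $v_{\ell_s}\preceq v_{\ell_{s+1}}$ for all $s<k$, use non-terminality of $\ell_k$ to pick some $\ell_{k+1} > \ell_k$ with $v_{\ell_k} \preceq v_{\ell_{k+1}}$. The set $I \coloneqq \{\ell_k \mid k\in \N\}$ is then an infinite subset of $\N$, and transitivity of $\preceq$ upgrades the consecutive inequalities to $v_{\ell_i} \preceq v_{\ell_j}$ for all $i<j$, so $(v_i)_{i\in I}$ is a chain as required.

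There is no real obstacle beyond setting up the terminal/non-terminal dichotomy correctly; the whole statement is essentially a direct consequence of the definition of a well-quasi-order together with an inductive choice. Alternatively one could invoke the infinite Ramsey theorem applied to the $2$-colouring of pairs $\{i,j\}$ (with $i<j$) given by whether $v_i \preceq v_j$ or not, which yields either an infinite chain or an infinite antichain; the latter is excluded by the wqo assumption. I would prefer the direct terminal-index argument since it avoids invoking Ramsey and is self-contained.
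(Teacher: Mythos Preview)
Your argument is correct and is exactly the standard ``terminal index'' proof of this fact. The paper itself does not give a proof: it lists \cref{obs:wqo_yields_infinite_chain} among ``well-known results regarding well-quasi-orderings'' and states it without justification, so there is nothing further to compare.
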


\begin{observation}\label{obs:wqo_of_tuples}
    Let~$\Omega_1,\ldots,\Omega_k$ be well-quasi-orders with~$\Omega_i = (V_i,\preceq_i)$ for~every~$1 \leq i \leq k$ and some~$k \in \N$. Then~$\Omega=(V(\Omega_1)\times \ldots \times V(\Omega_k), \preceq)$ where~$(v_1,\ldots,v_k)\preceq(u_1,\ldots,u_k)$ if and only if~$v_i \preceq_i u_i$ for all~$1 \leq i \leq k$.
\end{observation}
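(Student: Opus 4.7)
The plan is to prove this by induction on $k$, using \cref{obs:wqo_yields_infinite_chain} as the key tool.

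For the base case $k=1$, the statement is immediate since $\Omega = \Omega_1$ is given to be a well-quasi-order. For the inductive step, it suffices to handle $k=2$ and then iterate: suppose we know the statement holds for any product of $k-1$ well-quasi-orders; then $V(\Omega_1) \times \ldots \times V(\Omega_k)$ is (via the obvious bijection that groups the last $k-1$ coordinates) a product of two well-quasi-orders, namely $\Omega_1$ and the wqo on $V(\Omega_2) \times \ldots \times V(\Omega_k)$ obtained by induction. Componentwise domination is preserved under this regrouping, so the $k=2$ case yields the general statement.

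For the $k=2$ case I would argue by contradiction. Assume there is an infinite bad sequence $((a_\ell, b_\ell))_{\ell \in \N}$ in $\Omega_1 \times \Omega_2$ with respect to $\preceq$. Apply \cref{obs:wqo_yields_infinite_chain} to the sequence $(a_\ell)_{\ell \in \N}$ in $\Omega_1$ to obtain an infinite index set $I \subseteq \N$ such that $(a_i)_{i \in I}$ is a chain, i.e., $a_i \preceq_1 a_j$ whenever $i < j$ in $I$. Now consider the sequence $(b_i)_{i \in I}$ in $\Omega_2$; since $\Omega_2$ is a well-quasi-order, there exist $i, j \in I$ with $i < j$ and $b_i \preceq_2 b_j$. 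Combined with $a_i \preceq_1 a_j$, this gives $(a_i, b_i) \preceq (a_j, b_j)$, contradicting the assumption that the original sequence is bad.

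The only subtlety is verifying that $\preceq$ is indeed a quasi-order (reflexivity and transitivity follow immediately from the componentwise definition and the corresponding properties of each $\preceq_i$), which should be remarked briefly before the main argument. There is no real obstacle here — this is the classical fact that a finite product of well-quasi-orders is a well-quasi-order, and the proof is a direct two-line application of \cref{obs:wqo_yields_infinite_chain} followed by a one-step induction.
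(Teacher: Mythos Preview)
Your proof is correct and is the standard argument for this classical fact. The paper does not actually give a proof of this observation; it simply lists it among ``well-known results regarding well-quasi-orderings'' without justification, so there is nothing to compare against.
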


And finally we will need the following well-known result by Higman \cite{Higman1952}.

\begin{theorem}[Higman's Theorem]\label{thm:higman}
   Let~$\Omega$ be a well-quasi-order and~$(X_i)_{i \in \N}$ a sequence of finite sets~$X_i \subseteq V(\Omega)$. Then there exists an infinite subset~$I\subseteq \N$ such that for every~$1 \leq i < j$ with~$i,j \in I$ there is an injective map~$\alpha_i^j:X_i \to X_j$ with~$x \leq \alpha_i^j(x)$ for every~$x \in X_i$.
\end{theorem}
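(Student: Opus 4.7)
The plan is to lift the well-quasi-ordering on $\Omega$ to the set $V^*$ of finite subsets of $V(\Omega)$ by defining a \emph{domination} quasi-order $\preceq^*$ via: $X \preceq^* Y$ iff there is an injection $\alpha \colon X \to Y$ with $x \preceq \alpha(x)$ for every $x \in X$. Reflexivity follows from the identity, and transitivity from composing two witnessing injections and using transitivity of $\preceq$, so $(V^*, \preceq^*)$ is indeed a quasi-order. Once I show that $(V^*, \preceq^*)$ is a well-quasi-order, the theorem follows immediately by applying \cref{obs:wqo_yields_infinite_chain} to $(X_i)_{i \in \N}$ viewed as a sequence in $V^*$: it produces an infinite $I \subseteq \N$ along which the subsequence is a chain under $\preceq^*$, which is exactly the index set and the family of injections $\alpha_i^j$ required by the statement.

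The core of the work is therefore the well-quasi-ordering of $(V^*, \preceq^*)$, for which I would use the classical \emph{minimal bad sequence} argument of Nash-Williams. Assume for contradiction that $\preceq^*$ admits a bad sequence, and recursively build one $(X_i)_{i \in \N}$ in which each $X_i$ has minimum cardinality among all $Z \in V^*$ for which $X_1, \ldots, X_{i-1}, Z$ still extends to a bad sequence (this uses dependent choice). Since $\emptyset \preceq^* Z$ holds trivially, every $X_i$ must be nonempty, so I pick an arbitrary $x_i \in X_i$ and set $Y_i \coloneqq X_i \setminus \{x_i\}$. Applying the well-quasi-ordering of $\Omega$ to $(x_i)_{i \in \N}$, by \cref{obs:wqo_yields_infinite_chain} I extract indices $i_1 < i_2 < \cdots$ with $x_{i_1} \preceq x_{i_2} \preceq \cdots$.

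Consider the spliced sequence $X_1, \ldots, X_{i_1 - 1}, Y_{i_1}, Y_{i_2}, Y_{i_3}, \ldots$. Since $|Y_{i_1}| < |X_{i_1}|$, minimality of $(X_i)$ forces this hybrid sequence not to be bad; hence it contains a good pair, and a three-case analysis yields a contradiction. A good pair within $X_1, \ldots, X_{i_1 - 1}$ contradicts the badness of $(X_i)$ directly. A good pair $X_a \preceq^* Y_{i_b}$ with $a < i_1 \leq i_b$ extends via the inclusion $Y_{i_b} \subseteq X_{i_b}$ to $X_a \preceq^* X_{i_b}$, again contradicting badness. Finally, a good pair $Y_{i_a} \preceq^* Y_{i_b}$ lifts to $X_{i_a} \preceq^* X_{i_b}$ by augmenting the witnessing injection with $x_{i_a} \mapsto x_{i_b}$, valid precisely because $x_{i_a} \preceq x_{i_b}$ in $\Omega$. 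All three cases contradict the badness of $(X_i)$, so no bad sequence exists.

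The only delicate point is the setup of the minimal bad sequence and the careful verification that each splicing case lifts back to a good pair in the original sequence; beyond that, the proof is entirely bookkeeping. I would remark that exactly the same argument yields Higman's theorem for finite \emph{multisets} or finite \emph{sequences} over $V(\Omega)$ (with $\preceq^*$ adapted accordingly), which is the form in which the result is most often stated.
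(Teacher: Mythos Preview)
The paper does not give its own proof of this statement: it is stated as Higman's Theorem with a reference to \cite{Higman1952} and used as a black box throughout. Your proof via the Nash-Williams minimal bad sequence argument is the standard one and is correct as written; the three-case splicing analysis is handled carefully, and the final application of \cref{obs:wqo_yields_infinite_chain} to extract the infinite chain is exactly what the statement requires.
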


\section{The general framework}

We start with defining the basic framework and the tools needed throughout the paper and future work. This is stretched over three sections. The first section introduces terminology and known results regarding Eulerian digraphs. The second section introduces a new data-structure generalising Eulerian digraphs that we will call \emph{knitworks}. The third section gathers known results and techniques about well-quasi-ordering graphs by some containment relation and transfers them to our setting. Throughout the rest of this paper we will tacitly assume our graphs to be connected unless stated otherwise.

\subsection{The Structure of Eulerian Digraphs}

\paragraph{Euler-Grids:} It turns out that, similar to the undirected notion of treewidth and branch width, carving width comes with a dual obstruction bringing to light a richer structure of the graph: \emph{Euler-Grids}. The class of graphs admitting immersed Euler-Grids contains the class of graphs admitting immersed \emph{Swirls} and \emph{Routers}. 

\begin{definition}[Covers]
    Let~$G$ be an Eulerian digraph. Let~$\CCC\coloneqq \{C_1,\ldots,C_n\}$ be a collection of edge-disjoint cycles~$C_i \subseteq G$ for~$1 \leq i \leq n$ and some~$n \in \N$ such that~$G=C_1\cup \ldots \cup C_n$. Then we call~$\CCC$ a \emph{cycle cover (of order $n$) of~$G$} and we call~$(G,\CCC)$ a \emph{cycle-covered Eulerian graph}, or simply a \emph{covered graph}. If all the cycles are circles we call it a \emph{circle-cover} respectively. 
\end{definition}

\begin{definition}[Euler-Grid and Swirl]\label{def:swirl}
    Let~$k,n \in \N$ and let~$W_{k,n}$ be an undirected~$k \times n$ grid with vertices~$\{v_{i,j} \mid 1 \leq i \leq k, 1 \leq j \leq n\}$. Replace every vertex~$v_{i,j}$ by an undirected~$4$-cycle~$C_{i,j}=(l_{i,j},t_{i,j},r_{i,j},b_{i,j})$. Then for any edge~$\{v_{i,j},v_{i+1,j}\} \in E(W_{k,n})$ identify~$r_{i,j}$ with~$l_{i+1,j}$, and similary for every edge~$\{v_{i,j},v_{i,j+1}\}$ identify~$b_{i,j}$ with~$t_{i,j+1}$. Then this undirected graph is called the \emph{~$k \times n$ medial grid}.
    Assign to every cycle~$C_{i,j}$ some orientation by orienting the edges in the same direction forming a directed circle. Then the resulting graph is an Eulerian digraph and we refer to it as a \emph{$k \times n$ Euler-Grid} and denote it by~$\mathcal{E}_{k,n}$.

    If all the cycles have the same orientation with respect to some fixed orientation of the plane, we call the graph a~\emph{$k\times n$-swirl} denoted by~$\mathcal{S}_{k,n}$. If $k=n$ we may simply call it a $k$-swirl.\

\end{definition}
\begin{remark}
    To readers familiar with \cite{EDP_Euler}, the notion of swirl in this paper is qualitatively the same as the notion of induced swirls introduced in \cite{EDP_Euler}. It is for convenience that we slightly change its definition here.
\end{remark}
See \Cref{fig:euler-grid} for an illustration of an Euler-Grid and \Cref{fig:swirl} for an illustration of an eulerian embedded swirl.

\begin{figure}
\centering
\begin{subfigure}{0.4\textwidth}
\centering
    \begin{tikzpicture}[x=0.5cm,y=0.5cm, every node/.style={inner sep=1.5pt, fill=black, circle}]
      \tikzstyle{e}=[->,>=Stealth, thick, black]
        \foreach \i in {0, 2,..., 10}
        {
        \draw[orange!30!white] (1,\i) to (11,\i) ;
        \draw[orange!30!white] (\i+1,0) to (\i+1, 10) ;
        }
        \foreach \i in {0, 1, ..., 5}
        {
        \foreach \j in {0, 1, ..., 5}
        {
        \begin{scope}[xshift=\i cm, yshift=\j cm]
            \node[fill=orange!50!white] (g-\i-\j) at (1,0) {} ;
            \node (l-\i-\j) at (0,0) {  } ;
            \node (r-\i-\j) at (2,0) {  } ;
            \node (t-\i-\j) at (1,1) {  } ;
            \node (b-\i-\j) at (1,-1) {  } ;
            \ifodd\i
               \ifodd\j
                \draw[e] (l-\i-\j) to (t-\i-\j) ;
                \draw[e] (t-\i-\j) to (r-\i-\j) ;
                \draw[e] (r-\i-\j) to (b-\i-\j) ;
                \draw[e] (b-\i-\j) to (l-\i-\j) ;
              \else
                \draw[e] (l-\i-\j) to (b-\i-\j) ;
                \draw[e] (b-\i-\j) to (r-\i-\j) ;
                \draw[e] (r-\i-\j) to (t-\i-\j) ;
                \draw[e] (t-\i-\j) to (l-\i-\j) ;
              \fi
            \else
               \ifodd\j
                \draw[e] (l-\i-\j) to (b-\i-\j) ;
                \draw[e] (b-\i-\j) to (r-\i-\j) ;
                \draw[e] (r-\i-\j) to (t-\i-\j) ;
                \draw[e] (t-\i-\j) to (l-\i-\j) ;
              \else
                \draw[e] (l-\i-\j) to (t-\i-\j) ;
                \draw[e] (t-\i-\j) to (r-\i-\j) ;
                \draw[e] (r-\i-\j) to (b-\i-\j) ;
                \draw[e] (b-\i-\j) to (l-\i-\j) ;
              \fi
            \fi
        \end{scope}
        }
        }
    \end{tikzpicture}
    \caption{Every degree $4$ vertex is strongly planar.} 
    \label{fig:euler-grid}
\end{subfigure}
\begin{subfigure}{0.4\textwidth}
\centering
    \begin{tikzpicture}[x=0.5cm,y=0.5cm, every node/.style={inner sep=1.5pt, fill=black, circle}]
      \tikzstyle{e}=[->,>=Stealth, thick, black]
        \foreach \i in {0, 2,..., 10}
        {
        \draw[orange!30!white] (1,\i) to (11,\i) ;
        \draw[orange!30!white] (\i+1,0) to (\i+1, 10) ;
        }
        \foreach \i in {0, 1, ..., 5}
        {
        \foreach \j in {0, 1, ..., 5}
        {
        \begin{scope}[xshift=\i cm, yshift=\j cm]
            \node[fill=orange!50!white] (g-\i-\j) at (1,0) {} ;
            \node (l-\i-\j) at (0,0) {  } ;
            \node (r-\i-\j) at (2,0) {  } ;
            \node (t-\i-\j) at (1,1) {  } ;
            \node (b-\i-\j) at (1,-1) {  } ;
               \ifodd\j
                \draw[e] (l-\i-\j) to (t-\i-\j) ;
                \draw[e] (t-\i-\j) to (r-\i-\j) ;
                \draw[e] (r-\i-\j) to (b-\i-\j) ;
                \draw[e] (b-\i-\j) to (l-\i-\j) ;
              \else
                \draw[e] (l-\i-\j) to (t-\i-\j) ;
                \draw[e] (t-\i-\j) to (r-\i-\j) ;
                \draw[e] (r-\i-\j) to (b-\i-\j) ;
                \draw[e] (b-\i-\j) to (l-\i-\j) ;
              \fi
             
        \end{scope}
        }
        }
    \end{tikzpicture}
    \caption{Every degree $4$ vertex is eulerian embedded.} 
    \label{fig:swirl}
\end{subfigure}
\caption{Two $6 \times 6$ Euler-grids depicted in black with underlying undirected grids depicted in orange. The right-hand side depicts a $6 \times 6$-swirl. }
\end{figure}
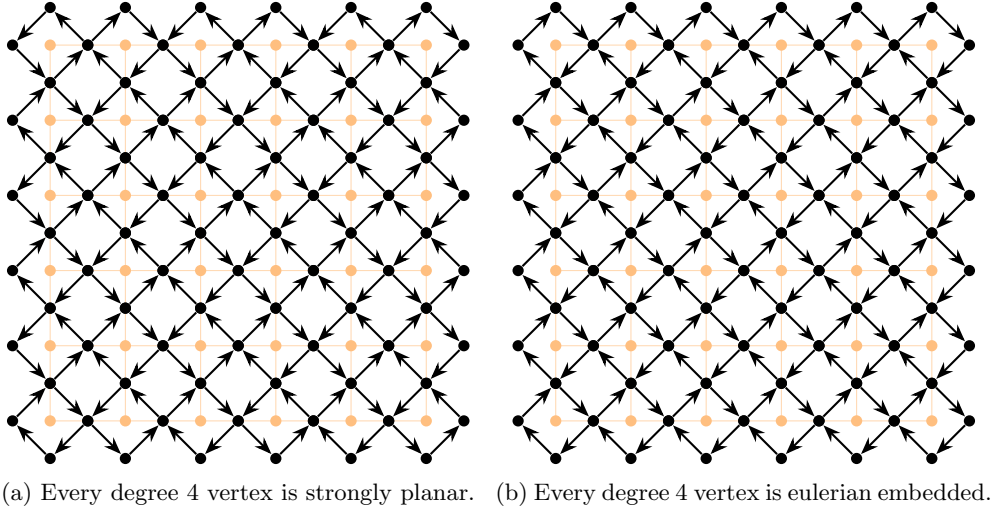

Using simple but tedious combinatorial arguments one easily derives that a large Euler-Grid contains a large swirl as an immersion.

\begin{lemma}
\label{lem:swirl_immerses_in_eulergrid}
For all $k \geq 1$, $\SSS_{k,k} \hookrightarrow \EEE_{2k^2,2k^2}$.
\end{lemma}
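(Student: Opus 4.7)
The plan is to exhibit an explicit strong immersion $\gamma\colon \SSS_{k,k}\hookrightarrow\EEE_{2k^2,2k^2}$ by a block decomposition. Partition the Euler-grid $\EEE_{2k^2,2k^2}$ into a $k\times k$ array of \emph{super-blocks}, each of size $2k\times 2k$ consisting of the corresponding contiguous $2k\times 2k$ family of Euler-grid $4$-cycles $C_{p,q}$. The super-block indexed by $(i,j)$ will realise the swirl cycle $D_{i,j}$ of $\SSS_{k,k}$.

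For each super-block $(i,j)$, designate four Euler-grid vertices on its boundary to serve as the images $\gamma(v)$ of the four corners of $D_{i,j}$. These are chosen so that adjacent super-blocks share their designated corners in the same way that adjacent cycles of $\SSS_{k,k}$ share their corners: the east corner of $D_{i,j}$ coincides with the west corner of $D_{i+1,j}$, and similarly the south corner with the north corner of $D_{i,j+1}$. Concretely, one can place these corners at shared vertices of the form $r_{p,q}=l_{p+1,q}$ or $b_{p,q}=t_{p,q+1}$ lying exactly on the boundary between two super-blocks; since distinct super-blocks are otherwise disjoint, all $\gamma$-images of swirl vertices are automatically distinct.

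Within each super-block, realise the four directed edges of $D_{i,j}$ as four pairwise edge-disjoint directed paths running between the designated corners in the cyclic order of the (say clockwise) swirl orientation. The crucial tool here is that every strongly planar vertex in $\EEE$ admits two possible routings for the two paths passing through it: the \emph{straight} pairing (which continues each of the two underlying Euler-grid cycles separately) and the \emph{crossed} pairing (which swaps them). By inserting crossings at carefully chosen strongly planar vertices, a path can effectively traverse a stretch of an adjacent Euler-grid cycle of \emph{opposite} spin while still continuing in the desired global direction. Concatenating arcs of cycles with $p+q$ even and short reversals through cycles with $p+q$ odd thus produces a directed path in any prescribed direction. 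The $2k$ width of each super-block provides enough slack to route the four required paths without edge collisions.

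The main obstacle, and the reason why the lemma is described as \emph{simple but tedious}, lies in the combinatorial bookkeeping of step three: explicitly producing the four edge-disjoint paths within each super-block and checking their correctness. Once the paths are described, three verifications complete the proof: (i) the images of the swirl vertices are pairwise distinct, which is immediate from the boundary placement; (ii) no path passes through a designated corner as an internal vertex, which follows from placing corners on block boundaries and routing paths in the interior; and (iii) paths from different super-blocks are edge-disjoint, which is automatic because different super-blocks contain disjoint families of Euler-grid cycles. Together these yield the desired $\gamma\colon \SSS_{k,k}\hookrightarrow \EEE_{2k^2,2k^2}$, with the generous $2k^2$ bound chosen precisely to keep step three routine rather than tight.
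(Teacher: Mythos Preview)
Your block-decomposition strategy is in the right spirit, but step three contains a genuine gap, not merely bookkeeping. An Euler-grid as defined here allows an \emph{arbitrary} orientation on each $4$-cycle $C_{p,q}$; your sentence about ``arcs of cycles with $p+q$ even and short reversals through cycles with $p+q$ odd'' presupposes the checkerboard orientation pattern of the illustrative figure, which is only one instance. In a generic Euler-grid you have no a priori control over which shared vertices are strongly planar---this depends on whether the two incident $4$-cycles have the same or opposite spin---so ``inserting crossings at carefully chosen strongly planar vertices'' cannot be invoked without first arguing that such vertices exist where you need them. Routing four edge-disjoint directed paths in a prescribed cyclic direction through a block of unknown orientation pattern is therefore the heart of the lemma, not a detail one can defer.

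The paper sidesteps this with a different idea. It partitions $\EEE_{2k^2,2k^2}$ into $(2k)^2$ blocks of size $k\times k$ (rather than $k^2$ blocks of size $2k\times 2k$) and uses a dichotomy: either some block already has all its $4$-cycles in one orientation, in which case that block \emph{is} a $k\times k$ swirl and the lemma is immediate; or every block contains at least one anti-clockwise $4$-cycle. In the second case it selects one such cycle from each block of a spaced-out $k\times k$ sub-array of blocks, uses the skipped blocks as routing space to find pairwise edge-disjoint connecting paths in both directions (this is where Eulerian connectivity does the work), and then splits off along these paths to produce the swirl. The key point is that the paper never forces a directed path through arbitrarily oriented cycles: it uses pigeonhole to \emph{find} cycles already correctly oriented, so that only the connectors, not the swirl cycles themselves, need to be routed. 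Your approach could plausibly be completed---for instance via the Eulerian Menger lemma rather than ad hoc crossings---but as written it assumes structure an arbitrary Euler-grid need not have.
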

\begin{proof}
 We construct an immersion of $\SSS_{k,k}$ into $\EEE_{2k^2,2k^2}$ as follows.
 For $1 \leq i,j \leq 2k^2$ let $C_{i,j}$ be the $4$-circle of the canonical cycle cover introduced to replace the vertex $v_{i,j}$ in the construction of the Euler grid.
 We partition $\EEE_{2k^2,2k^2}$ into $2k \cdot 2k$ ``$k \times k$-squares'' as follows.
 For $1 \leq i,j \leq 2k$ let 
 \[
    H_{i,j} \coloneqq \bigcup \{ C_{s,t} \mid (i-1)\cdot k +1 \leq s \leq i\cdot k, (j-1)\cdot k + 1 \leq j \leq j\cdot k \}.
 \]
Thus, each $H_{i,j}$ is a $k\times k$-Euler grid. If for some $i, j$ all $4$-cycles $C_{s, t}$ in $H_{i,j}$ are oriented in the same direction, then $H_{i,j}$ is a swirl and we are done.

Otherwise, in each $H_{i,j}$ there is a $4$-cycle oriented 
clockwise and a $4$-cycle oriented anti-clockwise.
We choose for all $1 \leq i, j \leq k$, with $i,j$ both odd, 
a $4$-cycle $O_{i,j} \coloneqq C_{x_{i,j}, y_{i,j}}$ in 
anti-clockwise orientation. Let $l_{i,j} := l_{x_{i,j}, y_{i,j}}$ 
and likewise for $r_{i,j}, b_{i,j}$, and $t_{i,j}$.

Now it is easily seen that 
\begin{itemize}
\item for all $1 \leq j \leq k$ and all $1 \leq i < k$ there are two edge-disjoint paths $R_{i,j}$ from $r_{i,j}$ to $l_{i+1,j}$ and $L_{i,j}$ from $l_{i+1,j}$ to $r_{i,j}$ and
\item  for all $1 \leq i \leq k$ and all $1 \leq j < k$ there are two edge-disjoint paths $D_{i,j}$ from $b_{i,j}$ to $t_{i,j+1}$ and $U_{i,j}$ from $t_{i+1,j}$ to $b_{i,j}$
\end{itemize}
such that all these paths are pairwise edge disjoint. 

Now, for all $1 \leq j  \leq k$ and all $1 \leq i < k$ let 
$T^+_{i,j}$ be the path from $t_{i+1,j}$ to $r_{i,j}$ that starts 
with the edge $\big(t_{i+1,j}, l_{i+1,j}\big)$ and then follows 
$L_{i,j}$. Similarly we obtain from $R_{i,j}$ a path from 
$r_{i,j}$ to $b_{i+1,j}$.
We now split off along these paths. As a result we get, for each $1 \leq i,j \leq k$ a sequence of new $4$-cycles $O'_{i,j}$ with vertices $t'_{i,j}, l'_{i,j}, b'_{i,j}$, and $r'_{i,j}$ such that $t'_{i,j} = t_{i,j}$, $r'_{i,j} = r'_{i,j}$, $b'_{i,j} = b_{i,j}$, and, for $1 < i \leq k$, $l'_{i,j} = r'_{i-1, j}$. 

Thus, in each row $j$ we now have a sequence of consecutive $4$-cycles all oriented anti-clockwise as required. In the next step we connect the cycles to the cycles in the row below using the same idea. 

For $1 \leq j < k$ and $1 \leq i \leq k$ we obtain 
a path $D'_{i,j}$ from $l'_{i,j}$ to $t'_{i, j+1}$ from $D_{i,j}$ and the edge $(l'_{i,j}, b'_{i,j})$. Similarly we obtain a path $U'_{i,j}$ from $t'_{i, j+1}$ to $r'_{i,j}$ from $U_{i,j}$ and the edge $(b'_{i,j},  r'_{i,j})$.
Splitting off along these paths yields the required swirl. 
\end{proof}

While, as we will prove below, immersed swirls are to Eulerian digraphs (of bounded degree) and~carving width what grid-minors are to undirected graphs and treewidth, immersed \emph{Routers} take the role of clique-minors in the directed Eulerian immersion setting. 

\begin{definition}[Router]\label{def:router}
    Let~$G$ be an Eulerian digraph on~$\binom{k}{2}$ vertices given by~$V(G)=\{v_{i,j} \mid 1 \leq i \neq j \leq k\}$. Let~$\CCC \coloneqq \{C_1,\ldots,C_k\}$ be a circle cover of~$G$ such that~$V(C_i) \cap V(C_j) = \{v_{i,j}\}$ for all $1 \leq i < j \leq k$. Then we call~$G$ \emph{a~$k$-router} and we denote it by~$\RRR_k$.
\end{definition}

We take the time to prove a nice result strengthening the intuition that Euler-Grids (more precisely swirls) are the structural equivalents to grids with regards to Eulerian digraphs and immersion. 
In \cite{GMV}, Robertson and Seymour proved the following theorem. 

\begin{theorem}
    There exists a function~$f_{\WWW}:\N \to \N$ such that the following holds. Let~$G$ be an undirected graph. Then~$\tw{G} \geq f_{\WWW}(k)$ implies that~$G$ admits a~$k\times k$-wall as a subgraph.
    \label{thm:undirected_wall}
\end{theorem}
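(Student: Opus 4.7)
The statement is the classical Robertson--Seymour excluded wall theorem. My plan is to establish it via a two-step reduction through the better known excluded grid theorem: first show that high treewidth implies a large grid \emph{minor}, then transfer this to a wall subgraph.

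For the first step, I would use the duality between treewidth and brambles due to Seymour and Thomas, which guarantees that treewidth at least $k$ yields a bramble of order $k+1$. A bramble of sufficiently high order obstructs decomposition into parts of small width and supplies a wealth of concurrent routes. Using this connectivity I would then invoke (or reprove along the Robertson--Seymour lines) the excluded grid theorem: there exists $g : \N \to \N$ such that treewidth at least $g(k)$ forces a $k \times k$-grid as a minor. The original proof does this by selecting a large family of vertex-disjoint paths using the bramble as a reservoir of connectivity and organising them into a grid-like pattern; the argument is intricate but entirely standard.

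For the second step, I would convert the grid minor into a wall subgraph. A $k \times k$-wall is a cubic graph of maximum degree $3$, so containment as a minor coincides with containment as a topological subgraph, i.e.\ a subdivision of the wall appears in $G$. A $k \times k$-grid visibly contains a $\lfloor k/2 \rfloor \times \lfloor k/2 \rfloor$-wall as a subgraph: delete every second vertical edge in an alternating row-by-row pattern. Given a grid minor of $G$, the branch sets are vertex-disjoint connected subgraphs of $G$ joined by the minor-edges, and restricting attention to the subset realising the wall yields the required subdivision inside $G$. Setting $f_\WWW(k) \coloneqq g(2k)$ then delivers the theorem.

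The key obstacle is the excluded grid theorem itself. Robertson and Seymour's original bound on $g$ is of tower type, and subsequent work culminating in results of Chuzhoy and collaborators has reduced this to polynomial. Since the statement only asserts the existence of \emph{some} function $f_\WWW$, any computable bound suffices and the quantitative question is irrelevant for our purposes. The passage from grid minor to wall subgraph is elementary once the correct formulation is fixed, so the entire non-trivial content is concentrated in the first step.
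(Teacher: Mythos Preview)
The paper does not prove this statement at all; it cites it as a known result of Robertson and Seymour \cite{GMV} and uses it as a black box. Your outline is the standard route to the theorem and is correct in structure, so there is nothing to compare against in the paper itself.
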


In the spirit of \cref{thm:undirected_wall}, Cavallaro, Kawarabayashi, and Kreutzer have shown that Eulerian digraphs of degree four and high treewidth admit large induced swirls \cite{EDP_Euler}\footnote{Note that the definition of (induced) swirls in \cite{EDP_Euler} differs from our definition but as mentioned previously they are qualitatively the same.}. An equivalent result has also been obtained by Johnson in \cite[Theorem 2.5]{Johnson2002}, where swirls are called medial grids. 

\begin{theorem}[\cite{Johnson2002,EDP_Euler}]\label{thm:4_reg_swirl}
    There exists a function~$f_{\SSS}:\N \to \N$ such that the following holds. Let~$G$ be an Eulerian directed graph of maximum degree~$4$. Then~$\tw{G} \geq f_{\SSS}(k)$ implies that~$G$ strongly immerses a~$k$-swirl.
\end{theorem}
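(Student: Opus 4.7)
The plan is to reduce to the undirected wall theorem \cref{thm:undirected_wall}, recover an Euler-grid inside $G$ via Eulerian splitting-off, and then invoke \cref{lem:swirl_immerses_in_eulergrid} to finish. I would set $f_{\SSS}(k) \coloneqq f_{\WWW}(c \cdot k^2)$ for a sufficiently large constant $c$ so that the wall produced is large enough to absorb the losses below.

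First, I would apply \cref{thm:undirected_wall} to the underlying undirected graph of $G$ to produce a large wall $W$ as a subgraph. Since $\Delta(G) \leq 4$, each vertex of $W$ has at most one additional incident edge in $G$ outside of $W$, and the Eulerian constraint $\Abs{\rho^-(v)} = \Abs{\rho^+(v)}$ restricts how the in- and out-edges can be distributed among these at most four incidences. Up to relabelling there are only a bounded number of local combinatorial \emph{types} specifying, for a vertex $v$ of $W$, how its in- and out-edges sit among the three wall-edges and the possibly-extra non-wall edge.

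Second, I would convert $W$ together with its inherited directions into an Euler-grid strongly immersed in $G$. Using \cref{obs:immersion_robust_under_splitting_off}, I may split off pairs $(e,e')$ with $e \in \rho^-(v)$ and $e' \in \rho^+(v)$ at each wall vertex $v$; I would choose these pairings so that after all splittings the remaining edges partition into directed $4$-circles arranged exactly as the medial graph of a subgrid of $W$. By a pigeonhole/Ramsey-style argument over the finitely many local types, a sub-wall of $W$ of size $m \times m$ with $m \geq 2k^2$ admits a uniform pattern under which this splitting-off succeeds, producing an $m \times m$ Euler-grid strongly immersed in $G$. By \cref{lem:swirl_immerses_in_eulergrid} this Euler-grid then strongly immerses a $k$-swirl, and by transitivity of $\hookrightarrow$ so does $G$.

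The main obstacle is the second step. Eulerianness determines the in/out split at each vertex but does not specify which pairing of an in-edge with an out-edge is used to form a side of a $4$-circle, and locally some pairings are incompatible with closing up to circles of the required shape around adjacent grid faces; moreover, the extra (non-wall) edge at $v$ must be integrated coherently with edges at neighbouring wall vertices. Making precise the claim that a large uniform-type sub-wall with a consistent pairing rule can be extracted, at the cost of only a constant factor in the wall size, is the technical heart of the argument. It is also the place where both the Eulerianness and the bound $\Delta(G) \leq 4$ are used essentially: a vertex of degree $> 4$ would admit many more local configurations to control, and without Eulerianness the splitting-off operations would not even be well-defined.
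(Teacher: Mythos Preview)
The paper does not prove \cref{thm:4_reg_swirl}; it is imported as a black box from \cite{Johnson2002,EDP_Euler} and used throughout. So there is no ``paper's own proof'' to compare against, and your task was effectively to reconstruct an argument for a cited result.

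Your overall architecture---undirected wall, then extract an Euler-grid, then apply \cref{lem:swirl_immerses_in_eulergrid}---is the natural shape such a proof would take, and indeed the paper's proof of \cref{thm:high_tw_implies_high_ebw_grid_version} (the higher-degree, weak-immersion variant) follows a related outline. However, your second step is not a proof but a description of what you would like to be true. Splitting off a pair $(e,e')$ at a wall vertex $v$ replaces a two-path through $v$ by a single edge bypassing $v$; iterating this at a degree-$4$ vertex removes $v$ entirely. It does not create $4$-circles, and the Euler-grid you are aiming for needs genuine degree-$4$ vertices at the shared points of adjacent $4$-circles. So it is unclear what object your splitting-off procedure is supposed to leave behind, let alone why it should be an $\EEE_{m,m}$. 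The ``pigeonhole/Ramsey over finitely many local types'' sentence is doing all of the work and none of the mathematics: even if every wall vertex had the same local in/out pattern, you have not indicated why the resulting global orientation of the wall yields closed $4$-circles around faces rather than, say, long directed paths or inconsistent face boundaries.

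The actual proofs in \cite{Johnson2002,EDP_Euler} are substantially more involved; in particular, obtaining a \emph{strong} immersion of a swirl (rather than a weak one, as in \cref{thm:high_tw_implies_high_ebw_grid_version}) requires controlling where the images of swirl vertices land, which your splitting-off approach does not address. Your sketch identifies the right bottleneck but does not cross it.
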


The previous theorem will be an important tool in the sequel. We show next that, using \cref{thm:undirected_wall} and \cref{thm:4_reg_swirl}, its statement can be generalised to Eulerian digraphs of unbounded degree, but at the expense that we are no longer guaranteed to get a strong immersion.

\begin{lemma}
    Let~$G$ be an Eulerian digraph and~$k \geq 4$. Then there exists~$f: \N \to \N$ such that~$\tw{G} \geq f(k)$ implies that $G$ immerses $\SSS_{k,k}$ (not necessarily strong). In particular some Euler-Grid immerses in~$G$.
    \label{thm:high_tw_implies_high_ebw_grid_version}
\end{lemma}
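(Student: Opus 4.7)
The plan is to reduce the problem to the maximum-degree-$4$ case covered by \cref{thm:4_reg_swirl}, at the cost of losing the ``strong'' in ``strong immersion''. I would set $f(k) := f_W(N)$ for an $N$ to be chosen sufficiently large in terms of $f_S(k)$, where $f_W$ and $f_S$ are the functions from \cref{thm:undirected_wall,thm:4_reg_swirl}. Since the treewidth of a digraph is by definition the treewidth of its underlying undirected graph, $\tw{U(G)} = \tw{G} \geq f_W(N)$, so \cref{thm:undirected_wall} supplies an $N \times N$-wall $W$ as a subgraph of $U(G)$.

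From here I would iteratively reduce the degree of $G$ via splitting-off. For each vertex $v$ with $\deg_G(v) \geq 6$ I split off a pair $(e,e')$ with $e$ entering and $e'$ leaving $v$, greedily choosing pairs with $\{e,e'\} \cap E(W) = \emptyset$ whenever possible. Each such operation preserves Eulerianness, and the resulting digraph immerses in the previous one by \cref{obs:immersion_robust_under_splitting_off}; after finitely many steps I obtain an Eulerian digraph $G'$ of maximum degree $4$ that immerses in $G$.

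The heart of the argument is to show that $\tw{G'}$ is still $\Omega(N)$. Each splitting-off step either leaves $W$ untouched (when both split edges lie outside $W$), or acts as a topological-minor operation on $W$ (when both split edges lie in $W$, producing a shortcut edge replacing the $2$-path $u$--$v$--$w$), or, in the ``bad'' case, deletes a single wall edge and inserts a shortcut foreign to the wall pattern. The last case occurs only at vertices $v$ where every in-edge (or every out-edge) of $v$ already lies in $W$ which, combined with the maximum degree $3$ of $W$ and Eulerianness of $G$, forces $\deg_G(v) = 6$ with a rigid local structure that consumes at most one wall edge per such vertex. A counting argument using that walls have $\Theta(N^2)$ edges then shows that $U(G')$ contains an $\Omega(N) \times \Omega(N)$-wall as topological minor, so $\tw{G'} = \Omega(N)$.

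Choosing $N$ so that $\Omega(N) \geq f_S(k)$, \cref{thm:4_reg_swirl} applied to the Eulerian max-degree-$4$ digraph $G'$ yields a strong immersion $\SSS_{k,k} \hookrightarrow G'$. Composing with the immersion $G' \to G$ obtained by reversing the splitting-offs, i.e.\ sending each shortcut edge $f$ back to the original $2$-path $(e,e')$ in $G$, yields an immersion of $\SSS_{k,k}$ into $G$. This composition is not in general strong, because a vertex $v$ of $G$ at which we split off may reappear as an internal vertex of the image path of a shortcut edge while also being the image of some vertex of $\SSS_{k,k}$, thus violating the strong-immersion condition; this is precisely why the statement only claims a weak immersion. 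The final clause is immediate since $\SSS_{k,k}$ is by definition an Euler-grid. The main obstacle in this plan is the treewidth-preservation argument in the third paragraph, namely quantifying how much of $W$ survives when splitting-off is forced to consume wall edges; everything else is either routine or invokes \cref{thm:undirected_wall,thm:4_reg_swirl,obs:immersion_robust_under_splitting_off} directly.
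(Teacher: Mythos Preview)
Your high-level strategy matches the paper's---reduce to an Eulerian digraph of maximum degree $4$, invoke \cref{thm:4_reg_swirl}, then pull the swirl back to $G$ via \cref{obs:immersion_robust_under_splitting_off}---but the treewidth-preservation step has a genuine gap. Your counting argument does not go through: the ``bad'' vertices (degree-$6$ vertices whose three wall edges are all in-edges or all out-edges) can be as numerous as $\Theta(N^2)$. Indeed, the wall is bipartite, and if the orientation inherited from $G$ sends every wall edge from one colour class to the other, then \emph{every} degree-$3$ wall vertex is bad. Each such vertex forces you to delete a wall edge, so you may lose $\Theta(N^2)$ wall edges; this is enough to destroy the wall (deleting one well-chosen edge per row already drops the treewidth to a constant). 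The claim that ``$U(G')$ contains an $\Omega(N)\times\Omega(N)$-wall as topological minor'' is therefore unsupported.

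The paper sidesteps this by \emph{not} splitting all the way to degree $4$. Instead it fixes a circle cover $\CCC$ of $G$ and splits off only along two-edge subpaths $(e_1,e_2)$ of circles in $\CCC$ with $\{e_1,e_2\}\cap E(W)=\emptyset$. Because every circle through $v$ contributes one in-edge and one out-edge, and at most three circles use a wall edge at $v$ (since $\deg_W(v)\leq 3$), this yields an Eulerian digraph $\epsilon W$ of maximum degree $6$ in which $W$ survives \emph{intact} as a subgraph, so $\tw{\epsilon W}\geq h(k)$ with no loss. To reach degree $4$, the paper then replaces each degree-$6$ vertex by a small $3\times 3$ fence gadget $\FFF_3$; contracting the gadgets recovers $\epsilon W$, so treewidth is preserved. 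After applying \cref{thm:4_reg_swirl} to this gadget graph $\SS$, one must pull the swirl back to $\epsilon W$, and here the hypothesis $k\geq 4$ (which your proposal never uses) is essential: it guarantees the swirl cannot live inside a single $\FFF_3$ and that at most one swirl vertex lands in each gadget, so the routing through gadgets can be replaced by routing through the original degree-$6$ vertex. Only the final passage from $\epsilon W$ back to $G$ may lose strongness, as you correctly note.
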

\begin{proof}
    Let~$h(k) \coloneqq f_\SSS(k)$ and choose~$f(k) \coloneqq f_\WWW(h(k))$ where~$f_\SSS$ is as in \cref{thm:4_reg_swirl} and~$f_\WWW$ is as in \cref{thm:undirected_wall}.
    
    Since~$\tw{G} \geq f_\WWW(h(k))$, $G$ admits an~$h(k)\times h(k)$-wall~$\WWW$ as a subgraph. In particular the subgraph is of maximum degree~$3$. Now let~$\CCC$ be any circle cover of~$G$ and split~$G$ off along~$\CCC$ in direction~$(e_1,e_2) \subset C$ for any~$C \in \CCC$ whenever~$\{e_1,e_2\} \cap E(\WWW) = \emptyset$; refer to the resulting graph as~$\epsilon\WWW$. Clearly~$\Delta(\epsilon\WWW) \leq 6$ for any vertex of~$\WWW$ is part of at most three circles in~$\CCC$ that use edges of~$\WWW$. By construction~$\WWW \subset \epsilon\WWW$ and thus~$\tw{\epsilon \WWW} \geq h(k)$. We continue to deform~$\epsilon\WWW$ into a graph of maximum degree~$4$. To this extent let~$\FFF_3 = (V,E)$ be a~$3\times 3$-fence defined via
    \begin{align*}
        V \coloneqq& \{v_i^j,s_i,t_i \mid 1 \leq i,j \leq 3\},\\
        E \coloneqq& \{(v_i^1,v_i^2),(v_i^2,v_i^3),(v_i^3,v_i^1) \mid 1 \leq i \leq 3\}  \cup\\& \{(v_1^j,v_2^j),(v_2^j,v_3^j) \mid 1 \leq j \leq 3\} \cup \\
         & \{(s_j,v_1^j) \mid 1 \leq j \leq 3\} \cup \{(v_3^j,t_j) \mid 1 \leq j \leq 3\}.
    \end{align*}
    
    Then replacing every degree~$6$ vertex of~$\epsilon\WWW$ by a copy of~$\FFF_3$ and identify the in-neighbors of~$v$ with~$s_1,s_2,s_3$ and the out-neighbors of~$v$ with~$t_1,t_2,t_3$; call the resulting graph~$\SS$.
    Clearly~$\tw{\SS} \geq h(k)$ for we can simply contract the gadgets. 

    \begin{claim}
        $\SSS$ (strongly) immerses a~$k$-swirl.
    \end{claim}
    \begin{claimproof}
        Since~$\Delta(\SSS) = 4$ this follows from \cref{thm:4_reg_swirl} using~$\tw{\SSS} \geq h(k)$ and the definition of~$h$. 
    \end{claimproof}

    \begin{claim}
        $\epsilon\WWW$ (strongly) immerses a~$k$-swirl.
    \end{claim}
    \begin{claimproof}
        This is imminent from the fact that~$k\geq 4$ implies that the swirl cannot be contained in any~$\FFF_3$ gadget and thus whenever the~$k$-swirl uses an in-edge through $s_i$ of an~$\FFF_3$ gadget it must use an out-edge through some $t_j$ too (using Eulerianness) and thus we can replace this path in~$\epsilon\WWW$ by using the degree~$6$ vertex and routing accordingly. Note here that the vertices of the swirl are all of degree exactly four and since at most three paths can enter and leave an~$\FFF_3$ gadget, at most one vertex of the gadget is used in the immersion model of the swirl for otherwise we would need 4 paths to enter and leave the gadget.
    \end{claimproof}
    This concludes the proof (note that we may loose strong immersion by going back).
\end{proof}

Note that \cref{thm:high_tw_implies_high_ebw_grid_version} strengthens the relation given by \cref{thm:qualitative_equivalence_of_tw_and_ebw} with a structural dual. That is, we know that high undirected tree-width does not only imply high carving width, but it moreover implies the existence of an Euler-Grid witnessing high carving width. Unfortunately, as we have seen above, high carving width in general does not imply the existence of an Euler-Grid; but \cref{thm:high_tw_implies_high_ebw_grid_version} implies that it does if we assume bounded degree.

\begin{theorem}%
    Let~$G$ be an Eulerian digraph of degree bounded by~$\Delta \in \N$. There exists a computable function~$f_\Delta:\N \to \N$ such that~$\ebw{G} \geq f_\Delta(k)$ implies that $G$ immerses~$\EEE_{k,k}$.
    \label{thm:grid_theorem_for_etb}
\end{theorem}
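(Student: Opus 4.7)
The plan is to derive this statement as an immediate corollary of two results already established earlier in the paper: Bienstock's qualitative equivalence $\ebw{G} \leq \Delta \cdot \tw{G}$ in \cref{thm:qualitative_equivalence_of_tw_and_ebw}, and the swirl theorem \cref{thm:high_tw_implies_high_ebw_grid_version}, which asserts that sufficiently large undirected treewidth in an Eulerian digraph forces an immersed $k$-swirl. Let $f \colon \N \to \N$ be the function produced by \cref{thm:high_tw_implies_high_ebw_grid_version}, and define $f_\Delta(k) \coloneqq \Delta \cdot f(\max\{k,4\})$, where the $\max$ is only there to absorb the $k \geq 4$ hypothesis of \cref{thm:high_tw_implies_high_ebw_grid_version} (for $k < 4$ a $k \times k$ Euler-grid appears inside any $4 \times 4$ one).

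With this choice, if $\ebw{G} \geq f_\Delta(k)$ then the Bienstock upper bound gives $\tw{G} \geq \ebw{G}/\Delta \geq f(\max\{k,4\})$, so \cref{thm:high_tw_implies_high_ebw_grid_version} applies and produces an immersion (not necessarily strong, but the theorem only asks for immersion) of the swirl $\SSS_{\max\{k,k\},\max\{k,4\}}$ into $G$. Restricting to a $k \times k$ sub-swirl, we obtain $\SSS_{k,k} \preceq G$ with respect to immersion. The second and final step is a purely definitional observation: by \cref{def:swirl}, $\SSS_{k,k}$ is constructed from the $k \times k$ medial grid by the specific orientation in which every $4$-circle rotates the same way with respect to a fixed orientation of the plane, hence $\SSS_{k,k}$ is itself a particular instance of the $k \times k$ Euler-grid $\EEE_{k,k}$ (which by \cref{def:swirl} is defined only up to an arbitrary orientation of each component $4$-circle).

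I do not anticipate any real obstacle here: the proof is essentially a one-line composition of the two cited lemmas with the observation that every swirl is an Euler-grid. The only conceptual point worth spelling out is the interpretation of ``$G$ immerses $\EEE_{k,k}$'': since the Euler-grid is defined only up to the choice of orientations on its component $4$-circles, exhibiting \emph{any} such orientation as an immersed subobject is sufficient, and the swirl provides a canonical one. The use of Bienstock is essential—without the bounded-degree hypothesis, carving width can be much larger than treewidth and one cannot expect such a grid obstruction, which is exactly why the bound $f_\Delta$ must depend on $\Delta$.
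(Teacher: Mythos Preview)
Your proof is correct and follows exactly the same route as the paper's own proof, which simply says the result is immediate from \cref{thm:qualitative_equivalence_of_tw_and_ebw} and \cref{thm:high_tw_implies_high_ebw_grid_version}. You have merely unpacked these two references (and carefully handled the $k\geq 4$ hypothesis of the latter), so there is nothing to add.
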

\begin{proof}
    This is immediate from \cref{thm:qualitative_equivalence_of_tw_and_ebw} and \cref{thm:high_tw_implies_high_ebw_grid_version}.
\end{proof}

As our graphs will mostly only immerse swirls, we define the following.

\begin{definition}[Immersed Swirls]\label{def:immersed_swirl}
    Let~$G$ be Eulerian and~$H \subseteq G$ a subgraph such that there exists a~$(k,n)$-swirl~$\SSS_{k,n}$ and a strong immersion~$\gamma:\SSS_{k,n} \hookrightarrow G$ with~$\gamma(\SSS_{k,n}) = H$, then we call~$H$ an \emph{immersed~$(k,n)$-swirl}.
\end{definition} %

\subsection{Rooted Graphs and Knitworks}
\label{subsec:knitworks}
As so often when dealing with well-quasi-ordering graphs by some containment relation \cite{GMIV,Gee02}, it will be very helpful to switch to a \emph{rooted} setting simplifying inductive reasoning for surface-embeddings by exploiting structural properties of the graph class (and embeddings) at hand.
\subsubsection{Rooted Eulerian Digraphs} 

\begin{definition}\label{def:rooted_graph}
Let $k \in 2\N$. Let~$G$ be an Eulerian digraph and let~$E=\{e_1,\ldots,e_k\} \subseteq E(G)$ be a set of edges. Let $\pi_G(E)=(e_1,\ldots,e_k)$ be an ordering of $E$. Then we call~$\bar{G} \coloneqq (G,\pi_G(E))$ a \emph{rooted Eulerian digraph}. We call each~$e\in E$ a \emph{root edge} and~$k$ the \emph{index} of the rooted digraph.

Let $X \subset V(G)$ induce a $k$-cut $\rho(X) = \{e_1,\ldots,e_k\}$ in $G$ and let~$\pi_G(X)\coloneqq (e_1,\ldots,e_k)$ be an ordering of~$\rho(X)$. Then we write~$\bar{G} \coloneqq (G,\pi_G(X))$ instead of $\bar{G} \coloneqq (G,\pi_G(\rho(X)))$ for the rooted Eulerian digraph highlighting the respective induced cut the root edges are coming from.

Finally, given a rooted digraph~$\bar{G}=(G,\pi_G(E))$ (or $\bar G = (G,\pi_G(X))$ we call~$G$ its \emph{underlying digraph}.
\end{definition}
\begin{remark}
Recall that for Eulerian digraphs, induced cuts always have even order and thus the above is well-defined.
    
If the graphs we are talking about are clear from context we write~$\pi(E)$ and $\pi(X)$ instead of~$\pi_G(E),\pi_G(X)$, and vice-versa we may add the subscript to remove ambiguity.

    Furthermore, we will write $\pi(e) < \pi(e')$ if $e$ precedes $e'$ given some order $\pi$. 
\end{remark}

We will for most of the exposition work with graphs rooted in induced cuts $X$ as induced cuts come with a natural partition into two sides $X$ and $\bar{X}$. However, in \Cref{sec:cylinder,sec:disc,sec:la-grande-inductione} we will need to switch to the more general setting which is why we introduce it here already and try to state most of the results just as general as we need them.

\smallskip

We fix the following. Given sets~$E,E'$, a map~$\gamma:E \to E'$, and an ordering~$\pi=(e_1,\ldots,e_k)$ of elements~$e_i \in E$ for~$k \in \N$, we define~$\gamma(\pi) = (\gamma(e_1),\ldots,\gamma(e_k))$ and thus~$\gamma$ naturally lifts to a map~$\gamma:E^k \to {E'}^k$ by applying~$\gamma$ componentwise. Similarly for a set~$S$ we define~$\gamma(S) \coloneqq \{\gamma(s) \mid s \in S\}$. Given a second ordering $\pi'=(f_1,\ldots,f_\ell)$, with $\{ f_1, \dots, f_{\ell} \} \cap \{ e_1, \dots, e_k \} = \emptyset$,  we write $\pi' \circ \pi \coloneqq (f_1,\ldots,f_\ell,e_1,\ldots,e_k)$ for the \emph{concatenation of orderings} in accordance with our definition of concatenation of paths.%

We lift the definition of isomorphism to rooted graphs in a natural way by taking an isomorphism between the underlying digraphs and additionally requiring it to map roots to roots respecting their order.
\begin{definition}
    Let~$\bar{G}=(G,\pi(E)), \bar{G}'=(G',\pi(E'))$ be rooted Eulerian digraphs. We say that~$\bar{G}$ and~$\bar{G'}$ are \emph{isomorphic} written~$\bar{G} \cong \bar{G}'$ if and only if they share the same index~$k\in 2\N$ and there exists a bijection~$\xi: G \to G'$ such that~$\restr{\xi}{V(G)}: V(G) \to V(G')$ and~$\restr{\xi}{E(G)}: E(G) \to E(G')$ are bijections satisfying the following
    \begin{enumerate}
        \item for~$e=(u,v) \in E(G)$ it holds~$\xi(e) = e' \in E(G')$ if~$e'=(\xi(u),\xi(v))$, and
        \item for~$\pi(E)=(e_1,\ldots,e_k)$ and~$\pi(E')=(e_1',\ldots,e_k')$ it holds~$\xi(e_i) = e_i'$, i.e.,~$\xi(\pi(E)) = \pi(E')$.
    \end{enumerate}
    That is~$\xi(\bar{G}) = \bar{G}'$, and we write~$\xi:(G,\pi(E)) \cong (G',\pi(E'))$.
\end{definition}%

Similarly to the above we lift the definition of immersions to rooted Eulerian digraphs as follows. 

\begin{definition}[Immersion of Rooted Graphs]\label{def:rooted_immersion}
    Let~$\bar{G} = (G,\pi(E))$ and~$\bar{G'}=(G',\pi(E'))$ be rooted Eulerian digraphs. We say that~$\bar{G}'$ \emph{(strongly) immerses}~$G$ if both have the same index~$k \in 2\N$ and additionally there is a map $\gamma: V(G)\cup E(G) \to G'$ satisfying the following:
    \begin{enumerate}
        \item $\gamma$ is a (strong) immersion, and
        \item given the natural bijection~$\eta: E \to E'$ satisfying~$\eta(\pi_G(E)) = \pi_{G'}(E')$, the path~$\gamma(e)$ for~$e \in E$ contains~$e' \in E'$ if and only if~$\eta(e) = e'$. We say that \emph{$\gamma$ respects the order of the roots}.
    \end{enumerate}
    We call~$\gamma$ a \emph{(strong) rooted immersion} or simply a (strong) immersion of~$\bar{G}$ in~$\bar{G'}$ and write~$\gamma: \bar{G} \hookrightarrow \bar{G}'$ for strong immersion and $\gamma: \bar{G} \hookrightarrow^* \bar{G}'$ for general immersion.
\end{definition}
\begin{remark}
    Note that when taking the root edges from induced cuts $X,X'$, then $E=\rho_G(X)$ and $E'=\rho_{G'}(X')$.
\end{remark}

Further we define what we mean by \emph{rooted cuts}; this makes the above discussion about ``keeping track of sides'' when rooting in cuts more apparent.
\begin{definition}[Rooted Cut]
    Let~$\bar{G}= (G,\pi(X))$ be a rooted Eulerian digraph. Let~$Y \subset V(G)$, then we say that $Y$ \emph{induces a rooted cut (in $\bar G$)} if~$X \subseteq Y$. 
\end{definition}
\begin{remark}
    In particular~$\rho(X) \cap E(G[\bar{Y}]) = \emptyset$ but~$\rho(X) \cap \rho(Y)$ may not be empty.
\end{remark}%

\subsubsection{Knitworks} The following definitions are inspired by the definition of \emph{patchworks} due to Robertson and Seymour \cite{GMIV,GMXX, GMXXIII} but are different and of their own interest as they are suited for immersions rather than minors which comes with new and distinct caveats. The intuition for ``knitworks'' comes from the idea that Eulerian digraphs may be viewed as collections of disjoint cycles that have been ``knitted'' together at vertices. We will need \emph{links} to keep track of ``how'' certain parts of the graph can be knitted to the graph.

\begin{definition}[Matchings and Links]\label{def:well-linked_links}
  Let~$E_1,E_2$ be two sets of distinct elements with~$\Abs{E_1} = \Abs{E_2}$. A \emph{matching of $(E_1,E_2)$} is a set~$M\subset E_1 \times E_2$ such that for every element~$e_1 \in E_1$ there is at most one element~$m=(e,e') \in M$ with~$e_1 = e$ and similarly for~$e_2 \in E_2$ there is at most one element~$m=(e,e') \in M$ with~$e_2 = e'$. We call a matching~$M$ \emph{perfect} if for every element~$e_1 \in E_1$ and every element~$e_2 \in E_2$ there exist~$e_1'\in E_1$ and~$e_2' \in E_2$ with~$(e_1,e_1'),(e_2,e_2') \in M$. We denote by~$\operatorname{Match}(E_1,E_2)$ the set of all matchings of~$(E_1,E_2)$.

   Let~$\mathfrak{M} \subseteq \operatorname{Match}(E_1,E_2)$ be a set of matchings, then we call~$\mathfrak{M}$ a \emph{link}. We call $\mathfrak{M}$ \emph{reliable} if for every $M \in \mathfrak{M}$ and every $M' \subseteq M$ it holds $M' \in \mathfrak{M}$. 
  We call~$\mathfrak{M}$ \emph{linkable} if for every~$(e_1,e_2) \in E_1 \times E_2$ there exists~$M \in \mathfrak{M}$ with~$(e_1,e_2) \in M$.

  We call~$\mathfrak{M}$ \emph{well-linked} if for all pairs~$E_1',E_2'$ with~$E_j' \subseteq E_j$ and~$\Abs{E_1'} = \Abs{E_2'}$  and for all partitions~$E_j'= X_j \cup Y_j$ with~$\Abs{X_1} = \Abs{X_2}$ and~$\Abs{Y_1} = \Abs{Y_2}$ for~$j=1,2$ there exists~$M \in \mathfrak{M}$ such that~$M = M_X \cup M_Y$ where~$M_X \in \operatorname{Match}(X_1,X_2)$ and~$M_Y \in \operatorname{Match}(Y_1,Y_2)$ are perfect matchings.

\end{definition}

We will use links and matchings to keep track of ``how'' a strong immersion is allowed to interact with parts of the graph that are to be knitted back in during the induction process. 

\begin{definition}[Matchings of~$\LLL$ and~$\gamma$]\label{def:matching_of_gamma}
    Let~$G$ be a digraph and~$\LLL$ a linkage in~$G$. Let~$v \in V(G)$. We define~$M_\LLL(v) \coloneqq \{(e,e') \mid e,e' \in \rho(v) \text{ and } (e,e') \subseteq L \text{ for some } L\in\LLL \}$.
    
    Let~$G'$ be another digraph and~$\gamma: E(G) \cup V(G) \to G'$ an immersion. Let~$v' \in V(G')$. We define~$M_\gamma(v') \in \operatorname{Match}(\rho^-(v'),\rho^+(v'))$ to be the matching defined via~$(e,e') \in M_\gamma(v')$ if and only if there is~$e^* \in E(G)$ such that~$(e,e') \subseteq \gamma(e^*)$ is a sub-path.
\end{definition}
\begin{remark}
    Clearly~$\gamma$ is strong if, and only if, $M_\gamma(v') = \emptyset$ for every~$v' \in \gamma(V(G))$.
\end{remark}

The following is the main data structure for \cref{sec:bounded-case} and will be of interest for our future work. Since for this paper we will only need to root Eulerian digraphs in \emph{induced} cuts whenever we work with ``knitworks'', we define \emph{$\Omega$-knitworks} accordingly.
\begin{definition}[$\Omega$-Knitworks]\label{def:knitwork}
    Let~$\Omega = (V(\Omega),\preceq)$ be a well-quasi-order. An \emph{$\Omega$-knitwork} is a tuple~$\GGG=(\bar{G},\mu,\m,\Phi)$ such that
    \begin{enumerate}
        \item $\bar{G}=(G,\pi(X))$ is a rooted Eulerian digraph for some $X \subset V(G)$, 
        \item $\mu$ is a function with~$\dom(\mu) \subseteq V(G)$ and~$\mu(v)$ is an ordering of~$\rho(v)$,
        \item $\m$ is a function with~$\dom(\m) \subseteq \dom(\mu)$ and~$\m(v) \subseteq \operatorname{Match}(\rho^-(v),\rho^+(v))$ is a reliable link, and
        \item $\Phi$ is a function with~$\dom(\Phi) \subseteq V(G)$ and~$\Phi(v) \in V(\Omega)$.
    \end{enumerate}
    If for all~$v \in \dom(\m)$ the links~$\m(v)$ are linkable or well-linked we call~$\GGG$ \emph{linkable} or  \emph{well-linked}, respectively.

    Given the rooted Eulerian digraph~$\bar{G}$ we call~$\GGG$ an~\emph{$\Omega$-knitwork for $\bar G$}.
\end{definition}
\begin{remark}
    Note that one may easily extend the definition of $\Omega$-knitworks to Eulerian digraphs rooted in arbitrary sets $E\subseteq E(G)$.
\end{remark}

The notions established for rooted Eulerian digraphs naturally lift to~$\Omega$-knitworks as exemplified by the following.
\begin{definition}
    Let~$\Omega$ be a well-quasi-order and let $\GGG=(\bar{G},\mu,\m,\Phi)$ be an $\Omega$-knitwork. Then~$Y \subset V(G)$ induces a \emph{rooted cut} in~$\GGG$ if and only if it induces a rooted cut in~$\bar{G}$.
\end{definition}

One may abstractly think of an~$\Omega$-knitwork~$(\bar{G},\mu,\m,\Phi)$ as a rooted Eulerian digraph together with a set of vertices~$\dom(\mu)$ which are placeholders marking cuts---namely~$\rho(v)$---where we will later ``knit'' another rooted Eulerian digraph, $(H,\pi(Y))$ say, into that cut by replacing~$v$, identifying~$\rho_G(v)$ with~$Y$ such that~$\mu_G(v)$ and~$\pi_H(Y)$ agree, and so that the ``immersion-type'' of~$(H,\pi(Y))$ agrees with the label~$\Phi(v)$. The map~$\m$ is used to keep track of the ``feasible linkages'' in~$H$ with both ends in $Y$, making sure that the immersion adheres to the restrictions posed by the Eulerian embeddings (two paths cannot ``cross''). With this in mind, the following definitions of \emph{isomorphism} and \emph{immersion of~$\Omega$-knitworks} may strengthen that intuition.

\begin{definition}[Isomorphic Knitworks]\label{def:isomorphism_knitwork}
     Let~$\Omega$ be a well-quasi-order and let~$\GGG=((G,\pi(X)),\mu,\m,\Phi)$ and~$\GGG'=((G',\pi(X')),\mu',\m',\Phi')$ be~$\Omega$-knitworks. Then we say that~$\GGG \cong \GGG'$ if and only if there is a map~$\xi:(G,\pi(X)) \cong (G',\pi(X'))$ satisfying~$\Phi'(\xi(x)) = \Phi(x)$, i.e., the vertices share the same~$\Omega$-label,~$\mu'(\xi(x)) = \xi(\mu(x))$ and~$\m'(\xi(x)) = \xi(\m(x))$. We write~$\xi:\GGG \cong \GGG'$.
\end{definition}

\begin{definition}[$\Omega$-knitwork immersion]\label{def:knitwork_immersion}
    Let~$\Omega=(V(\Omega),\preceq)$ be a well-quasi-order.
    
    Let $(\bar{G},\mu,\m,\Phi)$ and $(\bar{G}',\mu',\m',\Phi')$ be~$\Omega$-knitworks.
    We say that~$(\bar{G}',\mu',\m',\Phi')$ \emph{(strongly) immerses} $(\bar{G},\mu,\m,\Phi)$ if and only if there exists a map~$\gamma: V(G) \cup E(G) \to G'$ satisfying the following
    \begin{enumerate}
        \item[1.] $\gamma$ (strongly) immerses~$\bar{G}$ into~$\bar{G}'$; in particular~$\gamma$ respects the order of the roots,
        \item[2.] for each~$v \in V(G)$,~$\gamma(v) \in \dom(\mu')$ if and only if~$v \in \dom(\mu)$, and~$\gamma(v) \in \dom(\m')$ if and only if~$v \in \dom(\m)$,
        \item[3.] for each~$v \in \dom(\mu)$ the index of~$\mu(v)=(e_1,\ldots,e_k)$ and~$\mu'(\gamma(v))=(e_1',\ldots,e_k')$ agree and is $k \in 2\N$, say, and for~$1 \leq i \leq k$ the edge~$e_i'$ is contained in~$\gamma(e_i)$; we say that $\gamma$ \emph{respects~$\mu$ and~$\mu'$},
        \item[4.] for each~$v' \in \dom(\m') \setminus \gamma(\dom(\m))$,~$M_\gamma(v') \in \m'(v')$, 
        \item[5.] for each~$v \in \dom(\m)$,~$(e_i,e_j) \in \m(v)$ if and only if $(e_i',e_j') \in \m'(\gamma(v))$, where $(e_1,\ldots,e_k)=\mu(v)$ and~$(e_1',\ldots,e_k')=\mu'(\gamma(v))$, and
        \item[6.] for every~$v \in \dom(\Phi)$ we have~$\Phi(v) \preceq \Phi'(\gamma(v))$, in particular~$\gamma(\dom(\Phi)) \subseteq \dom(\Phi')$ .
    \end{enumerate}
    We call the map~$\gamma$ a \emph{(strong) $\Omega$-knitwork immersion} or simply \emph{(strong) immersion} if clear from context, and write~$\gamma:(\bar{G},\mu,\m,\Phi) \hookrightarrow (\bar{G}',\mu',\m',\Phi')$ for strong immersion and $\gamma:(\bar{G},\mu,\m,\Phi) \hookrightarrow^* (\bar{G}',\mu',\m',\Phi')$ for immersion. 
\end{definition}
\begin{remark}
    It is a natural idea to loosen the restriction of (3) and only require the index of~$\mu'(\gamma(v))$ to be at least as large as the index of~$\mu(v)$. Since this is not needed in this exposition, we settled for the above definition as it makes certain arguments easier. In future work however it seems, up until now, unavoidable to loosen (3).
    Condition (4) may intuitively be thought of as a check that, whenever $\gamma$ routes through a vertex $v' \in \dom(\m')$, then $\gamma$ is indeed \emph{realisable} in $v'$, where $v'$ is to be thought of as a placeholder vertex encoding the feasible linkages through the cut $\rho(v')$. Condition (5) may intuitively be thought of as a guarantee that the placeholder vertices and their respective cuts are mapped properly.
\end{remark}

The following is imminent.
\begin{lemma}\label{obs:knitwork_immersion_on_mu}
    Let~$\Omega$ be a well-quasi-order, let~$\GGG = (\bar{G},\mu,\m,\Phi)$ and~$\GGG'=(\bar{G}',\mu',\m',\Phi')$ be~$\Omega$-knitworks and let~$\gamma: \GGG\hookrightarrow^* \GGG'$ be an~$\Omega$-knitwork immersion. Let~$v\in \dom(\mu)$ and let~$(e_1,\ldots,e_k) = \mu(v)$. Further let~$(e_1',\ldots,e_k') = \mu(v')$ for~$v' = \gamma(v)$.
    Then~$e_i' \in \rho^+(v') \iff e_i \in \rho^+(v)$.
\end{lemma}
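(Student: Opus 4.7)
The plan is to leverage condition~3 of \cref{def:knitwork_immersion} together with a short counting argument based on the Eulerianness of $G$ and $G'$ and the edge-disjointness inherent to any immersion. The key point is to force $e_i'$ to be the \emph{first} edge of $\gamma(e_i)$ when $e_i \in \rho^+(v)$ (and the \emph{last} when $e_i \in \rho^-(v)$), from which the stated biconditional is immediate.

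First I would record the following easy facts. Loops at $v$ are excluded from $\rho(v)$, so $\head(e_i) \neq v$ whenever $e_i \in \rho^+(v)$, and since $\restr{\gamma}{V(G)}$ is injective, $\gamma(\head(e_i)) \neq \gamma(v) = v'$; together with the immersion axiom that $\gamma(e_i)$ runs from $\gamma(\tail(e_i))$ to $\gamma(\head(e_i))$, this shows that $P_i \coloneqq \gamma(e_i)$ starts at $v'$ and does not end there when $e_i \in \rho^+(v)$, and symmetrically for $\rho^-(v)$. Thus the first edge of $P_i$ for $e_i \in \rho^+(v)$ lies in $\rho^+(v')$, and edge-disjointness of $\{\gamma(e) : e \in E(G)\}$ makes the assignment $e_i \mapsto \text{first edge of } P_i$ injective from $\rho^+(v)$ to $\rho^+(v')$. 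Eulerianness of $G$ and $G'$, combined with $|\rho(v)| = k = |\rho(v')|$ and the absence of loops from these sets, yields $|\rho^+(v)| = |\rho^-(v)| = k/2 = |\rho^+(v')| = |\rho^-(v')|$, so this injection is in fact a bijection; the analogous bijection holds for $\rho^-$.

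Now the main observation is that every edge of $\rho^+(v')$ is already consumed as the first edge of some $P_i$ with $e_i \in \rho^+(v)$. By edge-disjointness, no such edge can appear again on any $\gamma$-path; in particular, no $P_j$ can revisit $v'$ (a revisit would require using a second edge of $\rho^+(v')$), and no $\gamma(e)$ with $e \notin \rho(v)$ can cross $v'$ at all. Consequently, for $e_i \in \rho^+(v)$ the only edge of $\rho(v')$ lying on $P_i$ is its first edge, which is in $\rho^+(v')$. Condition~3 of \cref{def:knitwork_immersion} gives $e_i' \in E(P_i) \cap \rho(v')$, so $e_i' \in \rho^+(v')$. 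The symmetric argument handles $e_i \in \rho^-(v)$, and the biconditional then follows from $\rho(v) = \rho^+(v) \cup \rho^-(v)$ and $\rho(v') = \rho^+(v') \cup \rho^-(v')$ being disjoint unions. The only conceptual obstacle is that $\gamma$ is allowed to be non-strong, so \emph{a priori} a path $P_i$ could traverse $v'$ as an interior vertex and $e_i'$ could lie on such an interior crossing rather than at the designated end of $P_i$; the counting argument above is precisely what rules this out.
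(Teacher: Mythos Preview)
Your argument is correct, but it takes a longer route than the paper's. The paper argues directly: if $e_i\in\rho^+(v)$ then $\gamma(e_i)$ starts at $v'$, so its first edge lying in $\rho(v')$ is some $e_j'$; but condition~3 of \cref{def:knitwork_immersion} already places $e_j'$ on the path $\gamma(e_j)$, and edge-disjointness of $\gamma(e_i)$ and $\gamma(e_j)$ then forces $j=i$, whence $e_i'\in\rho^+(v')$. No bijection, no cardinality count, and no analysis of interior revisits is needed. In particular, your stated worry that a non-strong $\gamma$ might let $e_i'$ sit on an interior crossing of $v'$ is a non-issue for this direct route: it invokes only edge-disjointness and condition~3, never strongness. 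What your counting argument buys is the stronger fact that each $P_i$ meets $\rho(v')$ in exactly one edge, but that is not required for the lemma. One small technical point affects both your write-up and the paper's: the literal ``first edge of $P_i$'' could be a loop at $v'$ rather than an element of $\rho^+(v')$; the easy fix in either case is to speak of the first edge of $P_i$ that belongs to $\rho(v')$.
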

\begin{proof}
    By \cref{def:knitwork_immersion} of~$\Omega$-knitwork immersion~$\gamma$ respects~$\mu$ and $\mu'$ whence~$e_i' \in \gamma(e_i)$ for all~$1 \leq i \leq k$. Since~$\gamma$ is in particular an immersion of~$G$ into~$G'$, the edge~$e_i=(v,w) \in \rho^+(v)$ for some~$i \in \{1,\ldots,k\}$ is mapped to a path starting in~$v' = \gamma(v)$. If the path~$\gamma(e_i)$ does not start in~$e_i'$ then it must start in some other edge~$f \in N_{G'}(v')$, but clearly~$f$ cannot be a loop since~$e_i \in \rho_{G}(v)$ is not a loop by \cref{obs:cut_at_vertex}; thus~$f = e_j$ for some~$j \in \{1,\ldots,k\}$ with~$i \neq k$. But then~$e_j  \in \gamma(e_i') \cap \gamma(e_j')$; a contradiction to the paths being edge-disjoint using the fact that~$\gamma$ is an immersion of rooted Eulerian digraphs.
\end{proof}

\smallskip

One of our main goals is to prove that certain subclasses of~$\Omega$-knitworks (for example those of bounded carving width which we define and discuss in \cref{sec:bounded-case}) are well-quasi-ordered by~$\Omega$-knitwork immersion. To this extend we need to prove that this newly introduced immersion relation induces a quasi-order on~$\Omega$-knitworks, i.e., it is transitive and reflexive. 

\begin{definition}\label{def:quasi_order_induced_by_knitimm}
    Let~$\Omega=(V(\Omega),\ll)$ be a well-quasi-order. Define a binary relation~$\preceq$ on~$\Omega$-knitworks via~$(\bar{G},\mu,\m,\Phi) \preceq (\bar{G}',\mu',\m',\Phi')$ if and only if there is a strong immersion~$\gamma:  (\bar{G},\mu,\m,\Phi)\hookrightarrow (\bar{G}',\mu',\m',\Phi') $. We call~$\preceq$ the \emph{quasi-order induced by strong~$\Omega$-knitwork immersion}. Similarly define $\preceq^*$ for smore general immersion.

    Let~$\bar{G} = (G,\pi(X))$ and~$\bar{G}'=(G',\pi(X'))$. Then we define~$(\bar{G},\mu,\m,\Phi) \cong (\bar{G}',\mu',\m',\Phi')$ if there exists an isomorphism~$\alpha: \bar{G}' \to \bar{G}$  such that~$((\alpha(\bar{G}'),\mu'\circ\alpha,\m'\circ \alpha,\Phi'\circ\alpha) = (\bar{G},\mu,\m,\Phi)$.
\end{definition}
We next prove that the name ``quasi-order'' is indeed justified.

\begin{lemma}\label{lem:knitwork_imm_is_quasi_order}
    Let~$\Omega=(V(\Omega),\ll)$ be a well-quasi-order and~$\preceq,\preceq^*$ the quasi-orders induced by (strong)~$\Omega$-immersion. Then~$\preceq, \preceq^*$ (for both strong and standard immersion) are quasi-orders on~$\Omega$-knitworks.

\end{lemma}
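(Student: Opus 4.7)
My plan is to show that both $\preceq$ and $\preceq^*$ are reflexive and transitive on $\Omega$-knitworks. Reflexivity is immediate: for an $\Omega$-knitwork $\GGG=(\bar G,\mu,\m,\Phi)$ the identity map on $V(G)\cup E(G)$ will fulfil all six conditions of \cref{def:knitwork_immersion}---conditions (1)--(3) and (5) are obvious, (4) is vacuous since $\dom(\m)\setminus\mathrm{id}(\dom(\m))=\emptyset$, and (6) uses the reflexivity of~$\ll$.

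For transitivity, given $\gamma_1\colon\GGG_1\hookrightarrow\GGG_2$ and $\gamma_2\colon\GGG_2\hookrightarrow\GGG_3$ with $\GGG_i=(\bar G_i,\mu_i,\m_i,\Phi_i)$, I would define $\gamma$ on vertices by $\gamma(v)\coloneqq\gamma_2(\gamma_1(v))$ and on an edge $e\in E(G_1)$ with $\gamma_1(e)=(f_1,\dots,f_m)$ via the concatenation $\gamma(e)\coloneqq\gamma_2(f_1)\circ\cdots\circ\gamma_2(f_m)$; this is a well-defined path by the edge-disjointness of the $\gamma_2(f_j)$ together with \cref{obs:concat_paths}. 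A standard check shows that $\gamma$ is a (strong) rooted immersion of $\bar G_1$ in $\bar G_3$ in the sense of \cref{def:rooted_immersion}. Conditions (2), (3), (5) and (6) of \cref{def:knitwork_immersion} for $\gamma$ then compose directly from the corresponding conditions for $\gamma_1$ and $\gamma_2$, invoking the transitivity of $\ll$ for~(6) and \cref{obs:knitwork_immersion_on_mu} together with~(3) to match up the $\mu$-orderings through the composition.

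The delicate step is condition~(4): for every $v'\in\dom(\m_3)\setminus\gamma(\dom(\m_1))$ one needs $M_\gamma(v')\in\m_3(v')$, and I would split into two cases. \emph{Case~1: $v'\in\gamma_2(V(G_2))$.} Write $v'=\gamma_2(w)$; then (2) for $\gamma_2$ gives $w\in\dom(\m_2)$ and injectivity of $\gamma_2$ on vertices yields $w\notin\gamma_1(\dom(\m_1))$. In the strong case, no path $\gamma_2(f)$ has $v'$ as an internal vertex, so every pair in $M_\gamma(v')$ comes from a junction between two consecutive $\gamma_2(f_j),\gamma_2(f_{j+1})$ with $(f_j,f_{j+1})\subseteq\gamma_1(e^{\ast})$. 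By \cref{obs:knitwork_immersion_on_mu} and~(3) for $\gamma_2$, this pair equals the image of $(f_j,f_{j+1})\in M_{\gamma_1}(w)$ under the natural bijection $\mu_2(w)\leftrightarrow\mu_3(v')$, so $M_\gamma(v')$ is exactly the image of $M_{\gamma_1}(w)$; (4) for $\gamma_1$ yields $M_{\gamma_1}(w)\in\m_2(w)$, and (5) for $\gamma_2$ transfers this into $\m_3(v')$. \emph{Case~2: $v'\notin\gamma_2(V(G_2))$.} Then $v'\in\dom(\m_3)\setminus\gamma_2(\dom(\m_2))$, so (4) for $\gamma_2$ yields $M_{\gamma_2}(v')\in\m_3(v')$. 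Since no junctions can occur at $v'$ (junctions sit at $\gamma_2$-images of vertices), every pair in $M_\gamma(v')$ is a sub-path of a single $\gamma_2(f)$, hence $M_\gamma(v')\subseteq M_{\gamma_2}(v')$, and reliability of $\m_3$ closes the case.

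The weak relation $\preceq^*$ is handled by the same strategy, but Case~1 needs additional care: when $\gamma_2$ is not strong, $M_\gamma(v')$ may pick up extra \emph{internal} pairs coming from $v'$ appearing internally on some $\gamma_2(f)$ with $f\in\gamma_1(E(G_1))$. Absorbing these extra pairs into the $\m_3(v')$ containing the junction-image matching requires a finer analysis leveraging both the reliability of $\m_3$ and the realisability encoded by condition~(5) for $\gamma_2$ at $w$. This book-keeping of junction-type versus internal-type contributions to $M_\gamma$ is the real technical obstacle of the proof, and is precisely why reliability is hard-coded into the very definition of an $\Omega$-knitwork.
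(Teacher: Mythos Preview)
Your proof of reflexivity and of transitivity for the strong relation $\preceq$ is correct and follows the same line as the paper's: define $\gamma$ by composition, verify conditions~(1)--(6) one by one, with the crux being condition~(4), which you correctly split into the two cases $v'\in\gamma_2(V(G_2))$ and $v'\notin\gamma_2(V(G_2))$. Your Case~1 is in fact more detailed than the paper's terse ``the claim follows by~(5) applied to $\gamma_2$''; your explicit identification of $M_\gamma(v')$ with the $\mu$-image of $M_{\gamma_1}(w)$ is exactly the step the paper leaves implicit.

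Your final paragraph, however, overstates the difficulty of the weak case. The scenario you worry about---$v'$ occurring internally on some $\gamma_2(f)$---cannot happen when $v'=\gamma_2(w)$ with $w\in\dom(\m_2)\subseteq\dom(\mu_2)$. Condition~(3) forces $\mu_2(w)$ and $\mu_3(v')$ to have the same length, so $|\rho(w)|=|\rho(v')|$, and each edge $g_i'\in\rho(v')$ lies on the path $\gamma_2(g_i)$ for the corresponding $g_i\in\rho(w)$. By edge-disjointness of the $\gamma_2$-images, no path $\gamma_2(f)$ with $f\notin\rho(w)$ contains any edge of $\rho(v')$, hence such a path cannot visit $v'$ at all; and for $f=g_i\in\rho(w)$ the path $\gamma_2(g_i)$ contains only the single edge $g_i'$ of $\rho(v')$, so it meets $v'$ only as an endpoint. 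Thus there are no ``internal pairs'' even in the weak case, $M_\gamma(v')$ is still exactly the image of $M_{\gamma_1}(w)$, and your Case~1 argument goes through verbatim. This is why the paper dismisses $\preceq^*$ as analogous in one line; reliability is needed only for your Case~2.
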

\begin{proof}
    We provide a proof fro~$\preceq$ since the proof for general immersion is analogous.
    The relation is clearly reflexive via the identity map. We continue with transitivity; to this extent let~$(\bar{G}_i,\mu_i,\m_i,\Phi_i)$ be~$\Omega$-knitworks with~$\bar{G}_i = (G_i,\pi(X_i))$ for~$i \in \{1,2,3\}$ such that~$\gamma_1: (\bar{G}_1,\mu_1,\m_1,\Phi_1) \hookrightarrow (\bar{G}_2,\mu_2,\m_2,\Phi_2)$ and~$\gamma_2: (\bar{G}_2,\mu_2,\m_2,\Phi_2) \hookrightarrow (\bar{G}_3,\mu_3,\m_3,\Phi_3)$ by strong~$\Omega$-knitwork immersion. In particular~$\gamma$ is a strong rooted immersion by 1. of \cref{def:knitwork_immersion} and thus~$\Abs{\rho(X_i)} = k \in 2\N$ for all~$i \in \{1,2,3\}$.
    Let~$\gamma: V(G_1) \cup E(G_1) \to G_3$ be defined via~$\gamma = \gamma_2 \circ \gamma_1$. 
    \begin{claim}
        $\gamma$ immerses~$\bar{G}_1$ into~$\bar{G}_3$.
    \end{claim}
    \begin{claimproof}
        Since~$\gamma_1,\gamma_2$ are strong immersions on the underlying unrooted Eulerian digraphs, using the transitivity for strong immersion we have that~$\gamma: G_1 \hookrightarrow G_3$ is a strong immersion. It remains to show that~$\gamma$ respects the order of the roots. To this extend let~$\pi(X_i)=(e_1^i,\ldots,e_k^i)$; by 1. of \cref{def:knitwork_immersion} and \cref{def:rooted_immersion} we derive that~$e_i^2 \in \gamma_1(e_i^1)$ and~$e_i^3 \in \gamma_2(e_i^2)$ for~$1 \leq i \leq k$. By definition of~$\gamma$ together with \cref{obs:immersion_maps_path_to_path} this implies that~$e_i^3 \in \gamma(e_i^1)$. Clearly no~$e_j^3 \in \gamma(e_i^1)$ for~$j \neq i$ since~$E(\gamma(e_j^3)) \cap E(\gamma(e_i^3)) = \emptyset$ using the fact that~$\gamma$ is an immersion on the underlying Eulerian digraphs. 
    \end{claimproof}
 
    \begin{claim}
        For each~$v_1 \in V(G_1)$,~$\gamma(v_1) \in \dom(\mu_3)$ if and only if~$v_1 \in \dom(\mu_1)$, as well as $\gamma(v_1) \in \dom(\m_3)$ if and only if $v_1 \in \dom(\m_1)$.
    \end{claim}
    \begin{claimproof}
        For each~$v_1 \in V(G_1)$ we know by 2. of \cref{def:knitwork_immersion} that~$v_2 \coloneqq \gamma_1(v_1) \in \dom(\mu_2)$ if and only if~$v_1 \in \dom(\mu_1)$. Then again by the same reasoning~$\gamma_2(v_2) \in \dom(\mu_3)$ if and only if~$v_2 \in \dom(\mu_2)$ concluding the proof of the first part of the claim; the second part is analogous.
    \end{claimproof}

    \begin{claim}
        For each~$v_1 \in \dom(\mu_1)$ the index of~$\mu_1(v_1)=(f_1^1,\ldots,f_\ell^1)$ and~$\mu_3(\gamma(v_1))=(f_1^3,\ldots,f_{\ell'}^3)$ agree, i.e.,~$\ell = \ell'$ where $\ell\in 2\N$, and for every~$1 \leq i \leq \ell$ the edge~$f_i^3$ is contained in~$\gamma(f_i^1)$.
    \end{claim}
    \begin{claimproof}
        Using 3. of \cref{def:knitwork_immersion} for~$\gamma_1$ and~$\gamma_2$ we derive that~$\ell = \ell'$ by transitivity of equality, and by eulerianness $\ell \in 2\N$. Further, 3. of \cref{def:knitwork_immersion} implies that for~$1 \leq i \leq \ell$ the edge~$f_i^2$ is contained in~$\gamma_1(f_i^1)$ where~$\mu_2(\gamma_1(v_1)) = (f_1^2,\ldots,f_\ell^2)$. Now~$P_i^2\coloneqq \gamma_1(f_i^1)$ is a path in~$G_2$ containing the edge~$f_i^2$. By \cref{obs:immersion_maps_path_to_path}~$\gamma_2(P_i^2)$ is a path in~$G_3$ and since~$f_i^2 \in E(P_i^2)$ it follows from 3. of \cref{def:knitwork_immersion} that~$f_i^2 \in \gamma_2(P_i^2) = \gamma_2(\gamma_1(e_i^1)) = \gamma(e_i^1)$. This concludes the proof.
    \end{claimproof}

      \begin{claim}
        For each~$v_3 \in \dom(\m_3)\setminus \gamma(\dom(\m_1))$, $M_\gamma(v_3) \in \m_3(v_3)$.
    \end{claim}
    \begin{claimproof}
        Since~$v_3 \notin \gamma(\dom(\m_1))$ there are two cases to consider. Again let~$(f_1^3,\ldots,f_\ell^3) = \mu_3(v_3)$.

        Assume first that there is~$v_2 \in \dom(\m_2)$ with~$\gamma_2(v_2) = v_3$, which by the assumption of the claim implies~$v_2 \notin \gamma_1(\dom(\m_1))$ and thus~$v_2 \notin \gamma_1(V(G))$ by 2. of \cref{def:knitwork_immersion} applied to $\gamma_1$. By 4. of \cref{def:knitwork_immersion} we derive~$M_{\gamma_1}(v_2) \in \m_2(v_2)$. The claim follows by 5. of \cref{def:knitwork_immersion} applied to~$\gamma_2$.

        Next assume that there is no such~$v_2 \in \dom(\m_2)$, whence~$M_{\gamma_2}(v_3) \in \m_3(v_3)$ by 4. of \cref{def:knitwork_immersion} applied to~$\gamma_2$. Let~$E_2 \subseteq E(G_2)$ be the set of edges for which~$\LLL^2_3=\{\gamma_2(e) \mid e \in E_2\}$ witnesses~$M_{\LLL^2_3}(v_3) = M_{\gamma_2}(v_3)$.
        Let~$E_1 \subseteq E(G_1)$ be the minimal set of edges maximising~$E_2 \cap E(\gamma_1(E_1))$. Since~$\gamma:G_1 \hookrightarrow G_3$
       is an immersion it follows that~$\LLL^1_3 = \{\gamma(e) \mid e \in E_1\}$ is a linkage in~$G_3$ such that if~$(f_i^3,f_j^3) \subset L$ is a subpath of some~$L \in \LLL^1_3$ then~$(f_i^3,f_j^3)$ is a subpath of some~$L \in \LLL^2_3$ for any~$1 \leq i,j \leq \ell$. The claim now follows by definition of~$M_\gamma(v_3)$ and the fact that all the links in $\m_3$ are reliable by \cref{def:knitwork}.
    \end{claimproof}

    \begin{claim}
        For each~$v_1 \in \dom(\m_1)$ with $\mu_1(v_1)=(f_1^1,\ldots,f_\ell^1)$ and~$\mu_3(\gamma(v_1))=(f_1^3,\ldots,f_{\ell'}^3)$,~$(f_i^1,f_j^1) \in \m_1(v_1)$ if and only if~$(f_i^3,f_j^3) \in \m_3(\gamma(v_1))$ for all~$1\leq i,j \leq \ell$.
    \end{claim}
    \begin{claimproof}
        This follows at once by applying 5. of \cref{def:knitwork_immersion} once for~$\gamma_1$ and once for~$\gamma_2$, i.e.,~$(f_i^1,f_j^1) \in \m(v_1) \iff (f_i^2,f_j^2) \in \m(\gamma_1(v_1)) \iff (f_i^3,f_j^3) \in \m(\gamma_2(\gamma_1(v_1)))$.
    \end{claimproof}

    \begin{claim}
        For every~$v_1 \in V(G_1)$ we have~$\Phi(v_1) \ll \Phi_3(\gamma(v_1))$.
    \end{claim}
    \begin{claimproof}
        This follows at once from 6. of \cref{def:knitwork_immersion} for~$\gamma_1,\gamma_2$ and the fact that~$\ll$ is transitive, i.e.,~$\Phi(v_1) \ll \Phi(\gamma_1(v_1)) \ll \Phi(\gamma_2(\gamma_1(v_1))) = \Phi(\gamma(v))$.
    \end{claimproof}
    All in all we verified all of the conditions for \cref{def:knitwork_immersion} concluding the proof for the transitivity (and thus the proof that~$\Omega$-knitwork immersion induces a quasi-order on~$\Omega$-knitworks).
\end{proof}

Next, we deal with two crucial operations that we call \emph{stitching} and \emph{knitting}. To this extent recall that given~$Y \subset V(G)$ we defined~$\bar{Y} = V(G) \setminus Y$. Concisely, given a rooted cut~$Y$ we define the up- and down-stitch to be the graphs obtained by contracting~$\bar{Y}$ and~$Y$ into a single vertex~$y^*$ and~$y_*$ respectively. The following is the formal definition.

\begin{definition}[Stitching]\label{def:stitching_std}
      Let~$\bar{G}=(G,\pi(X))$ be a rooted Eulerian digraph where~$G=(V,E,\operatorname{inc})$. Let~$\pi(X) \coloneqq (f_1,\ldots,f_\ell)$ for~$\ell \in 2\N$. Let $k\in 2\N$ and~$Y \subset V(G)$ induce a rooted~$k$-cut in~$G$. Let~$ \pi(Y) = (e_1,\ldots,e_k) = \pi(\bar{Y}) $ be an ordering of~$\rho(Y)$ and~$\rho(\bar Y)$. Let~$y_\ast,y^\ast$ be two new elements that are not part of~$V(G) \cup E(G)$. We define~$G_Y \coloneqq (V_Y,E_Y,\operatorname{inc}_Y)$ and~$\pi_Y$ via~
    \begin{align*}
        &V_Y \coloneqq Y \cup \{y_\ast\}\\
        & E_Y \coloneqq E(G[Y]) \cup \{e_1,\ldots,e_k\},\\
        &(e,v) \in \operatorname{inc}_Y :\iff \begin{cases} 
  (e,v) \in \operatorname{inc} \text{ and } e \in E_Y, v\in Y,\\
   e \in \rho^+(Y) \text{ and } v=y_\ast\\
        \end{cases}
        &(v,e) \in \operatorname{inc}_Y :\iff \begin{cases} 
  (v,e) \in \operatorname{inc} \text{ and } e \in E_Y, v\in Y,\\
  e \in \rho^-(Y) \text{ and } v=y_\ast\\
        \end{cases}\\
        & \pi_Y = (f_1,\ldots,f_\ell)\\     
    \end{align*} 
    We define~$\stitch(\bar{G};\pi(Y)) \coloneqq (G_Y,\pi_Y)$ and say that~\emph{$G_Y$ is obtained from~$G$ by down-stitching~$Y$} and call~$y_\ast$ the \emph{down-stitch vertex resulting from~$Y$}. 

    Similarly we define~$G^Y \coloneqq (V^Y,E^Y,\operatorname{inc}^Y)$ and~$\pi^Y$ via~
    \begin{align*}
        &V^Y \coloneqq \bar{Y} \cup \{y^\ast\}\\
        & E^Y \coloneqq E(G[\bar{Y}]) \cup \{e_1,\ldots,e_k\}, \\
         &(e,v) \in \operatorname{inc}^Y :\iff \begin{cases} 
  (e,v) \in \operatorname{inc}, \text{ and } e \in E^Y, v\in \bar{Y},\\
   e \in \rho^+(\bar{Y}) \text{ and } v=y^\ast\\
        \end{cases}
        &(v,e) \in \operatorname{inc}^Y :\iff \begin{cases} 
  (v,e) \in \operatorname{inc} \text{ and }  e \in E^Y, v\in \bar{Y},\\
   e \in \rho^-(\bar{Y}) \text{ and } v=y^\ast \\
        \end{cases}\\
    & \pi^Y = (e_1,\ldots,e_k)
    \end{align*} 
    We define~$\stitch(\bar{G};\pi(\bar{Y})) \coloneqq (G^Y,\pi^Y)$ and say that~\emph{$G^Y$ is obtained from~$G$ by up-stitching~$Y$} and call~$y^\ast$ the \emph{up-stitch vertex resulting from~$Y$}. 
\end{definition}

The following relations are readily extracted from the definition.
\begin{observation}\label{obs:stitching_fundamentals}
     Let~$\bar{G}=(G,\pi(X))$ and let~$Y$ induce a rooted cut with a fixed ordering~$\pi(Y) =\pi(\bar{Y})$. Let~$(G_Y,\pi_Y) = \stitch(\bar{G};\pi(Y))$ and~$(G^Y,\pi^Y) = \stitch(\bar{G};\pi(\bar{Y}))$ be the down- and up-stitches of~$G$ at~$Y$ with down- and up-stitch vertices~$y_\ast,y^\ast$ respectively. Then
     \begin{enumerate}
         \item $(G_Y,\pi_{G_Y}(X))$ and~$(G^Y,\pi_{G^Y}(y^*))$ are well-defined rooted Eulerian digraphs where
         \item $Y \subset V(G_Y)$ and~$\bar{Y} \subset V(G^Y)$ with~$G_Y[Y] = G[Y]$ and~$G^Y[\bar{Y}] = G[\bar{Y}]$, and
         \item $\rho_{G_Y}(y_\star) = \rho_{G_Y}(Y) = \rho_G(Y)$, and~$\rho_{G_Y}(X) = \rho_G(X)$, whence $\pi_{G_Y}(X) \coloneqq \pi_Y =\pi_G(X) $ is well-defined , and
         \item $\rho_{G^Y}(\bar{Y}) = \rho_{G^Y}(y^\ast) = \rho_G(\bar{Y})$, whence~$\pi^Y \eqqcolon \pi_{G^Y}(y^*) =\pi_{G^Y}(\bar{Y}) = \pi_G(\bar{Y})$ is well-defined.
     \end{enumerate} 
\end{observation}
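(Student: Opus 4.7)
The plan is to verify each of the four assertions by directly unwinding Definition~\ref{def:stitching_std}. A crucial simplification is that $G^Y$ is obtained from the same construction as $G_Y$ after swapping the roles of $Y$ and $\bar Y$, so it suffices to carry out the verification for $G_Y$ in full and then transfer the conclusions to $G^Y$ verbatim. The plan then proceeds in the order of the items: incidence/Eulerianness first, then the induced-subgraph identity, then the cut identities and the well-definedness of the orderings.

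For item (1), I would first check that $G_Y$ is a well-defined digraph, that is, every $e \in E_Y$ possesses exactly one incidence of type $(v,e)$ and one of type $(e,v)$ in $\operatorname{inc}_Y$. A three-way case split accomplishes this: if $e \in E(G[Y])$ both of its original incidences from $\operatorname{inc}$ are kept; if $e \in \rho^+(Y)$, its tail (which lies in $Y$) is preserved while its head is re-pointed to $y_\ast$; and dually if $e \in \rho^-(Y)$. To see that $G_Y$ is Eulerian I would check each vertex: for $v \in Y$, every edge of $E(v)$ in $G$ survives in $G_Y$ and retains its incidence with $v$, so $\operatorname{indeg}_{G_Y}(v) = \operatorname{indeg}_G(v) = \operatorname{outdeg}_G(v) = \operatorname{outdeg}_{G_Y}(v)$; for $y_\ast$ the incoming edges are exactly $\rho^+(Y)$ and the outgoing edges are exactly $\rho^-(Y)$, and Eulerianness of $G$ gives $|\rho^+(Y)| = |\rho^-(Y)|$ by the balance observation recalled just before \cref{lem:Menger_for_Euler}. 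Item (2) is then immediate for $G_Y$: $V(G_Y) = Y \cup \{y_\ast\}$ contains $Y$, and the restriction of $\operatorname{inc}_Y$ to edges with both endpoints in $Y$ coincides with the restriction of $\operatorname{inc}$, so $G_Y[Y] = G[Y]$; the analogous statement for $G^Y$ follows by symmetry.

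For items (3) and (4), I would unpack the cut definitions. Since $Y$ induces a \emph{rooted} cut we have $X \subseteq Y$; hence every edge in $\rho_G(X)$ has at least one endpoint in $Y$ and therefore belongs to $E_Y$. A split on whether such an edge lies in $G[Y]$ or in $\rho(Y)$ shows that its membership in $\rho_{G_Y}(X)$ coincides with its membership in $\rho_G(X)$: in the first case nothing has changed, and in the second case the re-routed endpoint is $y_\ast \notin X$, matching the original $\bar Y$-endpoint that was likewise outside $X$. This yields $\rho_{G_Y}(X) = \rho_G(X)$, which in turn makes the assignment $\pi_{G_Y}(X) \coloneqq \pi_Y = \pi_G(X)$ well-defined as an ordering of $\rho_{G_Y}(X)$. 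The identity $\rho_{G_Y}(y_\ast) = \rho_{G_Y}(Y)$ holds because the only edges incident to $y_\ast$ are those of $\rho(Y)$, and these are precisely the edges crossing between $Y$ and $V(G_Y) \setminus Y = \{y_\ast\}$. The mirror argument applied to $G^Y$ yields item (4).

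There is no genuine obstacle in this verification: once the three-case incidence split is done, everything reduces to bookkeeping. The only step going beyond a purely formal manipulation of the incidence relation is the Eulerianness check at $y_\ast$, which rests on the balance $|\rho^+(Y)| = |\rho^-(Y)|$ for induced cuts in Eulerian digraphs, recorded in the preliminaries.
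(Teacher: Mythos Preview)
Your proposal is correct and matches the paper's approach: the paper simply states that these relations are ``readily extracted from the definition'' and provides no further argument, so your detailed unwinding of Definition~\ref{def:stitching_std} is exactly the verification the paper leaves implicit.
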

\begin{remark}
    Although technically the graphs obtained by stitching are new graphs, and the incidences of edges vary due to introducing $y_\ast,y^\ast$, the edges themselves as objects remain the same. This way we may indeed write~$\rho_G(\bar{Y}) = \rho_{G^Y}(y^*)$ instead of carrying a bijection between these sets which turns out more tedious and annoying than insightful.
\end{remark}

Using \cref{obs:stitching_fundamentals} we will unambiguously set~$(G_{Y},\pi_{G_Y}(X)) = \stitch((G,\pi_G(X)),\pi(Y))$ as well as~$(G^{Y},\pi_{G^Y}(y^*)) = \stitch((G,\pi_G(X)),\pi(\bar{Y}))$ to be the rooted digraphs resulting by taking up- and down-stitches. Since~$y^* = \bar{Y}$ in~$G^Y$ we may equally well write $(G^{Y},\pi_{G^Y}(\bar{Y})) = \stitch((G,\pi_G(X)),\pi(\bar{Y}))$.

We extend the definition of stitching to~$\Omega$-knitworks as follows.

\begin{definition}[Types of feasible linkages]\label{def:types_of_linkages_on_a_cut}
    Let~$G$ be a rooted Eulerian digraph and~$X \subset V(G)$ induce a~$k$-cut for some $k\in 2\N$. Let~$\{e_1,\ldots,e_k\} = \rho(X)$.  Let~$\LLL(\rho(X))$ be the set of all~$(\rho^-(X),\rho^+(X))$-linkages~$\LLL$ such that for very $P \in \LLL$ it holds $V(P) \subseteq X$. For~$\LLL \in \LLL(\rho(X))$ we define~$\mathfrak{M}(\rho(X)) \coloneqq \{ \tau(\LLL) \mid \LLL \in \LLL(\rho(X))\}$ to be the \emph{set of feasible~$\rho(X)$-types}.
\end{definition}
\begin{remark}
    Recall that $V(P) \subseteq X$ implies that all the paths in $\LLL$ use solely vertices in $X$ except for their endpoints which may be part of $\bar X$.

    Note that while~$\rho(X) = \rho(\bar{X})$,~$\mathfrak{M}(\rho(X))$ and~$\mathfrak{M}(\rho(\bar X))$ may differ. We choose this notation to emphasize the importance of the cut-edges. Note further that for $M \in \mathfrak{M}(\rho(X))$ if holds$ M \subseteq \operatorname{Match}(\rho^-(X),\rho^+(X))$ and for every $M' \subseteq M$ it holds $M' \in \mathfrak{M}(\rho(X))$ by definition; in particular $\mathfrak{M}(\rho(X))$ is a reliable link by definition.
\end{remark}

\begin{definition}\label{def:stitching_knitwork}
   Let~$\Omega$ be a well-quasi-order, $k \in 2\N$, and let~$\GGG = (\bar{G},\mu,\m,\Phi)$ be an~$\Omega$-knitwork. Let~$Y \subset V(G)$ induce a rooted~$k$-cut in~$\bar{G}=(G,\pi(X))$. Let~$\pi(Y) = (e_1,\ldots,e_k) = \pi(\bar{Y})$ be an ordering of~$\rho(Y)=\rho(\bar{Y})$. Then we define $\stitch(\GGG;\pi(Y)) \coloneqq ((G_Y,\pi(X),\mu_Y,\m_Y,\Phi_Y)$ as follows:
   \begin{align*}
       &(G_Y,\pi(X)) \coloneqq \stitch(\bar{G};\pi(Y)) \text{ with down-stitch vertex } y_\ast\\
       &\mu_Y(v) \coloneqq \begin{cases}
           \mu(v), &\ v\in Y\\
           (e_1,\ldots,e_k), &v = y_\ast\end{cases},\\
        &\m_Y(v)\coloneqq \begin{cases}
           \m(v), &\ v\in Y\\
           \mathfrak{M}(\rho(\bar{Y})), &v = y_\ast\end{cases}, \text{ and}\\
       & \Phi_Y(v) \coloneqq \Phi(v) \text{ for } v\in Y,
   \end{align*}
   and we do not specify~$\Phi_Y$ on~$y_\ast$.

    We further define $\stitch(\GGG;\pi(\bar{Y})) \coloneqq ((G^Y,\pi(\bar{Y}),\mu^Y,\Phi^Y)$ as follows:
   \begin{align*}
       &(G^Y,\pi(\bar{Y})) \coloneqq \stitch(\bar{G};\bar{Y}) \text{ with up-stitch vertex } y^\ast\\
      &\mu^Y(v) \coloneqq \begin{cases}
           \mu(v), &\ v\in \bar Y\\
           (e_1,\ldots,e_k), &v = y^\ast\end{cases},\\
      &\m^Y(v)\coloneqq \m(v), \text{ for }\ v\in \bar Y, \text{ and}\\
       & \Phi^Y(v) \coloneqq \Phi(v) \text{ for } v\in \bar{Y},
   \end{align*}
   where~$\pi(\bar{Y}) = (e_1,\ldots,e_k)$ and we do not specify~$\m^Y$ and $\Phi^Y$ on~$y^\ast$. 
\end{definition}
\begin{remark}
    Note that setting~$\mu^Y(y^\ast) =(e_1,\ldots,e_k)$ is technically not necessary since the order is already taken care off by the newly introduced roots but it makes intuitive sense which is why we keep it. 
\end{remark}

Using the above definition, the fact that $\mathfrak{M}(\rho(\bar Y))$ is a reliable link, and \cref{obs:stitching_fundamentals}, the following is straightforward.

\begin{observation}\label{obs:stitching_fundamentals_knitworks}
     Let~$\Omega$ be a well-quasi-order and let~$\GGG = (\bar{G},\mu,\m,\Phi)$ be an~$\Omega$-knitwork. Let~$Y \subset V(G)$ induce a rooted~$k$-cut in~$\bar{G}$. Let~$\pi(Y) = (e_1,\ldots,e_k) = \pi(\bar{Y})$ be an ordering of~$\rho(Y)=\rho(\bar{Y})$. Let $(\bar G_Y,\mu_Y,\m_Y,\Phi_Y) = \stitch(\GGG;\pi(Y))$ as well as~ $(\bar G^Y,\mu^Y,\m^Y,\Phi^Y) = \stitch(\GGG;\pi(\bar{Y}))$. Then
     \begin{enumerate}
     \item  $\stitch(\GGG,\pi(Y))$ and~$\stitch(\GGG,\pi(\bar{Y}))$  are well-defined~$\Omega$-knitworks,
         \item $\mu(x) = \mu_Y(x)$ for all~$x \in Y$ and~$\mu(x) = \mu^Y(x)$ for all~$x \in \bar{Y}$, and moreover~$\mu_Y(y_\ast) =\pi(Y) = \pi(\bar Y )=  \mu^Y(y^\ast)$ 
        \item $\m(x) = \m_Y(x)$ for all~$x \in Y$ and~$\m(x) = \m^Y(x)$ for all~$x \in \bar{Y}$, and
         \item $\Phi(x) = \Phi_Y(x)$ for all~$x \in Y$ and~$\Phi(x) = \Phi^Y(x)$ for all~$x \in \bar{Y}$.
     \end{enumerate}
\end{observation}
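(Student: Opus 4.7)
The plan is essentially to unwind the definitions and invoke \cref{obs:stitching_fundamentals} for the underlying rooted Eulerian digraph, then verify each of the four components of an $\Omega$-knitwork (cf.~\cref{def:knitwork}) on both stitches.

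First I would apply \cref{obs:stitching_fundamentals} directly to obtain that $(G_Y,\pi_{G_Y}(X))$ and $(G^Y,\pi_{G^Y}(\bar Y))$ are well-defined rooted Eulerian digraphs, and moreover $\rho_{G_Y}(y_\ast) = \rho_G(Y)$ and $\rho_{G^Y}(y^\ast) = \rho_G(\bar Y)$. This immediately gives that setting $\mu_Y(y_\ast) \coloneqq (e_1,\ldots,e_k)$ is an ordering of $\rho_{G_Y}(y_\ast)$, and likewise $\mu^Y(y^\ast) \coloneqq (e_1,\ldots,e_k)$ is an ordering of $\rho_{G^Y}(y^\ast)$; for all other vertices $v\in Y$ (respectively $v\in\bar Y$) we have $\rho_{G_Y}(v)=\rho_G(v)$ (respectively $\rho_{G^Y}(v)=\rho_G(v)$) because $G_Y[Y]=G[Y]$ and all incidences of $v$ with edges in $\rho_G(v)\setminus E(G[Y])$ are preserved by $\operatorname{inc}_Y$ (via $y_\ast$). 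Thus $\mu_Y, \mu^Y$ are proper $\mu$-functions in the sense of \cref{def:knitwork}, giving items (1)--(2) for the $\mu$-components (and directly item (2) of the observation).

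Next I would verify the $\m$-component. For $v \in Y \cap \dom(\m)$ we set $\m_Y(v) \coloneqq \m(v)$, which is a reliable link in $\operatorname{Match}(\rho_G^-(v),\rho_G^+(v))= \operatorname{Match}(\rho_{G_Y}^-(v),\rho_{G_Y}^+(v))$ by the equality of local cuts noted above. The only delicate case is $v=y_\ast$: here one has to check that $\mathfrak{M}(\rho(\bar Y))$ is a reliable subset of $\operatorname{Match}(\rho_{G_Y}^-(y_\ast),\rho_{G_Y}^+(y_\ast))$. Unwinding \cref{def:stitching_std}, an edge $e\in\rho_G^+(Y)=\rho_G^-(\bar Y)$ has $y_\ast$ as head in $G_Y$, so $\rho_{G_Y}^-(y_\ast)=\rho_G^-(\bar Y)$, and symmetrically $\rho_{G_Y}^+(y_\ast)=\rho_G^+(\bar Y)$; thus $\mathfrak{M}(\rho(\bar Y)) \subseteq \operatorname{Match}(\rho_{G_Y}^-(y_\ast),\rho_{G_Y}^+(y_\ast))$. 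Reliability of $\mathfrak{M}(\rho(\bar Y))$ is recorded in the remark following \cref{def:types_of_linkages_on_a_cut}: any sub-matching of $\tau(\LLL)$ is the type of a sub-linkage of $\LLL$, which still lies entirely in $\bar Y$. For the up-stitch the argument is entirely analogous and simpler, since $\m^Y$ is not defined at $y^\ast$. This establishes item~(3).

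For $\Phi$ there is nothing to do: $\Phi_Y$ and $\Phi^Y$ are defined as restrictions of $\Phi$ to $Y$ and $\bar Y$ respectively and take values in $V(\Omega)$, so they are valid labelling functions and item~(4) is immediate. Combining the above with \cref{obs:stitching_fundamentals} yields item~(1), and items (2), (3), (4) of the observation are built into the construction. I do not anticipate any genuine obstacle here; the only mildly subtle point is matching the polarities $\rho^\pm(y_\ast)=\rho^\mp(Y)=\rho^\pm(\bar Y)$ correctly so that the link $\mathfrak{M}(\rho(\bar Y))$ actually lives in the right matching space at $y_\ast$, and this is handled by careful inspection of the incidence relation $\operatorname{inc}_Y$ in \cref{def:stitching_std}.
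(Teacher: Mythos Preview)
Your proposal is correct and follows exactly the approach the paper takes: the paper simply states that the observation is straightforward from \cref{def:stitching_knitwork}, the reliability of $\mathfrak{M}(\rho(\bar Y))$, and \cref{obs:stitching_fundamentals}, and does not give any further details. Your write-up is in fact more thorough than the paper's own treatment, including the polarity check $\rho_{G_Y}^\pm(y_\ast)=\rho_G^\pm(\bar Y)$ needed to place $\mathfrak{M}(\rho(\bar Y))$ in the correct matching space.
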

\smallskip

We further get the following useful corollary straight from the \cref{def:stitching_knitwork} of up-stitches.
\begin{observation}\label{obs:up-stitch_of_well-linked_is_well-linked}
Let~$\Omega$ be a well-quasi-order and~$\GGG$ a well-linked~$\Omega$-knitwork for a rooted Eulerian digraph~$(G,\pi(X))$. Let~$Y \subseteq V(G)$ be a rooted cut in~$\bar G$. Then~$\stitch(\GGG;\pi(\bar{Y}))$ with up-stitch vertex~$y^*$ is a well-linked~$\Omega$-knitwork.
\end{observation}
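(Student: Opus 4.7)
The plan is to simply unfold the relevant definitions and observe that the up-stitch operation inherits its routing handler from the original knitwork on $\bar Y$, while leaving the new vertex $y^*$ outside the domain of $\m^Y$. Concretely, by \cref{def:stitching_knitwork} the routing handler of $\stitch(\GGG;\pi(\bar Y))$ is given by $\m^Y(v)=\m(v)$ for $v\in\bar Y$, with no value assigned at $y^*$. Thus $\dom(\m^Y)\subseteq\dom(\m)\cap\bar Y$, and in particular $y^*\notin\dom(\m^Y)$.

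To verify that the resulting $\Omega$-knitwork is well-linked in the sense of \cref{def:knitwork} (and using \cref{def:well-linked_links}), I need to show that $\m^Y(v)$ is a well-linked link for every $v\in\dom(\m^Y)$. But for any such $v$ we have $v\in\bar Y\cap\dom(\m)$, hence $\m^Y(v)=\m(v)$, and the latter is well-linked by the assumption that $\GGG$ itself is well-linked. Combined with \cref{obs:stitching_fundamentals_knitworks}, which already asserts that $\stitch(\GGG;\pi(\bar Y))$ is a well-defined $\Omega$-knitwork, this suffices.

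There is no real obstacle here; the point worth emphasising is that, because the definition deliberately leaves $\m^Y$ unspecified at the up-stitch vertex $y^*$, no well-linkedness condition has to be verified at $y^*$. The contrast with down-stitching is what makes the statement non-symmetric: in the down-stitch case, $\m_Y(y_\ast)$ is set to be the feasible-types link $\mathfrak{M}(\rho(\bar Y))$, so an analogous claim in that direction would require an additional argument showing that this particular link is well-linked (which in general it need not be). The observation is therefore a convenient one-line consequence of the asymmetric setup of \cref{def:stitching_knitwork}, and will be used later to propagate the well-linkedness hypothesis through inductive up-stitching arguments without extra bookkeeping.
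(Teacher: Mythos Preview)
Your proof is correct and matches the paper's own argument essentially verbatim: both observe that $y^*\notin\dom(\m^Y)$ and that $\m^Y=\m$ on $\bar Y$, so well-linkedness is inherited directly from $\GGG$. Your additional remark contrasting this with the down-stitch case also aligns with the paper's subsequent remark.
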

\begin{proof}
    This follows immediatley from the fact that~$y^* \notin \dom(\m^Y)$ and~$\m^Y = \m$ for all~$y \in Y$.
\end{proof}
\begin{remark}
    Note that \cref{obs:up-stitch_of_well-linked_is_well-linked} is \emph{not} generally true for down-stitches.
\end{remark}

Next we introduce \emph{knitting}---sort of the inverse operation to stitching---an operation that given two rooted Eulerian digraphs with distinguished ordered cuts produces a new rooted Eulerian digraph by ``knitting'' the graphs together at said cuts respecting their order. 

\begin{definition}[Knitting]\label{def:knitting_rooted_graphs}
   Let~$\bar{G}_i=(G_i,\pi(X_i))$ be rooted Eulerian digraphs for some~$X_i \subset V(G_i)$ and~$i \in \{1,2\}$. Let~$\pi_{G_1}(X_1) \coloneqq (f_1^1,\ldots,f_\ell^1)$ and~$\pi_{G_2}(X_2) = (e^2_1,\ldots,e^2_k) = \pi_{G_2}(\bar X_2)$ for~$\ell,k \in 2\N$. Let~$Y \subset V(G_1)$ induce a rooted~$k$-cut in~$G_1$. Let~$\pi_{G_1}(Y) \coloneqq (e_1^1,\ldots,e^1_k)$ be an ordering of~$\rho(Y)$ such that~$e_i^1 \in \rho^+(Y) \iff e_i^2 \in \rho^-(\bar X_2)$ for every~$i\in\{1,\ldots,k\}$.

   We define~$\knit({G_2},G_1;\pi_{G_1}(Y)) \coloneqq (V',E',\operatorname{inc}')$ and~$\knit(\bar{G_2},\bar{G_1};\pi_{G_1}(Y)) \coloneqq ((V',E',\operatorname{inc}'),\pi)$ where
   \begin{align*}
       & V' \coloneqq Y \cup \bar X_2, \\
        & E' \coloneqq E(G_1[Y]) \cup E(G_2[\bar X_2]) \cup \{e_1,\ldots,e_k\},  \\
        &(e,v) \in \operatorname{inc}' :\iff 
            \begin{cases} 
                &(e,v) \in \operatorname{inc}_{G_1} \text{ and } e \in E(G_1[Y]), v\in Y,\\
                &(e,v) \in \operatorname{inc}_{G_2} \text{ and } e \in E(G_2[X_2]), v\in \bar X_2,\\
                &e= e_i,\ v\in Y \text{ and } (e_i^1,v) \in \operatorname{inc}_{G_1},\\
                &e= e_i,\ v\in \bar X_2 \text{ and } (e_i^2,v) \in \operatorname{inc}_{G_2}
                \end{cases}\\
         &(v,e) \in \operatorname{inc}' :\iff 
            \begin{cases} 
                &(v,e) \in \operatorname{inc}_{G_1} \text{ and } e \in E(G_1[Y]), v\in Y,\\
                &(v,e) \in \operatorname{inc}_{G_2} \text{ and } e \in E(G_2[{\bar X_2}]), v\in {\bar X_2},\\
                &e= e_i,\ v\in Y \text{ and } (v,e_i^1) \in \operatorname{inc}_{G_1},\\
                &e= e_i,\ v\in \bar X_2 \text{ and } (v, e_i^2) \in \operatorname{inc}_{G_2}
             \end{cases}\\
        & \pi \coloneqq (f_1^*,\ldots,f_\ell^*), \text{ where } \quad f_i^*\coloneqq 
                \begin{cases} 
                      f_i, & f_i \in E',\\
                      e_j, & f_i= e_j^1\\
                \end{cases}.
   \end{align*}

Finally rename~$f_i^*$ to~$f_i$ in~$G'$.
We say that~$(G',\pi) \coloneqq ((V',E',\operatorname{inc}'),\pi)$ is obtained by \emph{knitting~$\bar G_2$ to~$\bar G_1$ at~$\pi_{G_1}(Y)$}.
\end{definition}

By construction~$X_1 \subseteq V(G')$. Again the following observations, although a bit tedious, are straightforward to verify from the definition and are left to the reader. 

\begin{observation}\label{obs:knitting_fundamentals}
    Let~$(G',\pi) \coloneqq \knit(\bar{G_2},\bar{G_1};\pi_{G_1}(Y))$, as in \cref{def:knitting_rooted_graphs}. Then
    \begin{enumerate}
        \item $(G',\pi)$ is a well-defined rooted Eulerian digraph where~$\pi=\pi_{G'}(X_1) = \pi_G(X_1)$ is an ordering of~$\rho_{G'}(X_1)$,
         \item the sets~$\bar X_2,Y \subset V(G')$ induce the same cut~$\rho_{G'}(\bar X_2) = \rho_{G'}(Y)$ of size~$k\in 2\N$, where~$\bar X_2 \cup Y$ is a partition of~$V(G')$,
        \item $G'[Y] = G_1[Y]$ and~$G'[\bar X_2] = G_2[\bar X_2]$, and
        \item $\stitch(\bar{G_1};\pi_{G_1}(X_1)) \cong \stitch(\bar{G'};\pi_{G'}(X_1))$.
       
    \end{enumerate}
\end{observation}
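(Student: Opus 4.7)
The plan is to unpack \cref{def:knitting_rooted_graphs} step by step; the conceptual picture is that knitting glues the two induced subgraphs $G_1[Y]$ and $G_2[\bar{X}_2]$ along the ordered bijection $e_i^1 \leftrightarrow e_i^2$ of their cut edges, with the compatibility condition $e_i^1 \in \rho^+(Y) \iff e_i^2 \in \rho^-(\bar{X}_2)$ ensuring that each fused edge $e_i$ inherits a single coherent tail/head. With this in hand, items 2 and 3 fall out of the construction: $V(G') = Y \cup \bar{X}_2$ is a partition by definition of $V'$, the only edges of $G'$ crossing this partition are the $e_i$, so $\rho_{G'}(Y) = \rho_{G'}(\bar{X}_2) = \{e_1, \ldots, e_k\}$; moreover, $G'[Y]$ retains exactly the edges and incidences of $G_1[Y]$, and symmetrically for $G_2[\bar{X}_2]$.

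For item 1, the incidence check at each $e \in E'$ is immediate in the two cases where $e$ is internal to $G_1[Y]$ or $G_2[\bar{X}_2]$, and for the fused edges $e_i$ it is exactly where the compatibility condition is used. To see that $G'$ is Eulerian I would argue locally: for $v \in Y$, the edges of $G'$ incident to $v$ are those of $E(G_1[Y])$ incident to $v$ (with unchanged orientation) together with, for each $e_j^1$ of $G_1$ incident to $v$, the fused edge $e_j$, which by the definition of $\operatorname{inc}'$ inherits the same tail/head role at $v$ as $e_j^1$ had in $G_1$. Hence the in- and out-degrees of $v$ agree with those in $G_1$, and $G_1$ being Eulerian closes the case; the case $v \in \bar{X}_2$ is symmetric via $G_2$. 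Finally, $\pi$ is a valid ordering of $\rho_{G'}(X_1)$: each $f_i^1 \in \rho_{G_1}(X_1)$ is either already in $E(G_1[Y]) \subseteq E'$ (so $f_i^\ast = f_i^1$) or lies in $\rho_{G_1}(Y)$ (so $f_i^1 = e_j^1$ for some $j$ and $f_i^\ast = e_j$); in either case its $X_1$-endpoint is retained in $G'$ while its other endpoint ends up in $V(G') \setminus X_1$, so $f_i^\ast \in \rho_{G'}(X_1)$, and after the final renaming $\pi$ is exactly $\pi_{G_1}(X_1)$.

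The one substantive step is item 4. I would define an isomorphism $\phi\colon \stitch(\bar{G}_1;\pi_{G_1}(X_1)) \to \stitch(\bar{G}';\pi_{G'}(X_1))$ to be the identity on $X_1$, the bead-to-bead map on the two contracted vertices, and to send $f_i^1$ to $f_i^\ast$ on roots; since $X_1 \subseteq Y$ and $G'[Y] = G_1[Y]$ by item 3, we have $G_1[X_1] = G'[X_1]$, so $\phi$ can act as the identity on the internal edges. Incidence preservation of roots splits into two cases by whether the non-$X_1$ endpoint of $f_i^1$ lies in $Y \setminus X_1$ (where $f_i^\ast = f_i^1$ has unchanged incidences in $G'$) or in $\bar{Y}$ (where $f_i^1 = e_j^1$ and the construction of $\operatorname{inc}'$ guarantees that the $X_1$-side incidence of $e_j$ in $G'$ is inherited verbatim from $e_j^1$ in $G_1$). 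In both cases, after contracting the outside to a bead, the incidences match under $\phi$, and the root orderings coincide by the definition of $\pi$. The main obstacle I expect is purely notational book-keeping of the renaming step; once that is tracked carefully, the whole observation reduces to a case-by-case unpacking of the incidence relations.
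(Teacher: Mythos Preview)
Your proposal is correct and does exactly what the paper intends: the paper states that this observation is ``straightforward to verify from the definition'' and leaves it to the reader, and you have carried out precisely that direct verification by unpacking \cref{def:knitting_rooted_graphs} case by case. The only minor point is that for item~1 you explicitly show each $f_i^\ast\in\rho_{G'}(X_1)$ but should also note (equally routinely) that every edge of $\rho_{G'}(X_1)$ arises this way, so that $\pi$ orders all of $\rho_{G'}(X_1)$; otherwise your argument is complete.
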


Using \cref{obs:knitting_fundamentals} we will unambiguously define~$\big(\knit(G_1,G_2;\pi_{G_1}(Y)),\pi_{G'}(X)\big) = \knit(\bar{G_2},\bar{G_1};\pi_{G_1}(Y))$ to be the rooted Eulerian digraph obtained by knitting~$\bar G_2$ to~$\bar G_1$.

The definition of knitting readily extends to~$\Omega$-knitworks.

\begin{definition}\label{def:knitting_knitworks}
     Let~$\Omega$ be a well-quasi-order and let~$\GGG_i = (\bar{G}_i,\mu_i,\m_i,\Phi_i)$ be~$\Omega$-knitworks with rooted Eulerian digraphs~$\bar{G_i}=(G_i,\pi(X_i))$ for~$i \in \{1,2\}$. Let~$\pi_{G_2}(X_2) = (e^2_1,\ldots,e^2_k) = \pi_{G_2}(\bar X_2)$ for some~$k \in 2\N$. Let~$Y \subset V(G_1)$ induce a rooted~$k$-cut in~$G_1$. Let~$\pi_{G_1}(Y) \coloneqq (e_1^1,\ldots,e^1_k)$ be an ordering of~$\rho(Y)$ such that~$e_i^1 \in \rho^+(Y) \iff e_i^2 \in \rho^-(\bar X_2)$ for every~$i\in\{1,\ldots,k\}$. Let~$(G',\pi_{G'}(X_1)) \coloneqq \knit(G_2,G_1;\pi_{G_1}(Y))$. We define 
     \begin{align*}
         \mu' &\coloneqq \begin{cases}
             \mu_1(x) \text{ renaming } e_i^1 \text{ to } e_i, &\text{ if } x \in Y\\
             \mu_2(x) \text{ renaming } e_i^2 \text{ to } e_i, &\text{ if } x \in \bar X_2\\
         \end{cases}, \\
           \m' &\coloneqq \begin{cases}
             \m_1(x) \text{ renaming } e_i^1 \text{ to } e_i, &\text{ if } x \in Y\\
             \m_2(x) \text{ renaming } e_i^2 \text{ to } e_i, &\text{ if } x \in \bar X_2\\
         \end{cases}, \text{ and }\\
         \Phi'(x) &\coloneqq \begin{cases}
             \Phi_1(x), &\text{ if } x \in Y\\
             \Phi_2(x), &\text{ if } x\in \bar X_2\\
         \end{cases}.
     \end{align*}

  Finally we define~$\GGG'\coloneqq \knit(\GGG_2,\GGG_1;\pi_{G_1}(Y)) \coloneqq \big((G',\pi_{G'}(X_1)), \mu', \m', \Phi'\big)$ and say that~$\GGG'$ is obtained by \emph{knitting~$\GGG_2$ to~$\GGG_1$ at~$\pi_{G_1}(Y)$}.
\end{definition}

Again, using the above definition and \cref{obs:knitting_fundamentals} the following is straightforward.

\begin{observation}\label{obs:knitting_fundamentals_knitworks}
      Let~$\Omega$ be a well-quasi-order and let~$\GGG_i = (\bar{G}_i,\mu_i,\m_i,\Phi_i)$ be~$\Omega$-knitworks with rooted Eulerian digraphs~$\bar{G_i}=(G_i,\pi(X_i))$ for~$i \in \{1,2\}$. Let~$k \in 2\N$ be the index of~$\bar{G_2}$ and let~$Y \subset V(G_1)$ induce a rooted~$k$-cut with an ordering~$\pi_{G_1}(Y)$ in~$\bar{G_1}$. Let~$\GGG' \coloneqq \knit(\GGG_2,\GGG_1;\pi_{G_1}(Y)) =\big((G',\pi_{G'}(X_1)), \mu',\m', \Phi'\big)$. Then
     \begin{enumerate}
        \item  ~$\GGG'$ is a well-defined~$\Omega$-knitwork,
         \item $\mu'(x) = \mu_1(x)$ for all~$x \in Y$ and~$\mu'(x) = \mu_2(x)$ for all~$x \in \bar X_2$,
         \item $\m'(x) = \m_1(x)$ for all~$x \in Y$ and~$\m'(x) = \m_2(x)$ for all~$x \in \bar X_2$, and
         
         \item $\Phi'(x) = \Phi_1(x)$ for all~$x \in Y$ and~$\Phi'(x) = \Phi_2(x)$ for all~$x \in \bar X_2$.
     \end{enumerate}
\end{observation}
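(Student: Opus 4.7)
The plan is to derive every claim directly from the already-established well-definedness of $\knit$ for rooted digraphs (obs:knitting\_fundamentals) together with def:knitwork. The key observation is that, up to the renaming of the edges $e_i^1$ and $e_i^2$ to $e_i$ (which is the same renaming used when constructing $G'$ in def:knitting\_rooted\_graphs), the induced cuts $\rho_{G_1}(x)$ for $x \in Y$ and $\rho_{G_2}(x)$ for $x \in \bar X_2$ coincide with $\rho_{G'}(x)$. From this, points (2)--(4) are tautological and point (1) reduces to checking the three clauses of def:knitwork on $\mu'$, $\m'$, and $\Phi'$.

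First I would apply obs:knitting\_fundamentals to get that $(G',\pi_{G'}(X_1))$ is a well-defined rooted Eulerian digraph with $V(G')=Y\cup \bar X_2$ (disjointly), $G'[Y]=G_1[Y]$, $G'[\bar X_2]=G_2[\bar X_2]$, and such that the knitting cut $\rho_{G'}(Y)=\{e_1,\ldots,e_k\}$ corresponds under renaming to $\rho_{G_1}(Y)=\{e_1^1,\ldots,e_k^1\}$ on the $Y$-side and to $\rho_{G_2}(\bar X_2)=\{e_1^2,\ldots,e_k^2\}$ on the $\bar X_2$-side. The hypothesis $e_i^1\in\rho^+(Y) \iff e_i^2\in\rho^-(\bar X_2)$ ensures that the incidences in $\operatorname{inc}'$ line up so that for every $x\in Y$ the renamed tuple $\mu_1(x)$ is an ordering of $\rho_{G'}(x)$ with the correct in/out-signature, and analogously for $x\in \bar X_2$. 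Hence $\mu'$ is a well-defined function with $\dom(\mu')\subseteq V(G')$ and each $\mu'(v)$ orders $\rho_{G'}(v)$; the equalities in (2) hold by construction.

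For (3), the link condition, observe that $\m_1(x) \subseteq \operatorname{Match}(\rho^-_{G_1}(x),\rho^+_{G_1}(x))$ is reliable by def:knitwork applied to $\GGG_1$. The renaming $e_i^1 \mapsto e_i$ is a bijection that restricts, by the previous paragraph, to bijections $\rho^\pm_{G_1}(x) \to \rho^\pm_{G'}(x)$, so the image $\m'(x)$ sits inside $\operatorname{Match}(\rho^-_{G'}(x),\rho^+_{G'}(x))$ and reliability (closedness under taking sub-matchings) is preserved since the bijection acts entrywise on matchings. The same argument works verbatim for $x\in \bar X_2$, and the inclusion $\dom(\m')\subseteq \dom(\mu')$ follows from the corresponding inclusions for $\GGG_1,\GGG_2$. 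Point (4) is immediate since $\Phi_1,\Phi_2$ already take values in $V(\Omega)$ and no renaming is needed.

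There is no serious obstacle here: the construction was set up precisely so that each claim reduces to bookkeeping on the renaming of the boundary edges. The only point requiring a moment's thought is the compatibility of in/out-signatures across the knitting cut, but this is exactly the content of the hypothesis $e_i^1\in\rho^+(Y) \iff e_i^2\in\rho^-(\bar X_2)$, which was already used in proving obs:knitting\_fundamentals.
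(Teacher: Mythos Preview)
Your proposal is correct and follows exactly the approach the paper takes: the paper does not give a proof at all but simply states that the observation is ``straightforward'' from \cref{def:knitting_knitworks} and \cref{obs:knitting_fundamentals}, which is precisely the bookkeeping argument you spell out. Your additional care about the in/out-signature compatibility across the knitting cut is the right detail to check and is indeed handled by the hypothesis $e_i^1\in\rho^+(Y) \iff e_i^2\in\rho^-(\bar X_2)$ built into the definition.
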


The definitions of stitching and knitting allow us to unambiguously decompose~$\Omega$-knitworks at rooted cuts and knit them back together in a unique and reversible way (up to isomorphisms).

\begin{lemma}[Stitch-and-Knit]\label{lem:stitch-and-knit}
    Let~$\Omega$ be a well-quasi-order. Let~$\GGG=((G,\pi_G(X)),\mu,\m,\Phi)$ be an~$\Omega$-knitwork. Let~$Y \subset V(G)$ induce a rooted~$k$-cut in~$(G,\pi_G(X))$ for some~$k\in 2\N$ and let~$\pi_G(Y)=(e_1,\ldots,e_k)=\pi_G(\bar Y)$ be an ordering of~$\rho(Y)$. Let
    \begin{align*}
        &\text{$\GGG_Y=((G_Y,\pi_{G_Y}(X)),\mu_Y,\m_Y,\Phi_Y)\coloneqq \stitch(\GGG;\pi_G(Y))$, and}\\
        &\text{$\GGG^Y=((G^Y,\pi_{G^Y}(\bar{Y})),\mu^Y,\m^Y,\Phi^Y) \coloneqq \stitch(\GGG;\pi_G(\bar{Y}))$.}
    \end{align*}
    Then there exists an ordering~$\pi_{G_Y}(Y) = \pi_{G^Y}(\bar{Y})$ of $\rho_G(Y)$ such that~$\knit(\GGG^Y,\GGG_Y; \pi_{G_Y}(Y)) = \GGG$.
\end{lemma}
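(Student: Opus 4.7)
\medskip
\textbf{Proof plan.} The plan is to verify that the knitting operation literally undoes the two stitching operations, by setting the ``knitting cut'' on $\GGG_Y$ to be the one induced by $Y$ (equivalently by the down-stitch vertex $y_\ast$) with the ordering $\pi_{G_Y}(Y) \coloneqq \pi_G(Y) = (e_1,\dots,e_k)$. First I will observe that this is a valid input to $\knit$: by \cref{obs:stitching_fundamentals}, $\rho_{G_Y}(Y) = \rho_{G_Y}(y_\ast) = \rho_G(Y) = \rho_{G^Y}(y^\ast) = \rho_{G^Y}(\bar Y)$, so both sides have the same index $k$; and $\pi_{G^Y}(\bar Y) = \pi_G(\bar Y) = \pi_G(Y) = \pi_{G_Y}(Y)$ as orderings, so the ordered cuts match edge-for-edge. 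The direction compatibility required by \cref{def:knitting_rooted_graphs} (namely $e_i \in \rho^+_{G_Y}(Y) \iff e_i \in \rho^-_{G^Y}(\bar Y)$) follows at once from the incidence definitions in \cref{def:stitching_std}: in both stitches an edge $e_i$ keeps its original $G$-end that lies in $Y$ (resp.\ in $\bar Y$), so $e_i \in \rho^+_{G_Y}(Y) \iff \head_G(e_i) \in \bar Y \iff e_i \in \rho^-_{G^Y}(\bar Y)$.

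Next I unfold the $\knit$ construction applied to $(G^Y, G_Y; \pi_{G_Y}(Y))$. By \cref{def:knitting_rooted_graphs} the resulting vertex set is $Y \cup \overline{y^\ast} = Y \cup \bar Y = V(G)$; the edge set is $E(G_Y[Y]) \cup E(G^Y[\bar Y]) \cup \{e_1,\dots,e_k\} = E(G[Y]) \cup E(G[\bar Y]) \cup \rho_G(Y) = E(G)$ by parts (2) of \cref{obs:stitching_fundamentals}; and a case-by-case inspection of the incidence rules in the knit definition shows they reduce exactly to the incidences of $G$ (incidences inside $Y$ and inside $\bar Y$ are copied from $G_Y$ and $G^Y$, hence from $G$; incidences of edges in $\rho_G(Y)$ pick up the $Y$-end from $G_Y$ and the $\bar Y$-end from $G^Y$, again as in $G$). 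The root ordering $\pi$ produced by knit is obtained from $\pi_{G_Y}(X) = \pi_G(X)$ by the rename step in \cref{def:knitting_rooted_graphs}; since $X \subseteq Y$, the root edges of $\GGG$ either lie in $E(G[Y])$ or in $\rho_G(Y)$, and in the latter case the renaming $e_j^1 \mapsto e_j$ is the identity. Hence the underlying rooted digraph of $\knit(\GGG^Y,\GGG_Y;\pi_{G_Y}(Y))$ equals $(G,\pi_G(X)) = \bar G$.

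Finally I verify that the accompanying functions $\mu',\m',\Phi'$ produced by \cref{def:knitting_knitworks} agree with $\mu,\m,\Phi$. For $v \in Y$ the knit picks up $\mu_Y(v), \m_Y(v), \Phi_Y(v)$, which by \cref{obs:stitching_fundamentals_knitworks} equal $\mu(v),\m(v),\Phi(v)$; for $v \in \bar Y$ it picks up $\mu^Y(v),\m^Y(v),\Phi^Y(v)$, which likewise equal $\mu(v),\m(v),\Phi(v)$. The stitch vertices $y_\ast$ and $y^\ast$ disappear in the knit (they are \emph{not} in $Y \cup \bar Y$), so the only potentially worrying case would be a value assigned to them surviving in $\GGG$; but by construction neither appears in the knitted vertex set, so nothing has to be checked for them. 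Putting these pieces together yields $\knit(\GGG^Y,\GGG_Y;\pi_{G_Y}(Y)) = \GGG$.

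\medskip
\textbf{Main obstacle.} The argument is almost entirely a bookkeeping exercise, so the only real pitfall is being careful about the edge renaming in \cref{def:knitting_rooted_graphs} (the $e_i^1,e_i^2 \mapsto e_i$ identification) together with the fact that we are comparing \emph{ordered} cuts: one must check that the two orderings $\pi_{G_Y}(Y)$ and $\pi_{G^Y}(\bar Y)$ inherited from the common $\pi_G(Y) = \pi_G(\bar Y)$ agree index-by-index, so that the matching provided by knit is the identity on the edge set $\rho_G(Y)$ rather than some nontrivial permutation. Once this identification is pinned down, the incidence, ordering, and label checks are routine.
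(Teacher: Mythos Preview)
Your proof is correct and follows essentially the same approach as the paper: set $\pi_{G_Y}(Y)\coloneqq\pi_G(Y)$, verify via \cref{obs:stitching_fundamentals} that this is a valid input to $\knit$ (matching indices and the direction compatibility $e_i\in\rho^+_{G_Y}(Y)\iff e_i\in\rho^-_{G^Y}(\bar Y)$), check that the knitted underlying rooted digraph is $(G,\pi_G(X))$, and finally recover $\mu,\m,\Phi$ from $\mu_Y,\mu^Y$ etc.\ via \cref{obs:stitching_fundamentals_knitworks} and \cref{obs:knitting_fundamentals_knitworks}. The only cosmetic difference is that you unfold the incidence and root-renaming clauses of \cref{def:knitting_rooted_graphs} explicitly, whereas the paper packages the graph-level verification into a single claim citing the observations.
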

\begin{proof}
    Let~$\bar G_Y = (G_Y,\pi_{G_Y}(X))$ and~$\bar G^Y = (G^Y,\pi_{G^Y}(\bar{Y}))$ as usual with respective down and up-stitch vertices~$y_*,y^*$. By 3. of \cref{obs:stitching_fundamentals} we have~$\rho_{G_Y}(Y) = \rho_G(Y) = \{e_1,\ldots,e_k\}$, whence we may set~$\pi_{G_Y}(Y) = \pi_G(Y)$. By 4. of the same \cref{obs:stitching_fundamentals} note that~$\pi_{G}(\bar Y) = \pi_{G^Y}(y^*) =\pi_{G^Y}(\bar{Y})$ and by assumption $\pi_G(\bar Y) = \pi_G(Y)$. In particular we have that~$e_i \in \rho_{G_Y}^+(Y) \iff e_i \in \rho_{G^Y}^-(\bar Y)$ for~$i\in\{1,\ldots,k\}$, whence $\knit(\GGG^Y,\GGG_Y; \pi_{G_Y}(Y))$ is well defined using \cref{def:knitting_rooted_graphs}. Note here that knitting looses the information of~$\mu_y,\m_Y$ on~$y_\ast$, and of ~$\mu^Y,\m^Y$ on~$y^\ast$ respectively, for they are not part of the resulting graph. 
    
    Let~$ \knit(\GGG^Y,\GGG_Y; \pi_G({Y})) = \big(\knit(\bar G^Y,\bar G_Y; \pi_G({Y})), \mu',\m',\Phi'\big)$ with respective~$\mu',\m',\Phi'$ as in \cref{def:knitting_knitworks}.
    \begin{claim}
        $ \bar{G} = \knit(\bar G^Y,\bar G_Y; \pi_G({Y}))$.
    \end{claim}
    \begin{claimproof}
        Let~$(H,\pi_H(Y)) = \knit(\bar G^Y,\bar G_Y; \pi_G(Y))$. Then by 1. and 3. of \cref{obs:knitting_fundamentals} $\pi_H(Y) = \pi_G(Y)$ and~$H[Y] = G_Y[Y]$ as well as~$H[\bar{Y}] = G^Y[\bar{Y}]$.
        
        By 2. of \cref{obs:stitching_fundamentals}~$Y$ induces a rooted~$k$-cut in~$\bar G_Y$ and~$\bar{Y}$ induces a rooted~$k$-cut in~$\bar G^Y$ such that~$G_Y[Y] = G[Y]$ and~$G^Y[\bar{Y}] = G[\bar Y ]$. In particular then~$H[Y] = G[Y]$ and $H[\bar{Y}] = G[\bar{Y}]$.

        By 3. of \cref{obs:stitching_fundamentals} together with our choice of~$\pi_{G_Y}(Y)$ we derive $\pi_{G^Y}(\bar{Y}) = (e_1,\ldots,e_k) = \pi_{G_Y}(Y) = \pi_G(Y)$. Thus by \cref{def:knitting_rooted_graphs} we get~$\rho_H(Y) = \rho_G(Y)$ respecting the same incidences; the claim follows.

    \end{claimproof}

    Recall the \cref{def:knitting_knitworks} of knitting $\Omega$-knitworks. Combining 2. 3. and 4. of \cref{obs:stitching_fundamentals_knitworks} with 2. 3. and 4. of \cref{obs:knitting_fundamentals_knitworks} we immediately get~$\mu'(x) = \mu(x)$,~$\m'(x) = \m(x)$ and~$\Phi'(x) = \Phi(x)$ for all~$x \in V(G)$. Hence the conditions of \cref{def:isomorphism_knitwork} are satisfied, concluding the proof.
\end{proof}
\begin{remark}
  Note that the lemma holds with equality instead of isomorphism due to the intrinsic identifications in all the definitions (renaming edges).  
\end{remark}

Similar to the Stitch-and-Knit \cref{lem:stitch-and-knit}, stitching and knitting can be used to reduce the question of whether a graph~$H$ strongly immerses into~$G$ by decomposing the graphs at cuts via stitches and asking whether the resulting stitches can be immersed into each other; if so, we can knit the immersions back together to yield an immersion of~$H$ in~$G$.
The following is the main result of this subsection.

\begin{theorem}\label{thm:knitting_knitwork_immersion}
    Let~$\Omega=(V(\Omega),\preceq)$ be a well-quasi-order and let~$\GGG=((G,\pi(X)),\mu,\m,\Phi)$ and~$\HHH=((H,\pi(A)),\nu,\n,\Psi)$ be~$\Omega$-knitworks of common index~$\ell \in 2\N$. Let~$Y \subset V(G)$ and~$B \subset V(H)$ induce rooted~$k$-cuts in~$\bar{G}$ and~$\bar{H}$ respectively for some~$k \in 2\N$, and let~$\pi(Y)=(e_1^Y,\ldots,e_k^Y)=\pi(\bar Y)$ and~$\pi(B)=(e_1^B,\ldots,e_k^B) =\pi(\bar{B})$ be orderings of~$\rho(Y),\rho(B)$ respectively. Let
    \begin{align*}
        \big((G_Y,\pi(X)),\mu_Y,\m_Y,\Phi_Y\big) &=  \stitch(\GGG;\pi(Y)) \text{ with down-stitch vertex } y_\ast,\\
       \big ((G^Y,\pi(\bar Y)),\mu^Y,\m^Y,\Phi^Y\big) &=\stitch(\GGG;\pi(\bar{Y})) \text{ with up-stitch vertex } y^\ast,\\
        \big((H_B,\pi(A)),\nu_B,\n_B,\Psi_B\big) &=  \stitch(\HHH;\pi(B)) \text{ with down-stitch vertex } b_\ast,\\
        \big((H^B,\pi(\bar{B})),\nu^B,\n^B,\Psi^B\big) &=\stitch(\HHH;\pi(\bar{B})) \text{ with up-stitch vertex } b^\ast, 
    \end{align*}

    respectively. Further let
    \begin{align*}
        \gamma_d:&  \big((H_B,\pi(A)),\nu_B,\n_B,\Psi_B\big) \hookrightarrow \big((G_Y,\pi(X)),\mu_Y,\m_Y,\Phi_Y\big), \text{ with } \gamma_d(b_\ast) = y_\ast,\text{ and}\\
        \gamma_u:&  \big((H^B,\pi(\bar{B})),\nu^B,\n^B,\Psi^B\big) \hookrightarrow \big((G^Y,\pi(\bar Y)),\mu^Y,\m^Y,\Phi^Y\big) , \text{ with } \gamma_u(b^\ast) = y^\ast,
    \end{align*}be strong~$\Omega$-knitwork immersions. 
    
    Then there exists a strong~$\Omega$-knitwork immersion~$\gamma: \knit(\HHH^B,\HHH_B;\pi({B})) \hookrightarrow \knit(\GGG^Y,\GGG_Y;\pi({Y}))$ and in particular~$\gamma:\HHH \hookrightarrow \GGG$ such that~$\restr{\gamma}{H[B]} = \restr{\gamma_u}{H_B[B]}$ and~$\restr{\gamma}{H[\bar{B}]} = \restr{\gamma_d}{H^B[\bar{B}]}$.

    The same holds true for general immersion $\hookrightarrow^*$.
\end{theorem}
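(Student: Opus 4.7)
The plan is to construct $\gamma$ by gluing $\gamma_d$ and $\gamma_u$ along the shared cut edges $\{e_1^Y,\ldots,e_k^Y\}$. On vertices, I set $\gamma(v) \coloneqq \gamma_d(v)$ for $v \in B$ and $\gamma(v) \coloneqq \gamma_u(v)$ for $v \in \bar{B}$. On edges inside one side, $\gamma(e) \coloneqq \gamma_d(e)$ for $e \in E(H[B])$ and $\gamma(e) \coloneqq \gamma_u(e)$ for $e \in E(H[\bar{B}])$. For each cut edge $e_i^B \in \rho_H(B)$, I set $\gamma(e_i^B) \coloneqq \gamma_d(e_i^B) \circ \gamma_u(e_i^B)$, concatenated along $e_i^Y$. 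The key structural fact that makes this well-defined is: Condition 3 of \cref{def:knitwork_immersion} applied to $\gamma_d$, with $\nu_B(b_\ast) = \pi(B)$ and $\mu_Y(y_\ast) = \pi(Y)$, forces $e_i^Y \in \gamma_d(e_i^B)$; since $e_i^Y$ is incident to $y_\ast$ and strongness forbids $y_\ast = \gamma_d(b_\ast)$ from appearing as an internal vertex of $\gamma_d(e_i^B)$, the edge $e_i^Y$ must appear as an end-edge of $\gamma_d(e_i^B)$. Symmetrically $e_i^Y$ is an end-edge of $\gamma_u(e_i^B)$ at $y^\ast$, and by \cref{obs:knitwork_immersion_on_mu} the two ends agree in direction, so the concatenation produces a genuine path in $G$.

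Next I would check that $\gamma$ is a strong immersion of the rooted digraph $(H, \pi(A))$ into $(G, \pi(X))$. Injectivity on $V(H)$ is immediate since $\gamma_d(B) \subseteq Y$ and $\gamma_u(\bar{B}) \subseteq \bar{Y}$ are disjoint. The same strongness argument as above shows that $\gamma_d(e) \subseteq G[Y]$ whenever $e \in E(H[B])$ (else $y_\ast$ would be internal to the path), and symmetrically for $\gamma_u$; hence the edges in the image of $\gamma$ partition cleanly among $G[Y]$, the cut $\{e_1^Y,\ldots,e_k^Y\}$, and $G[\bar{Y}]$, and pairwise edge-disjointness is immediate from the edge-disjointness within $\gamma_d$ and within $\gamma_u$ together with distinctness of the $e_i^Y$. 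Root preservation reduces to that of $\gamma_d, \gamma_u$ using the identities $\pi_{H_B}(A) = \pi_H(A)$ and $\pi_{G_Y}(X) = \pi_G(X)$ from \cref{obs:stitching_fundamentals}. For strongness of $\gamma$, any internal vertex of a path $\gamma(e)$ lies in $Y$ or $\bar{Y}$; by strongness of the relevant piece it is not in $\gamma_d(V(H_B)) \setminus \{y_\ast\} = \gamma_d(B)$ respectively $\gamma_u(V(H^B)) \setminus \{y^\ast\} = \gamma_u(\bar{B})$, and hence not in $\gamma(V(H))$.

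It remains to verify the knitwork conditions of \cref{def:knitwork_immersion}. Conditions 2, 3, and 6 transfer directly using \cref{obs:stitching_fundamentals_knitworks}, which gives $\nu_B|_B = \nu|_B$, $\n_B|_B = \n|_B$, $\Psi_B|_B = \Psi|_B$, and analogously on $\bar{B}$. For Condition 4, observe that any path of $\gamma$ traversing a vertex $v' \in Y$ is the restriction of some $\gamma_d$-path (since $\gamma_u$-paths lie in $\bar{Y}$), giving $M_\gamma(v') = M_{\gamma_d}(v')$; moreover $v' \notin \gamma(\dom(\n))$ translates to $v' \notin \gamma_d(\dom(\n_B))$ using $\dom(\n_B) = (\dom(\n) \cap B) \cup \{b_\ast\}$ together with $\gamma_d(b_\ast) = y_\ast \notin V(G)$. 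Condition 4 for $\gamma_d$ then delivers $M_\gamma(v') \in \m_Y(v') = \m(v')$; the case $v' \in \bar{Y}$ is symmetric. Condition 5 is handled analogously, exploiting the agreement of orderings and links on $B$-vertices. Finally, \cref{lem:stitch-and-knit} identifies $\knit(\HHH^B, \HHH_B; \pi(B)) = \HHH$ and $\knit(\GGG^Y, \GGG_Y; \pi(Y)) = \GGG$, yielding $\gamma : \HHH \hookrightarrow \GGG$ as claimed. The general (non-strong) case $\hookrightarrow^*$ proceeds identically, omitting the strongness verification.

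The main obstacle will be the careful bookkeeping needed to verify strongness precisely at the splicing vertices $\tail(e_i^Y), \head(e_i^Y)$ of $e_i^Y$ in $G$: these become internal vertices of $\gamma(e_i^B)$ in the non-trivial case, and one must separately invoke the strongness of both $\gamma_d$ at $y_\ast$ and $\gamma_u$ at $y^\ast$ to guarantee that they do not coincide with any image $\gamma(w)$ for $w \in V(H)$ other than the permitted endpoints.
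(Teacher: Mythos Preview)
Your proposal is correct and follows essentially the same approach as the paper: both construct $\gamma$ by gluing $\gamma_d$ and $\gamma_u$ along the shared cut edges $e_i^Y$, using Condition~3 of \cref{def:knitwork_immersion} (at $b_\ast \mapsto y_\ast$ and $b^\ast \mapsto y^\ast$) to pin down that $e_i^Y$ is an end-edge of each piece, and then verify the rooted-immersion and knitwork conditions side by side via \cref{obs:stitching_fundamentals_knitworks} and \cref{lem:stitch-and-knit}. Your anticipated ``main obstacle'' at the splicing vertices dissolves once you observe that the internal vertices of the concatenated path $\gamma(e_i^B)$ are precisely the union of the internal vertices of $\gamma_d(e_i^B)$ (all in $Y$) and of $\gamma_u(e_i^B)$ (all in $\bar Y$), so strongness of each piece already covers them; no separate argument is needed.
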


\begin{proof}
We prove the theorem for strong immersion; the other case is analogous. By the assumptions of the theorem and \cref{def:knitwork_immersion} we derive the following.

    \begin{claim}\label{claim:stitch_knit_corr_paths_under_partial_immersion_uno}
        Let~$e_i^Y \in \rho_G^+(Y)$ for some $i \in \{1,\ldots,k\}$. Then~$\gamma_d(e_i^B) \subset G_Y$ is a path ending in~$e_i^Y \in \rho_{G_Y}^+(Y)$ that is otherwise disjoint from~$\rho_{G_Y}(Y)$ and~$\gamma_u(e_i^B) \subset G^Y$ is a path starting in~$e_i^Y \in \rho_{G^Y}^-(\bar{Y})$ that is otherwise disjoint from~$\rho_{G^Y}(\bar{Y})$.
    \end{claim}
    \begin{claimproof}
       We start with a proof for~$\gamma_d$. To this extent note that by 3. of \cref{obs:stitching_fundamentals}~$\rho_{H_B}(b_\ast) = \{e_1^B,\ldots,e_k^B\}$ and~$\rho_{G_Y}(y_\ast) = \{e_1^Y,\ldots,e_k^Y\}$ and by 3. of \cref{obs:stitching_fundamentals_knitworks} we know that~$\nu_B(b_\ast) = (e_1^B,\ldots,e_k^B)$ as well as~$\mu_Y(y_\ast) = (e_1^Y,\ldots,e_k^Y)$. By the assumption of the theorem we know that~$\gamma_d(b_\ast) = y_\ast$ and since~$\gamma_d$ is an~$\Omega$-knitwork immersion we derive from 3. of \cref{def:knitwork_immersion} that for every~$1 \leq i \leq k$ the path~$\gamma_d(e_i^B)$ contains the edge~$e_i^Y$. Since all these paths are edge-disjoint the path~$\gamma_d(e_i^B)$ contains no other edge of~$\rho_{G_Y}(y_\ast)$. Further, by \cref{obs:knitwork_immersion_on_mu}, using~$\rho_{G_Y}^-(Y) = \rho_{G_Y}^-(y_\ast)$, the path must end in~$e_i^Y$ concluding the proof for this case.

       For~$\gamma_u$ the proof is analogous using~$\gamma_u(b^\ast) = y^\ast$ and the fact that the~$\Omega$-knitwork immersion is in particular a rooted immersion whence~$\gamma_u$ respects the order of roots $\pi_{H^B}(\bar{B})$ and $\pi_{G^Y}(\bar{Y})$---note here that by \cref{def:stitching_knitwork}~$\nu^B(b^\ast) = \pi_{H^B}(\bar{B})$ and~$\mu^Y(y^\ast) = \pi_{G^Y}(\bar{Y})$---and the proof follows as for~$\gamma_d$ with the difference that the path starts in the respective edge~$e_i^Y \in \rho^-(\bar{Y})$ by symmetry.  
    \end{claimproof}

    Analogously we get the following.
     \begin{claim}\label{claim:stitch_knit_corr_paths_under_partial_immersion_dos}
        Let~$e_i^Y \in \rho_G^-(Y)$ for some $i \in \{1,\ldots,k\}$. Then~$\gamma_d(e_i^B) \subset G_Y$ is a path starting in~$e_i^Y \in \rho_{G_Y}^-(Y)$  that is otherwise disjoint from~$\rho_{G_Y}(Y)$ and~$\gamma_u(e_i^B) \subset G^Y$ is a path ending in~$e_i^Y \in \rho_{G^Y}^-(\bar{Y})$ that is otherwise disjoint from~$\rho_{G^Y}(\bar{Y})$.
    \end{claim}

    Combining Claims \ref{claim:stitch_knit_corr_paths_under_partial_immersion_uno} and \ref{claim:stitch_knit_corr_paths_under_partial_immersion_dos} we derive the following.
    \begin{claim}\label{claim:stitch_knit_effect_on_cut}
        For~$1 \leq i \leq k$,~$e_i^Y$ is an end to both paths~$\gamma_u(e_i^B) \subset G^Y$ and~$\gamma_d(e_i^B)\subset G_Y$, and~$e_i^Y \in \gamma(e)$ if and only if~$e = e_i^B$.
    \end{claim}

    Finally we use \cref{claim:stitch_knit_corr_paths_under_partial_immersion_uno} and \ref{claim:stitch_knit_corr_paths_under_partial_immersion_dos} to ``knit'' both strong immersions~$\gamma_u,\gamma_d$ when knitting the stitched graphs back together. By \cref{lem:stitch-and-knit} it suffices to give a strong immersion~$\gamma: H \hookrightarrow G$; we define it as follows. Recall \cref{def:paths} and \cref{def:concatenation_of_paths}, in particular recall that the same sequence of edges~$(f_1,\ldots,f_\ell)$ with~$f_i \in E(G') \cap E(G'')$ for two graphs~$G',G''$, say, may represent unique paths in~$G',G''$ respectively, i.e., paths using the same edges but with possibly distinct vertices (in our case the endpoints may differ). 

    \begin{align*}
        &\gamma\colon V(H)\cup E(H) \to G,\\
        &\restr{\gamma}{H[B]} \coloneqq \restr{\gamma_d}{H_B[B]},\\
        &\restr{\gamma}{H[\bar{B}]} \coloneqq \restr{\gamma_u}{H^B[\bar{B}]}, \text{ and }\\
        &\gamma(e_i^B) = \begin{cases}
            \gamma_u(e_i^B) \circ \gamma_d(e_i^B),& e_i^B \in \rho_H^+(B)\\
            \gamma_d(e_i^B) \circ \gamma_u(e_i^B),& e_i^B \in \rho_H^-(B).
        \end{cases}
    \end{align*}
    By construction the above is well-defined. To see it this, it suffices to show that~$ \gamma_u(e_i^B) \circ \gamma_d(e_i^B)$ and~$\gamma_d(e_i^B) \circ \gamma_u(e_i^B)$ are paths in~$G$. Now clearly their summands are internally edge-disjoint respectively, and for~$\gamma_u(e_i^B)=(f_1,\ldots,f_\ell)$ with~$f_1,\ldots,f_\ell \in E(G_Y) \cap E(G)$ it holds that~$(f_1,\ldots,f_\ell)_G$ is a path in~$G$, in particular~$\gamma_u(e_i^B)$ is. Analogously, all the summands are paths in~$G$ and the claim follows by \cref{obs:concat_paths}.
    
    It is now straightforward to verify that~$\gamma$ is a strong rooted immersion.
\begin{claim}
        $\gamma:(H,\pi_H(A)) \hookrightarrow (G,\pi_G(X))$ is a strong rooted immersion. 
    \end{claim}
    \begin{claimproof}
        The fact that~$\gamma: V(H) \to V(G)$ is injective is clear by construction together with the fact that~$\gamma_u,\gamma_d$ are strong immersions for two disjoint sets~$B,\bar{B}$ where~$V(H) = B \cup \bar{B}$. Further, since $E(H[B]) \cap E(H[\bar{B}]) = \emptyset$, it follows by construction, the \cref{def:immersion} of strong immersion, and \cref{claim:stitch_knit_effect_on_cut}, that every~$e \in E(H) \setminus \rho(B)$ is mapped to a unique path~$\gamma(e)$ in~$G$. In addition, the respective paths are edge-disjoint where no such path contains any of~$\gamma(V(G))$ as an internal vertex, and no path contains an edge of~$\rho(Y)$. In particular~$\gamma(e)$ is either completely contained in~$G[Y]$ or~$G[\bar{Y}]$. Finally for~$e_i^B \in \rho(B)$ we have seen that~$\gamma(e_i^B)$ is a path in~$G$ containing~$e_i^Y$, for every~$1 \leq i \leq k$; note that~$E(\gamma_u(e_i^B)\cap E(\gamma_d(e_i^B)) =\{e_i^Y\}$ and~$V(\gamma_u(e_i^B)) \cap V(\gamma_d(e_i^B)) = \emptyset $, concluding the proof that $\gamma$ is a strong immersion.

        \smallskip
        
        To see that it is a \emph{rooted} strong immersion, note that~$\gamma_d$ is a rooted immersion (see \cref{def:rooted_immersion}) which implies that~$\gamma_d$ respects the order of the roots; the claim follows by combining 3. of \cref{obs:stitching_fundamentals} and 1. of \cref{obs:knitting_fundamentals}.
    \end{claimproof}

Finally we have the following.
    \begin{claim}
        $\gamma:\big((H,\pi_H(A)),\nu,\n,\Psi\big) \hookrightarrow \big((G,\pi_G(X)),\mu,\m,\Phi\big)$ is a strong~$\Omega$-knitwork immersion. 
    \end{claim}
    \begin{claimproof}
        By the previous claim we are left to verify the conditions of \cref{def:knitwork_immersion} on~$\nu,\mu$ as well as~$\n,\m$ and~$\Psi,\Phi$. But these are imminent from the definition:~$\Psi(v) \preceq \Phi(\gamma(v))$ using the fact that for~$v \in B$ we have~$\gamma(v) \in Y$ and~$\Psi(v) = \Psi_{B}(v)$ as well as~$\Phi(\gamma(v)) = \Phi_{Y}(\gamma_d(v))$ whence the claim follows from the fact that~$\Phi_{B}(v) \preceq \Phi_{Y}(\gamma_d(v))$ using that~$\gamma_d$ is an~$\Omega$-knitwork immersion, and analogously for~$v \in \bar{Y}$ implying~$\gamma(v) \in \bar{B}$. 

        The claims for~$\nu,\mu$ and~$\n,\m$ follow similarly from \cref{lem:stitch-and-knit} using the fact that~$\gamma_u,\gamma_d$ are~$\Omega$-knitwork immersions together with \cref{claim:stitch_knit_effect_on_cut} and 2. and 3. of \cref{obs:knitting_fundamentals_knitworks} (for~$\mu$ and~$\m$ respectively) and the fact that~$(B,\bar{B})$ and~$(Y,\bar{Y})$ partition the vertex sets~$V(H)$ and~$V(G)$.  
    \end{claimproof}

This concludes the proof.

\end{proof}

\cref{thm:knitting_knitwork_immersion} allows us to make the following simplification regarding~$\Omega$-knitworks.

\begin{corollary}\label{cor:immersion_of_stitches_yields_immersion}
    Let~$\GGG= \big((G,\pi(X)),\mu,\m,\Phi\big)$ and~$\HHH= \big((H,\pi(A)),\nu,\n,\Psi\big)$ be~$\Omega$-knitworks and~$\GGG_X \coloneqq \stitch(\GGG;\pi(X))$,$\GGG^X \coloneqq \stitch(\GGG;\pi(\bar{X}))$, as well as~$\HHH_A \coloneqq \stitch(\HHH;\pi(A))$ and~$\HHH^A \coloneqq \stitch(\HHH;\pi(\bar{A}))$. Then~$$\GGG_1 \hookrightarrow \GGG_2 \text{ if and only if } \GGG_X \hookrightarrow \HHH_A \text{ and } \GGG^X \hookrightarrow \HHH^\AAA,$$ 
   and the same holds true for general immersion $\hookrightarrow^*$.
\end{corollary}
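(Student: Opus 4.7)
The ``if'' direction is a direct application of \cref{thm:knitting_knitwork_immersion} together with the Stitch-and-Knit \cref{lem:stitch-and-knit}. Given strong $\Omega$-knitwork immersions $\gamma_d \colon \HHH_A \hookrightarrow \GGG_X$ and $\gamma_u \colon \HHH^A \hookrightarrow \GGG^X$, observe first that the stitch vertices are forced to correspond: the vertex $a_* \in V(\HHH_A)$ is incident to exactly the $k$ root edges of $\HHH_A$, each of which must be mapped by $\gamma_d$ to a path containing the corresponding root edge of $\GGG_X$ by the root-respecting condition of \cref{def:knitwork_immersion}. Since $x_*$ is the unique vertex of $\GGG_X$ incident to all of $\rho(X)$ and $\gamma_d$ is injective on vertices, $\gamma_d(a_*) = x_*$, and similarly $\gamma_u(a^*) = x^*$. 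Then \cref{thm:knitting_knitwork_immersion} yields a strong immersion $\gamma \colon \knit(\HHH^A, \HHH_A; \pi(A)) \hookrightarrow \knit(\GGG^X, \GGG_X; \pi(X))$, and by \cref{lem:stitch-and-knit} these knits coincide with $\HHH$ and $\GGG$, respectively.

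For the ``only if'' direction, suppose $\gamma \colon \HHH \hookrightarrow \GGG$. The crux is to establish $\gamma(A) \subseteq X$ (and symmetrically $\gamma(\bar A) \subseteq \bar X$); this is the main obstacle of the proof. Since $\gamma$ is a rooted immersion, the $k$ paths $\{\gamma(e_i^A)\}_{i=1}^{k}$ are pairwise edge-disjoint and together contain each edge of $\rho(X)$ exactly once (each $e_i^X$ appearing in $\gamma(e_i^A)$). Consequently, for any edge $e \in E(H[A])$, the path $\gamma(e)$ is edge-disjoint from all of $\rho(X)$ and thus lies entirely on one side of the $\rho(X)$-cut, forcing both endpoints of $\gamma(e)$ onto the same side of $X$. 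Moreover, for a root edge $e_i^A$ with $A$-endpoint $v \in A$, \cref{obs:knitwork_immersion_on_mu} aligns the orientation of $e_i^A$ with that of $e_i^X$, and tracing $\gamma(e_i^A)$ through its single $\rho(X)$-edge $e_i^X$ forces $\gamma(v) \in X$. Since $H$ is connected and every edge between $A$ and $\bar A$ is a root edge, every connected component of $H[A]$ contains a vertex incident to some root edge; propagation through components of $H[A]$ thus yields $\gamma(A) \subseteq X$, and analogously $\gamma(\bar A) \subseteq \bar X$.

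With these inclusions, we define $\gamma_d \colon \HHH_A \to \GGG_X$ by setting $\gamma_d(v) = \gamma(v)$ for $v \in A$, $\gamma_d(a_*) = x_*$, $\gamma_d(e) = \gamma(e) \subseteq G[X]$ for $e \in E(H[A])$, and for a root edge $e_i^A$ with $A$-endpoint $v$ letting $\gamma_d(e_i^A)$ be the truncation of $\gamma(e_i^A)$ up to the edge $e_i^X$, reinterpreted inside $G_X$ as a path between $\gamma(v)$ and $x_*$; the construction of $\gamma_u$ is symmetric. Verifying that $\gamma_d$ is a strong $\Omega$-knitwork immersion is routine: the root-respecting property and strongness are inherited from $\gamma$ by construction; $\mu, \m, \Phi$ are inherited from $\gamma$ at vertices of $A$; $M_{\gamma_d}(x_*) = \emptyset \in \m_X(x_*)$ since no path image passes through $x_*$ as an internal vertex; and the matching condition at $a_*$ holds because $\n_A(a_*) = \mathfrak{M}(\rho(\bar A))$ and $\m_X(x_*) = \mathfrak{M}(\rho(\bar X))$ are the full sets of feasible routing types, which correspond bijectively via the root-edge correspondence $e_i^A \leftrightarrow e_i^X$ induced by $\gamma$ (feasible linkages in $H[\bar A]$ lift through $\gamma$ to feasible linkages in $G[\bar X]$ and vice versa). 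The argument for general immersion $\hookrightarrow^*$ is identical, dropping the strongness check.
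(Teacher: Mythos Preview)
Your ``if'' direction is the paper's intended argument: the corollary is stated without proof as a consequence of \cref{thm:knitting_knitwork_immersion} and \cref{lem:stitch-and-knit}. Your justification that $\gamma_d(a_*)=x_*$ is slightly glib but can be tightened: since $\gamma_d$ respects roots, each path $\gamma_d(e_i^A)$ uses exactly the single edge $e_i^X$ of $\rho_{G_X}(x_*)$, so $x_*$ cannot be an internal vertex of any such path; together with condition~2 of \cref{def:knitwork_immersion} (which forces $\gamma_d(a_*)\in\dom(\m_X)$) this pins down $\gamma_d(a_*)=x_*$ in the cases of interest.

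Your ``only if'' direction, however, has a genuine gap, and the claim $\gamma(A)\subseteq X$ is not true in general. You invoke \cref{obs:knitwork_immersion_on_mu} to align the orientation of $e_i^A$ with that of $e_i^X$, but that observation applies only to vertices in $\dom(\mu)$, and nothing in the hypotheses puts the endpoints of root edges there. The root-respecting condition in \cref{def:rooted_immersion} only demands that $\gamma(e_i^A)$ contain $e_i^X$; it does \emph{not} force the $A$-side endpoint of $e_i^A$ to land on the $X$-side of $e_i^X$. Concretely: take $H$ a directed $2$-cycle on $\{a_1,a_2\}$ with $A=\{a_1\}$ and $\pi(A)=(e_1,e_2)$, and $G$ a directed $4$-cycle $x_1\to x_2\to x_3\to x_4\to x_1$ with $X=\{x_1\}$, $\dom(\mu)=\dom(\m)=\dom(\nu)=\dom(\n)=\emptyset$, and $\pi(X)=(f_2,f_1)$ where $f_1=(x_1,x_2)$, $f_2=(x_4,x_1)$. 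Then $\gamma(a_1)=x_3$, $\gamma(a_2)=x_1$ extends to a strong rooted $\Omega$-knitwork immersion $\HHH\hookrightarrow\GGG$, yet $\gamma(a_1)\in\bar X$. Worse, one checks directly that $\HHH_A\not\hookrightarrow\GGG_X$ here: condition~2 forces $\gamma_d(a_*)=x_*$, and then $\gamma_d(e_1)$ must be a path from $x_1$ to $x_*$ containing $f_2$, which is impossible in the $2$-cycle $G_X$. So the ``only if'' direction fails for this choice of orderings; note that the paper only ever uses the ``if'' direction.
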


By \cref{obs:stitching_fundamentals} $\rho_{G_X}(X) = \rho_{G_X}(x_\ast)$ and~$\rho_{G^X}(X) = \rho_{G^X}(x^\ast)$ for respective down- and up-stitch vertices~$x_\ast,x^\ast$ as well as~$\rho_{H_A}(A) = \rho_{H_A}(a_\ast)$ and~$\rho_{H^A}(A) = \rho_{H^A}(a^\ast)$ for respective down- and up-stitch vertices~$a_\ast,a^\ast$. Using the above corollary we deduce that it suffices to look at rooted Eulerian digraphs whose roots ``come from a single cut-vertex'' when proving well-quasi-ordering for~$\Omega$-knitworks under (strong) immersion. 

\begin{definition}[Closed under taking stitches]
Let~$\Omega$ be a well-quasi-ordering and let~$\mathbf{G}$ be some class of~$\Omega$-knitworks. We say that~$\mathbf{G}$ is \emph{closed under taking stitches} if for every~$\GGG=((G,\pi(X)),\mu,\m,\Phi)$ the graphs~$\stitch(\GGG;\pi(X))$ and~$\stitch(\GGG;\pi(\bar{X}))$ are contained in~$\mathbf{G}$.
\end{definition}

We have the following useful corollary.

\begin{corollary}\label{cor:knitwork_immersion_restriction_to_rooted_in_cutvertex}
    Let~$\Omega$ be a well-quasi-ordering and let~$\mathbf{G}$ be some class of~$\Omega$-knitworks. 
    
    Let~$\stitch(\mathbf{G})$ be a minimal class of~$\Omega$-knitworks such that for~$\GGG = ((G,\pi(X)),\mu,\m,\Phi)) \in \mathbf{G}$ it holds $\stitch(\GGG,\pi(X)), \stitch(\GGG,\pi(\bar{X})) \in \stitch(\mathbf{G})$. Then~$\mathbf{G}$ is well-quasi-ordered with respect to~$\Omega$-knitwork immersion if and only if~$\stitch(\mathbf{G})$ is.
\end{corollary}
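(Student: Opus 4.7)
The strategy is to apply \cref{cor:immersion_of_stitches_yields_immersion} in both directions, combined with standard well-quasi-ordering extraction arguments to synchronise the good pairs obtained for up-stitches and for down-stitches.

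For the forward direction, assume $\mathbf{G}$ is well-quasi-ordered and let $(\HHH_\ell)_{\ell \in \N}$ be any infinite sequence in $\stitch(\mathbf{G})$. By minimality of $\stitch(\mathbf{G})$, each $\HHH_\ell$ equals either $\stitch(\GGG_\ell;\pi(X_\ell))$ or $\stitch(\GGG_\ell;\pi(\bar{X}_\ell))$ for some $\GGG_\ell = ((G_\ell,\pi(X_\ell)),\mu_\ell,\m_\ell,\Phi_\ell) \in \mathbf{G}$. By pigeonhole we pass to an infinite subsequence on which all $\HHH_\ell$ are of the same type, say up-stitches $\GGG_\ell^{X_\ell}$ (the down-stitch case is symmetric). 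Well-quasi-ordering of $\mathbf{G}$ then provides $i<j$ with $\GGG_i \hookrightarrow \GGG_j$, and \cref{cor:immersion_of_stitches_yields_immersion} yields $\GGG_i^{X_i} \hookrightarrow \GGG_j^{X_j}$, i.e.\ $\HHH_i \hookrightarrow \HHH_j$, as desired.

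For the (more substantive) backward direction, assume $\stitch(\mathbf{G})$ is well-quasi-ordered and let $(\GGG_\ell)_{\ell \in \N}$ with $\GGG_\ell = ((G_\ell,\pi(X_\ell)),\mu_\ell,\m_\ell,\Phi_\ell)$ be any infinite sequence in $\mathbf{G}$. The associated sequence of down-stitches $(\GGG_{\ell,X_\ell})_{\ell\in\N}$ lies in $\stitch(\mathbf{G})$, so by well-quasi-ordering together with \cref{obs:wqo_yields_infinite_chain} there is an infinite set $I \subseteq \N$ along which $(\GGG_{i,X_i})_{i \in I}$ forms a chain, i.e.\ $\GGG_{i,X_i} \hookrightarrow \GGG_{j,X_j}$ for every $i<j$ in $I$. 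Restricting attention to $I$, consider now the up-stitch sequence $(\GGG_i^{X_i})_{i \in I} \subseteq \stitch(\mathbf{G})$. A second application of well-quasi-ordering produces $i<j$ in $I$ with $\GGG_i^{X_i} \hookrightarrow \GGG_j^{X_j}$. For this pair both the up- and the down-stitch immerse, and \cref{cor:immersion_of_stitches_yields_immersion} delivers the sought strong immersion $\GGG_i \hookrightarrow \GGG_j$.

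The only delicate point is the double extraction in the backward direction: a single invocation of well-quasi-ordering would find a good pair either for the down-stitches or for the up-stitches, but \cref{cor:immersion_of_stitches_yields_immersion} requires both to hold for the \emph{same} pair; passing first to an infinite chain of down-stitches and then re-applying well-quasi-ordering inside it is exactly what synchronises the two. The identical argument works verbatim for the general (not necessarily strong) relation $\hookrightarrow^*$, since \cref{cor:immersion_of_stitches_yields_immersion} is stated for both.
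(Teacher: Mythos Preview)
Your proof is correct and follows essentially the same route as the paper. The paper condenses the backward direction into a single line by invoking Higman's Theorem (equivalently, \cref{obs:wqo_of_tuples}) on the pairs $(\GGG_{X_\ell},\GGG^{X_\ell})$, whereas you obtain the synchronised pair by first extracting an infinite chain for the down-stitches via \cref{obs:wqo_yields_infinite_chain} and then applying well-quasi-ordering once more to the up-stitches; these two techniques are standard equivalents for producing a common good pair.
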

\begin{proof}
    This follows at once from \cref{cor:immersion_of_stitches_yields_immersion} and Higman's \cref{thm:higman} applied to the tuple of respective down- and up-stitches.
\end{proof}

In particular if~$\mathbf{G}$ is a class closed under taking stitches, it suffices to look at the class of respective stitches, and if the number of vertices in $X$ is bounded for every $(G,\pi(X)) \in \mathbf{G}$, the question essentially boils down to whether or not the up-stitches are well-quasi-ordered as we will see in \cref{sec:bounded-case}.

\subsection{Well-quasi-ordering Graphs encoded in Trees}\label{subsec:wqo_framework_trees}

In their seminal work on graph minors, Robertson and Seymour have proved Wagner's Conjecture \cite{GMXX}, which states that undirected graphs are well-quasi-ordered by the minor relation. A first important step towards a full proof was to prove that graphs admitting bounded treewidth are well-quasi-ordered by the minor relation \cite{GMIV}. The idea behind that proof was to leverage techniques due to Kruskal \cite{Kru60} and Nash \cite{nash63} developed to prove that trees are well-quasi-ordered by topological containment, to the setting of general undirected graphs admitting some ``nice'' encoding into trees. That is, prove that trees admitting some restricted labelling are well-quasi-ordered by some form of ``labelled topological containment'' from which one can retrieve a well-quasi-order for the underlying encoded graphs. This resulted in their so-called \emph{Tree Lemma} \cite[Theorem 2.1 and 2.2]{GMIV}; we will use a version of this lemma in the main proof of \cref{sec:bounded-case}. While tree decompositions give an encoding of graphs into trees, a generic ``minimal'' tree decomposition of a graph was not enough to leverage the proof due to Kruskal and Nash; the tree decompositions need to be \emph{linked}. A tree decomposition~$(T,\beta)$ is called \emph{linked} if for every path~$P \subseteq T$ between two vertices~$v_T,w_T \in V(T)$ with~$\Abs{\beta(v_T)} = \Abs{\beta(w_T)} = k \in \N $ there either exist~$k$ edge-disjoint paths between~$\beta(v_T)$ and~$\beta(w_T)$ in~$G$ or there is an edge~$e=\{u,u'\} \in E(P)$ with~$\Abs{\beta(u) \cap \beta(u')} < k$. Thomas \cite{Tho90} proved that graphs of treewidth~$k$ admit a linked tree decomposition witnessing said width; a result later used by Robertson and Seymour in the published version of their result. 

\smallskip

A few years later Geelen, Gerards, and Whittle \cite{Gee02} dissected the proof due to Robertson and Seymour and generalized it to the setting of \emph{matroids} admitting bounded ``branch width''. They adapted the aforementioned Tree Lemma by giving a different version of it tied to cubic trees and suited for graphs of small \emph{branch width} (we provide the needed definition below). They generalize the definition of branch width of a graph as introduced by Robertson and Seymour to a whole class of parameters, where the respective branch width depends on an integer-valued symmetric submodular function; essentially depending on a choice of a separator function. In the same paper they prove that after a choice of such a separator function coming with a respective branch decomposition and branch width of a graph---in particular the carving width falls into that class---one can find a \emph{linked branch decomposition} witnessing the branch width of the graph \cite[Theorem 2.1]{Gee02}.
We want to make use of this result, which is another reason why we chose to work with carving width (besides the fact that it ``feels'' more natural when dealing with immersions): we do not need to provide another proof witnessing the existence of a witness adhering to some``linkedness property'' of some tree-like decomposition given some separator function. Hence, we will shortly introduce the results presented in \cite{Gee02} that are of direct use to us.

\smallskip

We start with some definitions that we transcribe from \cite{Gee02} to our setting.

\begin{definition}[Linked~$k$-labelled rooted Trees]
\label{def_linked_trees}
    Let~$T$ be a tree and~$k \in \N$. Let~$r \in V(T)$ be some vertex and direct all the edges of~$T$ away from~$r$, such that~$r$ has in-degree~$0$ and every other vertex has in-degree exactly~$1$. Then we call~$(T,r)$ a \emph{tree rooted at~$r$} or simply \emph{a rooted tree}.
    
    Let~$\omega: E(T) \to \{1,\ldots,k\}$ be some lanbelling function, then we call~$(T,\omega)$ a~\emph{$k$-labelled tree with label~$\omega$}, and similarly we call~$(T,r,\omega)$ a~\emph{$k$-labelled rooted tree}.

    Let~$e_1,e_2$ be two edges in~$E(T)$ with~$\omega(e_1) = \omega(e_2) = \rho$ such that there is a directed root-to-leaf path~$P$ in~$(T,r)$ visiting~$e_1$ prior to~$e_2$, and let~$P^1_2 \subseteq P$ be the sub-path with first edge~$e_1$ and last edge~$e_2$. If~$\omega(e) \geq \rho$ for every~$e \in E(P^1_2)$, then we say that~$(e_1,e_2)$ is \emph{$\omega$-linked in $(T,r,\omega)$} or~\emph{$e_2$ is~$\omega$-linked to~$e_1$}. We say that~$(T,r,\omega)$ is \emph{linked} if every pair of edges~$e,e' \in E(T)$ with~$\omega(e) = \omega(e')$ is linked.

\end{definition}
\begin{remark}
Clearly~$\omega$-linkedness is a transitive relation, i.e., if~$(f,f')$ and~$(e,f)$ are~$\omega$-linked, then~$(e,f')$ is~$\omega$-linked.
\end{remark}

The above definition is easily extended to rooted and labelled forests, where a rooted forest is a (possibly infinite) set of rooted trees with a common label~$\omega$ obtained by combining the labels of each tree to a single one. (Here we assume all the trees to have pairwise disjoint vertex and edge sets).

Let~$\FFF$ be a rooted forest and let~$S \subseteq E(\FFF)$ be some set of edges. Then we define~$u_\FFF(S)$ to be the set of those edges in~$\FFF$ whose tail is a head of an edge in~$S$; in a sense the ``child-edges'' with~$u$ referring to ``under''.

The following is the aforementioned version of the Tree Lemma \cite[Lemma 2.2]{GMIV} due to Robertson and Seymour, in the setting presented by Geelen, Gerards, and Whittle.

\begin{lemma}[Lemma on Trees; Theorem 3.1 in \cite{Gee02}]
\label{tree-lemma}
    Let~$\mathcal{F}$ be a rooted forest of~$k$-labeled trees with common label~$\omega$ for some~$k \in \N$. Let~$\preceq$ denote a quasi-order on the edges of~$\mathcal{F}$ with no infinite strictly descending sequence and such that~$e \preceq f$ whenever~$(f,e)$ is~$\omega$-linked. If the edges of~$\FFF$ are not well-quasi-ordered by~$\preceq$ then there exists an infinite antichain~$\AAA$ of edges of~$\FFF$ such that~$(u_\FFF(\AAA),\preceq)$ is a well-quasi-order.
\end{lemma}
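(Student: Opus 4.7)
The plan is to adapt the classical Nash--Williams minimal bad sequence argument underlying Kruskal's tree theorem and the Lemma on Trees of Robertson and Seymour \cite{GMIV}. Suppose for contradiction that $(E(\mathcal{F}), \preceq)$ is not well-quasi-ordered. Using the hypothesis that $\preceq$ admits no infinite strictly descending chain, I would construct inductively a \emph{minimal bad sequence} $(e_i)_{i \geq 1}$: having chosen $e_1, \ldots, e_{i-1}$ extendable to an infinite bad sequence, I would pick $e_i$ to be $\preceq$-minimal among all edges $e$ such that $e_1, \ldots, e_{i-1}, e$ still extends to an infinite bad sequence. Well-foundedness of $\preceq$ ensures such minima exist at every stage.

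Next, from $(e_i)$ I would extract an infinite antichain $\mathcal{A} \subseteq \{e_i : i \geq 1\}$ by a Ramsey colouring of pairs $\{i,j\}$, $i < j$, according to $\preceq$-comparability: on a monochromatic ``comparable'' subsequence, the badness of $(e_i)$ would force $e_j \prec e_i$ strictly (otherwise $e_i \preceq e_j$), producing an infinite strictly descending $\preceq$-chain and contradicting well-foundedness, so the monochromatic subsequence must in fact consist of pairwise $\preceq$-incomparable elements and yields the desired $\mathcal{A}$.

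For the final step, suppose for contradiction $(u_{\mathcal{F}}(\mathcal{A}), \preceq)$ is not well-quasi-ordered, and let $(f_j)_{j \geq 1}$ be a bad sequence therein. Each $f_j$ is an immediate child-edge of some $e_{k_j} \in \mathcal{A}$. By passing to a further subsequence (pigeonholing on the finite label set $\{1, \ldots, k\}$, and standardly on the finite branching at each $e_{k_j}$) I would assume the $k_j$ are strictly increasing and $\omega(f_j) = \omega(e_{k_j})$ for every $j$. Each parent--child pair $(e_{k_j}, f_j)$ is then $\omega$-linked, since its connecting root-to-leaf subpath is just the length-$2$ sequence of these two edges, both of the common label; the hypothesis therefore yields $f_j \preceq e_{k_j}$. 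The hybrid sequence $\sigma = e_1, \ldots, e_{k_1 - 1}, f_1, f_2, \ldots$ is then itself bad: internal $e$- and $f$-pairs are bad by construction, while any mixed relation $e_a \preceq f_b$ with $a < k_1$ would compose with $f_b \preceq e_{k_b}$ to give $e_a \preceq e_{k_b}$ with $a < k_1 \leq k_b$, contradicting badness of $(e_i)$. But $\sigma$ violates the $\preceq$-minimality of $e_{k_1}$, provided one rules out $f_1 \equiv e_{k_1}$ in the quasi-order; this last point follows from the antichain property of $\mathcal{A}$ by substituting $e_{k_1}$ for $f_1$ in the bad sequence $(f_j)$ and applying $f_j \preceq e_{k_j}$ once more to reach a contradiction.

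The main obstacle is aligning the tree structure of $\mathcal{F}$ with the abstract quasi-order $\preceq$: the $\omega$-linked hypothesis only produces a $\preceq$-relation along ancestor--descendant paths whose labels behave compatibly, and the pigeonhole refinement on $\{1, \ldots, k\}$ is what makes it applicable to the parent--child pairs that arise---this is also why the statement restricts to $u_{\mathcal{F}}(\mathcal{A})$ (immediate children) rather than to the full descendant forest below $\mathcal{A}$. A secondary delicacy is distinguishing strict $\preceq$-inequality from $\preceq$-equivalence when invoking minimality in the quasi-order setting, which is handled by the small substitution argument above that exploits the antichain property of $\mathcal{A}$.
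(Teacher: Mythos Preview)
The paper does not prove this lemma; it is quoted verbatim from Geelen--Gerards--Whittle \cite{Gee02} and used as a black box. So there is no ``paper's own proof'' to compare against, and the question is whether your sketch stands on its own.

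It does not, and the failure is at a single concrete step. You write that by pigeonholing on the label set $\{1,\ldots,k\}$ you may assume $\omega(f_j) = \omega(e_{k_j})$ for every $j$. Pigeonhole on the pairs $(\omega(f_j),\omega(e_{k_j})) \in \{1,\ldots,k\}^2$ only gives you a subsequence on which both coordinates are constant, say $(\rho_f,\rho_e)$; it gives no reason whatsoever that $\rho_f = \rho_e$. If $\rho_f \neq \rho_e$, the parent--child pair $(e_{k_j},f_j)$ is \emph{not} $\omega$-linked, and the hypothesis of the lemma yields nothing: you do not get $f_j \preceq e_{k_j}$.

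This single missing relation is load-bearing twice in your argument. First, the hybrid sequence $e_1,\ldots,e_{k_1-1},f_1,f_2,\ldots$ need not be bad: if $e_a \preceq f_b$ for some $a < k_1$, your transitivity step $e_a \preceq f_b \preceq e_{k_b}$ is unavailable, and there is no other route from the abstract quasi-order $\preceq$ back to the tree structure. Second, even if the hybrid were bad, you need $f_1 \prec e_{k_1}$ strictly to contradict the $\preceq$-minimal choice of $e_{k_1}$, and again this relation is simply not there. The ``finite branching'' remark does not help either: since you have already arranged the $k_j$ to be strictly increasing, the $f_j$ have pairwise distinct parents and no finiteness at a single vertex is in play.

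The actual argument in \cite{Gee02} (descending from the Robertson--Seymour tree lemma) handles the interaction between $\preceq$ and the $\omega$-labels more carefully than a single pigeonhole; the $k$-boundedness of the label set is used in a way that genuinely couples the tree structure to $\preceq$, not merely to thin a sequence. Your outline captures the Nash--Williams shape correctly but misses this coupling, and without it the minimal-bad-sequence machinery does not close.
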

\begin{remark}
    Think of the quasi-order on the edges as~$e \preceq e'$ if and only if the rooted tree below~$e$ can be ``(strongly) immersed'' in the rooted tree below~$e'$ in some sense that needs to be precise. 
\end{remark}

Geelen, Gerards, and Whittle extracted a very useful corollary to \cref{tree-lemma} suitable for cubic trees; we introduce the needed notation before stating said result. A \emph{leaf edge} of a forest~$\FFF$ is an edge~$e \in E(\FFF)$ that is adjacent to some~$v \in \leaves{\FFF}$, \emph{root-edges} and \emph{non-leaf edges} are defined in the obvious way.

\begin{definition}[Binary Forest]
    A \emph{($k$-labelled) binary forest}~$(\FFF,\mathrm{left},\mathrm{right})$ is a ($k$-labelled) rooted forest~$\FFF$ where the roots have out-degree~$1$ together with maps~$\mathrm{left},\mathrm{right}$ which are defined on the non-leaf edges such that the head of each non-leaf edge~$e \in E(\FFF)$ has exactly two distinct out-going edges~$\lenks{e},\riets{e}$.
\end{definition}

\begin{lemma}[Lemma on Cubic Trees, Theorem 3.2 in \cite{Gee02}]
\label{cubic-tree-lemma}

    Let~$k \in \N$ and let~$(\FFF,\mathrm{left},\mathrm{right})$ be an infinite~$k$-labelled binary forest with label~$\omega$. Let~$\preceq$ denote a quasi-order on~$E(\FFF)$ with no infinite strictly decreasing sequences, such that~$e\preceq f$ whenever~$(e,f)$ is~$\omega$-linked. If the leaf edges of~$\FFF$ are well-quasi-ordered by~$\preceq$ but the root edges of~$\FFF$ are not, then~$\FFF$ contains an infinite sequence~$(e_0,e_1,\ldots,)$ of non-leaf edges such that:
    \begin{itemize}
        \item[(i)] $(e_0,e_1,\ldots)$ is an antichain with respect to~$\preceq$,
        \item[(ii)] $\lenks{e_0} \preceq \ldots \preceq \lenks{e_{i-1}} \preceq \lenks{e_i} \preceq \ldots$
        \item[(iii)] $\riets{e_0} \preceq \ldots \preceq \riets{e_{i-1}} \preceq \riets{e_i} \preceq \ldots$.
    \end{itemize}
\end{lemma}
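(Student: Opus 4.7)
The plan is to derive the result by first applying the Lemma on Trees (\cref{tree-lemma}) and then refining the antichain produced by it using the binary structure of $\FFF$ together with the well-quasi-ordering guarantee on the leaf edges.

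First, I observe that $(E(\FFF), \preceq)$ is not a well-quasi-order: a bad sequence of root edges witnessing that root edges are not well-quasi-ordered is a fortiori a bad sequence in $E(\FFF)$. Therefore \cref{tree-lemma} applies and yields an infinite antichain $\AAA \subseteq E(\FFF)$ such that $(u_\FFF(\AAA), \preceq)$ is well-quasi-ordered. Note that the hypothesis of \cref{tree-lemma} that $e \preceq f$ whenever $(f,e)$ is $\omega$-linked is satisfied by assumption, and the "no infinite strictly descending sequence" hypothesis is also given.

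Next, I claim that $\AAA$ must contain infinitely many non-leaf edges. Indeed, if only finitely many edges of $\AAA$ were non-leaf, then removing them leaves an infinite antichain of leaf edges, contradicting the hypothesis that the leaf edges are well-quasi-ordered by $\preceq$. Thus I may restrict $\AAA$ to an infinite antichain of non-leaf edges and enumerate it as $(f_0, f_1, f_2, \ldots)$. For each $i$, both $\lenks{f_i}$ and $\riets{f_i}$ lie in $u_\FFF(\AAA)$ by the very definition of $u_\FFF$.

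Now I exploit that $(u_\FFF(\AAA), \preceq)$ is a well-quasi-order. By \cref{obs:wqo_yields_infinite_chain} applied to the sequence $(\lenks{f_i})_{i \in \N}$ inside $u_\FFF(\AAA)$, there exists an infinite subset $I_1 \subseteq \N$ such that $(\lenks{f_i})_{i \in I_1}$ is a chain with respect to $\preceq$. Applying \cref{obs:wqo_yields_infinite_chain} a second time to the sequence $(\riets{f_i})_{i \in I_1}$, again a sequence in the well-quasi-order $u_\FFF(\AAA)$, yields an infinite subset $I_2 \subseteq I_1$ such that $(\riets{f_i})_{i \in I_2}$ is a chain. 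Re-indexing $(f_i)_{i \in I_2}$ as $(e_0, e_1, e_2, \ldots)$ gives an infinite sequence of non-leaf edges which is simultaneously an antichain with respect to $\preceq$ (being a subsequence of the antichain $(f_i)$) and such that both $(\lenks{e_i})$ and $(\riets{e_i})$ form chains, yielding (i), (ii), and (iii). The only mild subtlety, rather than a serious obstacle, is making sure the two filtering steps for the left- and right-child chains are applied successively so that the monotonicity properties hold jointly; this is handled by nesting $I_2 \subseteq I_1$ as above.
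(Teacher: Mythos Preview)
The paper does not prove this lemma itself; it is quoted verbatim from \cite{Gee02} as a known result and used as a black box. Your derivation from \cref{tree-lemma} is precisely the standard argument given in \cite{Gee02}: apply the Lemma on Trees to obtain an infinite antichain $\AAA$ with $u_\FFF(\AAA)$ well-quasi-ordered, discard the (at most finitely many, by the leaf-edge hypothesis) leaf edges, and then filter twice using \cref{obs:wqo_yields_infinite_chain} to obtain simultaneous chains on the left and right children. This is correct.

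One cosmetic point worth flagging: the paper's transcriptions of \cref{tree-lemma} and \cref{cubic-tree-lemma} state the $\omega$-linkedness hypothesis with opposite orientations (``$e\preceq f$ whenever $(f,e)$ is $\omega$-linked'' versus ``$e\preceq f$ whenever $(e,f)$ is $\omega$-linked''). You silently assume they match when you write ``is satisfied by assumption''. This is a typo in the paper's restatement rather than a flaw in your argument --- in \cite{Gee02} the two conditions are stated consistently --- but it would be cleaner to note the intended direction explicitly rather than to assert that the hypotheses coincide as written.
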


To make use of \cref{cubic-tree-lemma}, and inspired by the above \cref{def_linked_trees}, we define \emph{linked}~carvings as follows.

\begin{definition}[Linked carving]
\label{def:linked_carving}
    Let~$G$ be an Eulerian digraph and let~$(T,\ell)$ be a carving of~$G$ of width~$k \in \N$ and let~$\epsilon\colon E(T) \to 2^{E(G)}$ be the respective edge-labelling. Let~$\omega\colon E(T) \to \{1,\ldots,k\}$ be defined via~$\omega(e) = \Abs{\epsilon(e)}$. Let~$e_1,e_2 \in E(T)$ and let~$T_1,T_2$ be subtrees of~$T$ such that~$T_i$ is the component of~$T-e_i$ not containing~$e_{(i \mod 2) + 1}$ for~$i=1,2$. Let~$P$ be a shortest path in~$T$ containing both~$e_1$ and $e_2$. We call~$\{e_1,e_2\}$ \emph{linked} if the minimum width of an edge in~$P$ equals~$\delta(V(T_1),V(T_2))$. We call~$(T,\ell)$ \emph{linked}, if every pair of edges is linked.
\end{definition}

The following is the aforementioned result due to Geelen, Gerards, and Whittle, where the definition of ``branch width'' depends on the symmetric submodular function.

\begin{theorem}[Theorem 2.1 in \cite{Gee02}]
    An integer-valued symmetric submodular function with branchwidth $n$ has a linked branch decomposition of width~$n$.
\end{theorem}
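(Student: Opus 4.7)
The plan is to argue by a minimum-counterexample argument on a lexicographic invariant of branch decompositions, in the style of Thomas's proof of existence of linked tree-decompositions, generalised to a symmetric submodular function~$\lambda$.

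First I would fix an integer-valued symmetric submodular function $\lambda$ on a finite ground set $E$ of branchwidth~$n$. Among all branch decompositions $\tau=(T,\mu)$ of width at most~$n$, I would choose one minimising the tuple $\sigma(\tau)=(w_1,w_2,\dots,w_{|E(T)|})$ obtained by listing the widths $\lambda(B_e)$ (where $B_e$ is the side of the bipartition induced by removing $e\in E(T)$) in non-increasing order and comparing lexicographically. I then suppose for contradiction that this $\tau$ is not linked, and pick any non-linked pair of edges $e_1,e_2\in E(T)$. Writing $A_i$ for the leaf-set on the side of $T-e_i$ not containing the other, set
\[
k \;:=\; \min\{\,\lambda(X) : A_1\subseteq X\subseteq E\setminus A_2\,\},
\]
realised by some $X^*$. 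By failure of linkedness, every edge $e$ on the $e_1$-to-$e_2$ path $P$ in $T$ induces a bipartition $(B_e,E\setminus B_e)$ with $A_1\subseteq B_e$, $B_e\cap A_2=\emptyset$, and $\lambda(B_e)>k$.

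The key geometric step is then to rebuild $T$ along $P$ so that its bipartitions on this path agree with $(X^*,E\setminus X^*)$. For every edge $e$ on $P$, split each of the two subtrees $T_e^{A_1}$ and $T_e^{A_2}$ hanging off~$e$ according to $X^*$ and $E\setminus X^*$; this partitions the leaves of $T$ into four classes, and by a standard uncrossing-and-reattachment procedure one can cubicly recombine these subtrees into a new tree $T'$ in which the edges corresponding to $P$ now induce bipartitions that are all ``between'' $A_1$ and $X^*$ or between $X^*$ and $E\setminus A_2$. By submodularity together with minimality of $\lambda(X^*)$, each such new bipartition has $\lambda$-value at most~$\max(\lambda(B_e),k)=\lambda(B_e)$, and in fact at most~$k$ on the whole new path. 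I would then verify, again by submodular uncrossing, that no edge of $T'$ outside the rebuilt path has width larger than the width of the corresponding edge of $T$.

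The outcome is a branch decomposition $\tau'=(T',\mu')$ of the same width bound~$n$ in which every edge on the reconstructed path has width $\leq k$, while the original $P$ had an edge of width strictly greater than $k$; consequently $\sigma(\tau')<\sigma(\tau)$ lexicographically, contradicting the choice of $\tau$. The main obstacle I expect is verifying that the rearrangement really does produce a valid cubic tree whose leaves biject with $E$, and that \emph{no} off-path edge width increases; both reduce to a careful bookkeeping argument using the symmetry and submodularity of $\lambda$ to bound $\lambda(Y')\le \lambda(Y)$ for every modified bipartition $Y\mapsto Y'$, via an application of $\lambda(Y)+\lambda(X^*)\ge \lambda(Y\cap X^*)+\lambda(Y\cup X^*)$ combined with the minimality of $\lambda(X^*)$ among separators of $A_1$ from $A_2$.
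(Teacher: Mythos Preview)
The paper does not give its own proof of this statement; it simply quotes it as Theorem~2.1 of \cite{Gee02} and immediately specialises it to carvings in Corollary~\ref{cor:linked_ebw}. So there is no in-paper proof to compare against.

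That said, your sketch is the right approach and is essentially the argument Geelen, Gerards, and Whittle use: a Thomas-style minimal-counterexample argument under a lexicographic ordering of the multiset of edge widths, followed by an uncrossing along a violating path using a $\lambda$-minimal separator $X^*$ between the two sides, with submodularity of $\lambda$ controlling the widths of the rebuilt edges. Your identification of the delicate point is also accurate: the work lies in showing that the reattached tree is again cubic with the correct leaf bijection, and that for every off-path edge the new bipartition $Y'$ (which is $Y\cap X^*$ or $Y\cup X^*$ depending on which side of the split you are on) satisfies $\lambda(Y')\le\lambda(Y)$, which follows from $\lambda(Y)+\lambda(X^*)\ge\lambda(Y\cap X^*)+\lambda(Y\cup X^*)$ together with the minimality of $\lambda(X^*)$. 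One small refinement: in the actual argument you do not rebuild the entire path at once but rather perform the uncrossing at a single offending edge (or equivalently choose $X^*$ and replace $T$ by two copies glued along $X^*$), and then the lexicographic drop comes from the fact that at least one edge of width $>k$ is replaced by edges of width $\le k$ while no width increases.
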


We immediately derive the following.
\begin{corollary}
\label{cor:linked_ebw}
    Let~$G$ be an Eulerian digraph with~carving width~$n$. Then~$G$ admits a linked carving of width~$n$.
\end{corollary}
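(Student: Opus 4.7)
The plan is to derive this corollary directly by applying the Geelen--Gerards--Whittle theorem to the symmetric submodular function that defines carving width, and then verifying that the resulting linked branch decomposition, when interpreted in our framework, yields precisely a linked carving in the sense of \cref{def:linked_carving}.

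First, I would observe that the parameter $\ebw{\cdot}$ from \cref{def:carving} is exactly the branch width (in the Geelen--Gerards--Whittle sense) of the integer-valued set function $f \colon 2^{V(G)} \to \N$ defined by $f(X) \coloneqq \Abs{\rho(X)}$. Indeed, a carving $(T,\ell)$ of $G$ is a cubic tree whose leaves are in bijection (with possibly a few dummy leaves) with $V(G)$, and for each edge $\{t_1,t_2\} \in E(T)$ the width assigned by \cref{def:carving} is $\Abs{\rho(\ell^{-1}(\leaves{T_1}))} = f(\ell^{-1}(\leaves{T_1}))$, which is exactly the width of this edge as a branch-decomposition edge with respect to $f$. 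The function $f$ is symmetric since $\rho(X)=\rho(\bar X)$, integer-valued, and submodular (this is the standard submodularity of edge cuts noted right after \cref{obs:cut_at_vertex}). Hence carvings of $G$ are precisely the branch decompositions of $V(G)$ relative to $f$, and carving width equals branch width relative to $f$.

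Next, I would apply the theorem quoted above (Theorem 2.1 of \cite{Gee02}) to $f$. Since $f$ has branch width $n$, it admits a linked branch decomposition of width $n$. Call it $(T,\ell)$. By the previous paragraph this is a carving of $G$ of width $n$.

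Finally, I would check that the notion of ``linked'' from \cite{Gee02} for branch decompositions coincides, for the function $f$, with the notion of linkedness of a carving from \cref{def:linked_carving}. Given two edges $e_1,e_2 \in E(T)$, let $T_1,T_2$ be the components of $T-e_1$ and $T-e_2$ respectively that do not contain the other edge, and let $P$ be the unique (hence shortest) path in $T$ containing both $e_1$ and $e_2$. The Geelen--Gerards--Whittle linkedness condition states that the minimum of $f$ over the ``cuts along $P$'' equals $\lambda_f(A,B)$ for the corresponding sets $A,B$ on either side, which by the submodular Menger property for $f$ (the well-known submodular version of Menger's theorem) equals $\delta(V(T_1),V(T_2))$. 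This is exactly the condition in \cref{def:linked_carving}. The only technical step I expect to require a moment of care is this identification of the two linkedness notions: one must verify that the minimum-cut value $\lambda_f$ used in \cite{Gee02} truly matches $\delta(\cdot,\cdot)$ as we defined it, and that the ``shortest path in $T$'' formulation from \cref{def:linked_carving} agrees with the descending root-to-leaf formulation implicit in the branch-decomposition setup (the tree being unrooted here causes no issue because any pair of edges lies on a unique path). Once this is observed, the corollary follows.
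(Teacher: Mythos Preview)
Your proposal is correct and matches the paper's approach exactly: the paper simply states ``We immediately derive the following'' after quoting \cite[Theorem~2.1]{Gee02}, relying on the earlier remark that carving width is an instance of the general branch width for the symmetric submodular function $X \mapsto \Abs{\rho(X)}$. You have spelled out precisely the identification (carvings $=$ branch decompositions for this $f$, and linkedness in the sense of \cref{def:linked_carving} $=$ linkedness in the sense of \cite{Gee02}) that the paper leaves implicit.
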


With \cref{cubic-tree-lemma} and \cref{cor:linked_ebw} we have gathered all the needed tools from the literature to prove the main theorem of the following section.

\section{The bounded width case} \label{sec:bounded-case}

Recall the notions introduced in \cref{subsec:knitworks}. We extend the \cref{def:carving} of carvings of Eulerian digraphs to carvings of rooted Eulerian digraphs as follows.

\begin{definition} 
    Let~$(G,\pi(X))$ be a rooted digraph and~$(T,\ell)$ a carving of~$G$ such that there exists~$e\in E(T)$ with~$\epsilon_{(T,\ell)}(e) = \rho(X)$. Then we define~$\rt(T) \coloneqq e$ to be a choice of such an edge and say that~$(T,\ell; e)$ is a \emph{rooted carving of~$(G,\pi(X))$}. The width is defined via~$\w(T,\ell;e) \coloneqq \operatorname{min}(\w(T,\ell), \Abs{X})$ and the carving width~$\ebw{G,\pi(X)}$ of the rooted Eulerian digraph is defined to be the minimum width over all rooted carvings of~$(G,\pi(X))$. 
\end{definition}
\begin{remark}
    A rooted carving always exists since~$\rho(X)$ is an induced cut.
\end{remark}

The carving width cannot change drastically when switching to rooted carvings.

\begin{lemma}\label{lem:cw_of_rooted_vs_unrooted}
    Let~$(G,\pi(X))$ be a rooted Eulerian digraph of index~$k \in 2\N$ with~$\Abs{X} \leq k$. Then~$k \leq \ebw{G,\pi(X)} \leq \ebw{G} + k$.
\end{lemma}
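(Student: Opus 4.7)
The plan is to establish the two inequalities separately. For the lower bound $k \leq \ebw{G,\pi(X)}$, I would argue directly from the definitions: given any rooted carving $(T,\ell;e)$ of $(G,\pi(X))$, the defining requirement $\epsilon_{(T,\ell)}(e) = \rho(X)$ forces $\w(e) = |\rho(X)| = k$, so $\w(T,\ell) \geq k$ and hence $\w(T,\ell;e) \geq k$. Minimising over all rooted carvings yields $\ebw{G,\pi(X)} \geq k$.

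For the upper bound $\ebw{G,\pi(X)} \leq \ebw{G} + k$, I would start from an optimal (unrooted) carving $(T,\ell)$ of $G$ of width $w \coloneqq \ebw{G}$ and construct a rooted carving $(T',\ell';r)$ of $(G,\pi(X))$ as follows. Let $S_X \subseteq T$ be the minimal subtree containing every leaf $\ell(x)$ with $x \in X$; after suppressing its degree-$2$ vertices, $S_X$ is a cubic tree with leaf set exactly $\ell(X)$. Define $S_{\bar X}$ analogously. Now subdivide an edge of $S_X$ at a new internal vertex $p_X$, do the same in $S_{\bar X}$ at $p_{\bar X}$, and add a new edge $r = \{p_X, p_{\bar X}\}$ to form the cubic tree $T'$; inherit the labelling $\ell'$ from $\ell$. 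By construction $\epsilon_{(T',\ell')}(r) = \rho(X)$, so $r$ serves as the root with width exactly $k$.

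The remaining task is to bound the width of every other edge of $T'$ by $w + k$. For an edge $e$ of $S_X$ (after suppression), let $P_e$ denote the path in the original $T$ that contracted to $e$. Pick any $f \in P_e$; then $\epsilon_T(f) = \rho(W_f)$ for some $W_f \subseteq V(G)$ with $|\rho(W_f)| \leq w$. By minimality of $S_X$, every internal vertex of $T$ that was suppressed along $P_e$ had exactly two of its three subtrees meeting $\ell(X)$; in particular the subtrees branching off $P_e$ contain no $X$-leaves. Consequently $W_f \cap X$ is the same set $Y \subseteq X$ for every $f \in P_e$, and in $T'$ the edge $e$ separates precisely $Y$ from $(X \setminus Y) \cup \bar X$, whence $\epsilon_{(T',\ell')}(e) = \rho(Y)$. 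Applying submodularity of $\delta$ with $A = W_f$ and $B = X$ gives
\[
|\rho(Y)| = \delta(W_f \cap X) \leq \delta(W_f) + \delta(X) - \delta(W_f \cup X) \leq w + k.
\]
A symmetric argument bounds the width of edges of $S_{\bar X}$, and leaf edges have width at most the degree of their vertex, which is itself at most $w$. The trivial corner cases $|X| \in \{0,1\}$ (and symmetrically $|\bar X| \in \{0,1\}$) are handled by degenerate versions of the same construction, using a blank leaf as the root edge when $k = 0$ and using the edge above $\ell(x)$ directly when $|X| = 1$. The main technical point is this submodularity bound together with the structural observation that suppressed paths of $S_X$ encounter no further $X$-leaves; the rest is routine bookkeeping.
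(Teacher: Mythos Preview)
Your argument is correct, but your route to the upper bound is genuinely different from the paper's and somewhat more elaborate than necessary.

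The paper does not project an optimal carving of $G$ down to $X$ and $\bar X$. Instead it takes \emph{separate} optimal carvings $(T_0,\ell_0)$ of $G[X]$ and $(T_1,\ell_1)$ of $G[\bar X]$, subdivides one edge in each, and joins the two trees by a new edge $e$. That edge automatically satisfies $\epsilon(e)=\rho(X)$, and for any other edge $f$ of, say, $T_0$, the set on one side is some $Y\subseteq X$; hence $\rho_G(Y)$ decomposes as $\rho_{G[X]}(Y)$ together with edges between $Y$ and $\bar X$, the latter being contained in $\rho(X)$. This gives $\w(f)\leq \ebw{G[X]}+k\leq \ebw{G}+k$ directly, without invoking submodularity. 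The only implicit step is that $\ebw{G[X]}\leq \ebw{G}$, which follows since deleting edges cannot increase carving width.

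Your construction works too: projecting the carving of $G$ onto the minimal subtree over $\ell(X)$ and using $\delta(W_f\cap X)\leq \delta(W_f)+\delta(X)$ is a clean application of submodularity. The trade-off is that you carry the extra bookkeeping of suppressed paths and the observation that $W_f\cap X$ is constant along each such path, whereas the paper sidesteps all of this by never mixing $X$- and $\bar X$-leaves in the same subtree to begin with. Both arguments yield the same bound; the paper's is just shorter.
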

\begin{proof}
    The lower bound is clear. For the upper bound we construct a rooted carving witnessing said width as follows: Let~$(T_0,\ell_0)$ and~$(T_1,\ell_1)$ be carvings of~$G[X]$ and~$G[\bar{X}]$ witnessing the respective carving-width. Subdivide an edge of~$T_i$ adding a new vertex~$t_i$ of degree two for~$i=1,2$. Then define~$T$ to be the tree obtained from the forest~$\{T_0,T_1\}$ by adding the edge~$e=(t_0,t_1)$ between the two trees; note that~$\leaves{T} = \leaves{T_0} \cup \leaves{T_1}$. Define~$\ell \colon \leaves{T} \to V(G)$ via~$\restr{\ell}{\leaves{T_i}} = \ell_i$ for $i=1,2$. Then~$(T,\ell,e)$ is a rooted carving of~$(G,\pi(X))$ satisfying the claimed width since~$\w{(T_i,\ell_i)} \leq \ebw{G}$.
\end{proof}

The main result of this section reads as follows.

\begin{theorem}\label{thm:bounded_carving_with_non_labelled}
    Let~$k\in \N$, then the class of rooted Eulerian digraphs admitting carving-width at most~$k$ are well-quasi-ordered by strong immersion.
\end{theorem}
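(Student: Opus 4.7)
The plan is to argue by contradiction using the machinery of linked carvings (\cref{cor:linked_ebw}) combined with the Lemma on Cubic Trees (\cref{cubic-tree-lemma}), in the spirit of Geelen--Gerards--Whittle \cite{Gee02}. Suppose that there is an infinite bad sequence $(\bar G_i)_{i \in \N}$ with $\bar G_i = (G_i, \pi_i(X_i))$, each of carving width at most $k$. Using \cref{cor:linked_ebw}, equip each $G_i$ with a linked carving of width $\leq k$, anchored so that the distinguished edge $r_i$ satisfies $\epsilon(r_i) = \rho(X_i)$. By rooting $T_i$ at the (possibly phantom) leaf on the opposite side of $r_i$ from $X_i$, turn each $T_i$ into a rooted binary tree, label edges by $\omega_i(e) \coloneqq |\epsilon(e)| \in \{1,\dots,k\}$, and let $\mathcal F \coloneqq \bigsqcup_{i} T_i$ be the resulting $k$-labelled binary forest.

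For each edge $e \in E(T_i)$, let $X_e \subseteq V(G_i)$ be the preimages under $\ell_i$ of the leaves in the subtree below $e$ and set $\bar G_i[e] \coloneqq (G_{i,X_e}, \pi_e(\epsilon(e)))$, where $G_{i,X_e}$ is the down-stitch of $G_i$ at $X_e$ (\cref{def:stitching_std}) and $\pi_e$ is an ordering of $\epsilon(e)$ fixed up front. By \cref{lem:cw_of_rooted_vs_unrooted}, every $\bar G_i[e]$ has carving width bounded in terms of $k$; moreover $\bar G_i[r_i]$ is (up to the trivial bead side) the original $(G_i, \pi_i(X_i))$. Define the quasi-order $\preceq$ on $E(\mathcal F)$ by $e \preceq e'$ iff $|\epsilon(e)| = |\epsilon(e')|$ and $\bar G[e] \hookrightarrow \bar G[e']$ as rooted Eulerian digraphs for some orderings of the cut edges. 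Since there are only boundedly many orderings of cuts of size $\leq k$, a preliminary application of Higman's \cref{thm:higman} reduces the bad sequence to one where the orderings on corresponding cuts agree.

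Next, verify the hypotheses of \cref{cubic-tree-lemma}: \textbf{(a)} leaf edges correspond to $X_e$ being a single vertex, and as there are only finitely many rooted Eulerian digraphs on one vertex with at most $k$ roots, the leaf edges are trivially well-quasi-ordered by $\preceq$; \textbf{(b)} $\preceq$ has no infinite strictly descending sequence, as any strict immersion strictly decreases the number of edges; \textbf{(c)} $\omega$-linked pairs are comparable: if $(f,e)$ is $\omega$-linked with $\omega(e)=\omega(f)=\rho$, then $X_e \subseteq X_f$ and by the linkedness of the carving together with \cref{lem:Menger_for_Euler} (Menger for Eulerian digraphs) there exist $\rho/2$ directed edge-disjoint paths in each direction between $\epsilon(f)$ and $\epsilon(e)$ inside $G_i[X_f \setminus X_e]$; appending these paths to the identity on $G_i[X_e]$ gives a strong rooted immersion $\bar G_i[e] \hookrightarrow \bar G_i[f]$, matching cut orderings via the Higman-filtered canonical choice.

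Applying \cref{cubic-tree-lemma} now yields an infinite antichain $(e_0, e_1, \dots)$ of non-leaf edges with $\lenks{e_0} \preceq \lenks{e_1} \preceq \cdots$ and $\riets{e_0} \preceq \riets{e_1} \preceq \cdots$. For each $e_i$, the cut inside $\bar G[e_i]$ corresponding to the head of $e_i$ decomposes $\bar G[e_i]$ via down- and up-stitches into exactly $\bar G[\lenks{e_i}]$ and $\bar G[\riets{e_i}]$. Invoking the Stitch-and-Knit \cref{thm:knitting_knitwork_immersion} (specialised to trivial $\mu,\m,\Phi$ so that it applies to plain rooted Eulerian digraphs), the strong immersions $\bar G[\lenks{e_i}] \hookrightarrow \bar G[\lenks{e_j}]$ and $\bar G[\riets{e_i}] \hookrightarrow \bar G[\riets{e_j}]$ knit together into a strong immersion $\bar G[e_i] \hookrightarrow \bar G[e_j]$ for $i<j$, contradicting the antichain property. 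The main technical obstacle will be bookkeeping of the orderings of cut edges: one must guarantee, via iterated Higman-filtering, that the immersions produced for left and right children share a compatible ordering on the shared cut at $e_i$, so that the hypotheses $\gamma_d(b_\ast) = y_\ast$ and $\gamma_u(b^\ast) = y^\ast$ in \cref{thm:knitting_knitwork_immersion} can be ensured; the Eulerian property is crucial throughout since \cref{lem:Menger_for_Euler} supplies paths in \emph{both} directions, which is what allows the Menger paths at $\omega$-linked pairs to be oriented into the fixed canonical ordering.
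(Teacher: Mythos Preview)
Your overall strategy matches the paper's: linked carvings via \cref{cor:linked_ebw}, root each $T_i$, attach to every tree-edge $e$ the stitched subgraph below it, feed the resulting $k$-labelled binary forest into \cref{cubic-tree-lemma}, and knit the left/right immersions at the end. The paper actually runs the argument in the more general $\Omega$-knitwork setting (\cref{thm:wqo_bounded_carvingwidth_knitworks_red}) and specialises afterwards, but for the unlabelled \cref{thm:bounded_carving_with_non_labelled} your direct route is equivalent.

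The gap is in the ordering bookkeeping, which you correctly flag as the main obstacle but do not actually resolve. Defining $\preceq$ with existential orderings and then ``Higman-filtering'' does not work as stated: with existential orderings the chains $\lenks{e_0}\preceq\lenks{e_1}\preceq\cdots$ and $\riets{e_0}\preceq\riets{e_1}\preceq\cdots$ produced by \cref{cubic-tree-lemma} may be witnessed by immersions using \emph{incompatible} bijections on the shared edges $\epsilon(\lenks{e_i})\cap\epsilon(\riets{e_i})$, and there is no bounded-size datum to pigeonhole on that forces compatibility across the whole chain; conversely, with orderings fixed arbitrarily upfront, $\omega$-linked pairs need not be $\preceq$-comparable, so the hypothesis of \cref{cubic-tree-lemma} fails. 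The paper's fix is to \emph{propagate} orderings down each tree from the root edge using the Menger linkages themselves: at an $\omega$-linked descendant $f$ of $e$, take $\pi(X_f)$ to be the order in which the $\omega(e)$ Menger paths from $\rho(X_e)$ land on $\rho(X_f)$, and restart arbitrarily at non-linked children. This makes $\omega$-linked pairs comparable \emph{with fixed orderings} by construction; after \cref{cubic-tree-lemma}, a short pigeonhole pass (on cut sizes and on the finitely many orderings of $\epsilon(\lenks{f_i})\cap\epsilon(\riets{f_i})$) aligns the remaining data, and the knitting goes through exactly as you sketch.
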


However, we will not prove \cref{thm:bounded_carving_with_non_labelled} in the stated form, since we will need a much stronger version for later sections and future work we are pursuing: We want to prove the result for well-linked~$\Omega$-knitworks. To this extent, recall \cref{def:knitwork}.

\begin{definition}
    Let~$\Omega$ be a well-quasi-order and let~$\GGG=(\bar{G},\mu,\m,\Phi)$ be an~$\Omega$-knitwork. We define the \emph{carving width} of~$\GGG$ to be the carving width of~$\bar{G}$ and write~$\ebw{\GGG} = \ebw{\bar{G}}$. 
\end{definition}

\begin{theorem}[Well-Quasi-Order for bounded Carving Width]\label{thm:wqo_bounded_carvingwidth_knitworks}
    Let~$k \in \N$ and let~$\Omega$ be a well-quasi-order. Let~$(\GGG_i)_{i\in \N}$ be a sequence of well-linked~$\Omega$-knitworks such that~$\ebw{\GGG_i} \leq k$ for all~$i \in \N$. Then there exist~$j>i \geq 1$ such that~$\GGG_i \hookrightarrow \GGG_j$ by strong $\Omega$-knitwork immersion.
\end{theorem}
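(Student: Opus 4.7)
The plan is to argue by contradiction via the Lemma on Cubic Trees (\cref{cubic-tree-lemma}) combined with the stitch-and-knit machinery developed in \cref{subsec:knitworks}. Suppose $(\GGG_i)_{i \in \N}$ is an infinite bad sequence of well-linked $\Omega$-knitworks of carving width at most $k$. For each $i$, using \cref{lem:cw_of_rooted_vs_unrooted} and \cref{cor:linked_ebw} I would fix a \emph{linked} rooted carving $\tau_i = (T_i,\ell_i;r_i)$ of the underlying $\bar G_i$ of width at most $2k$, view $T_i$ as a binary tree rooted at $r_i$ (attaching a virtual root to $r_i$ if needed), and form the binary forest $\FFF \coloneqq \bigsqcup_i T_i$ with labelling $\omega(e) \coloneqq \Abs{\epsilon_{\tau_i}(e)} \in \{2,4,\dots,2k\}$. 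For each edge $e \in E(T_i)$ let $X_i(e) \subseteq V(G_i)$ collect the vertices whose $\ell_i$-images lie below $e$ in the rooted tree, and associate to $e$ the $\Omega$-knitwork $\GGG_i^e \coloneqq \stitch(\GGG_i;\pi(X_i(e)))$: this keeps $G_i[X_i(e)]$, contracts $\bar{X_i(e)}$ to a bead $e^*$, has index $\omega(e)$, has carving width at most $k$, and remains well-linked by \cref{obs:up-stitch_of_well-linked_is_well-linked}. Observe that $\GGG_i^{r_i} \cong \GGG_i$, while at a leaf edge lying above a vertex $v \in V(G_i)$ the knitwork $\GGG_i^e$ is a two-vertex object encoding only the local data $(\Phi_i(v),\mu_i(v),\m_i(v))$ together with the cut type at $v$.

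Next I would define $e \preceq f$ to hold exactly when $\omega(e)=\omega(f)$ and $\GGG_i^e \hookrightarrow \GGG_j^f$ by strong $\Omega$-knitwork immersion; this is a quasi-order on $E(\FFF)$ by \cref{lem:knitwork_imm_is_quasi_order}. The four hypotheses of \cref{cubic-tree-lemma} would then be checked as follows. \emph{No infinite $\preceq$-descent}: strong immersion is injective on vertices and edge-disjoint on paths, so cardinalities weakly decrease along $\preceq$-descents. \emph{$\omega$-linkedness implies $\preceq$}: if $(f,e)$ is $\omega$-linked in $T_i$ with common label $\rho$, then the linkedness of $\tau_i$ gives $\delta(X_i(e),\bar{X_i(f)}) = \rho$ in $G_i$; the Eulerian Menger \cref{lem:Menger_for_Euler} then produces $\rho$ pairwise edge-disjoint $G_i$-paths pairing $\rho(X_i(e))$ with $\rho(X_i(f))$, and routing the cut edges of $\GGG_i^e$ along these paths (while keeping $G_i[X_i(e)]$ fixed and sending $e^*$ to $f^*$) yields an immersion $\GGG_i^e \hookrightarrow \GGG_i^f$. \emph{Leaf edges are well-quasi-ordered}: each $\GGG_i^e$ at a leaf edge is determined up to $\omega(e)$, the finite set of local orderings and matchings (bounded in $k$), and the $\Omega$-label $\Phi_i(v)$, so Higman's \cref{thm:higman} applies. \emph{Root edges are not well-quasi-ordered}: since $\GGG_i^{r_i} \cong \GGG_i$, any $\preceq$-comparison among the root edges would yield $\GGG_i \hookrightarrow \GGG_j$, contradicting the badness of $(\GGG_i)$.

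\cref{cubic-tree-lemma} then provides an infinite antichain $(e_n)_{n \in \N}$ of non-leaf edges in $\FFF$ with $\lenks{e_n} \preceq \lenks{e_{n+1}}$ and $\riets{e_n} \preceq \riets{e_{n+1}}$ for all $n$. Since $\GGG_{i_n}^{e_n}$ is exactly the knitwork obtained by knitting $\GGG_{i_n}^{\lenks{e_n}}$ and $\GGG_{i_n}^{\riets{e_n}}$ along the cut $\epsilon(e_n)$, \cref{thm:knitting_knitwork_immersion} (stitch-and-knit) merges the immersions $\lenks{e_0} \preceq \lenks{e_1}$ and $\riets{e_0} \preceq \riets{e_1}$ into a strong $\Omega$-knitwork immersion $\GGG_{i_0}^{e_0} \hookrightarrow \GGG_{i_1}^{e_1}$, contradicting the antichain property and closing the argument.

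The hard part will be producing the $\omega$-linked step as a genuine \emph{$\Omega$-knitwork} immersion rather than merely an immersion of underlying digraphs: condition~(5) of \cref{def:knitwork_immersion} demands that the feasible-matchings sets at the beads $e^*$ and $f^*$ correspond under the induced relabelling of cut edges, yet these beads represent the nested regions $\bar{X_i(e)} \supseteq \bar{X_i(f)}$ with a priori incomparable matching structures. I would resolve this by exploiting well-linkedness of $\GGG_i$: every feasible matching at $e^*$ arising from a linkage through $\bar{X_i(e)}$ can be rerouted through the annular region $X_i(f)\setminus X_i(e)$ using the Menger linkage of width $\rho$, and well-linkedness supplies a compatible linkage in $\bar{X_i(f)}$ realising the induced matching at $f^*$ (and symmetrically). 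The same rerouting also verifies condition~(4) at $f^*$, so the constructed map is a strong $\Omega$-knitwork immersion as required.
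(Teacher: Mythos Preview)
Your plan follows essentially the same route as the paper: reduce via linked carvings to a binary forest, associate to each tree edge the up-stitch of the knitwork at the corresponding cut, verify the hypotheses of the Lemma on Cubic Trees, and derive a contradiction by knitting the left/right immersions of the resulting antichain. This is correct in outline, but your identification of the ``hard part'' is off, and one genuine ingredient is missing.

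First, your worry about condition~(5) of \cref{def:knitwork_immersion} at the bead vertices $e^*,f^*$ is misplaced: by \cref{def:stitching_knitwork} the up-stitch vertex is \emph{not} in $\dom(\m)$, so conditions~(4) and~(5) are vacuous there (the paper notes this explicitly). The actual obstacle is condition~(4) at the vertices $v\in\dom(\m)$ lying in the \emph{annulus} $X_i(f)\setminus X_i(e)$: the Menger linkage you route through this region passes through such $v$, and you must guarantee $M_\gamma(v)\in\m(v)$. This is precisely where well-linkedness is used: one shows that the linkage can be rerouted at each offending $v$ so that its local matching lies in $\m(v)$, using the partition property in \cref{def:well-linked_links}. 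Your final paragraph gestures at this but aims it at the wrong target.

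Second, you never say how the orderings $\pi(X_i(e))$ are chosen. For the $\omega$-linked step to give a \emph{rooted} knitwork immersion (conditions~(1) and~(3)), the orderings at $e$ and $f$ must be compatible with the Menger linkage between $\rho(X_i(e))$ and $\rho(X_i(f))$. The paper handles this by fixing the orderings inductively down the tree: at each edge one propagates the ordering from the nearest $\omega$-linked ancestor via the linkage, so that by construction the immersion respects roots. Without this, ``$\omega$-linked implies $\preceq$'' does not follow. Finally, the paper first reduces to the case where each $\GGG_i$ is rooted at a single vertex $x_i\notin\dom(\Phi_i)\cup\dom(\m_i)$ (via \cref{cor:immersion_of_stitches_yields_immersion}), which makes the root edge of $T_i$ canonical and simplifies the leaf and root analyses; you may want to do the same.
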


 We prove the following theorem which easily implies \cref{thm:wqo_bounded_carvingwidth_knitworks} as we will see later.
 
\begin{theorem}\label{thm:wqo_bounded_carvingwidth_knitworks_red}
     Let~$k \in \N$ and let~$\Omega=(V(\Omega),\ll)$ be a well-quasi-order. Let~$(\GGG_i)_{i\in \N}$ be a sequence of well-linked~$\Omega$-knitworks~$\GGG_i=((G_i,\pi(x_i)),\mu_i,\m_i,\Phi_i)$ for some~$x_i \in V(G_i)$ with~$x_i \notin \dom(\Phi_i) \cup \dom(\m_i)$, $\mu_i(x_i) = \pi(x_i)$, and such that~$\ebw{\GGG_i} \leq k$ for all~$i \in \N$. Then there exist~$j>i \geq 1$ such that~$\GGG_i \hookrightarrow \GGG_j$ by strong $\Omega$-knitwork immersion.
\end{theorem}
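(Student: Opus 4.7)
The plan is a proof by contradiction using the Lemma on Cubic Trees (\cref{cubic-tree-lemma}) applied to an auxiliary binary forest built from linked carvings. Assume the theorem fails: take an infinite bad sequence $(\GGG_i)_{i\in\N}$ with $\GGG_i\not\hookrightarrow\GGG_j$ for all $i<j$, and, since $\hookrightarrow$ is well-founded (each strict immersion strictly reduces $|V|+|E|$), refine by Ramsey to an $\hookrightarrow$-antichain where $\GGG_i$ and $\GGG_j$ are pairwise incomparable for $i\neq j$. For each $i$, \cref{cor:linked_ebw} together with \cref{lem:cw_of_rooted_vs_unrooted} yields a linked rooted carving $(T_i,\ell_i;r_i)$ of $(G_i,\pi(x_i))$ of width at most $k$. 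Rooting each $T_i$ at the leaf labelled $x_i$ turns it into a binary tree with $r_i$ the root edge; labelling each edge by $\omega(e):=|\epsilon(e)|\in\{1,\ldots,k\}$ yields a $k$-labelled binary forest $\FFF:=\bigsqcup_i T_i$.

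To each edge $e\in E(T_i)$ and a chosen ordering $\pi_e$ of $\epsilon(e)$, associate the sub-knitwork $\GGG_i(e;\pi_e):=\stitch(\GGG_i;\pi_e(Y_e))$, where $Y_e\subseteq V(G_i)$ consists of the leaf labels below $e$ in $T_i$; this keeps $Y_e$ and collapses the rest into a new stitch vertex $y_e^\ast$ rooted at $\pi_e$. By \cref{obs:up-stitch_of_well-linked_is_well-linked}, $\GGG_i(e;\pi_e)$ is a well-linked $\Omega$-knitwork rooted at a single vertex, with link $\mathfrak{M}(\rho(\bar Y_e))$ at $y_e^\ast$. Define a quasi-order $\preceq$ on $E(\FFF)$ so that $\omega$-linkedness of $(e,f)$ in some $T_i$ forces $e\preceq f$, using the $\omega(e)$ edge-disjoint paths in $G_i[Y_e\setminus Y_f]$ furnished by the linkedness of the carving and \cref{lem:Menger_for_Euler} (which re-route $\epsilon(e)$ to $\epsilon(f)$ and induce an $\Omega$-knitwork immersion between the two sub-knitworks), and across different trees compares sub-knitworks by strong $\Omega$-knitwork immersion. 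Transitivity follows from \cref{lem:knitwork_imm_is_quasi_order}, and well-foundedness from strict descent of $|V(\GGG(e))|+|E(\GGG(e))|$ under proper immersion.

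The three hypotheses of \cref{cubic-tree-lemma} then hold. Leaf edges encode rooted two-vertex knitworks $(v,y^\ast)$ with data $(\Phi_i(v),\mu_i(v),\m_i(v))$ together with loops at $v$; as $|\rho(v)|\leq k$, the non-$\Omega$ data lies in a bounded finite set and $\Phi_i(v)\in V(\Omega)$, so Higman's \cref{thm:higman} yields that leaf edges are well-quasi-ordered. The $\omega$-linked condition is built into $\preceq$ by construction. Using that $\GGG_i\cong\GGG_i(r_i;\pi(x_i))$ (up to renaming $x_i$ as $y^\ast$), the pairwise $\hookrightarrow$-incomparability of the $\GGG_i$ translates into the root edges of $\FFF$ not being well-quasi-ordered. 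Applying \cref{cubic-tree-lemma} returns an infinite antichain of non-leaf edges $(e_0,e_1,\ldots)$ such that $(\lenks{e_i})_i$ and $(\riets{e_i})_i$ are both $\preceq$-chains.

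The main obstacle is to splice the two child chains into an $\Omega$-knitwork immersion between parents, contradicting the antichain. First apply a finite Higman refinement so that $\omega(e_i),\omega(\lenks{e_i}),\omega(\riets{e_i})$ and the canonical orderings at the three cuts are constant along a sub-sequence. For $i<j$ in this refined sub-sequence the chain hypotheses provide strong $\Omega$-knitwork immersions between the corresponding child sub-knitworks, which are then spliced via the Stitch-and-Knit \cref{thm:knitting_knitwork_immersion} into a single $\Omega$-knitwork immersion relating $\GGG(e_i)$ and $\GGG(e_j)$, yielding $e_i\preceq e_j$ and contradicting the antichain. The key technical step is verifying that the matchings induced at the two child stitch vertices by the respective child immersions are jointly compatible with a matching in the parent's link $\m(y_{e}^\ast)=\mathfrak{M}(\rho(\bar Y_{e}))$; this is precisely where the well-linked hypothesis on each $\GGG_i$ is essential, as it guarantees that for any bipartition of the parent's cut into the two children's cut pieces, $\mathfrak{M}(\rho(\bar Y_e))$ contains a matching respecting that split. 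This contradiction completes the proof.
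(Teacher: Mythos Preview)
Your overall architecture matches the paper's proof: linked carvings, a $k$-labelled binary forest, a quasi-order on edges via up-stitched sub-knitworks, then the Cubic Tree Lemma followed by a knitting argument on the resulting antichain. The ingredients and the sequence in which you invoke them are essentially the same.

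There is, however, a genuine gap in where you locate the role of well-linkedness. You assert that the $\omega(e)$ edge-disjoint paths from Menger ``induce an $\Omega$-knitwork immersion'' between $\GGG_i(e)$ and $\GGG_i(f)$ for $\omega$-linked $(e,f)$, but this is exactly the nontrivial step. The linkage $\LLL$ runs through the region $X_f\cap X^e$, which may contain vertices $v\in\dom(\m_i)$, and condition~(4) of \cref{def:knitwork_immersion} requires $M_\gamma(v)\in\m_i(v)$ at each such $v$. An arbitrary Menger linkage need not satisfy this. The paper handles this in its Claim~\ref{claim:thm_wqo_bounded_carvingwidth_knitworks_red_feasible_linkages}: starting from any linkage, one iteratively re-routes at each offending $v$ using the well-linkedness of $\m_i(v)$ to produce a linkage with $M_\LLL(v)\in\m_i(v)$ for all relevant $v$. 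This is the only place well-linkedness is used, and it is indispensable.

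Conversely, the final splicing step does \emph{not} need well-linkedness. Your claim that one must check compatibility with ``the parent's link $\m(y_e^\ast)=\mathfrak{M}(\rho(\bar Y_e))$'' is based on a misreading of the stitching definitions: up-stitch vertices are not placed in $\dom(\m)$ at all (see \cref{def:stitching_knitwork}), so there is no link at $y_e^\ast$ to verify. When you glue $\gamma_l$ and $\gamma_r$, any $v\in\dom(\m_j)$ in the target lies entirely in one child's piece, and $M_\gamma(v)$ is simply inherited from the corresponding $\gamma_l$ or $\gamma_r$; the edges in $\epsilon(\lenks{e_j})\cap\epsilon(\riets{e_j})$ are concatenated without any $\m$-constraint at the seam. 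So your diagnosis of the ``key technical step'' is inverted: move the well-linkedness argument from the splicing step to the $\omega$-linkedness $\Rightarrow$ $\preceq$ step, and supply the re-routing argument there.
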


\begin{proof}
Towards a contradiction, assume that the theorem is false. Let~$(\GGG_i)_{i \in \N}$ be a bad sequence of well-linked~$\Omega$-knitworks with~$\ebw{\GGG_i} \leq k $ for all~$i \in \N$ with respect to strong~$\Omega$-knitwork immersion. Let~$\GGG_i=(\bar{G_i},\mu_i,\m_i,\Phi_i)$ where~$\bar G_i = (G_i,\pi_{G_i}(x_i))$ for some~$x_i \in V(G_i)$ with~$x_i \notin \dom(\Phi_i)\cup \dom(\m_i)$ for every~$i \in \N$. By \cref{lem:cw_of_rooted_vs_unrooted}~$\delta(x_i) \leq k$ for every~$i \in \N$ and by switching to a suitable infinite subsequence indexed by $I\subseteq \N$ we may assume that~$\delta(x_i) = \delta(x_j)$ for every~$i,j \in I$. Additionally assuming minimality on $k$ we find $\delta(x_i) = k$ for every $i \in I$; identify $I\cong \N$ for simplicity.

    \smallskip
    
Since the~$\GGG_i$ are rooted in a single vertex~$x_i$ we have~$\ebw{\GGG_i} = \ebw{G_i}$ for every~$i \in \N$ (see the proof of \cref{lem:cw_of_rooted_vs_unrooted}). By \cref{cor:linked_ebw} there exists a linked carving~$(T_i,\ell_i)$ for~$G_i$ and every~$i \in \N$ witnessing its carving width and in particular witnessing~$\ebw{\GGG_i}$. For every~$i \in \N$, let~$e_i \in E(T_i)$ be the unique edge adjacent to~$r_i \coloneqq \ell_i(x_i)$, then~$(T_i,\ell_i;e_i)$ is a rooted carving of~$\bar G_i$ and we root the cubic trees~$T_i$ at~$r_i$ (thus the edges of the trees~$T_i$ are directed from here on). Let~$\epsilon_i \coloneqq \epsilon_{(T_i,\ell_i)}$ and define~$\omega_i(e) \coloneqq \Abs{\epsilon_i(e)}$ for every~$e \in E(T_i)$ and every~$i \in \N$. Then, for every $i\in \N$,~$\omega_i\colon E(T_i) \to \{1,\ldots,k\}$ is a labelling function and~$(T_i,r_i,\omega_i)$ is a~$k$-labelled rooted tree with label~$\omega_i$. Fix $i\in \N$ and let~$e=(u,v) \in E(T_i)$ be some directed edge. Let~$T_i^u$,~$T_i^v$ be the two unique components of~$T_i-e$, where~$T_i^u$ contains~$u$ and~$T_i^v$ contains~$v$. We define~$X_e \coloneqq \ell_i^{-1}(\leaves{T_i^u})$ and~$X^e \coloneqq \ell_i^{-1}(\leaves{T_i^v})$. Clearly~$X^e = \bar{X_e}$ by definition; in particular we have the following.

\begin{claim}\label{claim:thm_wqo_bounded_carvingwidth_knitworks_red_uno}
    $X_e$ induces a rooted cut in~$\bar G_i$ and~$\rho(X_e) = \epsilon_i(e) = \rho(X^e)$.
\end{claim}
\begin{claimproof}
    Since~$(T_i,r_i)$ is a rooted tree underlying $(T_i,\ell_i;e_i)$, where~$r_i =\ell(x_i)$, it follows that~$x_i \in X_e$. Thus~$X_e$ induces a rooted cut; the rest follows trivially by definition.
\end{claimproof}

 Assuming that~$(V(T_i)\cup E(T_i)) \cap (V(T_j)\cup E(T_j)) = \emptyset$ for distinct~$ i,j \in \N$---otherwise rename the trees accordingly---we let~$(\FFF,\mathrm{left},\mathrm{right})$ be the binary forest obtained by the union of all the rooted trees~$(T_i,r_i)$ after a choice of~$\mathrm{left}$ and~$\mathrm{right}$ maps for each rooted tree which is feasible for they are cubic. Let~$\omega$ be the respective labelling function for~$\FFF$ obtained via~$\restr{\omega}{E(T_i)} = \omega_i$ for every $i\in \N$. We continue by defining a quasi-order~$\preceq$ on the edges of the forest that respect~$\omega$-linkedness. To this extent we need the following technical result building the bridge between the \cref{def:linked_carving} of linked carvings and $\omega$-linkedness. Intuitively speaking the following claim states that, given a pair $(e,f)$ of $\omega$-linked edges in $T_i$, the cuts (of equal order $t\in 2\N$) that they induce in $G_i$ can be connected by a $t$-linkage $\LLL_i$ such that $\LLL_i$ respects the restrictions imposed by $\m_i$, i.e., it is a ``feasible'' linkage for every $v \in \dom(\m_i)$.  

 \begin{claim}\label{claim:thm_wqo_bounded_carvingwidth_knitworks_red_feasible_linkages}
      Let~$i \in \N$ be fixed and~$e,f \in E(T_i)$ be two edges such that~$(e,f)$ is~$\omega$-linked and let~$t \coloneqq \omega(e) = \omega(f)$. Then there exists a linear~$\{\rho(X^f),\rho(X_e)\}$-linkage~$\LLL_i$ of order~$t$ such that~$M_{\LLL_i}(v) \in \m_i(v)$ for every~$v \in \dom(\m) \cap X_f \cap X^e$. Further~$\bigcup_{P \in \LLL_i}V(P) \cap (X^f \cup X_e) = \emptyset$ and every path~$P \in \LLL_i$ has one endpoint in~$X^f$ and one endpoint in~$X_e$.
 \end{claim}
 \begin{claimproof}
    Since the carving~$(T_i,\ell_i)$ is linked, using the \cref{def:linked_carving}, \cref{lem:Menger_for_Euler} together with the \cref{def:linked_carving} together with \cref{obs:linkage_gives_linear_linkage} imply the existence of a linear~$\{\rho(X^f),\rho(X_e)\}$-linkage~$\LLL_i$ of order~$t$ in $G_i$ such that for every~$P \in \LLL_i$ it holds~$V(P) \subseteq X_f \cap X^e$ by restricting the linkage accordingly (recall that~$V(P)$ are the internal vertices of~$P$ by \cref{def:paths}). 
    
    Assume towards a contradiction that~$\LLL_i$ is chosen to satisfy the above and so that the set of vertices~$F=\{v \in \dom(\m) \cap X_f \cap X^e \mid M_{\LLL_i}(v) \not \in \m_i(v)\}$ has minimal cardinality; if~$\Abs{F} = 0$ we are done. Let~$v \in F$, then since~$\LLL_i$ is a linkage, every~$(f,f') \in M_{\LLL_i}(v)$ is a sub-path of exactly one path in~$\LLL_i$, and since it is linear every $P \in \LLL_i$ admits at most one tuple $(f,f') \in M_{\LLL_i}(v)$ as a subpath. Let  $\LLL_i=\{P_1,\ldots,P_t\}$ and without loss of generality let~$\{(f_j,f_j') \mid 1 \leq j \leq \tau\} = M_{\LLL_i}(v)$ for some~$\tau \leq t$ and assume~$(f_j,f_j') \subset P_j$ for every~$1 \leq j \leq \tau$. Let~$E^-\coloneqq \{f_j \mid 1 \leq j \leq \tau\}$ and~$E^+ \coloneqq \{f_j' \mid 1\leq j \leq \tau\}$, then
     \begin{equation}\label{eq:well-linked_uno}
         E^-\subseteq \rho^-(v),\ E^+ \subseteq \rho^+(v), \text{ and } \Abs{E^-} = \Abs{E^+}.
     \end{equation} 
     Let~$P_j^1,P_j^2 \subset P_j$ be the unique edge-disjoint sub-paths satisfying~$E(P_j) = E(P_j^1) \cup E(P_j^2)$ and~$P_j^1$ ends in~$f_j$ and~$P_j^2$ starts in~$f_j'$; in particular~$V(P_j^1) \cap V(P_j^2) = \{v\}$ and $P_j = P_j^1 \circ P_j^2$ for every~$1 \leq j \leq \tau$. Let now~$\LLL_i^f \coloneqq \{P_j^q \mid P_j^q \text{ has an end in } \rho(X^f), \text{ for } 1 \leq j \leq \tau,\text{ and } q\in\{1,2\}\}$ and define~$\LLL_i^e$ analogously using $\rho(X_e)$. Clearly~$\LLL_i^f \cup \LLL_i^e = \{P_j^q \mid 1 \leq j \leq \tau \text{ and } q \in \{1,2\}\}$ as well as  $\LLL_i^f \cap \LLL_i^e = \emptyset$ by construction . 

     Define~$E_f^- \subset E^-$ via~$e \in E_f^-$ if and only if~$e \in E^-$ and~$e$ is an end of a path in~$\LLL_i^f$, and~$E_e^-\subset E^-$ via~$e \in E_e^-$ if and only if~$e \in E^-$ and~$e$ is an end of a path in~$\LLL_i^e$. Define~$E_f^+$ and~$E_e^+$ analogously using~$E^+$. Note that the linear linkage~$\LLL_i$ witnesses
     \begin{equation}\label{eq:well-linked_dos}
         \Abs{E_f^- } = \Abs{ E_e^+} \text{ and } \Abs{E_f^+ } = \Abs{ E_e^-}.
     \end{equation}

     Finally, by \cref{eq:well-linked_uno} and \cref{eq:well-linked_dos} together with the \cref{def:well-linked_links} of well-linked links, there exists~$M\in \m_i(v)$ such that~$M= M_1 \cup M_2$ where~$M_1 \in \operatorname{Match}(E_f^-,E_e^+)$ and $M_2 \in\operatorname{Match}(E_e^-,E_f^+)$ are perfect matchings. Let~$(f_p,f_q') \in M_1$ for respective~$1 \leq p,q \leq \tau$, then~$P_p^1 \circ P_q^2$ is a linear path with first edge in~$\rho(X^f)$ and last edge in~$\rho(X_e)$. Similarly for~$(f_p,f_q') \in M_2$ for respective~$1 \leq p,q \leq \tau$, then~$P_p^1 \circ P_q^2$ is a linear path with first edge in~$\rho(X_e)$ and last edge in~$\rho(X^f)$. It is easily verified that the collection of paths~$\PPP_i\coloneqq \{P_p^1 \circ P_q^2 \mid (f_p,f_q') \in M_1\cup M_2 \text{ for } 1\leq p,q \leq \tau\}$ is a linear~$\{\rho(X^f),\rho(X_e)\}$-linkage, and finally~$\LLL_i'=\PPP_i \cup\{P_p \mid \tau+1 \leq p \leq t\}$ is a linear~$\{\rho(X^f),\rho(X_e)\}$-linkage of order $t$ with~$M_{\LLL_i'}(v) \in \m_i(v)$. Since~$M_{\LLL_i'}(w) = M_{\LLL_i}(w)$ for all~$w \neq v$ with~$w \in \dom(\m) \cap X_f \cap X^e$by construction,~$\LLL_i'$ refutes the choice of~$\LLL_i$ minimizing $\Abs{F}$; a contradiction.
 \end{claimproof}

Recall that~$X^f = \bar{X_f}$.
  \begin{claim}\label{claim:thm_wqo_bounded_carvingwidth_knitworks_red_linked}
        Let~$i \in \N$ be fixed and~$e,f \in E(T_i)$ be two edges such that~$(e,f)$ is~$\omega$-linked. Let~$\pi_{G_i}(X_e)=\pi_{G_i}(X^e)$ be an ordering of~$\rho_{G_i}(X_e)$. Then there exists an ordering~$\pi_{G_i}(X_f)=\pi_{G_i}(X^f)$ of~$\rho_{G_i}(X_f)$ such that~$\stitch(\GGG_i,\pi_{G_i}(X^f)) \hookrightarrow \stitch(\GGG_i,\pi_{G_i}(X^{e}))$ by strong $\Omega$-knitwork immersion.
    \end{claim}
    \begin{claimproof}
      For notational convenience we write~$\GGG= \GGG_i$. Let~$P$ be the shortest directed path in~$T_i$ visiting the edges~$e$ and~$f$ in that order. Then, since~$(e,f)$ is~$\omega$-linked, we know that~$\omega(e) = t = \omega(f)$ and~$\omega(\eta) \geq t$ for all~$\eta \in E(P)$ and some~$t \in 2\N$. In particular, since the carving is linked, we know that~$\Abs{\epsilon(e)} = \Abs{\epsilon(f)}$ and there is no cut~$(X,\bar{X})$ in~$G$ separating~$X^f$ from~$X_e$ of lower order by \cref{def:linked_carving} of linkedness. By \cref{claim:thm_wqo_bounded_carvingwidth_knitworks_red_feasible_linkages} there exists a linear~$\{\rho(X^f),\rho(X_e)\}$-linkage~$\LLL_i$ of order~$t$ such that~$M_{\LLL_i}(v) \in \m_i(v)$ for every~$v \in \dom(\m) \cap X_f \cap X^e$ and such that~$\bigcup_{i=1}^t V(P_i)  \cap (X^f \cup X_e) = \emptyset$.

      \smallskip
      
      Let~$\pi_{G}(X_e) = (e_1,\ldots,e_t)$ be the ordering of the claim and without loss of generality assume that~$e_j$ is an end of~$P_j$ for every~$1 \leq j \leq t$. 
        Let~$\rho(X^f) = \{f_1,\ldots,f_t\}= \rho(X_f) $ and without loss of generality assume that~$f_j$ is the other end of~$P_j$---note that~$f_j = e_j$ is possible---for every~$1\leq j \leq t$, else rename them accordingly. We set~$\pi_{G}(X_f) \coloneqq (f_1,\ldots,f_t)$. Finally let~$\GGG^{X_f}\coloneqq (\bar G^{X_f}, \mu^{X_f},\m^{X_f}.\Phi^{X_f}) =\stitch(\GGG,\pi_{G}(X^f))$ and~$\GGG^{X_e} \coloneqq (\bar G^{X_e}, \mu^{X_e},\m^{X_e}.\Phi^{X_e}) =\stitch(\GGG,\pi_{G}(X^e))$ be the respective up-stitches with up-stitch vertices~$f^*$ and~$e^*$\footnote{Recall that $X_f,X_e$ are the rooted cuts whence we write $G^{X_f}, G^{X_e}$ for the up-stitches whereas $V(G^{X_f}) = X^f \cup \{f^*\}$.}. Note that by \cref{claim:thm_wqo_bounded_carvingwidth_knitworks_red_uno}~$X_e,X_f$ are rooted cuts with $X^f = \bar X_f$ as well as $X^e = \bar X_e$ whence the above defined up-stitches are well-defined. Note further that~$\mu^{X_f}(f^*) = (f_1,\ldots,f_t) = \pi_{G^{X_f}}(f^*)$ and~$\mu^{X_e}(e^*) = (e_1,\ldots,e_t) = \pi_{G^{X_e}}(e^*)$ by \cref{def:stitching_knitwork} of up-stitches.
      
      We claim that~$\GGG^{X_f} \hookrightarrow \GGG^{X_e}$ by strong $\Omega$-knitwork immersion. To see this, define~$\gamma\colon \bar G^{X_f} \to \bar G^{X_e}$ to be the identity on~$X^f \subseteq X^e$  and set~$\gamma(f^*) \coloneqq e^*$. Further let~$\gamma$ be the identity on~$E(G[X^f]) \subseteq E(G[X^e])$ and set~$\gamma(f_j) \coloneqq P_j$ for every~$1 \leq j \leq t$. We claim that~$\gamma$ is a strong~$\Omega$-knitwork immersion as desired. We verify (1)-(6) of \cref{def:knitwork_immersion}.
      \begin{itemize}
      \item[(1)] By construction~$\gamma\colon G^{X_f} \hookrightarrow G^{X_e}$ is a strong immersion where further~$\gamma\big((f_1,\ldots,f_t)\big) = (e_1,\ldots,e_t)$ since~$e_j \in P_j = \gamma(f_j)$, implying that~$\gamma\colon \bar G^{X_f} \hookrightarrow \bar G^{X_e}$ is a strong rooted immersion; thus (1) is satisfied. 
      \item[(2)] For $\m^{X_f}$ and $\m^{X_e}$ this is trivially satisfied since $\restr{\gamma}{X^f}$ is the identity and they are both not defined on $f^*$ and $e^*$ respectively. For $\mu^{X_f}$ and $\mu^{X_e}$ satisfied since~$\dom(\mu^{X_f})\setminus\{f^*\} \subset \dom(\mu^{X_e})$ where $\restr{\gamma}{\dom(\mu^{X_f})\setminus\{f^*\}}$ is the identity and~$\gamma(f^*) = e^* \in \dom(\mu^{X_e})$.
      \item[(3)] The claim is trivially true for all~$v \in X^f$ with~$\rho(v) \subset E(G[X^f])$ again since~$\restr{\gamma}{E(G[X^f])}$ is the identity and thus~$\mu^{X_f}(v) = \mu^{X_e}(v)$ and in particular (3) holds. The claim holds for~$f^*$ since~$\mu^{X_f}(f^*) = (f_1,\ldots,f_t)$ and~$\mu^{X_e}(\gamma(f^*)) = \mu^{X_e}(e^*) = (e_1,\ldots,e_t)$ with~$e_j \in P_j = \gamma(f_j)$ for all~$1 \leq j \leq t$. Thus let~$v \in X^f$ be a vertex with~$\rho_{G^{X_f}}(v) \cap \{f_1,\ldots,f_t\} \neq \emptyset$. By \cref{def:stitching_std,def:stitching_knitwork}~$\rho_{G^{X_f}}(v) = \rho_{G^{X_e}}(v) $ and~$\mu_{G^{X_f}}(v) = (g_1,\ldots,g_p) = \mu_{G^{X_e}}(v)$ for some~$p \leq k$. Again (3) is trivially satisfied for all~$1 \leq j \leq p$ with ~$g_j \notin \{f_1,\ldots,f_t\}$ since~$\gamma(g_j) = g_j$, and for~$g_j = f_\ell$ for some~$1 \leq \ell \leq t$ it holds~$\gamma(g_j)=P_\ell \subset G^{X_e}$ which is a path starting in~$f_\ell = g_j$; thus (3) holds.
      \item[(4)] Note that~$f^* \notin \dom(\m^{X_f})$ and~$e^* \notin \dom(\m^{X_e})$ by \cref{def:stitching_knitwork} for they are up-stitch vertices, thus (4) does not apply to~$e^*$. Finally for~$v' \in \dom(\m^{X_e}) \setminus \gamma(\dom(\m^{X_f}))$ with~$v' \notin X_f \cap X^e$ the claim follows once again since~$\gamma$ is the identity (in particular if~$v' \in \dom(\m^{X_e})$ then~$v' = \gamma(v')$ whence~$v' \in \gamma(\dom(\m^{X_f}))$ and (4) does not apply).
      
      If~$v' \in X_f \cap X^e$, then by our choice of~$\LLL_i$ and construction of~$\gamma$,~$M_\gamma(v') = M_{\LLL_i}(v') \in \m^{X_e}(v')$ whence (4) is verified.

      \item[(5)] Since by definition~$\m^{X_e}$ and~$\m^{X_f}$ are not defined on~$e^*$ and~$f^*$ respectively, the claim follows again by~$\restr{\gamma}{X^f}$ being the identity. 
      
      \item[(6)] Since~$f^* \notin \dom(\Phi^{X_f})$ and $e^* \notin \dom(\Phi^{X_e})$ by \cref{def:stitching_knitwork}, and~$\restr{\gamma}{X^f}$ is the identity, (6) is satisfied.
      \end{itemize}
    Thus the claim follows.
    \end{claimproof}

We derive the following.

\begin{claim}\label{claim:thm_wqo_bounded_carvingwidth_knitworks_red_pi_i}
    Let~$i \in \N$ be fixed. Then there is a map~$\pi_i$ which for each edge~$e \in E(T_i)$ fixes an ordering~$\pi_i(X_e) = \pi_i(X^e)$ of the cut~$\rho_{G_i}(X_e)$, such that if~$(e,f)$ is~$\omega_i$-linked for edges~$e,f \in E(T_i)$, then $\stitch(\GGG_i;\pi_i(X^f)) \hookrightarrow \stitch(\GGG_i;\pi_i(X^e))$ by strong $\Omega$-knitwork immersion. Further~$\pi_i(X_{e_i}) = \pi_{G_i}(x_i) = \pi_i(X^{e_i})$.
\end{claim}
\begin{claimproof}
    The proof is via induction by pushing orders starting from the root edge~$e_i$ as follows. Let~$F_i(e_i) \subseteq E(T_i)$ be the maximal set of edges containing all~$f \in E(T_i)$ with~$(e_i,f)$ being~$\omega_i$-linked such that there is no other edge~$f' \in E(T_i)$ on the unique directed~$(e_i,f)$-path in~$T_i$ with~$f' \in F_i(e)$. For each such edge $f \in F_i$ define~$\pi_i(X_f)$ via \cref{claim:thm_wqo_bounded_carvingwidth_knitworks_red_linked}. Let~$T_i^1,T_i^2$ be the sub-trees rooted in~$\lenks{e_i}$ and $\riets{e_i}$ respectively. If~$\lenks{e_i}\in F_i(e)$ then~$\pi_i(X_{\lenks{e_i}})$ was already defined and we may continue inductively as above by defining $F_i(\lenks{e_i})$ analogously. Otherwise, choose an arbitrary ordering~$\pi_i(X_{\lenks{e_i}})$ and again continue as above by defining $F_i(\lenks{e_i})$. Repeat inductively until~$\pi_i$ has been defined on all of $E(T_i)$.

    This procedure clearly terminates and defines~$\pi_i$ for each edge of~$E(T_i)$; recall that if~$(f,f')$ and~$(e,f)$ are~$\omega_i$-linked then so is~$(e,f')$, i.e., the relation is transitive. Hence, we do not ``doubly define"~$\pi_i$.

    Finally, the proof of the claim follows from the transitivity of strong~$\Omega$-knitwork immersion. To see this let~$(e,f')$ be~$\omega_i$-linked and assume that~$\stitch(\GGG_i,\pi_i(X^{f'}))$ does not strongly immerse in $\stitch(\GGG_i,\pi_i(X^e))$. If there is no~$f' \in E(T_i)$ distinct from~$e$ and~$f$ such that~$(e,f)$ and~$(f,f')$ are~$\omega_i$-linked, then the definition of~$\pi_i$ and \cref{claim:thm_wqo_bounded_carvingwidth_knitworks_red_linked} yield a contradiction. Thus let~$e,f'$ be chosen so that the maximal set~$F \subseteq E(T_i)$ with~$f \in F$ if and only if~$(e,f),(f,f')$ are~$\omega_i$-linked has minimal cardinality; by the above~$\Abs{F} \geq 1$. Let~$f \in F$, then~$(e,f),(f,f')$ are~$\omega_i$-linked and by the minimality choice of~$e,f'$ it follows that~$\stitch(\GGG_i,\pi_i(X^{f'})) \hookrightarrow \stitch(\GGG_i,\pi_i(X^f))$ and~$\stitch(\GGG_i,\pi_i(X^f)) \hookrightarrow \stitch(\GGG_i,\pi_i(X^e))$. By \cref{lem:knitwork_imm_is_quasi_order} strong~$\Omega$-knitwork immersion is transitive, and thus $\stitch(\GGG_i,\pi_i(X^{f'})) \hookrightarrow \stitch(\GGG_i,\pi_i(X^e))$; a contradiction to the choice of~$e,f'$. The claim follows.
\end{claimproof}

By collecting all of the maps~$\pi_i$ of \cref{claim:thm_wqo_bounded_carvingwidth_knitworks_red_pi_i} we may define~$\pi$ for~$E(\FFF)$ via~$\restr{\pi}{E(T_i)} = \pi_i$.

Finally, in light of the \cref{cubic-tree-lemma} on Cubic Trees, we define a quasi-order~$\preceq$ on~$E(\FFF)$ as follows.

 \begin{itemize}
        \item[$\preceq:$] Let~$e=(u,v),e'=(u',v') \in E(\FFF)$ with~$e \in E(T_i)$ and~$e' \in E(T_j)$ for some~$i,j \in \N$. Then we define~$e\preceq e'$ if and only if~$\stitch(\GGG_i,\pi(X^e)) \hookrightarrow \stitch(\GGG_j,\pi(X^{e'}))$ by strong~$\Omega$-knitwork immersion.
\end{itemize}

We next verify that the above-defined objects satisfy the conditions of the \cref{cubic-tree-lemma} on Cubic Trees. By definition of~$\preceq$ and the choice of~$\pi$ respecting \cref{claim:thm_wqo_bounded_carvingwidth_knitworks_red_pi_i} we derive that~$e \preceq f$ whenever~$(e,f)$ is~$\omega$-linked (in particular $e,f$ are edges of the same tree in $\FFF$).

We start with the root edges.
    \begin{claim}
        The root edges are not well-quasi-ordered by $\preceq$.
    \end{claim}
    \begin{claimproof}
        If the root edges were well-quasi-ordered, then this would imply the existence of~$1\leq i<j$ such that $\stitch(\GGG_i, \pi(X^{e_i})) \hookrightarrow \stitch(\GGG_j, \pi(X^{e_j}))$ by strong $\Omega$-knitwork immersion, where~$\pi(X^{e_i}) = \pi_{G_i}(x_i)$ and~$\pi(X^{e_j}) = \pi_{G_j}(x_j)$ by definition. By the \cref{def:stitching_knitwork} of up-stitches and the fact that~$x_i \notin \dom(\Phi_i) \cup \dom(\m_i)$ as well as~$\mu_i(x_i) = \pi_{G_i}(x_i)$, we derive that~$\stitch(\GGG_i, \pi(X^{e_i})) \cong \GGG_i$ and~$\stitch(\GGG_j, \pi(X^{e_j})) \cong \GGG_j$. This contradicts the assumption that the initial sequence $(\GGG_i)_{i\in \N}$ was bad.
    \end{claimproof}

 Next, we deal with the leaf edges; note that the degree of vertices may still be arbitrarily large since we allow loops, since loops will not be ``seen'' by the cuts carved by~$\epsilon$.
    \begin{claim}\label{claim:thm_wqo_bounded_carvingwidth_knitworks_red_leafs}
        The leaf edges are well-quasi-ordered by $\preceq$.
    \end{claim}
    \begin{claimproof}
        For any $i \in \N$ and any leaf edge~$e=(t,t')\in E(T_i)$ note that the graph~$G_i^e\coloneqq G_i^{X_e}$ of~$\stitch(\GGG_i,\pi(X^e))$ consists of exactly two vertices: namely the \emph{central vertex}~$v \coloneqq \ell_i^{-1}(t')$ and the newly introduced up-stitch vertex~$x^*$ say. It is easily seen that~$E(G_i^e) = \rho(x^*) \cup \loops(v)$ where~$\delta(x^*) \leq k$ by assumption on the carving width. Define~$\xi(v) \coloneq \Abs{\loops(v)} \in \N$. 

        Finally, let~$(e(i))_{i\in \N}$ be a sequence of leaf edges and without loss of generality---by switching to an infinite subsequence indexed by $I\subseteq \N$---assume that~$e(i) \in E(T_i)$, for every tree has only a finite number of leaf edges. For every~$i \in I$ let~$\big((G_i^{e(i)},\pi(x_i^*), \nu_i, \n_i, \Psi_i\big) = \stitch(\GGG_i,\pi(X^{e(i)}))$ and denote the respective central vertex of~$G_i^{e(i)}$ by~$v(i)$. By \cref{obs:wqo_of_tuples}~$\Omega' \coloneqq \Omega \times \N$ is a well-quasi-order; let~$\GGG_i^{e(i)} \coloneqq \big((G_i^{e(i)},\pi(x_i^*), \nu_i, \n_i, \Psi_i'\big)$ be an~$\Omega'$-knitwork with~$\Psi_i'(v(i)) \coloneqq \big(\Psi_i(v(i)),\xi(v_i)\big)$. 

        We start filtering the sequence~$(\GGG_i^{e(i)})_{i \in I}$ to derive the claim via straightforward applications of the pigeonhole principle. Since~$\Omega'$ is a well-quasi-order, using \cref{obs:wqo_yields_infinite_chain} we take an infinite subsequence indexed by~$I_1 \subseteq I$ such that~$\Psi_i(v(i)) \ll \Psi_j(v(j))$ and~$\xi(v(i)) \leq \xi(v(j))$ for every~$i\leq j$ and~$i,j \in I_1$. Further, since by \cref{def:stitching_knitwork}~$x_i^* \notin \dom(\Psi_i)$ and since~$\delta(x_i) \leq k$, we may switch to an infinite subsequence indexed by~$I_2 \subseteq I_1$ such that~$\delta(x_i) = \delta(x_j) = t$ for every~$i,j \in I_2$ and some~$t \leq k$.

        In addition, there exists an infinite index set~$I_3 \subseteq I_2$ such that, without loss of generality,~$v(i) \in \dom(\mu_i) \cap \dom(\m_i)$ for every~$i \in I_3$ (the other case is analogous). Since~$\delta(v(i)) = t$ we may rename the edges of the graphs so that~$\rho(v(i)) = \rho(v(j))=\{g_1,\ldots,g_t\}$ for all~$i,j \in I_3$. Since there are only finitely many distinct linear orderings on~$\{g_1,\ldots,g_t\}$, again applying the pigeonhole principle implies that there exists an infinite index set~$I_4 \subseteq I_3$ with~$\mu_i(v(i)) = \mu_j(v(j))$ and~$\m_i(v(i)) = \m_j(v(j))$ for all~$i,j \in I_4$.
        
        But then this immediately implies the claim for one now easily verifies that~$\GGG_i^{e(i)} \hookrightarrow \GGG_j^{e(j)}$ by strong $\Omega'$-knitwork immersion for every~$i<j$ with~$i,j \in I_4$ by mapping~$v(i)$ to~$v(j)$ and~$x_i^*$ to~$x_j^*$. Note that the strong~$\Omega'$-knitwork immersion implicitly yields a strong~$\Omega$-knitwork immersion.
    \end{claimproof}

We prove the last missing ingredient for \cref{cubic-tree-lemma}.

\begin{claim}
    There is no infinite strictly decreasing sequence~$(f_i)_{i \in \N}$ for $\preceq$.
\end{claim}
\begin{claimproof}
    Assume the contrary; let~$\GGG_i = (G_i,\pi(x_i)), \mu_i, \m_i, \Phi_i)$ be such that $f_i \in E(G_i)$ for every $i \in \N$ and without loss of generality (by switching to a respective infinite subsequence) assume that for every pair of distinct $i,j \in \N$, $G_i \neq G_j$. Let $\HHH_i \coloneqq \stitch(\GGG_i,\pi(X^{f_i}))$ be the respective up-stitches with up-stitch vertex $h_i^*$ and write $\HHH_i = (H_i,\pi(h_i^*)), \nu_i, \n_i, \Psi_i)$ for convenience. Let~$m\coloneqq \Abs{H_1} = \Abs{E(H_1)} + \Abs{V(H_1)}$. Then, by assumption on $(f_i)_{i\in \N}$ being strictly decreasing with respect to $\preceq$, for every~$ j>1$ we derive~$\HHH_j \hookrightarrow \HHH_1$ by definition of $\preceq$ implying~$\Abs{H_j} \leq m$. Since there are only finitely many Eulerian digraphs~$G$ (up to isomorphism) satisfying~$\Abs{G} \leq m$, by the pigeonhole principle there is an infinite subsequence indexed by~$I \subseteq \N$ such that~$H_i \cong H_j$ for every~$i \in I$. Let the vertices of~$H_i$ be given by~$\{u_1^i,\ldots,u_t^i\}$ with the respective isomorphisms mapping~$u_\ell^i$ to~$u_\ell^j$ for every~$i,j \in I$ and~$1 \leq \ell \leq t$. Similarly, by the pigeonhole principle, there is~$I_2 \subseteq I$ such that~$\dom(\nu_i) = \dom(\nu_j)$,$\dom(\n_i) = \dom(\n_j)$ and~$\dom(\Psi_i) = \dom(\Psi_j)$ where further~$\mu_i(u_\ell^i) = \mu_j(u_\ell^j)$---if defined---and $\n_i(u_\ell^i) = \n_j(u^j_\ell)$---if defined---for every~$i,j \in I_2$ and every~$1 \leq \ell \leq t$. Thus, since the sequence is strictly decreasing with respect to~$\preceq$, there must exist~$1 \leq \ell \leq t$ such that~$\Psi_j(u_\ell^j) \ll \Psi_i(u_\ell^i)$ for every~$j > i$ with~$i,j \in I_2$. Then this witnesses an infinite strictly decreasing sequence of elements in~$\Omega$; a contradiction to~$\Omega$ being a well-quasi-order. The claim follows.
\end{claimproof}

Combining the previous claims, the Cubic Tree \cref{cubic-tree-lemma} implies the existence of an infinite sequence~$(f_i)_{i\in\N}$ of non-leaf edges such that the sequence forms an antichain with respect to~$\preceq$ whereas~$(\lenks{f_i})_{i\in \N}$ and~$(\riets{f_i})_{i\in \N}$ form chains with respect to~$\preceq$.  By switching to an infinite subsequence similar to above, we may assume that~$f_i \in E(T_i)$ for every~$i \in \N$. We are left to refute the existence of said antichain by proving that we find a pair~$f_i,f_j$ such that we can ``knit'' the~$\Omega$-knitwork immersions given on their left and right children to derive that~$f_i \preceq f_j$ as a contradiction.
\smallskip

We start by filtering our sequence using the pigeonhole principle to get nicer properties in several steps as follows; let~$I = \N$ and let~$\GGG_i^f \coloneqq \stitch(\GGG_i, \pi(X^{f_i})$ with up-stitch vertex~$f_i^*$ for every~$i \in I$, then $(\GGG_i^f)_{i\in I}$ forms the respective antichain.

\begin{enumerate}
    \item[$I_1$] Recall that~$\omega(f_i) \leq k$ for every $i \in I$, so by taking an infinite subsequence indexed by~$I_1 \subseteq I$ we can assume that~$\omega(f_i) = \omega(f_j)$ for all~$i,j \in \N$ and similarly~$\omega(\lenks{f_i}) = \omega(\lenks{f_j})$ as well as~$\omega(\riets{f_i}) = \omega(\riets{f_j})$ for every $i,j \in I_1$.

    \item[$I_2$] By taking an infinite subsequence indexed by~$I_2 \subseteq I_1$ we can assume that, after relabelling the edges accordingly,~$\pi(X^{f_i}) = \pi(X^{f_j})$ and~$\mu_i(f_i^*) = \mu_j(f_j^*)$ again using the pigeonhole principle.
\end{enumerate}
By \cref{def:carving} of carving we have~$\epsilon(\lenks{f_i}) \cup \epsilon(\riets{f_i}) = \rho(f_i^*) \cup (\epsilon(\lenks{f_i}) \cap \epsilon(\riets{f_i})) $  where~$\rho(f_i^*) \cap (\epsilon(\lenks{f_i}) \cap \epsilon(\riets{f_i})) = \emptyset$. Since~$I_2 \subseteq I_1$ we may refine the sequence as follows.
\begin{enumerate}
    \item[$I_3$] By switching to an infinite subsequence indexed by~$I_3 \subseteq I_2$ we can assume that, after relabelling the edges of~$\epsilon(\lenks{f_i}) \cap \epsilon(\riets{f_i}))$ and $\epsilon(\lenks{f_j}) \cap \epsilon(\riets{f_j}))$ accordingly,~$\pi(X^{\lenks{f_i}}) = \pi(X^{\lenks{f_j}})$ and~$\pi(X^{\riets{f_i}}) = \pi(X^{\riets{f_j}})$ maintaining~$\pi(X^{f_i}) = \pi(X^{f_j})$.
\end{enumerate} 
Finally let~$1 \leq i<j$ with~$i,j \in I_3$ and let~$\gamma_l\colon \stitch(\GGG_i,\pi(X^{\lenks{f_i}})) \hookrightarrow \stitch(\GGG_j,\pi(X^{\lenks{f_j}}))$ and~$\gamma_r: \stitch(\GGG_i,\pi(X^{\riets{f_i}})) \hookrightarrow \stitch(\GGG_j,\pi(X^{\riets{f_j}}))$ be the respective strong~$\Omega$-knitwork immersions. It is now straightforward to ``knit'' the strong immersions to obtain a strong~$\Omega$-knitwork immersion~$\gamma\colon \stitch(\GGG_i,\pi(X^{f_i})) \hookrightarrow \stitch(\GGG_j,\pi(X^{f_j}))$ similarly to the proof of \cref{thm:knitting_knitwork_immersion}. Note here that for~$\GGG_j^{X_{f_j}} = \stitch(\GGG_j,\pi(X^{f_j}))$ we have~$V(G_j^{X_{f_j}}) = X^{\lenks{f_j}} \cup X^{\riets{f_j}} \cup \{f_j^*\}$. Now for any edge~$ e \in \epsilon(\lenks{f_i}) \cap \epsilon(\riets{f_i})$ with endpoints~$u = \tail(e) \in X^{\lenks{f_i}}$ and~$v = \head(e) \in X^{\riets{f_i}}$ say, we can concatenate the paths~$\gamma_l(e) \circ \gamma_r(e)$ to define~$\gamma(e)$ in~$\GGG_j^{X_{f_j}})$, for they are both paths in~$G_j^{X_{f_j}}$ by construction, which share exactly one common edge in~$\epsilon(\lenks{f_j}) \cap \epsilon(\riets{f_j})$ and respect~$\m_j^{X_{f_j}}$ by construction (since~$M_\gamma(v) = M_{\gamma_l}(v) \cup M_{\gamma_r}(v)$ for all~$v \in V(G_j^{X_{f_j}}) \setminus \{f_j^*\}$). For the edges of $\rho(f_i^*)$ it is clear and for all other edges there is nothing to show.%
This is a contradiction to $(f_i)_{i\ in \N}$ being an antichain, concludes the proof.
\end{proof}

We are ready to derive \cref{thm:wqo_bounded_carvingwidth_knitworks} from \cref{thm:wqo_bounded_carvingwidth_knitworks_red}.

\begin{proof}[of \cref{thm:wqo_bounded_carvingwidth_knitworks}]

Towards a contradiction, assume the theorem to be false and let~$(\GGG_i)_{i \in \N}$ be a counterexample to the claim with~$\GGG_i = \big((G_i,\pi_{G_i}(X_i)),\mu_i,\m_i,\Phi_i)$. By assumption~$\ebw{\GGG_i} \leq k$ whence~$\pi_{G_i}(X_i) \leq k$ for every $i \in \N$.

By moving to a suitable subsequence we may assume that~$\Abs{X_i} =t \leq k$ for every~$i \in \N$. Let~$\GGG^{X_i},\GGG_{X_i}$ be the respective up- and down-stitches of~$\GGG_i$ with respective up- and down-stitches~$x^i$ and~$x_i$. By \cref{obs:up-stitch_of_well-linked_is_well-linked} the up-stitches~$\GGG^{X_i}$ are again well-linked and $x^i \notin \dom(\Phi_i^{X_i}) \cup \dom(\m_i^{X_i})$. Thus, by \cref{thm:wqo_bounded_carvingwidth_knitworks_red} together with \cref{obs:wqo_yields_infinite_chain} there exists~$I \subseteq \N$ such that~$(\GGG_{X_i})_{i \in \N}$ is a chain with respect to strong~$\Omega$-knitwork immersion. 

\smallskip

Let~$\GGG_{X_i}=((G_{X_i},\pi_{G_{X_i}}(X_i)),\mu_{X_i},\m_{X_i},\Phi_{X_i})$ for every~$i \in I$, then~$\Abs{V(G_{X_i})} \leq t+1$ by \cref{def:stitching_std}. After removing all the loop-edges there exist only finitely many non-isomorphic Eulerian digraphs of carving width~$k$ (note that the maximum degree after deleting loops is~$k$). Thus there is~$I' \subset I$ such that all the graphs $G_i,G_j$ with $i,j \in I'$ are isomorphic (up-to loops); fix~$V(G_i) = \{u_1,\ldots,u_t,u{t+1}\}$ for every~$i \in I'$ for simplicity by renaming the graphs accordingly. Since~$\Abs{\rho(u_i)} \leq k$ for every~$1 \leq i \leq t+1$, we may further assume that~$\mu_i(u_\ell) = \mu_j(u_\ell)$ and~$\m_i(u_\ell) = \m_j(u_\ell)$ for all~$i,j \in I'$ and all~$1\leq \ell \leq t$. 
Since~$(\Phi_i(u_\ell))_{i \in I'}$ is a well-quasi-order (with respect to $\ll$) we may iteratively apply \cref{obs:wqo_yields_infinite_chain}~$t+1$ times yielding an infinite~$I'' \subseteq I'$ such that~$(\Phi_i(u_\ell))_{i \in I''}$ is a chain for every~$1 \leq \ell \leq t$. 

Lastly, for each vertex~$u_\ell$ denote by~$\ell_i(u_\ell) \in \N$ the number of loops at~$u_\ell$. Since~$(\N,<)$ is a well-quasi-order, the same argument as above applies and there is~$J\subseteq I''$ such that~$(\ell_i(u_\ell))_{i\in J}$ is a chain with respect to~$<$ in~$\N$ (see the proof of claim \ref{claim:thm_wqo_bounded_carvingwidth_knitworks_red_leafs} for more details). 

By construction~$(\GGG_{X_i})_{i \in J}$ is a chain with respect to strong~$\Omega$-knitwork immersion. Since~$(\GGG^{X_i})_{i\in J}$ is also a chain by strong~$\Omega$-knitwork immersion we deduce that~$\GGG_i \hookrightarrow \GGG_j$ for every~$i<j$ with~$i,j \in J$ using \cref{cor:immersion_of_stitches_yields_immersion}. This concludes the proof.
\end{proof}

Note that the \emph{well-linkedness} of the~$\Omega$-knitworks is crucial for a proof of \cref{thm:wqo_bounded_carvingwidth_knitworks}. Without the assumption, ``linkedness'' in \cref{cor:linked_ebw} can not directly be used to ``lift''~$\Omega$-knitwork immersions from children to parents through gluing at cuts. This is due to the restrictions imposed by the feasible linkages encoded by $\m$. To see this let~$Y$ be a rooted cut in some~$\Omega$-knitwork~$\GGG=((G,\pi(X)), \mu ,\m,\Phi)$ and let~$\LLL$ be a~$\{\rho(X),\rho(Y)\}$ in~$G$. One would hope to conclude~$\stitch(\GGG,\pi(\bar Y)) \hookrightarrow \GGG$ for a suited order~$\pi(\bar Y)$. But it may happen that two edge-disjoint paths~$L_1,L_2 \in \LLL$ cross in a common vertex~$v \in \dom(\m)$ whence we may not use these paths to define the natural immersion if~$\{\tau(L_1),\tau(L_2)\} \notin \m(v)$ (this would violate (4) of \cref{def:knitwork_immersion}). Moreover, given arbitrary~$\Omega$-knitworks it could happen that there exists no feasible linkage that can be used to lift it. This is in stark contrast with the undirected case using standard minors as studied by Robertson and Seymour in \cite{GMIV}. Further, it is also highly different to the undirected case for immersion, since in the undirected setting edge-disjoint paths may cross freely in vertices which would rather correspond to~$\m(v) = \operatorname{Match}(\rho^-(v),\rho^+(v))$, i.e., there are no further restrictions on the linkages, which is a much easier case. The following is an easy ``counterexample'' to the version of \cref{thm:wqo_bounded_carvingwidth_knitworks_red} where we do \emph{not} impose well-linkedness.

\begin{example}
    Let $V_i = \{v_0,\ldots,v_i\}$ for every $i \in \N$. Then let $G_i$ be the graph on $V_i$ where we add back and forth edges between subsequent vertices $v_j,v_{j+1}$ for $1 \leq j < i$. Finally let $\Omega=(\{0,1\},\emptyset)$ be the well-quasi-order on two labels that are not comparable. Let $\Phi_i(v_0) = 0$ and $\Phi_i(v_i) = 1$. Let $\mu_i$ be arbitrarily defined for every vertex of $V(G_i)$ and set $\mu_i(v_j)$ to be the two-paths that start and end in $v_{j-1}$ and start and end in $v_{j+1}$ respectively; in particular we cannot pass through vertices with immersions, the links are not well-linked. One easily verifies that $(\GGG_i)_{i\geq 3}$ is an antichain. While this example may be slightly trivial as we basically encode our graphs to be disconnected (or behave like alternating paths), one easily extends this idea to more interesting examples (even on higher degree).
\end{example}

\section{The Router Case}

\label{sec:router}

To highlight the importance of routers---especially for our future work---we prove that a bad sequence defying \cref{conj:wqo_4reg} cannot contain arbitrarily large routers as (strong) immersion.

\begin{theorem}
    Let $G$ be an Eulerian digraph of maximum degree $4$, and let $n \coloneqq \Abs{V(G)}$. Then $G \hookrightarrow \RRR_{6n}$ by strong immersion. 
\end{theorem}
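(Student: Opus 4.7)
The plan is to build an explicit strong immersion $\gamma\colon G\hookrightarrow \RRR_{6n}$ by fixing a circle decomposition of $G$ and laying each circle of the decomposition onto a dedicated circle of $\RRR_{6n}$, using the generous slack $6n\geq 3m$ to resolve the cyclic-order constraints.

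First I would fix a circle decomposition $\CCC=\{D_1,\ldots,D_m\}$ of $E(G)$. Since $G$ is Eulerian of maximum degree four, at every vertex the in- and out-degrees coincide and are at most two; pairing the in-edges with the out-edges at each degree-$4$ vertex and splitting off accordingly yields a decomposition of $E(G)$ into edge-disjoint circles, with each vertex of $G$ on at most two of them. In particular $m\leq 2n$ and $|V(D_i)|\leq n$ for every $i$. Call $v\in V(G)$ a \emph{double} if it lies on two circles of $\CCC$ and a \emph{singleton} otherwise.

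Next I would fix an injection $\iota\colon\{1,\ldots,m\}\to\{1,\ldots,6n\}$ whose image is spread out (for instance $\iota(i)=3i$), so that the set $F:=\{1,\ldots,6n\}\setminus\iota(\{1,\ldots,m\})$ of free indices has size at least $4n$ and contains at least two free indices between any two consecutive values of $\iota$. I extend $\iota$ to a vertex map $\gamma$ by sending each double $v\in D_i\cap D_j$ to $\gamma(v):=v_{\iota(i),\iota(j)}$ and each singleton $v\in D_i$ to $\gamma(v):=v_{\iota(i),y_v}$ for some $y_v\in F$, the $y_v$'s chosen distinct along each circle $D_i$ and so that on every router circle $C_{\iota(i)}$ the images of $V(D_i)$ occur in the same cyclic order as $V(D_i)$ does along $D_i$. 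Finally I route each edge $e=(u,w)\in E(D_i)$ as the directed arc of $C_{\iota(i)}$ from $\gamma(u)$ to $\gamma(w)$ in the direction inherited from $D_i$. By construction the endpoints are cyclically consecutive among the images of $V(D_i)$ on $C_{\iota(i)}$, and any internal vertex $v_{\iota(i),z}$ of the arc has $z\notin\{\iota(j):D_i\cap D_j\neq\emptyset\}\cup\{y_v:v\in V(D_i)\}$; by the injectivity of $\iota$ and the freeness of the $y_v$'s, no such $v_{\iota(i),z}$ is the image of any vertex of $G$, so $\gamma$ is strong. Edge-disjointness is immediate since routings of different $D_i$'s lie on different router circles and the $|V(D_i)|$ arcs inside $C_{\iota(i)}$ partition its edges.

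The main obstacle is the simultaneous cyclic-order alignment in the vertex-placement step: for each $D_i$ the positions $\iota(j)$ of the doubles on $C_{\iota(i)}$ are determined once $\iota$ is fixed globally, and they must cyclically match the order of the doubles along $D_i$. The factor-of-three slack $6n\geq 3m$ is what makes a greedy, circle-by-circle construction feasible: whenever a tentative $\iota$ places the forced doubles of some $D_i$ into the wrong cyclic order along $C_{\iota(i)}$, I would either permute $\iota$ on the circles sharing a vertex with $D_i$ (using the abundance of free indices to shift without breaking previously fixed alignments), or reroute the offending arcs of $D_i$ through unused router circles $C_a,C_b$ with $a,b\in F$: since such circles carry no images of $V(G)$ internally, the composite arc from $v_{\iota(i),a}$ via $v_{a,b}$ to $v_{\iota(i),b}$ supplies an edge-disjoint detour that lets us bypass misplaced doubles while preserving the strong-immersion property.
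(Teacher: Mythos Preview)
Your approach is genuinely different from the paper's, and the difference is precisely where the gap lies. You try to map the circles of a cycle decomposition of $G$ to circles of the router; the paper instead dedicates six router circles to each \emph{vertex} of $G$ and builds a local gadget there. Your route runs into a global cyclic-order constraint that you acknowledge but do not resolve, whereas the paper's route never creates that constraint.

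Concretely: once you fix $\iota$, the positions of the doubles of $D_i$ on $C_{\iota(i)}$ are forced, and they must match the directed cyclic order of those doubles along $D_i$. Your two proposed repairs do not close this. Permuting $\iota$ on the neighbours of $D_i$ shifts the doubles of $D_i$ on $C_{\iota(i)}$, but it simultaneously moves the very same doubles on the circles $C_{\iota(j)}$ you have already aligned, so a greedy circle-by-circle fix can oscillate; nothing in your argument prevents this. The detour idea is also incomplete: to bypass a misplaced image $\gamma(d_k)$ on $C_{\iota(i)}$ you must leave $C_{\iota(i)}$ at some $v_{\iota(i),a}$ strictly between $\gamma(d_j)$ and $\gamma(d_k)$ and re-enter at some $v_{\iota(i),b}$ strictly between $\gamma(d_k)$ and $\gamma(d_{j+1})$, but whether free indices $a,b$ sit in those particular arcs of $C_{\iota(i)}$ depends on the cyclic structure of the given router, which you do not control. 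Moreover the detour traverses $C_a$ and $C_b$ in their fixed orientation, and you have not argued that the directions compose, nor that the arcs of $C_{\iota(i)}$ you still consume before and after each detour remain pairwise edge-disjoint across the several edges of $D_i$ that may all need repair.

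The paper sidesteps all of this. For each $v_i\in V(G)$ it takes six router circles $C^i_1,\ldots,C^i_6$, sets $\gamma(v_i)$ to be the intersection $u_i$ of $C^i_5$ and $C^i_6$, and routes the four incident edges through $C^i_1,\ldots,C^i_4$ (one per edge). The assignment of which of $C^i_1,\ldots,C^i_4$ serves which edge is read off from the actual cyclic order on $C^i_5$ and $C^i_6$, so the four edge-disjoint paths out of $u_i$ exist for \emph{any} router structure. An edge $(v_i,v_j)$ then uses only circles from the blocks of $v_i$ and $v_j$, so edge-disjointness and strongness are local and automatic. If you want to rescue your decomposition-based plan, you would need a genuine argument that the system of cyclic-order constraints is always simultaneously satisfiable, or a precise detour scheme with an accounting of arcs on each $C_{\iota(i)}$; as written, that step is missing.
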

\begin{proof}
Let $V(G) \coloneqq \{ v_1, \dots, v_n \}$ and let $\CCC = \{C^i_j \mid 1 \leq i \leq n \text{ and }1 \leq j \leq 6\}$ be a circle cover of~$\RRR_{6n}$ as given by \cref{def:router}. 

    For all $1 \leq i \leq n$ we proceed as follows. 
        Let $v_{s_1}, \dots, v_{s_4}$ be the neighbours of $v_i$ such that $(v_i, v_{s_1})$, $(v_{s_2}, v_i)$, $(v_i, v_{s_3})$, $(v_{s_4}, v_i) \in E(G)$.

    See \cref{fig:turtle} for an illustration of the following construction.     \begin{figure}
       \centering
         \includegraphics[height=6cm]{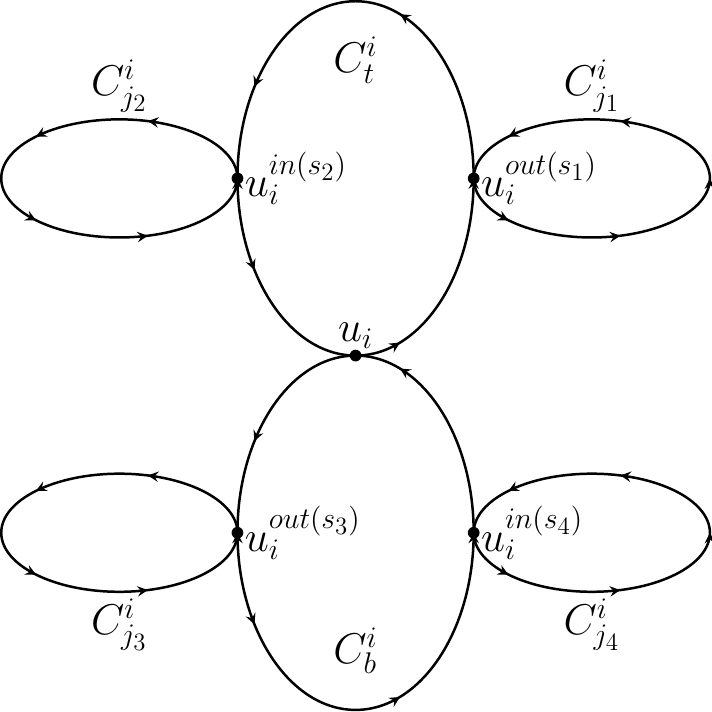}
       \caption{The gadget for immersing Eulerian digraphs into a router.}
        \label{fig:turtle}
    \end{figure}

    Let  $C^i_tt \coloneqq C^i_5$ and $C^i_b \coloneqq C^i_6$. Let $u_i$ be the unique vertex in $C_i^t \cap C_i^b$. Let $j_1 \in \{1, \dots, 4\}$  be such that the first vertex on $C_i^t$ after $u_i$ that $C_i^t$ has in common with $\{C^i_1, \dots, C^i_4\}$ is on $C^i_{j_1}$ and the second such vertex after $u_i$ is on $C^i_{j_2}$ where $j_2 \in  \{ 1, \dots, 4\}\setminus \{j_1\}$.
    Similarly, let $j_3 \in \{1, \dots, 4\} \setminus \{ j_1, j_2\}$ 
    be such that the first vertex on $C_i^b$ after $u_i$ that 
    $C_i^b$ has in common with $\{C^i_1, \dots, C^i_4\} \setminus 
    \{ C^i_{j_1}, C^i_{j_2}\}$ is on $C^i_{j_3}$. 
    Finally, let $j_4$ be the remaining index in $\{ 1, \dots, 4 \} \setminus \{ j_1, j_2, j_3\}$.
    We define $C^i_{out(s_1)} := C^i_{j_1}$, $C^i_{in(s_2)} := C^i_{j_2}$, 
    $C^i_{out(s_3)} := C^i_{s_3}$, and $C^i_{in(s_4)} := C^i_{j_4}$.
    Let $u_i^{out(s_1)}$ be the unique vertex in $C^i_{j_1} \cap C^i_t$ and let $u_i^{in(s_2)}, u_i^{out(s_3)}, u_i^{in(s_4)}$ be defined analogously. 

    By construction, $\bigcup \{ C^i_j \sth 1 \leq j \leq 6\}$ contains pairwise edge-disjoint paths 
    $P_i^{out(s_1)}, P_i^{in(s_2)}, P_i^{out(s_3)}$, and $P_i^{in(s_4)}$ such that $P_i^{out(s_1)}$ links $u_i$ to $u_i^{out(s_1)}$, $P_i^{in(s_2)}$ links 
    $u_i^{in(s_2)}$ to $u_i$, $P_i^{out(s_3)}$ links $u_i$ to $u_i^{in, s_3}$, and $P_i^{in(s_4)}$ links $u_i^{in(s_4)}$ to $u_i$.

    We are now ready to define the strong immersion $\gamma$ of $G$ into $\RRR_{6n}$. For every~$1\leq i \leq n$ we define $\gamma(v_i)\coloneqq u_i$. Now consider an edge $e_{i,j} = (v_i, v_j) \in E(G)$. 
    By definition of a router, there is a path $P_{i,j}$ from $u_i^{out(j)}$ to $u_j^{in(i)}$ in $C^i_{out(j)} \cup C^j_{in(i)}$. Let $E_{(i,j)} \coloneqq P_{i,j} \cup P_i^{out(j)} \cup P_j^{in(i)}$. Then, for $(v_i,v_j) \neq (v_{i'}, v_{j'}) \in E(G)$ the paths $E_{(i,j)}$ and $E_{(i',j')}$ are edge disjoint. 
    We define $\gamma(e_{i,j}) \coloneqq E_{(i,j)}$, for all $(v_i,v_j) \in E(G)$. Then $\gamma$ is a strong immersion of $G$ into $\RRR_{6n}$. %
\end{proof}

Obviously no Eulerian digraph containing a vertex of degree more than $4$ can (strongly) immerse into a router of any size. 

\begin{remark}
    A router of order $k$ with capacity $c$ and multiplicity $s$ is defined as a router with the difference that there are $s$ vertices which are the intersection of $c$ cycles and such that the cycles intersecting at such vertices are distinct. 
    Then we would be able to strongly immerse any Eulerian digraph with maximum degree $d$ and $n$ vertices into a router with capacity $d$ and multiplicity $n$ of order $2dn$. The extension of the proof is easy and not needed until later in future work we are pursuing. Thus we omit the details here.
\end{remark}

\section{Embeddings of rooted Eulerian digraphs}
Following the definition of rooted Eulerian digraphs in \cref{subsec:knitworks}, we devote this section to present the relevant definitions and tools to lift the definitions of Eulerian embeddings appropriately. The definitions and results established in this section will be extensively used throughout the rest of the paper
\smallskip

The main theorem of the remaining paper reads as follows.

\begin{theorem}\label{thm:wqo_on_surfaces}
  Let~$\Sigma$ be a fixed surface without boundary and let~$\mathbf{G}(\Sigma)$ be the class of all Eulerian digraphs of maximum degree four that are Euler-embeddable in~$\Sigma$. Then~$\mathbf{G}(\Sigma)$ is well-quasi-ordered by strong immersion.
\end{theorem}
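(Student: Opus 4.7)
The plan is to prove Theorem~\ref{thm:wqo_on_surfaces} by strong induction on the Euler genus of $\Sigma$, with a secondary induction on additional parameters that will arise once we pass to rooted Euler-embeddable digraphs on surfaces with boundary. The overall strategy, as sketched in the high-level overview, is to assume for contradiction that $\mathbf{G}(\Sigma)$ is not well-quasi-ordered, take a bad sequence $(G_i)_{i\in\N}$, and then repeatedly filter it to infinite subsequences of ``similar shape'' using Higman's \cref{thm:higman} and the pigeonhole principle. Fixing $n \coloneqq \Abs{V(G_1)}$, the task reduces to locating, for some $j>1$, a strong immersion $G_1 \hookrightarrow G_j$. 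Following the framework of the paper, it is convenient to prove a slightly stronger statement allowing the graphs to be \emph{rooted} (in edges drawn on cuffs) and Euler-embeddable in surfaces possibly with boundary; \cref{thm:wqo_on_surfaces} is then the special case of no boundary and no roots.

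The proof splits according to the representativity of the embeddings~$\Gamma_i$. First, after filtering, consider the case where some $\Gamma_j$ has representativity much larger than $n$. Applying \cref{thm:high-rep} (and its cylinder variant \cref{lem:cyclinder_high_rep} for the intermediate cases arising during induction) directly yields $G_1 \hookrightarrow G_j$, by lifting the analogous undirected result of \cite{GMVII} via the homotopy-type analysis on the underlying undirected graph, exploiting that at every Eulerian embedded vertex two crossing edge-disjoint paths are forbidden. Hence we may assume every embedding in the sequence has representativity bounded by a fixed function of $\Sigma$ and~$n$.

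Under low representativity, I would locate in each $\Gamma_i$ a short genus-reducing or cuff-creating curve $\eta_i$ that only intersects the drawing in edges. Cutting the surface and the Eulerian digraph along $\eta_i$ (as set up in \cref{sec:surface_defs} and \cref{sec:cylinder}) produces either one surface of strictly smaller genus, a pair of surfaces of smaller total genus, or a genus-preserving cut reducing the number of roots or cuff-count; in every case an inductively controlled parameter (the tuple (genus, number of cuffs, number of roots)) strictly decreases. The resulting cut is Eulerian and its endpoints partition $V(G_i)$ into sides inducing Eulerian cuts of bounded order. Using the up-/down-stitching operations from \cref{def:stitching_std}, we replace each cut-side by a bead vertex, obtaining rooted Eulerian digraphs that remain Euler-embeddable in the new (lower-complexity) surfaces up to beads. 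Filtering via Higman's theorem so that the sequences of stitches are simultaneously chains, we can then invoke the knitting \cref{thm:knitting_knitwork_immersion} to reassemble strong immersions of the stitches into a strong immersion $G_i \hookrightarrow G_j$; here it is crucial that beads are fixed vertices of high degree forcing their targets to be the corresponding beads, which is why the stitchings knit back together consistently. The disc and cylinder base cases are handled as described in the overview: in the cylinder we decompose along a laminar family of cut-cycles using the Eulerian Menger \cref{lem:Menger_for_Euler}, and in the disc we either conclude by bounded carving width or cut off immersed swirls (produced by \cref{thm:4_reg_swirl}) to reduce the root count, labelling the cut-off pieces by their equivalence class in the inductively-obtained well-quasi-order $\Omega$.

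The main obstacle, and the reason for the elaborate machinery developed in Sections~\ref{subsec:knitworks}--\ref{sec:bounded-case}, is the bounded-carving-width endgame: after sufficiently many cuts we are left with $\Omega$-knitworks of bounded carving width, where $\Omega$ encodes the types arising from the prior inductive steps. These $\Omega$-knitworks are well-linked and reliable, since Eulerianness together with the embedding restrictions guarantees that the feasible linkages through each bead behave as required by \cref{def:well-linked_links}. Invoking \cref{thm:wqo_bounded_carvingwidth_knitworks}, any infinite sequence of such well-linked $\Omega$-knitworks contains a strong knitwork immersion $\GGG_i \hookrightarrow \GGG_j$ with $i<j$, which by construction and by the stitch-and-knit correspondence (\cref{lem:stitch-and-knit}, \cref{thm:knitting_knitwork_immersion}) lifts back through all the cutting and stitching layers to a strong immersion $G_i \hookrightarrow G_j$ in the original surface $\Sigma$. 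This contradicts the badness of the sequence and completes the induction. The delicate points will be (i)~ensuring that the laminar decomposition in the cylinder produces short pieces that meet the reliability and well-linkedness hypotheses, and (ii)~verifying at every inductive step that the Eulerian embedding survives the cuts and the replacement by beads, which is precisely what the surface bookkeeping in \cref{sec:surface_defs} is designed for.
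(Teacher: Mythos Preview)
Your proposal is correct and follows essentially the same approach as the paper: you strengthen to rooted (bead-/edge-rooted) Euler-embeddable digraphs on surfaces with boundary, run a double induction on genus and root data, split into high versus low representativity, handle the former via \cref{thm:high-rep} and the latter by cutting along short offending curves and knitting back, with the disc and cylinder as base cases and \cref{thm:wqo_bounded_carvingwidth_knitworks} as the bounded-carving-width endgame inside the disc. One minor inaccuracy: the forcing of bead-to-bead in the knitting step comes from the \emph{rooted} immersion definition (roots must map to roots in order), not from beads having high degree---beads can have degree $\leq 4$---and in the final induction the paper actually passes to the edge-rooted framework (\cref{sec:edge-rooted}) and uses \cref{lem:knitting_nonsep_cutlines} and \cref{lem:knitting_separating_cutlines} rather than \cref{thm:knitting_knitwork_immersion} directly, but this is a technical repackaging rather than a different idea.
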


The proof technique is via induction on the genus of~$\Sigma$ as well as by induction on the ``generality'' of the structure of the graphs in $\mathbf{G}(\Sigma)$. As usual, we will need to make a stronger induction hypothesis and thus end up proving a stronger claim. We start with the definitions needed to state (and prove) \cref{thm:wqo_on_surfaces} in its final form. In particular, we will need to loosen the condition on the graphs being of maximum degree four for inductive purposes.

Given a set $W$, a \emph{partition} $(W_1,\ldots,W_t)$ of $W$ consists of (possibly empty) sets $W_1,\ldots,W_t \subseteq W$ for some $t \geq 1$ such that $W_i \cap W_j=\emptyset$ for all distinct $1 \leq i,j \leq t$ and such that $W = \bigcup_{i=1}^t W_i$.

\begin{definition}[Bead-rooted Graphs] \label{def:bead-rooted_graph}
Let~$G$ be an Eulerian digraph and let~$W \subseteq V(G)$ be a set containing all vertices in~$G$ of degree at least~$6$. Then we call~$W$ a \emph{set of beads of~$G$}.\Index{beads} Let $\WWW \coloneqq (W_1,\ldots,W_t)$ be a partition of $W$ such that no two vertices in $W_i$ are adjacent for every $1\leq i \leq t$. Fix an ordering~$\pi(W_i)$\Symbol{PIW@$\pi(W_\omega)$} of~$\rho(W_i)$ for every $1 \leq i \leq t$. Let~$\pi(\WWW)\coloneqq(\pi(W_1),\ldots,\pi(W_t))$. We call~$(G,\pi(\WWW))$ \emph{bead-rooted}. \Index{bead-rooted}
\end{definition}
\begin{remark}
    Note that technically speaking a bead-rooted graph is not a rooted graph as given by \cref{def:rooted_graph}.
\end{remark}

We extend the \cref{def:rooted_immersion} of strong rooted immersion to bead-rooted Eulerian digraphs in the obvious way.

As mentioned above, we will prove our main theorem by induction on the genus of the surface $\Sigma$. In the reduction to surfaces of lower genus, however, we may cut the surface along curves producing surfaces of lower genus but additionally with cuffs, for example when cutting along curves that separate the surface. The curves we cut at will intersect the embedding only at edges and thus cutting along the curve will result in cutting the respective edges into two edges, each connected to one of its original endpoints. The edges will then be embedded on the cuffs and their respective original endpoint will be drawn in the interior of the surface. As a consequence, the graph represented by the resulting embedding would no longer correspond to an Eulerian digraph, but to a digraph with ``half-edges''. We therefore introduce ``virtual nodes'', the \emph{beads}, which provide the missing half of the edges that we cut\footnote{It would be totally fine and in accordance with \cite{EDP_Euler} to work with half-edges; we deliberately choose not to do so here, since, as of the date of writing, we still need beads and several results presented here for future work.}. We keep track of these beads by drawing them on the discs of $\hat\Sigma$ bounded by a cuff. This is only one choice to represent the above, but unfortunately not the only one we will need in the paper. The beads will help us to keep track of the cuts we made keeping the resulting graphs Eulerian and thus simplifying some proofs, but come with a different flaw that we will need to smooth out in later sections.
\smallskip

\begin{definition}[Embeddings fixing beads]\label{def:Euler_embedding_upto_beads}
  Let~$\Sigma$ be a surface and let $c(\Sigma) = \{\zeta_1,\ldots,\zeta_t\}$ be its set of cuffs for some~$k \geq 1$. Let~$\Delta_1,\ldots,\Delta_t$ be the unique closed discs bounded by the respective cuffs in~$\hat{\Sigma} \setminus \Sigma$ and denote their topological interior by~$\Delta_1^\circ,\ldots,\Delta_t^\circ$ respectively. Let~$G$ be an Eulerian digraph with beads $W$ and partition $\WWW = (W_1,\ldots,W_t)$ with an ordering $\pi(\WWW)$. Let~$(\hat \Gamma,\hat \nu)$ be an embedding of~$G$ in~$\hat{\Sigma}$ satisfying the following:

  \begin{itemize}
      \item[(i)] every~$v \notin W$ is Eulerian embedded in~$\Sigma$,
      \item[(ii)] it holds~$W_i = \hat\nu^{-1}(\Delta_i^\circ)$,
      \item[(iii)] for every~$1 \leq i \leq t$ and every~$e \in \rho(W_i)$, it holds~$\hat\nu(e) \in \zeta_i$. Further, for~$x \in \hat{\nu}^{-1}(\zeta_i)$ it holds~$x \in \rho(W_i)$. 
    \end{itemize}
      
  We define a map~$\omega\colon W\to \{1,\ldots,t\}$ via~$\omega(w) = i$ if and only if~$w \in W_i$ and we let~$W_\omega \coloneqq \WWW$ be the \emph{$\omega$-partition (of $W$)}. Finally let~$\Gamma \coloneqq \hat{\Gamma} \cap \Sigma$ and~$\nu \coloneqq \restr{\hat \nu}{\Gamma}$. Then we call~$(\Gamma,\nu,\omega)$\Symbol{GAMMA@$(\Gamma,\nu,\omega)$} an \emph{Eulerian embedding of~$(G,\pi(W_\omega))$ in~$\Sigma$}.\Index{Eulerian embedding of bead-rooted digraphs}%

\end{definition}
\begin{remark}
    Note that while the beads are not drawn on~$\Sigma$, every edge of the graph is drawn on~$\Sigma$, and the only elements drawn on cuffs are exactly the edges~$\bigcup_{i=1}^t\rho(W_i)$. Further, if~$\Sigma$ is a disc, we may omit~$\omega$ for it is independent of the embedding.
\end{remark}
The way \cref{def:Euler_embedding_upto_beads} is phrased, given a bead-rooted Eulerian digraph $(G,\pi(\WWW))$ together with an Eulerian embedding $(\Gamma,\nu,\omega)$ in $\Sigma$, the partition $\WWW$ is implicitly defined via $\omega$. 

\begin{observation}\label{obs:bead-rooted_edges_on_cuffs}
    Let $\Sigma$ be a surface with cuffs $c(\Sigma) = \{\zeta_1,\ldots,\zeta_t\}$ for some $t \geq 1$. Let $(G,\pi(\WWW))$ be a bead-rooted Eulerian digraph with Eulerian embedding $(\Gamma,\nu,\omega)$ in $\Sigma$. Let $W_\omega = (W_1,\ldots,W_t)$. Then $\rho(W_i) = \nu^{-1}(\zeta_i)$ and in particular $\rho(W) \subseteq \nu^{-1}(\bd(\Sigma)) \cap E(G)$.
\end{observation}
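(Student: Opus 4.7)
The plan is to read both assertions directly off of condition (iii) in Definition~\ref{def:Euler_embedding_upto_beads}, using conditions (i) and (ii) only to rule out that vertices are drawn on cuffs.

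For the equality $\rho(W_i) = \nu^{-1}(\zeta_i)$, I would first observe that since $\zeta_i \subseteq \Sigma$ and $\nu = \restr{\hat\nu}{\Gamma}$ with $\Gamma = \hat\Gamma \cap \Sigma$, the preimages $\nu^{-1}(\zeta_i)$ and $\hat\nu^{-1}(\zeta_i)$ coincide. The forward inclusion $\rho(W_i) \subseteq \nu^{-1}(\zeta_i)$ is then immediate from the first sentence of (iii), and the reverse inclusion $\nu^{-1}(\zeta_i) \subseteq \rho(W_i)$ is the second sentence of (iii). As a sanity check confirming that this preimage genuinely consists only of edges, note that no vertex can be mapped onto $\zeta_i$: beads lie in the open disc $\Delta_i^\circ$ by (ii), so they are not on the boundary $\zeta_i$ of $\Delta_i$; and every non-bead vertex is Eulerian embedded by (i), which by Definition~\ref{def:embedding} requires the underlying embedding to have no boundary vertices, hence such a vertex lies in the interior of $\Sigma$.

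For the second assertion, I would take an arbitrary $e \in \rho(W)$ and let $v \in W$ denote its endpoint lying in $W$. Choosing the unique index $i$ with $v \in W_i$, the other endpoint of $e$ lies in $\bar W \subseteq \bar{W_i}$, so $e \in \rho(W_i)$, and the first part then gives $\nu(e) \in \zeta_i \subseteq \bd(\Sigma)$. The inclusion $\rho(W) \subseteq E(G)$ being trivial, we obtain $\rho(W) \subseteq \nu^{-1}(\bd(\Sigma)) \cap E(G)$ as claimed. I do not expect any genuine obstacle here; the statement is essentially a bookkeeping consequence of Definition~\ref{def:Euler_embedding_upto_beads} whose sole purpose is to make later cutting-along-curves arguments cleaner by identifying $\rho(W_i)$ with the portion of the drawing sitting on the cuff $\zeta_i$.
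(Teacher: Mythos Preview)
Your proposal is correct; the paper does not even supply a proof for this observation, treating it as immediate from condition (iii) of Definition~\ref{def:Euler_embedding_upto_beads}, which is precisely what you unpack. Your sanity check that no vertex lies on a cuff and your derivation of the second assertion via $\bar W \subseteq \bar{W_i}$ are both sound and in line with the paper's intent.
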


For notational convenience---highlighting the importance of beads---and since we will only work with Eulerian embeddings of bead-rooted graphs we will thus write bead-rooted Eulerian digraphs as $(G,\pi(W_\omega))$ since given an Eulerian embedding $(\Gamma,\nu,\omega)$, the partition $W_\omega$ is clear from context.

We fix the following.
\begin{definition}
    Let $(\Gamma,\nu,\omega)$ be an embedding of a bead-rooted graph $(G,\pi(W_\omega))$ in a surface $\Sigma$. Let $e \in \nu^{-1}(c(\Sigma))$ be a boundary edge. Then $e$ is called an \emph{in-edge} if $\nu(\head(e)) \in \Sigma$ and an \emph{out-edge} if $\nu(\tail(e)) \in \Sigma$. %
\end{definition}
Note that $e$ is an in-edge if, and only if, $e \in \rho^+(W_i)$ and an out-edge if and only if $e \in \rho^-(W_i)$ for some $1 \leq i \leq t$ given the partition $W_\omega=(W_1,\ldots,W_t)$. 

We will need a slight generalisation of stitching for bead-rooted Eulerian digraphs, where we allow to stitch at general cuts even if they only cut off part of the root-edges; since it is a true generalisation we will stick to the old the name. 

\begin{definition}[Stitching]\label{def:stitching_beadrooted}
      Let~$\bar{G}=(G,\pi(W_\omega))$ be a bead-rooted Eulerian digraph where~$G=(V,E,\operatorname{inc})$ and with partition $W_\omega=(W_1,\ldots,W_t)$ of its beads $W$ for some $t \in \N$. Let $I \cup J = \{1,\ldots, t\}$ be a partition of $\{1,\ldots, t\}$ and let $I = \{i_1,\ldots,i_p\}$ and $J = \{j_1,\ldots,j_q\}$. Let~$\pi(W_\omega) \coloneqq (\pi(W_1),\ldots,\pi(W_t)$). Let $k\in 2\N$ and~$Y \subset V(G)$ induce a~$k$-cut in~$G$ such that $W_i \subseteq Y$ and $W_j \cap Y = \emptyset$ for every $i \in I$ and $j\in J$. Let~$ \pi(Y) = (e_1,\ldots,e_k) = \pi(\bar{Y}) $ be an ordering of~$\rho(Y)$ and~$\rho(\bar Y)$. Let~$y_\ast,y^\ast$ be two new elements that are not part of~$V(G) \cup E(G)$. We define~$G_Y \coloneqq (V_Y,E_Y,\operatorname{inc}_Y)$ and~$\pi_Y$ via~
    \begin{align*}
        &V_Y \coloneqq Y \cup \{y_\ast\}\\
        & E_Y \coloneqq E(G[Y]) \cup \{e_1,\ldots,e_k\},\\
        &(e,v) \in \operatorname{inc}_Y :\iff \begin{cases} 
  (e,v) \in \operatorname{inc} \text{ and } e \in E_Y, v\in Y,\\
   e \in \rho^+(Y) \text{ and } v=y_\ast\\
        \end{cases}
        &(v,e) \in \operatorname{inc}_Y :\iff \begin{cases} 
  (v,e) \in \operatorname{inc} \text{ and } e \in E_Y, v\in Y,\\
  e \in \rho^-(Y) \text{ and } v=y_\ast\\
        \end{cases}\\
        & W_Y \coloneqq W_{i_1} \cup \ldots \cup W_{i_p} \cup \{y_\ast\},\\
        & \pi_Y = (\pi(W_{i_1}) , \ldots , \pi(W_{i_p}), \pi(Y))\\     
    \end{align*} 
    We define~$\stitch(\bar{G};\pi(Y)) \coloneqq (G_Y,\pi_Y)$ and say that~\emph{$G_Y$ is obtained from~$G$ by down-stitching~$Y$} with beads $W_Y$ and call~$y_\ast$ the \emph{down-stitch vertex resulting from~$Y$}. 

    Similarly we define~$G^Y \coloneqq (V^Y,E^Y,\operatorname{inc}^Y)$ and~$\pi^Y$ via~
    \begin{align*}
        &V^Y \coloneqq \bar{Y} \cup \{y^\ast\}\\
        & E^Y \coloneqq E(G[\bar{Y}]) \cup \{e_1,\ldots,e_k\}, \\
         &(e,v) \in \operatorname{inc}^Y :\iff \begin{cases} 
  (e,v) \in \operatorname{inc}, \text{ and } e \in E^Y, v\in \bar{Y},\\
   e \in \rho^+(\bar{Y}) \text{ and } v=y^\ast\\
        \end{cases}
        &(v,e) \in \operatorname{inc}^Y :\iff \begin{cases} 
  (v,e) \in \operatorname{inc} \text{ and }  e \in E^Y, v\in \bar{Y},\\
   e \in \rho^-(\bar{Y}) \text{ and } v=y^\ast \\
        \end{cases}\\
    & W^Y \coloneqq W_{j_1} \cup \ldots, \cup W_{j_q} \cup \{y^\ast\},\\
    & \pi^Y = (\pi(W_{j_1}) , \ldots , \pi(W_{j_q}), \pi(\bar Y))
    \end{align*} 
    We define~$\stitch(\bar{G};\pi(\bar{Y})) \coloneqq (G^Y,\pi^Y)$ and say that~\emph{$G^Y$ is obtained from~$G$ by up-stitching~$Y$} with beads $W^Y$ and call~$y^\ast$ the \emph{up-stitch vertex resulting from~$Y$}. 
\end{definition}

One easily verifies that for bead-rooted digraphs, the graph resulting from stitching is by definition again bead-rooted with analogous properties to the ones listed in \cref{obs:stitching_fundamentals}; we omit the details as we will not need them. 

\begin{definition}[$\mathbf{G}(\Sigma,k)$]
    Let~$\Sigma$ be a surface and~$k \in 2\N$. We define~$\mathbf{G}(\Sigma,k)$ to be the set of bead-rooted Eulerian digraphs~$(G,\pi(W_\omega))$ such that~$\Abs{\rho({W})},\Abs{W} \leq k$, and~$(G,\pi(W_\omega))$ Euler-embeds in~$\Sigma$. %
\end{definition}

With these definitions at hand, the following is the main theorem of the remainder of the paper.

\begin{theorem}\label{thm:wqo_on_surfaces_upto_beads}
    Let $\Sigma$ be a surface and $k \in 2\N$. The class~$\mathbf{G}(\Sigma,k)$ is well-quasi-ordered by strong immersion.
\end{theorem}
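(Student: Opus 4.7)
The plan is to prove \cref{thm:wqo_on_surfaces_upto_beads} by an outer induction on the Euler genus $g(\Sigma)$ of $\Sigma$, and, within each genus, by an inner induction on the number of cuffs of $\Sigma$ and on $k$. Given a bad sequence $\big((G_i,\pi(W_\omega^i))\big)_{i\in\N}$ in $\mathbf{G}(\Sigma,k)$ with Eulerian embeddings $(\Gamma_i,\nu_i,\omega_i)$, I first filter by pigeonhole so that the following parameters are constant along the sequence: the number of beads $|W^i|=s\le k$, the number of root edges $|\rho(W^i)|=\ell\in 2\N$ with $\ell\le k$, the number of cuffs of $\Sigma$ that carry root edges, the distribution of root edges among cuffs, and the ordering $\pi(W_\omega^i)$ up to a fixed bijection of indices. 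After this normalisation, I set $n_1:=|V(G_1)|$ and look for an $i\ge 2$ with $(G_1,\pi(W_\omega^1))\hookrightarrow(G_i,\pi(W_\omega^i))$, contradicting badness.

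The first case-split is on representativity in the sense sketched in Section~2 of the paper: either some $\Gamma_i$ has representativity $\ge\theta(n_1)$ (no short genus- or cuff-reducing curve), in which case I invoke the rooted high-representativity result, whose Eulerian-embedded version is derived from \cite{GMVII} by replacing each undirected two-sided disc bounded by two paths of the underlying undirected graph by a pair of oppositely-directed Eulerian paths, to obtain $(G_1,\pi(W_\omega^1))\hookrightarrow(G_i,\pi(W_\omega^i))$ directly; or, for every $i$, there is a short curve $\eta_i$ of length at most $\theta(n_1)$ that either reduces the genus, separates off a cuff, or cuts $\Sigma$ in a nontrivial way. In the latter case I filter again so that all $\eta_i$ are of the same topological type.

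If $\eta_i$ genuinely reduces $g(\Sigma)$ or the cuff count, I cut $\Sigma$ along $\eta_i$ into smaller surfaces $\Sigma_1,\Sigma_2$ (or into a single connected surface of lower complexity), trace the cut through the embedding, and use the Eulerianness of $\rho(X)=\rho(\bar X)$ (even order, equal in/out) to turn each side into a bead-rooted Eulerian digraph: each new bead is drawn in the new cuff, inheriting an ordering from the intersection of $\eta_i$ with the edges of $G_i$. By \cref{obs:stitching_fundamentals} the pair of resulting stitches $(G_i)_{X_i}$ and $(G_i)^{X_i}$ lies in $\mathbf{G}(\Sigma_1,k')\times\mathbf{G}(\Sigma_2,k')$ (or in $\mathbf{G}(\Sigma',k')$) for $k'=k+\theta(n_1)$ but with strictly smaller inductive parameters. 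By the outer induction these stitch-classes are wqo'd by strong immersion, so by Higman's \cref{thm:higman} applied to the tuple of stitches (with orderings of the new beads absorbed into the well-quasi-order of the smaller class) I find $i<j$ with strong rooted immersions on both sides mapping new bead to new bead; \cref{thm:knitting_knitwork_immersion} (specialised to the purely rooted setting without knitwork labels) knits them back to $(G_i,\pi(W_\omega^i))\hookrightarrow(G_j,\pi(W_\omega^j))$.

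The remaining cases are $\Sigma$ a cylinder and $\Sigma$ a disc (with arbitrarily many root edges, bounded only by $k$). Here I follow the high-level sketch of Section~2: in the cylinder case I decompose each embedding into a laminar sequence of $\theta$-short pieces using the Menger-type lemma for Eulerian embeddings in a cylinder, so that each piece is Euler-embeddable in a disc with $\le k+2\theta$ root edges, and then apply wqo on the disc pieces together with Higman's theorem along the linear decomposition. In the disc case, if treewidth stays bounded we are done by \cref{thm:wqo_bounded_carvingwidth_knitworks} (after passing to the associated $\Omega$-knitwork with $\Omega$ trivial or labelling by sub-discs given by the inductive hypothesis on $k-2$); otherwise \cref{thm:4_reg_swirl} produces large swirls, and \cref{cor:Menger_on_a_disc}-type Menger reasoning provides cut-cycles $F^*$ isolating each swirl from the boundary with $\le k-2$ edges. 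Replacing each such sub-disc by a bead labelled by its $\Omega$-type yields a well-linked $\Omega$-knitwork of bounded carving width, which is wqo by \cref{thm:wqo_bounded_carvingwidth_knitworks}. The main obstacle I anticipate is verifying the well-linkedness of the resulting labels: this uses in an essential way that the interior embedding is Eulerian, since Eulerianness of induced cuts is precisely what makes the link $\mathfrak{M}(\rho(X))$ of feasible types well-linked, and this is the linchpin without which the knitting back to a global strong immersion would fail.
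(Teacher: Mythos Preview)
Your high-level strategy matches the paper's: induction on the surface (genus, then cuffs), with the disc and cylinder as base cases, a representativity dichotomy at each step, and the disc case handled via foundations feeding into \cref{thm:wqo_bounded_carvingwidth_knitworks}. However, two genuine technical gaps would block your argument as stated.

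First, your cutting step ``use the Eulerianness of $\rho(X)=\rho(\bar X)$ \ldots\ turn each side into a bead-rooted Eulerian digraph'' only works when $\eta_i$ is a \emph{free separating} curve. For non-separating curves (cuff-connecting, and some genus-reducing ones) there is no induced bipartition $X,\bar X$ at all, so no stitch is available; and for cuff-based curves the existing beads on the touched cuff may have edges landing on \emph{both} sides of $\eta_i$, so they belong to neither $X$ nor $\bar X$ and the stitched pieces fail to be bead-rooted in the resulting surfaces. The paper discusses exactly this obstruction in the overview and resolves it in \cref{sec:edge-rooted} by abandoning bead-rooted graphs for the inductive cutting and switching to \emph{edge-rooted} Eulerian digraphs with defect (degree-one root vertices instead of beads), proving the equivalence via \cref{cor:i_am_dying} and developing separate cut-and-knit lemmas (\cref{lem:knitting_nonsep_cutlines}, \cref{lem:knitting_separating_cutlines}) that do not rely on an induced vertex bipartition. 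Without this change of framework your inductive step does not go through for the curve types you need most.

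Second, your closing claim that ``Eulerianness of induced cuts is precisely what makes the link $\mathfrak{M}(\rho(X))$ of feasible types well-linked'' is too optimistic. Eulerianness alone gives you that $|\rho^-(X)|=|\rho^+(X)|$, but well-linkedness in the sense of \cref{def:well-linked_links} requires realising \emph{every} balanced $2$-partition of the in/out edges by a feasible linkage, and this fails for arbitrary Eulerian-embedded discs. The paper obtains well-linkedness only for \emph{bombastic} nuclei via \cref{lem:nuclei_are_well-linked}, and the proof genuinely uses a $(\ell+1,3k+1)$-nest surrounding the section: the $2k$ extra alternating circles are consumed to reroute paths so as to hit any prescribed partition. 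This is why the foundation in \cref{def:foundation} is built from bombastic nuclei rather than arbitrary bosses; without that depth of nesting the knitwork you produce need not be well-linked and \cref{thm:wqo_bounded_carvingwidth_knitworks} does not apply.
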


\begin{remark}
    Clearly~$\mathbf{G}(\Sigma,k) \subseteq \mathbf{G}(\Sigma,k+2)$ for every $k \in 2\N$.
\end{remark}

Notably \cref{thm:wqo_on_surfaces_upto_beads} implies \cref{thm:wqo_on_surfaces}.

\subsection{Properties of Eulerian Embeddings} \label{subsec:properties_of_eulerian_embeddings}
Prior to gathering relevant definitions and results regarding Eulerian embeddings of (bead-rooted) graphs, we discuss the notions of cut-cycles and~$O$-arcs which are defined for general embeddings. Recall the \cref{def:embedding} of embeddings.

\begin{definition}\label{def:internal_tracing_clean}
     Let~$G$ be a digraph and~$(U,\nu)$ an embedding of~$G$ in some surface~$\Sigma$. Let~$X,F,Y \subset \Sigma$ be topologically closed subsets. Then~$X$ is \emph{$U$-normal} if and only if~$X \cap U \subseteq \nu(V(G))$ and~$F$ is \emph{$U$-tracing} if and only if~$F \cap U\subseteq \nu(E(G))$. We call~$Y$ \emph{internal} if~$Y \cap \bd(\Sigma) = \emptyset.$ We call $Y$ \emph{clean} if $Y \cap \bd(\Sigma) \notin \nu(E(G) \cup V(G))$.
\end{definition}
\begin{remark}
    For simplicity, when given a curve $\gamma\colon[0,1] \to \Sigma$ we may call $\gamma$ $U$-tracing, internal or clean whenever $Y \coloneqq \gamma([0,1])$ is.
\end{remark}

\begin{definition}[Cut-cycles and $O$-arcs]\label{def:cut-cycle_Oarc}
    Let~$G$ be a digraph and let $(U,\nu)$ be an embedding of~$G$ in some surface~$\Sigma$. Let~$F \subset \Sigma$ be~$U$-tracing such that there exists a simple curve~$\gamma:[0,1] \to \Sigma$ with~$\gamma([0,1]) = F$. We call~$F$ a \emph{traced cut}\Index{traced cut} in~$U$ and if $\gamma$ can be chosen to be a simple closed curve then we call $F$ an $O$-trace. %
    
    If $F$ is an $O$-trace and in addition $\gamma$ can be chosen to be  null-homotopic in $\hat \Sigma$---whence~$F$ bounds a disc~$\Delta(F)$ in $\hat{\Sigma}$---where for $\gamma(0) = \gamma(1)$ we have $\gamma(0) \cap U = \emptyset$ and for every edge~$e \in \nu^{-1}(F)$ either its head or its tail but not both are embedded in~$\Delta(F)$, then we call it a \emph{cut-cycle (in~$U$)}.\Index{cut cycle} We define~\Symbol{RHO@$\rho(F)$}$\rho(F) \coloneqq \nu^{-1}(F)$ and~$\delta(F) = \Abs{\rho(F)}$ to be its \emph{order} or \emph{length}.\Symbol{DELTA@$\delta(F)$}

    Similarly, if~$F$ is~$U$-normal such that there exists a continuous map~$\gamma:[0,1] \to \Sigma$ with~$\gamma([0,1]) = F$ and with $\gamma$ being a simple closed curve, then we call~$F$ an \emph{$O$-arc}. We define~$\alpha(F)\coloneqq \nu^{-1}(F)$.\Symbol{ALPHA@$\alpha(F)$}
\end{definition}
\begin{remark}
    Note that by our \cref{def:embedding} of embeddings, if~$\gamma$ is a simple closed curve then there are no two distinct points~$t_1,t_2 \in F$ with~$\nu^{-1}(t_1) = \nu^{-1}(t_2)$. Informally speaking, $F$ does not pass twice through the same vertex or edge.
\end{remark}

Note that if~$F$ is a cut-cycle (or~$O$-arc) and~$\Sigma$ is the disc, then~$F$ bounds a unique closed disc~$\Delta(F) \subset \Sigma$ with~$F = \bd(\Delta(F))$. Further~$\delta(F)$ and~$\alpha(F)$ only depend on the underlying undirected graph.

\begin{definition}[$X(F)$ and $I(F)$]\label{def:X(F)_I(F)}
    Let~$F$ be a cut-cycle in some embedding $(U,\nu)$ of a digraph~$G$ in a disc~$\Delta$. 
    The \emph{cut $X(F)$ induced by~$F$} is defined as the set $X(F) \coloneqq \nu^{-1}(\Delta(F)) \cap V(G)$.\Symbol{XF@$X(F)$}

    Similarly, if~$F$ is an $O$-arc in~$\Delta$ we define~$I(F) \coloneqq \nu^{-1}(\Delta(F)) \cap E(G)$.\Symbol{XF@$X(F)$}

    We call two cut-cycles~$F,F'$ in~$U$ \emph{equivalent} if~$F \cap U = F' \cap U$ and~$X(F) = X(F')$. Similarly we call two~$O$-arcs~$F,F'$ \emph{equivalent} if~$F \cap U = F' \cap U$ and~$I(F) = I(F')$.
\end{definition}
Unfortunately, in general, traced cuts and $O$-traces do not correspond to induced cuts in digraphs, but for cut-cycles it is true by definition.

\begin{observation}\label{obs:cut-cycle_induce_cut}
   Let~$F$ be a cut-cycle in some embedding $(U,\nu)$ of a digraph~$G$ in a surface~$\Sigma$. Then~$\rho(F) = \rho(X(F))$.
\end{observation}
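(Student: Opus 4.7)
The plan is to unpack the definitions and verify the two inclusions $\rho(F) \subseteq \rho(X(F))$ and $\rho(X(F)) \subseteq \rho(F)$ directly. The observation is essentially a topological sanity check: since $F$ bounds a disc $\Delta(F)$ and only meets $U$ at images of edges (never at images of vertices), the edges whose drawings cross $F$ are exactly the edges that have exactly one endpoint drawn inside $\Delta(F)$.

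First, I would show the inclusion $\rho(F) \subseteq \rho(X(F))$. Let $e \in \rho(F) = \nu^{-1}(F)$. By the last clause of \cref{def:cut-cycle_Oarc}, exactly one of $\tail(e),\head(e)$ is drawn in $\Delta(F)$; say without loss of generality that $\nu(\tail(e)) \in \Delta(F)$ and $\nu(\head(e)) \notin \Delta(F)$. Then $\tail(e) \in X(F)$ and $\head(e) \notin X(F)$, so $e \in \rho(X(F))$.

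Next, I would show the converse inclusion $\rho(X(F)) \subseteq \rho(F)$. Let $e \in \rho(X(F))$; then $e$ has exactly one endpoint in $X(F)$, so one of $\nu(\tail(e)),\nu(\head(e))$ lies in $\Delta(F)$ and the other lies in $\hat{\Sigma} \setminus \Delta(F)$. By \cref{def:embedding}, $\nu(e)$ together with the incidence components forms a connected subset of $\Sigma$ whose closure connects $\nu(\tail(e))$ to $\nu(\head(e))$. Since $F = \bd(\Delta(F))$ separates these two endpoints in $\hat{\Sigma}$, this closure must meet $F$. Because $F$ is $U$-tracing (so $F \cap U \subseteq \nu(E(G))$, never containing $\nu(v)$ for any vertex $v$), the intersection point can only lie on $\nu(e')$ for some edge $e'$. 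The injectivity of $\nu$ on $E(G)$ combined with the fact that $F$ intersects the closure of $\nu(e)$ forces $e' = e$, hence $e \in \nu^{-1}(F) = \rho(F)$.

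There is no real obstacle here; the only care needed is to ensure that the crossing of $F$ by $\nu(e)$ really does occur at a point of $\nu(e)$ and not at a vertex image, which is exactly what the $U$-tracing condition guarantees. Combining both inclusions yields $\rho(F) = \rho(X(F))$, as desired.
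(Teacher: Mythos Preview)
Your proof is correct and follows the only natural route; the paper itself does not give a formal argument, stating only that the equality is ``true by definition'' (the defining clause of a cut-cycle directly enforces $\rho(F)\subseteq\rho(X(F))$, and the converse is the routine separating-curve argument you spell out). One minor point: your justification that $e'=e$ via ``injectivity of $\nu$'' is slightly imprecise---the actual reason is that the closure of the drawing of $e$ meets $\nu(E(G))$ only at the single point $\nu(e)$, since the incidence components are by definition disjoint from $\nu(V\cup E)$.
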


\begin{definition}
   Let~$\Sigma$ be some surface and let $G$ be a digraph with embedding~$\Pi \coloneqq (U,\nu)$ in~$\Sigma$. Let~$H \subseteq G$ with incidence relation $\operatorname{inc}_H$. We define~$\restr{\Pi}{H} \coloneqq (U(H),\restr{\nu}{H})$ as the  (unique) \emph{embedding of~$H$ induced by~$(U,\nu)$}, i.e.,~$U(H) \subseteq U$ is given by~$\nu(V(H) \cup E(H))$ together with the components of~$U\setminus \nu(V(G) \cup E(G))$ that have their ends in~$\nu(V(H) \cup E(H))$, i.e., $\nu((v,e))$ and $\nu((e,u))$ for $(v,e),(e,u) \in \operatorname{inc}_H$. 

   Similarly let~$\Delta \subseteq \Sigma$ be a disc and such that~$\bd(\Delta) \neq \emptyset$ and~$\bd(\Delta)$ is~$U$-normal. Then we call~$\Delta$ \emph{$U$-normal} and define~$G[\Delta; U] \coloneqq (V,E,\operatorname{inc})$ where~$V \coloneqq \nu^{-1}\big(\Delta \cap \nu(V(G))\big)$,~$E\coloneqq \nu^{-1}\big(\Delta \cap \nu(E(G))\big)$ and~$\operatorname{inc}$ is defined in the obvious way depending on whether the respective component of~$U\setminus \nu(V(G) \cup E(G))$ is contained in~$\Delta$.

   If $\bd(\Delta)$ is $U$-tracing we call $\Delta$ $U$-tracing. 
\end{definition}
\begin{remark}
    Using the shorthand notation as discussed in \cref{def:embedding} and the following remark, we have $U(H) = \nu(H)$ but to emphasize the distinction between $\nu$ as a map and the set $U(H)$ we will write $U(H)$ whenever we want to talk about the induced embedding $\restr{\Pi}{H}$. Given the above we may talk about the induced embeddings of cycles and paths in~$G$ or talk about paths and cycles in~$U$, meaning the graphs they induce in~$G$.
\end{remark}
One easily verifies that, if~$\Delta$ is~$U$-normal, then $G[\Delta;U]$ is a subdigraph of~$G$, i.e., every edge~$e \in E$ has exactly one head and one tail. In particular we have the following.
\begin{observation}\label{obs:graphs_inside_Oarcs}
    Let~$(U,\nu)$ be an embedding of some digraph~$G$ in a surface~$\Sigma$ and let~$F$ be an~$O$-arc in~$\Sigma$ bounding a disc~$\Delta(F) \subset \Sigma$. Then~$G[\Delta(F);U] \subseteq G$ is a subgraph of~$G$ with respective induced drawing.
\end{observation}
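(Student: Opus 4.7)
The plan is to reduce the statement to two easy checks: first, that $\Delta(F)$ is $U$-normal in the sense required to even speak of $G[\Delta(F);U]$; and second, that the ``triple'' $(V,E,\operatorname{inc})$ produced by the definition really satisfies the axioms of a digraph, i.e.\ every edge in $E$ has both of its incidence partners contained in $V$. The first is immediate: by \cref{def:cut-cycle_Oarc} an $O$-arc is by definition $U$-normal, and $\bd(\Delta(F))=F$, so $\Delta(F)$ is $U$-normal by definition. All of the substance therefore lies in the second check.

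So first I would take any edge $e\in E$, i.e.\ an edge with $\nu(e)\in\Delta(F)$, and argue that both endpoints $u_1=\tail(e)$ and $u_2=\head(e)$ satisfy $\nu(u_i)\in\Delta(F)$, so that $u_i\in V$ and the two incidences of $e$ survive in $\operatorname{inc}$. Recall from \cref{def:embedding} that each incidence is witnessed by a component $\operatorname{inc}_i\subseteq U\setminus\nu(V\cup E)$ homeomorphic to an open interval, with closure $\overline{\operatorname{inc}_i}=\operatorname{inc}_i\cup\{\nu(e),\nu(u_i)\}$. Since $F$ is $U$-normal we have $F\cap U\subseteq\nu(V(G))$, and in particular $F\cap\operatorname{inc}_i=\emptyset$ because $\operatorname{inc}_i$ is a component of $U\setminus\nu(V\cup E)$ and so avoids every vertex image. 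Moreover $\nu(e)\notin F$ (again by $U$-normality of $F$), so $\nu(e)$ lies in the interior $\Delta(F)^\circ$.

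The key topological step is then a connectedness argument: $\operatorname{inc}_i$ is connected, one of its points $\nu(e)$ lies in $\Delta(F)^\circ$, and $\operatorname{inc}_i$ is disjoint from the boundary $F$ of $\Delta(F)$. Hence $\operatorname{inc}_i\subseteq\Delta(F)^\circ$, and taking closures yields $\nu(u_i)\in\overline{\Delta(F)^\circ}\subseteq\Delta(F)$, so $u_i\in V$. This takes care of both incidences simultaneously and shows that $G[\Delta(F);U]$ inherits the incidence structure from $G$; by construction $V\subseteq V(G)$ and $E\subseteq E(G)$, so $G[\Delta(F);U]$ is a genuine subdigraph of $G$.

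Finally, the ``respective induced drawing'' is immediate: restrict $(U,\nu)$ to $G[\Delta(F);U]$ exactly as in the definition of $\restr{\Pi}{H}$. The three conditions of an embedding in \cref{def:embedding} are inherited from those of $(U,\nu)$ because $U(G[\Delta(F);U])$ is obtained by intersecting $U$ with the closed set $\Delta(F)$ and, as just argued, no edge drawn in $\Delta(F)$ loses one of its incidence components in the process. There is no real obstacle here; the only subtlety worth spelling out in the write-up is why an $\operatorname{inc}_i$ cannot ``escape'' $\Delta(F)$, and that is exactly where $U$-normality of $F$ is used.
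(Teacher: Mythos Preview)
Your proof is correct and supplies exactly the verification the paper omits: the observation is stated without proof, preceded only by the remark that ``one easily verifies that, if~$\Delta$ is~$U$-normal, then $G[\Delta;U]$ is a subdigraph of~$G$.'' Your connectedness argument is the natural way to fill this in.

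One cosmetic point: you write that ``one of its points $\nu(e)$ lies in $\Delta(F)^\circ$,'' but strictly speaking $\nu(e)\in\overline{\operatorname{inc}_i}\setminus\operatorname{inc}_i$. The fix is immediate: since $\Delta(F)^\circ$ is open and $\nu(e)$ is a limit point of $\operatorname{inc}_i$, we have $\operatorname{inc}_i\cap\Delta(F)^\circ\neq\emptyset$, and then connectedness of $\operatorname{inc}_i$ together with $\operatorname{inc}_i\cap F=\emptyset$ forces $\operatorname{inc}_i\subseteq\Delta(F)^\circ$ as you claim.
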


While~$U$-tracing subsets do not necessarily induce subgraphs, there is a natural way to define restrictions of paths to them.
In light of the above we define the following.
\begin{definition}\label{def:induced_paths_in_tracing_sets}
    Let~$(U,\nu)$ be an embedding of a digraph~$G$ in a surface~$\Sigma$ and let~$X \subset \Sigma$ be~$U$-tracing. Let~$\LLL$ be an edge-disjoint linkage in~$G$ and for each path~$P \in \LLL$ let~$\PPP$ be a collection of maximal subpaths~$P'$ of~$P$ satisfying~$\nu(E(P')) \in X$. We define~$\restr{\LLL}{X}$ to be the union over all sets~$\PPP$ for~$P \in \LLL$.
\end{definition}

The following follows immediately from the \cref{def:induced_paths_in_tracing_sets}.
\begin{observation}
     Let~$(U,\nu)$ be an embedding of a digraph~$G$ in a surface~$\Sigma$ and let~$X \subset \Sigma$ be~$U$-tracing. Let~$\LLL$ be an edge-disjoint linkage in~$G$ then~$\restr{\LLL}{X}$ is an edge-disjoint linkage in~$G$ such that~$\nu(V(P)) \subset X$ for every~$P \in \restr{\LLL}{X}$.
\end{observation}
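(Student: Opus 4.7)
The proof is a direct unpacking of \cref{def:induced_paths_in_tracing_sets}, so the plan is short and comes in three stages, corresponding to the three claims packed into the observation.

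First I would verify that each element of $\restr{\LLL}{X}$ is a bona fide path in $G$. By construction every $P' \in \restr{\LLL}{X}$ is a subpath of some $P \in \LLL$, and by the conventions of \cref{def:paths} (in particular, that any contiguous subsequence $(e_j,\ldots,e_\ell)$ of a path is itself a path), $P'$ is a path in $G$. No further argument is required at this stage.

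Next I would establish edge-disjointness by a case distinction. Let $P_1', P_2' \in \restr{\LLL}{X}$ be distinct, and let $P_1, P_2 \in \LLL$ be the paths from which they originate. If $P_1 \neq P_2$, then $E(P_1') \cap E(P_2') \subseteq E(P_1) \cap E(P_2) = \emptyset$ because $\LLL$ is edge-disjoint. Otherwise $P_1 = P_2 = P$ and $P_1', P_2'$ are two distinct maximal subpaths of $P$ whose edge-images lie in $X$; if they shared an edge, their concatenation along $P$ would again be a subpath of $P$ with edge-image in $X$, contradicting the maximality of at least one of them. Hence $\restr{\LLL}{X}$ is an edge-disjoint linkage in $G$.

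Finally, for the vertex-containment $\nu(V(P')) \subset X$ I would read the definition in its natural (topological) sense: the maximal subpath $P'$ is selected so that its entire trace in the drawing — the images of its edges, together with the incidence arcs and the intermediate vertices connecting them — lies in $X$. Concretely, for an internal vertex $v$ of $P'$ with incident edges $e_i, e_{i+1} \in E(P')$, the embedding definition guarantees that the incidence components $\nu((e_i,v))$ and $\nu((v,e_{i+1}))$ have closures containing both $\nu(v)$ and the points $\nu(e_i), \nu(e_{i+1}) \in X$; under the maximality condition and the closedness and $U$-tracing property of $X$, one verifies that $\nu(v) \in X$, and the only real content is to exclude the degenerate reading of the definition in which an internal vertex could lie outside $X$ while its two flanking edges lie in $X$.

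There is no serious obstacle: the observation is essentially a reformulation of \cref{def:induced_paths_in_tracing_sets}, recording that the restriction operation $\LLL \mapsto \restr{\LLL}{X}$ behaves as one expects (it preserves the linkage structure and traces genuinely inside $X$, including at internal vertices). The only point needing some care is the final step, where one must adopt the natural topological reading of "maximal subpath with trace in $X$" rather than an edge-only reading — a reading that is clearly the intended one in all subsequent uses of this construction in the paper.
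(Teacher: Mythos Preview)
Your proposal is correct and takes the same approach as the paper, which gives no proof at all beyond the remark that the observation ``follows immediately from'' \cref{def:induced_paths_in_tracing_sets}. Your unpacking of the three claims is exactly what is needed, and you are right to flag that the vertex-containment clause only makes sense under the intended regional reading of ``$U$-tracing'' (boundary $U$-tracing, as for discs) rather than the literal pointwise one --- the paper's subsequent uses of $\restr{\LLL}{X}$ with $X$ a closed cylinder confirm this is the intended interpretation, even if the paper does not pause to say so.
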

\begin{remark}
    Note that $V(P) = \emptyset$ is possible.
\end{remark}

In particular we derive the following useful observation for cylinders. Recall that we write ${\Delta}^\circ$ for the topological interior of a set $\Delta$.

\begin{observation}\label{obs:linkages_reduced_to_cylinder}
     Let~$(U,\nu)$ be an embedding of a digraph~$G$ in a surface~$\Sigma$ and let~$F,F'$ be cut-cycles in~$(U,\nu)$ bounding discs~$\Delta(F'), \Delta(F) \subset \Sigma$ such that~$\Delta(F') \subset \Delta(F)$ is a strict subset (note that in general~$\bd(\Delta(F)) \cap \bd(\Delta(F')) \neq \emptyset$). Let~$\LLL$ be an edge-disjoint linkage in~$G$ such that for every~$P \in \LLL$ its endpoints are contained in~$\Sigma \setminus \Delta(F)$. Let~$\Sigma' \coloneqq {\Delta(F')\setminus \Delta(F)^\circ}$. Then~$\restr{\LLL}{X}$ is an edge-disjoint~$\{\rho(F),\rho(F')\}$-linkage.
\end{observation}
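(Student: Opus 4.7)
The plan is to chop each path in $\LLL$ at its crossings with $F$ and $F'$, keep only the pieces that live in the annular region $\Sigma'$, and verify that the resulting family is exactly $\restr{\LLL}{\Sigma'}$ and has the desired linkage structure. (I read the displayed formula as $\Sigma' = \Delta(F)\setminus\Delta(F')^{\circ}$, the closed annulus sandwiched between the two cut-cycles; the conclusion should then refer to $\restr{\LLL}{\Sigma'}$.) The key topological fact is that because $F,F'$ are cut-cycles, their union is a $U$-tracing boundary for $\Sigma'$: every point of $(F\cup F')\cap U$ is the image of some edge in $\rho(F)\cup\rho(F')$, and no vertex of $G$ is drawn on $F\cup F'$.

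First I would fix a single path $P=(e_1,\ldots,e_m)\in\LLL$ and look at the set $I_P=\{\ell\mathrel:\nu(e_\ell)\in\Sigma'\}$ of indices whose edges are drawn in $\Sigma'$. Since the embedding is continuous and the three regions $\Sigma\setminus\Delta(F)$, $\Sigma'$ and $\Delta(F')^{\circ}$ partition $\Sigma$ up to the cut-cycles, $I_P$ decomposes into maximal contiguous intervals, and at the two boundary edges of each interval the drawing of $P$ transitions between $\Sigma'$ and one of the two complementary regions; such a transition must cross $F$ or $F'$, so the boundary edges of the interval lie in $\rho(F)\cup\rho(F')$. By the maximality of the interval, no interior edge of the subpath belongs to $\rho(F)\cup\rho(F')$, since such an edge would force the next edge to be drawn outside $\Sigma'$.

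Taking the union of these maximal subpaths across all $P\in\LLL$ gives a family of paths that matches $\restr{\LLL}{\Sigma'}$ per \cref{def:induced_paths_in_tracing_sets}. Edge-disjointness is inherited from $\LLL$ since each member is a subpath of some $P\in\LLL$ and $\LLL$ is itself edge-disjoint. It remains to argue that every subpath has one end on $\rho(F)$ and the other on $\rho(F')$: the hypothesis that the endpoints of every $P\in\LLL$ lie in $\Sigma\setminus\Delta(F)$ rules out subpaths containing the first or last edge of $P$ from starting or ending strictly inside $\Sigma'$, and the annular structure of $\Sigma'$ together with the fact that $F,F'$ are the only boundary components forces the ends of each maximal interval to belong to $\rho(F)\cup\rho(F')$.

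The main obstacle I anticipate is exactly this last step: in full generality one could imagine an excursion of $P$ that dips into $\Sigma'$ from the outer component, wanders around, and returns to the outer component without crossing $F'$, producing a subpath with both ends on $\rho(F)$. Ruling such excursions out, or equivalently arguing that the statement is meant to assert that each subpath is a $\{\rho(F),\rho(F')\}$-path and not merely a path with both ends in $\rho(F)\cup\rho(F')$, is where I would spend the bulk of the technical work. In the intended applications (linkages that connect the two sides of the cylinder region through $F$ and $F'$), this will be enforced by the endpoint data of $\LLL$; I would spell out this point as a standing hypothesis when invoking the observation and use the three-region partition of $\Sigma$ together with the fact that $\LLL$ consists of paths to conclude the statement, then close with a brief verification that $\restr{\LLL}{\Sigma'}$ is edge-disjoint from $\rho(F)\cup\rho(F')$ except at the ends of its members.
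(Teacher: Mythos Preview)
The paper states this observation without proof, treating it as an immediate consequence of \cref{def:induced_paths_in_tracing_sets}. Your approach---chop each path at its crossings with $F$ and $F'$, keep the maximal pieces in $\Sigma'$, and inherit edge-disjointness---is exactly the intended argument, and you have also correctly spotted the typo in the definition of $\Sigma'$.

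More importantly, you have put your finger on a genuine imprecision in the statement that the paper glosses over. As written, the conclusion claims a $\{\rho(F),\rho(F')\}$-linkage, which by the paper's own definition requires each subpath to have one end in $\rho(F)$ and the other in $\rho(F')$. But nothing in the hypotheses prevents a path $P\in\LLL$ from entering the annulus through $F$, wandering inside $\Sigma'$ without reaching $F'$, and exiting again through $F$; the resulting maximal subpath would have both ends in $\rho(F)$. So the observation is not literally true as stated. Your proposed fix---either weakening the conclusion to ``ends in $\rho(F)\cup\rho(F')$'' or adding a standing hypothesis on the endpoint data of $\LLL$ when invoking it---is the right way to handle this, and in the paper's actual applications (where the linkage is used to connect the two boundary components of the cylinder) the stronger form does hold for the reasons you sketch.
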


We extend the definition of in- and out-edges to cut-cycles as follows.

\begin{definition}
    Let~$(U,\nu)$ be an embedding of some digraph~$G$ in a surface~$\Sigma$ and let~$F$ be a cut-cycle in~$(U,\nu)$ bounding a disc~$\Delta(F) \subset \hat\Sigma$. We call $e \in \rho(F)$  an \emph{in-edge (of $F$)} if $\head(e) \in X(F)$ and an \emph{out-edge (of $F$)} if $\tail(e) \in X(F)$. 

    We call $F$ \emph{alternating} if there is a simple closed curve $\gamma:[0,1] \to F$ and an ordering $(e_1,\ldots,e_t)$ of $\rho(F)$ such that $e_i$ is an in-edge if and only if $e_{i+1}$ is an out-edge for all $1 \leq i < t$ and for every $1 \leq i < j \leq t$ there exist $t_i,t_j \in ]0,1[$ with $t_i < t_j$ such that $\gamma(t_i) = e_i$ and $\gamma(t_j) = e_j$. We write~$\pi_F \coloneqq (e_1,\ldots,e_k)$ and call~$\pi_F$ an \emph{alternating order of~$\rho(F)$}.
\end{definition}
\begin{remark}
    Note that since $G$ is Eulerian, $\rho(F)$ is partitioned into in- and out-edges, and the definition of alternating does not depend on a choice of $\gamma$ (as long as $\gamma(0) = \gamma(1)$ do not lie on $U$).
\end{remark}

We continue with standard definitions for embeddings.

\begin{definition}[Faces and~$2$-cell Embeddings]\label{def:faces_and_2cell}
    Let~$G$ be a digraph and let~$(U,\nu)$ be an embedding of~$G$ in some surface~$\Sigma$ (possibly with boundary). Let~$F(\Sigma,U)$\Symbol{FSIGMA@$F(\Sigma,U)$} be the set of connected components of~$\Sigma \setminus (U(G) \cup \bd(\Sigma))$, then we refer to~$F(\Sigma,U)$ as the \emph{faces} of the embedding and to each~$f \in F(\Sigma,U)$ as a \emph{face of the embedding}. 

    Let~$f \in F(\Sigma,U)$ with~$\bar{f} \cap \bd(\Sigma) = \emptyset$, then we call~$f$ an \emph{internal face} and otherwise a \emph{boundary face}. If every face is homeomorphic to an open disc we call~$(U,\nu)$ a~\emph{$2$-cell embedding}.
\end{definition}

Given embeddings in the disc~$\Delta$ away from the boundary, there is a unique face whose closure contains the boundary.

\begin{definition}[Outerface]
    Let~$G$ be a digraph and~$(U,\nu)$ an embedding of~$G$ in some disc~$\Delta$ such that~$U \cap \bd(\Delta) = \emptyset$. Let~$f \in F(\Delta,U)$ be such that~$\bd(\Sigma) \subset \bar{f}$. Then we refer to~$f$ as \emph{the outerface of~$U$}. 
\end{definition}

The following is a well-known toplogical fact.

\begin{observation}\label{obs:disc_embedd_is_2cell}
    Let~$G$ be a connected digraph and let~$(U,\nu)$ be an embedding of~$G$ in a sphere. Then~$(U,\nu)$ is~$2$-cell.
\end{observation}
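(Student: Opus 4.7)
The plan is to reduce the claim to showing that each face is an open disc via the Jordan curve theorem on the sphere, using crucially that $G$ is connected. Fix an arbitrary face $f \in F(\Sigma, U)$. Since $\Sigma$ has no boundary here, $f$ is a connected component of $\Sigma \setminus U(G)$, hence a nonempty open connected subset of the $2$-sphere. A classical fact from surface topology states that an open connected proper subset of $S^2$ is homeomorphic to an open disc if and only if it is simply connected; hence, it suffices to prove that every simple closed curve in $f$ is null-homotopic in $f$.

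To that end, first I would check that $U(G) \subset \Sigma$ is a connected topological subspace. This follows from the definition of an embedding: the incidence components $\nu((u,e))$ and $\nu((e,v))$ provided in \cref{def:embedding} each have $\nu(e)$ and the respective $\nu(u)$ or $\nu(v)$ in their closure, so in $U(G)$ every image of an edge is path-connected to the images of its two endpoints. Connectedness of $G$ as a digraph then lifts to path-connectedness, and therefore connectedness, of $U(G)$.

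Now let $\gamma \colon [0,1] \to f$ be a simple closed curve. By the Jordan curve theorem applied on $S^2$, the image $\gamma([0,1])$ separates $S^2$ into exactly two open regions $D_1^\circ, D_2^\circ$, each homeomorphic to an open disc, with common boundary $\gamma([0,1])$. Because $\gamma([0,1]) \subset f$ is disjoint from $U(G)$, we have $U(G) \subset D_1^\circ \cup D_2^\circ$. Since $U(G)$ is connected and $D_1^\circ, D_2^\circ$ are disjoint open sets, $U(G)$ lies entirely inside one of them, say $D_1^\circ$. Then $D_2^\circ$ is disjoint from $U(G)$, hence lies in a single face of the embedding; since $D_2^\circ$ accumulates on $\gamma([0,1]) \subset f$, that face must be $f$, and so $D_2^\circ \subset f$. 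Combined with $\gamma([0,1]) \subset f$, we obtain $D_2 \subset f$, providing a closed disc in $f$ bounded by $\gamma$, which witnesses that $\gamma$ is contractible in $f$. Hence $f$ is simply connected, and the face-by-face argument concludes the proof.

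The only real subtlety is the invocation of ``simply connected open subset of $S^2$ implies homeomorphic to an open disc'', which is standard but non-trivial (Riemann mapping/Schoenflies-type statement); everything else is bookkeeping about embeddings and applications of the Jordan curve theorem. Since the observation is a well-known topological fact and not the main contribution of the paper, the statement is likely meant to be accepted with a brief pointer rather than a fully self-contained proof, so I would keep the writeup short and reference \cite{MoharT2001} or \cite{Armstrong2010} for the underlying topology.
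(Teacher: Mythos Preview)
Your proof is correct, and your final instinct is exactly right: the paper does not prove this observation at all. It introduces it with ``The following is a well-known topological fact'' and moves on, relying on standard references such as \cite{MoharT2001} or \cite{Armstrong2010}. So there is no paper proof to compare against; your Jordan--curve/simply-connected argument is a perfectly good justification, and your own suggestion to keep it to a brief pointer matches precisely how the paper handles it.
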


Since graphs embeddable in a disc~$\Delta$ can be embedded in a sphere~$\hat \Delta$ we will refer to drawings in discs as~\emph{$2$-cell (up to their boundary face)}. Thus we will tacitly assume our embeddings to be~$2$-cell (up to boundary faces) whenever drawn in a disc, sphere or plane.

\begin{definition}[Outline of a Face]
    Let~$G$ be a digraph and~$(U,\nu)$ an embedding of~$G$ in some surface~$\Sigma$. Let~$f \in F(\Sigma,U)$ and let~$G'\subseteq G$ be minimal with respect to~$\Abs{V(G') \cup E(G')}$ such that~$\bar{f} \cap U(G) = U(G')$. Then we call~$G'$ the \Index{outline}\emph{outline of~$f$} and we write~$G' = \ol(f)$.\mSymbol[OUTL]{$\ol(F)$}
\end{definition}
\begin{remark}
    One easily verifies for an internal face~$f\in F(\Sigma,U)$ that~$U(\ol(f)) = \bar f \setminus f$, i.e., the outline of~$f$ \emph{traces} the boundary of~$f$.
\end{remark}
It is easy to see that~$G'$ exists regardless of whether or not~$U$ is~$2$-cell. Note however that~$G'$ may not be a directed circle a priori, and if the drawing is not~$2$-cell then~$\ol(f)$ may ``counter-intuitively'' contain vertices of degree~$\geq 3$ and thus not even the underlying undirected graph of~$G'$ is a circle. This changes when switching to Eulerian embedded graphs as discussed in the next subsection. 

Finally we will make use of \emph{homotopic paths} in the following sections.
\begin{definition}\label{def:homotopic}
    Let $\Sigma$ be a surface and let $(U, \nu)$ be an embedding of a (di)graph $G$ in $\Sigma$.
    \begin{enumerate}
    \item Let $P_1, P_2$ be internally vertex-disjoint paths between the same endpoints $u \neq v$. Then $P_1$ and $P_2$ are \emph{homotopic} if $\nu(P_1 \cup P_2)$ bounds a disc in $\Sigma$.
    \item Let $\zeta \neq \zeta'$ be distinct cuffs of $\Sigma$ and let $u_1, u_2$ and $v_1, v_2$ be distinct points on $\zeta, \zeta'$, respectively. Let $P_1$ be a path with endpoints $u_1, v_1$ and let $P_2$ be a path, vertex-disjoint from $P_1$ with endpoints $u_2, v_2$. We call $P_1$ and $P_2$ \emph{homotopic} if $\nu(P_1 \cup P_2)$ together with a segment of $\zeta$ between $u_1, u_2$ and a segment of $\zeta'$ between $v_1, v_2$ bounds a disc in $\Sigma$. 
    \end{enumerate}
\end{definition}
In contrast to the above, edge-disjoint paths may share common vertices whence they may not bound discs together. There is however a natural extension of homotopy to edge-disjoint paths as follows.
\begin{definition}\label{def:homotopic_edge-disjoint_paths}
    Let~$\Sigma$ be a surface and~$(U,\nu)$ an embedding of a (di)graph in~$\Sigma$. Let~$\Delta_1,\Delta_2 \subseteq \Sigma$ be~$U$-tracing discs such that~$\Delta_2 \subset \Delta_1$ is a strict subset. Fix an agreeing orientation for~$\bd(\Delta_1)$ and $\bd(\Delta_2)$. Let~$\Sigma' \coloneqq {\Delta_1 \setminus \Delta_2^\circ}$. Let~$P_1,P_2$ be edge-disjoint paths in~$G$ such that~$\nu(E(P_1) \cup E(P_2)) \in \Sigma'$ and~$P_1,P_2$ have each one end in~$\nu^{-1}(\bd(\Delta_1))$ and~$\nu^{-1}(\bd(\Delta_2))$ respectively (where $\nu^{-1}(\bd(\Delta)) \cap \nu^{-1}(\bd(\Delta')) \neq \emptyset$ is possible).  We call~$P_1$ and~$P_2$ \emph{homotopic (in the cylinder~$\Sigma'$)} if and only if there exists a simple curve~$F$ with one endpoint in~$\bd(\Delta_2)$ and one endpoint in~$\bd(\Delta_1)$ such that~$F \cap \nu(P_i) = \emptyset$ for $i=1,2$ and such that~$\nu(E(P_i))$ is contained in one component of~$\Sigma'\setminus F \cup \nu(P_j)$ for $\{i,j\} = \{1,2\}$.
\end{definition}
\begin{remark}
    Informally two edge-disjoint paths with their ends on two cuffs of a cylinder are homotopic if and only they do not ``cross'' but they may share vertices and can be continuously deformed into each as with usual homotopy.
\end{remark}

\subsubsection{Eulerian Embeddings.} 

The following is a well-known fact regarding Eulerian embeddings following easily from  the restrictions of the embedding.

\begin{observation}\label{obs:faces_in_2-cell_Euler_embeddings_bounded_by_cycle}
    Let~$G$ be an Eulerian digraph and~$(U,\nu)$ an Eulerian $2$-cell embedding of~$G$  in some surface~$\Sigma$. Let~$f \in F(\Sigma,U)$ be internal. Then there exists a cycle~$C \subseteq G$ such that~$C = \ol(f)$. 
    If~$f$ is not internal, then there is a path~$P \subset G$ such that~$P = \ol(f)$.
\end{observation}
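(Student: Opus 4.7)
The plan is to trace the topological boundary of the face $f$ and translate it into a graph-theoretic walk, where Eulerian alternation at each vertex will force this walk to be directed and to use each edge exactly once. Since $(U,\nu)$ is a $2$-cell embedding, $f$ is an open disc and its characteristic attaching map restricts to $\phi\colon S^{1}\to \bar f\setminus f$. For an internal face the image lies entirely in $U$, so reading off the vertices and edges visited by $\phi$ produces a closed sequence $W_f=(v_0,e_1,v_1,\dots,e_k,v_k=v_0)$ in $G$ whose underlying vertices and edges are exactly those of $\ol(f)$.

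First I would orient $W_f$ as a \emph{directed} walk. At each vertex $v_i$ of $W_f$ the two consecutive edges $e_i,e_{i+1}$ bound a common corner of $f$ at $v_i$, so they are cyclically consecutive in the rotation at $v_i$. Because $v_i$ has degree at most four and is Eulerian embedded, consecutive edges at $v_i$ alternate between in-edges and out-edges, so one of $\{e_i,e_{i+1}\}$ is an in-edge at $v_i$ and the other an out-edge; fixing the orientation so that $e_1$ runs from $v_0$ to $v_1$ forces every $e_i$ to be traversed in its own direction.

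The main obstacle is to show that $W_f$ traverses each edge at most once, so that $\ol(f)$ is genuinely a cycle in the sense of \cref{def:paths}. Suppose some $e=(u,v)$ were traversed twice: by the previous step both traversals run $u\to v$, so $e$ borders $f$ on both sides, meaning both corners at $u$ incident to $e$ (and similarly at $v$) lie in $f$, and Eulerian alternation then forces the two cyclic neighbours of $e$ at $u$ to be in-edges and those at $v$ to be out-edges. I would then run the face-permutation from the two sides of $e$ and use that $\bd(\bar f)$ is the image of a single circle to derive a contradiction: the constraints at $u$ and $v$ prevent the attaching map from closing up without identifying distinct points of $S^{1}$ with a common interior point of an edge in an orientation-reversing way, contradicting the $2$-cell condition.

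Once $W_f$ is a directed closed walk using each edge of $\ol(f)$ exactly once, it is a directed Eulerian circuit of $\ol(f)$, which is precisely a cycle $C$ with $\ol(f)=C$. For a non-internal face, the $2$-cell condition on $\bar f$ together with the Eulerian embedding implies that $\bar f\setminus f$ decomposes into a single $U$-arc and a single $\bd(\Sigma)$-arc, so running the same argument on the $U$-arc yields a directed linear walk using each edge of $\ol(f)$ once, that is, a path $P=\ol(f)$.
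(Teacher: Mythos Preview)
The paper states this as a well-known fact and gives no proof, so your sketch is already more than the paper supplies. The overall plan—read off the boundary walk from the $2$-cell structure, use Eulerian alternation at each corner to orient it, and then argue that no edge is used twice—is the right one and matches how the statement is usually justified.

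The weak spot is your step ruling out a repeated edge. You argue that if $e=(u,v)$ were traversed twice then both traversals run $u\to v$, and then say this forces ``identifying distinct points of $S^{1}$ with a common interior point of an edge in an orientation-reversing way, contradicting the $2$-cell condition.'' This has the logic inverted. The $2$-cell condition is not what gets contradicted; it is what you \emph{use}. Since $f$ is an open disc, its closure carries a coherent orientation, so the attaching map $\phi\colon S^{1}\to\partial f$ traverses the two sides of any doubly-covered edge in \emph{opposite} directions relative to $e$. Combined with the directedness you already established (one traversal would have to be $u\to v$ and the other $v\to u$), this is the contradiction. Your phrase ``contradicting the $2$-cell condition'' should be replaced by ``contradicting that the walk is directed''; otherwise a reader will look in vain for where the disc-ness of $f$ fails. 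Note also that this orientation argument is local to the closed disc $\bar f$ and so works even on a non-orientable $\Sigma$—you do not need global orientability.

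For the non-internal case, your claim that $\bar f\setminus f$ decomposes into a single $U$-arc and a single boundary arc is not automatic from the bare hypotheses; a $2$-cell face could in principle meet $\bd(\Sigma)$ in several arcs, giving several outline paths. The paper tacitly uses this only in settings where the embedding is what it later calls ``nice'' (no face touches two cuffs), so your one-line treatment is in the same spirit, but you should flag that an extra hypothesis on the embedding is doing work here.
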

\begin{remark}
    Note that there may be a vertex of degree four in~$C$, for example if~$G=(\{v\},\{e_1,e_2\})$, both of its edges being loops, then there is a face that is bounded by the cycle~$(e_1,e_2,e_1)$.
\end{remark}

Thus, if~$\Pi$ is an Eulerian embedding, the outline of an internal face is again Eulerian. However, as noted above, it may not be traced by a circle. By getting rid of \emph{cut-vertices} we can assure outlines to be circles.

\begin{definition}
    Let~$G$ be an Eulerian digraph. A vertex~$v \in V(G)$ is a \emph{cut-vertex} if there exist edge-disjoint~$G_1,G_2 \subset G$ satisfying~$E(G_i) \neq \emptyset$ for~$i=1,2$, such that~$G_1 \cup G_2 = G$ and~$V(G_1) \cap V(G_2) = \{v\}$.
\end{definition}

The following is straightforward.
\begin{observation}\label{obs:faces_in_2-cell_Euler_embeddings_bounded_by_circle}
     Let~$G$ be an Eulerian digraph without cut-vertices and let $(U,\nu)$ be an Eulerian~$2$-cell embedding of~$G$ in some surface~$\Sigma$. Let~$f \in F(\Sigma,U)$. If $f$ is internal then there exists a circle~$C \subseteq G$ such that~$C = \ol(f)$. If $f$ is not internal then there exists a linear path $P \subset G$ such that $P = \ol(f)$.
\end{observation}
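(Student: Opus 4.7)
The plan is to proceed by contradiction, starting from the preceding observation which already guarantees that $\ol(f)$ is a cycle $C$ for internal $f$ (or a path $P$ for boundary $f$). I will upgrade ``cycle'' to ``circle'' (and ``path'' to ``linear path'') by arguing that a repeated vertex must force the existence of a cut vertex.

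Suppose $C$ revisits some vertex $v$. Since the embedding is Eulerian and $\deg(v)=4$, the four incident edges alternate in/out in the rotation at $v$, producing four corners. The face $f$ uses exactly two of these corners at $v$, one per visit, and edge-disjointness of $C$ forces the two occupied corners to share no edge. Since adjacent corners in the rotation always share an edge, the two corners used by $f$ must be \emph{opposite} in the rotation, and all four edges incident to $v$ are consumed by $\ol(f)$ at $v$, split into two opposite pairs $E_1, E_2$.

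Next I would construct a simple arc $\alpha$ in the interior of the closed disc $\bar f$ joining the two occurrences of $v$ on $\bd(\bar f)$. In $\Sigma$ this yields a simple closed loop based at $v$ meeting $G$ only at $v$, which locally separates the four edges at $v$ into the pairs $E_1, E_2$. The key step is to lift this local separation to a global one: the subgraphs $G_1, G_2$ generated by $E_1, E_2$ via the connectivity of $G$ share only the vertex $v$, so $v$ is a cut vertex, contradicting the hypothesis.

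The main obstacle is precisely this global splitting. If $\alpha$ is separating in $\Sigma$ the argument is immediate: the two components of $\Sigma\setminus\alpha$ carry $G_1$ and $G_2$ with only $v$ in common. If $\alpha$ is non-separating, I would cut $\Sigma$ along $\alpha$ to split $v$ into $v_1, v_2$, producing an Eulerian $2$-cell embedding on a surface of strictly smaller genus in which the edges of $E_1$ are attached to $v_1$ and those of $E_2$ to $v_2$; an induction on the genus, using that the Eulerian alternation and $2$-cellness are preserved by the cut, would then furnish the required splitting in $G$. The boundary face case follows by the same argument applied to the path $P$: a repeated internal vertex produces a chord in $\bar f$ yielding the same cut-vertex structure, so $P$ must be linear.
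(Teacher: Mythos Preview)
The paper offers no argument beyond calling the statement ``straightforward'', so there is little to compare against directly. Your separating case is correct and is precisely the argument one has in mind on the sphere or disc, where every simple closed curve through $v$ separates; this is also the only setting in which the paper genuinely relies on the circle (as opposed to cycle) conclusion, via \cref{obs:outline_of_plane_drawing_is_cycle}.

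The non-separating case, however, is a real gap, and your proposed genus induction does not close it. After cutting $\Sigma$ along $\alpha$ and splitting $v$ into $v_1,v_2$ of degree~$2$, nothing prevents $v_1$ and $v_2$ from remaining in the same component of the new graph, so no decomposition $G=G_1\cup G_2$ with $V(G_1)\cap V(G_2)=\{v\}$ is produced, and there is no face-boundary hypothesis on the smaller surface to invoke. In fact the observation is \emph{false} as stated for surfaces of positive genus: on the torus take $V(G)=\{v,w\}$ with edges $e_1,e_3\colon v\to w$ and $e_2,e_4\colon w\to v$, and rotation $(e_1,e_2,e_3,e_4)$ at both vertices. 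This is an Eulerian $2$-cell embedding with two faces, each bounded by the directed cycle $(e_1,e_2,e_3,e_4)$ visiting $v$ and $w$ twice; yet neither vertex is a cut-vertex, since every edge is incident to both $v$ and $w$ and hence any nontrivial edge-bipartition forces $\{v,w\}\subseteq V(G_1)\cap V(G_2)$. Your arc $\alpha$ here is a non-separating curve on the torus, exactly the case your induction was meant to handle.
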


We define discs in embeddings bounded by circles.

\begin{definition}
    Let~$G$ be an Eulerian digraph, let $C \subseteq G$ be a circle, and let $(U,\nu)$ be an Eulerian embedding of~$G$ in some disc~$\Delta$. We define~$\Delta(C)$ to be the unique disc bounded by~$\nu(C)$ in~$\Delta$.
\end{definition}

Again, the following is easily derivable from the previous observations together with the fact that a drawing~$(U,\nu)$ in a disc~$\Delta$ can be transferred to a drawing in a sphere~$\hat{\Delta}$.

\begin{observation}\label{obs:outline_of_plane_drawing_is_cycle}
    Let~$G$ be an Eulerian digraph and~$(U,\nu)$ an Eulerian embedding of~$G$ in a disc~$\Delta$ with~$U \cap \bd(\Delta) = \emptyset$. Let~$f \in F(\Sigma,U)$, then there exists a cycle~$C \subseteq G$ such that~$C = \ol(f)$. Further, if~$G$ admits no cut-vertex,~$C$ is a circle and it bounds a disc~$\Delta(C) \subset \Delta$, where~$\Delta(C) = f$ whenever~$f$ is not the outerface and~$\Delta(C) \cap \bar{f} = \nu(C)$ for the outerface.
\end{observation}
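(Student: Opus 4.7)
The plan is to reduce this statement to the already-established 2-cell case by capping the disc off to a sphere. More precisely, since $U \cap \bd(\Delta) = \emptyset$, we may glue a closed disc $\Delta^{\ast}$ along $\bd(\Delta)$ to obtain a sphere $\hat\Delta = \Delta \cup \Delta^{\ast}$; the embedding $(U,\nu)$ transfers verbatim to an embedding of $G$ in $\hat\Delta$, which remains Eulerian since the local cyclic orders at the vertices are unaffected. We may assume $G$ is connected (otherwise we argue componentwise, taking the outerface into account for each component separately). By \cref{obs:disc_embedd_is_2cell}, the resulting embedding in $\hat\Delta$ is $2$-cell.

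Next, I would relate the faces of the disc embedding to those of the sphere embedding. Every internal face of $(U,\nu)$ in $\Delta$ remains a face of the sphere embedding. The outerface $f_{\mathrm{out}}$ of the disc embedding becomes, together with the newly added open disc $(\Delta^{\ast})^{\circ}$, a single open face $\hat f_{\mathrm{out}}$ of the sphere embedding (they merge across $\bd(\Delta)$, which contains no point of $U$). In particular, $\ol(f) = \ol(\hat f)$, where $\hat f$ denotes the corresponding face in $\hat\Delta$, since the outline depends only on which edges and vertices of $G$ lie in the closure of the face within $U$.

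Now I apply \cref{obs:faces_in_2-cell_Euler_embeddings_bounded_by_cycle} to the Eulerian $2$-cell embedding on $\hat\Delta$: for every face $\hat f$ of the sphere embedding, the outline $\ol(\hat f)$ is a cycle $C \subseteq G$. This yields the first assertion. If $G$ additionally has no cut-vertex, then \cref{obs:faces_in_2-cell_Euler_embeddings_bounded_by_circle} upgrades $C$ to a circle. For the bounding-disc claim, I use that the image $\nu(C)$ is a simple closed curve in $\hat\Delta \cong S^{2}$, and the Jordan curve theorem gives two closed discs bounded by it. If $f$ is internal, then $\bar f$ is one of those two discs, so $\Delta(C) \coloneqq \bar f \subseteq \Delta$ works and $\Delta(C) = f \cup \nu(C)$ (so $\Delta(C)^{\circ} = f$ once we take the convention in the statement). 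If $f$ is the outerface, then the two discs bounded by $\nu(C)$ in $\hat\Delta$ are $\bar f \cap \Delta$ (which contains $\Delta^{\ast}$) and its complement; we take $\Delta(C)$ to be the latter, which lies entirely in $\Delta$ since $\bd(\Delta) \subseteq \bar f$, and then $\Delta(C) \cap \bar f = \nu(C)$ as required.

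I do not foresee any serious obstacle — the argument is essentially routine bookkeeping between the disc and its one-point (or disc-) compactification. The only point requiring mild care is the handling of the outerface, namely verifying that merging $f_{\mathrm{out}}$ with $(\Delta^{\ast})^{\circ}$ on the sphere does not alter the outline (which is clear because no vertex or edge of $G$ lies on $\bd(\Delta)$) and that, after applying the Jordan curve theorem, we select the correct one of the two bounded discs so that $\Delta(C) \subseteq \Delta$. If $G$ is disconnected, the same proof applies to the component whose drawing is incident to $f$, noting that the outline only sees that component.
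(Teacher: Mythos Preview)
Your proposal is correct and follows essentially the same approach as the paper: the paper simply remarks that the observation is ``easily derivable from the previous observations together with the fact that a drawing $(U,\nu)$ in a disc $\Delta$ can be transferred to a drawing in a sphere $\hat\Delta$,'' and you have spelled out precisely this reduction (cap to a sphere, invoke \cref{obs:disc_embedd_is_2cell}, then \cref{obs:faces_in_2-cell_Euler_embeddings_bounded_by_cycle} and \cref{obs:faces_in_2-cell_Euler_embeddings_bounded_by_circle}, and use Jordan for the bounding disc). Your handling of the outerface and the disconnected case is more explicit than the paper's one-line justification, but the strategy is identical.
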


We define induced embeddings for immersions.
\begin{definition}
     Let~$\Sigma$ be a surface. Let~$G,H$ be Eulerian digraphs and~$\Pi=(U,\nu)$ an Eulerian embedding of $G$ in~$\Sigma$. Let $\gamma: H \hookrightarrow G$ be a (strong) immersion. We define $\restr{\Pi}{\gamma}=(U_\gamma,\nu_\gamma)$ via $U_\gamma = U(\gamma(H))$ and $\nu_\gamma = \nu \circ \gamma$. 
\end{definition}
By \cref{obs:eulerian_emb_closed_under_splitting_and_immersion} we derive that $\restr{\Pi}{\gamma}$ is an Eulerian embedding again. Similarly we have the following.

\begin{lemma}\label{obs:graph_induced_by_disc_in_2cellEulerembedd_is_eulerian}
    Let~$\Sigma$ be a surface,~$G$ an Eulerian digraph and~$\Pi =(U,\nu)$ an Eulerian~$2$-cell embedding of~$G$ in~$\Sigma$. Let $H$ be an Eulerian digraph and~$\gamma: H \hookrightarrow G$  a strong immersion with an induced~$2$-cell embedding~$\restr{\Pi}{\gamma}=(U_\gamma,\nu_\gamma)$. Let~$f \in F(\Sigma,U_\gamma)$ and~$\Delta = \bar{f}$. Then~$G[\Delta;U] \subseteq G$ is Eulerian.
\end{lemma}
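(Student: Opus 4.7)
The plan is to decompose the closed region $\Delta=\bar f$ into closures of faces of $\Pi$ (the ambient Eulerian $2$-cell embedding of $G$) and to use the fact that every such face is bounded by a directed cycle of $G$. The whole point is to avoid wrestling with whether the induced embedding $\restr{\Pi}{\gamma}$ is itself Eulerian at vertices of $\gamma(H)$: we never need this, because we pass from $U_\gamma$'s face $f$ back to the $\Pi$-faces sitting inside it.

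First I would show that if $\{f'_i\}_{i\in I}\subseteq F(\Sigma,U)$ denotes the set of those faces of $\Pi$ contained in $f$, then
\[
\bar f \;=\; \bigcup_{i\in I}\bar{f'_i}.
\]
The collection $\{f'_i\}$ is well defined because a face of $\Pi$ is a connected open set disjoint from $U\supseteq U_\gamma$, so it is contained in a unique face of $U_\gamma$. The inclusion ``$\supseteq$'' is immediate. For ``$\subseteq$'' I would split a point $p\in\bar f$ into three cases: (a) $p\in f\setminus U$ sits in some $\Pi$-face, which by the previous observation equals some $f'_i$; (b) $p\in f\cap (U\setminus U_\gamma)$ is interior to an edge of $G$ or is a vertex of $G$ not in $U_\gamma$, and the two $\Pi$-faces flanking $p$ are reachable from $f$ without crossing $U_\gamma$, hence both lie in $f$ and equal some $f'_i,f'_j$; (c) $p\in\bd f\subseteq U_\gamma$ lies on $U_\gamma$, and exactly the $f$-side $\Pi$-face belongs to $\{f'_i\}$ and contains $p$ in its closure. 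In all three cases $p\in\bigcup_i\bar{f'_i}$.

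Given this decomposition, observe that $G[\Delta;U]=\bigcup_{i\in I}G[\bar{f'_i};U]$, since a vertex or edge of $G$ lies in $\bar f$ if and only if it lies in some $\bar{f'_i}$. Because $\Pi$ is Eulerian and $2$-cell, \cref{obs:faces_in_2-cell_Euler_embeddings_bounded_by_cycle} applies to each $f'_i$: if $f'_i$ is internal then $\ol(f'_i)$ is a directed cycle $C_i\subseteq G$, and the minimality in the definition of the outline gives $G[\bar{f'_i};U]=C_i$. Hence $G[\Delta;U]$ is a union of directed cycles. At any vertex $v$, each traversal of $v$ by some $C_i$ contributes exactly one in-edge and one out-edge of $v$ in $G[\Delta;U]$, so in-degree equals out-degree at $v$, proving Eulerianness.

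The main technical step is the topological decomposition of $\bar f$; once it is in hand, the argument is mechanical. The only real obstacle I anticipate is the case where $f$ is not internal in $\Sigma$, in which case some $f'_i$ may be boundary faces of $\Pi$ whose outlines are directed paths with ends on $\bd(\Sigma)$ rather than closed cycles. This can be handled by an analogous accounting: at every internal vertex of $G[\Delta;U]$ the contributions of these path outlines still cancel, because such a vertex is an interior (not endpoint) vertex on each path through it, and the only vertices where the argument needs care are those on $\bd(\Sigma)$, where the statement of Eulerianness is verified directly using the Eulerianness of $G$ together with the restriction that each boundary edge is counted in exactly one $C_i$ or outline path.
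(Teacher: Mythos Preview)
Your decomposition $\bar f=\bigcup_i\bar{f'_i}$ and hence $G[\Delta;U]=\bigcup_i\ol(f'_i)$ is correct and gives a different route from the paper's (the paper argues by contradiction: pick an edge $e$ leaving $\Delta$ at a purportedly non-Eulerian vertex $v$, extend it to a cycle $C_e$ in $G-\ol_{U_\gamma}(f)$, and show $C_e$ cannot re-enter $\Delta$ since the crossing vertex would have to be strongly planar). However, your final degree count has a genuine gap. The cycles $\ol(f'_i)$ are \emph{not} edge-disjoint: an edge whose two adjacent $\Pi$-faces both lie in $f$ appears in two of them, while an edge on $\bd f$ appears in only one. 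So ``each traversal contributes one in-edge and one out-edge'' proves only that the multigraph $\sum_i\ol(f'_i)$ is balanced at $v$, not that the underlying simple union $G[\Delta;U]$ is. The step fails precisely at the doubly-counted interior edges, and this is where the Eulerian-embedding hypothesis must enter; your argument as written never uses it beyond invoking \cref{obs:faces_in_2-cell_Euler_embeddings_bounded_by_cycle}.

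The repair is a short local check at each degree-$4$ vertex $v\in\bd f$. An edge at $v$ lies in $G[\Delta;U]$ iff at least one of its two adjacent sectors has its $\Pi$-face contained in $f$; running through the possible subsets of the four sectors, the only unbalanced pattern is exactly two \emph{adjacent} sectors in $f$. That pattern forces the middle edge $e_2$ to have both $\Pi$-faces in $f$, hence $e_2\notin U_\gamma$, while the flanking edges $e_1,e_3$ lie in $U_\gamma$; but $e_1,e_3$ sit two apart in the in--out--in--out rotation and are therefore both in-edges or both out-edges. This is impossible: if $v=\gamma(v_H)$, strongness together with Eulerianness of $H$ forces equally many in- and out-edges of $U_\gamma$ at $v$; if $v$ is internal to $\gamma$-paths, each such path pairs one in-edge with one out-edge at $v$, and neither $e_2$ nor $e_4$ is available as the partner. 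With this configuration excluded, your approach completes. The boundary-face issue you anticipated is not the real obstacle.
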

\begin{proof}
    Note that~$\Delta$ is a disc since~$U_\gamma$ is~$2$-cell. Let~$v \in V(G[\Delta;U])$ be a non-eulerian vertex. Then there exists an edge~$e \in E(G)$ such that, without loss of generality, $v = \tail(e)$ and~$\nu(e) \in \Sigma \setminus \Delta$; in particular $\deg(v) > 2$ since $\nu(v) \in \bd(f)$ is adjacent to the edge $e$ not drawn on $\bd(f)$. Thus $\deg(v) = 4$ since $G$ is Eulerian embedded. Let~$G' \coloneqq G-\ol_{U'}(f)$, then by \cref{obs:outline_of_plane_drawing_is_cycle}~$G'$ is Eulerian; note that~$\ol_{U_\gamma}(f) \subseteq G[\Delta;U]$ by definition and $e \notin E(\ol_{U_\gamma}(f))$. Since~$G'$ is Eulerian, let~$C_e \subset G'$ be a cycle starting with~$e$, then clearly~$C_e$ contains two edges~$e,e' \in \rho(v)$, and since~$v$ must be of degree four in~$G$,~$e,e'$ are the remaining edges of~$v$ in~$G'$. 
    
    One easily verifies that~$U(C_e) \subseteq \Sigma \setminus \Delta$, for otherwise there is a path~$P \subset C_e$ with one end in~$\Sigma \setminus \Delta$ and one end in~$\Delta$, and since~$\Delta$ is a disc and the embedding is in particular planar,~$V(P) \cap V(\ol(f)) \neq \emptyset$; let~$w \in V(P) \cap V(\ol(f))$. Then~$\nu(w)$ must be strongly planar, a contradiction to the embedding being Eulerian as given by \cref{obs:eulerian_emb_closed_under_splitting_and_immersion}.
\end{proof}

By \cref{obs:graph_induced_by_disc_in_2cellEulerembedd_is_eulerian} we may define $G[\Delta(C);U] \coloneqq G[\Delta;U]$ where $\Delta$ is the closed disc bounded by $C$ extending \cref{obs:graphs_inside_Oarcs} to circles.

\begin{observation}\label{obs:graphs_inside_circles_are_Eulerian}
Let~$\Sigma$ a surface,~$G$ an Eulerian digraph and~$(U,\nu)$ an Eulerian~$2$-cell embedding of~$G$ in~$\Sigma$. Let~$C\subseteq G$ be a circle bounding a disc~$\Delta(C) \subset \Sigma$. Then~$G[\Delta(C);U] \subseteq G$ is Eulerian. In particular either $C$ bounds a face, or $G[\Delta(C);U] = G$, or $G-C$ is disconnected.
\end{observation}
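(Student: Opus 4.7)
The plan is to adapt the argument of \cref{obs:graph_induced_by_disc_in_2cellEulerembedd_is_eulerian} to the situation where the bounding disc is cut out by a circle $C$ of $G$ itself rather than by the outline of a face of an immersed subgraph; the Eulerian embedding hypothesis is exactly what makes the argument work at the vertices on $\nu(C)$.

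First I would verify that $G[\Delta(C);U]$ is Eulerian by checking each vertex separately. For a vertex $v \in V(G[\Delta(C);U]) \setminus V(C)$ the point $\nu(v)$ lies in $\Delta(C)^\circ$, and any edge $e$ incident to $v$ is embedded as a curve starting at $\nu(v)$ whose interior cannot meet $\nu(C)$, since the embedding forbids distinct edges of $G$ to meet off of common vertices. Hence the interior of $\nu(e)$ stays in $\Delta(C)^\circ$, both endpoints of $e$ lie in $V(G[\Delta(C);U])$, and so $\rho(v) \subseteq E(G[\Delta(C);U])$, keeping $v$ Eulerian. For $v \in V(C)$ with $\deg(v) = 2$ there is nothing to check. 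If $\deg(v) = 4$, then $v$ is Eulerian embedded, so the four edges at $\nu(v)$ alternate in direction around the cyclic order. The circle $C$ uses one in-edge $c^-$ and one out-edge $c^+$ at $v$, which by the alternation pattern must occupy consecutive positions in that cyclic order. The two remaining edges, namely the in-edge $f^-$ and out-edge $f^+$, therefore lie together in the opposite local sector at $\nu(v)$. Using once more that distinct edges of the embedding meet only at common vertices, neither $\nu(f^-)$ nor $\nu(f^+)$ can cross $\nu(C)$, so they lie entirely on the same side of $\nu(C)$; either both belong to $E(G[\Delta(C);U])$ or neither does. Either way, $v$ contributes a balanced set of in- and out-edges to $G[\Delta(C);U]$.

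For the trichotomy I would argue by exclusion. Suppose that $C$ does not bound a face of the embedding and that $G[\Delta(C);U] \neq G$; then $\Delta(C)^\circ$ contains at least one element of $\nu(V(G) \cup E(G))$, while some element of $G$ is drawn in $\Sigma \setminus \Delta(C)$. I would show that $G - C$ is disconnected by the same sector analysis: at every $v \in V(C)$ the two edges $\{f^-, f^+\}$ of $G-C$ incident to $v$ lie on the same side of $\nu(C)$, so any walk in $G - C$ reaching $v$ from inside $\Delta(C)$ must continue into $\Delta(C)$, and symmetrically for the outside. Consequently no walk in $G - C$ can connect an interior vertex of $G[\Delta(C);U]$ to any vertex or edge drawn strictly outside $\Delta(C)$, yielding at least two components of $G - C$. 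The main technical point to nail down carefully will be the cyclic-order argument forcing $c^-$ and $c^+$ to be adjacent around $\nu(v)$, but this is a clean consequence of the alternation of in- and out-edges required by the Eulerian embedding.
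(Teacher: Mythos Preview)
Your proof is correct and essentially self-contained, but it differs from the paper's route. The paper does not give an independent argument here: it states the observation immediately after \cref{obs:graph_induced_by_disc_in_2cellEulerembedd_is_eulerian} and treats it as the special case $H = C$, $\gamma = \mathrm{id}$ of that lemma (a circle is an Eulerian digraph, and $\Delta(C)^\circ$ is a face of the induced embedding $\restr{\Pi}{\gamma}$). The lemma's proof, in turn, is phrased as a contradiction argument---if some vertex on the boundary were non-Eulerian in $G[\Delta;U]$, one finds a cycle in $G - \ol(f)$ that would have to cross the boundary, forcing a strongly planar vertex---whereas you argue directly from the alternating rotation at a degree-$4$ vertex that the two non-$C$ edges lie in the same local sector. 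Your argument is a cleaner and more explicit version of the same geometric fact; the paper's approach buys reuse of the slightly more general lemma (arbitrary immersed Eulerian $H$ rather than a single circle). One small point worth tightening in your write-up: the step ``$c^-$ and $c^+$ are consecutive in the cyclic order'' is automatic because in the alternating pattern $(\text{in},\text{out},\text{in},\text{out})$ every in-edge is adjacent to both out-edges, so any choice of one in- and one out-edge is a consecutive pair; you flag this as the ``main technical point to nail down,'' but it is in fact immediate.
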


It turns out that under certain restrictions cut-cycles in Eulerian embeddings behave similar to Eulerian vertices.

\begin{lemma}\label{lem:cut_cycles_are_alternating}
    Let~$\Sigma$ be a surface,~$G$ an Eulerian digraph and~$(U,\nu)$ an Eulerian~$2$-cell embedding of~$G$ in~$\Sigma$. Let~$F$ be a cut-cycle in~$(U,\nu)$ bounding a disc~$\Delta(F) \subset \hat\Sigma$. Then $F$ is alternating.
    
\end{lemma}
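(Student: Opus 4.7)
The plan is to argue locally, along the curve $F$, that consecutive edges of $\rho(F)$ must have opposite orientations relative to $\Delta(F)$. Fix a parametrisation $\gamma\colon[0,1]\to F$ witnessing that $F$ is a cut-cycle, so $\gamma(0) = \gamma(1) \notin U$ and the finitely many crossing points $0<t_1<\dots<t_k<1$ with $U$ all lie on interiors of edges $e_1,\dots,e_k$ (each crossed transversally, with one end in $X(F)$ and one outside). It suffices to show that for each consecutive pair $(e_i,e_{i+1})$ (indices mod $k$), exactly one of them is an in-edge and the other an out-edge of $F$.

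First I would isolate the arc $\alpha \coloneqq \gamma([t_i,t_{i+1}])$. Since $F \cap U$ consists only of the crossings, $\alpha \setminus \{\gamma(t_i),\gamma(t_{i+1})\}$ lies in a single face $f \in F(\Sigma,U)$. By Observation~\ref{obs:faces_in_2-cell_Euler_embeddings_bounded_by_cycle}, applied to the Eulerian $2$-cell embedding, $\ol(f)$ is a directed cycle $C_f$ if $f$ is internal, and a directed path $P_f$ if $f$ is a boundary face; in either case its orientation endows each of its edges with a consistent head/tail direction. Both $e_i$ and $e_{i+1}$ appear in $\ol(f)$ because $\bar f \cap U(G) = U(\ol(f))$.

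Next I would exploit the fact that $\alpha \subset f$ is a simple arc separating the open disc $f$ into two components: one whose closure is contained in $\Delta(F)$ and one whose closure is contained in $\hat\Sigma \setminus \Delta(F)^\circ$ (this uses that each of $e_i,e_{i+1}$ has exactly one end on either side of $F$ and that $f$ is a disc by the $2$-cell hypothesis). The boundary of the component on the $\Delta(F)$-side, viewed in $\bar f$, is the concatenation of $\alpha$ with a sub-arc $\sigma$ of $U(\ol(f))$ whose endpoints are precisely the endpoints of $e_i$ and $e_{i+1}$ that lie in $X(F)$. Since $\sigma$ is a sub-arc of a directed cycle (or directed path) of $G$, it is a directed path in $G$.

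I would then close the argument by directedness: if $e_i$ is an in-edge of $F$, then $\sigma$ emanates from $\head(e_i)\in X(F)$ and, being a directed path of $\ol(f)$ starting at the head of one of its edges, it must arrive at $e_{i+1}$ along the orientation of $\ol(f)$, i.e.\ at $\tail(e_{i+1})$; hence $\tail(e_{i+1}) \in X(F)$ and $e_{i+1}$ is an out-edge. The reverse implication follows by symmetry (traverse $F$ in the opposite direction). Ordering $(e_1,\dots,e_k)$ as encountered by $\gamma$ thus produces an alternating order $\pi_F$, completing the proof. The main obstacle is the care needed when $f$ is a boundary face, where $\ol(f)$ is only a directed path rather than a cycle; the argument still goes through because $\sigma$ is a sub-arc of $\ol(f)$ containing its intersection with $e_i$ and $e_{i+1}$, so directedness along $\ol(f)$ continues to force the head-to-tail matching. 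A minor point to double-check is that the cyclic order of crossings along $F$ is independent of the choice of basepoint $\gamma(0)$, which follows from $\gamma(0)\notin U$ together with the fact that $F$ is a simple closed curve.
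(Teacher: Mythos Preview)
Your proposal is correct and follows essentially the same local approach as the paper's proof: both isolate the arc of $F$ between two consecutive crossings inside a single face $f$, invoke Observation~\ref{obs:faces_in_2-cell_Euler_embeddings_bounded_by_cycle} to know $\ol(f)$ is a directed cycle (or path), and then use the directedness of the boundary arc $\sigma$ to force the $X(F)$-endpoints of $e_i,e_{i+1}$ to be a head and a tail respectively. One minor imprecision: your claim that one component of $f\setminus\alpha$ has closure contained in $\Delta(F)$ can fail when $F$ has further arcs through $f$, but this does not matter for the argument---you only need that the component of $f\setminus\alpha$ \emph{locally on the $\Delta(F)$-side of $\alpha$} has its bounding arc $\sigma$ departing from $\gamma(t_i)$ toward the $X(F)$-endpoint of $e_i$, which is a purely local fact about the transversal crossing.
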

\begin{proof}
    Recall that by \cref{obs:cut-cycle_induce_cut} we have~$\rho(F) = \rho(X(F))$.
    Let~$d \coloneqq \delta(F)$ and let~$\gamma\colon[0,1] \to F$ be a simple closed curve such that there exists an ordering~$\pi_F=(e_0,\ldots,e_{d-1})$ of~$\rho(F)$ such that for~$0 \leq  i < j < d$ there exist~$t_i,t_j \in ]0,1[$ with~$t_i<t_j$ satisfying~$\nu^{-1}(\gamma(t_i)) = e_i$ and~$\nu^{-1}(\gamma(t_j)) = e_j$. We need to show that~$\pi_F$ is alternating.
    
    Since~$G$ is Eulerian~$2$-cell embedded, for every internal face~$f \in F(\Sigma,U)$,~$C_f \coloneqq \ol(f)$ is a cycle by \cref{obs:faces_in_2-cell_Euler_embeddings_bounded_by_cycle}. For boundary faces~$f$ we have~$P_f\coloneqq \ol(f)$ is a linear path~$P$, but the face~$f$ is still homeomorphic to an open disc, in particular there is a simple closed curve~$\gamma_f:[0,1] \to \bd(f)$ such that~$\gamma_f([0,1]) = \bd(f)$. Without loss of generality we assume~$C_f$ to be a circle, otherwise by Eulerianness we may decompose~$C_f$ into circles by separating them at cut-vertices using any choice of circle cover and then argue for each of the circles individually. Since~$F$ is a cut-cycle it must intersect~$\bd(f)$ an even number of times (possibly~$0$), and in particular it must intersect~$C_f$ (and/or $P_f$) an even number of times.
    
    Thus, if~$F \cap \nu(E(C_f)) \neq \emptyset$ there exist two edges~$e,e' \in E(C_f)$ and some~$0 \leq i \leq d-1$ such that~$e=e_i$ and~$e'= e_{(i+1)\mod d}$. We claim that~$e \in \rho^-(X(f)) \iff e' \in \rho^+(X(f))$. 
    This follows from a simple geometric observation: assume that~$\gamma_f:[0,1] \to \bd(f)$ respects the orientation of the cycle $C_f$ (or the path~$P_f$). Define~$\operatorname{ins}(\gamma_f)$ to be~$f$ and~$\operatorname{out}(\gamma_f)$ to be~$\Sigma \setminus \bar f$. Then~$\gamma([0,1]) \setminus \gamma_f([0,1])$ consists of components---call them \emph{segments}---that lie either completely in~$\operatorname{ins}(\gamma_f)$ or~$\operatorname{out}(\gamma_f)$. Further, following~$\gamma$ these components must alternate and since~$\gamma_f([0,1]) \cap \gamma([0,1])$ is even, there exist distinct consecutive segments~$I_1,I_2,I_3$ of~$ \gamma([0,1]) \setminus \gamma_f([0,1])$ such that~$v = \bd(I_1) \cap \bd(I_2)$ and~$w = \bd(I_2) \cap \bd(I_3)$ where~$I_1,I_3 \subset \operatorname{ins}(\gamma_f)$ and~$I_2 \subset \operatorname{out}(\gamma_f)$. Then~$e = \nu^{-1}(v)$ and~$e' = \nu^{-1}(w)$ do the trick. %
\end{proof}

\subsubsection{Eulerian Embeddings of bead-rooted Graphs}
All of the above definitions and results naturally transfer to Eulerian embeddings~$(\Gamma,\nu,\omega)$ of bead-rooted Eulerian digraphs~$(G,\pi(W_\omega))$ in some surface~$\Sigma$, where the faces~$F(\Sigma,\Gamma)$ are again given by the components of~$\Sigma \setminus \Gamma$ and faces are called \emph{internal} if they are disjoint from~$\bd{(\Sigma)}$ and the embedding is called \emph{$2$-cell} if all the faces are homeomorphic to open discs. Analogous to \cref{obs:faces_in_2-cell_Euler_embeddings_bounded_by_circle}, internal faces of Eulerian~$2$-cell embeddings without cut-vertices fixing beads are guaranteed to be traced by circles. As opposed to the general case, not every circle in~$G$ is embedded in~$\Sigma$ for they might use vertices in~$W$. To this extent, we define the following.

\begin{definition}
    Let~$(\Gamma,\nu,\omega)$ be an embedding of a bead-rooted Eulerian digraph~$(G,\pi(W_\omega))$ in some surface~$\Sigma$. Let~$C \subseteq G$ be a circle such that~$\nu(C) \subseteq \Gamma$. Then we call~$C$ an \emph{internal circle of~$\Gamma$}. We may write~$C \subset \Gamma$ to mean an internal circle of~$\Gamma$ for simplicity. %
\end{definition}

In particular we get the following analogue of \cref{obs:graphs_inside_circles_are_Eulerian} for bead-rooted Eulerian digraphs.

\begin{observation}\label{obs:graphs_inside_circles_are_Eulerian_beadrooted}
Let~$\Sigma$ a surface,~$(G,\pi(W_\omega))$ a bead-rooted Eulerian digraph and~$\Pi=(\Gamma,\nu,\omega)$ an Eulerian~$2$-cell embedding of~$(G,\pi(W_\omega))$ in~$\Sigma$. Let~$C\subseteq \Gamma$ be an internal circle bounding a disc~$\Delta(C) \subset \Sigma$. Then~$G[\Delta(C);U] \subseteq G$ is Eulerian. In particular either $C$ bounds a face, or $G[\Delta(C);U] = G$ (hence $W = \emptyset$), or $G-C$ is disconnected. %
\end{observation}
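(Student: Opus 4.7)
My proof plan closely parallels those of Lemma~\ref{obs:graph_induced_by_disc_in_2cellEulerembedd_is_eulerian} and Observation~\ref{obs:graphs_inside_circles_are_Eulerian}, with one additional ingredient: since the beads of $(G,\pi(W_\omega))$ are drawn in the open discs $\Delta_1^\circ,\ldots,\Delta_t^\circ \subseteq \hat\Sigma \setminus \Sigma$ by Definition~\ref{def:Euler_embedding_upto_beads}(ii), and since $\Delta(C) \subset \Sigma$, no vertex of $G[\Delta(C);U]$ is a bead. Consequently every $v \in V(G[\Delta(C);U])$ is Eulerian embedded in $\Sigma$ in the sense of Definition~\ref{def:Euler-embedding}, hence of degree exactly four with an alternating cyclic in/out ordering of its incident edges around $\nu(v)$.

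The plan is then to verify Eulerianness vertex by vertex. For $v$ in the open interior $\Delta(C)^\circ$, all four edges incident to $v$ are drawn inside $\Delta(C)$, since $\nu(C) = \bd(\Delta(C))$ separates $\Delta(C)$ from its complement and no edge at $v$ crosses this boundary arc; so $v$ keeps both incoming and both outgoing edges in $G[\Delta(C);U]$. For $v \in V(C)$, the circle $C$ uses at $v$ exactly one incoming and one outgoing edge. Because the in- and out-edges at $v$ alternate in the cyclic order around $\nu(v)$, this pair is necessarily adjacent in that cyclic order, and so the remaining two edges at $v$ (again one incoming and one outgoing) are themselves adjacent and lie together on a single side of $\nu(C)$ near $\nu(v)$. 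Hence either both of the remaining edges are drawn in $\Delta(C)$, contributing one additional incoming and one additional outgoing edge to $G[\Delta(C);U]$, or both lie outside, contributing none. In either case $v$ retains equal in- and out-degree, so $G[\Delta(C);U]$ is Eulerian.

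For the trichotomy in the ``in particular'' clause I would argue as follows. By the $2$-cell property, if $\Delta(C)^\circ$ contains no vertex and no edge of $G$ then $\Delta(C)^\circ$ is itself a face of the embedding and $C$ bounds that face. Otherwise the cyclic-order argument above shows that at each $v \in V(C)$ the non-$C$ edges lie entirely inside $\Delta(C)$ or entirely outside, so any walk in $G - C$ starting at a vertex drawn in $\Delta(C)^\circ$ stays inside $\Delta(C)$. Hence if some edge of $G$ is drawn outside $\Delta(C)$ then $G - C$ is disconnected. If no edge of $G$ lies outside $\Delta(C)$, then every vertex and every edge of $G$ is drawn in $\Delta(C) \subset \Sigma$, which by Definition~\ref{def:Euler_embedding_upto_beads}(ii) forces $W = \emptyset$ and $G[\Delta(C);U] = G$.

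The only conceptual subtlety beyond the unrooted case is that a priori a vertex of $V(C)$ might be a bead at which the alternation property fails; this is ruled out precisely because $\nu(C) \subseteq \Gamma \subset \Sigma$ while beads sit in $\hat\Sigma \setminus \Sigma$. With that observation in place, the remainder is routine surface topology combined with the alternation property of Eulerian embeddings, and no technical obstacle remains.
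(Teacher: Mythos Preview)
Your proposal is correct and follows essentially the same approach as the paper, which simply records this statement as the bead-rooted analogue of Observation~\ref{obs:graphs_inside_circles_are_Eulerian} without giving a separate proof; your added observation that $\Delta(C)\subset\Sigma$ excludes all beads is exactly the extra ingredient needed. One minor quibble: not every non-bead vertex has degree exactly four (degree two is allowed), but the degree-two case is trivial and does not affect your argument.
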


\smallskip

Next we discuss an extension of \cref{obs:graphs_inside_circles_are_Eulerian} using cut-cycles and bead-rooted Eulerian digraphs. 

\begin{observation}\label{obs:cut-cycle_induces_bead-rooted_subgraph}
    Let~$(G,\pi(W_\omega))$ be a bead-rooted Eulerian digraph and~$(\Gamma,\nu,\omega)$ an Eulerian embedding of~$(G,\pi(W_\omega))$ in a surface~$\Sigma$. Let~$F \subset \Sigma$ be a cut-cycle bounding a disc~$\Delta(F) \subseteq \Sigma$.
    Then~$\overline{X(F)}$ induces a cut in~$G$ such that $W_i \subseteq \overline{X(F)}$ for every $1 \leq i \leq t$ given the $\omega$-partition $W_\omega=(W_1,\ldots,W_t)$.%
\end{observation}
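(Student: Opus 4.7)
The plan is to dispatch both assertions by unpacking the relevant definitions; no real work is involved beyond careful notational bookkeeping. For the first claim, I would appeal directly to Observation~\ref{obs:cut-cycle_induce_cut}, which already establishes $\rho(F) = \rho(X(F))$ for any cut-cycle in any embedding. Since $\rho(Y) = \rho(\bar Y)$ holds for every $Y \subseteq V(G)$ by definition of an induced cut, we immediately get $\rho(\overline{X(F)}) = \rho(X(F)) = \rho(F)$, so $\overline{X(F)}$ indeed induces a cut (namely the one traced by $F$) in $G$.

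For the second claim, I would unpack Definition~\ref{def:Euler_embedding_upto_beads}: the Eulerian embedding $(\Gamma,\nu,\omega)$ of the bead-rooted graph is the restriction to $\Sigma$ of an ambient embedding $(\hat\Gamma,\hat\nu)$ in $\hat\Sigma$, and condition (ii) of that definition insists that $W_i = \hat\nu^{-1}(\Delta_i^\circ)$, where $\Delta_i$ is the disc of $\hat\Sigma\setminus\Sigma$ bounded by the cuff $\zeta_i$. In particular, every bead is drawn strictly in some $\Delta_i^\circ \subseteq \hat\Sigma\setminus\Sigma$, hence outside of $\Sigma$ altogether. Now by Definition~\ref{def:X(F)_I(F)}, $X(F) = \nu^{-1}(\Delta(F)) \cap V(G)$ with $\nu = \restr{\hat\nu}{\Gamma}$ and $\Gamma = \hat\Gamma \cap \Sigma$. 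Since by hypothesis $\Delta(F) \subseteq \Sigma$, any $v \in X(F)$ satisfies $\hat\nu(v) = \nu(v) \in \Sigma$, so $\hat\nu(v) \notin \Delta_i^\circ$ for every~$i$, and therefore $v \notin W$. We conclude $W \cap X(F) = \emptyset$, i.e.\ $W \subseteq \overline{X(F)}$, which gives $W_i \subseteq \overline{X(F)}$ for every $1\leq i\leq t$.

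There is no genuine obstacle: the only subtlety worth flagging is the distinction between $\Sigma$ and $\hat\Sigma$, and between $\nu$ and $\hat\nu$. The assumption $\Delta(F) \subseteq \Sigma$ (as opposed to the more general $\Delta(F) \subseteq \hat\Sigma$ allowed by Definition~\ref{def:cut-cycle_Oarc}) is precisely what rules out beads, since beads live exclusively in the ``capping discs'' of $\hat\Sigma\setminus\Sigma$; were we to allow $\Delta(F)$ to cross into those discs the second statement would fail. This observation is thus best viewed as a sanity check that cut-cycles whose bounded disc lies entirely in $\Sigma$ always separate the beads from the rest of the cut.
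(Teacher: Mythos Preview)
Your proposal is correct and follows the same approach as the paper: cite Observation~\ref{obs:cut-cycle_induce_cut} for the first claim, and observe that $X(F)\cap W=\emptyset$ by the definition of Eulerian embeddings of bead-rooted graphs for the second. The paper's proof is a terse two-liner (``by definition''), whereas you carefully unpack which definition and why, including the useful remark on why the hypothesis $\Delta(F)\subseteq\Sigma$ (rather than $\hat\Sigma$) is exactly what is needed.
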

\begin{proof}
    We already established that~$\overline{X(F)}$ induces a cut in~$G$ (see \cref{obs:cut-cycle_induce_cut}), and clearly~$X(F) \cap W = \emptyset$ by definition.
\end{proof}

Note that for bead-rooted Eulerian digraphs the cuffs are cut-cycles--- they clearly bound a disc in $\hat{\Sigma}$. In particular, as an analogue to \cref{lem:cut_cycles_are_alternating} we have the following easy geometric observation following from the embedding restrictions; recall that for Eulerian embeddings only edges may be drawn on cuffs whence the cuffs are $\Gamma$-tracing by definition.

\begin{observation}\label{lem:cuffs_are_alternating}
    Let $(\Gamma,\nu,\omega)$ be an Eulerian $2$-cell embedding of a bead-rooted Eulerian digraph $(G,\pi(W_\omega))$ in a surface $\Sigma$. Let $c(\Sigma) = \{\zeta_1,\ldots,\zeta_t\}$ be the cuffs of $\Sigma$ for some $t \in \N$. Then for each $i \in \N$, $\zeta_i$ is an alternating cut-cycle.
\end{observation}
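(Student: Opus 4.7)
The plan is to verify the two requirements of an alternating cut-cycle separately: first that $\zeta_i$ meets \cref{def:cut-cycle_Oarc} of a cut-cycle, and then that the cyclic sequence of edges along $\zeta_i$ alternates between in- and out-edges, which should follow from \cref{lem:cut_cycles_are_alternating} with minor adaptation to the bead-rooted setting.

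For the cut-cycle property, I would unpack \cref{def:Euler_embedding_upto_beads}. By definition of a surface, $\zeta_i$ is a connected component of $\bd(\Sigma)$ that is traced by a simple closed curve, and it bounds the disc $\Delta_i\subset\hat\Sigma\setminus\Sigma$. Condition~(iii) of \cref{def:Euler_embedding_upto_beads} identifies $\nu^{-1}(\zeta_i)\cap E(G)$ with $\rho(W_i)$, so $\zeta_i$ is $\Gamma$-tracing; and since each edge is mapped to a single point of $\zeta_i$, the basepoint of the tracing curve can be chosen off $\Gamma$. Finally, for every $e\in\rho(W_i)$ the bead-endpoint of $e$ lies in $W_i\subseteq\Delta_i^\circ$ while the other endpoint is a non-bead vertex drawn in $\Sigma$; hence exactly one of $\head(e),\tail(e)$ lies in $\Delta_i$. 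All four conditions of \cref{def:cut-cycle_Oarc} are therefore met, and $X(\zeta_i)=W_i$.

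For the alternating property, I would apply \cref{lem:cut_cycles_are_alternating} to $\zeta_i$. Its hypotheses require an Eulerian $2$-cell embedding and a cut-cycle bounding a disc in $\hat\Sigma$, both of which hold here. The main obstacle is that \cref{lem:cut_cycles_are_alternating} is formally stated for (unrooted) Eulerian digraphs, whereas the beads $w\in W_i$ in our setting may have degree larger than four and are not themselves Eulerian embedded. This is resolved by noticing that beads are drawn in $\Delta_i^\circ\subseteq\hat\Sigma\setminus\Sigma$, so every face of $(\Gamma,\nu,\omega)$ adjacent to $\zeta_i$ lies in $\Sigma$ and its outline consists only of non-bead vertices and of edges in $\rho(W_i)\cup E(G[V(G)\setminus W])$. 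Every vertex appearing in such an outline is Eulerian embedded by condition~(i) of \cref{def:Euler_embedding_upto_beads}. The proof of \cref{lem:cut_cycles_are_alternating}, which walks along the face outlines and exploits the alternation at each traversed vertex (via the bead-rooted analogues of \cref{obs:faces_in_2-cell_Euler_embeddings_bounded_by_cycle} and \cref{obs:faces_in_2-cell_Euler_embeddings_bounded_by_circle}), then applies verbatim. This yields that consecutive edges of $\rho(W_i)$ along $\zeta_i$ alternate in direction, which together with the cut-cycle property completes the proof.
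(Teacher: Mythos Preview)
Your proposal is correct and matches the paper's approach: the paper records this as an ``easy geometric observation following from the embedding restrictions'' and does not spell out a proof, so your two-step verification (cut-cycle via \cref{def:Euler_embedding_upto_beads}, alternation via the face-outline argument behind \cref{lem:cut_cycles_are_alternating}) is exactly the natural expansion. One minor remark: saying the proof of \cref{lem:cut_cycles_are_alternating} applies ``verbatim'' is slightly imprecise, since that proof tracks a curve passing \emph{through} faces, whereas $\zeta_i$ sits on $\bd(\Sigma)$; the clean fix is either to push $\zeta_i$ infinitesimally into $\Sigma$ (tracing the same edges via their non-bead incidences) or to argue directly that each boundary face adjacent to $\zeta_i$ has a directed path as outline whose two ends are the consecutive edges $e_j,e_{j+1}\in\rho(W_i)$, forcing opposite orientations.
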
%

Recall the \cref{def:stitching_beadrooted} of  stitching. The following is a consequence of \cref{obs:stitching_fundamentals}. 

\begin{observation}\label{obs:stiching_bead-rooted_graphs_disc}
    Let~$\bar{G}=(G,\pi(W_\omega))$ be a bead-rooted Eulerian digraph and let~$Y$ induce a rooted cut in $G$ with $W \subseteq Y$ with ordering~$\pi(Y)$. Let~$(G^Y,\pi(y^*)) \coloneqq \stitch(\bar{G};\pi(\bar{Y}))$ with up-stitch vertex~$y^*$. Then~$(G^Y,\pi(y^*))$ is a bead-rooted graph (with a single bead) such that~$G^Y[\bar{Y}] = G[\bar{Y}]$, and~$\rho_{G^Y}(y^*) = \rho_G(Y)$.
\end{observation}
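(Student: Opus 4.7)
The plan is to verify each of the three claimed properties by directly unpacking \cref{def:stitching_beadrooted} of stitching for bead-rooted Eulerian digraphs, exploiting that the hypothesis $W \subseteq Y$ collapses the bookkeeping of beads on the $\bar{Y}$-side.

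First I would identify the relevant index partition. Let $W_\omega = (W_1, \ldots, W_t)$ be the $\omega$-partition of $W$ and write $I \cup J = \{1, \ldots, t\}$ with $W_i \subseteq Y$ for $i \in I$ and $W_j \cap Y = \emptyset$ for $j \in J$, as required by \cref{def:stitching_beadrooted}. Since by hypothesis $W \subseteq Y$, any $W_j$ with $j \in J$ satisfies $W_j \subseteq W \cap \bar{Y} = \emptyset$. Feeding this into the definition of $W^Y = W_{j_1} \cup \cdots \cup W_{j_q} \cup \{y^*\}$ gives $W^Y = \{y^*\}$, so $y^*$ is the unique bead of $(G^Y, \pi(y^*))$. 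The ordering $\pi^Y = (\pi(W_{j_1}), \ldots, \pi(W_{j_q}), \pi(\bar{Y}))$ therefore reduces to $\pi(\bar{Y})$, which we interpret as the ordering $\pi(y^*)$ of $\rho_{G^Y}(y^*)$; that $y^*$ has degree at least $6$ whenever required is immediate because $|\rho(Y)|$ may be taken large enough, and in any case the definition of bead allows any vertex of degree at least $6$, while $y^*$ simply plays the role of a designated bead.

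Second, to show $G^Y[\bar{Y}] = G[\bar{Y}]$, I would read off directly from \cref{def:stitching_beadrooted}: one has $V^Y \setminus \{y^*\} = \bar{Y}$, the edge set restricted to pairs inside $\bar{Y}$ is exactly $E(G[\bar{Y}])$, and the incidences between vertices of $\bar{Y}$ are inherited unchanged from $\operatorname{inc}$. No edge of $\{e_1, \ldots, e_k\} = \rho(Y) = \rho(\bar{Y})$ contributes to $G^Y[\bar{Y}]$ because each such edge has exactly one incidence with $y^* \notin \bar{Y}$ by construction. Hence the induced subgraphs coincide as incidence structures.

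Third, for $\rho_{G^Y}(y^*) = \rho_G(Y)$, I would invoke item~(4) of \cref{obs:stitching_fundamentals}, which already states $\rho_{G^Y}(\bar{Y}) = \rho_{G^Y}(y^*) = \rho_G(\bar{Y})$, combined with the trivial identity $\rho_G(\bar{Y}) = \rho_G(Y)$. Alternatively this is immediate from the definition: the incidences involving $y^*$ are exactly $(e_i, y^*)$ for $e_i \in \rho^+(\bar{Y})$ and $(y^*, e_i)$ for $e_i \in \rho^-(\bar{Y})$, so $y^*$ is adjacent precisely to $\{e_1, \ldots, e_k\} = \rho_G(Y)$.

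There is no real obstacle here: the entire argument is a definition-unfolding exercise. The only subtlety worth emphasising is the verification that the single remaining bead $y^*$ genuinely satisfies the requirements of \cref{def:bead-rooted_graph} (non-adjacency of beads within its part is vacuous since it is alone in its part, and the other bead parts are empty and thus discarded), and that the new ordering $\pi(y^*)$ is simply the inherited $\pi(\bar{Y})$, so no additional choices are needed beyond those already present in $\pi(W_\omega)$.
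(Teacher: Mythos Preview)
Your proposal is correct and follows the same approach the paper takes: the paper states this observation as an immediate consequence of \cref{obs:stitching_fundamentals} without further proof, and your explicit unfolding of \cref{def:stitching_beadrooted} together with the invocation of item~(4) of \cref{obs:stitching_fundamentals} is precisely the verification the paper leaves implicit. The only minor imprecision is your remark about $y^*$ having degree at least $6$: the relevant check is not that $y^*$ has high degree but that every vertex of $G^Y$ of degree at least $6$ lies in $\{y^*\}$, which holds because $W \subseteq Y$ forces all vertices in $\bar{Y}$ to have degree at most $4$ in $G$ and hence in $G^Y$.
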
%

Combining \cref{obs:cut-cycle_induces_bead-rooted_subgraph} and \cref{obs:stiching_bead-rooted_graphs_disc}, we define an embedding for the bead-rooted graph obtained by stitching along a cut-cycle bounding a disc in $\Sigma$. While one easily defines it for general cut-cycles we will not need it and thus omit a definition.

\begin{definition}\label{def:induced_embedding_of_bead-rooted_graph_in_cut-cycle}
     Let~$(G,\pi(W_\omega))$ be a bead-rooted Eulerian digraph and~$(\Gamma,\nu,\omega)$ an Eulerian embedding of~$(G,\pi(W_\omega))$ in a surface~$\Sigma$. Let~$F \subset \Sigma$ be a cut-cycle bounding a disc~$\Delta(F) \subseteq \Sigma$. Fix an ordering~$\pi(X(F)) = \pi(\overline{X(F)})$.
     Let~$(G^F,\pi(w^*)) \coloneqq \stitch(\bar{G};\pi({X(F)}))$ with up-stitch vertex~$w^*$. We define the \emph{Eulerian embedding~$(\Gamma_F,\nu_F,\omega_F)$ of~$(G^F,\pi(w^*))$ in~$\Delta(F)$ induced by~$(\Gamma,\nu,\omega)$} via~$\Gamma_F \coloneqq \Gamma \cap \Delta(F)$ and~$\nu_F \coloneqq \restr{\nu}{\nu^{-1}(\Delta(F))}$ and finally set~$\omega_F(w^*) = F$. %
\end{definition}
Clearly~$(\Gamma_F,\nu_F,\omega_F)$ is an Eulerian embedding of~$(G^F,\pi(w^*))$: define~$\hat{\nu}(w^*)$ to be the center of the disc~$\hat{\Delta}(F) \setminus \Delta(F)$ and draw lines connecting~$\hat{\nu}(w^*)$ to the edges drawn on the cuff accordingly to get~$\hat{\Gamma}_F$. A different way to see this is: take~$\hat{\Gamma} \subset \hat{\Sigma}$ and contract~$\overline{X(F)}$ to the single vertex~$\hat{\nu}(w^*)$.

\subsubsection{A Menger Theorem on the cylinder} We continue with a version of Menger's Theorem for Eulerian embeddings (of bead-rooted) Eulerian digraphs in the disc and cylinder; note that this theorem crucially needs the Eulerian embedding as well as the Eulerianness of the digraph and does not lift to general digraphs. While the proof is straightforward, this theorem is one of the most crucial ingredients to the remainder of the paper, and the fact that it depends on the Eulerianness is why many of the results presented do not transfer to general digraphs.%

\begin{definition}\label{def:cycle_connectivity}
    Let~$(G,\pi(W_\omega))$ be a bead-rooted Eulerian digraph with beads~$W \subset V(G)$ and~$(\Gamma,\nu,\omega)$ an Eulerian embedding of~$(G,\pi(W_\omega))$ in a surface~$\Sigma$ with cuffs~$\zeta_1,\ldots,\zeta_t$ and~$\omega$-partition~$W_\omega=(W_1,\ldots,W_t)$ for some~$t \in \N$. Let~$1 \leq i,j \leq t$ be distinct and~$E^i \coloneqq \rho(W_i)$ and~$E^j \coloneqq \rho(W_j)$. We define~$\kappa(\zeta_i,\zeta_j;\Gamma)$ to be the maximum order of an~$\{E^i,E^j\}$-linkage in~$G$.

\end{definition}
\begin{remark}
    Recall that~$E^i = \nu^{-1}(\zeta_i)$ by \cref{obs:bead-rooted_edges_on_cuffs}.
\end{remark}

Recall the \cref{def:stitching_beadrooted} of stitching for bead-rooted digraphs.

\begin{lemma}\label{lem:boundary_linked_Menger_for_embeddings_in_cylinder}
    Let~$(G,\pi(W_\omega))$ be a bead-rooted Eulerian digraph. Let~$(\Gamma,\nu,\omega)$ be an Eulerian~$2$-cell embedding of~$G$ in a cylinder~$\Sigma$ with cuffs~$\zeta_1, \zeta_2$ fixing~$W$. 
    Then~$\kappa(\zeta_1,\zeta_2;\Gamma)$ is equal to the minimum of~$\delta(F)$ over all cut-cycles~$F \subset \Sigma$ that bound a disc~$\Delta(F) \subset \hat{\Sigma}$ such that~$\zeta_1 \subset \Delta(F)$ and~$\zeta_2 \subset \hat{\Sigma} \setminus \Delta(F)$.
\end{lemma}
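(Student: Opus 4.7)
The plan is to prove the two inequalities separately. For the easy direction $\kappa(\zeta_1,\zeta_2;\Gamma) \le \min_F \delta(F)$, the argument is straightforward. Any cut-cycle $F$ separating $\zeta_1$ from $\zeta_2$ (say $\zeta_1 \subset \Delta(F)$) satisfies $W_1 \subseteq X(F)$ and $W_2 \cap X(F) = \emptyset$, so by \cref{obs:cut-cycle_induce_cut} the set $\rho(F) = \rho(X(F))$ is an edge-cut separating $W_1$ from $W_2$ in $G$. Every path of an $\{E^1,E^2\}$-linkage therefore uses at least one edge of $\rho(F)$, and edge-disjointness bounds the order of the linkage by $\delta(F)$.

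For the reverse inequality I would first reduce to an abstract minimum cut via stitching and the Eulerian Menger theorem. Applying the up-stitch construction (\cref{def:stitching_beadrooted}) along $V(G) \setminus W_1$ contracts $W_1$ to a single vertex $w_1^*$ with $\rho(w_1^*) = E^1$, and then applying it again along the image of $V(G) \setminus W_2$ contracts $W_2$ to a vertex $w_2^*$ with $\rho(w_2^*) = E^2$; denote the resulting Eulerian digraph by $G^*$. By \cref{obs:linkage_gives_linear_linkage}, the linear $\{E^1,E^2\}$-linkages in $G$ are in bijection with the linear $\{w_1^*,w_2^*\}$-linkages in $G^*$, and both realise the respective maxima, so the maximum $\{w_1^*,w_2^*\}$-linkage in $G^*$ has order $k := \kappa(\zeta_1,\zeta_2;\Gamma)$. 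Applying \cref{lem:Menger_for_Euler} together with \cref{lem:relating_cut_to_induced_cuts} to $G^*$ with $X_1 = \{w_1^*\}$ and $X_2 = \{w_2^*\}$ yields a partition $(Z,\bar Z)$ of $V(G^*)$ with $w_1^* \in Z$, $w_2^* \in \bar Z$ and $\delta_{G^*}(Z) = k$. Setting $Y := (Z \setminus \{w_1^*\}) \cup W_1$ lifts this to a partition $(Y,\bar Y)$ of $V(G)$ with $W_1 \subseteq Y$, $W_2 \subseteq \bar Y$ and $\delta_G(Y) = k$.

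The key step is then to realise the abstract cut $(Y,\bar Y)$ as a topological cut-cycle in $\Sigma$. Using the Eulerian $2$-cell embedding of $G^*$ in $\hat\Sigma$ obtained from $(\Gamma,\nu,\omega)$ by placing $w_1^*, w_2^*$ in the caps $\Delta_1, \Delta_2$, I would take a sufficiently small regular neighbourhood $N_Y \subset \hat\Sigma$ of the induced drawing of $G^*[Y]$, built from a small closed disc around each $\nu(v)$ with $v \in Y$ and a thin strip along each edge with both endpoints in $Y$, chosen so that $\Delta_1 \subseteq N_Y$. Then $\bd(N_Y)$ is a disjoint union of simple closed curves in $\hat\Sigma$ that avoid $\nu(V(G^*))$ and cross $\nu(E(G^*))$ only in edges of $\rho_{G^*}(Y)$, each such edge being met exactly once. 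Because $\hat\Sigma$ is a sphere, $\hat\Sigma \setminus \bd(N_Y)$ falls into open disc components, and exactly one curve $F \subset \bd(N_Y)$ has the property that $\hat\Sigma \setminus F$ has $w_1^*$ and $w_2^*$ in distinct components. Then $F$ lies in $\Sigma$, since it only meets edges of $G$ which lie in $\Sigma$, and hence $F$ is a cut-cycle with $\zeta_1 \subset \Delta_1 \subseteq \Delta(F)$ and $\zeta_2 \subset \hat\Sigma \setminus \Delta(F)$; moreover $\delta(F) \le |\rho_{G^*}(Y)| = k$. Combined with the easy direction this gives equality.

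The main obstacle is showing that $\bd(N_Y)$ really closes up into simple closed curves crossing exactly the cut-edges once each. This hinges on the Eulerian $2$-cell structure: by \cref{obs:faces_in_2-cell_Euler_embeddings_bounded_by_cycle} every face is bounded by a closed walk tracing a cycle, and following that boundary one alternates between vertices of $Y$ and vertices of $\bar Y$ at each cut-edge crossing, so each face contains an even number of edges of $\rho_{G^*}(Y)$ on its boundary. This parity is what allows the thin-strip boundary inside each face to be glued consistently at the vertex discs into global simple closed curves.
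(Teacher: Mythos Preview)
Your approach is correct and genuinely different from the paper's.

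\textbf{The paper's route.} The paper argues by induction on the number of non-bead vertices. In the inductive step it picks an internal degree-$4$ vertex $v$, splits off both edge-pairs at $v$ to obtain $G^\star$, and compares minimum separating cut-cycles in $\Gamma$ and $\Gamma^\star$. If the minimum does not drop, induction applies directly via \cref{obs:splitting_off_linkages_gives_linkages}. If it drops (necessarily by exactly $2$, by Eulerianness), the paper shows the offending cut-cycle $F^\star$ lifts to two cut-cycles $F_1,F_2$ in $\Gamma$ of the original minimum order with $v$ on opposite sides; it then up-stitches along $F_1$ and along $F_2$, applies induction to the two smaller cylinders, and concatenates the resulting linkages through $v$ using that the two edges of $\rho(v)$ on each $F_i$ form an in/out pair (this is where the Eulerian-embedding hypothesis enters).

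\textbf{Your route.} You instead pass to the contracted Eulerian digraph $G^*$, invoke the abstract Eulerian Menger theorem once to get a minimum induced cut $(Y,\bar Y)$, and then realise it topologically as a boundary component of a regular neighbourhood of $G^*[Y]$ in the sphere $\hat\Sigma$. This cleanly separates the combinatorics (existence of a small induced cut) from the topology (any induced cut in a spherically embedded graph is traced by a simple closed curve), and avoids the splitting-off induction entirely. The paper's argument, by contrast, stays inside the splitting-off toolkit and never appeals to regular neighbourhoods.

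\textbf{Minor points to clean up.}
\begin{itemize}
\item Your invocation of \cref{def:stitching_beadrooted} has the up/down direction reversed relative to the paper's conventions, though your intended construction (identify all of $W_1$ to one vertex, all of $W_2$ to another) is clear and correct.
\item ``Exactly one curve $F\subset\bd(N_Y)$ separates $w_1^*$ from $w_2^*$'' is false in general (take $N_Y$ an annulus); you only need, and only use, that at least one does.
\item The sentence ``$F$ lies in $\Sigma$, since it only meets edges of $G$ which lie in $\Sigma$'' is not quite enough: a priori $F$ could pass through $\Delta_2$ inside a face without meeting any edge. The missing observation is that no edge of $G^*[Y]$ is incident to $w_2^*$, so $\nu(G^*[Y])$ avoids $\Delta_2^\circ$ and a sufficiently small regular neighbourhood does too; together with your choice $\Delta_1\subseteq N_Y^\circ$ this forces $\bd(N_Y)\subset\Sigma$.
\item The ``main obstacle'' you flag is not one: once $N_Y$ is a genuine regular neighbourhood it is a compact subsurface of the sphere, so its boundary is automatically a $1$-manifold and hence a disjoint union of circles. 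Your parity observation about faces is true but unnecessary for this step.
\end{itemize}
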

\begin{proof}

Let~$\Sigma^\circ \coloneqq \Sigma \setminus (\zeta_1\cup \zeta_2)$, $W_\omega \coloneqq (W_1,W_2)$ and fix~$V \coloneqq V(G) \setminus W$; note that~$V = \nu^{-1}(\Sigma^\circ) \cap V(G)$. Let~$E^i \coloneqq \nu^{-1}(\zeta_i)$ for~$i=1,2$, by \cref{def:Euler_embedding_upto_beads} $E^i=\rho(W_i)$. The proof is by induction on~$\Abs{V}$.

\smallskip
If~$\Abs{V} = 0$, then~$\kappa(\zeta_1,\zeta_2;\Gamma) = \Abs{E^1 \cup E^2}$; note that loops at vertices in~$W$ are not drawn on cuffs.
\smallskip
Let~$\Abs{V} = n \geq 1$ and let~$F$ be a cut-cycle in~$(\Gamma,\nu)$ as in the hypothesis such that~$\delta(F)$ is minimal. We need to prove that there are~$\delta(F)$ edge-disjoint paths with ends in different cuffs (the other direction of the statement is trivial). Let~$v \in V$; if~$\deg(v) = 2$ the claim follows by dissolving~$v$ and applying induction, thus assume that~$\deg(v) = 4$. If~$v$ admits a loop, the claim follows by deleting the loops and subsequently dissolving~$v$ and applying induction, thus assume~$v$ admits no loop. Let~$\{e_1^i,e_2^i,e_1^o,e_2^o\} = \rho(v)$ where~$v = \head(e_j^i)$ for~$j=1,2$. Let~$(G^*,e^*) = \spl(G;(e_1^i,e_1^o))$ and~$(G^\star,e^\star) = \spl(G^*,(e_2^i,e_2^o))$; let~$(\Gamma^\star,\nu^\star,\omega)$ be the respective embedding of~$G^\star$ in~$\Sigma$ fixing $W$ such that~$\Gamma^\star$ agrees with~$\Gamma$ away from~$v,\rho(v)$ and define~$\nu^\star(e^*)$ and~$\nu^\star(e^\star)$ by drawing them close to~$\nu(v)$ in the obvious way (just trace along the old two-paths with the single new edges). 

Let~$F^\star$ be a cut-cycle in~$G^\star$ minimizing~$\delta_{\Gamma^\star}(F^\star)$. If~$\delta_{\Gamma^\star}(F^\star) = \delta_\Gamma(F)$, the claim follows by induction, for we get a~$\{E^1,E^2\}$-linkage~$\LLL^\star$ of order~$\delta(F)$ in~$G^\star$ and thus in~$G$ by \cref{obs:splitting_off_linkages_gives_linkages}. Now assume that $\delta_{\Gamma^\star}(F^\star) < \delta_\Gamma(F)$. Let~$f^\star \in F(\Sigma,\Gamma^\star) \setminus F(\Sigma,\Gamma)$ be the newly introduced face with~$e^*,e^\star \in \ol(f^\star)$. 

\begin{claim}\label{claim:boundary_linked_Menger_for_embeddings_in_discs_claim1}
$\delta_{\Gamma^\star}(F^\star) = \delta_{\Gamma}(F) - 2$ and in particular there exist cut-cycles~$F_1,F_2$ in~$\Gamma$ of order~$\delta_\Gamma(F)$ such that
\begin{itemize}
    \item ~$F_i$ are cut-cycles in~$\Gamma^\star$ with~$\delta_{\Gamma^\star}(F_i) = \delta_{\Gamma^\star}(F^\star)$ for~$i=1,2$,
    \item~$F_1 \cap \Gamma^\star = F_2 \cap \Gamma^\star$, and both visit~$\{e^*,e^\star\}$ consecutively up-to a cyclic rotation, and
    \item $\nu^{-1}(\Delta(F_2) \setminus \Delta(F_1)) \cap V(G) = \{v\}$ and~$\Delta(F_1) \subset \Delta(F_2)$.
\end{itemize} 
\end{claim}
\begin{claimproof}
    Since~$X(F)$ induces a cut in~$G$---since each~$w \in W$ has all its incident edges on the same cuff---it follows that~$\delta_{\Gamma^\star}(F^\star) \in 2\N$ using Eulerianness, hence~$\delta_{\Gamma^\star}(F^\star) \leq \delta -2$. Now if~$F^\star \cap f^\star = \emptyset$ then~$F^\star$ is also a cut-cycle in~$\Gamma$ that does not intersect~$\rho(v)$; in particular~$\delta_{\Gamma}(F^\star) = \delta_{\Gamma^\star}(F^\star) < \delta$; a contradiction.

    Thus~$F^\star \cap f^\star \neq \emptyset$ and clearly~$\Abs{F^\star \cap \bd(f^\star)} \leq 2$ for otherwise we could reroute the curve in~$f^\star$ accordingly resulting in a cut-cycle of lower order; in particular using the planarity of the drawing, since~$F^\star$ enters and leaves the face~$f^\star$, it visits the edges~$e^*,e^\star$ consecutively. Also~$\Abs{F^\star \cap \bd(f^\star)} >1$ since it is at least~$1$ and if it were exactly~$1$ this implies~$F^\star \subset \bar f^\star$ and therefore~$\delta(F^\star) = 1$ as a contradiction to it being even.

    Let~$E^+ \coloneqq \rho(v) \cap \nu^{-1}(\Delta(F^\star))$, and without loss of generality assume that~$\Abs{E^+} \geq 2$. By slightly re-routing~$F^\star$ inside~$f^\star$ we can guarantee that~$v \in \Delta(F^\star)$ (or equivalently~$v \notin \Delta(F^\star)$), producing a cut-cycle~$F_1$ (or~$F_2$ respectively) in~$\Gamma$ with~$\delta_G(F_i) \leq \delta_{\Gamma^\star}(F^\star) + 2$ for~$i=1,2$. The cut-cycles are easily seen to satisfy the remaining statements of the claim, concluding the proof.
\end{claimproof}

In particular, using \cref{claim:boundary_linked_Menger_for_embeddings_in_discs_claim1} we may assume that~$F_1,F_2$ are cut-cycles in~$\Gamma$ such that~$F_1 \cap \Gamma^\star = F_2 \cap \Gamma^\star$ with~$\delta_{\Gamma}(F_i) = \delta(F)$ and~$v \notin \Delta(F_1)$ but~$v \in \Delta(F_2)$. Note further that~$\Abs{F_i \cap \nu(\rho(v))} = 2$ for~$i=1,2$ and~$(\nu^{-1}(F_1) \cap \nu^{-1}(F_2))\cap \rho(v) = \emptyset$, i.e. the curves cut different edges of~$\rho(v)$. 

\begin{claim}\label{claim:boundary_linked_Menger_for_embeddings_in_discs_claim2}
   Let~$j \in \{1,2\}$ and~$e^i,e^o \in \nu^{-1}(F_j) \cap \rho(v)$. Then~$\head(e^i) = v$ and~$\tail(e^o) = v$.
\end{claim}
\begin{claimproof}
    By assumption~$v$ is Eulerian embedded and the embedding is~$2$-cell. Thus, any cut-cycle that visits two edges of~$\rho(v)$ consecutively must respect the cyclic order of the edges at~$v$ in the embedding since at most two edges of~$\rho(v)$ are part of the same face; see also \cref{obs:faces_in_2-cell_Euler_embeddings_bounded_by_cycle}.
\end{claimproof}

    For the following we focus on~$F_1$, whence~$v \notin \Delta(F_1)$ (the other case is analogous using down-stitches and centering around~$\zeta_2$ as we briefly discuss below). Let~$X(F_1) \coloneqq \big(\nu^{-1}(\Delta(F_1)) \cap V(G) \big ) \cup W_1$---note that since $\Sigma$ is not a disc, $X(F_1)$ is not defined a priori---then one easily verifies that $X(F_1)$ induces a cut in $G$ where $W_2 \cap X(F_1) = \emptyset$. Recall that $W_\omega=(W_1,W_2)$ is a partition, thus the up- and down-stitches at $X(F_1)$ are well-defined using \cref{def:stitching_beadrooted}. Let~$(G^\rho, \pi(W^\rho))$ be the respective bead-rooted Eulerian digraph obtained from~$(G,\pi(W_\omega))$ by up-stitching~$X(F_1)$ with up-stitch vertex~$x^*$ where~$W^\rho \coloneqq W_1 \cup \{x^*\}$; note that~$\rho(X(F_1)) = \rho(F_1)$. We define the cylinder~$\Sigma_1 \coloneqq \Delta(F_1) \cap \Sigma$ with cuffs~$\zeta_1$ and~$\zeta$, the latter of which is ``traced'' by~$F_1$. Next define~$(\Gamma^\rho,\nu^\rho,\omega^\rho)$ to be an Eulerian~$2$-cell embedding of~$G^\rho$ in~$\Sigma_1$ where~$\omega^\rho(w) = \zeta_2$ if and only if~$w=x^*$ and~$(\Gamma^\rho,\nu^\rho)$ is equal to~$(\Gamma,\nu)$ up to~$\rho(x^*) = \rho(F_1)$ which is drawn on the new cuff~$\zeta$ in the obvious way: contract~$\zeta_2$ to~$\zeta$ via the obvious homotopy. Note that the edges drawn on~$\zeta_1$ are exactly~$E^1$ while the edges drawn on~$\zeta$ are exactly~$\rho(F_1)$.

\smallskip

By assumption~$v \notin V(G^\rho)$, therefore our new embedding has at least one internal vertex less; in particular the lemma holds by induction and we conclude that there is a cut-cycle~$F_1' \subset \Sigma_1$ such that~$\Delta(F_1') \subset \Sigma_1$ is a disc with~$\zeta_1 \subset \Delta(F_1')$ such that~$\kappa(\zeta_1,\zeta;\Gamma^\rho) = \delta(F_1')$. By minimality of~$F_1$ we know further that~$\delta_{\Gamma}(F_1) \leq \delta_{\Gamma^\rho}(F_1')$ and in particular there is a~$\{E^1,\rho(F_1)\}$-linkage~$\LLL_1$ in~$G^\rho$ of order~$\delta_{\Gamma^\rho}(F_1')$. Since~$\Abs{\rho(F_1)} = \delta(F_1)$ by definition we conclude that~$\LLL_1$ has order exactly~$\delta(F_1)$.

\smallskip

Similarly, by setting~$F = F_2$ with~$v \in \Delta(F_2)$ and restricting our attention to the cylinder~$\Sigma_2 \coloneqq \Sigma \setminus \Delta(F_2)$ with cuffs~$\zeta_2$ and~$\zeta$ where the latter is traced by~$F_2$ we get a~$\{\rho(F_2),E^2\}$-linkage~$\LLL_2$---note that~$\rho(F_2)$ and~$\rho(F_1)$ agree up-to the edges adjacent to~$v$ by \cref{claim:boundary_linked_Menger_for_embeddings_in_discs_claim1}---and we finally concatenate both linkages by adding back~$v$ and using the fact that~$\rho(F_1)$ and~$\rho(F_2)$ both contain exactly one edge of~$\rho(v)$ with~$v$ as head and one with~$v$ as a tail by \cref{claim:boundary_linked_Menger_for_embeddings_in_discs_claim2} resulting in the desired~$\{E^1,E^2\}$-linkage of order~$\delta(F)$.
\end{proof}

We get an immediate corollary to \cref{lem:boundary_linked_Menger_for_embeddings_in_cylinder} for discs.

\begin{definition}
    Let~$G$ be Eulerian with beads~$W\subset V(G)$ and let~$(\Gamma,\nu,\omega)$ be a respective Eulerian embedding in a disc~$\Delta$ with cuff~$\zeta_1$ fixing~$W$. Let~$C \subseteq \Gamma$ be an internal circle and~$F \subset \Delta$ a cut-cycle satisfying~$\rho(F) = \rho(V(C))$. Let~$\Sigma \coloneqq \Delta \setminus \Delta(F)^\circ$ be the cylinder with cuffs~$\zeta_1$ and~$\zeta = F$. Let~$G^\star$ be obtained from~$G$ by down-stitching~$V(C)$ and let~$x_*$ be the down-stitch. Let~$(\Gamma^*,\nu^\star,\omega^\star)$ be the Eulerian embedding of~$G^\star$ fixing~$W \cup \{x_*\}$ in~$\Sigma$ such that $(\Gamma^\star,\nu^\star)$ agrees with~$(\Gamma,\nu)$ on~$\Sigma$, and additionally~$\omega^\star(w)=\zeta$ if and only if~$w = x_*$. 
    
    Then we define~$\kappa(\zeta_1,C;\Gamma) \coloneqq \kappa(\zeta_1,\zeta;\Gamma^\star)$.
\end{definition}
\begin{remark}
    Note that $\Delta \setminus \Delta(F)^\circ$ may not be the same set as $\overline{\Delta \setminus \Delta(F)}$ since the discs may share part of their boundary. In particular the latter may not be a cylinder.
\end{remark}

\begin{corollary}\label{cor:Menger_on_a_disc}
     Let~$G$ be Eulerian with beads~$W$. Let~$(\Gamma,\nu,\omega)$ be an Eulerian embedding of~$G$ in a disc~$\Delta$ with cuff~$\zeta_1$ fixing~$W$. Let~$C \subseteq \Gamma$ be an internal circle.
    Then~$\kappa(\zeta_1,C;\Gamma)$ is equal to the minimum of~$\delta(F)$ over all cut-cycles~$F \subset\Sigma$ that bound a disc~$\Delta(F) \subseteq \Delta$ such that~$\Delta(C) \subseteq \Delta(F)$ and~$\zeta_1 \subset \Delta \setminus \Delta(F)^\circ$.
\end{corollary}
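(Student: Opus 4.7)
The plan is to reduce the claim directly to the cylinder version of Menger already established in Lemma~\ref{lem:boundary_linked_Menger_for_embeddings_in_cylinder}. By the definition preceding the corollary, $\kappa(\zeta_1, C; \Gamma) = \kappa(\zeta_1, \zeta; \Gamma^\star)$, where $\Gamma^\star$ is the Eulerian embedding of the down-stitched graph $G^\star$ in the cylinder $\Sigma = \Delta \setminus \Delta(F)^\circ$ with cuffs $\zeta_1$ and $\zeta = F$, and $F$ is a cut-cycle satisfying $\rho(F) = \rho(V(C))$ (a curve drawn ``just outside'' $C$). So once we identify cut-cycles in the two settings, Lemma~\ref{lem:boundary_linked_Menger_for_embeddings_in_cylinder} immediately yields the result.

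The first step is to verify that the two families of cut-cycles correspond via an order-preserving bijection. Concretely, I would show:
\begin{itemize}
\item Given a cut-cycle $F'$ in $\Gamma^\star$ on $\Sigma$ that bounds a disc $\Delta(F') \subset \hat{\Sigma}$ with $\zeta_1 \subset \Delta(F')$ and $\zeta \subset \hat{\Sigma} \setminus \Delta(F')$, viewing $F' \subset \Sigma \subset \Delta$ yields a curve in $\Delta$. Since $F' \cap \Gamma \subseteq F' \cap \Gamma^\star \subseteq \nu^\star(E(G^\star))$ and no edge of $G^\star$ drawn in $\Sigma$ was created by down-stitching (edges incident to $x_*$ lie on $\zeta$ and are avoided by $F'$), the set $\rho_G(F') = \rho_{G^\star}(F')$. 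Moreover $F'$ bounds a disc $\Delta(F') \subseteq \Delta$ containing $\Delta(F) \supseteq \Delta(C)$, and $\zeta_1 \subset \Delta \setminus \Delta(F')^\circ$ follows from the cuff condition.
\item Conversely, if $F''$ is a cut-cycle in $\Gamma$ on $\Delta$ with $\Delta(C) \subseteq \Delta(F'')$ and $\zeta_1 \subset \Delta \setminus \Delta(F'')^\circ$, then $F''$ lies entirely in $\Sigma$ (since it cannot enter $\Delta(F)^\circ \subseteq \Delta(C)^\circ \subseteq \Delta(F'')^\circ$), and $F''$ does not touch any edge incident to $x_*$ in $G^\star$ (those edges are drawn on $\zeta$, inside $\Delta(F)$). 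Hence $F''$ is also a cut-cycle of $\Gamma^\star$ separating $\zeta_1$ from $\zeta$, with the same order.
\end{itemize}
In both directions $\delta(F') = \delta(F'')$, and any $\{E^1, \rho(V(C))\}$-linkage in $G$ of order $k$ gives an $\{E^1, \rho(x_*)\}$-linkage in $G^\star$ of order $k$ (and vice versa), so the ``maximum linkage size'' quantities also coincide.

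Combining the two directions above with Lemma~\ref{lem:boundary_linked_Menger_for_embeddings_in_cylinder} applied to $(G^\star, \Gamma^\star)$ in the cylinder $\Sigma$ gives
\[
\kappa(\zeta_1, C; \Gamma) \;=\; \kappa(\zeta_1, \zeta; \Gamma^\star) \;=\; \min_{F'} \delta(F') \;=\; \min_{F''} \delta(F''),
\]
where the last minimum runs over all cut-cycles $F''$ of $\Gamma$ in $\Delta$ with $\Delta(C) \subseteq \Delta(F'')$ and $\zeta_1 \subset \Delta \setminus \Delta(F'')^\circ$, as required. The only real step requiring care is the bijection above: one must check that ``passing through the stitching vertex $x_*$'' is impossible for the relevant cut-cycles, which follows directly from the embedding restrictions on beads (edges incident to $x_*$ are drawn on $\zeta$, so any cut-cycle separating $\zeta_1$ from $\zeta$ lies in the complement of $\Delta(F)^\circ$ and cannot use edges incident to $x_*$). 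No other obstacle arises; the rest is a bookkeeping argument comparing edges and discs between $\Delta$ and the cylinder $\Sigma$.
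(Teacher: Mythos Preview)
Your approach is exactly the one the paper intends: the corollary is stated there as an immediate consequence of Lemma~\ref{lem:boundary_linked_Menger_for_embeddings_in_cylinder} via the definition of $\kappa(\zeta_1,C;\Gamma)$, and your bijection between cut-cycles in $\Delta$ and in the cylinder $\Sigma$ is precisely the bookkeeping left implicit. One small slip: you wrote $\Delta(F)^\circ \subseteq \Delta(C)^\circ$, but since $X(F)=V(C)$ the inclusion goes the other way ($\Delta(C)\subseteq\Delta(F)$); the argument is unaffected because the annulus between $C$ and $F$ contains no vertices, so any $F''$ with $\Delta(C)\subseteq\Delta(F'')$ is equivalent to one lying in $\Sigma$.
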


Further there is a well-known and easy observation for linkages in Eulerian embedded digraphs in  a cylinder, namely no two paths of such a linkage ``cross''. See \cite{Frank_2path} for details.

\begin{observation}\label{obs:linkages_in_euler_embedded_cylinder_are_homotopic}
    Let~$(\Gamma,\nu,\omega)$ be an Eulerian embedding of a bead-rooted Eulerian digraph~$(G,\pi(W_\omega))$ in some surface~$\Sigma$. Let~$\Delta,\Delta' \subseteq \Sigma$ be discs such that~$\Delta' \subset \Delta$ is a strict subset. Let~$\LLL$ be a linkage in~$G$ such that every path in~$P$ has its endpoints in~$\Sigma \setminus \Delta$. Then any two paths~$P,P' \in \restr{\LLL}{\overline{\Delta \setminus \Delta'}}$ that both have one end on~$\bd(\Delta)$ and one end on~$\bd(\Delta')$ are homotopic in $\overline{\Delta \setminus \Delta'}$.
\end{observation}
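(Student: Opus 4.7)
The plan is to derive the observation from the no-crossing property enforced by the Eulerian embedding of $(G,\pi(W_\omega))$ in $\Sigma$, applied to the annular region $A \coloneqq \overline{\Delta \setminus \Delta'}$. First I would observe that each $P \in \restr{\LLL}{A}$ is a path whose internal vertices are drawn in $A$, and whose two ends (the first and last edges) have one endpoint on $\bd(\Delta)$ and one on $\bd(\Delta')$ by definition of $\restr{\LLL}{A}$ in \cref{def:induced_paths_in_tracing_sets} and by \cref{obs:linkages_reduced_to_cylinder}. Thus $\nu(P)$ is a continuous arc in the closed annulus $A$ joining its two boundary components, disjoint in its interior from $\bd(A)$ except possibly at isolated points of $\nu(V(G))$.

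Next I would show that for two distinct such paths $P,P' \in \restr{\LLL}{A}$ their drawings $\nu(P)$ and $\nu(P')$ do not topologically cross. They are edge-disjoint, so their drawings only meet at common vertices. At every common vertex $v$, the restrictions of $P$ and $P'$ use a pair of incidences $(e^i,e^o)$ and $(f^i,f^o)$ respectively with $e^i, f^i$ being in-edges and $e^o, f^o$ out-edges at $v$ (by \cref{def:paths}). Because $v$ is Eulerian embedded (\cref{def:Euler_embedding_upto_beads} ensures this for every $v \notin W$, and the internal vertices of paths in $\restr{\LLL}{A}$ lie in $A \subset \Sigma$ so they are not beads), the four edges $e^i,e^o,f^i,f^o$ appear around $v$ in the cyclic order in- out- in- out-. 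Consequently $(e^i,e^o)$ and $(f^i,f^o)$ do \emph{not} interleave in the cyclic order at $v$, so the two path-segments only touch at $v$ without crossing topologically. This is the standard no-crossing phenomenon for Eulerian embeddings (cf.\ \cref{fig:Eulerian-embedding}); it is precisely what distinguishes Eulerian embeddings from strongly planar ones.

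Having established non-crossing, I would apply the Jordan curve theorem inside the annulus $A$. A simple arc with one endpoint on each cuff of an annulus $A$ separates $A$ into two discs, and two such arcs that do not cross (and share no edges) are homotopic in $A$ in the sense of \cref{def:homotopic_edge-disjoint_paths}: pick any simple curve $F$ from $\bd(\Delta)$ to $\bd(\Delta')$ inside one of the two regions determined by $\nu(P')$ that avoids $\nu(P)$; then $\nu(P)$ lies on one side of $F \cup \nu(P')$. The only subtlety is that $\nu(P),\nu(P')$ may meet in shared vertices, so ``non-crossing'' has to be interpreted combinatorially via the cyclic order at shared vertices rather than set-theoretic disjointness; this is handled exactly by the previous paragraph.

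The main obstacle is the bookkeeping for shared vertices and for the restriction operator $\restr{\LLL}{A}$: a path $P \in \LLL$ may re-enter $A$ several times and therefore contribute several subpaths to $\restr{\LLL}{A}$, and the same vertex may appear on many of these subpaths. The argument must be uniform over all such subpaths, and one must verify that the alternating-cyclic-order property at each shared vertex still yields a choice of homotopy witness $F$ that avoids all of them simultaneously. This is a routine but slightly tedious reduction; everything boils down to the local Eulerian embedding condition ruling out crossings, together with planarity of the annulus, so no further global argument is needed.
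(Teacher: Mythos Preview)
Your proposal is correct and follows exactly the approach the paper alludes to: the paper does not give a proof but simply writes ``See \cite{Frank_2path} for details'' and points to the caption of \cref{fig:Eulerian-embedding}, which states that two edge-disjoint paths cannot cross at an Eulerian-embedded vertex---precisely the local non-interleaving observation you spell out. Your argument is a faithful (and more detailed) unpacking of that reference; the only thing to be slightly careful about is that $\nu(P)$ need not be a simple arc when $P$ revisits a vertex, but as you note this is handled by the same local non-interleaving at degree-$4$ vertices, so the perturbation to disjoint simple arcs goes through.
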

In fact Frank et al. \cite{Frank_2path} prove that under mild restrictions the converse to \cref{obs:linkages_in_euler_embedded_cylinder_are_homotopic} is also true, i.e., they prove an analogue to the well-known Two-Path Theorem (see \cite{GMVI,KWT_flatwall}) for Eulerian digraphs. We direct the reader to \cite[Theorem 3.7]{Frank_2path} and \cite[Section 5.5]{EDP_Euler} for further details.

\subsubsection{Reduction to Linegraphs} While many of the results in the following sections can be directly proved using the Eulerianness of the digraph, we will derive part of them from known results of \cite{GMVII,GMVIII} on undirected graphs to shorten the exposition. Since the known results are about standard minors, vertex separations and vertex-disjoint paths we will need the following construction of a directed \emph{linegraph} of~$G$. 

\begin{definition}[Linegraph of~$G$ and~$G^+$]\label{def:linegraph}
Let~$G$ be a digraph and let $G^+$ be the digraph obtained from~$G$ by subdividing each edge once. 

The \emph{directed linegraph $\ell(G)$ of~$G$} is defined as $\ell(G) \coloneqq (E(G^+), \PPP, \head, \tail)$, where~$\PPP$ is the set of~$2$-paths in~$G^+$, and~$f\coloneqq (e_1,e_2) \in \PPP$ satisfies~$\head(f) = e_2 \in E(G^+)$ and~$\tail(f) = e_1$. Further, for any subdigraph~$H$ of~$G$ we write~$\ell(H)$ to mean the subdigraph of~$\ell(G)$ with~$V(\ell(H)) = E(H)$ and~$E(\ell(H)) = \{(e_1,e_2) \mid (e_1,e_2) \text{ is a two-path in } H\}$.

Then we define~$\ell^+(G) \coloneqq \ell(G^+)$.
\end{definition}
Note that neither~$\ell(G)$ nor~$\ell^+(G)$ have loops. 

An easy but useful observation is the following.
\begin{observation}\label{obs:circles_are_circles_in_linegraph}
Let~$G$ be an Eulerian digraph and~$H \subseteq G$. Then
\begin{itemize}
    \item[(i)]~$H$ is a cycle if and only if~$\ell(H)$ is a circle, and
    \item[(ii)]~$H$ is a path if and only if~$\ell(H)$ is a linear path.
\end{itemize}
Similarly, let~$H_\ell \subseteq \ell(G)$. Then
\begin{itemize}
    \item[(iii)]~$H_\ell$ is a circle if and only if there is a cycle~$H' \subseteq G$ with~$\ell(H') = H_\ell$ , and
    \item[(iv)]~$H_\ell$ is a linear path if and only if there is a path~$H' \subseteq G$ with~$\ell(H') = H_\ell$.
\end{itemize}
\end{observation}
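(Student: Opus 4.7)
The plan is to verify each of the four statements by unwinding the definitions of $\ell(H)$, paths, cycles, circles, and linear paths. All four parts hinge on the same underlying bijection: edges of $H$ (respectively, of $G$ in (iii) and (iv)) correspond to vertices of $\ell(H)$ (respectively $H_\ell$), and $2$-paths translate to edges of $\ell(H)$.

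For (i) and (ii) I would proceed in both directions simultaneously. Given $H$ a cycle (resp. path) with edge sequence $(e_1,\ldots,e_m,e_1)$ (resp. $(e_1,\ldots,e_m)$), the edges are pairwise distinct by \cref{def:paths}, which yields $m$ distinct vertices in $\ell(H)$. Each consecutive relation $\head(e_i)=\tail(e_{i+1})$ gives a $2$-path in $H$, hence an edge of $\ell(H)$, and concatenating them produces the expected circle (resp. linear path) structure on $V(\ell(H))$. Conversely, given $\ell(H)$ a circle (resp. linear path) with vertex sequence $(e_1,\ldots,e_m,e_1)$ (resp. $(e_1,\ldots,e_m)$), each edge of $\ell(H)$ between consecutive vertices is, by definition, a $2$-path in $H$, so $\head(e_i)=\tail(e_{i+1})$; distinctness of the $e_i$ together with $V(\ell(H))=E(H)$ then forces $H$ to be realised by exactly this cyclic (resp. linear) sequence.

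For (iii) and (iv), I would start from a circle (resp. linear path) $H_\ell \subseteq \ell(G)$, enumerate its vertex sequence $(e_1,\ldots,e_m,e_1)$ (resp. $(e_1,\ldots,e_m)$), and read off the corresponding edges in $G$ via the identification of $V(\ell(G))$ with edges of $G^+$. The $2$-path structure of the edges of $H_\ell$ forces these to concatenate into a cycle (resp. path) $H' \subseteq G$ with $E(H')=V(H_\ell)$, and by construction $\ell(H')=H_\ell$; uniqueness of $H'$ is immediate since $H'$ is entirely determined by its edge set and the head/tail data inherited from $G$.

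The main (mild) obstacle is purely notational: the paper defines $\ell(G)$ via the subdivision $G^+$, while $\ell(H)$ is declared to have vertex set $E(H)$, so one must consistently identify edges of $H$ with the corresponding vertices of $\ell(G)$. A secondary subtlety is that a cycle may a priori revisit vertices whereas a circle may not; for the forward direction of (i), this forces $H$ to visit each vertex exactly as often as is ``paired'' by the cycle, since otherwise $\ell(H)$ would inherit extra $2$-paths breaking its circle structure. Once these identifications are fixed, every part of the observation reduces to a direct comparison of definitions and no deeper machinery is required.
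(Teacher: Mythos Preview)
The paper provides no proof for this observation; it is stated as immediate from the definitions, with only the remark that ``the observation is one of the reasons why paths and cycles in this exposition are defined the way they are.'' Your approach of unwinding the bijection between edges of $H$ and vertices of $\ell(H)$, and between $2$-paths and edges of $\ell(H)$, is exactly the intended argument and is correct.

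One comment on the subtlety you raise about cycles revisiting vertices: your phrasing suggests this is resolved, but it is not quite. If a cycle $H=(e_1,\ldots,e_m,e_1)$ passes through a vertex $v$ twice, say $\head(e_i)=v=\tail(e_{i+1})$ and $\head(e_j)=v=\tail(e_{j+1})$ with $i\neq j$, then $(e_i,e_{j+1})$ is also a $2$-path in $H$, so $\ell(H)$ as a \emph{subgraph} acquires a chord and is not a circle. The observation should be read at the level of the sequence rather than the bare subgraph: the cycle $(e_1,\ldots,e_m,e_1)$ in $G$ corresponds to the circle $(e_1,\ldots,e_m,e_1)$ in $\ell(G)$, which is how it is used throughout the paper. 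With that reading your argument goes through cleanly.
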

\begin{remark}
    The observation is one of the reasons why paths and cycles in this exposition are defined the way they are.
\end{remark}

On one hand~$\ell^+(G)$ gives us control over the degrees of the vertices; we say that a (di)graph is \emph{cubic} if the underlying undirected graph has maximum degree~$3$.
\begin{observation}\label{obs:linegraph_is_cubic}
    Let~$G$ be a an Eulerian digraph of degree at most~$4$, then~$\ell^+(G)$ is cubic and~$\ell(G)$ has degree at most~$4$.
\end{observation}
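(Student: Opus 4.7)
The plan is to compute the in- and out-degrees of a vertex of $\ell(G)$ directly from the definitions, using that $G^+$ is Eulerian with the same maximum degree as $G$. A vertex of $\ell(G)$ is an edge $e \in E(G^+)$, and by the definition of $\ell$, the in-edges of $e$ are the $2$-paths of $G^+$ of the form $(e',e)$, while the out-edges of $e$ are the $2$-paths $(e,e'')$. Hence
\[
\deg^-_{\ell(G)}(e) = \deg^-_{G^+}(\tail_{G^+}(e)) \quad \text{and} \quad \deg^+_{\ell(G)}(e) = \deg^+_{G^+}(\head_{G^+}(e)).
\]

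First I would observe that in $G^+$ every edge has one endpoint in $V(G)$ (an \emph{original} vertex) and one endpoint that is a subdivision vertex, simply because each edge of $G$ has been subdivided exactly once. Subdivision vertices have in-degree and out-degree exactly $1$, while original vertices inherit the degree from $G$: since $G$ is Eulerian of maximum degree at most $4$, every $v \in V(G)$ satisfies $\deg^+_{G^+}(v) = \deg^-_{G^+}(v) \leq 2$. Plugging this into the two displayed identities yields, in both possible orientations of $e$, the bound
\[
\deg_{\ell(G)}(e) = \deg^-_{G^+}(\tail(e)) + \deg^+_{G^+}(\head(e)) \leq 2 + 1 = 3,
\]
so in particular $\ell(G)$ has maximum degree at most $4$, which is the second half of the statement.

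For the first half, note that $\ell^+(G) = \ell(G^+)$ by definition, so it suffices to check that the hypothesis of the argument above remains valid when we replace $G$ by $G^+$. Subdivision preserves being Eulerian, and since original vertices keep their degree (at most $4$) while new subdivision vertices have degree exactly $2$, the digraph $G^+$ is Eulerian of maximum degree at most $4$. Applying the same calculation to $G^+$ in place of $G$ gives $\deg_{\ell(G^+)}(e) \leq 3$ for every $e \in E((G^+)^+)$, i.e.\ $\ell^+(G)$ has maximum underlying undirected degree at most $3$ and is therefore cubic.

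There is no real obstacle here: the whole argument is a two-line degree count, and the only subtle point to keep straight is bookkeeping between the two layers of subdivision hidden in the definition of $\ell^+(G) = \ell(G^+) = \ell((G^+))$, where the inner $G^+$ is the one appearing explicitly and the outer one is the one introduced by the definition of $\ell$. Once that is unpacked, both halves follow by the same Eulerian-plus-subdivision degree estimate.
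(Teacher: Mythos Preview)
Your argument is correct and is essentially the same as the paper's: both hinge on the observation that every edge of $G^+$ has a subdivision vertex (of in- and out-degree~$1$) as one endpoint, and then read off the degree bound in the linegraph. You package this as the identity $\deg^-_{\ell(G)}(e)=\deg^-_{G^+}(\tail e)$, $\deg^+_{\ell(G)}(e)=\deg^+_{G^+}(\head e)$ and a one-line estimate, while the paper does the equivalent explicit case split on which endpoint of $e$ is the subdivision vertex; there is no substantive difference.
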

\begin{proof}
    We discuss the case of~$\ell^+(G)$; the case for~$\ell(G)$ is analogous. Let~$e \in E(G^+)$, then~$e$ is adjacent to a degree-two vertex, say~$v=\head(e)$ is of degree~$2$. Then, by Eulerianness, there is~$e' \in E(G^+)$ adjacent to~$v$ with~$v = \tail(e')$ and~$(e,e') \in \PPP$ where~$\PPP$ is the set of~$2$-paths in~$G^+$. In particular, this is the only~$2$-path in~$\PPP$ starting with~$e$ and thus~$e$ is a vertex of out-degree~$1$ in~$\ell^+(G)$. Let~$w$ be the other endpoint of~$e$. Then~$\deg(w) \in \{2,4\}$. If~$\deg(w) = 2$ then argue as above, whence assume that~$\deg(w) = 4$ and~$w = \tail(e)$. Again by Eulerianness, let~$e_1,e_2$ be the two edges having~$w$ as a head. Then~$(e_1,e),(e_2,e) \in \PPP$ are the two distinct and only~$2$-paths with~$e$ as an end and thus~$e$---as a vertex of~$\ell^+(G)$---has in-degree~$2$.
\end{proof}

On the other hand~$\ell(G)$ and~$\ell^+(G)$ don't lose information regarding linkages.
\begin{observation}\label{obs:linkages_stay_same_in_linegraph}
    Let~$\LLL$ be a linkage in~$G$, then~$\LLL^+$, which is obtained by subdividing each path in~$\LLL$, is a linkage in~$G^+$. Let~$\QQQ^+$ be a linkage in~$G^+$ starting and ending in vertices of~$V(G)$, then~$\QQQ$, which is obtained by dissolving the vertices in~$V(G^+) \setminus V(G)$ is a linkage in~$G$.

    Let~$\LLL'$ be a linkage in~$G^+$. Then~$\ell(\LLL') \coloneqq \{P \mid P = \ell(L) \text{ for } L \in \LLL'\}$ is a set of vertex-disjoint linear paths in~$\ell^+(G)$. And let~$\QQQ'$ be a set of vertex-disjoint linear paths in~$\ell^+(G)$, then there exists a linkage~$\QQQ$ in~$G^+$ with~$\ell(\QQQ) = \QQQ'$.
\end{observation}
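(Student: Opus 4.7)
The proof is essentially a matter of unpacking the definitions of $G^+$, $\ell(G)$, $\ell^+(G)$, and verifying that the natural correspondences between edges and $2$-paths preserve the relevant disjointness properties. I would organise it into four short arguments, one per assertion, and rely on \cref{obs:circles_are_circles_in_linegraph} to translate between paths in $G^+$ and linear paths in $\ell^+(G)$.

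First, for the forward direction of the first claim, let $\LLL$ be an edge-disjoint linkage in $G$. Subdividing each $P\in\LLL$ according to the subdivision producing $G^+$ replaces every edge $e\in E(P)$ with a two-edge path through the new degree-$2$ vertex $v_e\in V(G^+)\setminus V(G)$. Since distinct edges of $G$ correspond to disjoint pairs of edges of $G^+$, and since the $v_e$ are distinct for distinct $e$, edge-disjointness is preserved. For the converse, let $\QQQ^+$ be an edge-disjoint linkage in $G^+$ with all endpoints in $V(G)$. Each subdivision vertex $v_e\in V(G^+)\setminus V(G)$ has degree $2$ in $G^+$, so any path of $\QQQ^+$ that visits $v_e$ uses both its incident edges, and at most one path can visit $v_e$ by edge-disjointness. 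Dissolving the $v_e$ along each path of $\QQQ^+$ therefore merges each such pair of edges into the unique original edge $e\in E(G)$, and the result is an edge-disjoint family of paths in $G$.

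Second, for assertion (3), let $\LLL'$ be an edge-disjoint linkage in $G^+$ and consider $L\in\LLL'$. By \cref{obs:circles_are_circles_in_linegraph}(ii), $\ell(L)$ is a linear path in $\ell(G^+)=\ell^+(G)$. It remains to see that for distinct $L_1,L_2\in\LLL'$, the paths $\ell(L_1),\ell(L_2)$ are \emph{vertex}-disjoint in $\ell^+(G)$. But $V(\ell^+(G))=E(G^+)$, and $V(\ell(L_i))=E(L_i)$; since $L_1$ and $L_2$ are edge-disjoint, $E(L_1)\cap E(L_2)=\emptyset$, giving vertex-disjointness.

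Third, for assertion (4), let $\QQQ'=\{Q_1,\ldots,Q_k\}$ be vertex-disjoint linear paths in $\ell^+(G)$. By \cref{obs:circles_are_circles_in_linegraph}(iv) applied in $G^+$, for each $i$ there is a path $L_i\subseteq G^+$ with $\ell(L_i)=Q_i$. Since the $Q_i$ are vertex-disjoint in $\ell^+(G)$ and $V(Q_i)=E(L_i)$, the sets $E(L_i)$ are pairwise disjoint, so $\QQQ:=\{L_1,\ldots,L_k\}$ is an edge-disjoint linkage in $G^+$ with $\ell(\QQQ)=\QQQ'$. The only subtle point here, which I would state explicitly, is that the \emph{endpoint edges} of $Q_i$, viewed as vertices of $\ell^+(G)$, must not be internal vertices of any other $Q_j$, and this is exactly what vertex-disjointness of $\QQQ'$ guarantees.

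There is no real obstacle; the entire proof is routine. The only thing worth flagging is the direction of translation: since vertices of $\ell^+(G)$ are edges of $G^+$, edge-disjointness in $G^+$ corresponds to vertex-disjointness in $\ell^+(G)$ (not merely edge-disjointness), and it is precisely this upgrade that makes working in $\ell^+(G)$ useful for invoking undirected machinery later. I would close by remarking that under this correspondence, the degree bound from \cref{obs:linegraph_is_cubic} plus the translation between linkages gives a clean bridge to the undirected, vertex-disjoint setting used in \cite{GMVII,GMVIII}.
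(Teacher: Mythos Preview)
Your proposal is correct and takes essentially the same approach as the paper: both arguments unpack the definitions of $G^+$ and $\ell^+(G)$ and use the edge-to-vertex correspondence to translate edge-disjointness in $G^+$ into vertex-disjointness in $\ell^+(G)$. The paper's proof is a one-paragraph sketch of exactly this reasoning, while yours spells out each of the four assertions explicitly and invokes \cref{obs:circles_are_circles_in_linegraph} by name; the content is the same.
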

\begin{proof}
    The first part is obvious by construction. The second part follows by definition of~$\ell^+(G) = \ell(G^+)$, transforming edge-disjoint linkages into vertex-disjoint linkages where~$2$-paths are replaced by edges (and edges by single vertices, where paths consisting of a single edge are transformed into paths consisting of a single vertex); and vice-versa vertex-disjoint linkages are transformed into edge-disjoint linkages, where vertices become edges, and edges are replaced by~$2$-paths.
\end{proof}

Finally let us define embeddings of linegraphs of bead-rooted Eulerian digraphs. 

\begin{definition}[Induced linegraph~$\ell(\Gamma;\Sigma)$ and its Embedding]\label{def:linegraph_induced_by_bead_embedding}
    Let~$(G,\pi(W_\omega))$ be bead-rooted Eulerian digraph with beads~$W$ and let~$(\Gamma,\nu,\omega)$ be an Eulerian~$2$-cell embedding of~$(G,\pi(W_\omega))$ in some surface~$\Sigma$. 

    We define the \emph{linegraph~$\ell(\Gamma;\Sigma) \subseteq \ell(G)$ induced by~$\Gamma$} to be the graph with vertex set~$V(\ell(\Gamma;\Sigma)) \coloneqq \nu^{-1}(\Sigma) \cap E(G)$ and~$E(\ell(\Gamma;\Sigma)) \subseteq E(\ell(G))$ to be maximum with the property that~$(e_1,e_2) \in E(\ell(\Gamma;\Sigma))$ if and only if~$e_1,e_2 \in V(\ell(G;U))$ and~$\nu(v) \in \Sigma$ where~$v \in V(G)\setminus W$ is the head of~$e_1$ and tail of~$e_2$. 

    We define an \emph{embedding~$(\Gamma_\ell,\nu_\ell)$ of~$\ell(\Gamma;\Sigma)$  induced by~$(\Gamma,\nu,\omega)$} in the obvious way:
    \begin{itemize}
        \item for every~$e \in V(\ell(\Gamma;\Sigma))$ we set $\nu_\ell(e) \coloneqq \nu(e)$, and
        \item for every~$(e_1,e_2) \in E(\ell(\Gamma;\Sigma))$ there is a line~$I$ in~$\Gamma_\ell$ between~$\nu_\ell(e_1)$ and~$\nu_\ell(e_2)$ otherwise disjoint from~$\nu_\ell(V(\ell(\Gamma;\Sigma))$ parallel and very close to the respective line of~$\Gamma$ representing the path~$(e_1,e_2)$, such that~$I$ does not intersect~$U$ except for its endpoints. Then we draw~$\nu_\ell((e_1,e_2))$ on that line~$I$ very close to~$\nu(v)$ (that is in a face of~$\Gamma$ with~$e_1,e_2$ on its boundary).
    \end{itemize}
\end{definition}%
\begin{remark}
    Note that for degree-two vertices~$v$ we may set~$\nu_\ell((e_1,e_2)) = \nu(v)$ for the two-path~$(e_1,v,e_2) \subset G$, getting rid of a ``choice of a face'' to draw~$\nu_\ell((e_1,e_2))$ in.
\end{remark}

The following is easily verified; we leave the details to the reader.

\begin{observation}\label{obs:properties_of_linegraph_embedding}
    Let~$(G,\pi(W_\omega))$ be a bead-rooted Eulerian digraph and let~$(\Gamma,\nu,\omega)$ be an Eulerian~$2$-cell embedding of~$(G,\pi(W_\omega))$ in some surface~$\Sigma$. Let~$\ell(\Gamma;\Sigma)$ be the linegraph induced by~$\Gamma$ and~$(\Gamma_\ell,\nu_\ell)$ an induced embedding. Then
    \begin{itemize}
        \item[(i)]$(\Gamma_\ell,\nu_\ell)$ is an embedding, in particular every component of~$\Gamma_\ell \setminus \nu\big(V(\ell(\Gamma;\Sigma)) \cup E(\ell(\Gamma;\Sigma))\big)$ is homeomorphic to a line,
        \item[(ii)]$(\Gamma_\ell,\nu_\ell)$  does not admit boundary-edges, and every boundary-edge of~$(\Gamma,\nu,\omega)$ is a boundary-vertex of~$(\Gamma_\ell,\nu_\ell)$ and vice-versa. In particular~$\Abs{\nu_\ell^{-1}(\bd(\Sigma))} = \Abs{\nu^{-1}(\bd(\Sigma))}$,
        \item[(iii)] every boundary-vertex of~$(\Gamma_\ell,\nu_\ell)$ is either a sink (out-degree $0$) or a source (in-degree $0$),
        \item[(iv)] for every~$O$-arc~$F$ in~$(\Gamma_\ell,\nu_\ell)$ bounding a disc $\Delta(F) \subseteq \Sigma$  there is an equivalent $O$-arc~$F'$ in~$(\Gamma_\ell,\nu_\ell)$ such that~$F'$ is a cut-cycle in~$(\Gamma,\nu,\omega)$ with~$\alpha(F) = \rho(F')$ and such that~$G[\Delta(F);\Gamma_\ell] = \ell(\Gamma_{F'};\Delta(F'))$ where~$(\Gamma_{F'},\nu_{F'},\omega_{F'})$ is the induced embedding in~$\Delta(F')$ as in \cref{def:induced_embedding_of_bead-rooted_graph_in_cut-cycle}. In particular~$I(F) = X(F')$. And,
        \item[(v)] for every cut-cycle $F$ in~$(\Gamma,\nu,\omega)$ there is an equivalent cut-cycle~$F'$ such that~$F'$ is an~$O$-arc in~$(\Gamma_\ell,\nu_\ell)$ with~$\rho(F) = \alpha(F')$ and such that~$ \ell(\Gamma_{F};\Delta(F)) = G[\Delta(F');\Gamma_\ell]$ where~$(\Gamma_{F},\nu_{F},\omega_{F})$ is the induced embedding in~$\Delta(F')$ as in \cref{def:induced_embedding_of_bead-rooted_graph_in_cut-cycle}. In particular~$X(F) = I(F')$.
    \end{itemize}
\end{observation}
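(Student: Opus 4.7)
The plan is to verify items (i)--(v) directly from the explicit construction in \cref{def:linegraph_induced_by_bead_embedding}, exploiting the fact that each edge of $\ell(\Gamma;\Sigma)$ is drawn as a short arc, parallel to the corresponding $2$-path of $\Gamma$ and localised inside a small neighbourhood of the corresponding non-bead vertex $\nu(v)$, so that all the local geometric information is inherited from the already-embedded $\Gamma$.

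For (i) I would observe that by construction each edge $(e_1,e_2) \in E(\ell(\Gamma;\Sigma))$ is represented by a simple arc $I$ joining $\nu_\ell(e_1)$ to $\nu_\ell(e_2)$ which meets $\Gamma_\ell$ only at its endpoints, so each component of $\Gamma_\ell \setminus \nu_\ell(V(\ell) \cup E(\ell))$ is homeomorphic to an open interval. For (ii), note that $V(\ell(\Gamma;\Sigma)) = \nu^{-1}(\Sigma) \cap E(G)$ is placed exactly at the points $\nu(e)$; by \cref{obs:bead-rooted_edges_on_cuffs} the points on $\bd(\Sigma)$ are precisely the images of root edges, so these become boundary vertices in $\Gamma_\ell$, and no arc of $\Gamma_\ell$ lies on $\bd(\Sigma)$ because the arcs representing edges of $\ell$ sit near images of non-bead vertices $\nu(v) \in \Sigma \setminus \bd(\Sigma)$. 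For (iii), a boundary vertex of $\Gamma_\ell$ is some $e \in \rho(W_i)$; since \cref{def:linegraph_induced_by_bead_embedding} only admits edges $(e_1,e_2)$ whose common endpoint lies in $V(G)\setminus W$, the bead endpoint of $e$ never contributes a neighbour in $\ell(\Gamma;\Sigma)$, so $e$ is a source when $\tail(e) \in W$ and a sink when $\head(e) \in W$.

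For (iv) and (v) I would use the local correspondence between small arcs crossing $\nu_\ell(V(\ell))$ and small arcs crossing $\nu(E(G))$. Given an $O$-arc $F$ in $(\Gamma_\ell,\nu_\ell)$ bounding a disc $\Delta(F) \subseteq \Sigma$, the arc $F$ traces $\Gamma_\ell$ only at finitely many vertices, each of which represents an edge of $G$. Locally at such a vertex, one can push $F$ slightly off $\nu_\ell(e)$ along the parallel arc of $\Gamma_\ell$ so that it crosses $\nu(e)$ transversally exactly once. Doing this at every vertex of $\ell$ visited by $F$ produces an equivalent closed curve $F'$ that is $\Gamma$-tracing, null-homotopic in $\hat{\Sigma}$, and has $\rho(F') = \alpha(F)$; to see that $F'$ is a cut-cycle one checks, using that $\Gamma$ is an Eulerian embedding and that $F$ crosses each vertex once, that exactly one endpoint of each $e \in \rho(F')$ lies in $\Delta(F')$. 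The identity $G[\Delta(F);\Gamma_\ell] = \ell(\Gamma_{F'};\Delta(F'))$ then follows because both sides are determined by the same local data inside the disc. Item (v) is the symmetric reverse of the same perturbation argument.

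The steps are individually routine, with two things to be careful about: firstly, when pushing $F$ off vertices of $\ell$ or off vertices/edges of $G$, one must make consistent local choices so that the perturbed curve remains simple and closed, which requires using the cyclic face-rotation at $\nu(v)$ provided by the Eulerian $2$-cell embedding; secondly, in (iv) one must verify that the new $F'$ is indeed a cut-cycle (and not merely an $O$-trace), using \cref{lem:cut_cycles_are_alternating} and the fact that the two endpoints of an edge $e \in \alpha(F)$ lie on opposite sides of $F$ near $\nu_\ell(e)$, hence on opposite sides of $F'$ near $\nu(e)$. These are the only genuine observations; the rest is a straightforward bookkeeping of the definitions of $\ell(\Gamma;\Sigma)$, $(\Gamma_\ell,\nu_\ell)$, and the induced embedding inside a cut-cycle from \cref{def:induced_embedding_of_bead-rooted_graph_in_cut-cycle}.
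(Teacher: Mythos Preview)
Your proposal is correct and follows essentially the same approach as the paper: the paper treats (i)--(iii) as routine consequences of the definitions and handles (iv)--(v) by the same local perturbation (``slightly enlarging or shrinking the cycle'') that you describe, appealing to the edge/vertex duality between $\Gamma$ and $\ell(\Gamma;\Sigma)$. Your write-up is in fact more detailed than the paper's terse sketch; the only minor quibble is that your appeal to \cref{lem:cut_cycles_are_alternating} in verifying that $F'$ is a cut-cycle is not quite the right reference (that lemma goes the other way), but your actual argument---that the two endpoints of $e$ lie on opposite sides of $F$ near $\nu_\ell(e)$, hence on opposite sides of $F'$---is exactly what is needed.
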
%
\begin{proof}
    Claims~$(i),(ii),(iii)$ are purely topological and follow from the undirected case, the definition and the \cref{def:Euler_embedding_upto_beads} of bead-rooted Eulerian embeddings respectively.
    For claims~$(iv)$ and~$(v)$ note that~$F'$ can be chosen to be~$F$ after some slight adjustments to~$F$ that do not affect~$\alpha(F)$ and~$\rho(F)$ (and the respective intersections of~$F$ with the embeddings) for the respective claims, by slightly enlarging or shrinking the cycle. The rest follows from duality between edges in the graph and vertices in the linegraph together with the previous claims, \cref{obs:cut-cycle_induces_bead-rooted_subgraph} and \cref{obs:graphs_inside_circles_are_Eulerian}.
\end{proof}

Note that~$(i)$ of \cref{obs:properties_of_linegraph_embedding} implies that if~$\Sigma$ is a disc then~$\ell(\Gamma;\Sigma)$ is planar.
\label{sec:surface_defs}

\section{Strong Immersions for High Representativity}\label{sec:high-rep}

This section deals with the case that our embeddings are of high ``representativity'', a notion we will make precise shortly. In that case we are able to leverage results from \cite{GMVII} using the above defined linegraphs. Since the needed results we wish to transfer are stated for undirected graphs and vertex disjoint paths we need a few lemmas that help us switch the settings accordingly.

Adapting the notation in \cite{MoharT2001} to our setting, we define the \emph{vertex-face-graph}\Index{vertex-face-graph} of an embedded graph. 

\begin{definition}\label{def:vertex-face-graph}
  Let~$\Sigma$ be a surface, $G$ a graph and let $(U, \nu)$ be an embedding of $G$ in~$\Sigma$. Let $F \coloneqq F(\Sigma,U)$ be the set of faces of $(U, \nu)$. 
  
  The \emph{vertex-face-graph}\Index{vertex-face-graph} of $(U, \nu)$ is defined as the undirected graph $G^*$ with $V(G^*) \coloneqq V(G) \cup F$ and 
  \[
  E(G^*) \coloneqq \big\{ \{ v, f \} \sth v \in V(G), f \in F\text{ and } \nu(v) \in \bd(f) \big\}.
  \] 
\end{definition}

By construction, every $U$-normal simple curve $C \subseteq \Sigma$ defines a walk in the vertex-face-graph in the obvious way and conversely for every path~$P$ in the vertex-face-graph there is a simple $U$-normal curve $C$ which defines the path $P$. We say that $C$ \emph{induces} the path $P$.

The next two lemmas provide the technical core that will allow us to transfer undirected vertex disjoint paths and cycles to edge-disjoint directed paths and cycles. 

\begin{lemma}\label{lem:vf-path-euler-paths}
    Let $\Sigma$ be a surface. Let $(U, \nu)$ be a Eulerian~$2$-cell embedding of an Eulerian digraph $G$ in $\Sigma$. Let $G^*$ be the vertex-face-graph of $(U, \nu)$. 
    Let $P$ be a linear path in $G^*$ starting and ending at vertices $u, v \in V(G)$ and let $F(P) \subseteq F(\Sigma,U)$ be the set of faces occurring on $P$. Then there are linear paths $P_1, P_2$ such that $P_1$ links $u$ to $v$ and $P_2$ links $v$ to $u$ such that  $P_1, P_2$ are both contained in the subgraph of $G$ induced by the vertices and edges on the boundary of the faces in $F(P)$. 
\end{lemma}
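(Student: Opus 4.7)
\medskip

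\noindent\textbf{Proof plan.} The plan is to decompose the path $P$ into its alternating vertex--face structure and exploit the fact that faces of a Eulerian $2$-cell embedding are outlined by \emph{directed} cycles. Since $G^*$ is bipartite between $V(G)$ and $F(\Sigma,U)$, and since $P$ starts and ends in $V(G)$, we can write $P$ as
\[
u = v_0,\ f_1,\ v_1,\ f_2,\ v_2,\ \dots,\ f_k,\ v_k = v,
\]
where $v_i \in V(G)$ and $f_i \in F(\Sigma, U)$ are pairwise distinct (since $P$ is linear) and $\nu(v_{i-1}), \nu(v_i) \in \bd(f_i)$. Let $H \subseteq G$ be the subgraph induced by the vertices and edges on the boundaries of the faces $F(P) = \{f_1, \dots, f_k\}$; we must produce linear paths $P_1 \subseteq H$ from $u$ to $v$ and $P_2 \subseteq H$ from $v$ to $u$.

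The key step is to invoke \cref{obs:faces_in_2-cell_Euler_embeddings_bounded_by_cycle}: since $(U,\nu)$ is Eulerian and $2$-cell, for each internal face $f_i$ its outline $C_i \coloneqq \ol(f_i)$ is a \emph{directed cycle} in $G$, and for boundary faces an outlining directed path suffices (these do not occur if $\Sigma$ has no boundary, and they only facilitate the argument if present). In both cases, since $\nu(v_{i-1}), \nu(v_i) \in \bd(f_i) = \nu(C_i)$, both vertices lie on $C_i$. Traversing $C_i$ as a closed directed walk starting from $v_{i-1}$, we reach $v_i$ after using some prefix of the edges; this prefix forms a directed path $Q_i$ in $G$ from $v_{i-1}$ to $v_i$ with all edges in $C_i$. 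The complementary edges of $C_i$ form a directed path $R_i$ from $v_i$ back to $v_{i-1}$. By construction $Q_i, R_i \subseteq C_i \subseteq H$.

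Now form the directed walks
\[
W_1 \coloneqq Q_1 \cdot Q_2 \cdots Q_k \quad\text{and}\quad W_2 \coloneqq R_k \cdot R_{k-1} \cdots R_1
\]
by juxtaposing at the shared endpoints $v_1, \dots, v_{k-1}$. Both are directed walks in $H$: $W_1$ from $u$ to $v$, and $W_2$ from $v$ to $u$. From any directed walk between two vertices one can extract a linear path between the same two vertices by iteratively short-circuiting repeated vertices (this uses only edges of the original walk), yielding $P_1 \subseteq W_1 \subseteq H$ and $P_2 \subseteq W_2 \subseteq H$.

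The only genuine subtlety is that neighbouring $Q_i$ and $Q_{i+1}$ may share an edge when $f_i$ and $f_{i+1}$ share a boundary edge, so $W_1$ and $W_2$ need not be ``paths'' in the paper's strict sense (which forbids repeated edges). This is why we pass explicitly through directed walks and then apply the standard walk-to-linear-path reduction rather than trying to use the formal concatenation of \cref{def:concatenation_of_paths}. Since short-circuiting only deletes edges and vertices, the resulting linear paths remain inside $H$, as required.
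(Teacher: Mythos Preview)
The proposal is correct and takes essentially the same approach as the paper. The paper's proof simply observes that each face boundary is a directed cycle (by \cref{obs:faces_in_2-cell_Euler_embeddings_bounded_by_cycle}) and that consecutive faces on $P$ share a vertex, hence the union of the face boundaries is a strongly connected subdigraph of $G$, from which the existence of $P_1,P_2$ follows immediately; your argument is exactly this unpacked constructively, building the walks $W_1,W_2$ by concatenating arcs of the boundary cycles and then extracting linear paths.
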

\begin{proof}
    As $(U, \nu)$ is a Eulerian~$2$-cell embedding, by \cref{obs:faces_in_2-cell_Euler_embeddings_bounded_by_cycle} every internal face is bounded by a cycle. Furthermore, if $P$ contains a subpath $(f,v,f')$, for some $v \in V(G)$,  then the boundaries of $f$ and $f'$ share the vertex $v$ (they may also share edges). This implies that the union of the boundaries of the faces in $F(P)$ induces a strongly connected digraph (which may not be Eulerian). The result follows. 
\end{proof}

\begin{lemma}\label{lem:vf-path-euler-circle}
    Let $\Sigma$ be a surface. %
    Let $(U, \nu)$ be an Eulerian~$2$-cell embedding of an Eulerian digraph $G$ in $\Sigma$. Let $G^*$ be the vertex-face-graph of $(U, \nu)$.    
    Let $C, C'$ be $O$-arcs in $\Sigma$ bounding discs $\Delta(C), \Delta(C')$ in $\hat{\Sigma}$, respectively, such that $\Delta(C')\subseteq\Delta(C)^\circ$. Let $C^\star$ be a cycle in the vertex-face graph $G^*$ of $(U, \nu)$ such that all faces on $C^\star$ are contained in $\Delta(C)$ and have an empty intersection with the interior of $\Delta(C')$. 

    Then $G$ contains two circles $S_1, S_2$ in the union of the boundaries of the faces in $C^\star$ such that $S_1, S_2$ both bound a disc containing $\Delta(C')$ and $S_1$ and $S_2$ are oriented in opposite directions.
\end{lemma}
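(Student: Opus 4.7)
The plan is to show that $C^\star$ induces a simple closed $U$-normal curve $\gamma$ in $\Sigma$ that, together with the faces on $C^\star$, produces an annular ``wall'' around $\Delta(C')$ whose two boundary components are the desired directed circles $S_1, S_2$.

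First I would use the correspondence between paths/walks in the vertex-face-graph $G^*$ and $U$-normal curves to lift $C^\star = (v_1, f_1, v_2, f_2, \ldots, v_k, f_k, v_1)$ to a simple closed curve $\gamma$ in $\Sigma$ that passes through each $\nu(v_i)$ and traverses the interior of each face $f_i$ from the segment at $v_i$ to the segment at $v_{i+1}$. Since every $f_i$ lies in $\Delta(C) \setminus \Delta(C')^\circ$, the curve $\gamma$ lies in this annular region. The interesting case (which I'll argue must occur for the conclusion to be meaningful) is when $\gamma$ is non-contractible in $\Delta(C) \setminus \Delta(C')^\circ$, i.e.\ $\gamma$ winds around $\Delta(C')$; then $\gamma$ separates a disc in $\hat{\Sigma}$ containing $\Delta(C')$ (call this the \emph{inside}) from the \emph{outside}. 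By \cref{obs:faces_in_2-cell_Euler_embeddings_bounded_by_cycle}, each $f_i$ is bounded by a directed cycle $C_{f_i}$, and $v_i, v_{i+1}$ split $C_{f_i}$ into two directed paths: $P_i$ directed from $v_i$ to $v_{i+1}$ and $Q_i$ directed from $v_{i+1}$ to $v_i$. The curve $\gamma$ partitions $\overline{f_i}$ into two half-discs, one on each side; one of $\{P_i, Q_i\}$ lies on the inside, call it $A_i^{\mathrm{in}}$, and the other on the outside, call it $A_i^{\mathrm{out}}$.

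Next I would concatenate the inside arcs (and separately the outside arcs) to form two closed directed walks. The key step is the verification at each $v_i$ that the arcs $A_{i-1}^{\mathrm{in}}$ and $A_i^{\mathrm{in}}$ are compatibly oriented. Since $(U,\nu)$ is an Eulerian embedding, the in- and out-edges alternate in the cyclic order around $v_i$. The two edges of $\partial f_{i-1}$ incident to $v_i$ are consecutive in the cyclic order at $v_i$ (on the $f_{i-1}$-side), so one is an in-edge and one is an out-edge at $v_i$; following $C_{f_{i-1}}$ around its boundary, we necessarily enter $v_i$ via the in-edge and leave via the out-edge. The same holds for $f_i$. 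Using that $\gamma$ crosses $v_i$ transversally between $f_{i-1}$ and $f_i$, a local analysis of the cyclic order at $v_i$ (comparing which edges of $\partial f_{i-1}$ and $\partial f_i$ lie on the inside side of $\gamma$) shows that the head of $A_{i-1}^{\mathrm{in}}$ at $v_i$ matches a tail of $A_i^{\mathrm{in}}$ at $v_i$. Thus $S_1 \coloneqq A_1^{\mathrm{in}} \circ A_2^{\mathrm{in}} \circ \cdots \circ A_k^{\mathrm{in}}$ is a directed closed walk in $G$, and by the analogous argument with reversed orientations, $S_2 \coloneqq A_k^{\mathrm{out}} \circ \cdots \circ A_1^{\mathrm{out}}$ is a directed closed walk traversed in the opposite rotational direction around $\Delta(C')$.

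Finally I would argue that $S_1$ and $S_2$ are actually circles (i.e.\ linear directed closed walks) and each bounds a disc containing $\Delta(C')$. Topologically, $\nu(S_1)$ is the inner boundary $O$-arc of the annular region $\bigcup_i \overline{f_i}$ separating $\Delta(C')$ from $\gamma$; by \cref{def:paths}, to see $S_1$ is a circle it suffices to verify it visits no vertex twice, which follows from the fact that its image is a simple closed $O$-arc in $\Sigma$ (and likewise for $S_2$). Both $\nu(S_1)$ and $\nu(S_2)$ are $O$-arcs in $\hat{\Sigma}$ bounding discs containing $\Delta(C')$, which gives the required conclusion. The main obstacle will be step two's local compatibility check at each $v_i$: one has to be careful because consecutive faces on $C^\star$ may share an edge (not just a vertex) or be incident to other non-$C^\star$ faces at $v_i$, and the alternation of in/out edges forced by the Eulerian embedding must be exploited cleanly to force $S_1$ and $S_2$ to be oppositely oriented directed circles rather than mere undirected closed walks.
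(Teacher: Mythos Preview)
Your approach is genuinely different from the paper's, and considerably more ambitious. The paper's proof is a three-line argument: observe that the union $H$ of the boundaries of the faces on $C^\star$ is a strongly connected subdigraph of $G$ (since each face boundary is a directed cycle and consecutive faces on $C^\star$ share a vertex), and then assert that from strong connectivity one can extract two circles $S_1,S_2$ intersecting every face boundary and oriented oppositely, which therefore wind around $\Delta(C')$. No explicit construction is given; the paper even remarks afterwards that $S_1$ and $S_2$ need not be edge-disjoint. Your proposal, by contrast, tries to build $S_1$ and $S_2$ concretely as the inner and outer boundary walks of the ``annular wall'' $\bigcup_i \overline{f_i}$.

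Your construction has a real gap in Step~3. You argue that $S_1$ is a circle because ``its image is a simple closed $O$-arc in $\Sigma$'', i.e.\ the inner boundary of the annular region. But $\bigcup_i \overline{f_i}$ need not be a topological annulus: two \emph{non-consecutive} faces $f_i, f_j$ on $C^\star$ may share a vertex (or an edge) lying on the inside of $\gamma$, pinching the region. In that case your concatenation $A_1^{\mathrm{in}}\circ\cdots\circ A_k^{\mathrm{in}}$ is only a closed directed walk repeating that vertex, not a circle, and short-cutting at the repetition may destroy the winding around $\Delta(C')$. Similarly, when consecutive faces share an edge (the case you flag at the end), your inner/outer assignment may force the same edge into both $S_1$ and $S_2$ or into one of them twice; the paper's remark that $S_1$ and $S_2$ can share edges shows this is not a pathological corner case. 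The strong-connectivity argument sidesteps all of this by never committing to a particular walk: one only needs \emph{some} circle in $H$ with winding number $\pm 1$, and strong connectivity of a graph drawn in the annulus that meets every face on $C^\star$ provides both orientations. That is the idea you are missing, and it is what makes the paper's proof short.
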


\begin{proof}
    Let $F$ be the faces appearing on $C^*$.
    We first observe that all faces $f \in F$ must be internal. By \cref{obs:faces_in_2-cell_Euler_embeddings_bounded_by_circle}, the $\ol(f)$ is a cycle for every internal face~$f \in F(\Sigma,U)$. 
    Therefore, as in the proof of \cref{lem:vf-path-euler-paths}, the union $U' \coloneqq \bigcup \{ \ol(f) \sth f \in F \}$  induces a strongly connected digraph $H \subseteq G$.  Thus we can choose circles $S_1, S_2$ with a non-empty intersection with the boundary of every face on $C^\star$ such that $S_1$ and $S_2$ are oriented in opposite directions. As both circles intersect every face in $F$ they must bound discs in $\hat\Sigma$ containing $\Delta(C')$ in their interior.
\end{proof}

Note that the circles $S_1$ and $S_2$ in the previous lemma may not be edge-disjoint as neighbouring faces in $C^\star$ may share an edge that is then used by both circles.
Recall the \cref{def:internal_tracing_clean} of internal and clean subsets.

\begin{definition}\label{def:representativity}%
    Let $\Sigma$ be a surface other than the plane, cylinder, or sphere. Let $\{\zeta_1, \ldots, \zeta_k\} = c(\Sigma)$ be the cuffs of $\Sigma$. 
    Let $\bar G \coloneqq (G, \pi(W_\omega))$  be a bead-rooted Eulerian digraph and let $\Pi \coloneqq (\Gamma, \nu, \omega)$ be an Eulerian embedding of $\bar G$  in $\Sigma$.  

    Let $\gamma \sth [0,1] \to \Sigma$ be a $\Gamma$-tracing curve disjoint from the boundary of $\Sigma$ except possibly for its endpoints. Let $Y \coloneqq \gamma([0,1])$. Then

    \begin{itemize}
    \item $\gamma$ is \emph{cuff surrounding}\Index{curve!cuff surrounding} if $\gamma$ is closed, $Y$ is internal and there is a cuff $C$ such that $\gamma$ bounds a disk $\Delta(\gamma)$ in $\hat\Sigma$ which contains $C$ but no other cuff in its interior and $\Abs{Y \cap \Gamma} < \Abs{C \cap \Gamma}$.
    \item $\gamma$ is \emph{cuff shortening} if there is a cuff $C$ of $\Sigma$ such that $\gamma(0), \gamma(1) \in C \setminus \Gamma$ and
    $|Y \cap \Gamma| < |C' \cap\Gamma|$ for some component $C'$  of $C \setminus \{ \gamma(0), \gamma(1) \}$.
    \item $\gamma$ is \emph{cuff connecting}\Index{curve!cuff connecting}
  if its endpoints are on different cuffs of $\Sigma$.%
    \item $\gamma$ is \emph{cuff grouping}\Index{curve!cuff grouping} if $\gamma$ is a closed curve that is not cuff surrounding and $\Sigma \setminus \gamma$ is separated into two surfaces $\Sigma_1, \Sigma_2$, both containing fewer cuffs than $\Sigma$.
     \item $\gamma$ is \emph{genus reducing}\Index{curve!genus reducing} if it is closed or clean and cutting $\Sigma$ along $F$ results in one or more surfaces of genus lower than the genus of $\Sigma$.
      \end{itemize}
    We call $\gamma$  \emph{reducing}\Index{reducing curve}\Index{curve!reducing} if it is genus reducing or cuff grouping and we call it \emph{cuff separating}\Index{cuff separating}\Index{curve!cuff separating} if it is cuff surrounding or cuff shortening.

    If $\Pi$ does not have a cuff separating curve then we define the \emph{representativity}\index{representativity} $\fw(\Pi)$ as the minimum number $k$ such that every reducing curve intersects  $\Gamma$ in at least $k$ vertices. Otherwise we define $\fw(\Gamma) \coloneqq 0$. 
\end{definition}

\begin{remark}
     As we are working with digraphs of maximum degree $4$, if there is a genus reducing cut-cycle that intersects $\Gamma$ in $k$ edges, then there exists a genus reducing $O$-arc that intersects $\Gamma$ in $k$ vertices and, conversely, if there is a genus reducing $O$-arc intersecting $\Gamma$ in $k$ vertices then there is a genus reducing $O$-trace intersecting $\Gamma$ in at most $2k$ edges.

    Therefore, if $\Pi$ is an Eulerian embedding of a bead-rooted graph $(G, \pi(W_\omega))$ such that $\Pi$ has no cuff separating curve, then up to a factor of $2$ the representativity of $\Pi$ is the same as the representativity of the underlying undirected graph. Thus, if $\fw(\Gamma)>0$ then this allows us to use results for embedded undirected graphs which require high representativity. 
\end{remark}

    Note here that if $\Pi$ is an embedding of a bead-rooted digraph of representativity at least $2$ then $\Pi$ must be a $2$-cell embedding. Otherwise there would be an \emph{essential}---not nullhomotopic---curve contained in a face and therefore the representativity would be $0$.

The following is the main result of this section. In his thesis \cite[Theorem 9.2]{Johnson2002}, Johnson established a similar result for digraphs without roots. 
As we do need the variant for rooted digraphs, we include a proof here. 

\begin{theorem}\label{thm:high-rep}%
    Let $\Sigma$ be a surface, possibly with boundary, other than the plane or sphere. Let $\xi_1, \dots, \xi_k$ be the cuffs of $\Sigma$. 
    Let $\bar G_1 \coloneqq (G_1, \pi_1(W_{\omega_1}))$  and $\bar G_2 \coloneqq (G_2, \pi_2(W_{\omega_2}))$ be bead-rooted Eulerian digraphs with beads $W_1,W_2$ respectively, and let $\Pi_1 \coloneqq (\Gamma_1, \nu_1, \omega_1)$ and $\Pi_2 \coloneqq (\Gamma_2, \nu_2, \omega_2)$ be Eulerian embeddings of $\bar G_1$ and $\bar G_2$ in $\Sigma$.  

    For $1 \leq i \leq k$ and $j=1,2$ let $F^j_i$ be the set of edges $e \in E(G_i)$ with $\nu_i(e) \in \xi_i$. Let $F_1 \coloneqq \bigcup \{ F^1_i \sth 1 \leq i \leq k \}$ and $F_2 \coloneqq \bigcup \{ F^2_i \sth 1 \leq  i \leq k \}$. 
    
    Suppose that there is a bijection $\beta \sth F_1 \rightarrow F_2$ between $F_1$ and $F_2$ such that for all $1 \leq i \leq k$ and $e \in F_1$ we have $e \in F^1_i$ if and only if $\beta(e) \in F^2_i$ and for all $e, e' \in F_1$, $\pi_1(e) < \pi_1(e')$ if and only if $\pi_2(\beta(e)) < \pi_2(\beta(e'))$. Finally,  for all $e \in F_1$, $e$ is an in-edge if and only if $\beta(e)$ is an out-edge.

    Then there is a constant $c = c(G_1, F_1, \Sigma)$ such that if $\fw(\Pi_2) > c$ then $(G_1, \pi_1(W_1))$ has a strong rooted immersion into~$(G_2, \pi_2(W_2))$.
\end{theorem}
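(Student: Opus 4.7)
I would prove the theorem by reducing to the analogous high-representativity statement for undirected graphs on surfaces (with boundary) from \cite{GMVII}, applied to the underlying undirected graphs of $G_1$ and $G_2$, and then lifting the resulting topological model back to a strong directed immersion using the Eulerian structure of $\Pi_2$ via \cref{lem:vf-path-euler-paths,lem:vf-path-euler-circle}. The constant $c$ will combine $|V(G_1)|+|F_1|$ with the representativity threshold required in \cite{GMVII} and the Euler genus of $\Sigma$. As a preliminary bookkeeping step I would use $\Pi_1$ to record, for each cuff $\xi_i$, the cyclic sequence of root edges in $F_i^1$ together with its in/out pattern, and fix an arbitrary Eulerian embedding of the rest of $G_1$ in a small region of $\Sigma$.

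Next, provided $\fw(\Pi_2) > c$, I apply the boundary-respecting variant of the undirected high-representativity embedding theorem: the underlying undirected graph of $G_1$ admits a topological minor model in that of $G_2$ in which the branch vertices lie in pairwise disjoint neighbourhoods of vertices or faces of $\Pi_2$, and the branch arcs follow pairwise internally vertex-disjoint walks in the vertex-face graph $G_2^*$ of \cref{def:vertex-face-graph}. The bijection $\beta$ can be realised on the cuffs because it preserves both the cuff-membership and the cyclic order $\pi_1,\pi_2$, and because the absence of short cuff-surrounding, cuff-shortening, and cuff-grouping curves (guaranteed by the representativity bound through \cref{def:representativity}) rules out any cheap topological obstruction to matching the root edges according to $\beta$.

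The lift from undirected to directed proceeds edgewise. For each branch arc -- a vertex-face walk $P$ in $G_2^*$ connecting two vertices of $G_2$ -- \cref{lem:vf-path-euler-paths} supplies two \emph{edge-disjoint} directed paths in $G_2$, one in each orientation, both contained in the subdigraph spanned by the faces on $P$. For an edge $e=(u,v)$ of $G_1$ I pick the $u$-to-$v$ representative. Because distinct branch arcs come from internally vertex-disjoint walks, their supporting face-sets intersect only in branch-vertex faces, and the chosen directed paths are therefore globally edge-disjoint; \cref{lem:vf-path-euler-circle} handles closed arcs in the same way. Strongness follows because the branch-vertex representatives are confined to small neighbourhoods of the designated vertices or faces, which the chosen directed paths avoid by construction. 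For the root edges, the parity hypothesis ``$e$ is an in-edge iff $\beta(e)$ is an out-edge'' is exactly what is needed to select consistently, from each oppositely oriented pair of boundary-meeting paths produced by \cref{lem:vf-path-euler-paths}, the one representative that traverses $\beta(e)$ in the orientation compatible with the rooted immersion of $\bar G_1$ into $\bar G_2$; the cyclic-order clause of $\beta$ ensures that different root edges land on $\xi_i$ in the correct order.

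The main obstacle is the boundary-aware form of the GMVII embedding result invoked in the second step: it must produce a topological model that realises $\beta$ on every cuff, and not merely an arbitrary topological embedding. To obtain it, I would cap each cuff with a disc carrying a small wheel-like gadget encoding the cyclic order of its root edges, apply the closed-surface result of \cite{GMVII} to the capped surface $\hat\Sigma$, and then use \cref{cor:Menger_on_a_disc} and \cref{lem:boundary_linked_Menger_for_embeddings_in_cylinder} inside a cylindrical neighbourhood of each cuff to reroute the resulting topological model so that the boundary matching agrees with $\beta$. This rerouting step is where the Eulerian assumption is essential, since the Menger-type equality for the cylinder that drives it fails outside the Eulerian setting; it is also the step that distinguishes our statement from Johnson's surface-without-boundary version in \cite{Johnson2002}.
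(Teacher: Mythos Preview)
Your high-level strategy matches the paper's: reduce to the boundary-rooted minor theorem of \cite{GMVII}, then lift the undirected model to a directed strong immersion via \cref{lem:vf-path-euler-paths} and \cref{lem:vf-path-euler-circle}, handling the cuffs with a cylindrical Menger argument. However, there is a genuine gap in the lifting step, and it is precisely the gap that the paper's main technical device is designed to close.

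The problem is your edge-disjointness claim. You assert that because distinct branch arcs are internally vertex-disjoint in $G_2$ (equivalently in $G_2^*$), ``their supporting face-sets intersect only in branch-vertex faces, and the chosen directed paths are therefore globally edge-disjoint.'' Neither statement follows. Two internally vertex-disjoint undirected paths in $G_2$ can run along opposite sides of the same face, so their induced vertex--face walks share that face; and even when the face-sets are disjoint, \cref{lem:vf-path-euler-paths} produces directed paths in the \emph{boundaries} of the faces on the walk, and a single edge of $G_2$ lies on the boundary of two faces, one of which may belong to the walk for $e$ and the other to the walk for $e'$. So the directed representatives you extract can collide on edges of $G_2$ even when the undirected model is perfectly vertex-disjoint. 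Applying \cite{GMVII} to the underlying graph of $G_1$ directly does not give you enough separation to avoid this.

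The paper fixes this by \emph{not} applying \cite{GMVII} to $G_1$ but to an auxiliary undirected graph $H$ obtained from $G_1$ by replacing each non-bead vertex $v$ with a gadget $H_v$ of five nested cycles (with ports on the four sides) and replacing each edge of $G_1$ by $g$ parallel edges between the corresponding ports, where $g$ bounds the number of homotopy classes of simple arcs with fixed endpoints in $\Sigma$. A boundary-rooted minor model $\phi$ of $H$ in $G_2^2$ then yields, for each edge $(u,v)$ of $G_1$, $g$ pairwise disjoint undirected paths in $G_2$; by pigeonhole two of them are homotopic and bound a disc in $\Sigma$. The directed path $F_{u,v}$ is then found \emph{inside this disc}, and since the discs for different edges of $G_1$ come from disjoint pieces of the model of $H$, the directed paths are edge-disjoint by construction. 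The five-cycle structure of each $H_v$ similarly provides the room to route the four incident directed paths edge-disjointly to a common degree-$4$ vertex $v'$, which gives strongness. Your sketch lacks any analogue of this thickening, and without it the lifting step does not go through.

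Your treatment of the boundary (cap, apply closed-surface GMVII, then reroute via \cref{lem:boundary_linked_Menger_for_embeddings_in_cylinder}) is close in spirit to what the paper does, though the paper works directly with the with-boundary form of \cite[(9.1)]{GMVII} and uses \cref{lem:isolated} to carve off a cylindrical collar $G_2^1$ around each cuff before applying it to the interior $G_2^2$; the final linkage from $C_i^\star$ to the root edges on $\xi_i$ is then exactly the application of \cref{lem:boundary_linked_Menger_for_embeddings_in_cylinder} you anticipate.
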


To prove the theorem we need some preparation.
We  need Theorem (9.1) in \cite{GMVII}. To this extent let $\Sigma$ be a surface.
Let $G_1$ and $G_2$ be undirected graphs and let  $(U_1, \nu_1)$ and 
$(U_2, \nu_2)$ be embeddings of $G_1$ and $G_2$, respectively, in $\Sigma$  
such that $\nu_1(V(G_1)) \cap \bd(\Sigma) = \nu_2(V(G_2)) \cap \bd(\Sigma)$ and for each $v \in V(G_1) \cap \nu_1^{-1}(\bd(\Sigma))$ we have $\nu_1(v) = \nu_2(v)$.  

We say that $G_1$ is a \emph{boundary rooted minor} of $G_2$ if there is a minor model (see \cite{GMVII} for a definition) $\phi$ of $G_1$ in $G_2$ such that $\phi(v)$ contains $v$ for all $v \in V(G_1) \cap \nu_1^{-1}(\bd(\Sigma))$.

\begin{theorem}[$ ${\cite[(9.1)]{GMVII}}]\label{thm:GMVII-9-1}
   Let $\Sigma$ be a surface other than the sphere, disc, or cylinder.
   Let $G_1$ and $G_2$ be undirected graphs and let  $(U_1, \nu_1)$ and $(U_2, \nu_2)$ be embeddings of $G_1$ and $G_2$, respectively, in $\Sigma$ such that $\nu_1(V(G_1)) \cap \bd(\Sigma) = \nu_2(V(G_2)) \cap \bd(\Sigma)$ and $\nu_1(v) = \nu_2(v)$ for each $v \in V(G_1) \cap \nu_1^{-1}(\bd(\Sigma))$.

    There is a constant $k = k_{\ref{thm:GMVII-9-1}}>0$ depending on $\Sigma$, $G_1$, and $(U_1, \nu_1)$ such that if $(U_2, \nu_2)$ has representativity at least $k$ then $G_1$ is a boundary rooted minor of $G_2$. 
\end{theorem}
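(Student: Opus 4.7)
The plan is to derive the theorem from Theorem~\ref{thm:GMVII-9-1} applied to (undirected) vertex-face graphs of $\Pi_1$ and $\Pi_2$, and then to lift the resulting boundary-rooted minor model to a strong rooted immersion using Lemmas~\ref{lem:vf-path-euler-paths} and~\ref{lem:vf-path-euler-circle}, which exist precisely for this purpose. First, I would subdivide every boundary edge of both $\bar G_1$ and $\bar G_2$ by inserting a new vertex at the point where $\nu_i$ currently draws the edge on the cuff, obtaining Eulerian embeddings $\Pi_1', \Pi_2'$ of subdivided graphs in $\Sigma$, and form their vertex-face graphs $H_1, H_2$ (Definition~\ref{def:vertex-face-graph}), inheriting the embeddings from $\Pi_1', \Pi_2'$. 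The subdivision vertices now sit on $\bd(\Sigma)$; by the hypothesis that $\beta$ preserves cuffs and their cyclic orders, the subdivision-vertex sets of $H_1$ and $H_2$ can be identified point by point along each $\xi_i$, placing us in the boundary-rooted setting required by Theorem~\ref{thm:GMVII-9-1}.

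To make the undirected reduction powerful enough, I would enrich $H_1$ further by replacing each non-bead vertex $v$ of $G_1$ by a small rotation gadget (for instance, a $4$-cycle embedded in a tiny disc around $\nu_1(v)$ attached to the incident face-vertices in the Eulerian cyclic order at $v$), and attach at each subdivided cuff vertex a flag recording whether it is an in- or out-edge. This enriched $H_1$ has the property that any minor model of $H_1$ in $H_2$ must (i) use at least one vertex-type vertex of $H_2$ in the branch set of each gadget, and (ii) respect, around each cuff, the order and the in/out pattern encoded by $\beta$. Setting $c$ to the constant $k_{\ref{thm:GMVII-9-1}}(H_1,\Sigma)$ from Theorem~\ref{thm:GMVII-9-1}, suitably rescaled by an absolute constant to translate $\fw(\Pi_2)$ into representativity of $H_2$, the hypothesis $\fw(\Pi_2)>c$ delivers a boundary-rooted minor model $\phi$ of $H_1$ in $H_2$.

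To extract the strong rooted immersion $\gamma$ I would define $\gamma(v)$, for each $v\in V(G_1)\setminus W_1$, to be a vertex of $G_2$ in the branch set $\phi(v)$ (which is non-empty by~(i) above), and send each bead $w\in W_1$ to the unique bead of $W_2$ adjacent to $\beta(\rho(w))$. For each directed edge $e=(u,v)\in E(G_1)$ the model yields an internally disjoint vertex-face path $P_e$ in $H_2$ between the branch sets of $u$ and $v$. Lemma~\ref{lem:vf-path-euler-paths} applied to the faces appearing on $P_e$ produces, in the union of the boundaries of those faces, two directed linear paths in $G_2$, one linking $\gamma(u)$ to $\gamma(v)$ and one in the opposite direction; I would let $\gamma(e)$ be the one matching the direction of $e$. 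For boundary edges $e\in F_1$ incident to a bead, an analogous construction is performed, using the flag attached in $H_1$ to select the correct direction and the identification along $\beta$ to pin down the first/last edge of $\gamma(e)$ as $\beta(e)$. In the degenerate situation where the image of an edge of $H_1$ is realized by a cuff-surrounding closed walk in $H_2$, Lemma~\ref{lem:vf-path-euler-circle} is invoked to extract a directed circle in $G_2$ of the required orientation.

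The disjointness properties of the minor model translate to the strong-immersion conditions: vertex-disjointness of the branch sets together with internal disjointness of the paths $P_e$ implies that the face sets used for distinct edges of $G_1$ meet only at vertex-type vertices of $H_2$ lying in common branch sets (i.e.\ at $\gamma(V(G_1))$), so the directed paths $\gamma(e)$ are pairwise edge-disjoint and no $\gamma(w)$ occurs as an internal vertex of any $\gamma(e)$. The boundary-rooting and the ordering and in/out-flipping hypotheses on $\beta$ guarantee that $\gamma$ respects the roots and the bead structure, hence is a strong rooted immersion of $\bar G_1$ into $\bar G_2$. The hardest part will be the orientation-matching at each $\gamma(v)$: at a degree-$4$ Eulerian-embedded vertex the two in-edges and two out-edges must appear in the cyclic pattern \emph{in, out, in, out}, and Lemma~\ref{lem:vf-path-euler-paths} a priori offers both directions along each $P_e$. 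This is where the rotation gadgets injected into $H_1$ are indispensable: they force $\phi$ to transport the Eulerian rotation at $v$ faithfully into $\Pi_2$, so the choices of direction across the four edges incident to $v$ are compatible with the $2$-cell Eulerian structure of $\Pi_2$ around $\gamma(v)$, and likewise around each cuff the in/out flip prescribed by $\beta$ is automatically realised.
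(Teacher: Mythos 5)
Your proposal does not prove the statement it was assigned. Theorem~\ref{thm:GMVII-9-1} is a purely undirected result — Robertson and Seymour's (9.1) from Graph Minors VII — asserting that if two undirected graphs are embedded in a fixed surface $\Sigma$ with the same vertex set on the boundary, and the second embedding has sufficiently high representativity (depending on $\Sigma$, $G_1$, and the first embedding), then $G_1$ is a boundary rooted minor of $G_2$. In the paper this theorem is not proved at all; it is imported verbatim as a citation and used as a black box. Your very first sentence, ``The plan is to derive the theorem from Theorem~\ref{thm:GMVII-9-1} applied to\dots'', makes the argument circular: you invoke the statement you were asked to establish as your main tool, so nothing in the proposal constitutes a proof of it. A genuine proof would have to reproduce (or replace) Robertson–Seymour's surface-homotopy machinery — essentially, that high representativity forces every homotopy class of curves needed by the fixed embedding of $G_1$ to be realisable disjointly in $G_2$ — none of which appears in your text.

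What you have actually sketched is the paper's Theorem~\ref{thm:high-rep}: lifting a boundary rooted minor model of an auxiliary undirected graph (built from $G_1$ by vertex gadgets) to a strong rooted immersion of the Eulerian-embedded digraph, via the vertex-face graph and Lemmas~\ref{lem:vf-path-euler-paths} and~\ref{lem:vf-path-euler-circle}, together with cuff-linking arguments near the boundary. That is a different statement, and your outline for it is in the spirit of the paper's own proof of Theorem~\ref{thm:high-rep} (which replaces each degree-$4$ vertex by a concentric-cycle gadget $H_v$ and handles the cuffs with \cref{lem:isolated} and \cref{lem:boundary_linked_Menger_for_embeddings_in_cylinder}); but it cannot be accepted as a proof of Theorem~\ref{thm:GMVII-9-1} itself.
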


We also need the next lemma, see e.g.~\cite[Proposition 5.5.10]{MoharT2001} or \cite{GMVII}.
\begin{lemma}[$ ${\cite[Proposition 5.5.10]{MoharT2001}}]\label{lem:isolated}
    Let $(U, \nu)$ be an embedding of an undirected graph $G$ in a surface $\Sigma$ of representativity $k \geq 2$. 
     Let $v \in V(G)$ be a vertex. Then there are $k' \coloneqq \lfloor \frac k2 \rfloor$ pairwise vertex disjoint cycles $C_1, \dots, C_{k'}$ induced by $O$-arcs $F_1, \dots, F_{k'}$ such that each $F_i$ bounds a disc $\Delta(F_i) \subseteq \Sigma$ containing $\nu(v)$ in its interior and such that for $j < i$, $C_j$ is contained in the interior of $\Delta(F_i)$.
\end{lemma}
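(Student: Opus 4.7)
The plan is to use the vertex-face (radial) graph $G^*$ of the embedding and perform a BFS-style construction around $\nu(v)$. For each integer $i \geq 0$, let $B_i \subseteq \Sigma$ be the closed region consisting of all closed faces of $(U,\nu)$ whose distance from $v$ in $G^*$ is at most $i$, together with all vertices and edges of $G$ on the boundary of such faces. Each connected component of $\partial B_i$ is a closed curve that passes alternately through vertices of $G$ (at most $i$ of them on each side, by the BFS layering) and through the interiors of faces at distance exactly $i$. After a small local perturbation at each $G$-vertex, each such component becomes a simple closed curve meeting $\Gamma$ only at vertices of $G$, i.e., an $O$-arc meeting $\Gamma$ in at most $2i$ vertices.

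I would argue by induction on $i = 1, \dots, k'$ that one can select $O$-arcs $F_1, \dots, F_i$ lying in $\partial B_1, \dots, \partial B_i$ respectively, each bounding a closed disc $\Delta(F_j) \subseteq \Sigma$ with $\nu(v) \in \Delta(F_j)^\circ$, with $\Delta(F_j) \subseteq \Delta(F_{j+1})^\circ$, and such that the cycles $C_j$ of $G$ induced by $F_j$ are pairwise vertex-disjoint. For the base case $i=1$, some component $F_1$ of $\partial B_1$ separates the small neighbourhood of $\nu(v)$ from the rest of $\Sigma$ and meets $\Gamma$ in at most $2$ vertices; since $2 \leq k$, representativity forces $F_1$ to be contractible, and the disc it bounds must be the one containing $\nu(v)$.

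For the inductive step, given $F_1, \ldots, F_i$ with $i < k'$, every component $F$ of $\partial B_{i+1}$ meets $\Gamma$ in at most $2(i+1) \leq 2k' \leq k$ vertices, so by representativity it is contractible and therefore bounds a disc in $\Sigma$. Exactly one of these components, $F_{i+1}$, bounds a disc containing $\nu(v)$, and this disc necessarily contains $\Delta(F_i)$ because $\Delta(F_i) \subseteq B_i \subseteq B_{i+1}$. Vertex-disjointness of $C_{i+1}$ from the earlier cycles comes from the BFS layering: a vertex of $C_{i+1}$ must lie on a face of $G^*$-distance exactly $i+1$ from $v$, so it cannot also appear as a vertex of any $C_j$ with $j \leq i$, whose vertices lie strictly deeper.

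The main obstacle is the topological rigour: one has to rule out that a component of $\partial B_i$ is a non-simple closed curve (a ``pinch'') or that $B_i$ is not homeomorphic to a disc. Both pathologies would yield a simple closed curve meeting $\Gamma$ in at most $2i \leq k$ vertices that is either noncontractible or bounds a disc not containing $\nu(v)$. In either case, after a standard small perturbation off vertices of $G$, one obtains a noncontractible closed curve intersecting $\Gamma$ in at most $k$ vertices, contradicting $\fw \geq k$. This rules out all the degenerate configurations and lets the induction proceed cleanly.
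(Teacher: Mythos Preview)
The paper does not prove this lemma but cites it from Mohar--Thomassen; your BFS-in-the-radial-graph strategy is indeed the standard one used there. However, your central counting claim---that each component of $\partial B_i$ meets $\Gamma$ in at most $2i$ vertices---is false: it conflates the \emph{radius} of the BFS ball with the \emph{length} of its boundary. In a large toroidal grid, for instance, $\partial B_i$ is a cycle with on the order of $8i$ vertices, and in general the boundary can be arbitrarily long relative to $i$. So your appeal to representativity to force each boundary component to be contractible does not go through as written, and neither does the analogous claim in your final paragraph about pinches yielding curves meeting $\Gamma$ in at most $2i$ vertices.

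The correct argument replaces this step: if $B_i$ fails to be a disc for some $i \le \lfloor k/2 \rfloor$, then in the universal cover two distinct lifts of $v$ lie at radial distance at most $2i$ from one another (since their radius-$i$ balls overlap), and a shortest radial path between them projects to a noncontractible closed curve through $\nu(v)$ meeting $G$ in at most $i < k$ vertices, contradicting $\fw \ge k$. Equivalently, without covers, one takes any noncontractible curve inside $B_i$ and shortcuts it through $v$ using two radial geodesics of length $\le i$. The short noncontractible witness is thus not $\partial B_i$ itself but a curve routed through $v$. Once every $B_i$ is known to be a disc, the nested $O$-arcs and (with a little care about parity of radial distance) the vertex-disjointness of the induced cycles follow.
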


We are now ready to prove \cref{thm:high-rep}.

\begin{proof}[of \cref{thm:high-rep}]
    Let $\Sigma, \zeta_1, \dots, \zeta_k, \bar G_1, \bar G_2$,  and $\beta$ be as in the statement of the theorem. For convenience we set $W_1 := W_{\omega_1}$ and $W_2 := W_{\omega_2}$. 

    Note that for every surface $\Sigma'$ there is a number $g(\Sigma')$ such that  any set of at least $g$ curves in $\Sigma'$ between the same endpoints which are disjoint except for their endpoints contains at least two curves which are homotopic in $\Sigma'$. (This follows from the well-known fact that there are only finitely many non-homotopic curves with fixed endpoints in any fixed surface). Let $g \coloneqq g(\Sigma)$.  

    We define $c := 2\cdot g \cdot w + k\cdot |F_1| \cdot g +  k_{\ref{thm:GMVII-9-1}}$.
    
    Let $H$ be the undirected graph obtained from $G_1$ as follows. 
    Let $v \in V(G_1) \setminus W_1$. Without loss of generality we assume that $\deg(v) = 4$. The case for $\deg(v)$ is analogous. Let $e_1 = (u_1, v), e_2 = (v, u_2), e_3 = (u_3,v)$, and $e_4 = (v, u_4)$ be the edges incident to $v$. 
    We replace $v$ and its incident edges by the following gadget $H_v$ (see \cref{fig:high-rep} for an illustration).
    
    For $1 \leq i \leq 5$ let
         \[
         C_i \coloneqq (l^i_1, \dots, l^i_g, a^i_1, ..., a^i_5,  t^i_1, \dots, t^i_g, a^i_6, ..., a^i_{10}, r^i_1, \dots, r^i_g, a^i_{11}, ..., a^i_{15}, b^i_1, \dots, b^i_g, a^i_{16}, ..., a^i_{20})
         \]
        be vertex-disjoint cycles on $4g+20$ vertices each. 
        $H_v$ is obtained from $\bigcup \{ C_i \sth 1 \leq i \leq 5 \}$ by adding the following edges: 
        for all $1 \leq i \leq 4$ and all $1 \leq j \leq g$ we add the edges $\{ l^i_j, l^{i+1}_j \}, \{ t^i_j, t^{i+1}_j \}, \{ t^i_j, t^{i+1}_j \}, \{ t^i_j, t^{i+1}_j \}$. Furthermore, 
        for all $1 \leq i \leq 4$ and all $1 \leq j \leq 20$ we add the edges $\{ a^i_j, a^{i+1}_j \}$.
    
     We refer to the sets $\{ t^5_1, \dots, t^5_g \}$, $\{ r^5_1, \dots, r^5_g \}$, $\{ b^5_1, \dots, b^5_g \}$, and $\{ l^5_1, \dots, l^5_g \}$ as the ports of $H_v$ and refer to them as the top, right, bottom, and left port, respectively.

    Now the graph $H$ is obtained from $G_1$ by replacing each vertex $v \in V(G_1) \setminus W_1$ by the graph $H_v$ and each edge $(u, v) \in E(G_1)$ by $g$ edges connecting one port of $H_u$ and one port of $H_v$ such that the embedding of $G_1$ can be extended to an embedding of $H$ in the obvious way. 
    Finally, for every $v \in W_1$ we duplicate the incident edge $g$ times and connect it to the port of its neighbour in the obvious way.
    \begin{figure}
        \centering
        \includegraphics[height=8cm]{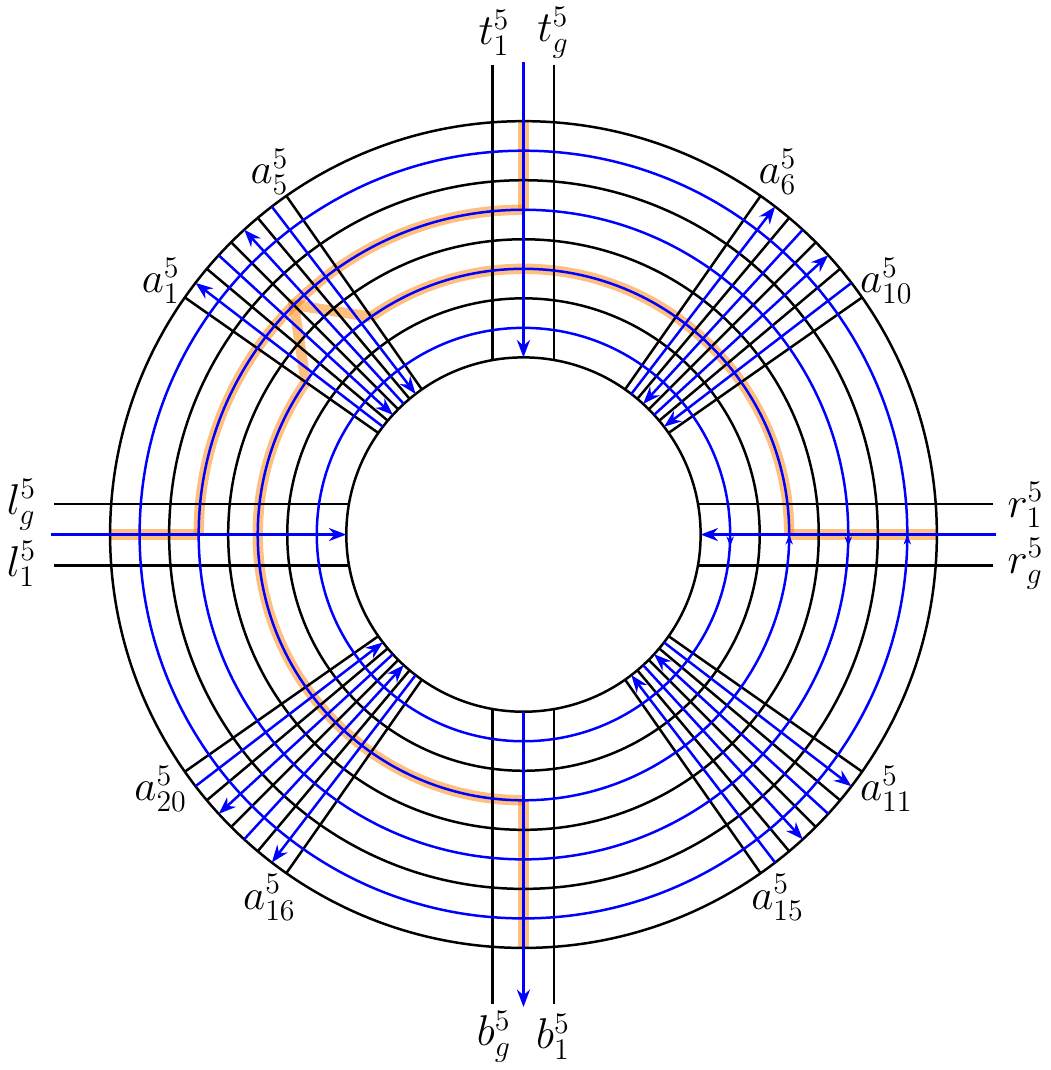}
        \caption{Gadget for the high representativity case. The $C_i$ are represented by the black circles. The paths $F_{u_1, v}, ...$ are marked in orange.}
        \label{fig:high-rep}
    \end{figure}
    
    We now consider $G_2$ and define two subgraphs $G_2^1$ and $G_2^2$ as follows. 
    Let $w = |W_2|$. %
    Let $w_i \coloneqq |F_i|$.
    We require that the representativity of $\Pi_2$ is at least $2\cdot g \cdot w$. Thus it follows from \cref{lem:isolated} that for each cuff $\zeta_i$ of $\Sigma$ there are $g\cdot w$
    pairwise vertex disjoint cycles $C_i^1 \dots C_i^{g \cdot w}$ such that each cycle bounds a disc $\Delta(C_i^j)$ in $\hat\Sigma$ that contains $\zeta_i$ in its interior and so that for $j < \ell$ the cycle $C_i^\ell$ is contained in the interior of $\Delta(C_i^j)$. 
    Furthermore, by definition of representativity, there are 
    $w_i$ pairwise edge-disjoint paths $P_i^1, \dots, P_i^{w_i}$ between $F_i$ %
    and $V(C_i^1)$.
    By the usual arguments as they are also applied in \cite{GMVII}, we may assume that 
    $P_i^j \cap C_i^{j'}$ is a linear path, for all $1 \leq j \leq w_i$ and $1 \leq j' \leq g \cdot w_i$. 
    It is easily seen that in $\bigcup_{1 \leq j \leq g\cdot w_i} C_i^j \cup \bigcup_{1 \leq j \leq w_i} P_i^j$ there is a cycle $C_i^\star$ of length at least $w_i\cdot g$ such that $C_i^\star$ bounds a disc $\Delta(C_i^\star)$ in $\hat\Sigma$ containing $\zeta_i$ in its interior and $w_i$ vertex disjoint paths between $W_i$ and $C_i^*$ (see \cref{fig:surface-high-reb-boundary}).%
    Furthermore, we can choose a set $S_i$ of $w_i\cdot g$ vertices on $C_i^\star$ and partition $S_i$ into $w_i$ parts of $g$ vertices each such that each $S_i$ induces a path in $C_i^\star$ not containing any vertex of $S_j$, $j\neq i$. Furthermore, this can be chosen such that there are $w_i$ pairwise vertex disjoint paths each linking one part of $S_j$ to a vertex in $W_i$.

    Let $G_2^1$ be the subgraph of $G_2$ induced by $\Delta(C_i^*)$ and let $G_2^2 \coloneqq G_2 - (V(G_2^1)\setminus V(C_i^\star))$.
    By construction, $G_2^1 \cap G_2^2 = C_i^\star$.

    Let $(U_2^2, \nu_2^2)$ be the embedding obtained from the restriction $\restr{(U_2, \nu_2)}{G_2^2}$ of the embedding of $G_2$ to $G_2^2$ by enlarging each cuff $\zeta_i$ so that it meets $(U_2^2, \nu_2^2)$ at $F_i$ (or, equivalently, move the edges in $F_i$ onto the cuff $\zeta_i$ in the obvious way). 

    Note that the representativity of the resulting undirected embedding $(U_2^2, \nu_2^2)$ is still at least $c' \coloneqq c - k\cdot |F| \cdot g$.
    We require that $c'$ is at least the constant $k_{\ref{thm:GMVII-9-1}}$ required for \cref{thm:GMVII-9-1}. Thus, by \cref{thm:GMVII-9-1} $H$ is a boundary rooted minor of $G_2^2$. Let $\phi$ be a boundary rooted model of $H$ in $G_2^2$. 

    We now construct a strong immersion of $H$ into $G_2^2$ as follows. 
    Each edge $(u, v) \in G_1$ is replaced in $H$ by $g$ paths between a port of $H_u$ and a port of $H_v$.  Let $P$ be such a path and let $P' \coloneqq \phi(P)$. It is easily seen that there is a $U_2$-normal curve between the endpoints of $P$ which intersects $U_2$ only at vertices of $\phi(P)$. 
    Thus, there must be two such curves $F^1_{u,v}$ and $F^2_{u,v}$ which are homotopic. By \cref{lem:vf-path-euler-paths}, there is a directed path $F_{u,v}$ from the inner port of $u$ to the inner port of $v$. 
    Similarly, in the model $\phi(H_v)$ of each gadget $H_v$ there are directed paths as indicated in \cref{fig:high-rep}. 
    We can now choose a vertex $v'$ in $H_v$ of degree $4$ and $4$ edge-disjoint paths connecting the four paths $F_{u_1,v}, ..., F_{v, u_4}$ to $v'$. We choose $v'$ as the image of $v$ in the immersion.

    What is left to do is to connect the vertices on $C_i^\star$ to the vertices of $G_1$ on the cuff $\zeta_i$, for each $1 \leq i \leq k$. 
    By construction, $G_2^1$ is embedded in the cylinder with boundary $C_1^\star$ and $\zeta_i$. By construction, if we choose one vertex $x_i^j$ from each part $S_i^j$ of $C_i^\star$ then there is a linkage of order $w_i$ in $G_2^1$ between $F_i$ and $\{ x_i^1, \ldots, x_i^{w_i} \}$.
    Thus we can apply \cref{lem:boundary_linked_Menger_for_embeddings_in_cylinder} to obtain a directed linkage of order $w_i$ linking $F_i$ to the set of endpoints of the paths representing the edges emerging from $C_i^*$. This concludes the proof. 
\end{proof}

\begin{figure}
    \centering
    \includegraphics[height=6cm]{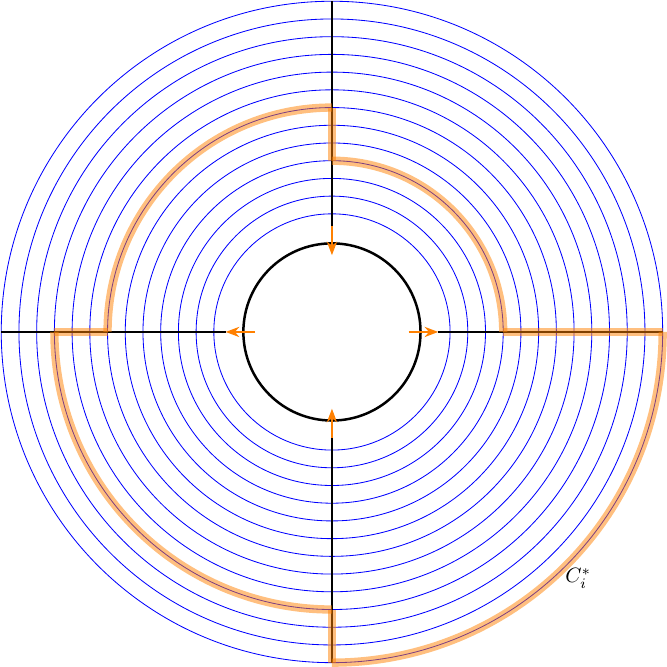}
    \caption{Construction of the cycle $C_i^*$ in \cref{thm:high-rep}}
    \label{fig:surface-high-reb-boundary}
\end{figure}

\section{Strong Immersions for the Disc} \label{sec:disc}
This section deals with the start of the grand induction over~$\Sigma$ and~$k$. In \cref{sec:high-rep} we discussed the case when our embeddings are of high representativity. If they do not have high representativity, we either find reducing curves along which we can cut to reduce the genus of~$\Sigma$ ultimately ending at the disc, or we find cuff separating curves along which we can cut out a cylinder of the surface reducing~$k$. In this section, we handle one of the base cases of the grand induction: the disc case.

\subsection{Euler-embedded Graphs in a Disc}
We start by proving the disc case of \cref{thm:wqo_on_surfaces_upto_beads}, i.e., using the notation of that theorem we prove the following. 

\begin{theorem}\label{thm:wqo:bead-root_for_disc}
 Let $\Delta$ be a disc and $k \in 2\N$. The class $\mathbf{G}(\Delta,k)$ is well-quasi-ordered by strong immersion.
\end{theorem}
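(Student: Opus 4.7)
The plan is to induct on $k$, with \cref{thm:wqo_bounded_carvingwidth_knitworks} and \cref{thm:4_reg_swirl} as the main engines. For the base case $k=0$, consider a hypothetical bad sequence $(G_i)_{i\in\N}$ of bead-free Euler-embeddable digraphs in $\Delta$. If some infinite subsequence has bounded carving width, view the graphs as trivial well-linked $\Omega$-knitworks over the one-element well-quasi-order and apply \cref{thm:wqo_bounded_carvingwidth_knitworks} to obtain $G_i\hookrightarrow G_j$, contradicting badness. Otherwise \cref{thm:qualitative_equivalence_of_tw_and_ebw} and \cref{thm:4_reg_swirl} produce in some $G_j$ a strongly immersed swirl $\SSS_{n,n}$ with $n$ large enough that the planar Euler-embeddable graph $G_1$ strongly immerses into it (swirls are universal for planar Euler-embeddable graphs of bounded size), again a contradiction.

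For the inductive step $k\geq 2$, assume the theorem for all even $k'<k$. By induction together with Higman's \cref{thm:higman} over the $k'$-index, $\Omega\coloneqq\bigcup_{k'\leq k-2}\mathbf{G}(\Delta,k')$ is a well-quasi-order under strong immersion. Let $(\bar G_i)_{i\in\N}$ be a purported bad sequence in $\mathbf{G}(\Delta,k)$. Standard filtering makes all $\bar G_i$ share the same combinatorial rooting data (bead count, cuff ordering, in/out pattern). The proof then trichotomises. \emph{(a) Bounded carving width}: view each $\bar G_i$ as a trivial well-linked $\Omega$-knitwork and apply \cref{thm:wqo_bounded_carvingwidth_knitworks} for the contradiction. \emph{(b) High representativity along some subsequence}: a disc-analogue of \cref{thm:high-rep}, obtained by gluing a disc along the cuff to reduce to the sphere case while tracking the root edges as marked points on the glued disc, yields $\bar G_1\hookrightarrow\bar G_j$ for sufficiently large $j$. \emph{(c) Bounded representativity but unbounded carving width}: handled by the swirl-cutting argument below.

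In case (c), each $\bar G_j$ admits a strong immersion of an arbitrarily large swirl by \cref{thm:4_reg_swirl}. Since the representativity is bounded, the swirl cannot be linked to the cuff of $\Delta$ by more than $k-2$ edge-disjoint paths, so \cref{cor:Menger_on_a_disc} produces a cut-cycle $F^*_j$ of order at most $k-2$ whose bounded disc $\Delta(F^*_j)$ captures the swirl. Iterating greedily gives a laminar family $\{F_j^1,\ldots,F_j^{t_j}\}$ of such cut-cycles, with pairwise disjoint bounded discs, absorbing every large swirl. By \cref{def:induced_embedding_of_bead-rooted_graph_in_cut-cycle}, each up-stitch $\stitch(\bar G_j;\pi(X(F^\ell_j)))$ lies in $\mathbf{G}(\Delta,k')$ for some $k'\leq k-2$ and so carries an $\Omega$-type. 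Build an $\Omega$-knitwork $\GGG_j$ from $\bar G_j$ by collapsing each $\Delta(F^\ell_j)$ to a single bead labelled by its $\Omega$-type, with interface $\mu$ being the alternating order of \cref{lem:cut_cycles_are_alternating} and link $\m$ being the full set of feasible matchings $\mathfrak{M}(\rho(X(F^\ell_j)))$. The Eulerian embedding inside each $\Delta(F^\ell_j)$ together with \cref{lem:Menger_for_Euler} makes $\GGG_j$ well-linked, and since no large swirl survives outside the beads, $\ebw{\GGG_j}$ is uniformly bounded. \cref{thm:wqo_bounded_carvingwidth_knitworks} now supplies a knitwork immersion $\GGG_i\hookrightarrow\GGG_j$, and iterating \cref{thm:knitting_knitwork_immersion} knits this back to a strong immersion $\bar G_i\hookrightarrow\bar G_j$, contradiction.

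The main obstacle lies in step (c): (i) choosing the laminar family so that every sufficiently large swirl is enclosed and the cuts are pairwise non-crossing, (ii) verifying that the constructed $\GGG_j$ is genuinely well-linked, where the Eulerianness of the embedding inside each enclosed disc is indispensable (for a generic digraph of the same shape the feasible linkages could be too restricted for well-linkedness), and (iii) certifying a uniform bound on $\ebw{\GGG_j}$ after absorbing all large swirls, which requires that carving width grow only by a function of $k$ once swirls are replaced by beads of degree at most $k-2$. The disc-analogue of \cref{thm:high-rep} needed for case (b) is a side obligation, but follows from the sphere case by a standard gluing argument together with tracking the root edges.
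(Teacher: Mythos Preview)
Your overall architecture matches the paper's: induction on $k$, the base case dispatched via swirl-universality plus bounded carving width, and the inductive step handled by stitching at cut-cycles that isolate large swirls to build an $\Omega$-knitwork of bounded carving width, then invoking \cref{thm:wqo_bounded_carvingwidth_knitworks}.

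The genuine gap is in your step (c)(ii), which you correctly flag as the main obstacle but then underestimate. Citing \cref{lem:Menger_for_Euler} does not make $\mathfrak{M}(\rho(X(F^\ell_j)))$ well-linked. Well-linkedness (\cref{def:well-linked_links}) demands that for \emph{every} simultaneous bipartition of the in-edges and out-edges of the cut into size-matched halves, some feasible linkage realises the corresponding block-diagonal matching. Menger only supplies $\delta(F)$ edge-disjoint paths crossing the cut; it says nothing about which in-edge reaches which out-edge, and in an Euler-embedded disc the homotopy constraints (\cref{obs:linkages_in_euler_embedded_cylinder_are_homotopic}) make arbitrary pairings genuinely hard. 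The paper resolves this via a deep-nest argument (\cref{lem:nuclei_are_well-linked}): the cut-cycle $F$ must enclose not just a large swirl but a directed $(\kappa+1,\,3k+1)$-nest of concentric circles with \emph{alternating} orientations, and the extra $2k$ layers beyond the basic nucleus are exactly what allow rerouting at successive levels to realise any prescribed bipartition. This in turn forces the laminar family to be chosen with more care than a greedy swirl-absorption: the paper introduces \emph{bombastic nuclei} and builds a \emph{foundation} (\cref{def:foundation}) by lifting the Robertson--Seymour ring/boss machinery of \cite{GMVIII} through the linegraph, precisely so that each absorbed piece carries the required nest depth and the pieces are pairwise disjoint (\cref{cor:bombastic_nuclei_are_disjoint}).

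A minor point: your case (b) invokes representativity, but \cref{def:representativity} explicitly excludes the disc (and sphere), so there is nothing to glue to. The paper's replacement is the notion of a \emph{boundary-linked} nest (\cref{def:boundary-linked_swirl}); when such a nest exists, a direct rooted-swirl argument (\cref{lem:disc-k-immerse-in-swirl}) using a gadget that encodes the root order into the swirl gives $\bar G_1\hookrightarrow\bar G_j$, and the dichotomy is then ``boundary-linked nest'' versus ``foundation has bounded carving width'' (\cref{lem:stitched_foundation_bounded_tw}), not high versus low representativity.
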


The proof is by induction on~$k$. Note that for the disc case the partition $W_\omega$ of $W$ is always just $(W)$ and is thus uniquely defined; in particular $(G,\pi(W_\omega))$ can be written as $(G,\pi(W))$ and seen as a rooted Eulerian digraph in the usual sense, i.e., $\pi(W)$ is an order on $\rho(W)$ and the \cref{def:stitching_beadrooted} of stitching for bead-rooted graphs coincides with the standard \cref{def:stitching_std} of stitching and rooted immersions for bead-rooted graphs are simply rooted immersions. Thus we will write $(G,\pi(W))$ and vie bead-rooted graphs embedded in a disc as rooted graphs throughout this section.

\subsubsection{$\mathbf{G}(\Delta,0)$: Eulerian embedded Graphs without Roots} Recall that by \cref{obs:outline_of_plane_drawing_is_cycle} faces of Eulerian $2$-cell embeddings of graphs without cut-vertices are bounded by circles whence every edge is part of exactly two faces.

\begin{lemma}\label{thm:swirl_embedds_Eulerembeddable_graphs}
    Let $\Delta$ be a disc. There exists a function $f :\N \to \N$ such that the following holds. Let~$G \in \mathbf{G}(\Delta,0)$ be a connected Eulerian digraph of maximum degree $4$ and let~$(U,\nu)$ be an Eulerian embedding of~$G$ in~$\Delta$. Then $G \hookrightarrow \SSS_{f(n)}$ by strong immersion where $n \coloneqq \Abs{V(G)}$ .

\end{lemma}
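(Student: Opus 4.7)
The plan is to realise $G$ as a strong immersion of a sufficiently fine swirl $\SSS_{f(n),f(n)}$ with $f(n) = O(n)$, by viewing the canonical Eulerian embedding of $\SSS_{k,k}$ in the plane as a universal ``Eulerian grid'' for Eulerian planar drawings. The key structural fact is that every vertex of $\SSS_{k,k}$ is itself Eulerian embedded, i.e.\ its incident edges alternate in-out-in-out around it; consequently the local picture at an arbitrary swirl vertex matches the local picture at any degree-$4$ vertex of $G$, and this is also true for degree-$2$ vertices by suitable subdivision.

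The construction I would carry out goes as follows. After applying a homeomorphism of $\Delta$, assume that the Eulerian embedding $(U,\nu)$ of $G$ lies inside a small central region of the disc, and superimpose on $\Delta$ the canonical Eulerian embedding of $\SSS_{k,k}$ with $k = cn$ for a constant $c$ to be chosen (say $c = 10$). Reserve a disjoint $4\times 4$ block of swirl cells around $\nu(v)$ for each vertex $v \in V(G)$, which is possible because $\lvert V(G)\rvert = n$ and $k \geq 10n$. Pick a distinguished swirl vertex in the centre of each block and declare it to be $\gamma(v)$. For each edge $e = (u,v) \in E(G)$, pick a simple curve in $\Delta$ close to $\nu(e)$ that leaves the block of $u$ and enters the block of $v$ through the sides dictated by the alternating in-out pattern at $u$ and $v$ in $(U,\nu)$; then translate this curve into a directed path in $\SSS_{k,k}$ by concatenating arcs of the swirl $4$-circles along its trace. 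Because all swirl cycles are oriented uniformly, each arc can be traversed in the direction dictated by the orientation of $e$, and the alternation of in- and out-edges at $\gamma(v)$ comes for free from the Eulerian embeddings of both $v$ and $\gamma(v)$.

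The main obstacle is verifying the two conditions of strong immersion: edge-disjointness of the paths $\gamma(e)$, and the condition that no $\gamma(w)$ is an internal vertex of a path $\gamma(e')$ with $w \notin \{\tail(e'),\head(e')\}$. The second condition is handled by the $4 \times 4$ blocks: $\gamma(w)$ sits at the centre of its block, whereas routes for edges not incident to $w$ are sent along the outer boundary cycles of these blocks only, so they cannot touch the central swirl vertex. Edge-disjointness is the technically harder point; it reduces to routing the plane multigraph consisting of the curves chosen in $\Delta$ onto the swirl cells without conflicts. Since the drawing $(U,\nu)$ is planar and $f(n)$ is linear in $n$, there is ample room to send $\lvert E(G)\rvert \leq 2n$ edge-disjoint routes through the available tracks of swirl cycles, using the fact that a single swirl cycle can be used as a ``bus'' transporting several pairwise edge-disjoint paths in the same direction. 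The finest point will be dealing with faces of $(U,\nu)$ of large size, where many routes must run in parallel along a long boundary; here one inflates the corresponding corridor of the swirl by a constant factor, which is already accommodated by the choice $f(n) = 10n$.
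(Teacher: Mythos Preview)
Your intuition is right, but the argument as written has a genuine gap, and the paper takes a different and cleaner route.

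The paper does not overlay the embedding on the swirl directly. Instead it (i) splits every degree-$4$ vertex $v$ of $G$ into two vertices $v^1,v^2$ joined by a digon, distributing the four incident edges according to the Eulerian rotation at $v$; the underlying simple undirected graph $H$ of the result is planar and \emph{cubic}; (ii) invokes the Robertson--Seymour--Thomas theorem \cite{RobertsonST1994} that every planar graph on $n$ vertices is a minor of the $O(n)\times O(n)$ grid, which for cubic $H$ yields a \emph{topological} minor model, i.e.\ pairwise internally vertex-disjoint paths in the grid; (iii) passes from the grid to the swirl by replacing each grid vertex with its $4$-cycle, so that vertex-disjoint grid paths become edge-disjoint directed swirl paths automatically; and (iv) uses the uniform orientation of the swirl cells to check that at each gadget the four ports can be connected to a single swirl vertex by four edge-disjoint directed paths. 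This gives $f(n)=12n$.

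The step your sketch does not provide is precisely a substitute for (ii): a \emph{systematic} guarantee of edge-disjointness. ``There is ample room to send $\le 2n$ edge-disjoint routes'' is not an argument---that is the content of the grid-minor theorem. Following curves ``close to $\nu(e)$'' through swirl cells can force two distinct routes through the same $4$-cycle arc (for instance when many edges run along the boundary of a large face), and inflating corridors by a constant factor does not repair this without a global routing scheme. Your final paragraph acknowledges this as ``the technically harder point'' but does not resolve it. A secondary issue is the sentence ``each arc can be traversed in the direction dictated by the orientation of $e$'': in the paper this requires the explicit observation that the anticlockwise orientation of all $4$-cycles forces a specific cyclic order of the ports at each swirl vertex, which is what makes the local connector paths exist; it does not follow merely from both embeddings being Eulerian.
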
 
\begin{figure}
        \centering
        \includegraphics[height=6cm]{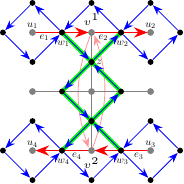}
        \caption{Illustration of the construction in the proof of \cref{thm:swirl_embedds_Eulerembeddable_graphs}.}
        \label{fig:swirl-emb}
    \end{figure}
\begin{proof}
    Let $G$ and $\Pi \coloneqq (U, \nu)$ be as in the statement of the lemma. Without loss of generality we may assume that $G$ is $4$-regular. 

    Let $G'$ be the Eulerian digraph obtained from $G$ by splitting each vertex as follows (see \cref{fig:swirl-emb} for an illustration of the construction). Let $v \in V(G)$ be a vertex with incident edges $e_1 = e_1(v), \dots, e_4 = e_4(v)$ ordered in clockwise orientation around $v$ given the embedding $\Pi$. As $\Pi$ is an Eulerian embedding, we may assume (up to a cyclic rotation) that $e_1$ and $e_3$ have $v$ as their head whereas $e_2$ and $e_4$ have $v$ as their tail. 
    We split $v$ into two new vertices $v^1, v^2$ with edges between them in both directions such that $v^1$ is incident to $e_1$ and $e_2$, and $v^2$ is incident to $e_3$ and $e_4$. See \cref{fig:swirl-emb} where the new edges are marked in light red.

    The Eulerian embedding $\Pi$ of $G$ can be modified to a Eulerian embedding $\Pi'$ of $G'$ in the obvious way. Note that in $\Pi'$, for all $v \in V(G)$, the edges $e_1(v)$, $e_2(v)$,  and the two new edges between $v^1, v^2$ appear in this order in clockwise orientation around $v^1$ and $e_3(v), e_4(v)$, and the two edges between $v^1, v^2$ appear in this order in clockwise orientation around $v^2$.
    
    Let $H$ be the undirected graph obtained from the underlying undirected graph of $G'$ by eliminating the duplicate edges between the copies $v^1$ and $v^2$ of vertices $v \in V(G)$. 

    Now $H$ is a planar graph of maximum degree $3$. Let $n \coloneqq |V(H)| = 2\cdot |V(G)|$. By \cite{RobertsonST1994}, $H$ is a minor of the undirected $12n \times 12n$ grid $\GGG \coloneqq \GGG_{12n\times 12n}$. As $H$ is a cubic graph, $H$ is a topological minor of $\GGG$. Let $\phi$ be a minor model of $H$ in $\GGG$ witnessing this. That is, $\phi$ maps each vertex $u \in V(H)$ to a vertex $\phi(u) \in V(\GGG)$ and each edge $e = \{ u,v \} \in E(H)$ to a linear path $\phi(e) \subseteq \GGG$ connecting $\phi(u)$ and $\phi(v)$. Furthermore, the images of distinct edges are vertex disjoint except possibly for a common endpoint. 

    Now consider the swirl $\SSS$ obtained from $\GGG$ as in \cref{def:swirl}. That is, $\SSS$ is obtained from $\GGG$ by replacing each vertex $v$ by a directed four-cycle $C_v$ oriented in anti-clockwise direction such that if $\{u,v\}$ is an edge in $\GGG$ then $C_v$ and $C_u$ share a vertex (see \cref{def:swirl} for details and \cref{fig:swirl} for an illustration). 

    By construction, every linear path $P \subseteq \GGG$ induces a sequence of four-cycles which forms a strongly connected subgraph of $\SSS$ in the obvious way.
    Let $v \in V(G)$ and let $x_1 = x_1(v) \coloneqq \phi(v^1)$ and $x_2 = x_2(v) \coloneqq \phi(v^2)$ using the above construction of $G'$ and $H$. 
     
    For $1 \leq i \leq 4$ let $P_i = P_i(v) \coloneqq \phi(e_i)$ and let $u_i = u_i(v)$ be the vertex on $P_i$ adjacent to $v$. Let $P_v \coloneqq \phi(\{ v^1, v^2 \})$. As there is an edge between $v$ and $u_i$ in $\GGG$, for each $1 \leq i \leq 4$, the cycles $C_{u_i}$ and $C_v$ share a vertex $w_i$. 
    Furthermore, let $y_1$ be the vertex on $P_v$ incident to $x_1$  and let $y_2$ be the vertex on $P_v$ incident to $x_2$. 
    Let $z_1 = z(v)$ be the vertex $C_{x_1}$ and $C_{y_1}$ have in common and let $z_2 = z_2(v)$ be the vertex $C_{y_2}$ and $C_{x_2}$ have in common (note that $z_1 = z_2$ is possible). 
    
    By construction of $\SSS$, each vertex was replaced by a four-cycle in anti-clockwise orientation. Thus, on $C_{v_1}$ the vertices $w_1, z_1, w_2$ appear in this order and on $C_{v_2}$ the vertices $w_3, z_2, w_4$ appear in this order. 
    This implies that in $\bigcup \{ C_a \sth a \in V(P_v) \}$ there are edge disjoint paths $Q_1(v), \dots, Q_4(v)$ such that $Q_1(v)$ and $Q_3(v)$ link $w_1$ and $w_3$ to $z_1$, respectively, and $Q_2(v)$ and $Q_4(v)$ link $z_1$ to $w_2$ and $w_4$, respectively. 
    
    We are now ready to construct the strong immersion $\iota$ of $G$ into $\SSS$. For each $v \in V(H)$, $\iota$ maps $v$ to $z_1(v)$. 
    Now let $e = (u,v) \in E(G)$ and let $1 \leq j_1, j_2
\leq 4$ be such that $e = e_{j_1}(u)$ and $e = e_{j_2}(v)$. 
Then we can choose a directed path $P_e$ in $\bigcup \{ C_a \sth a \in V(\phi(e)) \}$ from $w_{j_1}(u)$ to $w_{j_2}(v)$. $P_e$ together with $Q_{j_1}(u)$ and $Q_{j_2}(v)$ yields a path from $z_1(u)$ to $z_1(v)$ and for $e \not= e' \in E(G)$ these paths are edge-disjoint. 

Thus $\iota$ is a strong immersion of $G$ into $\SSS$. Setting $f(n) \coloneqq 12n$ completes the proof.

\end{proof}

We derive the base case for the inductive proof of \cref{thm:wqo:bead-root_for_disc}. 

\begin{theorem}\label{thm:wqo_planar}
        Let $\Delta$ be a disc. Then $\mathbf{G}(\Delta,0)$ is well-quasi-ordered by strong immersion.
\end{theorem}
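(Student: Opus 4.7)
The plan is to derive the statement from two previously established tools: the embedding lemma \cref{thm:swirl_embedds_Eulerembeddable_graphs}, which shows that every Euler-embeddable graph on $n$ vertices strongly immerses into a swirl of bounded size $f(n)$, and the bounded carving-width result \cref{thm:bounded_carving_with_non_labelled} (equivalently, the bounded-carving-width case of \cref{thm:wqo_bounded_carvingwidth_knitworks} with trivial $\Omega$ and $\dom(\m_i) = \emptyset$, which is vacuously well-linked). Since $k = 0$ forces $W = \emptyset$, the members of $\mathbf{G}(\Delta, 0)$ are just (unrooted) Eulerian digraphs of maximum degree $4$ admitting an Eulerian embedding in $\Delta$, and strong immersion between them is strong immersion in the ordinary sense. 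Note that by \cref{obs:eulerian_emb_closed_under_splitting_and_immersion} the class $\mathbf{G}(\Delta,0)$ is indeed closed under strong immersion, so speaking of a bad sequence inside this class is meaningful.

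Assume towards a contradiction that $(G_i)_{i \in \N}$ is a bad sequence in $\mathbf{G}(\Delta, 0)$ under strong immersion. Set $n \coloneqq |V(G_1)|$ and apply \cref{thm:swirl_embedds_Eulerembeddable_graphs} to obtain $G_1 \hookrightarrow \SSS_{f(n)}$. Now split into two cases according to the treewidth of the members of the tail of the sequence. By \cref{thm:4_reg_swirl}, there is a threshold $N \coloneqq f_\SSS(f(n))$ such that any Eulerian digraph of maximum degree $4$ with treewidth at least $N$ strongly immerses $\SSS_{f(n)}$.

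If some $j > 1$ satisfies $\tw{G_j} \geq N$, then $\SSS_{f(n)} \hookrightarrow G_j$, and composing with $G_1 \hookrightarrow \SSS_{f(n)}$ via transitivity of strong immersion yields $G_1 \hookrightarrow G_j$, contradicting badness. Otherwise every $G_j$ (for $j \geq 2$, and trivially also $j = 1$) satisfies $\tw{G_j} < N$, so by \cref{thm:qualitative_equivalence_of_tw_and_ebw} together with the degree bound $\Delta(G_j) \leq 4$ we obtain $\ebw{G_j} \leq 4N$ uniformly in $j$. Then \cref{thm:bounded_carving_with_non_labelled} (or, equivalently, \cref{thm:wqo_bounded_carvingwidth_knitworks} applied to the trivial $\Omega$-knitwork structure on each $G_j$ with empty domain for $\m$ and $\Phi$, which is vacuously well-linked) produces indices $i < j$ with $G_i \hookrightarrow G_j$, again contradicting badness.

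The main obstacle is purely bookkeeping, not substance: one must verify that the bounded-carving-width theorem can be invoked on bare Eulerian digraphs, which is handled by setting the bead set and the links trivially so that the well-linkedness hypothesis is vacuous, and one must check that strong immersion between elements of $\mathbf{G}(\Delta, 0)$ coincides with strong immersion between the underlying Eulerian digraphs (immediate since $W = \emptyset$). Once these are in place, the dichotomy closes the proof without further work.
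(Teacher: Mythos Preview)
The proposal is correct and follows essentially the same route as the paper: use \cref{thm:swirl_embedds_Eulerembeddable_graphs} to embed $G_1$ into a swirl of bounded size, invoke \cref{thm:4_reg_swirl} to conclude that every later term of a bad sequence must have bounded treewidth (hence bounded carving width via \cref{thm:qualitative_equivalence_of_tw_and_ebw}), and finish with \cref{thm:bounded_carving_with_non_labelled}. The only difference is presentational: the paper states the chain of implications directly, while you phrase it as a dichotomy on treewidth, but the logic is identical.
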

\begin{proof}
    Assume the contrary and let $(G_i,\emptyset)_{i \in \N} \in \mathbf{G}(\Delta,0)$ be a bad sequence to the claim. Then for every $i \geq 1$ $G_i$ does not immerse $G_0$, and in particular by \cref{thm:swirl_embedds_Eulerembeddable_graphs} $G_i$ does not immerse an $f_{\ref{thm:swirl_embedds_Eulerembeddable_graphs}}(n)$-swirl for $n \coloneqq \Abs{V(G_0)}$. Thus by \cref{thm:4_reg_swirl} the carving width of $G_i$ is bounded for every $i \geq 1$ whence $(G_i,\emptyset)_{i \geq 1}$ is a sequence of ``root-less'' Eulerian digraphs of bounded carving-width. By \cref{thm:bounded_carving_with_non_labelled} there exist $1 \leq i < j$ such that $G_i \hookrightarrow G_j$; a contradiction.
\end{proof}

Although not needed we mention that the following stronger corollary follows analogously, again using that every digraph Euler-embeddable in the plane can be strongly immersed in a large enough swirl.

\begin{corollary}\label{cor:excluding_plane_graph_yields_chain}
    Let~$(G_i)_{i\in \N}$ be a sequence of Eulerian digraphs of maximum degree~$4$. If there exists a digraph~$H$ Euler-embeddable in the plane such that for every~$i \in \N$~$G_i$ does not immerse~$H$, then there exist~$j>i\geq 1$ such that~$G_i \hookrightarrow G_j$.
\end{corollary}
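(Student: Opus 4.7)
The proof plan follows the same structure as that of \cref{thm:wqo_planar}, which essentially already handles this more general statement. Let me spell it out.

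First, I would use \cref{thm:swirl_embedds_Eulerembeddable_graphs} applied to $H$: since $H$ is Eulerian embeddable in the plane (and of maximum degree $4$, as all Euler-embeddable digraphs are), there is an integer $k \coloneqq f_{\ref{thm:swirl_embedds_Eulerembeddable_graphs}}(|V(H)|)$ such that $H \hookrightarrow \SSS_{k,k}$ by strong immersion. Since strong immersion is transitive (it defines a partial order on Eulerian digraphs), any $G_i$ that strongly immerses $\SSS_{k,k}$ would also strongly immerse $H$, contradicting our hypothesis. Hence no $G_i$ contains an immersed $k$-swirl.

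Second, by the contrapositive of \cref{thm:4_reg_swirl}, this forces $\tw{G_i} < f_{\SSS}(k)$ for every $i \in \N$, so the underlying treewidth of the sequence is uniformly bounded. Since every $G_i$ has maximum degree $4$, \cref{thm:qualitative_equivalence_of_tw_and_ebw} gives $\ebw{G_i} \leq 4 \cdot \tw{G_i} < 4 \cdot f_{\SSS}(k)$, so the whole sequence lies in a class of Eulerian digraphs of bounded carving width.

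Third, I would invoke \cref{thm:bounded_carving_with_non_labelled}: the class of (rooted, or here unrooted with index $0$) Eulerian digraphs of carving width at most a fixed constant is well-quasi-ordered by strong immersion. Applying this to the bad sequence $(G_i)_{i \in \N}$ yields indices $1 \leq i < j$ with $G_i \hookrightarrow G_j$, as desired.

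There is no hard step here: the only slightly delicate point is just noting that the three cited results chain together without friction, namely that $H \hookrightarrow \SSS_{k,k}$ combined with transitivity of strong immersion reduces ``no $G_i$ immerses $H$'' to ``no $G_i$ immerses a large swirl'', which is exactly the hypothesis \cref{thm:4_reg_swirl} is built to exploit. In particular no further structural or embedding analysis of the $G_i$ themselves is required.
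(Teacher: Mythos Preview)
Your proposal is correct and follows essentially the same approach as the paper, which states that the corollary ``follows analogously'' to \cref{thm:wqo_planar} using that every digraph Euler-embeddable in the plane strongly immerses in a large enough swirl. Your write-up is in fact more explicit than the paper's, spelling out the transitivity step and the passage from bounded treewidth to bounded carving width via \cref{thm:qualitative_equivalence_of_tw_and_ebw}.
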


\subsubsection{$\mathbf{G}(\Delta,k)$: Eulerian Embedded Graphs with Roots} The induction step is split into two subcases: either we find a large swirl in our graph that is ``\emph{boundary-linked}'' in which case we use the structure to construct the required strong immersions similar to \cref{thm:swirl_embedds_Eulerembeddable_graphs}, or every large enough swirl can be cut off by some cut-cycle inducing a~$\leq k$-cut. We start with the latter.

While all the following results could be directly proved on Eulerian digraphs, we take a detour using the linegraph (see \cref{def:linegraph}) in order to derive the results from \cite{GMVIII}. We start by gathering the relevant results and definitions (all referenced definitions and results are transcribed to our setting). Whenever we talk about embedding of undirected graphs, we use the standard \cref{def:embedding} of embeddings, where we do not allow for boundary-edges.

\begin{definition}[$C$-ring, {\cite[Sec. 6]{GMVIII}}]\label{def:C-ring}
    Let~$G$ be an undirected graph embedded in a disc~$\Delta$ with cuff~$\zeta$, with embedding~$(U,\nu)$. Let~$C \subseteq G$ be a circle. 

    Define~$\lambda(C)$\Symbol{LAMBDAC@$\lambda(C)$} to be the maximum number of mutually vertex-disjoint paths of~$G$ between~$C$ and~$\zeta$. 
    Then a \emph{$C$-ring (in~$U$)} is an~$O$-arc~$F$ such that~$\alpha(F) = \lambda(C)$ and such that~$C$ is drawn inside~$\Delta(F)$, i.e,~$\nu(C) \subset \Delta(F)$. We call $F$ a \emph{minimal}~$C$-ring if there is no other~$C$-ring~$F'$ with~$I(F') \subsetneq I(F)$ being a strict subset.
\end{definition}
\begin{remark}
    Note that~$C$-rings exist since~$\lambda(C)$ equals the minimum of~$\Abs{U \cap F}$ over all~$O$-arcs~$F$ with~$E(C) \subseteq I(F)$ similar to \cref{cor:Menger_on_a_disc}.
\end{remark}

In a first step Robertson and Seymour \cite{GMVIII} proved the following topological result, which is used to separate rings of different circles.
\begin{lemma}[(6.5) in {\cite{GMVIII}}] \label{lem:6.5}
    Let~$(U,\nu)$ be an embedding of an undirected graph~$G$ in a disc~$\Delta$. Let~$C_1,C_2$ be circles in~$G$. Let~$F_i$ be a~$C_i$-ring (for~$i=1,2$) such that~$F_1$ and~$C_2$ bound disjoint open discs and~$C_2$ and~$F_1$ bound disjoint open discs. Then there is a~$C_1$-ring~$F_3$ and a~$C_2$-ring~$F_4$ such that~$F_1$ encloses~$F_3$,~$F_2$ encloses~$F_4$ and~$F_3,F_4$ bound disjoint open discs. 
\end{lemma}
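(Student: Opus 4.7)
The plan is a standard planar uncrossing argument via the submodularity of $\alpha$. First, I would translate the ring hypothesis into a Menger-type statement: since the underlying undirected graph is embedded in the disc, an $O$-arc is a $C_i$-ring exactly when its set of traversed vertices $\nu^{-1}(F_i)$ forms a minimum vertex-separator of size $\lambda(C_i)$ between $C_i$ and the cuff $\zeta$. (I will read the second clause of the hypothesis as the intended ``$C_1$ and $F_2$ bound disjoint open discs'', correcting the apparent duplication, so that $C_i$ lies outside $\Delta(F_{3-i})^\circ$ for $i=1,2$.) I would then choose rings $F_1, F_2$ meeting the hypotheses of the lemma that additionally minimise the pair $\bigl(|\Delta(F_1)\cap\Delta(F_2)\cap\nu(V(G))|,\ \mathrm{area}(\Delta(F_1)\cap\Delta(F_2))\bigr)$ lexicographically, and, by a small perturbation inside faces of the embedding, assume that $F_1\cap F_2$ is a finite set consisting of transverse face-crossings and shared vertices of $G$.

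If $\Delta(F_1)\cap\Delta(F_2)=\emptyset$ we are done with $F_3:=F_1$, $F_4:=F_2$. Otherwise, since $F_1$ and $C_2$ bound disjoint open discs, $C_2 \subseteq \Delta(F_2)\setminus\Delta(F_1)^\circ$; symmetrically $C_1 \subseteq \Delta(F_1)\setminus\Delta(F_2)^\circ$. Consider the closed regions
\[
Y_1 \;:=\; \Delta(F_1)\setminus\Delta(F_2)^\circ, \qquad Y_2 \;:=\; \Delta(F_2)\setminus\Delta(F_1)^\circ,
\]
which are disjoint and contain $C_1$ and $C_2$ respectively. I would let $F_3$ be the boundary component of $Y_1$ whose enclosed disc in $\hat\Delta$ contains $C_1$, and $F_4$ the analogous component of $\partial Y_2$ containing $C_2$. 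Each $F_i'$ ($i\in\{3,4\}$) is a simple closed curve built by alternating arcs of $F_1$ and $F_2$; in particular it is $U$-normal. By construction $\Delta(F_3)\subseteq\Delta(F_1)$, $\Delta(F_4)\subseteq\Delta(F_2)$ and the open discs $\Delta(F_3)^\circ, \Delta(F_4)^\circ$ are disjoint.

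To finish, I would verify that $F_3$ and $F_4$ are in fact rings of $C_1$ and $C_2$. Each $F_i'$ separates its enclosed $C_i$ from the cuff $\zeta$, since $\zeta$ lies outside $\Delta(F_1)\cup\Delta(F_2)\supseteq\Delta(F_3)\cup\Delta(F_4)$; hence $\alpha(F_3)\ge\lambda(C_1)$ and $\alpha(F_4)\ge\lambda(C_2)$. In the other direction, every vertex of $G$ on $F_3\cup F_4$ lies on $F_1\cup F_2$, and a local accounting at each point of $F_1\cap F_2$ gives the submodular inequality
\[
\alpha(F_3)+\alpha(F_4) \;\le\; \alpha(F_1)+\alpha(F_2) \;=\; \lambda(C_1)+\lambda(C_2).
\]
Combining the two bounds forces equality, so $\alpha(F_3)=\lambda(C_1)$ and $\alpha(F_4)=\lambda(C_2)$, which are precisely the conditions to be a $C_1$-ring and a $C_2$-ring; this gives the claimed $F_3, F_4$.

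The main obstacle lies in the local accounting at a shared vertex $v$ of $G$ that is traversed by both $F_1$ and $F_2$: such a $v$ contributes $1$ to each of $\alpha(F_1), \alpha(F_2)$, and one must route the arcs of $F_3$ and $F_4$ through the cyclically ordered half-edges at $v$ so that $v$ is counted at most twice in $\alpha(F_3)+\alpha(F_4)$. This is the standard planar uncrossing step and is made possible by the fact that at $v$ the four incident arcs of $F_1\cup F_2$ appear in a cyclic order on the sphere around $\nu(v)$, allowing a unique ``non-crossing'' pairing that sends one arc-pair into $F_3$ and the other into $F_4$. Verifying this combinatorially at each intersection point of $F_1\cap F_2$ (both face crossings and shared vertices) and checking that the resulting curves are the intended boundary components of $Y_1, Y_2$ is the technical heart of the proof.
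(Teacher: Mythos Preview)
The paper does not prove this lemma; it is quoted verbatim as (6.5) from Robertson and Seymour's \emph{Graph Minors VIII} and used as a black box. There is therefore no ``paper's own proof'' to compare against.

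Your proposal is the standard submodularity/uncrossing argument and is essentially how Robertson and Seymour prove it in \cite{GMVIII}. You correctly identify the typo in the hypothesis (the second clause should read ``$C_1$ and $F_2$''), and the plan---form the set-differences $Y_1,Y_2$, extract the boundary component enclosing each $C_i$, and bound $\alpha(F_3)+\alpha(F_4)$ by $\alpha(F_1)+\alpha(F_2)$ via local accounting at intersection points---is exactly right. One small point worth tightening: when you pass from the full boundary $\partial Y_i$ to a single component $F_{i+2}$, you are implicitly using that discarding the other components can only decrease $\alpha$, so the submodular inequality $\alpha(\partial Y_1)+\alpha(\partial Y_2)\le\alpha(F_1)+\alpha(F_2)$ (which is what the local accounting actually yields) still gives $\alpha(F_3)+\alpha(F_4)\le\lambda(C_1)+\lambda(C_2)$. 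This is harmless but should be said explicitly.
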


They use the result to derive that minimal~$C$-rings are unique \cite{GMVIII}.

\begin{lemma}[(6.6) in {\cite{GMVIII}}] \label{lem:6.6}
    Let~$(U,\nu)$ be an embedding of an undirected graph~$G$ in a disc~$\Delta$ and let~$C \subseteq G$ be a circle. If~$F_1,F_2$ are minimal~$C$-rings then~$I(F_1) = I(F_2)$.
\end{lemma}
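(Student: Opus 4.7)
The plan is to establish uniqueness of minimal $C$-rings via a submodularity/uncrossing argument on the invariant $\alpha$, mimicking the standard treatment from \cite{GMVIII}. I would argue by contradiction: assume $F_1, F_2$ are minimal $C$-rings with $I(F_1) \neq I(F_2)$, say with an edge $e \in I(F_1) \setminus I(F_2)$, and then construct a $C$-ring $F_3$ with $I(F_3) \subsetneq I(F_1)$, contradicting the minimality of $F_1$.

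First I would put $F_1$ and $F_2$ into general position by a small perturbation inside faces, keeping the vertex-sets $\alpha(F_i) = \nu^{-1}(F_i) \cap V(G)$ intact, so that all points of $F_1 \cap F_2$ outside of $V(G)$ are transversal crossings lying in the open faces of $(U,\nu)$. Since both $\Delta(F_1)$ and $\Delta(F_2)$ enclose the connected set $\nu(C)$, the circle $\nu(C)$ lies in a single component $R_\wedge$ of $\Delta(F_1) \cap \Delta(F_2)$, and $\Delta(F_1) \cup \Delta(F_2)$ is a single topological disc $R_\vee$ enclosing $\nu(C)$ because all of this is happening inside the disc $\Delta$. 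Let $F_3 \coloneqq \partial R_\wedge$ and $F_4 \coloneqq \partial R_\vee$; after a further small perturbation inside faces these are $U$-normal simple closed curves, i.e.\ genuine $O$-arcs, both of which enclose $\nu(C)$.

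The decisive step is a submodularity estimate
\[
  |\alpha(F_3)| + |\alpha(F_4)| \;\leq\; |\alpha(F_1)| + |\alpha(F_2)|,
\]
which follows by noting that every vertex of $G$ on $F_3 \cup F_4$ lies on $F_1 \cup F_2$, while a vertex common to $F_1$ and $F_2$ is counted exactly twice on the right-hand side and at most twice on the left (the transversal crossings in faces contribute nothing to either $\alpha$-count). On the other hand, since both $F_3$ and $F_4$ enclose $C$, Menger's theorem applied to the paths between $C$ and the cuff $\zeta$ gives $|\alpha(F_3)|, |\alpha(F_4)| \geq \lambda(C) = |\alpha(F_1)| = |\alpha(F_2)|$. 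Combining the two inequalities forces equality throughout, so $F_3$ is itself a $C$-ring.

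Finally, $\Delta(F_3) \subseteq \Delta(F_1) \cap \Delta(F_2)$ immediately yields $I(F_3) \subseteq I(F_1) \cap I(F_2)$, and the fixed edge $e \in I(F_1) \setminus I(F_2)$ satisfies $\nu(e) \notin \Delta(F_2) \supseteq \Delta(F_3)$, hence $e \notin I(F_3)$; therefore $I(F_3) \subsetneq I(F_1)$, contradicting the minimality of $F_1$. The main obstacle I anticipate is the topological bookkeeping around the general-position step: one must ensure that $R_\wedge$ really is a topological disc (so that $F_3$ is a single $O$-arc rather than a disjoint union of nested curves) and that the perturbation does not create spurious intersections of $F_3, F_4$ with $U$ beyond those already counted by $\alpha(F_1) \cup \alpha(F_2)$. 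This is standard but somewhat delicate when $F_1$ and $F_2$ share vertices of $G$ on their intersection; the fix is to perform the uncrossing locally in a small disc around each shared vertex, which is where the accounting for the submodular inequality has to be done carefully.
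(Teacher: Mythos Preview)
The paper gives no proof of its own here; it simply quotes the statement as (6.6) of \cite{GMVIII} and remarks in the surrounding text that Robertson and Seymour derive it there from their Lemma~(6.5). Your direct uncrossing/submodularity argument is correct and is exactly the mechanism underlying both (6.5) and (6.6) in \cite{GMVIII}, so in substance the approaches coincide even though you bypass the intermediate lemma. The topological worries you flag are genuine but routine in the disc: each component of the intersection of two Jordan domains in $\Delta$ is itself a disc, so $R_\wedge$ is simply connected and $F_3$ is a single $O$-arc; and the general-position perturbation can be done inside faces so that $\alpha(F_3)\cup\alpha(F_4)\subseteq\alpha(F_1)\cup\alpha(F_2)$ with the correct multiplicities, which is all your submodular count needs.
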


By \cref{lem:6.6} one may unambiguously define~$I(C) \coloneqq I(F_1)$ for a minimal~$C$-ring~$F_1$.

Recall that given an embedding of~$G$ in a disc~$\Delta$, every circle~$C$ in~$G$ bounds a unique disc~$\Delta(C) \subseteq \Delta$. 

\begin{definition}[Undirected $(r,s)$-nest and Boss, {\cite[Sec.6]{GMVIII}}]\label{def:boss}
    Let~$G$ be an undirected graph with embedding~$(U,\nu)$ in a disc~$\Delta$, and fix~$r,s \geq 0$. An \emph{undirected $(r,s)$-nest} in~$U$ is a sequence~$(C_1,\ldots,C_s)$ of vertex-disjoint undirected circles in~$G$ such that
    \begin{itemize}
        \item for~$1 \leq i < i' \leq s$, $\Delta(C_{i'}) \subseteq \Delta(C_i)$,
        \item there are~$r$ mutually vertex-disjoint linear paths~$P_1,\ldots,P_r$ of~$G$ between~$C_1$ and~$C_s$.
    \end{itemize}

    Let~$k \coloneqq |\nu(V(G)) \cap \bd(\Delta)|$. An undirected circle~$C$ in $G$ is a \emph{boss (of~$U$)} if~$\lambda(C) < k$ and there is an undirected~$(\lambda(C)+1,k)$-nest~$(C_1,\ldots,C_k)$ with~$C_k = C$. We may specify the witnessing linear paths writing~$(C_1,\ldots,C_s;P_1,\ldots,P_r)$.
\end{definition}
\begin{remark}
    Note that we slightly changed the definition for personal preferences: in \cite{GMVIII} they set~$\Delta(C_i) \subseteq \Delta(C_{i'})$ and~$C_1 = C$, i.e., the sequence~$(C_1,\ldots,C_s)$ induces a sequence of growing discs, while in this exposition the discs are shrinking.
\end{remark}

Clearly, all of the witnessing paths of the nest intersect all of the circles of the nest---Robertson and Seymour prove that one may choose the~$r$ vertex-disjoint paths to intersect each circle exactly in a sub-path---yielding the following, which we highlight as an observation.
\begin{observation}\label{obs:subseq_of_nest_is_nest_and_nice}
      Let~$G$ be an undirected graph with embedding~$(U,\nu)$ in a disc~$\Delta$ and let $r,s,t \in \N$. Let~$(C_1,\ldots,C_s)$ be an undirected~$(r,s)$-nest in~$U$. Then there exist~$r$ vertex-disjoint linear paths~$P_1,\ldots,P_r$ connecting~$C_1$ and~$C_s$ such that for every~$1 \leq i \leq r$ and every~$1 \leq j \leq s$,~$P_i \cap C_j$ is an induced linear subpath of~$P_i$ and~$C_j$ respectively. Further let~$1 \leq i_1 < \ldots < i_t \leq s$. Then~$(C_{i_1},\ldots,C_{i_t};P_1,\ldots,P_r)$ is an undirected~$(r,t)$-nest in~$U$.
\end{observation}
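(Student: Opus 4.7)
The plan is to refine any witnessing family of vertex-disjoint paths so that the intersection with each circle in the nest is a single induced linear subpath, and then observe that both the refined paths and the nesting property survive under taking subsequences.

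First I would take $r$ mutually vertex-disjoint linear paths $Q_1,\ldots,Q_r$ between $C_1$ and $C_s$ witnessing the $(r,s)$-nest, chosen so as to minimise $\sum_{i=1}^{r} |V(Q_i)|$. Since $\Delta(C_1) \supseteq \Delta(C_2) \supseteq \cdots \supseteq \Delta(C_s)$ and the $Q_i$ go from a point of $C_1$ to a point of $C_s$, each $Q_i$ must meet each $C_j$, so $Q_i \cap C_j \ne \emptyset$ for every $i,j$. Suppose for contradiction that $Q_i \cap C_j$ is not an induced linear subpath of both $Q_i$ and $C_j$. Let $u$ be the first and $v$ the last vertex of $Q_i \cap C_j$ along $Q_i$. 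Since the $Q_{i'}$ are pairwise vertex-disjoint and also each cross $C_j$, only finitely many points of $C_j$ are used by paths $Q_{i'}$ with $i' \ne i$; thus at least one of the two arcs of $C_j$ between $u$ and $v$ is internally disjoint from $\bigcup_{i' \ne i} V(Q_{i'})$. Replacing the portion of $Q_i$ between $u$ and $v$ by this arc of $C_j$ produces a new family of vertex-disjoint linear paths between $C_1$ and $C_s$ whose total vertex count is no larger, and in which the intersection $Q_i \cap C_j$ is now a single subpath of $C_j$. Iterating this exchange (and an analogous one to eliminate any chord of $Q_i \cap C_j$ inside $C_j$), and invoking minimality together with planarity to prevent the process from creating new bad intersections with other circles, yields paths $P_1,\ldots,P_r$ with the required induced-subpath property.

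For the second part, given $1 \leq i_1 < \cdots < i_t \leq s$, the chain $\Delta(C_{i_1}) \supseteq \cdots \supseteq \Delta(C_{i_t})$ is immediate from the original nesting. For each $i$, let $P_i'$ be the subpath of $P_i$ from its first vertex on $C_{i_1}$ to its last vertex on $C_{i_t}$; these subpaths are pairwise vertex-disjoint, linear, and connect $C_{i_1}$ to $C_{i_t}$. By the first part each $P_i' \cap C_{i_\ell}$ is still an induced linear subpath, so $(C_{i_1},\ldots,C_{i_t}; P_1,\ldots,P_r)$ (interpreted via these subpaths) satisfies \cref{def:boss}.

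The main obstacle is the first part: making the exchange argument respect \emph{all} circles simultaneously. The key point, which the minimality ordering handles cleanly, is that any rerouting along an arc of $C_j$ that strictly decreases $\sum_i |V(Q_i)|$ cannot introduce new non-induced intersections with a different circle $C_{j'}$, because such an intersection would itself admit a shortening by the same argument, contradicting the global minimum.
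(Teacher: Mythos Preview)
Your exchange argument has a genuine gap at the key step. You claim that because only finitely many vertices of $C_j$ lie on the other paths $Q_{i'}$, at least one of the two arcs of $C_j$ between $u$ and $v$ is internally disjoint from $\bigcup_{i'\neq i} V(Q_{i'})$. This inference is invalid: nothing prevents the other paths from meeting $C_j$ on both arcs (take $r=3$ with $Q_2$ hitting one arc and $Q_3$ the other). Planarity does constrain the situation, but you never actually invoke the embedding at this point, and without it the claim is simply false.

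A second, related issue is the choice of potential. Minimising $\sum_i |V(Q_i)|$ does not force the intersections to be single subpaths: rerouting $Q_i$ along an arc of $C_j$ may well increase the total vertex count, so your ``no larger'' assertion is unsupported, and the final paragraph then becomes circular---you only argue that a strictly decreasing rerouting would contradict minimality, not that the minimal configuration actually has the desired property. The paper does not give its own argument here but defers to Robertson and Seymour \cite{GMVIII}; their proof uses a different potential (essentially the number of path edges outside the circles, equivalently the number of intersection components) together with an innermost-disc argument that genuinely exploits the planar embedding to locate a free arc for the rerouting.

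Your second part (passing to a subsequence of the circles) is fine once the first part is in place.
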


\begin{lemma}[(6.7) in {\cite{GMVIII}}]\label{lem:6.7}
    Let~$G$ be an undirected graph with embedding~$(U,\nu)$ in a disc~$\Delta$ and let~$C_1,C_2$ be bosses. Then either~$\Delta(C_1) \subseteq \Delta(C_2)$ or~$\Delta(C_2)\subseteq \Delta(C_1)$ or~$\Delta(C_1) \cap \Delta(C_2) = \emptyset$.
\end{lemma}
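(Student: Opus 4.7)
The plan is to proceed by contradiction. Assume that $C_1, C_2$ are bosses in $(U,\nu)$ for which none of the three alternatives holds, i.e.\ $\Delta(C_1) \cap \Delta(C_2) \neq \emptyset$ while neither disc contains the other. Then $C_1$ and $C_2$ must ``cross'' topologically: there are points of $\Delta$ in $\Delta(C_1) \setminus \Delta(C_2)$, in $\Delta(C_2) \setminus \Delta(C_1)$, and in $\Delta(C_1) \cap \Delta(C_2)$. First I would invoke \cref{lem:6.6} to fix the (essentially unique) minimal $C_i$-rings $F_i$ with $\alpha(F_i) = \lambda(C_i)$, and use \cref{lem:6.5} to replace them, if necessary, by equivalent minimal rings that bound disjoint open discs $\Delta(F_1), \Delta(F_2)$ with $\nu(C_i) \subseteq \Delta(F_i)$. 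This last step needs a short topological check that the hypotheses of \cref{lem:6.5} can be met: one argues that by slightly shifting $F_i$ inside the faces of $U$ that it traverses, one can decouple $F_1$ from $C_2$ and $F_2$ from $C_1$ without changing the value $\lambda(C_i)$.

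Second, I would exploit the boss property of $C_1$ (an analogous argument works symmetrically for $C_2$). By \cref{def:boss} there is an undirected $(\lambda(C_1)+1, k)$-nest $(D_1, \dots, D_k = C_1)$ together with $r_1 \coloneqq \lambda(C_1) + 1$ pairwise vertex-disjoint linear paths $P_1, \dots, P_{r_1}$ from $D_1$ to $C_1$, and by \cref{obs:subseq_of_nest_is_nest_and_nice} these can be chosen so that each $P_j$ meets every nest circle $D_i$ in a linear subpath. Because the $D_i$ are pairwise vertex-disjoint and concentric, the discs $\Delta(D_i)$ form a shrinking chain around $C_1$. Together with the $r_1$ threading paths this produces a ``grid-like'' structure inside $\Delta(F_1)$ of width $r_1$ and depth $k$, whose outermost cycle $D_1$ lies inside $\Delta(F_1)$.

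Third, I would derive a counting contradiction. Since $\Delta(C_1)$ and $\Delta(C_2)$ properly overlap, any separation between $C_2$ and $\bd(\Delta)$ must either traverse $\Delta(C_1)$ or run around it; in particular, $F_2$, which already cuts only $\lambda(C_2)$ vertices and encloses $C_2$, cannot be pushed past the $r_1$ threading paths of the nest of $C_1$ without using at least $r_1$ vertices on each nest circle it meets. Using the proper overlap and the inner $k - \lambda(C_1) - 1 \geq 1$ nest circles of $C_1$ that must be crossed by any curve witnessing $\lambda(C_2)$, I would combine portions of the $P_j$ with arcs of the circles $D_i$ and with paths certifying $\lambda(C_2)$ from $C_2$ to $\bd(\Delta)$, producing at least $\lambda(C_2) + 1$ pairwise vertex-disjoint paths from $C_2$ to $\bd(\Delta)$, contradicting the maximality in the definition of $\lambda(C_2)$.

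The main obstacle is the last step: translating the purely topological fact that $\Delta(C_1)$ and $\Delta(C_2)$ properly overlap into an honest combinatorial production of too many vertex-disjoint paths from $C_2$ to $\zeta$. This requires a careful rerouting argument using the nest circles and threading paths of $C_1$ as ``stepping stones'', and heavily relies on both the depth $k$ of the nest (which is exactly the number of boundary vertices) and the minimality of the rings $F_i$ guaranteed by \cref{lem:6.5,lem:6.6}. Once this counting is carried out, the contradiction refutes the assumption that $\Delta(C_1), \Delta(C_2)$ overlap non-trivially yet neither contains the other.
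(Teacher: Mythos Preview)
The paper does not prove this lemma at all: it is quoted verbatim as (6.7) from Robertson--Seymour \cite{GMVIII} and used as a black box, so there is no in-paper argument to compare against.

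Regarding your proposal itself, there is a genuine gap. Your first step is self-defeating: under the standing assumption that $\Delta(C_1)$ and $\Delta(C_2)$ properly overlap, any $C_i$-ring $F_i$ satisfies $\Delta(C_i)\subseteq\Delta(F_i)$, so $\Delta(F_1)\cap\Delta(F_2)\supseteq\Delta(C_1)\cap\Delta(C_2)\neq\emptyset$. Thus it is \emph{impossible} to produce minimal rings bounding disjoint open discs, and the hypotheses of \cref{lem:6.5} cannot be met (the ``short topological check'' you flag will fail, not succeed). This means the separation of rings you rely on in later steps is unavailable precisely in the case you are analysing.

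Your third step is where the actual content of the lemma lives, and you correctly identify it as the main obstacle, but what you have written is a hope rather than an argument: ``combine portions of the $P_j$ with arcs of the circles $D_i$'' is not a construction, and without it there is no contradiction. The honest proof in \cite{GMVIII} does use the nest structure, but the crossing of $C_2$ with the $k$ concentric, pairwise vertex-disjoint circles $D_1,\dots,D_k$ of the $C_1$-nest is exploited directly (every time $C_2$ enters $\Delta(C_1)=\Delta(D_k)$ it must meet each $D_i$, and these intersections are distinct vertices), rather than via a detour through \cref{lem:6.5}. If you want to complete your own proof, drop the ring-separation step entirely and work out the vertex-counting on $C_2\cap\bigcup_i D_i$ explicitly.
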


Robertson and Seymour \cite{GMVIII} use the previous lemmas to prove the existence of the following.

\begin{definition}[Foundation, {\cite[Sec. 6]{GMVIII}}]
    Let~$G$ be a graph,~$(U,\nu)$ an embedding of~$G$ in a disc~$\Delta$ and~$k \coloneqq \Abs{\nu(V(G)) \cap \bd(\Delta)}$. By \cref{lem:6.7} choose bosses~$C_1,\ldots,C_t$ of~$U$ such that
    \begin{enumerate}
        \item $I(C_1),\ldots,I(C_t)$ are mutually disjoint, and
    \item for every boss~$C$ of $U$,~$I(C)\subseteq I(C_i)$ for some~$1 \leq i \leq t$.
    \end{enumerate}
   For~$1 \leq i \leq t$ let~$F_i$ be a minimal~$C_i$-rings which by repeated application of \cref{lem:6.5} which can be chosen so that they bound mutually disjoint open discs and (by possibly shrinking the discs slightly)~$\Delta(F_i) \cap \Delta(F_j) \subset \nu(V(G))$ for distinct~$1 \leq i,j \leq t$. We call~$\{F_1,\ldots,F_t\}$ a \emph{foundation for~$U$}.
\end{definition}

\paragraph{Lifting Foundations from $\ell(G)$ to~$G$.} Having gathered the above results, we may use \cref{obs:properties_of_linegraph_embedding} to lift foundations of the underlying undirected graph of~$\ell(G)$ to~$G$ for an Euler-embeddable graph~$G$. To this extent we define \emph{$C$-sections} as an analogue to~$C$-rings in our setting. Note here that every embedding of~$\ell(G)$ is also an embedding of its underlying undirected graph. 

\begin{definition}[$C$-section]\label{def:C-section}
    Let~$(G,\pi(W))$ be a bead-rooted Eulerian digraph and let~$(\Gamma,\nu)$ be an embedding of~$(G,\pi(W))$ in a disc~$\Delta$ with cuff~$\zeta_1$. Let~$C \subset \Gamma$ be an internal circle. Using \cref{cor:Menger_on_a_disc} we define a~$C$-\emph{section} to be a cut-cycle~$F \subset \Delta$ such that~$\delta(F) = \kappa(\zeta_1,C;\Gamma)$ and~$\nu(C) \subset \Delta(F)$. We say that~$F$ is a \emph{minimal}~$C$-section (in~$\Gamma$) if there is no~$C$-section~$F' \subset \Delta$ such that~$X(F') \subsetneq X(F)$ is a strict subset.
\end{definition}

In particular note that for a~$C$-section it holds~$\delta(F) = \delta(X(F))$ and thus by \cref{lem:cut_cycles_are_alternating} we have the following.
\begin{observation}\label{obs:C-sec_is_alternating}
    If~$F$ is a~$C$-section then~$F$ is alternating.
\end{observation}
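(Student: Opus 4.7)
The plan is essentially to observe that this is an immediate consequence of \cref{lem:cut_cycles_are_alternating} once we verify that the hypotheses of that lemma apply to $F$. By \cref{def:C-section}, a $C$-section $F$ is, by construction, a cut-cycle in the embedding $(\Gamma,\nu)$ that bounds a disc $\Delta(F)$ in $\Delta$. The ambient embedding $(\Gamma,\nu,\omega)$ is an Eulerian embedding of a bead-rooted Eulerian digraph in a disc, and since we may extend it to $\hat\Delta$ (a sphere) by drawing beads in the cap discs, connectedness together with \cref{obs:disc_embedd_is_2cell} ensures that the embedding is $2$-cell (up to the outer face, which we may also close off into a disc by drawing the beads there, if any).

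With this in hand I would simply invoke \cref{lem:cut_cycles_are_alternating} for the Eulerian $2$-cell embedding and the cut-cycle $F$, concluding that $F$ is alternating, i.e., there is a simple closed curve tracing $F$ and an ordering $\pi_F = (e_1,\ldots,e_\delta)$ of $\rho(F)$ such that consecutive edges (cyclically) alternate between in-edges and out-edges of $X(F)$. No further use is made of the minimality or of the linkage count $\kappa(\zeta_1,C;\Gamma)$ in the definition of a $C$-section; the alternation follows purely from $F$ being a cut-cycle in an Eulerian $2$-cell embedding, together with \cref{obs:cut-cycle_induce_cut} which identifies $\rho(F)$ with the induced cut $\rho(X(F))$. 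There is really no obstacle here: the work has already been done in \cref{lem:cut_cycles_are_alternating}, and this observation is merely a restatement of that lemma in the language of $C$-sections for later convenience in the subsequent arguments on the disc.
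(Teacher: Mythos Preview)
The proposal is correct and takes essentially the same approach as the paper: both simply observe that a $C$-section is by definition a cut-cycle in an Eulerian $2$-cell embedding and invoke \cref{lem:cut_cycles_are_alternating}. Your additional remarks about verifying $2$-cellness via \cref{obs:disc_embedd_is_2cell} and noting that neither minimality nor $\kappa$ is used are accurate but not needed for the argument.
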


By \cref{obs:properties_of_linegraph_embedding} we immediatley derive an analogue to \cref{lem:6.5} for~$C$-sections.
\begin{corollary}\label{cor:disjoint-sections}
     Let~$(\Gamma,\nu)$ be an Eulerian embedding of a bead-rooted Eulerian digraph~$G$ in a disc~$\Delta$. Let~$C_1,C_2 \subset \Gamma$ be internal circles. Let~$F_i$ be a~$C_i$-section (for~$i=1,2$) such that~$F_1$ and~$C_2$ bound disjoint open discs and~$C_2$ and~$F_1$ bound disjoint open discs. Then there is a~$C_1$-section~$F_3$ and a~$C_2$-section~$F_4$ such that~$F_1$ encloses~$F_3$,~$F_2$ encloses~$F_4$ and~$F_3,F_4$ bound disjoint open discs. 
\end{corollary}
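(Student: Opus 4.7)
The plan is to transfer the statement to the undirected linegraph embedding and apply \cref{lem:6.5} there, then lift the resulting $O$-arcs back to cut-cycles in $(\Gamma,\nu)$. Concretely, I would proceed as follows.

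First, pass to the embedding $(\Gamma_\ell,\nu_\ell)$ of $\ell(\Gamma;\Delta)$ in $\Delta$ induced by $(\Gamma,\nu)$ as in \cref{def:linegraph_induced_by_bead_embedding}. By \cref{obs:circles_are_circles_in_linegraph}, each internal circle $C_i\subseteq \Gamma$ corresponds to an undirected circle $\ell(C_i)\subseteq \ell(\Gamma;\Delta)$, and $\nu_\ell(\ell(C_i))$ lies in the same disc of $\Delta$ as $\nu(C_i)$. By item (v) of \cref{obs:properties_of_linegraph_embedding}, each $C_i$-section $F_i$ can be slightly perturbed (without changing $\rho(F_i)$ or $X(F_i)$) to an equivalent $O$-arc $F_i'$ in $(\Gamma_\ell,\nu_\ell)$ with $\alpha(F_i') = \rho(F_i)$ and $\Delta(F_i') = \Delta(F_i)$. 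The standard edge-to-vertex disjointness correspondence in linegraphs (\cref{obs:linkages_stay_same_in_linegraph}) together with \cref{cor:Menger_on_a_disc} then shows $\kappa(\zeta_1,C_i;\Gamma) = \lambda(\ell(C_i))$, so $F_i'$ is a $\ell(C_i)$-ring in the sense of \cref{def:C-ring}.

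Next, observe that the hypothesis transfers: since $F_1$ and $C_2$ bound disjoint open discs in $\Delta$, and $\nu_\ell(\ell(C_2))$ sits in the same region of $\Delta$ as $\nu(C_2)$ (and similarly for $F_1$ and $F_1'$), we get that $F_1'$ and $\ell(C_2)$ bound disjoint open discs, and likewise for $F_2'$ and $\ell(C_1)$. Applying \cref{lem:6.5} to the undirected embedding $(\Gamma_\ell,\nu_\ell)$ yields an $\ell(C_1)$-ring $F_3'$ and an $\ell(C_2)$-ring $F_4'$ such that $F_1'$ encloses $F_3'$, $F_2'$ encloses $F_4'$, and $F_3',F_4'$ bound disjoint open discs.

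Finally, lift $F_3',F_4'$ back to $(\Gamma,\nu)$ using item (iv) of \cref{obs:properties_of_linegraph_embedding}: each $O$-arc $F_j'$ is equivalent to a cut-cycle $F_j$ in $(\Gamma,\nu)$ with $\rho(F_j)=\alpha(F_j')$ and the same bounded disc. Since $\alpha(F_3') = \lambda(\ell(C_1)) = \kappa(\zeta_1,C_1;\Gamma) = \delta(F_3)$, the lifted $F_3$ is a $C_1$-section, and analogously $F_4$ is a $C_2$-section. The enclosure and disjointness conditions carry over because equivalent $O$-arcs and cut-cycles bound the same discs. The main thing to be careful about is precisely this bookkeeping: verifying that the correspondence between $C$-sections and $\ell(C)$-rings is an equality, not just an inequality, so that the minimality extracted from \cref{lem:6.5} yields true $C_i$-sections on the directed side. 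Everything else is topological and follows immediately from \cref{obs:properties_of_linegraph_embedding}.
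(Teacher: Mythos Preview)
Your proposal is correct and follows essentially the same route as the paper: pass to the induced linegraph embedding via \cref{obs:properties_of_linegraph_embedding}~(v), apply \cref{lem:6.5} for undirected $C$-rings, and lift back via \cref{obs:properties_of_linegraph_embedding}~(iv). The paper's proof is a one-line sketch of exactly this; you have simply been more explicit about the step identifying $\kappa(\zeta_1,C_i;\Gamma)=\lambda(\ell(C_i))$ so that $C_i$-sections correspond to $\ell(C_i)$-rings, which the paper defers to the subsequent \cref{lem:rings_are_sections}.
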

\begin{proof}
    This is immediate by switching to $\ell(\Gamma;\Delta)$ and~$(\Gamma_\ell,\nu_\ell)$, then applying (v) of \cref{obs:properties_of_linegraph_embedding} followed by \cref{lem:6.5} and finally (iv) of \cref{obs:properties_of_linegraph_embedding}.
\end{proof}

We prove a tight relation between~$C$-sections and~$\ell(C)$-rings, complementing (iv) and (v) of \cref{obs:properties_of_linegraph_embedding}.
\begin{lemma}\label{lem:rings_are_sections}
    Let~$(\Gamma,\nu)$ be an Eulerian embedding of a bead-rooted Eulerian graph~$(G,\pi(W))$ in a disc~$\Delta$ and let~$C \subset \Gamma$ be an internal circle. Let~$(\Gamma_\ell,\nu_\ell)$ be an embedding of the undirected underlying graph of $\ell(\Gamma;\Delta)$ induced by $\Gamma$, and let~$C_\ell \coloneqq \ell(C)$. 
    Let $F$ be a minimal~$C$-section, then there exists an equivalent minimal~$C$-section~$F'\subset \Delta$ in~$\Gamma$ such that~$F'$ is a minimal~$C_\ell$-ring in~$\Gamma_\ell$, and vice-versa.
\end{lemma}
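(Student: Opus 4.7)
The strategy is to exploit the tight correspondence between cut-cycles in $\Gamma$ and $O$-arcs in $\Gamma_\ell$ established in Observation~\ref{obs:properties_of_linegraph_embedding}(iv)-(v), together with the numerical identity $\kappa(\zeta_1,C;\Gamma)=\lambda(C_\ell)$. First I would establish this identity. By Corollary~\ref{cor:Menger_on_a_disc}, $\kappa(\zeta_1,C;\Gamma)$ equals the minimum of $\delta(F)$ over all cut-cycles $F$ in $\Gamma$ with $\Delta(C)\subseteq\Delta(F)$ and $\zeta_1\subset\Delta\setminus\Delta(F)^\circ$. By Observation~\ref{obs:properties_of_linegraph_embedding}(v), each such $F$ has an equivalent cut-cycle $F'$ which is an $O$-arc of $\Gamma_\ell$ with $\alpha(F')=\delta(F)$ and $I(F')=X(F)\supseteq E(C)=V(C_\ell)$; conversely, by (iv), every $O$-arc in $\Gamma_\ell$ surrounding $C_\ell$ admits an equivalent cut-cycle in $\Gamma$ with the same numerical parameters. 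Combined with the undirected Menger theorem applied to $\ell(\Gamma;\Delta)$ inside $\Delta$, this yields $\kappa(\zeta_1,C;\Gamma)=\lambda(C_\ell)$. The translation between edge-disjoint directed linkages in $G$ and vertex-disjoint linkages in $\ell(G)$ needed here is provided by Observation~\ref{obs:linkages_stay_same_in_linegraph}.

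Next, given a minimal $C$-section $F$ in $\Gamma$, I would apply Observation~\ref{obs:properties_of_linegraph_embedding}(v) to produce an equivalent cut-cycle $F'$ (so $F'\cap\Gamma=F\cap\Gamma$ and $X(F')=X(F)$) which is simultaneously an $O$-arc in $\Gamma_\ell$ with $\alpha(F')=\rho(F')$ and $I(F')=X(F')$. Then $\alpha(F')=\delta(F')=\delta(F)=\kappa(\zeta_1,C;\Gamma)=\lambda(C_\ell)$. Moreover, $\nu(C)\subset\Delta(F)=\Delta(F')$ together with the fact that $\nu_\ell$ draws $C_\ell$ arbitrarily close to $\nu(C)$ (Definition~\ref{def:linegraph_induced_by_bead_embedding}) gives $\nu_\ell(C_\ell)\subset\Delta(F')$, so $F'$ is a $C_\ell$-ring in $\Gamma_\ell$. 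For minimality: if there were a $C_\ell$-ring $F''$ in $\Gamma_\ell$ with $I(F'')\subsetneq I(F')$, then Observation~\ref{obs:properties_of_linegraph_embedding}(iv) would yield an equivalent cut-cycle $\tilde F$ in $\Gamma$ with $X(\tilde F)=I(F'')$ and $\delta(\tilde F)=\alpha(F'')=\lambda(C_\ell)=\kappa(\zeta_1,C;\Gamma)$, so $\tilde F$ is a $C$-section with $X(\tilde F)\subsetneq X(F)$, contradicting minimality of $F$. Hence $F'$ is a minimal $C_\ell$-ring. Since $F'$ is equivalent to $F$ as a cut-cycle it is itself a minimal $C$-section.

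The converse direction is symmetric: starting from a minimal $C_\ell$-ring, apply (iv) to obtain an equivalent $O$-arc that is a cut-cycle in $\Gamma$, then run the same argument exchanging the roles of $\kappa$ and $\lambda$ and of $X(\cdot)$ and $I(\cdot)$.

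The main obstacle I anticipate is a purely bookkeeping one: one has to carefully verify, using Definition~\ref{def:linegraph_induced_by_bead_embedding} together with Observation~\ref{obs:properties_of_linegraph_embedding}, that ``equivalent'' on the two sides really does preserve both the containment $\Delta(C)\subseteq\Delta(F')$ / $\nu_\ell(C_\ell)\subset\Delta(F')$ and the numerical identifications $\delta=\alpha$, $X=I$, and that the exchange between Eulerian edge-cuts in $G$ and vertex-cuts in $\ell(G)$ is clean enough that minimality is transported in both directions without any loss. All of this is encoded in the observations cited, so no further combinatorial ingredient is required.
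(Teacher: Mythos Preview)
Your proposal is correct and follows essentially the same approach as the paper: use Observation~\ref{obs:properties_of_linegraph_embedding}(v) to pass from a minimal $C$-section to an equivalent $O$-arc in $\Gamma_\ell$, verify it is a $C_\ell$-ring via the numerical identification $\delta=\alpha$, and deduce minimality by contradiction using (iv) to pull a hypothetical smaller ring back to a smaller section. The only cosmetic difference is that you establish the identity $\kappa(\zeta_1,C;\Gamma)=\lambda(C_\ell)$ explicitly at the outset, whereas the paper derives it implicitly inside the contradiction argument showing $F'$ is a ring; both amount to the same thing.
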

\begin{proof}
    We prove one direction, where the other is analogous. Let~$F$ be a minimal~$C$-section in~$\Gamma$, then~$F$ is a cut-cycle and by (v) of \cref{obs:properties_of_linegraph_embedding} there exists an equivalent cut-cycle~$F'\subset \Delta$ in~$\Gamma$ such that~$F'$ is an~$O$-arc in~$\Gamma_\ell$. 
    \begin{claim}
        $F'$ is a minimal~$C$-section.
    \end{claim}
    \begin{claimproof}
        Since~$\delta(F') = \delta(F)$ as well as~$X(F) = X(F')$ using the equivalence of the cut-cycle, the claim follows by \cref{def:C-section} using that~$F$ is a minimal~$C$-section.
    \end{claimproof}
    Recall that by \cref{obs:circles_are_circles_in_linegraph}~$C_\ell$ is indeed a circle in~$\ell(\Gamma;\Delta)$ and thus an undirected circle in its underlying undirected graph. 
    \begin{claim}
    $F'$ is a minimal~$C_\ell$-ring.
    \end{claim}
\begin{claimproof}
     Assume towards a contradiction that~$F'$ is not a~$C_\ell$-ring. Then by \cref{def:C-ring} ~$\lambda(C_\ell) < \alpha(F')$, for it cannot be bigger. Thus there exists a~$C_\ell$-ring~$J' \subset \Delta$ with~$\alpha(J') < \alpha(F')$ and~$\nu(C_\ell) \subset \Delta(J')$ by the undirected well-known version of \cref{lem:boundary_linked_Menger_for_embeddings_in_cylinder}. Using~$(iv)$ of \cref{obs:properties_of_linegraph_embedding} we derive the existence of a cut-cycle~$J \subset \Delta$ with~$\delta(J) = \alpha(J') < \alpha(F') = \delta(F')$ and~$\nu(C) \subset \Delta(J)$ as a contradiction to~$F'$ being a~$C$-section.

    Finally to see that~$F'$ is minimal, suppose that there is an~$O$-arc~$F'' \subset \Delta$ with~$\alpha(F'') = \alpha(F')$ such that~$I(F'') \subsetneq I(F')$ is a strict subset. Then by~$(iv)$ of \cref{obs:properties_of_linegraph_embedding} we find a respective cut-cycle~$F''' \subset \Delta$ with~$\delta(F''') = \delta(F')$ and~$X(F''') \subset X(F')$ as a strict subset, contradicting the minimality of~$F'$ in~$\Gamma$. 
\end{claimproof}

    This concludes the proof.
\end{proof}

 The following is a consequence of \cref{lem:rings_are_sections}, and the dual result to \cref{lem:6.6}.
\begin{lemma}\label{lem:C-sections_unique}
    Let~$(\Gamma,\nu)$ be an Eulerian embedding of a bead-rooted Eulerian digraph~$(G,\pi(W))$ in a disc~$\Delta$ and let~$C \subset \Gamma$ be an internal circle. If~$F_1,F_2$ are minimal~$C$-sections then~$X(F_1) = X(F_2)$.
\end{lemma}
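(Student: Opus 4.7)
The plan is to reduce the uniqueness of minimal $C$-sections to the already-established uniqueness of minimal $C$-rings (\cref{lem:6.6}) by transferring the problem into the underlying undirected graph of the induced line-graph embedding. Almost all of the heavy lifting has already been done: \cref{lem:rings_are_sections} gives the precise correspondence between minimal $C$-sections in $\Gamma$ and minimal $\ell(C)$-rings in $\Gamma_\ell$, and part (v) of \cref{obs:properties_of_linegraph_embedding} tells us that, up to choosing an equivalent representative, a cut-cycle $F$ in $\Gamma$ and the associated $O$-arc $F'$ in $\Gamma_\ell$ satisfy $X(F) = I(F')$.

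Concretely, I would proceed as follows. Let $F_1, F_2$ be minimal $C$-sections in $\Gamma$. First, invoke \cref{lem:rings_are_sections} twice to obtain equivalent cut-cycles $F_1', F_2'$ such that each $F_i'$ is a minimal $\ell(C)$-ring in $\Gamma_\ell$. Equivalence (in the sense of \cref{def:X(F)_I(F)}) immediately yields
\[
X(F_i) = X(F_i') \qquad \text{for } i=1,2.
\]
Next, observe that $\ell(C)$ is a circle in $\ell(\Gamma;\Delta)$ by \cref{obs:circles_are_circles_in_linegraph}, so \cref{lem:6.6}, applied to the embedded undirected graph underlying $\ell(\Gamma;\Delta)$ in the disc $\Delta$, gives
\[
I(F_1') = I(F_2').
\]
Finally, part (v) of \cref{obs:properties_of_linegraph_embedding} asserts $X(F_i') = I(F_i')$ for each $i$. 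Chaining these equalities gives
\[
X(F_1) \;=\; X(F_1') \;=\; I(F_1') \;=\; I(F_2') \;=\; X(F_2') \;=\; X(F_2),
\]
which is exactly the conclusion of the lemma.

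There is essentially no obstacle here, since the topological subtleties (rerouting rings so that they agree on the embedding, relating edge-intersections of $O$-arcs in $\Gamma_\ell$ with vertex-intersections of cut-cycles in $\Gamma$) were already absorbed into \cref{obs:properties_of_linegraph_embedding} and \cref{lem:rings_are_sections}. The only thing one needs to verify carefully while writing the proof is that the chosen equivalent representatives $F_1', F_2'$ are again minimal on both sides of the translation; this is exactly the content of \cref{lem:rings_are_sections} and so requires no new argument. Thus the lemma follows by a short two-step translation: \emph{section in $\Gamma$ $\leftrightarrow$ ring in $\Gamma_\ell$}, apply \cref{lem:6.6}, translate back.
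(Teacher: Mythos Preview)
Your proposal is correct and essentially identical to the paper's proof: both translate the minimal $C$-sections to minimal $\ell(C)$-rings in the induced line-graph embedding via \cref{lem:rings_are_sections} and \cref{obs:properties_of_linegraph_embedding}(v), apply \cref{lem:6.6} to obtain $I(F_1')=I(F_2')$, and then use $X(F_i)=I(F_i')$ to conclude. The only cosmetic difference is that the paper first invokes part~(v) to produce the $O$-arcs $F_i'$ and then cites \cref{lem:rings_are_sections} to certify they are minimal rings, whereas you invoke \cref{lem:rings_are_sections} directly to obtain both at once; the content is the same.
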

\begin{proof}
Let~$(\Gamma_i,\nu_i)$ be embeddings of the (up-stitched) bead-rooted graphs~$(G^{F_i},\pi(\overline{X(F_i)})$ in~$\Delta(F_i)$ as given by \cref{def:induced_embedding_of_bead-rooted_graph_in_cut-cycle} for~$i=1,2$. Since~$F_1,F_2$ are~$C$-sections we know that~$C \subset \Gamma_i$ is an internal cycle for~$i=1,2$.

Let~$(\Gamma_\ell,\nu_\ell)$ be the embedding of the underlying undirected graph of~$\ell(\Gamma;\Delta)$ induced by~$\Gamma$. Let~$C_\ell \coloneqq \ell(C) \subseteq \ell(\Gamma;\Delta)$. By~$(v)$ of \cref{obs:properties_of_linegraph_embedding} there exist~$O$-arcs~$F_1',F_2'$ in~$\ell(\Gamma;\Delta)$ with~$X(F_i) = I(F_i')$,~$\rho(F_i) = \alpha(F_i')$ and such that~$C_\ell\subset \ell(\Gamma;\Delta)[\Delta(F_i);\Gamma_\ell]$ for~$i=1,2$, where the latter is indeed a digraph by \cref{obs:graphs_inside_Oarcs}. 

By \cref{lem:rings_are_sections} we deduce that~ $F_1'$ and~$F_2'$ are minimal~$C_\ell$-rings, whence~$I(F_1') = I(F_2')$ and thus using~$X(F_i) = I(F_i')$ the claim follows.
\end{proof}

In particular, we may unambiguously define~$X(C) \coloneqq X(F)$ for any minimal~$C$-section~$F$.

\smallskip

While \cref{lem:rings_are_sections} lets it seem tempting to simply work with the ``undirected foundations'', we will need stronger assumptions on our foundations. In regard of \cref{obs:Foundation_pieces_induction_step} and recalling \cref{thm:wqo_bounded_carvingwidth_knitworks} it makes sense to impose that the links $\mathfrak{M}(\rho(X(F_i)))$ ``coming from'' the foundation are well-linked. 

We define the following.

\begin{definition}[Directed $(r,s)$-nest]
    Let~$(G,\pi(W))$ be a bead-rooted Eulerian digraph with embedding~$(\Gamma,\nu)$ in a disc~$\Delta$ with cuff~$\zeta_1$, and fix~$r,s \geq 0$. A directed \emph{$(r,s)$-nest} in~$\Gamma$ is a sequence~$(C_1,\ldots,C_s)$ of edge-disjoint internal circles of~$\Gamma$ such that
    \begin{itemize}
        \item for~$1 \leq i < i' \leq s$, $\Delta(C_{i'}) \subseteq \Delta(C_i)$,
        \item $C_i$ and~$C_{i+1}$ wind in opposite direction around~$\zeta_1$ (fixing an orientation of $\zeta_1)$,
        \item there are~$r$ mutually edge-disjoint paths~$(P_1,\ldots,P_r)$ of~$G$ between~$C_1$ and~$C_s$. 
    \end{itemize}

    Let~$k \coloneqq |\nu(E(G)) \cap \bd(\Delta)|$. A directed circle~$C$ in $G$ is a \emph{nucleus (of~$\Gamma$)} if~$\kappa(\zeta_1,C;\Gamma) < k$ and there is a directed~$(\kappa(\zeta_1,C;\Gamma)+1,k)$-nest~$(C_1,\ldots,C_k)$ with~$C_k = C$.
\end{definition}
\begin{remark}
   Usually directed nests have the additional assumptions that the paths are alternating~$(V(C_1),V(C_s))$ and~$(V(C_s),V(C_1))$-paths, which we immediately get due to the embedding restrictions and the Eulerianness of~$G$ (recall that any cut-cycle in a $2$-cell embedding is alternating \cref{lem:cut_cycles_are_alternating}).

   Note further that by \cref{thm:4_reg_swirl} large directed nests witness large swirls---for they wtiness large treewidth---and, in fact, it can be shown that the embedding restrictions yield a linear dependency. However, since we do not need this here, we omit the details.
\end{remark}

For simplicity will omit the word ``directed'' when talking about directed nests whenever it does not cause confusion. Similarly we may talk about undirected nests in a digraph $G$, meaning undirected nests in its underlying undirected graph.

Analogously to \cref{obs:subseq_of_nest_is_nest_and_nice} we have the following; the proof uses standard techniques and is straightforward and thus left to the reader (note that our nests use alternating cycles by definition and the linearity can be guaranteed using \cref{obs:linkage_gives_linear_linkage}).
\begin{observation}\label{obs:subseq_of_nest_is_nice_nest_dir}
      Let~$(G,\pi(W))$ be a bead-rooted Eulerian digraph with Eulerian embedding~$(\Gamma,\nu)$ in a disc~$\Delta$ and~$r,s,t \in \N$. Let $(C_1,\ldots,C_s)$ be a directed $(r,s)$-nest in $\Gamma$. Then there exist~$r$ mutually edge-disjoint paths~$P_1,\ldots,P_r$ such that for every~$1 \leq i \leq r$ and every~$1\leq j \leq s$, ~$P_i \cap C_j$ is a linear subpath in~$P_i$ and~$C_j$ respectively, and~$(C_1,\ldots,C_s;P_1,\ldots,P_r)$ is a directed~$(r,s)$-nest in~$\Gamma$. Further let~$1 \leq i_1 < \ldots < i_t \leq s$. Then~$(C_{i_1},\ldots,C_{i_t};P_1,\ldots,P_r)$ is a directed~$(r,t)$-nest in~$\Gamma$.
\end{observation}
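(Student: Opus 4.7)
The proof is structurally analogous to the undirected statement \cref{obs:subseq_of_nest_is_nest_and_nice}, but worked out on edge-disjoint directed linkages. The plan is to start from an arbitrary witnessing linkage, apply the linearity upgrade of \cref{obs:linkage_gives_linear_linkage}, and then iteratively reroute each path along arcs of each cycle so that each pairwise intersection becomes a single linear subpath. The inheritance of the nest property on the subsequence $C_{i_1},\ldots,C_{i_t}$ is then essentially automatic once the first part has been established.

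First I would invoke the definition of a directed $(r,s)$-nest to obtain $r$ mutually edge-disjoint paths $Q_1,\ldots,Q_r$ in $G$ between $C_1$ and $C_s$, and use \cref{obs:linkage_gives_linear_linkage} to replace them by a linear linkage $P_1^{0},\ldots,P_r^{0}$ of the same type. Every $C_j$ is an internal circle in the $2$-cell embedding and bounds a disc $\Delta(C_j)\subset\Delta$; by \cref{obs:cut-cycle_induce_cut,lem:cut_cycles_are_alternating} it is an alternating cut-cycle. I would then process pairs $(i,j)$ one by one and, whenever $P_i \cap C_j$ consists of more than one maximal subpath, reroute. Concretely, if $P_i$ meets $V(C_j)$ at two points $u$ and $v$ with a subpath $R\subseteq P_i$ between them whose interior lies in $\Delta\setminus\Delta(C_j)$ (or symmetrically in $\Delta(C_j)$), replace $R$ by the unique arc of $C_j$ from $u$ to $v$ consistent with the orientation of the linear directed path $P_i$. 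Since $C_j$ is an alternating cut-cycle, the directed arc of $C_j$ in the correct orientation exists; since the $C_j$ are edge-disjoint from each other and the rerouted segment uses only edges of $C_j$ in the cylindrical region between $C_{j-1}$ and $C_{j+1}$, each replacement strictly decreases a suitable potential (for instance, the total number of maximal components of $P_i \cap C_j$). Edge-disjointness within the updated linkage is preserved because any two paths $P_i,P_{i'}$ restricted to the cylinder between $C_{j-1}$ and $C_{j+1}$ are homotopic there by \cref{obs:linkages_in_euler_embedded_cylinder_are_homotopic}, so the arc we use to reroute $P_i$ can be chosen disjoint from the other paths (and if it collides we reroute the obstructing path first and iterate).

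The main obstacle is precisely this rerouting step: arguing that one can systematically collapse each intersection $P_i\cap C_j$ to a single linear subpath \emph{without} creating new non-linear intersections elsewhere or destroying edge-disjointness. My plan is to order the reroutings cycle-by-cycle from $C_1$ to $C_s$, and within each $C_j$ process the paths in the cyclic order they meet $C_j$, so that a rerouting along an arc $\alpha\subseteq C_j$ is always the innermost available one. Combined with \cref{obs:linkages_reduced_to_cylinder} (applied to the cylinder $\overline{\Delta(C_{j-1})\setminus\Delta(C_{j+1})}$) and the homotopy property, an innermost rerouting never crosses any other path and therefore preserves edge-disjointness. A termination argument via lexicographic decrease of the multiset $(\#\text{components of }P_i\cap C_j)_{i,j}$ finishes this part, and the resulting paths still connect $C_1$ to $C_s$ with the required subpath structure, yielding the desired $(C_1,\ldots,C_s;P_1,\ldots,P_r)$.

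For the final sentence of the statement, fix any $1\leq i_1<\cdots<i_t\leq s$. The inclusion $\Delta(C_{i_{k+1}})\subseteq\Delta(C_{i_k})$ is immediate from the nested discs in the original nest. Replace each $P_\ell$ by its subpath $P_\ell'$ between its first entry into $C_{i_1}$ and its last exit from $C_{i_t}$; this is well-defined and linear because $P_\ell\cap C_{i_1}$ and $P_\ell\cap C_{i_t}$ are linear subpaths by the first part, and it meets every intermediate $C_{i_k}$ in a linear subpath for the same reason. The paths $P_1',\ldots,P_r'$ remain edge-disjoint and witness the required connectivity between $C_{i_1}$ and $C_{i_t}$, so $(C_{i_1},\ldots,C_{i_t};P_1',\ldots,P_r')$ is a directed $(r,t)$-nest in $\Gamma$, completing the proof.
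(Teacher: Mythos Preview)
Your proposal follows essentially the approach the paper has in mind: the paper explicitly omits the proof, calling it ``standard techniques'' and ``left to the reader'', with the single hint that linearity follows from \cref{obs:linkage_gives_linear_linkage} and that the alternating orientation of consecutive circles is relevant. Your plan---upgrade to a linear linkage, then iteratively reroute along arcs of the $C_j$ to collapse each intersection to a single subpath, ordered innermost-first---is exactly the standard technique being alluded to, and the subsequence part is indeed automatic.

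One terminological slip worth fixing: you repeatedly call the $C_j$ ``alternating cut-cycles'' and invoke \cref{obs:cut-cycle_induce_cut,lem:cut_cycles_are_alternating}. In the paper's language the $C_j$ are \emph{internal circles} of $G$ (directed subgraphs), not cut-cycles (which are $\Gamma$-tracing curves in the surface). What you actually need is that each $C_j$ is a directed circle, so that between any two vertices on $C_j$ there is a unique directed arc in each direction, and that consecutive $C_j,C_{j+1}$ have opposite orientation (this is part of the definition of a directed nest and is what the paper's parenthetical hint refers to). The rerouting then works because when $P_i$ has two maximal encounters with $C_j$, the arc of $C_j$ you swap in is forced by orientation, and the innermost processing together with \cref{obs:linkages_in_euler_embedded_cylinder_are_homotopic} prevents new edge-conflicts. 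Once you replace the cut-cycle language with circle language, the argument is sound.
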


Clearly every nucleus comes with a boss.
\begin{lemma}\label{lem:from_nucleus_to_boss}
    Let~$(G,\pi(W))$ be a bead-rooted Eulerian digraph with embedding~$(\Gamma,\nu)$ in a disc~$\Delta$. Let~$(\Gamma_\ell,\nu_\ell)$ be an embedding of the undirected underlying graph~$\ell(\Gamma;\Delta)$. Let~$C$ be a nucleus in~$\Gamma$, then omitting the directions,~$\ell(C)$ is a boss in~$\Gamma_{\ell}$.
\end{lemma}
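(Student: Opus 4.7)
The plan is to translate each ingredient of the \emph{nucleus}-definition in $\Gamma$ verbatim into the \emph{boss}-definition in $\Gamma_\ell$ by passing through the linegraph correspondence of \cref{obs:properties_of_linegraph_embedding,obs:circles_are_circles_in_linegraph,obs:linkages_stay_same_in_linegraph,lem:rings_are_sections}. Let $k \coloneqq |\nu(E(G))\cap \bd(\Delta)|$ and $k' \coloneqq |\nu_\ell(V(\ell(\Gamma;\Delta)))\cap \bd(\Delta)|$, and let $r \coloneqq \kappa(\zeta_1,C;\Gamma)+1$, with witnessing directed $(r,k)$-nest $(C_1,\ldots,C_k;P_1,\ldots,P_r)$ in $\Gamma$ where $C_k=C$; by \cref{obs:subseq_of_nest_is_nice_nest_dir} we may assume the $P_i$ meet each $C_j$ in a linear subpath.

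First I would equate the two ``connectivity'' numbers. By item (ii) of \cref{obs:properties_of_linegraph_embedding} every boundary edge of $(\Gamma,\nu)$ corresponds to a boundary vertex of $(\Gamma_\ell,\nu_\ell)$ and vice versa, so $k=k'$. Next, by \cref{lem:rings_are_sections} any minimal $C$-section in $\Gamma$ is (equivalent to) a minimal $\ell(C)$-ring in $\Gamma_\ell$, and the defining quantities coincide: $\alpha(F')=\delta(F)$. Since $\kappa(\zeta_1,C;\Gamma)$ is the minimum order of a $C$-section by \cref{cor:Menger_on_a_disc} and, in the undirected setting, $\lambda(\ell(C))$ is the minimum order of an $\ell(C)$-ring, this gives $\lambda(\ell(C))=\kappa(\zeta_1,C;\Gamma)<k=k'$.

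Then I would promote the nest from $\Gamma$ to $\Gamma_\ell$. By \cref{obs:circles_are_circles_in_linegraph}(i), each directed circle $C_i$ yields an (undirected) cycle $\ell(C_i)\subseteq \ell(\Gamma;\Delta)$; since the $C_i$ are pairwise edge-disjoint (they are so even as edge-disjoint internal circles of $\Gamma$, being cycles in the nest), the $\ell(C_i)$ are pairwise vertex-disjoint in $\ell(\Gamma;\Delta)$. For the nesting condition, by construction of $(\Gamma_\ell,\nu_\ell)$ in \cref{def:linegraph_induced_by_bead_embedding} each vertex $\nu_\ell(e)$ and incidence line lies in an arbitrarily small neighbourhood of $\nu(e)$ or of an adjacent face. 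Hence, after a small perturbation of $\nu(C_i)$ into $\nu_\ell(\ell(C_i))$ (which does not affect the disc it bounds), the chain of inclusions $\Delta(C_k)\subseteq \cdots \subseteq \Delta(C_1)$ translates into $\Delta(\ell(C_k))\subseteq \cdots\subseteq \Delta(\ell(C_1))$ in $\hat\Delta$. For the linking paths, \cref{obs:circles_are_circles_in_linegraph}(ii) turns each $P_i$ into a linear path $\ell(P_i)\subseteq \ell(\Gamma;\Delta)$, and by \cref{obs:linkages_stay_same_in_linegraph} the fact that the $P_i$ are edge-disjoint implies that the $\ell(P_i)$ are pairwise vertex-disjoint. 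Since $P_i$ has one endpoint on $C_1$ and one on $C_k$, the first/last edges of $P_i$ lie in $E(C_1)\cup E(C_k)$, so $\ell(P_i)$ has one endpoint on $\ell(C_1)$ and one on $\ell(C_k)$. Thus $(\ell(C_1),\ldots,\ell(C_k);\ell(P_1),\ldots,\ell(P_r))$ is an undirected $(r,k')$-nest with $\ell(C_k)=\ell(C)$, and together with $\lambda(\ell(C))+1=r$ and $\lambda(\ell(C))<k'$ this is exactly the statement that $\ell(C)$ is a boss of $\Gamma_\ell$.

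The only subtle step is the middle paragraph: one must make sure that ``$\ell(C)$-ring'' in the \emph{undirected} graph $\ell(\Gamma;\Delta)$ indeed has the same minimum order as the $C$-section in the \emph{directed} Eulerian embedding. This is precisely where the Eulerianness is used, via \cref{cor:Menger_on_a_disc} and the disc-case of the translation in \cref{lem:rings_are_sections}; once that equality is established, the rest is a direct translation of edge-disjointness into vertex-disjointness through the linegraph.
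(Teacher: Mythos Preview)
Your proposal is correct and follows essentially the same route as the paper: the paper's proof is a one-liner invoking (v) of \cref{obs:properties_of_linegraph_embedding} and \cref{lem:rings_are_sections} together with the remark that the directed nest witnessing that $C$ is a nucleus transfers to an undirected nest witnessing that $\ell(C)$ is a boss, and you have simply spelled out these three ingredients (boundary count, $\lambda(\ell(C))=\kappa(\zeta_1,C;\Gamma)$, and the nest transfer via the linegraph correspondence) in detail. Your use of \cref{obs:subseq_of_nest_is_nice_nest_dir} to ensure the first and last edges of each $P_i$ lie on $C_1$ and $C_k$ is the right way to make the ``paths between $\ell(C_1)$ and $\ell(C_k)$'' condition precise.
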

\begin{proof}
    This follows at once from (v) of \cref{obs:properties_of_linegraph_embedding} together with \cref{lem:rings_are_sections}, noting that the directed nest witnessing that~$C$ is a nucleus transfers to an undirected nest witnessing that~$\ell(C)$ is a boss. 
\end{proof}

Combining the above results we derive the following analogue of \cref{lem:6.7} for nuclei.

\begin{corollary}\label{cor:nuclei_are_disjoint}
     Let~$(G,\pi(W))$ be a bead-rooted Eulerian digraph with embedding~$(\Gamma,\nu)$ in a disc~$\Delta$ and let~$C_1,C_2$ be nuclei. Then either~$X(C_1) \subseteq X(C_2)$ or~$X(C_2)\subseteq X(C_1)$ or~$X(C_1) \cap X(C_2) = \emptyset$.
\end{corollary}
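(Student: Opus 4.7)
The plan is to reduce to the undirected statement \cref{lem:6.7} via the line-graph embedding and then translate the resulting disc-structure back to the induced cuts in~$G$. Let $(\Gamma_\ell,\nu_\ell)$ denote the embedding of the underlying undirected graph of $\ell(\Gamma;\Delta)$ induced by $(\Gamma,\nu)$ as in \cref{def:linegraph_induced_by_bead_embedding}. By \cref{lem:from_nucleus_to_boss}, $\ell(C_1)$ and $\ell(C_2)$ are bosses in $\Gamma_\ell$, so by \cref{lem:6.7} applied in $\Gamma_\ell$ we are in one of the three cases: either $\Delta(\ell(C_1))\subseteq\Delta(\ell(C_2))$, or $\Delta(\ell(C_2))\subseteq\Delta(\ell(C_1))$, or $\Delta(\ell(C_1))\cap\Delta(\ell(C_2))=\emptyset$. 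Note that $X(C_i)=X(F_i)$ for any minimal $C_i$-section $F_i$ by \cref{lem:C-sections_unique}, and by \cref{lem:rings_are_sections} we may pick $F_i$ so that, viewed as an $O$-arc in $\Gamma_\ell$, it is a minimal $\ell(C_i)$-ring satisfying $X(F_i)=I(F_i)$ via (v) of \cref{obs:properties_of_linegraph_embedding}.

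First suppose $\Delta(\ell(C_1))\cap\Delta(\ell(C_2))=\emptyset$. Since each $\ell(C_i)$ is contained in the interior of the disc bounded by any one of its rings, we may start with rings $R_1,R_2$ drawn arbitrarily close to $\ell(C_1),\ell(C_2)$ respectively so that $R_1$ and $\ell(C_2)$ bound disjoint open discs and symmetrically for $R_2,\ell(C_1)$. Applying \cref{lem:6.5} yields $\ell(C_1)$-ring $R_1'$ and $\ell(C_2)$-ring $R_2'$ with $\Delta(R_1')\cap\Delta(R_2')=\emptyset$. By the uniqueness statement \cref{lem:6.6} any minimal $\ell(C_i)$-ring $F_i'$ satisfies $I(F_i')\subseteq I(R_i')$, and hence the corresponding minimal $C_i$-sections $F_i$ in $\Gamma$ (via \cref{lem:rings_are_sections} and (iv) of \cref{obs:properties_of_linegraph_embedding}) satisfy $X(F_1)\cap X(F_2)\subseteq I(R_1')\cap I(R_2')=\emptyset$, which gives $X(C_1)\cap X(C_2)=\emptyset$.

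Now suppose without loss of generality $\Delta(\ell(C_1))\subseteq\Delta(\ell(C_2))$. Pick a minimal $\ell(C_2)$-ring $R_2^\star$; since a ring contains its boss in its interior, $\Delta(\ell(C_2))\subseteq\Delta(R_2^\star)$ and therefore $\ell(C_1)\subset\Delta(R_2^\star)$. Restricting attention to the disc $\Delta(R_2^\star)$, we may apply the same argument as before (inside this disc, $\ell(C_1)$ still admits its defining nest because all the nest structure is contained within $\Delta(\ell(C_2))\subseteq\Delta(R_2^\star)$) to obtain a minimal $\ell(C_1)$-ring $R_1^\star$ with $\Delta(R_1^\star)\subseteq\Delta(R_2^\star)$. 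Translating both to minimal $C_i$-sections in $\Gamma$ via \cref{lem:rings_are_sections} and using $X(F)=I(F)$ under the correspondence in (v) of \cref{obs:properties_of_linegraph_embedding}, we conclude $X(C_1)\subseteq X(C_2)$, completing the proof.

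The main obstacle is making sure that in each case we can in fact select the minimal rings consistently with the nesting or disjointness of the bosses, i.e., that the conclusions of \cref{lem:6.5}, \cref{lem:6.6}, and \cref{lem:6.7} interact correctly. In particular, in the nested case one has to verify that a minimal $\ell(C_1)$-ring computed globally already sits inside $\Delta(R_2^\star)$, which follows because any $\ell(C_1)$-ring not contained in $\Delta(R_2^\star)$ could be traded for one inside it using \cref{lem:6.5}, with no increase in the number of vertex intersections.
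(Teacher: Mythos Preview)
Your approach is essentially the same as the paper's: the paper does not give an explicit proof but states the corollary as following from ``combining the above results,'' meaning exactly the passage to the line-graph via \cref{lem:from_nucleus_to_boss}, an application of \cref{lem:6.7} to the bosses $\ell(C_1),\ell(C_2)$, and the translation back through \cref{lem:rings_are_sections} and \cref{obs:properties_of_linegraph_embedding}. Your sketch spells this out correctly; the one place that deserves a little more care is the bridge from the trichotomy on $\Delta(\ell(C_i))$ (the discs bounded by the boss circles themselves) to the trichotomy on $I(\ell(C_i))=X(C_i)$ (the interiors of the minimal rings), which in the original \cite{GMVIII} setting is exactly how the foundation is built from (6.5)--(6.7), so your use of \cref{lem:6.5} and \cref{lem:6.6} there is the right move.
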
%
     
We conclude the following important result needed to strengthen our assumptions on nuclei.

\begin{lemma}\label{lem:nuclei_are_well-linked}
Let~$(\Gamma,\nu)$ be an Eulerian embedding of a bead-rooted Eulerian digraph $(G,\pi(W))$ in a disc~$\Delta$ with cuff~$\zeta_1$. Let~$k \coloneqq \Abs{\rho(W)}$. Let~$C_k$ be a directed circle in~$G$ such that~$\ell \coloneqq \kappa(\zeta_1,C_k;\Gamma) < k$. 

Let~$(C_1,\ldots,C_k,C_{k+1},\ldots,C_{3k+1})$ be a directed~$(\ell+1,3k+1)$-nest in~$\Gamma$.
Let~$F$ be a minimal~$C_k$-section. Then the link~$\mathfrak{M}\big(\rho(X(F))\big)$ is reliable and well-linked.
\end{lemma}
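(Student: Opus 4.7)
Reliability of $\mathfrak{M}\big(\rho(X(F))\big)$ is immediate from the remark following \cref{def:types_of_linkages_on_a_cut}: by construction, taking subsets of a feasible type yields a feasible type, so $\mathfrak{M}\big(\rho(X(F))\big)$ is reliable. The entire work of the proof goes into establishing well-linkedness.

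The plan is to exploit the geometry of the minimal $C_k$-section together with the structure of the alternating nest. Let $\Delta(F) \subseteq \Delta$ be the disc bounded by $F$. Since $\delta(F) = \ell < k$ and there are $\ell + 1$ edge-disjoint paths $P_1,\dots,P_{\ell+1}$ connecting $C_1$ and $C_{3k+1}$ in the nest, a pigeon-hole argument on $\rho(F)$ forces $\Delta(C_1) \subseteq \Delta(F)$: otherwise some $P_i$ would have to cross $F$ and the $\ell+1$ paths could not be routed edge-disjointly through a cut of size $\ell$. Consequently, the entire alternating nest $(C_1,\dots,C_{3k+1})$ together with the witnessing paths $P_1,\dots,P_{\ell+1}$ lives inside $\Delta(F)$, and by \cref{obs:C-sec_is_alternating} the cyclic ordering on $\rho(F) = \rho(X(F))$ alternates between in- and out-edges.

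Now fix subsets $E_j' \subseteq \rho^{\pm}(X(F))$ with partitions $E_j' = X_j \cup Y_j$ as in \cref{def:well-linked_links}. We need to produce an $(E_1', E_2')$-linkage in $G[X(F)]$ which splits into a perfect $X_1\!\to\!X_2$ matching and a perfect $Y_1\!\to\!Y_2$ matching. First, by \cref{lem:boundary_linked_Menger_for_embeddings_in_cylinder} applied in the cylinder between $F$ and $C_{3k+1}$, we obtain a linkage of order $\ell$ from $\rho(F)$ to $\rho$-edges on $C_{3k+1}$; by \cref{obs:linkages_in_euler_embedded_cylinder_are_homotopic} the paths in this linkage can be chosen pairwise homotopic, so they preserve the cyclic order of $\rho(F)$ when they arrive at $C_{3k+1}$. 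It then remains to ``rewire'' these $\ell$ endpoints on the innermost part of the nest to realise the desired matching.

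The key step, and the main technical obstacle, is to show that the nested alternating circles act as a routing device capable of realising any desired pairing. Concretely, between each pair of consecutive alternating circles $C_i, C_{i+1}$ inside $\Delta(F)$ one can, using the alternation of directions together with the Eulerianness of $G$, perform a single ``swap'' that exchanges the endpoints of any two adjacent paths arriving from above. Since the alternating nest contains $2k+2 \geq 2\lceil \ell/2 \rceil + 2$ such pairs of consecutive circles, iterating adjacent swaps suffices to sort the $\ell$ endpoints into any prescribed cyclic order; in particular the targets $X_2$ and $Y_2$ can be grouped so that the naturally homotopic linkage realises the prescribed partition into $M_X$ and $M_Y$. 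The $\ell + 1$ path linkage $P_1,\dots,P_{\ell+1}$ supplied by the nest provides the global edge-disjoint infrastructure needed for all swaps to coexist without conflicts. Combining the homotopic Menger linkage from Step~1 with the rearrangement from Step~2 yields the required perfect matching of type $M_X \cup M_Y$ inside $G[X(F)]$, establishing well-linkedness. The bulk of the work is therefore a careful topological/combinatorial book-keeping argument showing that the swap operation is indeed available and can be iterated, which relies crucially on the alternation of the nest, on $F$ being alternating (\cref{lem:cut_cycles_are_alternating}), and on the fact that each swap is local to an annular region between two consecutive $C_i$'s.
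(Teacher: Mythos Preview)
Your overall plan—reliability for free, Menger to push the boundary edges down to a deep circle, then use the alternating nest to rewire—matches the paper's structure. The gap is in the rewiring step.

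First, a small point: you do not need (and have not established) that $\Delta(C_1)\subseteq\Delta(F)$. All that follows from $F$ being a $C_k$-section is $\Delta(C_k)\subseteq\Delta(F)$, hence $C_k,\dots,C_{3k+1}$ sit inside $\Delta(F)$; the outer circles $C_1,\dots,C_{k-1}$ may well stick out. The paper only ever uses the inner $2k+1$ circles for routing, and the outer $k$ circles only serve (via the claim that $F$ is already a minimal $C_s$-section for every $s\geq k$) to certify that Menger gives a full $\ell$-linkage from $\rho(F)$ to $C_{3k+1}$.

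The real issue is your ``swap then sort'' argument. Sorting $\ell$ endpoints into an arbitrary cyclic order by adjacent transpositions costs $\Theta(\ell^2)$ swaps in the worst case, and you only have $O(k)$ pairs of circles available. Even restricting to the 2-colouring needed for well-linkedness, grouping a 0--1 sequence by adjacent swaps still has $\Theta(\ell^2)$ inversions in the worst case, so the counting you give does not close. You also leave unspecified what a ``swap'' is at the level of edge-disjoint paths and why one annular layer suffices for it.

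The paper avoids sorting altogether. The key observation you are missing is a pigeonhole on the alternating boundary: since $\pi_F$ alternates between in- and out-edges and the hypothesis $|E_i^-|=|E_i^+|$ forbids the two colours from alternating perfectly, there must exist two \emph{cyclically consecutive} edges $e_j,e_{j+1}$ of the \emph{same} colour (one in-, one out-). The Menger paths $P_j,P_{j+1}$ are homotopic and can be spliced together by a segment of $C_{k+1}$ or of $C_{k+2}$—exactly one of the two orientations works—yielding a single $(e_j,e_{j+1})$-path inside $X(F)$, edge-disjoint from the remaining $P_i$'s. Delete this pair and the two used circles, and recurse on the remaining $\ell-2$ boundary edges and circles $C_{k+3},\dots,C_{3k+1}$. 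This consumes two circles per pair, so $\ell<k$ circles overall, well within the $2k+1$ available.
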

\begin{proof}
    Recall that for the disc we have $\rho(W) = \nu(E(G)) \cap \bd(\Delta)$ by \cref{def:bead-rooted_graph}.
    By \cref{obs:C-sec_is_alternating}~$F$ is alternating; let~$\pi_F \coloneqq (e_1,\ldots,e_\ell)$ be an alternating order of~$\rho(F) = \rho(X(F))$ with~$\ell \in 2\N$ by Eulerianness.

    \begin{claim}\label{claim:deeper_sections_well_connected}
     Let~$F'$ be a~$C_s$ section for any~$s\geq k$, then~$\delta(F') \geq \ell$. Furthermore~$F$ is a minimal~$C_s$ section.
    \end{claim}
    \begin{claimproof}
    The claim is clear for $s=k$ whence $s >k$. If~$F'$ is a~$C_k$-section then~$\delta(F') \geq \ell$ by assumption on the minimality of~$F$. Thus assume that~$F' \cap \Delta(C_k)^\circ \neq \emptyset$ where~$\Delta(C_k)^\circ$ denotes the interior of the disc. In particular~$F'$ is not a $C_k$-section which implies that either~$\overline{\Delta\setminus\Delta(C_k)} \cap F' \neq \emptyset$ or~$F' \subset \Delta(C_k)$. Note that $\Delta(C_s) \subset \Delta(F')$ by definition of $C_s$-section. For the latter, since $(C_k,\ldots,C_{3k+1})$ is a directed~$(\ell+1,2k+1)$-nest by \cref{obs:subseq_of_nest_is_nice_nest_dir}, a version of \cref{lem:boundary_linked_Menger_for_embeddings_in_cylinder} (taking~$C_k,C_{s}$ as cuffs) implies~$\delta(F') \geq \ell+1$. Analogously, we derive more generally that~$F' \not \subseteq \Delta(C_1)$.

    Thus~$F' \cap \Delta(C_1)^\circ \neq \emptyset$ and~$\overline{\Delta\setminus\Delta(C_k)} \cap F' \neq \emptyset$. But then~$F' \cap \nu(E(C_i)) \neq \emptyset$ for every~$1 \leq i \leq k$. since~$C_1, \ldots, C_k$ are edge-disjoint it follows~$\delta(F') \geq k \geq \ell+1$. Thus we proved that if~$F'$ is not a~$C_k$-section then~$\delta(F') \geq \ell+1$. But note that every~$C_k$-section~$F''$ is a~$C_s$-section---since $\Delta(C_s) \subseteq \Delta(C_k)$---witnessing~$\delta(F'') = \ell$. This together with \cref{lem:C-sections_unique} implies that~$F$ is a minimal~$C_s$-section.
    \end{claimproof}
    
 Recall that for a path~$L \in G$,~$V(L)$ denotes the set of internal vertices of the path (see \cref{def:paths}).
   By the \cref{claim:deeper_sections_well_connected} we derive that~$F$ is a minimal~$C_s$-section for every~$k \leq s \leq 3k+1$. In particular by \cref{lem:boundary_linked_Menger_for_embeddings_in_cylinder} there exists a~$\{\rho(F),\rho(C_{3k+1})\}$-linkage~$\LLL^\star$ of order~$\ell$ in~$G$ with~$\bigcup_{P \in \LLL}V(P) \subseteq X(F)$.  Using standard re-routing techniques, similar to \cref{obs:subseq_of_nest_is_nice_nest_dir}, we may assume that every path in~$\LLL^\star$ intersects the circles~$C_j$ for~$k \leq j \leq 3k+1$ exactly in linear subpaths. Hence, by \cref{obs:linkages_in_euler_embedded_cylinder_are_homotopic} the paths in~$\LLL^\star$ are pairwise homotopic in~$\Delta(F) \setminus \Delta(C_{3k+1})^\circ$ and moreover, setting $\Sigma_j \coloneqq \Delta(F)\setminus \Delta(C_j)^\circ$, the paths in~$\LLL_j^\star \coloneqq \restr{\LLL^\star}{\Sigma_j}$ are pairwise homotopic in~$\Sigma_j$ for every~$k \leq j \leq 3k+1$. 
   
   Finally let~$E_1^- \cup E_2^- = \rho^-(X(F))$ and~$E_1^+ \cup E_2^+ = \rho^+(X(F))$ be partitions such that~$\Abs{E_i^-} = \Abs{E_i^+}$ for~$i=1,2$. 
   \begin{claim}
       There is a~$\big(\rho^-(X(F)),\rho^+(X(F))\big)$-linkage~$\LLL$ in~$G$ such that~$\bigcup_{P \in \LLL}V(P) \subseteq X(F)$ and~$\tau(\LLL) = M_1 \cup M_2$ where~$M_i \in \operatorname{Match}(E_i^-,E_i^+)$ is a perfect matching for every~$i=1,2$.
   \end{claim}
   \begin{claimproof}
       The proof is by induction on~$k$. Recall that~$F$ is alternating with~$\pi_F=(e_1,\ldots,e_\ell)$ and~$\ell \in 2\N$. In particular if~$e_i \in \rho^-(X(F))$ then~$e_{i+1} \in \rho^+(X(F))$ for every~$1 \leq i \leq \ell$ where we set~$e_{\ell+1}\coloneqq e_1$ for notational convenience. For every~$1 \leq i \leq \ell$ let~$P_i \in \LLL^\star$ be the unique path with~$e_i$ as an end (either starting or ending in~$e_i$). 
       
       Since~$\Abs{E_i^-} = \Abs{E_i^+}$ for~$i=1,2$, the pigeonhole principle implies the existence of~$i \in \{1,2\}$ and~$1 \leq j \leq \ell$ such that~$e_j\in E_i^-$ and~$e_{j+1} \in E_i^+$ or~$e_{j} \in E_i^+$ and~$e_{j+1} \in E_i^-$. (To see this color the edges in~$E_1^- \cup E_1^+$ and~$E_2^- \cup E_2^+$ with two colors, then since~$\pi_F$ is alternating one color will appear twice consecutively; this is the desired pair.)
       
   Without loss of generality assume~$e_j\in E_i^-$ and~$e_{j+1} \in E_i^+$ for the other case is analogous using that the cycles~$(C_k,\ldots,C_{3k})$ have alternating orientation. Now~$P_j$ is a path starting in~$e_j$ and ending in an edge~$f_j \in E(C_{3k+1})$ and~$P_{j+1}$ is a path starting in~$f_{j+1} \in E(C_{3k+1})$ and ending in~$e_{j+1}$ where~$P_j,P_{j+1}$ are edge-disjoint and homotopic in~${\Delta(F) \setminus \Delta(C_{3k+1})^\circ}$. For~$q=1,2$ let~$P_j^q,P_{j+1}^q \in \LLL^*_{k+q}$ be subpaths of~$P_1,P_2$ respectively such that~$P_j^q,P_{j+1}^q$ have their ends~$e_j,e_j^q$ and $e_{j+1},e_{j+1}^q$ in~$\rho(\Delta(F))$ and~$E(C_{k+q})$ respectively. For~$q =1,2$ let~$Q_{k+q}$ be the subpath of~$C_{k+q}$ starting in~$e_j^q$ and ending in~$e_{j+1}^q$. Finally, since the paths in~$\LLL^\star_{k+1}$ as well as the paths in~$\LLL^\star_{k+2}$ are pairwise homotopic, and since~$C_{k+1},C_{k+2}$ have opposite direction either~$P_1' \coloneqq P_j^1\circ Q_{k+1} \circ P_{j+1}^1$ or~$P_2' \coloneqq P_j^2 \circ Q_{k+2} \circ P_{j+1}^2$ is an~$(e_j,e_{j+1})$-path edge-disjoint from~$\LLL^\star \setminus \{P_1,P_2\}$. Continue inductively by finding a new pair and restricting to~$(C_{k+3},\ldots,C_{3k})$.
    This construction results in the desired linkage, concluding the proof.
   \end{claimproof}

    Note that \cref{def:well-linked_links} asks for subsets~$E^- \subseteq \rho^-(X(F))$ and~$E^+ \subseteq \rho^+(X(F))$ with~$\Abs{E^-} = \Abs{E^+}$. But now any partitions of~$E^-,E^+$ satisfying the assumptions of \cref{def:well-linked_links} can be extended to partitions satisfying the assumptions of the above claim; this concludes the proof, where the reliability follows from the fact that every subset $\LLL'\subseteq \LLL^\star$ is again a linkage.
\end{proof}

In light of \cref{lem:nuclei_are_well-linked} we call a nucleus~$C$ \emph{bombastic} if there is a minimal~$C$-section~$F$ such that~$\mathfrak{M}\big(\rho(X(F))\big)$ is well-linked. We get a refined version of \cref{cor:nuclei_are_disjoint}.

\begin{corollary}\label{cor:bombastic_nuclei_are_disjoint}
    Let~$(G,\pi(W))$ be a bead-rooted Eulerian digraph with Eulerian embedding~$(\Gamma,\nu)$ in a disc~$\Delta$ and let~$C_1,C_2$ be bombastic nuclei in $\Gamma$. Then either~$X(C_1) \subseteq X(C_2)$ or~$X(C_2)\subseteq X(C_1)$ or~$X(C_1) \cap X(C_2) = \emptyset$.
\end{corollary}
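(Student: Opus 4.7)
My plan is to observe that the corollary is an immediate specialization of \cref{cor:nuclei_are_disjoint}. Unpacking the definition introduced just before the statement, a \emph{bombastic nucleus} is defined as a nucleus $C$ for which there exists a minimal $C$-section $F$ with $\mathfrak{M}(\rho(X(F)))$ well-linked. In other words, ``bombastic'' is a strictly stronger property than ``nucleus'': it imposes an additional well-linkedness condition on the link associated to a minimal section, but does not alter or weaken any of the structural requirements that make $C$ a nucleus in the sense of the definition preceding \cref{lem:from_nucleus_to_boss}. Hence, if $C_1, C_2$ are bombastic nuclei, then they are in particular nuclei.

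The quantity $X(C_i)$ is, in both \cref{cor:nuclei_are_disjoint} and \cref{cor:bombastic_nuclei_are_disjoint}, the common vertex set $X(F_i) \subseteq V(G)$ of all minimal $C_i$-sections, which is well-defined by \cref{lem:C-sections_unique}. The bombastic property does not change this definition: a minimal $C_i$-section $F_i$ with the additional well-linkedness property exists (and is unique on the level of $X(F_i)$), but any other minimal $C_i$-section yields the same set $X(C_i)$. Thus the objects to which the dichotomy applies are literally the same as in the non-bombastic version of the statement.

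Therefore, the plan is a one-line invocation: apply \cref{cor:nuclei_are_disjoint} to $C_1, C_2$ viewed as nuclei (forgetting the bombastic enhancement) to conclude that one of $X(C_1) \subseteq X(C_2)$, $X(C_2) \subseteq X(C_1)$, or $X(C_1) \cap X(C_2) = \emptyset$ holds. I do not foresee any obstacle; the point of isolating this corollary is presumably to have a ready-made statement in the bombastic setting for use later in \cref{sec:disc}, where the induction hypothesis via \cref{thm:wqo_bounded_carvingwidth_knitworks} requires the well-linkedness that ``bombastic'' encodes, rather than to introduce any new geometric content beyond \cref{cor:nuclei_are_disjoint}.
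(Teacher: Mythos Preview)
Your proposal is correct and matches the paper's own proof, which is the single sentence ``Since bombastic nuclei are nuclei, the claim follows from \cref{cor:nuclei_are_disjoint}.'' Your additional remarks about $X(C_i)$ being defined identically in both settings are accurate and helpful, though the paper does not bother to spell them out.
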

\begin{proof}
    Since bombastic nuclei are nuclei, the claim follows from \cref{cor:nuclei_are_disjoint}.
\end{proof}

We gathered all the relevant results to define \emph{foundations} for bead-rooted Eulerian digraphs.

\begin{definition}[Foundation for~$(G,\pi(W))$]\label{def:foundation}
    Let~$(G,\pi(W))$ be a bead-rooted Eulerian digraph,~$(\Gamma,\nu)$ an Eulerian embedding of~$G$ in a disc~$\Delta$ and~$k \coloneqq \Abs{\rho(W)}$. By \cref{cor:bombastic_nuclei_are_disjoint} we may choose bombastic nuclei~$C_1,\ldots,C_t$ of~$\Gamma$ such that
    \begin{enumerate}
        \item $X(C_1),\ldots,X(C_t)$ are mutually disjoint, and
    \item for every bombastic nucleus~$C$ of $\Gamma$,~$X(C)\subseteq X(C_i)$ for some~$1 \leq i \leq t$.
    \end{enumerate}
   For~$1 \leq i \leq t$ let~$F_i$ be a minimal~$C_i$-section and, by repeated application of \cref{cor:disjoint-sections} we may assume that they bound mutually disjoint open discs and (by possibly shrinking the discs slightly)~$\Delta(F_i) \cap \Delta(F_j) \subset \nu(E(G))$ for distinct~$1 \leq i,j \leq t$. We call~$\{F_1,\ldots,F_t\}$ a \emph{foundation for~$\Gamma$}.
\end{definition}
Note that the foundation may be empty, for example if the graph does not admit bombastic nuclei (which is surely the case if its treewidth is way lower than its index, for large nests admit large treewidth).

\paragraph{From a foundation to a well-linked~$\Omega$-knitwork}

The following is straightforward by combining \cref{lem:vf-path-euler-paths} and \cref{lem:vf-path-euler-circle}.

\begin{lemma}\label{lem:from_undirected_to_directed_nest}
    Let~$(\Gamma,\nu)$ be an Eulerian embedding of a bead-rooted Eulerian digraph~$(G,\pi(W))$ in a disc~$\Delta$. Let~$F \subset \Delta$ be an $O$-arc in~$\Gamma$. Let~$r,s \in \N$ and let~$(C_1,\ldots,C_{2s+1})$ be an undirected~$(2r+1,2s+1)$ nest in the underlying undirected graph of~$G[\Delta(F);\Gamma]$. Then there is a directed~$(r,s)$-nest~$(C_1',\ldots,C_s')$ in~$G[\Delta(F);\Gamma]$ where $\Delta(C_1')\subseteq\Delta(C_1)$.
\end{lemma}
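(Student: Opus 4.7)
The plan is to convert the undirected $(2r+1,2s+1)$-nest $(C_1,\ldots,C_{2s+1};P_1,\ldots,P_{2r+1})$ into a directed $(r,s)$-nest by invoking \cref{lem:vf-path-euler-circle} to extract directed circles from sandwiched triples of undirected circles and \cref{lem:vf-path-euler-paths} to extract directed paths from paired tuples of undirected paths. The factor of two in each parameter of the undirected nest is precisely what allows this pairing.

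First I would construct the circles $C'_1,\ldots,C'_s$. For each $1 \leq i \leq s$, the pair $(C_{2i-1},C_{2i+1})$ consists of two $O$-arcs in $\Gamma$ with $\Delta(C_{2i+1}) \subseteq \Delta(C_{2i-1})$, and the sandwiched circle $C_{2i}$ lies entirely in the closed annulus $A_i \coloneqq \Delta(C_{2i-1}) \setminus \Delta(C_{2i+1})^\circ$, vertex-disjoint from $C_{2i-1}$ and $C_{2i+1}$. Traversing $C_{2i}$ and alternating with the faces of $\Gamma$ incident to $C_{2i}$ on the side facing $C_{2i-1}$ produces a cycle $C^\star_i$ in the vertex-face graph $G^*$ of $(\Gamma,\nu)$ all of whose faces lie in $A_i$, in particular contained in $\Delta(C_{2i-1})$ and disjoint from $\Delta(C_{2i+1})^\circ$. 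Applying \cref{lem:vf-path-euler-circle} with $C \coloneqq C_{2i-1}$, $C' \coloneqq C_{2i+1}$, and $C^\star \coloneqq C^\star_i$ yields directed circles $S^i_1, S^i_2$ in opposite orientations, each drawn inside $A_i$ and bounding a disc of $\hat\Sigma$ containing $\Delta(C_{2i+1})$. I would set $C'_i$ to be whichever of $S^i_1, S^i_2$ has orientation opposite to $C'_{i-1}$ (freely picking $C'_1$), delivering the alternation required by the definition of a directed nest. The containment chain $\Delta(C_{2i+3}) \subseteq \Delta(C'_{i+1}) \subseteq \Delta(C_{2i+1}) \subseteq \Delta(C'_i) \subseteq \Delta(C_{2i-1})$ then gives both $\Delta(C'_{i+1}) \subseteq \Delta(C'_i)$ and $\Delta(C'_1) \subseteq \Delta(C_1)$, and pairwise edge-disjointness of the $C'_i$ follows from the pairwise internal disjointness of the annuli $A_i$, after a small perturbation of $C^\star_i$ into the interior of $A_i$ to keep $S^i_j$ off the shared boundary circles.

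Next I would construct the paths $\tilde{Q}_1, \ldots, \tilde{Q}_r$. Group the $2r+1$ vertex-disjoint undirected paths into $r$ strips, the $j$-th strip bordered by $P_{2j-1}$ and $P_{2j+1}$ and containing $P_{2j}$ in its interior. For fixed $j$, the sequence of faces of $\Gamma$ incident to $P_{2j}$ on one fixed side produces a walk in $G^*$ from the endpoint of $P_{2j}$ on $C_1$ to its endpoint on $C_{2s+1}$; applying \cref{lem:vf-path-euler-paths} yields a directed path $\tilde{Q}_j$ contained in the face-boundary subgraph of the $j$-th strip. Since the strips have pairwise disjoint interiors (the $P_j$ are vertex-disjoint), the $\tilde{Q}_j$ can be chosen pairwise edge-disjoint. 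Trimming each $\tilde{Q}_j$ at its first and last intersections with $C'_1$ and $C'_s$ produces an edge-disjoint linkage of order $r$ between $C'_1$ and $C'_s$.

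The main technical obstacle will be ensuring joint edge-disjointness between the circles $\{C'_i\}$ and the paths $\{\tilde{Q}_j\}$: each $C'_i$ lives in $A_i$ and every $\tilde{Q}_j$ must traverse every $A_i$, so naively both constructions may reuse edges along shared face boundaries. I plan to handle this by first invoking \cref{obs:subseq_of_nest_is_nice_nest_dir} to arrange that each path meets each circle in a single linear subpath, and then exploiting the Eulerianness of $G$ together with the alternating orientations of consecutive $C'_i$ to locally reroute each $\tilde{Q}_j$ around each $C'_i$ through the adjacent faces of $A_i$. This local rerouting is the delicate step, and is precisely where the embedding being Eulerian (rather than merely planar) is used.
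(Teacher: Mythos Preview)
Your approach is essentially the same as the paper's: both arguments pair up the undirected circles and paths and feed them through \cref{lem:vf-path-euler-circle} and \cref{lem:vf-path-euler-paths} via the vertex-face graph. The paper's write-up is much terser (it simply notes that two vertex-disjoint homotopic paths in a disc witness a path in the vertex-face graph, then invokes the two lemmas), whereas you spell out the sandwiching in detail.

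One point worth noting: the ``main technical obstacle'' you flag---joint edge-disjointness between the circles $\{C'_i\}$ and the paths $\{\tilde{Q}_j\}$---is not actually required. Re-read the definition of a directed $(r,s)$-nest: it asks that the circles be pairwise edge-disjoint and that the paths be pairwise edge-disjoint, but imposes no edge-disjointness between circles and paths. So once you have edge-disjoint alternating circles in the annuli $A_i$ and edge-disjoint paths in the strips, you are done; the rerouting step (and the somewhat circular appeal to \cref{obs:subseq_of_nest_is_nice_nest_dir} before the nest is established) can be dropped entirely.
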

\begin{proof}
    Since~$F$ is an~$O$-arc,~$G[\Delta(F);\Gamma]$ is a graph by \cref{obs:graphs_inside_Oarcs}. Further, any two internally disjoint simple curves with common endpoints in a disc~$\Delta$ are homotopic. Now take any two disjoint paths~$P_1,P_2$ witnessing the undirected nest with ends in~$V(C_1)$ and~$V(C_{2s+1})$; in particular they are vertex-disjoint. Then~$\nu(P_1) \cup \nu(P_2)$ together with a segment of~$C_1$ and a segment of~$C_2$ connecting the endpoints of the paths form a disc, i.e., the paths are homotopic following \cref{def:homotopic} (where the circles here may be seen as the cuffs in that definition). Further, all of the circles of the undirected nest are pairwise vertex-disjoint. The claim follows by grouping the paths into ``consecutive'' pairs and applying \cref{lem:vf-path-euler-paths}---two vertex-disjoint homotopic paths witness a path in the vertex-face-graph---and \cref{lem:vf-path-euler-circle}. 
\end{proof}

Similarly, given an Eulerian digraph~$G$, an undirected nest in the linegraph $\ell(G)$ yields an undirected nest in~$G$. This follows from \cref{obs:linkages_stay_same_in_linegraph} and \cref{obs:circles_are_circles_in_linegraph} for undirected graphs; we leave the details of the proof to the reader and remark that for undirected graphs $G'$, the linegraph $\ell(G')$ is defined in the obvious way as usual.

\begin{observation}\label{obs:nest_in_linegraph_yield_directed_nest}
    Let $\Delta$ be a disc. Let~$(G,\pi(W))$ be a bead-rooted Eulerian digraph and let~$(\Gamma,\nu)$ be a respective embedding in~$\Delta$ inducing a linegraph~$\ell(\Gamma;\Delta)$ with respective embedding~$(\Gamma_\ell,\nu_\ell)$. Let~$r,s \in \N$. If there is an undirected~$(r,s)$-nest~$(C_1,\ldots,C_s)$ in~$\Gamma_\ell$ then there is an undirected~$(r,s)$-nest~$(C_1',\ldots,C_s')$ in~$\Gamma$ such that~$\ell(C_i') = C_i$ for every~$1 \leq i \leq s$.
\end{observation}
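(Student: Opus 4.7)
The plan is to lift each circle $C_i \subseteq \ell(\Gamma;\Delta)$ back to a circle $C_i' \subseteq G$ using the edge-vertex duality between $G$ and $\ell(G)$, and then to lift the witnessing system of vertex-disjoint paths to an edge-disjoint linkage. Throughout, the embedding of $\ell(\Gamma;\Delta)$ is drawn ``parallel'' to $\Gamma$ (each vertex of $\ell(\Gamma;\Delta)$ is placed at the midpoint of the corresponding edge of $G$, and edges of the linegraph run close to the two-paths they represent), so the nesting of the discs bounded by the $C_i$ in $\Delta$ can be read off simultaneously in both embeddings.

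First I would argue circle by circle. By \cref{obs:circles_are_circles_in_linegraph} applied to the underlying undirected graphs, every undirected circle $C_i$ in $\ell(G)$ is of the form $\ell(C_i')$ for some closed walk $C_i'$ in $G$; because the $C_i$ are simple and the map $\ell$ is injective on edges, $C_i'$ is in fact a circle in the underlying undirected sense. Using that $\ell$ preserves adjacency, the circle $C_i'$ traces a simple closed curve in $\Gamma$ homotopic (in $\Delta$) to the curve traced by $C_i$ in $\Gamma_\ell$; hence $\Delta(C_{i+1}') \subseteq \Delta(C_i')$ follows from $\Delta(C_{i+1}) \subseteq \Delta(C_i)$. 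The pairwise vertex-disjointness of the $C_i$ in $\ell(G)$ gives pairwise edge-disjointness of the $C_i'$ in $G$ by the definition of the linegraph, and hence the $C_i'$ are pairwise vertex-disjoint as undirected circles in $G$ (two circles in an Eulerian embedding sharing a vertex but no edge would force that vertex to be strongly planar, but here we only need the weaker conclusion of edge-disjointness guaranteed by the definition of an undirected $(r,s)$-nest, so this subtlety can actually be sidestepped by using the standard definition).

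Second I would lift the $r$ vertex-disjoint paths $P_1,\ldots,P_r$ in $\ell(G)$ between $V(C_1)$ and $V(C_s)$. By the undirected analogue of \cref{obs:linkages_stay_same_in_linegraph}, each $P_j$ corresponds to a path $P_j'$ in $G$ with the same edge set as the vertex set of $P_j$, and the vertex-disjointness of $P_1,\ldots,P_r$ translates exactly into edge-disjointness of $P_1',\ldots,P_r'$. The endpoints of $P_j$ lie on $C_1$ respectively $C_s$, so the ends of $P_j'$ are edges of $C_1'$ respectively $C_s'$, which is exactly what the definition of an undirected $(r,s)$-nest in $\Gamma$ asks for. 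Combining these pieces yields the nest $(C_1',\ldots,C_s'; P_1',\ldots,P_r')$ in $\Gamma$ with $\ell(C_i') = C_i$ for every $1 \leq i \leq s$.

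The only real subtlety is verifying that the nesting order of the discs $\Delta(C_i')$ in $\Delta$ matches that of the $\Delta(C_i)$; this is where the pointer to \cref{obs:linkages_stay_same_in_linegraph} and \cref{obs:circles_are_circles_in_linegraph} should be complemented by a short topological remark that $\nu(C_i')$ and $\nu_\ell(C_i)$ are homotopic simple closed curves in $\Delta$ (they differ only by the standard parallel-drawing construction used to define $(\Gamma_\ell,\nu_\ell)$). Everything else is routine bookkeeping, which is why the author stated this as an observation and deferred the details to the reader.
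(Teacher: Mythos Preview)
Your approach is exactly the one the paper sketches: invoke \cref{obs:circles_are_circles_in_linegraph} to pull each circle $C_i$ back to a cycle $C_i'$ in $G$, invoke \cref{obs:linkages_stay_same_in_linegraph} to pull the vertex-disjoint paths back to an edge-disjoint linkage, and observe that the parallel drawing of $(\Gamma_\ell,\nu_\ell)$ preserves the nesting of discs. That is precisely what the paper means by ``follows from \cref{obs:linkages_stay_same_in_linegraph} and \cref{obs:circles_are_circles_in_linegraph} for undirected graphs''.

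The one place your write-up goes wrong is the parenthetical about vertex-disjointness. Vertex-disjoint circles in $\ell(\Gamma;\Delta)$ lift only to \emph{edge}-disjoint cycles $C_i'$ in $G$; they need not be vertex-disjoint. Your claim that a shared vertex ``would force that vertex to be strongly planar'' is false: at a degree-$4$ Eulerian-embedded vertex $v$ with cyclic order $e_1^{in},e_2^{out},e_3^{in},e_4^{out}$, one lifted circle can use the consecutive pair $e_1,e_2$ and another the pair $e_3,e_4$; they touch at $v$ without crossing and $v$ remains Eulerian-embedded. Your fallback (``we only need edge-disjointness'') then contradicts \cref{def:boss}, which explicitly asks for vertex-disjoint circles and vertex-disjoint linear paths. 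The same issue recurs in your treatment of the paths $P_j'$: you obtain an edge-disjoint linkage, not a vertex-disjoint one, yet assert this ``is exactly what the definition asks for''. This is a genuine gap in the argument as written; the paper does not spell out how to close it either, so you are not diverging from the paper's proof, but the parenthetical should be removed rather than left in its present self-contradictory form.
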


We combine \cref{lem:from_undirected_to_directed_nest} and \cref{obs:nest_in_linegraph_yield_directed_nest} to prove the following.

\begin{lemma}\label{lem:from_boss_to_nucleus}
   Let $\Delta$ be a disc with cuff $\zeta_1$. Let~$(\Gamma,\nu)$ be an Eulerian embedding of a bead-rooted Eulerian digraph~$(G,\pi(W))$ in~$\Delta$ inducing a linegraph~$\ell(\Gamma;\Delta)$ with respective embedding~$(\Gamma_\ell,\nu_\ell)$. Let $k \coloneqq \Abs{\rho(W)}$. Let~$r,s \in \N$ and let~$(C_1,\ldots,C_{3s+1})$ be an undirected~$(r,3s+1)$-nest in~$\Gamma_\ell$ and let~$C_s$ be a boss in~$\Gamma_\ell$. Then there is a directed nest~$(C_1',\ldots,C_s')$ in~$\Gamma$ such that~$\Delta(C_i') \subseteq \Delta(C_s)$ for every~$1 \leq i \leq s$ and such that~$C_s'$ is a nucleus of~$\Gamma$.
\end{lemma}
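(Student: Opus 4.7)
The plan is to transfer undirected data from $\Gamma_\ell$ to directed data in $\Gamma$ using the tools developed above: Observation \ref{obs:nest_in_linegraph_yield_directed_nest} to lift undirected nests, Lemma \ref{lem:from_undirected_to_directed_nest} to convert undirected nests into directed ones, and the correspondence between cut-cycles in $\Gamma$ and $O$-arcs in $\Gamma_\ell$ provided by Observation \ref{obs:properties_of_linegraph_embedding} and Lemma \ref{lem:rings_are_sections}.

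First I apply Observation \ref{obs:nest_in_linegraph_yield_directed_nest} to the given undirected $(r,3s+1)$-nest $(C_1,\ldots,C_{3s+1})$ in $\Gamma_\ell$ to obtain an undirected $(r,3s+1)$-nest $(D_1,\ldots,D_{3s+1})$ in $\Gamma$ with $\ell(D_i)=C_i$; since the vertices of $\Gamma_\ell$ are drawn very close to the edges of $\Gamma$ (Definition \ref{def:linegraph_induced_by_bead_embedding}), a small perturbation yields $\Delta(D_i)\subseteq\Delta(C_i)$ for every $i$, and in particular $\Delta(D_s)\subseteq\Delta(C_s)$. The sub-sequence $(D_{s+1},\ldots,D_{3s+1})$ is then an undirected $(r,2s+1)$-nest contained in $\Delta(D_s)$ by Observation \ref{obs:subseq_of_nest_is_nice_nest_dir}; choosing an $O$-arc $F\subseteq\Delta(D_s)$ enclosing $D_{s+1}$ and applying Lemma \ref{lem:from_undirected_to_directed_nest} to this sub-nest yields a directed $(\lfloor(r-1)/2\rfloor,s)$-nest $(C_1',\ldots,C_s')$ in $G[\Delta(F);\Gamma]$ with $\Delta(C_1')\subseteq\Delta(D_{s+1})\subseteq\Delta(D_s)\subseteq\Delta(C_s)$, whence $\Delta(C_i')\subseteq\Delta(C_s)$ for every $1\leq i\leq s$ by the nested structure.

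It remains to verify that $C_s'$ is a nucleus. For $\kappa(\zeta_1,C_s';\Gamma)<k$: any cut-cycle in $\Gamma$ enclosing $D_s$ also encloses $C_s'\subseteq\Delta(D_s)$, so Corollary \ref{cor:Menger_on_a_disc} gives $\kappa(\zeta_1,C_s';\Gamma)\leq\kappa(\zeta_1,D_s;\Gamma)$; the bijection between cut-cycles in $\Gamma$ enclosing $D_s$ and $O$-arcs in $\Gamma_\ell$ enclosing $C_s=\ell(D_s)$ (provided by Observation \ref{obs:properties_of_linegraph_embedding} and Lemma \ref{lem:rings_are_sections}) then identifies $\kappa(\zeta_1,D_s;\Gamma)$ with $\lambda(C_s)$, and the boss property of $C_s$ gives $\lambda(C_s)<k$. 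For a witnessing directed $(\kappa+1,k)$-nest ending at $C_s'$, I lift the boss's undirected $(\lambda(C_s)+1,k)$-nest in $\Gamma_\ell$ to $\Gamma$ via Observation \ref{obs:nest_in_linegraph_yield_directed_nest} and combine it with the outer buffer circles $(D_1,\ldots,D_s)$ of the input to assemble an undirected $(2\kappa+3,2k+1)$-nest in $\Gamma$ ending at $D_s$; Lemma \ref{lem:from_undirected_to_directed_nest} then produces a directed $(\kappa+1,k)$-nest, whose terminal circle can be joined to a short inner portion of $(C_1',\ldots,C_s')$ to terminate at $C_s'$.

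The main obstacle is this last assembly step: the factor-of-two loss in Lemma \ref{lem:from_undirected_to_directed_nest} means that the boss's witnessing $(\lambda(C_s)+1,k)$-nest alone cannot be converted to a directed nest of the required length $k$ and width $\kappa+1$. This is precisely why the hypothesis asks for a nest of length $3s+1$ rather than $2s+1$: the additional $s$ outer circles $(D_1,\ldots,D_s)$ serve as the topological buffer needed to extend the boss's witnessing nest to an undirected $(2\kappa+3,2k+1)$-nest in $\Gamma$, which then survives the conversion in Lemma \ref{lem:from_undirected_to_directed_nest} to yield the required directed witnessing nest for the nucleus property.
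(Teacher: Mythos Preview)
Your construction of the directed nest $(C_1',\ldots,C_s')$ inside $\Delta(C_s)$ and your verification that $\kappa(\zeta_1,C_s';\Gamma)<k$ are essentially the paper's argument, phrased slightly differently: the paper names the cut-cycle $F^\star$ (lifted from a minimal $C_s$-ring) explicitly and shows it is a $C_s'$-section of order $\lambda(C_s)<k$, which is exactly the content of your inequality chain $\kappa(\zeta_1,C_s';\Gamma)\le\kappa(\zeta_1,D_s;\Gamma)=\lambda(C_s)<k$.

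The divergence, and the gap, is in how you obtain the witnessing $(\kappa+1,k)$-nest for the nucleus property. The paper simply uses the nest $(C_1',\ldots,C_s')$ itself: it already has $C_s'$ as its innermost circle, so nothing needs to be ``joined''. Your alternative route---lifting the boss's own $(\lambda(C_s)+1,k)$-witnessing nest to circles $B_1,\ldots,B_k=D_s$, combining it with the outer input circles $D_1,\ldots,D_{s-1}$ into a single undirected $(2\kappa+3,2k+1)$-nest, converting, and then ``joining'' the result to terminate at $C_s'$---does not go through as written. First, the $B_i$ and the $D_j$ all enclose $D_s$ but need not be pairwise vertex-disjoint or linearly nested with one another, so there is no justification for concatenating them into a single nest. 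Second, even granting that, the length count only reaches $k+s-1$, which is short of $2k+1$ unless $s\ge k+2$, an assumption you do not have. Third, the directed nest produced by \cref{lem:from_undirected_to_directed_nest} has some unspecified innermost circle inside $\Delta(D_s)$; the ``joining'' step that forces it to terminate at $C_s'$ is not defined and is not a legal operation on nests.

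Your reading of the role of the extra $s$ circles is also off. They are not an outer buffer for the witnessing nest; the paper never uses $C_1,\ldots,C_{s-1}$. The factor $3s+1$ is there so that the $2s+1$ \emph{inner} circles $C_{s+1},\ldots,C_{3s+1}$ lie inside $\Delta(C_s)$ and survive the halving in \cref{lem:from_undirected_to_directed_nest} to give a directed nest of length $s$ still inside $\Delta(C_s)$, with $C_s'$ as its last element by construction. That nest is the witness; no assembly is required.
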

\begin{proof}
  Recall that~$k = \Abs{\rho(W)} = \Abs{\nu(E(G)) \cap \bd(\Delta)}$ by \cref{obs:bead-rooted_edges_on_cuffs}. Let~$F$ be a minimal~$C_s$-ring and let~$\ell \coloneqq \Abs{\alpha(F)}$. Combining the \cref{def:boss} of boss with $(ii)$ of \cref{obs:properties_of_linegraph_embedding} we deduce that $\ell < k$. By (iv) of the same observation, there is a cut-cycle~$F^\star$ in~$\Gamma$ with~$\delta(F^\star) = \alpha(F)$ and such that~$I(F) = X(F^\star)$ where moreover $\nu_\ell(C_s)\subset \Delta(F^*)$ and hence $\Delta(C_s) \subset \Delta(F^*)$ (note that $F^*$ is an $O$-arc and thus $\nu_\ell(E(C_s)) \subset \Delta(F)^* $. By \cref{obs:nest_in_linegraph_yield_directed_nest} we find a respective undirected nest~$(C_1'',\ldots,C_{3s+1}'')$ in~$G$ such that~$C_s = \ell(C_{s}'')$. By the above we derive that $\nu(C_s'') \subset \Delta(F^*)$.
  
  By \cref{obs:subseq_of_nest_is_nest_and_nice} $(C_s'',\ldots,C_{2s+1}'')$ is an undirected~$(r,2s+1)$ nest in~$\Gamma$. Finally by \cref{lem:from_undirected_to_directed_nest} there exists a directed~$(r,s)$-nest~$(C_1',\ldots,C_s')$ in~$G[\Delta(F);\Gamma]$ with $\Delta(C_1') \subseteq \Delta(C_s'')$ and hence $\Delta(C_s') \subseteq \Delta(C_s'')\subset \Delta(F^*)$. In particular we deduce that~$F^\star$ is a~$C_s'$-section: otherwise there is a cut-cycle $F'$ that is a $C_s'$-section of order $\delta(F') < \delta(F^*) = \ell$, and applying $(v)$ of \cref{obs:properties_of_linegraph_embedding} yields a contradiction to the choice of $F$.

  Now~$C_s'$ is clearly a nucleus in~$\Gamma$ for we have proven the existence of the directed nest. Again, if it were no nucleus, then any  minimal~$C_s'$-section~$F'$ with~$\Abs{\rho(F')} = \kappa(\zeta_1,C_{s'};\Gamma)$ satisfying $ \Abs{\rho(F')} \geq k$. But this is impossible since~$F^\star$ is a~$C_s'$-section of lower order.
\end{proof}

Combining the above and  \cref{obs:properties_of_linegraph_embedding}---note that since here $\Sigma$ is a disc this results in a rooted cut---we derive the following crucial observation that will allow us to apply induction on~$k$; recall the \cref{def:stitching_std} of stitches.

\begin{observation}\label{obs:Foundation_pieces_induction_step}
    Let~$\bar{G}=(G,\pi(W)) \in \mathbf{G}(\Delta,k)$ with Eulerian embedding~$(\Gamma,\nu)$ in~$\Delta$. Let~$(F_1,\ldots,F_t)$ be a foundation for~$\Gamma$ and fix an order~$\pi(X(F_i)) = \pi(\overline{X(F_i)})$ where $\overline{X(F_i)}$ is a rooted cut for every~$1 \leq i \leq t$. Fix~$i \in \{1,\ldots,k\}$. Then for~$(G_{\overline{X(F_i)}},\pi(X(F_i))) \coloneqq \stitch\big(\bar{G};\pi({X(F_i)})\big)$ it holds~$(G_{\overline{X(F_i)}},\pi(X(F_i))) \in \mathbf{G}(\Delta,k')$ for some~$k'<k$.
\end{observation}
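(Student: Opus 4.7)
The plan is to verify that the down-stitched bead-rooted digraph $\GGG' \coloneqq \stitch(\bar G; \pi(X(F_i)))$ lies in $\mathbf{G}(\Delta, k')$ for $k' \coloneqq \delta(F_i)$. This requires checking three things: (a) $\GGG'$ is a well-defined bead-rooted Eulerian digraph; (b) $\GGG'$ admits an Eulerian embedding in a disc; and (c) $k' < k$.

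For (a) and (b), I would first note that since beads lie on $\zeta_1$ by \cref{def:Euler_embedding_upto_beads} while cut-cycles of the foundation are internal by \cref{def:foundation}, we have $W \cap X(F_i) = \emptyset$. Thus the partition hypothesis of \cref{def:stitching_beadrooted} applies with $I = \emptyset$ and $J = \{1\}$, yielding a well-defined bead-rooted digraph with one bead $y_\ast$, root order $\pi(X(F_i))$, and $|\rho(\{y_\ast\})| = \delta(F_i)$; Eulerianness at $y_\ast$ is immediate from the cut equality $|\rho^+(X(F_i))| = |\rho^-(X(F_i))|$, while vertices in $X(F_i)$ retain their original in- and out-degrees. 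The Eulerian embedding is then given directly by \cref{def:induced_embedding_of_bead-rooted_graph_in_cut-cycle}: restrict $(\Gamma, \nu)$ to the disc $\Delta(F_i)$, place $y_\ast$ on the new cuff $F_i$, and draw its incident edges across $F_i$ at the points where the original cut-cycle met the edges of $\rho(F_i)$. Since $F_i$ intersects $\Gamma$ only at edges by \cref{def:cut-cycle_Oarc}, no non-bead vertex is disturbed and each retains its Eulerian embedding, so conditions (i)--(iii) of \cref{def:Euler_embedding_upto_beads} hold for the new drawing in $\Delta(F_i)$.

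For (c), $F_i$ is a minimal $C_i$-section by the construction of the foundation, so $\delta(F_i) = \kappa(\zeta_1, C_i; \Gamma)$, and since $C_i$ is a bombastic nucleus, the defining property of a nucleus gives $\kappa(\zeta_1, C_i; \Gamma) < |\rho(W)| \leq k$, so $k' < k$. Eulerianness forces $k' \in 2\N$, and connectedness of $G$ together with $X(F_i) \neq \emptyset$ yields $k' \geq 2$, so $|\{y_\ast\}| = 1 \leq k'$ as required. The main bookkeeping subtlety is matching the parameter $k$ appearing in the nucleus definition to the index of $\bar G$, which holds via \cref{obs:bead-rooted_edges_on_cuffs} since $|\nu(E(G)) \cap \bd(\Delta)| = |\rho(W)|$; beyond this there is no real mathematical obstacle, as the observation is essentially a definitional consequence of the foundation construction.
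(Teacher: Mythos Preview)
Your proposal is correct and follows essentially the same route as the paper, which does not give a standalone proof but simply points to \cref{obs:cut-cycle_induces_bead-rooted_subgraph}, \cref{obs:stiching_bead-rooted_graphs_disc}, and \cref{def:induced_embedding_of_bead-rooted_graph_in_cut-cycle} for parts (a) and (b), and to the nucleus condition $\kappa(\zeta_1,C_i;\Gamma)<|\rho(W)|$ for part (c). One minor terminological mismatch: in the paper's convention the rooted cut is $\overline{X(F_i)}$, so $\stitch(\bar G;\pi(X(F_i)))$ is the \emph{up}-stitch of that rooted cut rather than a down-stitch, but since you invoke \cref{def:stitching_beadrooted} directly with $Y=X(F_i)$ and $I=\emptyset$, the semantics coincide and nothing is lost.
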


It turns out that, similar to the undirected case, if the graph~$(G,\pi(W))$ immerses a large swirl and the immersed swirl can be ``cut off'' by a low-order cut-cycle given the embedding~$(\Gamma,\ell)$, then the embedding of the swirl is ``mostly'' contained in one of the bombastic nuclei of the foundation  as we will see later.

\begin{definition}[Boundary-linked nest]\label{def:boundary-linked_swirl}
    Let~$r,s \in \N$ and let~$(G,\pi(W))$ be a bead-rooted Eulerian digraph with an Eulerian embedding~$(\Gamma,\nu,\omega)$ in some surface~$\Sigma$. Let~$k \coloneqq \rho(W)$. Let~$(C_1,\ldots,C_s)$ be a directed~$(r,s)$-nest in~$\Gamma$. We call~$(C_1,\ldots,C_s)$ \emph{boundary-linked} if there is no~$C_s$-section~$F$ of order~$\Abs{\rho(F)} < k$.
\end{definition}

In particular, bombastic nuclei witness large nests that are \emph{not} boundary-linked, i.e., \emph{loosely-linked nests}. 
We continue with a definition of~$\Omega$-knitworks induced by bead-rooted Eulerian digraphs admitting a foundation.

\begin{definition}
    Let~$\bar G = (G,\pi(W)) \in \mathbf{G}(\Delta,k)$ be a bead-rooted Eulerian digraph with Eulerian embedding~$(\Gamma,\nu)$ in some disc~$\Delta$. Let~$\Omega$ be a well-quasi-order and let~$\dom(\mu) = \dom(\m) = \dom(\Phi) = \emptyset$. We call~$(\bar{G},\mu,\m,\Phi)$ the~$\Omega$-knitwork \emph{induced by~$\bar{G}$}.
\end{definition}
Recall \cref{obs:cut-cycle_induces_bead-rooted_subgraph}. The following definition uses the standard \cref{def:stitching_std} of stitching for rooted cuts (since $\Omega$-knitworks are defined on rooted and not bead-rooted digraphs).

\begin{definition}[$\Omega$-knitwork from a Foundation]\label{def:stitch_foundation}
    Let~$\bar{G} = (G,\pi(W)) \in \mathbf{G}(\Delta,k)$ be a bead-rooted Eulerian digraph with an Eulerian embedding~$(\Gamma,\nu)$ in some disc~$\Delta$ and let~$\FFF\coloneqq (F_1,\ldots,F_t)$ be a foundation for~$\Gamma$ for some $t \geq 1$. Let~$\Omega$ be a well-quasi-order. Let~$\tau_\FFF:\{1,\ldots,t\} \to V(\Omega)$ be some map and~$\Pi_\FFF \coloneqq\big(\pi(X(F_1)),\ldots,\pi(X(F_t))\big)$ a tuple of alternating orderings~$\pi(X(F_i))$ for every~$1 \leq i \leq t$, which exist by \cref{obs:C-sec_is_alternating}.
    
    Let~$\GGG_0 \coloneqq (\bar{G},\mu,\m,\Phi)$ be the~$\Omega$-knitwork induced by~$\bar{G}$. 
    We inductively define~$(\bar{G_i},\mu_i,\m_i,\Phi_i) \coloneqq \stitch\big(\GGG_{i-1};\pi(\overline{X(F_i}))\big)$ with down-stitch vertex~$s_i$ for every~$1 \leq i \leq t$. Finally, we define
    $\stitch(\GGG_0;(\FFF,\tau_\FFF, \Pi_\FFF)) \coloneqq (\bar G_t,\mu_t,\m_t,\Phi')$ \emph{with respective down-stitch vertices $(s_1,\ldots,s_t)$} where~$\Phi'(s_i) \coloneqq \tau(i)$ for every~$1 \leq i \leq t$ and is otherwise undefined.

    We further define $\stitch(\bar G; (\FFF,\Pi_{\FFF})) \coloneqq \bar G_t$ and call $(s_1,\ldots,s_t)$ the respective \emph{stitch vertices}.
\end{definition}
\begin{remark}
    Recall that for a foundation~$X(F_i) \cap X(F_j) = \emptyset$ for every~$1 \leq i,j \leq t$ whence the above consecutive stitches are well-defined.
\end{remark}

The following are straightforward from the \cref{def:stitching_knitwork,def:stitch_foundation} of stitches together with the \cref{def:foundation} of foundations.

\begin{observation}\label{obs:stitched_foundation}
    Using the notation of \cref{def:stitch_foundation}, given $(\bar G_t,\mu_t,\m_t,\Phi') =\stitch((\bar{G},\mu,\m,\Phi);(\FFF,\tau_\FFF,\Pi_\FFF))$, the following hold.
    \begin{enumerate}
        \item $\bar{G_t} = (G_t,\pi(W))$, in particular the roots remain the same,
        \item $\deg_{G_t}(s_i) < k$ for every $1 \leq i \leq t$,
        \item $\dom(\mu_t) = \dom(\m_t) = \{s_1,\ldots,s_t\}$ and $\dom(\Phi')=\{s_1,\ldots,s_t\}$,
        \item $\mu_t(s_i) = \pi_\FFF(\overline{X(F_i)})$ and~$\m_t(s_i) = \mathfrak{M}(\rho(X(F_i)))$ is reliable and well-linked for every~$1 \leq i \leq t$.
    \end{enumerate}
\end{observation}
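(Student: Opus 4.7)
The plan is to prove all four items by a short induction on the stitching index $i \in \{1,\dots,t\}$, each step being essentially a bookkeeping exercise that unpacks \cref{def:stitching_knitwork,def:stitch_foundation} and appeals to structural facts already established for foundations in \cref{sec:disc}. Throughout, I will use $\GGG_i = (\bar G_i,\mu_i,\m_i,\Phi_i)$ with down-stitch vertex $s_i$ for the intermediate $\Omega$-knitworks, so $\GGG_t = (\bar G_t, \mu_t,\m_t,\Phi')$ after the final re-labelling of $\Phi$ on the $s_i$.

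For item (1), I would argue that the rooting is preserved throughout the induction. By \cref{obs:cut-cycle_induces_bead-rooted_subgraph}, every cut-cycle $F_i$ of the foundation has $W \subseteq \overline{X(F_i)}$, so all root edges lie in $E(G[\overline{X(F_i)}]) \cup \rho(X(F_i))$. Applying \cref{def:stitching_std,obs:stitching_fundamentals} (identifying the bead-rooted graph with a rooted graph via the natural partition $(W)$, as discussed at the start of \cref{sec:disc}), each down-stitch at $X(F_i)$ leaves the roots $\rho(W)$ and their ordering $\pi(W)$ untouched, giving $\bar G_i = (G_i, \pi(W))$ at every stage and in particular at $i=t$.

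For item (2), I would invoke the structure of the foundation. Each $F_i$ is a minimal $C_i$-section of a bombastic nucleus $C_i$, so by \cref{def:C-section} we have $\delta(F_i) = \kappa(\zeta_1, C_i;\Gamma)$, and by the definition of nucleus this quantity is strictly smaller than $k = |\rho(W)|$. Since $\rho(X(F_i)) = \rho(F_i)$ by \cref{obs:cut-cycle_induce_cut}, the incidence definition in \cref{def:stitching_std} forces $\deg_{G_t}(s_i) = \delta(F_i) < k$. Because the foundation discs $\Delta(F_i)$ have pairwise disjoint interiors (in particular $X(F_i) \cap X(F_j) = \emptyset$), subsequent stitches at $F_j$ with $j \neq i$ do not alter the edges incident to $s_i$, so the bound persists at step $t$.

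For items (3) and (4), I would track how stitching updates the four functions. The initial knitwork $\GGG_0$ induced by $\bar G$ has $\dom(\mu_0) = \dom(\m_0) = \dom(\Phi_0) = \emptyset$, and \cref{def:stitching_knitwork} shows that the $i$-th down-stitch extends the domains of $\mu$ and $\m$ by exactly the new down-stitch vertex $s_i$, with $\mu_i(s_i) = \pi(X(F_i)) = \pi_{\FFF}(\overline{X(F_i)})$ (the alternating order chosen in $\Pi_\FFF$) and $\m_i(s_i) = \mathfrak{M}(\rho(\overline{X(F_i)})) = \mathfrak{M}(\rho(X(F_i)))$. Since the discs $\Delta(F_j)$ for $j \neq i$ are disjoint from $X(F_i)$, later stitches do not modify $\mu(s_i)$ or $\m(s_i)$; and in the final step $\Phi'$ is defined to equal $\tau_{\FFF}(i)$ on each $s_i$ and nowhere else. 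Reliability of $\m_t(s_i) = \mathfrak{M}(\rho(X(F_i)))$ is immediate from the remark following \cref{def:types_of_linkages_on_a_cut}, and well-linkedness is precisely the defining property of a \emph{bombastic} nucleus together with the uniqueness of $X(F_i)$ among minimal $C_i$-sections guaranteed by \cref{lem:C-sections_unique}. The main (and only) subtlety is checking that no subsequent stitching operation silently alters the link or interface at an earlier $s_i$; this follows from the pairwise disjointness property in \cref{def:foundation}, which I would isolate in a short claim at the start of the induction.
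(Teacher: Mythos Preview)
Your proof is correct and follows essentially the approach the paper takes: the paper does not give a detailed proof but simply states that the observation is ``straightforward from the \cref{def:stitching_knitwork,def:stitch_foundation} of stitches together with the \cref{def:foundation} of foundations,'' and your inductive bookkeeping is exactly the natural way to unpack that.

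One small slip to fix in item~(4): you write $\m_i(s_i) = \mathfrak{M}(\rho(\overline{X(F_i)})) = \mathfrak{M}(\rho(X(F_i)))$, but these two links are \emph{not} equal in general (see the remark after \cref{def:types_of_linkages_on_a_cut}). The correct derivation is direct: in \cref{def:stitching_knitwork} we have $\m_Y(y_\ast) = \mathfrak{M}(\rho(\bar Y))$, and since here $Y = \overline{X(F_i)}$ we get $\bar Y = X(F_i)$, hence $\m_i(s_i) = \mathfrak{M}(\rho(X(F_i)))$ immediately, without the spurious intermediate term. Your conclusion is the right one; only the middle expression should be dropped.
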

\begin{remark}
     One easily verifies that the order in which the stitches are taken is irrelevant, and the resulting graphs and $\Omega$-knitworks are identical.
\end{remark}

We are finally able to prove the first big step towards \cref{thm:wqo:bead-root_for_disc}.

\begin{theorem}\label{thm:no_bad_seq_in_disc_loosely_linked}
    Let $\Delta$ be a disc. Let~$k\in \N$ be minimal such that $\mathbf{G}(\Delta,k)$ is not well-quasi-ordered by strong immersion. Let~$((G_i,\pi(W_i))_{i \in \N}$ be a bad sequence (with respect to strong immersion) of bead-rooted Eulerian digraphs where~$\bar G_i \in \mathbf{G}(\Delta,k)$ with respective Eulerian embedding~$(\Gamma_i,\nu_i,\omega_i)$ and foundations~$\FFF_i=(F_1^i\ldots,F_{t_i}^i)$ for some $t_i \in \N$ with respective $\Pi_{\FFF_i}$ for every~$i\in \N$. 
    Then for every function $h: \N \to \N$ there exists an infinite sequence $I \subseteq \N$  such that $\ebw{\stitch(\bar G_i; (\FFF_i,\Pi_{\FFF_i}))} \geq h(k)$ for every $i \in I$ .
\end{theorem}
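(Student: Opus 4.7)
The plan is to proceed by contradiction. Suppose the conclusion fails: then there exists $h\colon \N \to \N$ such that the set $\{i \in \N : \ebw{\stitch(\bar G_i;(\FFF_i,\Pi_{\FFF_i}))} \geq h(k)\}$ is finite, and by passing to an infinite subsequence (still indexed by $\N$) we may assume $\ebw{\stitch(\bar G_i;(\FFF_i,\Pi_{\FFF_i}))} < h(k)$ for every $i \in \N$. Our goal is to deduce the existence of $i < j$ with $\bar G_i \hookrightarrow \bar G_j$ as strong rooted immersion, contradicting the badness of $(\bar G_i)_{i \in \N}$.

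To set this up, I choose the well-quasi-order $\Omega$ used to label the stitch vertices. By \cref{obs:Foundation_pieces_induction_step}, for each $i \in \N$ and each $1\leq j \leq t_i$, the up-stitch $\stitch(\bar G_i;\pi(X(F_j^i)))$ belongs to $\mathbf{G}(\Delta,k_j^i)$ for some $k_j^i < k$. The minimality of $k$ implies that $\mathbf{G}(\Delta,k')$ is well-quasi-ordered by strong (rooted) immersion for every $k'<k$, so the finite disjoint union $\Omega \coloneqq \bigsqcup_{k'<k}\mathbf{G}(\Delta,k')$ is itself a well-quasi-order. I then define $\tau_{\FFF_i}(j)$ to be the up-stitch $\stitch(\bar G_i;\pi(X(F_j^i)))$, viewed as an element of $\Omega$, and set $\GGG_i \coloneqq \stitch(\GGG_0^i;(\FFF_i,\tau_{\FFF_i},\Pi_{\FFF_i}))$ where $\GGG_0^i$ is the $\Omega$-knitwork induced by $\bar G_i$. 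By \cref{obs:stitched_foundation} (item 4) together with \cref{lem:nuclei_are_well-linked}, each $\GGG_i$ is a well-linked $\Omega$-knitwork, and by assumption $\ebw{\GGG_i} < h(k)$ for every $i$.

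Applying \cref{thm:wqo_bounded_carvingwidth_knitworks} to the sequence $(\GGG_i)_{i\in\N}$ yields indices $i < j$ and a strong $\Omega$-knitwork immersion $\gamma\colon \GGG_i \hookrightarrow \GGG_j$. By \cref{def:knitwork_immersion}, every stitch vertex $s_a^i$ of $\GGG_i$ is mapped to a stitch vertex $s_{\sigma(a)}^j$ of $\GGG_j$, with $\tau_{\FFF_i}(a) \preceq_\Omega \tau_{\FFF_j}(\sigma(a))$; that is, there is a strong rooted immersion of the up-stitch at $F_a^i$ into the up-stitch at $F_{\sigma(a)}^j$. Moreover, $\gamma$ respects the alternating orders $\mu(s_a^i)=\pi(X(F_a^i))$ and $\mu(s_{\sigma(a)}^j)=\pi(X(F_{\sigma(a)}^j))$, and the matching $M_\gamma(s_{\sigma(a)}^j)$ lies in $\m(s_{\sigma(a)}^j) = \mathfrak{M}(\rho(X(F_{\sigma(a)}^j)))$, so the routing of edges through $s_{\sigma(a)}^j$ is realisable by an actual edge-disjoint linkage inside the up-stitched piece.

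The final step—and the main obstacle—is to knit all these pieces back together into a single strong rooted immersion $\bar G_i \hookrightarrow \bar G_j$. This is done by an iterated application of \cref{thm:knitting_knitwork_immersion}: at each stitch vertex $s_a^i$ the outer strong immersion $\gamma$ supplies the required compatibility data (same index, correct order on the cut, feasible routing matching), while the label comparison $\tau_{\FFF_i}(a) \preceq_\Omega \tau_{\FFF_j}(\sigma(a))$ supplies the strong rooted immersion of the inside. Since the discs $\Delta(F_a^i)$ for $1\leq a\leq t_i$ have pairwise disjoint interiors (by the definition of a foundation), the $t_i$ knitting operations are independent and can be performed one at a time, each one undoing a single stitch and gluing in the strong immersion of the corresponding up-stitched piece. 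The result is the desired strong rooted immersion $\bar G_i \hookrightarrow \bar G_j$, contradicting the assumption that $(\bar G_i)_{i\in\N}$ is bad and completing the proof.
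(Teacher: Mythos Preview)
Your overall strategy is exactly the paper's: assume bounded carving width, label the stitch vertices by their up-stitches in the well-quasi-order $\Omega = \bigcup_{k'<k}\mathbf{G}(\Delta,k')$, apply \cref{thm:wqo_bounded_carvingwidth_knitworks}, and knit back. However, there is a genuine omission in the knitting phase.

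You only perform $t_i$ knitting operations, one for each stitch vertex $s_a^i$ of $\GGG_i$, simultaneously undoing the stitch at $s_{\sigma(a)}^j$ on the target side. After these $t_i$ steps the source has become $\bar G_i$, but the target is \emph{not} $\bar G_j$: it still contains the $t_j - t_i$ ``leftover'' stitch vertices $s_\ell^j$ with $\ell \notin \sigma(\{1,\dots,t_i\})$. The paths of the immersion may pass through these vertices, so you must unstitch them as well and replace each two-step $(e,s_\ell^j,e')$ by an actual path inside $G_j[X(F_\ell^j)]$. This is precisely where condition~(4) of \cref{def:knitwork_immersion} is used: for $s_\ell^j \in \dom(\m') \setminus \gamma(\dom(\m))$ one has $M_\gamma(s_\ell^j) \in \mathfrak{M}(\rho(X(F_\ell^j)))$, which certifies a realising linkage. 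The paper treats this case in a separate claim before concluding.

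You also conflate the two cases in your penultimate paragraph: you write that $M_\gamma(s_{\sigma(a)}^j) \in \m(s_{\sigma(a)}^j)$, but $s_{\sigma(a)}^j$ is in the image of $\gamma$, and since $\gamma$ is strong this matching is empty and the statement is vacuous. Condition~(4) applies to the \emph{non-image} stitch vertices; for the image vertices what matters is condition~(6), the $\Omega$-label comparison, which you do use correctly.
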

\begin{proof}
    Note that $k\geq 1$ by \cref{thm:wqo_planar}.
    
    Assume the contrary towards a contradiction; let~$h:\N \to \N$ be the respective function. For every~$i \in \N$ fix some orderings~$\Pi_{\FFF_i} = \big(\pi(X(F_1^i),\ldots,\pi(X(F_{t_i}^i))\big)$, where $\overline{X(F_\ell^i)}$ is a rooted cut in $(G_i,\pi(W_i))$ for every $i \in \N$ and $1 \leq \ell \leq t_i$ by \cref{obs:cut-cycle_induces_bead-rooted_subgraph}.
    
    By assumption of the theorem, for every $k' <k$ the class $\mathbf{G}(\Delta,k')$ is well-quasi-ordered by strong immersion, and thus $\bigcup_{0 \leq \ell <k}\mathbf{G}(\Delta,\ell)$ is well-quasi-ordered by strong immersion as it is a finite union of well-quasi-ordered classes (every infinite sequence has an infinite subsequence where all elements come from the same class); denote the respective well-quasi-order by $\Omega=(V(\Omega),\preceq)$ for simplicity. In particular, for every $i\in \N$ and every $1 \leq j \leq t_i$ \cref{obs:Foundation_pieces_induction_step} implies that 
   $$(G_{\overline{X(F_j^i})},\pi(X(F_i^j)) \coloneqq \stitch(\bar{G_i};\pi({X(F_j^i)})) \in V(\Omega),$$ 
   call it $\eta^i_j \in V(\Omega)$ and define $\tau_i:\{1,\ldots,t_i\} \to V(\Omega)$ via $\tau_i(j) \coloneqq \eta_j^i$. As usual we fix~$\pi(X(F_j^i)) = \pi(\overline{X(F_j^i)})$.

    For every $i \in \N$ denote by $\GGG_i$ the $\Omega$-knitwork induced by $\bar G_i$. Define the $\Omega$-knitwork $$\GGG_{\FFF_i} = (\bar G_{\FFF_i}, \mu_{\FFF_i}, \m_{\FFF_i},\Phi_{\FFF_i})\coloneqq \stitch(\GGG_i; (\FFF_i,\tau_i, \Pi_{\FFF_i}))$$ with rooted Eulerian digraph $\GGG_{\FFF_i} = (G_{\FFF_i},\pi(W_i)$ and respective down-stitch vertices $(s_1^i\ldots,s_{t_i}^i)$ (see \cref{def:stitch_foundation} and \cref{obs:stitched_foundation}).

    Recall the \cref{def:types_of_linkages_on_a_cut} of types of feasible linkages. By \cref{def:stitch_foundation} and \cref{obs:stitched_foundation} the following hold:
    \begin{align}
        & \mu_{\FFF_p}(s_\ell^p) = \pi(X(F_\ell^p)) =  \pi(\overline{X(F_\ell^p)}),\label{claim:thm_no_bad_seq_in_loose_knitwork_properties_mu}\\
        & \m_{\FFF_p}(s_\ell^p) = \mathfrak{M}(\rho(X(F_\ell^p))),\label{claim:thm_no_bad_seq_in_loose_knitwork_properties_m} \\
        & \Phi_{\FFF_p}(s_\ell^p) = \eta_\ell^p,\label{claim:thm_no_bad_seq_in_loose_knitwork_properties_phi}
    \end{align}
    for every $1 \leq \ell \leq t_p$ and every $p\in \N$. 

    By \cref{def:foundation} of foundations, the $\Omega$-knitwork~$\GGG_{\FFF_i}$ is well-linked for every~$i \in \N$. Further, by our assumption we have $\ebw{G_{\FFF_i}} < h(k)$ for all but finitely many $i \in \N$; without loss of generality let it be all $i \in \N$ after possibly switching to the respective infinite subsequence. Thus, by \cref{thm:wqo_bounded_carvingwidth_knitworks} there exist~$i,j \in \N$ with~$i<j$ such that~$\gamma_0: \GGG_{\FFF_i} \hookrightarrow \GGG_{\FFF_j}$ is a strong~$\Omega$-knitwork immersion. We proceed with an inductive argument lifting $\gamma_0$ to a strong rooted immersion $\gamma:(G_i,\pi(W_i)) \hookrightarrow (G_j,\pi(W_j))$.
    \smallskip

    We start with a few observations for $\gamma_0$; recall \cref{def:matching_of_gamma}. Using the fact that~$\operatorname{dom}(\mu_{\GGG_{\FFF_\ell}}) = \{s_1^\ell,\ldots,s_{t_\ell}^\ell\}$ for every~$\ell \in \N$ by 3. of \cref{obs:stitched_foundation}, the \cref{def:knitwork_immersion} of strong~$\Omega$-knitwork immersion implies the following.
        \begin{enumerate}[i)]
            \item $\gamma_0$ induces an injection~$\{s_1^i,\ldots,s_{t_i}^i\} \to \{s_1^j,\ldots,s_{t_j}^j\}$; let~$t \coloneqq t_i$ and without loss of generality by possibly switching to an infinite subsequence, assume~$\gamma_0(s_\ell^i) = s_\ell^j$ for every~$1 \leq \ell \leq t$,
            \item $\gamma_0(V(G_{\FFF_i})) \cap \{s_1^j,\ldots,s_{t_j}^j\} = \{s_1^j,\ldots,s_t^j\}$,
            \item for~$1 \leq \ell \leq t$, $\Abs{\rho_{\GGG_{\FFF_i}}(s^i_\ell)} = \Abs{\rho_{\GGG_{\FFF_j}}(s^j_\ell)}$ and for~$\mu_{\GGG_{\FFF_i}}(s^i_\ell) = (e^i_1,\ldots,e^i_{k'})$ and~$\mu_{\GGG_{\FFF_j}}(s^j_\ell) = (e^j_1,\ldots,e^j_{k'})$ it holds that~$e^j_r$ is contained in~$\eta(e^i_r)$ for every~$1 \leq r \leq k'$ and respective~$k'<k$,
            \item for every~$t < \ell \leq t_j$,~$M_{\gamma_0}(s_\ell^j) \in \m_{\FFF_j}(s^j_\ell)$
            \item for every~$1 \leq \ell \leq t$,~$\Phi_{\GGG_{\FFF_i}}(s^i_\ell) \preceq \Phi_{\GGG_{\FFF_j}}(s^j_\ell)$.
        \end{enumerate}

Property iv) together with \cref{claim:thm_no_bad_seq_in_loose_knitwork_properties_m} (for $p=j$) implies the following (recall \cref{def:types_of_linkages_on_a_cut}).

\begin{claim}\label{claim:thm_no_bad_seq_in_loose_linkages_witnessing_immersion_for_leftovers}
    Let $t  < \ell \leq t_j$. There exists $\LLL_\ell^j \in \LLL(\rho(X(F_\ell^j)))$ such that $M_{\gamma_0}(s_\ell ^j) = \tau(\LLL_\ell ^j)$. 
\end{claim}

In turn \cref{claim:thm_no_bad_seq_in_loose_linkages_witnessing_immersion_for_leftovers} yields the following.

\begin{claim}\label{claim:thm_no_bad_seq_in_loose_induction_step_leftovers}
    Let $t  < \ell \leq t_j$. There exists a strong $\Omega$-knitwork immersion $\gamma': \GGG_{\FFF_i} \hookrightarrow \knit\big(\GGG_{\FFF_j}, \GGG_{\overline{X(F_\ell^j)}}; \pi(X(F_\ell ^j)) \big)$ such that for every $v \in V(G_{\FFF_i})$, $\gamma_0(v) = \gamma'(v)$ and for every $e \in E(G_{\FFF_i})$, $E(\gamma_0(e)) = E(\gamma'(e)) \setminus E(G_j[X(F_j)])$.
\end{claim}
\begin{claimproof}
By construction $G_{\overline{X(F_\ell^j)}} \cap G_{\FFF_j} = \rho(X(F_\ell^j))$. Let $G' \coloneqq \knit\big(G_{\FFF_j}, G_{\overline{X(F_\ell^j)}}; \pi(X(F_\ell ^j)) \big)$ be the respective graph obtained in the claim. By \cref{def:knitting_knitworks} of knitting, $V(G_{\FFF_j})\setminus \{s_\ell^j\} \subset V(G')$ as well as $E(G_{\FFF_j}) \subseteq E(G')$. In essence, the vertex $s_\ell^j$ was replaced with $G_{\overline{X(F_\ell^j)}}$.

 Let $M_{\gamma_0}(s_\ell ^j) = \{(e_z,e_z')\mid 1\leq z \leq r\}$ for respective $r \in \N$ and let $E(s_\ell^j) \coloneqq \{e_z,e_z' \mid 1 \leq z \le r\}$. By \cref{claim:thm_no_bad_seq_in_loose_linkages_witnessing_immersion_for_leftovers} there exists a linkage $\LLL_\ell^j \in \LLL(\rho(X(F_\ell^j)))$ in $G_{\overline{X(F_\ell^j)}}$ such that $\tau(\LLL_\ell ^j) = M_{\gamma_0}(s_\ell ^j)$. Thus for every $1 \leq z \leq r$ there exists an $(e_z,e_z')$-path $L_z$ in $\LLL_\ell ^j$. Let $e^z \in E(G_{\FFF_i})$ be the unique edge such that $(e_z,e_z') \subseteq \gamma_0(e^z)=(f_1,\ldots,e_z,e_z',\ldots,f_q)$ for some $q \in \N$. Let $P_z =(f_1,\ldots,e_z),P_z' =(e_z',\ldots,f_q)$, then $P_z,P_z'$ are paths in $G'$. Define $P^z \coloneqq P_z \circ L_z \circ P_z'$, then $P^z$ is a path in $G'$ by \cref{obs:concat_paths} for the three summands are pairwise edge-disjoint (this is clear for $P_z,P_z'$ and follows for $L_z$ by 2. of \cref{obs:knitting_fundamentals}). Finally define $\gamma'$ via
 \begin{align*}
     & \gamma'(v) \coloneqq \gamma_0(v), && \text{ for all } v \in V(G_{\FFF_i}),\\
     & \gamma'(e) \coloneqq \gamma_0(e), && \text{ for all } e \in E(G_{\FFF_i})\setminus E(s_\ell^j),\\
     & \gamma'(e^z) \coloneqq P^z,&& \text{ for every } 1 \leq z \leq r.
 \end{align*}
 Then $\gamma'$ is a strong $\Omega$-knitwork immersion, (the remaining verification is easy and left to the reader) satisfying the assumptions of the claim.
\end{claimproof}

Property v) together with the fact that $\big(G_{\overline{X(F_\ell^p})},\pi(X(F_\ell^p))\big) = \eta_\ell^p \in V(\Omega)$ for every $1 \leq \ell \leq t_p$ and $p \in \N$ implies the following.

\begin{claim}\label{claim:thm_no_bad_seq_in_loose_linkages_witnessing_immersion_for_pieces}
    For every~$1 \leq \ell \leq t$, there is a strong rooted immersion $$\eta_\ell:\big(G_{\overline{X(F^i_\ell)}},\pi(X(F^i_\ell))\big) \hookrightarrow \big(G_{\overline{X(F^j_\ell)}},\pi(X(F^j_\ell))\big).$$
\end{claim}

For every $1< \ell \leq t$, let $\GGG_{\overline{X(F_\ell^j)}}$ be the $\Omega$-knitwork induced by $ (G_{\overline{X(F_\ell^j)}},\pi(X(F_\ell^j))$; note that $\eta_\ell$ induces a respective strong $\Omega$-knitwork immersion. 

\begin{claim}\label{claim:thm_no_bad_seq_in_loose_induction_step_pieces}
    Let $1  < \ell \leq t$. There there exists a strong $\Omega$-knitwork immersion $\gamma': \knit\big(\GGG_{\FFF_i}, \GGG_{\overline{X(F_\ell^i)}}; \pi(X(F_\ell ^i)) \big) \hookrightarrow \knit\big(\GGG_{\FFF_j}, \GGG_{\overline{X(F_\ell^j)}}; \pi(X(F_\ell ^j)) \big)$ such that for every $v \in \overline{X(F_\ell^i)}$, $\gamma'(v) = \gamma_0(v)$, for every $v \in {X(F_\ell^i)}$, $\gamma'(v) = \eta_\ell(v)$ and for every $e \in E(G')$ with both endpoints in $\overline{X(F_\ell^i)}$, $\gamma'(e) = \gamma_0(e)$, for every $e \in E(G')$ with both endpoints in ${X(F_\ell^i)}$, $\gamma'(e) = \eta_\ell(e)$.
\end{claim}
\begin{claimproof}
    This follows immediately from \cref{thm:knitting_knitwork_immersion} with \cref{lem:stitch-and-knit} where $\gamma_d$ is $\gamma_0$ and $\gamma_u$ is $\eta_\ell$.
\end{claimproof}

Finally we inductively apply \cref{claim:thm_no_bad_seq_in_loose_induction_step_pieces} for every $1 \leq \ell \leq t$ and subsequently inductively apply \cref{claim:thm_no_bad_seq_in_loose_induction_step_leftovers} for every $t < \ell \leq t_j$ to conclude the following.

    \begin{claim}
        There is a strong rooted immersion $\gamma:(G_i,\pi(W_i)) \hookrightarrow (G_j,\pi(W_j))$.
    \end{claim}
    \begin{claimproof}
            By the iterative \cref{def:stitch_foundation} of $\GGG_{\FFF_i}$ via stitching, knitting back all the $1 \leq \ell \leq t$ pieces $(G_{\overline X(F_\ell^i)},\pi(X(F_\ell^i))$ iteratively, an iterative application of \cref{lem:stitch-and-knit} implies that the resulting graph is $\bar G_i$ again. Similarly for $\GGG_{\FFF_j}$, by the second iteration for all $t < \ell \leq t_j$ the resulting rooted Eulerian digraph is $\bar G_j$. Note that the order of taking stitches and knitting back the stitches does not matter since $X(F_\ell ^p) \cap X(F_{\ell'}^p) = \emptyset$ for all $1 \leq \ell\neq \ell' \leq t_p$ and $p \in \{i,j\}$.

            Finally, \cref{claim:thm_no_bad_seq_in_loose_induction_step_pieces} and \cref{claim:thm_no_bad_seq_in_loose_induction_step_leftovers} guarantee that the resulting map $\gamma$ is a strong rooted immersion.
    \end{claimproof}

    This concludes the proof of the theorem.
\end{proof}

\paragraph{Dealing with boundary-linked nests.}

We prove now, that the contrary to \cref{thm:no_bad_seq_in_disc_loosely_linked} must hold, refuting the existence of a bad sequence. To this extent we prove the following.

\begin{lemma}\label{lem:stitched_foundation_bounded_tw}
    Let $\Sigma$ be a surface and $k \in 2\N$. Let $\Omega$ be a well-quasi-order and $f: \N \to \N$ a function. There is a function $h_f:\N \to \N$ such that the following holds. Let~$\bar{G} = (G,\pi(W)) \in \mathbf{G}(\Delta,k)$ be a bead-rooted Eulerian digraph with an Eulerian embedding~$(\Gamma,\nu)$ in some disc~$\Delta$ and let~$\FFF\coloneqq (F_1,\ldots,F_t)$ be a foundation for~$\Gamma$ and~$\Pi_\FFF=\big(\pi(X(F_1)),\ldots,\pi(X(F_t))\big)$ be some choice of orderings for the respective cuts. Let $(G_\FFF,\pi(W)) = \stitch(\bar{G}; (\FFF,\Pi_\FFF))$ with respective stitch vertices $(s_1,\ldots,s_t)$.  Then $\ebw{G_\FFF} \geq h_f(k)$ if and only if $\Gamma$ immerses a boundary-linked directed $(f(k),f(k))$-nest.
\end{lemma}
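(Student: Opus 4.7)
The plan is to prove both directions separately, with the function $h_f$ chosen large enough so that a graph of carving width at least $h_f(k)$ contains as an immersion a swirl of size substantially larger than both $f(k)$ and the $3k+1$ bound appearing in \cref{lem:nuclei_are_well-linked}. Concretely, I would set $h_f(k) \coloneqq f_\Delta\bigl(N(f(k),k)\bigr)$, where $f_\Delta$ is the function from \cref{thm:grid_theorem_for_etb} (applied to the bounded-degree graph $G_\FFF$, whose non-bead vertices have degree at most four by the embedding, and whose stitch vertices $s_i$ have degree $<k$ by (2) of \cref{obs:stitched_foundation}) and $N(f(k),k)$ is a polynomial bound chosen so that after ``losing'' some nest components to the $s_i$'s we still retain a nest of size $\max(f(k),3k+1)$.

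For the forward direction, assume $\ebw{G_\FFF}\geq h_f(k)$. First, $G_\FFF$ inherits an Eulerian embedding in $\Delta$ from $\Gamma$: each $s_i$ is placed inside $\Delta(F_i)$ and its $<k$ incident edges are connected to the alternating ordering of $\rho(F_i)$ on $F_i$, which is possible because $F_i$ is an alternating cut-cycle by \cref{obs:C-sec_is_alternating}. By \cref{thm:grid_theorem_for_etb}, $G_\FFF$ immerses a large Euler-grid, hence by \cref{lem:swirl_immerses_in_eulergrid} a large swirl. Via standard arguments (as in \cref{lem:from_undirected_to_directed_nest} and \cref{lem:from_boss_to_nucleus}, using the vertex-face graph together with \cref{lem:vf-path-euler-paths,lem:vf-path-euler-circle}), a sufficiently large swirl immersion in a Eulerian embedding in a disc yields a large directed nest in that embedding. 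Since each stitch vertex $s_i$ has degree $<k$, at most $k/2$ circles of any such nest can pass through a given $s_i$, and each use forces the circle to enter $\Delta(F_i)$ through two edges of $\rho(F_i)$; hence, by pushing such a circle outward along $F_i$, we obtain an equivalent circle in $\Gamma$ that entirely avoids $\Delta(F_i)^\circ$. Applying this pushing operation to all stitch vertices and then selecting a sub-nest of size $\max(f(k),3k+1)$ (the linear budget for the pushing losses is absorbed by our choice of $N(f(k),k)$) produces a directed $(f(k),f(k))$-nest $(C_1,\ldots,C_s)$ in $\Gamma$ disjoint from $\bigcup_i X(F_i)$. It remains to show this nest is boundary-linked: if instead $\kappa(\zeta_1,C_s;\Gamma)<k$, then the length bound ensures $C_s$ is a nucleus, and by \cref{lem:nuclei_are_well-linked} applied to the sub-nest of length $3k+1$ ending at $C_s$, it is bombastic. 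By the foundation's maximality, $X(C_s)\subseteq X(F_i)$ for some $i$, but $V(C_s)\cap X(F_i)=\emptyset$ and $V(C_s)\neq\emptyset$, a contradiction.

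For the backward direction, assume $\Gamma$ immerses a boundary-linked directed $(f(k),f(k))$-nest $(C_1,\ldots,C_s)$. Boundary-linkedness gives $\kappa(\zeta_1,C_s;\Gamma)\geq k$, while each $F_i$ in the foundation satisfies $\delta(F_i)<k$; hence $\nu(E(C_s))\not\subset\Delta(F_i)$ for any $i$, since otherwise $F_i$ would be a cut-cycle surrounding $C_s$ of order strictly less than $\kappa(\zeta_1,C_s;\Gamma)$. This geometric observation extends to the whole nest via the laminar structure of $\{\Delta(F_i)\}$ from \cref{cor:disjoint-sections} and the nested structure $\Delta(C_1)\supseteq\cdots\supseteq\Delta(C_s)$: only bounded pieces of any nest cycle can lie inside any $\Delta(F_i)$. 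Replacing each such excursion with a walk through the corresponding stitch vertex $s_i$ yields, in $G_\FFF$, a structure containing the nest's skeleton of edge-disjoint circles and connecting paths (they may now meet at the $s_i$, but this is irrelevant for deriving a swirl). Standard swirl-construction techniques as in \cref{thm:swirl_embedds_Eulerembeddable_graphs} then yield a large swirl as a (possibly non-strong) immersion in $G_\FFF$, and by the standard monotonicity of carving width under immersion (\cref{lem:ebw_closed_under_immersion}) together with $\ebw{\SSS_{f(k),f(k)}}=\Theta(f(k))$, this gives $\ebw{G_\FFF}\geq h_f(k)$ for the chosen $h_f$.

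The main obstacle lies in the forward direction: the nest extracted from the swirl immersion in $G_\FFF$ lives in the embedding that places each $s_i$ in $\Delta(F_i)^\circ$, and the nest cycles may pass through many stitch vertices at once. Controlling this requires the pushing argument based on the degree bound $\deg(s_i)<k$ together with the alternating structure of $F_i$ (so that a circle's entry-exit pair through $s_i$ can be replaced by an arc of $F_i$ avoiding $\Delta(F_i)^\circ$), and ensuring that the resulting modified nest in $\Gamma$ is still of size $\max(f(k),3k+1)$, which is what allows the bombasticity argument to close the loop against the foundation's maximality. The remaining case analysis on the laminar structure of the $F_i$'s relative to the nest, and the careful choice of $N(f(k),k)$, constitute the bulk of the technical work.
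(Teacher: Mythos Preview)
Your approach diverges substantially from the paper's, and the forward direction contains a genuine gap. The parameter you overlook is that the number $t$ of foundation pieces is not bounded in terms of $k$ or $f(k)$; hence no ``linear budget'' $N(f(k),k)$ can absorb losses that scale with $t$. Your bound ``at most $k/2$ circles pass through a given $s_i$'' is correct per stitch vertex, but a single nest circle may visit many of the $s_i$'s, and there may be arbitrarily many $s_i$'s altogether. More fundamentally, ``pushing such a circle outward along $F_i$'' is ill-defined: $F_i$ is a cut-cycle in the surface, not a graph cycle, so there is no graph path along which to reroute; and even if one attempts to reroute in $G[\bar{X(F_i)}]$, the replacement paths for different circles (or different passages of the same circle) may share edges, destroying the nest structure.

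The paper sidesteps this entirely by never extracting a nest from $G_\FFF$. Instead it builds an auxiliary degree-four Eulerian digraph $G'\subseteq G$ by splitting off everything strictly inside each $\Delta(C_i)$ (for the foundation's bombastic nuclei $C_i$), so that each $C_i$ bounds a face in $G'$. On $G'$ one can apply \cref{thm:4_reg_swirl} directly: large carving width of $G'$ yields a swirl and hence a directed nest already in $G'\subseteq G$, i.e.\ already in $\Gamma$---no pushing needed. If this nest is not boundary-linked, two consecutive inner circles are bombastic nuclei of $\Gamma$, which by the foundation's maximality must be captured by foundation pieces, contradicting that $G'$ has those interiors emptied. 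The transfer from $\ebw{G'}$ to $\ebw{G_\FFF}$ is then an explicit carving construction: since $C_1,\ldots,C_t$ are edge-disjoint circles, any tree edge of the carving of $G'$ lies in the span of at most $\ebw{G'}$ of them, so attaching each $s_i$ (degree $<k$) as a leaf inside $\operatorname{span}(C_i)$ multiplies the width by at most $k$---this is precisely where the dependence on $t$ disappears.

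Finally, the paper only proves (and only uses) the forward implication; the backward direction with the paper's specific choice $h_f(k)=k\cdot f_{\SSS}(k)$ is not established, and your sketch for it is likewise incomplete (the ``structure containing the nest's skeleton'' in $G_\FFF$ need not certify carving width at least $h_f(k)$ for that particular $h_f$).
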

\begin{proof}
   Choose $f(k)=3k+1$ and $g_f=f_{\ref{thm:4_reg_swirl}}$ and $h_f(k)=k \cdot g_f(k)$.
    Let $C_1,\ldots,C_t$ be the respective cycles of the foundation $\FFF = (F_1,\ldots,F_t)$, i.e., $F_i$ is a $C_i$-section for $1 \leq i \leq t$, in particular $\Delta(C_i) \cap \Delta(C_j) = \emptyset$ for distinct $1 \leq i,j \leq t$.
    We construct an auxiliary graph $G'$ from $G$ as follows. For each $1\leq i \leq t$, split off all the vertices embedded inside $\Delta(C_i)$ (which effectively boils down to deleting said vertices), then $C_i \subset G'$, and it bounds a face in the respective Eulerian embedding. Recall that by \cref{obs:immersion_robust_under_splitting_off} and \cref{obs:eulerian_emb_closed_under_splitting_and_immersion} $G' \hookrightarrow G$ and in particular $G'$ is Eulerian. 
    Since $F_i$ is a $C_i$-section we deduce that $G_\FFF^-\coloneqq G_\FFF-\{s_1,\ldots,s_t\} \subset G'$, where $G_\FFF-\{s_1,\ldots,s_t\}$ is the (possibly non-Eulerian) digraph obtained by deleting the vertices $s_1,\ldots,s_t$.

   \begin{claim}
       $\ebw{G'}< g_f(k)$.
   \end{claim}
   \begin{claimproof}
       Assume that $\ebw{G'} \geq g_f(k)$, then by \cref{thm:4_reg_swirl} this implies the existence of an $f(k)$-swirl and in particular a directed $(f(k),f(k))$-nest $(C_1',\ldots,C_{f(k)}')$ in $G'$ (using its Eulerian embedding). If the nest $(C_1',\ldots,C_{f(k)}')$ is boundary-linked we are done since $G' \subset G$, thus assume it is not. In particular then $C_{k}',\ldots,C_{k+1}'$ are nuclei by \cref{def:foundation} and since $f(k) = 3k+2$, we derive that $C_k',C_{k+1}'$ are both bombastic by \cref{lem:nuclei_are_well-linked}. By \cref{def:foundation} $C_k' = C_i$ and $C_{k+1}' = C_j$ for distinct $1 \leq i,j \leq t$. This is a contradiction to $\Delta(C_i) \cap \Delta(C_j) = \emptyset$ whence the claim follows. 
   \end{claimproof}

   In particular the claim implies that $\ebw{G_\FFF^-} \leq g_f(k)$ since carving width is closed under deleting vertices. In particular any carving $(T,\ell)$ of $G'$ is also a carving for $G_\FFF'$ by assigning $\ell(t_i) = \emptyset$ if $t_i \in V(G') \setminus V(G_\FFF')$; let $(T,\ell)$ be the respective carving witnessing width at most $ g_f(k)$.  
   
   We next prove that adding back $s_1,\ldots,s_t$ to the graph does not change the carving-width by too much. To this extent define the $\operatorname{span}(C_i) \subseteq T$ to be the minimum subtree of $T$ such that $V(C_i) \subseteq \ell(\operatorname{span}(C_i))$. 

   \begin{claim}\label{claim:cw_of_GF}
       Let $e \in E(T)$ and let $J(t) \subseteq \{1,\ldots,t\}$ be the set of indices with $e \in E(\operatorname{span}(C_j))$ if and only if $j \in J(t)$. Then $\Abs{J(t)} \leq g_f(k)$.
   \end{claim}
   \begin{claimproof}
       Since $C_1,\ldots,C_t$ are circles, they are in particular connected and thus for every $f \in  E(\operatorname{span}(C_j))$ at least two edges of $E(C_j)$ count towards $\w(f)$ for every $1 \leq j \leq t$. In particular then $\w(e) \geq \Abs{J(t)}$; the claim follows.
   \end{claimproof}

   Finally we may construct a carving $(T',\ell')$ for $G_\FFF$ from $(T,\ell)$ by adding vertices $x_1,\ldots,x_t$ to $T$ and for every $1 \leq i \leq t$ subdividing any edge $e_i \in E(\operatorname{span}(C_i))$ by introducing a new vertex $y_i$ and connecting $x_i$ to $y_i$. Then define $\ell'(x_i) = s_i$ for every $1 \leq i \leq t$ and otherwise equal to $\ell$. Since $\deg(s_i) < k$ for every $1 \leq i \leq t$, by \cref{claim:cw_of_GF} we deduce that $\w(e) < k\cdot g_f(k)$ for every $e \in E(T')$; the lemma follows.
\end{proof}

For the next theorem we need the following extension of \cref{thm:swirl_embedds_Eulerembeddable_graphs}.
\begin{figure}
    \centering
    \includegraphics[height=4cm]{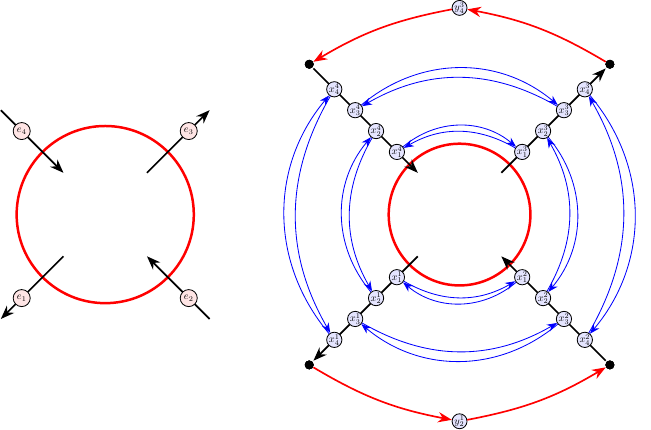}
    \caption{A schematic construction of $G^+$ in \cref{lem:disc-k-immerse-in-swirl} with respective embedding.}
    \label{fig:G^+_swirl_around_roots}
\end{figure}
\begin{lemma}\label{lem:disc-k-immerse-in-swirl}
    Let $\Delta$ be a disc and $k \in 2\N$. Let $(G,\pi(W)), (G',\pi(W')) \in \mathbf{G}(\Delta,k)$. Let $n \coloneqq \Abs{V(G)}$. Let $\pi(W)=(e_1,\ldots,e_k)$ and $\pi(W') = (f_1,\ldots,f_k)$ such that $e_i \in \rho^+(W) \iff f_i \in \rho^+(W')$ for every $1 \leq i \leq k$. Then there exists a function $f: \N \to \N$ such that the following holds.  If $(G',\pi(W'))$ contains a directed boundary-linked $(f(n),f(n))$-nest then $(G,\pi(W)) \hookrightarrow (G',\pi(W'))$.
\end{lemma}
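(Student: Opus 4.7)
The plan is to construct the strong rooted immersion by placing an immersion of $G$ into a swirl sitting deep inside the boundary-linked nest of $G'$, and then using the boundary-linkedness of the nest to route the root edges out to $\rho(W')$ in the prescribed order and with the prescribed orientations.

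First, let $h \colon \N \to \N$ be the function provided by \cref{thm:swirl_embedds_Eulerembeddable_graphs}. I would set $f(n) \coloneqq 3k + f_\SSS(h(n))$, where $f_\SSS$ is the function from \cref{thm:4_reg_swirl}. Let $(C_1, \ldots, C_{f(n)})$ be the witnessing boundary-linked $(f(n),f(n))$-nest in the Eulerian embedding $\Gamma'$ of $(G',\pi(W'))$, and let $P_1', \ldots, P_{f(n)}'$ be the witnessing edge-disjoint paths. Dropping the $3k$ outermost circles, the restriction $(C_{3k+1}, \ldots, C_{f(n)})$ is still a directed $(f(n), f(n)-3k)$-nest inside $\Delta(C_{3k+1})$, giving sufficiently large carving-width to apply \cref{thm:4_reg_swirl} to the subgraph of $G'$ induced by $\Delta(C_{3k+1})$. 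This yields a strongly immersed swirl $\SSS^\ast \cong \SSS_{h(n),h(n)}$ inside $\Delta(C_{3k+1})$.

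Second, I would produce the routing from $\rho(C_{3k+1})$ to $\rho(W')$. Because the nest is boundary-linked, by definition every $C_s$-section has order at least $k$. Applying \cref{cor:Menger_on_a_disc} in the cylindrical region between the cuff $\zeta_1$ of $\Delta$ and the cut-cycle $C_{3k+1}$, I obtain an edge-disjoint $\{\rho(W'), \rho(C_{3k+1})\}$-linkage $\LLL$ of order $k$. By \cref{obs:linkages_in_euler_embedded_cylinder_are_homotopic} the paths in $\LLL$ are pairwise homotopic in this cylinder, and hence they meet $\zeta_1$ and $C_{3k+1}$ in matching cyclic orders. Combining this with the alternation of the cut-cycle $C_{3k+1}$ (\cref{lem:cut_cycles_are_alternating}) and of the cuff $\zeta_1$ (\cref{lem:cuffs_are_alternating}), together with the hypothesis $e_i \in \rho^+(W) \iff f_i \in \rho^+(W')$, I can prescribe $k$ edges $g_1, \ldots, g_k \in \rho(C_{3k+1})$ whose orientations alternate in the required way so that $\LLL$ can be chosen as an $\{g_1, \ldots, g_k\} \leftrightarrow \{f_1, \ldots, f_k\}$-linkage of paths $L_i$ matching indices index-by-index.

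Third, I would combine the interior immersion and the external routing. After contracting the unique bead of $G$ (drawn on $\zeta_1$) to a single vertex and handling its $k$ incident edges by a small planar gadget---as in the proof of \cref{thm:swirl_embedds_Eulerembeddable_graphs}---the resulting Eulerian digraph strongly immerses into $\SSS^\ast$. By the flexibility of the construction in \cref{thm:swirl_embedds_Eulerembeddable_graphs}, the immersion can be arranged so that the images of the root edges $e_1, \ldots, e_k$ terminate at $g_1, \ldots, g_k$ respectively, with matching orientations (using again that the boundary of $\SSS^\ast$, being an alternating cut-cycle, offers the required flexibility to place the $k$ terminations). Concatenating the image of each $e_i$ with $L_i$ then extends this to a strong rooted immersion $\gamma \colon (G, \pi(W)) \hookrightarrow (G', \pi(W'))$.

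The main obstacle I expect is the third step: coordinating three cyclic orderings simultaneously---the orientation of the $k$ roots of the immersed copy of $G$ on the boundary of $\SSS^\ast$, the orientations of the chosen edges $g_1, \ldots, g_k$ on $C_{3k+1}$, and the orientations of $f_1, \ldots, f_k$ on $\zeta_1$---while keeping all the involved paths edge-disjoint. This coordination is the reason the hypothesis $e_i \in \rho^+(W) \iff f_i \in \rho^+(W')$ is indispensable: it forces the three orderings to be alternation-compatible, which, combined with the planarity of the cylindrical linkage and the alternating structure of cut-cycles and cuffs, is exactly what makes the matching realisable.
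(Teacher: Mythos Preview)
Your overall architecture---place an immersion of $G$ into a swirl inside the nest, then route out to the roots---is the right picture, and your second step (the $k$-linkage $\LLL$ from $\rho(W')$ to the nest via Menger and the homotopy observation) is essentially what the paper does. The genuine gap is in your third step, which you yourself flag as the obstacle but then hand-wave away.

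Concretely, two things go wrong. First, invoking \cref{thm:4_reg_swirl} on the subgraph inside $\Delta(C_{3k+1})$ gives you \emph{some} immersed swirl $\SSS^\ast$, but with no control over where it sits relative to the nest cycles; in particular there is no reason the boundary of $\SSS^\ast$ has anything to do with the edges $g_1,\ldots,g_k$ you chose on $C_{3k+1}$, and you have provided no paths from $\SSS^\ast$ to those $g_i$. Second, the ``flexibility of the construction in \cref{thm:swirl_embedds_Eulerembeddable_graphs}'' you appeal to does not exist: that theorem is about unrooted graphs in $\mathbf{G}(\Delta,0)$ and gives no control over where any particular edge lands. The paper addresses both issues at once by (i) not using \cref{thm:4_reg_swirl} at all but instead using the nest \emph{itself} as the swirl $\SSS^\ast$ (the alternating concentric circles plus the radial linkage already form one), so that the boundary-linking paths $P_1,\ldots,P_k$ reach directly into $\SSS^\ast$; and (ii) replacing the bead of $G$ by an explicit planar gadget (four nested subdivision layers with cross-edges, see \cref{fig:G^+_swirl_around_roots}) to obtain an Eulerian $G^+\in\mathbf{G}(\Delta,0)$ with a designated $3$-connected outer circle $C^+$, then using Whitney's theorem to force the image $C^\ast$ of $C^+$ to bound all of $\gamma(G^+)$ in any planar embedding. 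With $\SSS$ mapped into the inner cycles of $\SSS^\ast$ and the outer $2k$ cycles of $\SSS^\ast$ reserved, the root edges can then be routed along those outer cycles and the $P_i$ to the prescribed $f_i$ on $\zeta_1$. Your proposal is missing both the gadget/Whitney argument and the use of the nest's own concentric structure as the target swirl.
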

\begin{proof}%
    Let $(\Gamma,\nu)$ and $(\Gamma',\nu')$ be Eulerian embeddings of $(G,\pi(W)), (G',\pi(W'))$ respectively. Let $g$ be as in \cref{thm:swirl_embedds_Eulerembeddable_graphs} and let $h(x) \coloneqq g(5x)$. 
    
    By \cref{lem:cuffs_are_alternating} cuffs are alternating and we assume without loss of generality that $e_1 \in \rho^-(W)$ and $f_1 \in \rho^-(W')$ are out-edges, $e_k \in \rho^+(W),f_k \in \rho^+(W')$ are in-edges and $e_i \in \rho^+(W) \iff e_{i+1} \in \rho^-(W)$.  The same holds for $f_i,f_{i+1}$ by the assumption of the theorem, for every $1 \leq i <k$. The claim follows analogously for different orderings by switching to respective bijections.

    Pair up the edges of $\pi(W)$ via the perfect matching $M = \{(e_i,e_{i+1}) \mid 1 \leq i \leq k, i \text{ odd }\}$, which exists since $k \in 2\N$. Construct a new graph $G^+$ from $G$ as follows. Subdivide each edge $e_i$ four times resulting in a path $(e_i,e_i^1,e_i^2,e_i^3,e_i^4)$ and let $x_i^j \coloneqq \tail(e_i^j)$ for $j \in \{1,2,3,4\}$ if $e_i \in \rho^-(W)$ (it is an out-edge)  and $(e_i^4,e_i^3,e_i^2,e_i^1,e_i)$ with $x_i^j \coloneqq \head(e_i^j)$ for $j \in \{1,2,3,4\}$ if $e_i \in \rho^+(W)$ (it is an in-edge). Finally remove all incidences with $W$ and for $(e_i,e_{i+1}) \in M$ introduce $x_i^{i+1}$ and let $\head(e_i) = x_i^{i+1}=\tail(e_{i+1})$. By construction $G^+$ is a graph (all edges have exactly two incidences) and it is Eulerian. Next for each $1 \leq i \leq k$ do the following: if $(e_i,e_{i+1}) \in M$, add four new edges $e_{i,(i+1)}^1,e_{(i+1),i}^1, e_{i,(i+1)}^3,e_{(i+1),i}^3$ to the graph such that $e_{i,(i+1)}^1=(x_i^1,x_{i+1}^1)$, $e_{(i+1),i}^1=(x_{i+1}^1,x_{i}^1)$ as well as $e_{i,(i+1)}^3=(x_i^3,x_{i+1}^3)$ and $e_{(i+1),i}^3=(x_{i+1}^3,x_{i}^3)$. If $(e_i,e_{(i+1)}) \notin M$ then add four new edges $e_{i,(i+1)}^2,e_{(i+1),i}^2, e_{i,(i+1)}^4,e_{(i+1),i}^4$ to the graph such that $e_{i,(i+1)}^2=(x_i^2,x_{i+1}^2)$, $e_{(i+1),i}^2=(x_{i+1}^2,x_{i}^2)$ as well as $e_{i,(i+1)}^4=(x_i^3,x_{i+1}^4)$ and $e_{(i+1),i}^4=(x_{i+1}^3,x_{i}^4)$. Additionally do the same for $(e_k,e_1)$, i.e., add four new edges $e_{k,1}^2,e_{1,k}^2, e_{k,1}^4,e_{1,k}^4$ to the graph such that $e_{k,1}^2=(x_k^2,x_1^2)$, $e_{1,k}^2=(x_{1}^2,x_{k}^2)$ as well as $e_{k,1}^4=(x_k^3,x_1^4)$ and $e_{1,k}^4=(x_{1}^3,x_{k}^4)$.  The resulting graph is of degree at most $4$, Eulerian, and Eulerian embeddable in the disc, i.e., $G^+ \in \mathbf{G}(\Delta,0)$. See \cref{fig:G^+_swirl_around_roots} for an illustration of the construction together with the resulting Eulerian embedding $(U,\nu)$ ``induced by $\Gamma$'' in the disc (that is $U$ agrees with $\Gamma$ where defined).

    We essentially replaced the root edges of $G$ by a small directed nest (or rather a swirl).

    Note that $\Abs{V(G^+)} \leq 5\Abs{V(G)}$ by construction. By our choice of $h$, \cref{thm:swirl_embedds_Eulerembeddable_graphs} implies that $G^+$ can be strongly immersed into an $h(s)$-swirl $\SSS$ with respective Eulerian embedding $\Pi =(U',\nu')$ in the disc $\Delta$ by some strong immersion $\gamma$ yielding an induced embedding $\Pi_\gamma=(U'_\gamma,\nu'_\gamma)$ (note that the induced embedding is an Eulerian embedding in $\Delta$ by \cref{obs:eulerian_emb_closed_under_splitting_and_immersion}). We refer to the graph $\gamma(G^+)$ as $G^+_\gamma$ for notational simplicity.
    
    Now, since the underlying undirected graph of the gadget we added to $G$ in order to construct $G^+$ is $3$-connected, by Whitney's Theorem \cite{Whitney1933}, the outermost circle $C^+$ of $G^+$---that is $C^+=\ol(f)$ for the outerface $f$ of $(U,\nu)$, which is a circle by \cref{obs:faces_in_2-cell_Euler_embeddings_bounded_by_circle}---is mapped to a circle $C^*$ in $\SSS$ (since the immersion is strong). Thus $C^*$ bounds a disc $\Delta(C^*) \subseteq \Delta$ and since the induced embedding $\Pi_\gamma$ is $2$-cell, \cref{obs:graphs_inside_circles_are_Eulerian} implies that $G^+_\gamma[\Delta(C^*);U_\gamma']$ is Eulerian. In particular, by the same observation since $G^+-C^+$ is not disconnected, either $G^+_\gamma[\Delta(C^*);U_\gamma'] \cap G_\gamma^+ = C^*$ or $G^+_\gamma[\Delta(C^*);U_\gamma'] \cap G_\gamma^+ = G_\gamma^+$. That is, informally speaking, either all of $G^+$ is mapped inside $\Delta(C^*)$ or outside of it.

    \begin{claim}
        We may assume that $G^+_\gamma[\Delta(C^*);U_\gamma'] \cap G_\gamma^+ = G_\gamma^+$.
    \end{claim}
    \begin{claimproof}
        Suppose that $G^+_\gamma[\Delta(C^*);U_\gamma'] \cap G_\gamma^+  = C^*$. 
        As $\SSS$ is planar, we can embed $\SSS$ into the sphere by an embedding $U''$ which extends $U'$ in the sense that every face of $U'$ other than the outer face is a face of $U''$. Now choose any face $f$ of $U'$ contained in the interior of $\Delta(C^*)$. Next remove an open disc in the interior of $f$ on the sphere and obtain a disc in which $\SSS$ is embedded by $U''$. But now, in $U''$ we have  $G^+_\gamma[\Delta(C^*);U''_\gamma] \cap G_\gamma^+ = G_\gamma^+$ as required.
    \end{claimproof}
 
    Finally choose $f=h(n)+2k$ and let $\ell = f(n)$.
    
    Let $(C_1,\ldots, C_\ell)$ be a directed boundary linked nest in $G'$ and let $P_1,\ldots,P_k$ be the respective°$\{\rho(W),E(C_\ell)\}$-paths where $P_i$ starts or ends in $f_i$ if $f_i$ is an in- or out-edge respectively for every $1 \leq i \leq k$. Note that as proved in \cite{EDP_Euler}, and as is easily verifiable due to the fact that the nest is already Eulerian embedded (and the cycles $C_1,\ldots,C_\ell$ are of alternating orientation), $G'$ immerses an $f(k)$-swirl $\SSS^*$ such that $\bigcup_{i=1}^k (C_i \cup P_i) \subseteq \SSS^*$. 

     Let $C_1^*, \ldots, C_{f(k)}^*$ be the alternating cycles of $\SSS^*$. As $f(k) = h(s)+2k$, $\SSS$ immerses into the sub-swirl of $\SSS^*$ induced by the cycles $C_{2k+1}^*, \dots, C_{f(k)}^*$. Furthermore, by flipping $\SSS^*$ if necessary, we may assume that the orientation of $C^*$ agrees with the orientation of the boundary paths $P_1, \dots, P_k$. Now, we can use the remaining $2k$ cycles $C_1^*, \dots, C_{2k}^*$ and the paths $P_1, \dots, P_k$ to connect the boundary edges $f_1, \dots, f_k$ to the edges $e_1, \dots, e_k$ of the immersion of $G$ in $\SSS^*$. This completes the proof. 
\end{proof}

The next theorem follows by combining \cref{lem:stitched_foundation_bounded_tw} and \cref{lem:disc-k-immerse-in-swirl} essentially refuting \cref{thm:no_bad_seq_in_disc_loosely_linked}. 

\begin{theorem}\label{thm:no_bad_seq_in_disc_highly_linked}
   Let $\Delta$ be a disc. Let~$k\in \N$ be minimal such that $\mathbf{G}(\Delta,k)$ is not well-quasi-ordered by strong immersion. Let~$((G_i,\pi(W_i))_{i \in \N}$ be a bad sequence (with respect to strong immersion) of bead-rooted Eulerian digraphs where~$\bar G_i \in \mathbf{G}(\Delta,k)$ with respective Eulerian embedding~$(\Gamma_i,\nu_i,\omega_i)$ and foundations~$\FFF_i=(F_1^i\ldots,F_{t_i}^i)$ with respective $\Pi_{\FFF_i}$ for every~$i\in \N$. 
    
    Then there exists a function $h: \N \to \N$ and an infinite sequence $I \subseteq \N$  such that $\ebw{\stitch(\bar G_i; (\FFF_i,\Pi_{\FFF_i}))} \leq h(k) $ for every $i \in I$.
\end{theorem}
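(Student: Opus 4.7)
The plan is to read the theorem as the contrapositive of a single implication: if cofinitely many of the stitched graphs have large carving width, then Lemmas \ref{lem:stitched_foundation_bounded_tw} and \ref{lem:disc-k-immerse-in-swirl} together produce a strong rooted immersion from the first graph of the bad sequence into arbitrarily late members, contradicting badness. So the strategy is to fix the first graph, turn its size into the relevant threshold, and then propagate the bound.

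First I would apply the pigeonhole principle to align the \emph{sign patterns} of the orderings. For each $i \in \N$ write $\pi(W_i) = (e_1^i,\ldots,e_k^i)$ and define $\sigma_i \in \{+,-\}^k$ by $\sigma_i(j) = +$ if $e_j^i \in \rho^+(W_i)$ and $-$ otherwise. As there are only $2^k$ possible patterns, some pattern is realised by an infinite $I_0 \subseteq \N$. Restricting our attention to $(\bar G_i)_{i \in I_0}$ (still a bad sequence), for any $i,j \in I_0$ the orderings $\pi(W_i)$ and $\pi(W_j)$ satisfy the compatibility hypothesis of Lemma~\ref{lem:disc-k-immerse-in-swirl}, namely $e_p^i \in \rho^+(W_i) \iff e_p^j \in \rho^+(W_j)$ for every $1 \leq p \leq k$.

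Next let $i_0 := \min I_0$ and $n_1 := |V(G_{i_0})|$, and let $c := f_{\ref{lem:disc-k-immerse-in-swirl}}(n_1)$ where $f_{\ref{lem:disc-k-immerse-in-swirl}}$ is the function supplied by Lemma~\ref{lem:disc-k-immerse-in-swirl}. Choose $f\colon \N \to \N$ with $f(k) \geq c$, and set $h(k) := h_f(k)$ for $h_f$ as provided by Lemma~\ref{lem:stitched_foundation_bounded_tw}. Suppose for contradiction that only finitely many $i \in I_0$ satisfy $\ebw{\stitch(\bar G_i;(\FFF_i,\Pi_{\FFF_i}))} \leq h(k)$. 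Then there is an infinite $I_1 \subseteq I_0$ along which $\ebw{\stitch(\bar G_i;(\FFF_i,\Pi_{\FFF_i}))} > h(k)$, so by Lemma~\ref{lem:stitched_foundation_bounded_tw} each such $\Gamma_i$ immerses a boundary-linked directed $(f(k),f(k))$-nest, and in particular a boundary-linked $(c,c)$-nest. By the pigeonhole step the orderings of $\bar G_{i_0}$ and $\bar G_i$ are compatible, so Lemma~\ref{lem:disc-k-immerse-in-swirl} yields a strong rooted immersion $\bar G_{i_0} \hookrightarrow \bar G_i$ for every $i \in I_1$ with $i > i_0$, contradicting badness. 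Hence infinitely many $i \in I_0$ satisfy the required bound, and taking $I$ to be these indices finishes the proof.

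The only real obstacle is making the hypothesis of Lemma~\ref{lem:disc-k-immerse-in-swirl} available, since Lemma~\ref{lem:stitched_foundation_bounded_tw} only delivers a boundary-linked nest and says nothing about how the root orderings of $\bar G_{i_0}$ and $\bar G_i$ relate; the pigeonhole step on the $2^k$ sign patterns handles this cleanly, and after that point the argument is essentially a direct concatenation of the two previously established lemmas.
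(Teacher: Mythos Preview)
Your proof is correct and follows the same approach as the paper: use Lemma~\ref{lem:stitched_foundation_bounded_tw} to turn large carving width into a boundary-linked nest, then Lemma~\ref{lem:disc-k-immerse-in-swirl} to obtain a strong rooted immersion from the first graph into a later one, contradicting badness. Your pigeonhole step aligning the in/out sign patterns of the root orderings is a detail the paper's proof glosses over but which is indeed needed to meet the hypothesis of Lemma~\ref{lem:disc-k-immerse-in-swirl}, so your argument is in fact slightly more complete.
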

\begin{proof}
    Let $n \coloneqq |V(G_1)|$. By \cref{lem:disc-k-immerse-in-swirl}, if there is $i > 1$ such that $(G_i, \pi(W_i))$ contains a boundary linked $(f(n), f(n))$-nest, then $(G_1, \pi(W_1)) \hookrightarrow (G_i, \pi_i(W_i))$. Thus, as $((G_i,\pi(W_i))_{i \in \N}$ is bad, there cannot be some $(G_i, \pi_i(W_i))$ containing a boundary linked $(f(n), f(n))$-nest. But then, \cref{lem:stitched_foundation_bounded_tw} implies the theorem. %
     
\end{proof}

As mentioned above we may now combine \cref{thm:no_bad_seq_in_disc_loosely_linked} and \cref{thm:no_bad_seq_in_disc_highly_linked} to prove the main \cref{thm:wqo:bead-root_for_disc} of this section.

\begin{proof}[of \cref{thm:wqo:bead-root_for_disc}]
    Assume the contrary and let $(\bar G_i)_{i \in \N}$ be a bad sequence with respect to strong immersion. For every $i \in \N$ let $(\Gamma_i,\nu_i)$ be respective embeddings of $\bar G_i$ in $\Delta$ with respective foundations $\FFF_i$ and orderings $\Pi_{\FFF_i}$. By \cref{thm:no_bad_seq_in_disc_highly_linked} there exists a function $h : \N \to \N$ and an infinite sequence $I \subseteq \N$ such that $\ebw{\stitch(\bar G_i; (\FFF_i,\Pi_{\FFF_i}))} \leq h(k)$. This is a contradiction to \cref{thm:no_bad_seq_in_disc_loosely_linked} refuting the existence of the bad sequence and thus concluding the proof.
\end{proof}

\section{Strong Immersions for the Cylinder} \label{sec:cylinder}

The following is the main theorem of this section.
\begin{theorem}\label{thm:wqo:bead-root_for_cylinder}
 Let $\Sigma$ be a cylinder and $k \in 2\N$. The class $\mathbf{G}(\Sigma,k)$ is well-quasi-ordered by strong immersion.
\end{theorem}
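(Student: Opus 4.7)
The plan is to follow the high-level strategy sketched in Section 2, doing a double induction: on the pair $(k,\ebw{})$ with the disc case (Theorem~\ref{thm:wqo:bead-root_for_disc}) as the base. Assume toward a contradiction that $(\bar G_i)_{i\in\N}$ with $\bar G_i=(G_i,\pi(W_\omega^i))\in\mathbf{G}(\Sigma,k)$ is a bad sequence and fix Eulerian embeddings $(\Gamma_i,\nu_i,\omega_i)$ with $c(\Sigma)=\{\zeta_1,\zeta_2\}$ and $W_\omega^i=(W_1^i,W_2^i)$. By repeatedly passing to infinite subsequences I first normalize the sequence: let $k_p\coloneqq\Abs{\rho(W_p^i)}$ and, using the pigeonhole principle, arrange that $k_1,k_2$ are independent of $i$, that the alternating orderings $\pi(W_p^i)$ induced by $\zeta_p$ look ``the same'' after relabelling root edges, and that in-edges correspond to in-edges on both cuffs. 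Set $n\coloneqq\Abs{V(G_1)}$.

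Next I split into the usual representativity dichotomy. The ``high-representativity'' case for the cylinder requires three kinds of curves to be long in $\Gamma_i$: (i) any $\Gamma_i$-tracing simple curve joining $\zeta_1$ and $\zeta_2$ cuts at least $\theta(n)$ edges (equivalently $\kappa(\zeta_1,\zeta_2;\Gamma_i)\geq\theta(n)$); (ii) every cuff-shortening or cuff-surrounding curve around $\zeta_p$ cuts at least $k_p$ edges; and (iii) $\Gamma_i$ contains a directed $(\theta(n),\theta(n))$-nest around $\zeta_1$ that is boundary-linked to both cuffs. If some $\Gamma_j$ (with $j>1$) satisfies all three, then using the variant of Lemma~\ref{lem:disc-k-immerse-in-swirl} for two cuffs I immerse a large swirl into $\Gamma_j$ and route the roots of $G_1$ on both sides to the roots of $G_j$ via Lemma~\ref{lem:boundary_linked_Menger_for_embeddings_in_cylinder}; this forces $\bar G_1\hookrightarrow\bar G_j$ and contradicts badness. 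Hence, after filtering, the representativity stays bounded.

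In the remaining case each $\Gamma_i$ admits a short ``offending'' curve. If some $\Gamma_i$ has a cuff-connecting curve $\eta_i$ of length $\leq\theta(n)$, cutting along $\eta_i$ yields an Eulerian embedding of a bead-rooted digraph in a disc with $k+2\ell$ roots for some $\ell\leq\theta(n)$; after pigeonholing $\ell$ and the order/type of the new roots, the disc case Theorem~\ref{thm:wqo:bead-root_for_disc} gives two comparable members of the subsequence and ``stitches back'' to a strong rooted immersion in the cylinder, contradicting badness. Otherwise I peel off laminar pieces by choosing a maximal laminar family $F_1^i\subset F_2^i\subset\cdots\subset F_{t_i}^i$ of cut-cycles winding around $\zeta_1$, each of minimum length in its homotopy class and each inducing (via stitching, Definition~\ref{def:stitching_beadrooted}) two smaller bead-rooted graphs: one cylinder-piece between consecutive curves and one disc-piece (obtained after the last curve or if the piece contains no further cut-cycle). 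Lemma~\ref{lem:boundary_linked_Menger_for_embeddings_in_cylinder} provides a Menger-type linked decomposition along the $F_j^i$, analogous to the linked carving used in Section~\ref{sec:bounded-case}.

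At this stage I encode the decomposition as an $\Omega$-knitwork. Take $\Omega$ to be the (already well-quasi-ordered) union of $\mathbf{G}(\Delta,k')$ for $k'\leq k+2\theta(n)$ together with $\mathbf{G}(\Sigma,k')$ for $k'<k$ (which is well-quasi-ordered by the outer induction on $k$). Each stitch vertex introduced at a curve $F_j^i$ is labelled by the type of the piece it replaces; the cut orderings come from the alternating structure of cut-cycles (Lemma~\ref{lem:cut_cycles_are_alternating}); and the feasible-linkage links $\m$ at those vertices are the full sets $\mathfrak{M}(\rho(X(F_j^i)))$, which are reliable and \emph{well-linked} exactly as in Lemma~\ref{lem:nuclei_are_well-linked} (because the laminar family, combined with the minimum-length choice and alternating orientation of consecutive cut-cycles, supplies the required homotopic back-and-forth linkage across each piece). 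Because we are in the low-representativity regime every piece between consecutive curves has carving width bounded by some $h(n,k)$, so the resulting $\Omega$-knitwork has bounded carving width. Theorem~\ref{thm:wqo_bounded_carvingwidth_knitworks} then produces indices $i<j$ with a strong $\Omega$-knitwork immersion of the knitwork of $\bar G_i$ into that of $\bar G_j$, and Theorem~\ref{thm:knitting_knitwork_immersion} knits the piecewise immersions (given by the $\Omega$-labels, i.e.\ by induction on $k$ and the disc case) into a strong rooted immersion $\bar G_i\hookrightarrow \bar G_j$, the desired contradiction.

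The main obstacle is the last paragraph: verifying that the laminar family can be chosen so that (a) the carving width of the resulting knitwork really is bounded (not just the width across each $F_j^i$) and (b) the links $\m$ on stitch vertices are genuinely well-linked. Both rely on using the Eulerian-specific Menger theorem (Lemma~\ref{lem:boundary_linked_Menger_for_embeddings_in_cylinder}) together with the alternating-orientation lemma (Lemma~\ref{lem:cut_cycles_are_alternating}) to replicate the ``bombastic nucleus'' argument of Section~\ref{sec:disc} in the cylindrical setting, and then combining the linked decomposition with Higman's theorem on the sequence of piece-labels to reach a position where Theorem~\ref{thm:wqo_bounded_carvingwidth_knitworks} applies.
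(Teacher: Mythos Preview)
Your high-level dichotomy (high representativity via a swirl argument; low representativity via cutting a short cuff-connecting curve to reduce to the disc, or chopping along laminar cut-cycles) matches the paper. The gap is in your final step, where you try to push the laminar decomposition through Theorem~\ref{thm:wqo_bounded_carvingwidth_knitworks}.

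The problem is the well-linkedness claim. You assert that the links $\mathfrak{M}(\rho(X(F_j^i)))$ at stitch vertices are well-linked ``exactly as in Lemma~\ref{lem:nuclei_are_well-linked}'', but that lemma relies on $X(F)$ containing a $(\ell+1,3k+1)$-nest, which supplies $2k$ spare alternating circles for rerouting. A piece between two consecutive minimal-order cut-cycles in a chopping need not contain any such nest; in the extreme case it is just $\ell$ parallel paths, whose feasible-linkage set is a single matching and hence not well-linked for $\ell\geq 4$. You flag this as ``the main obstacle'' but the suggested fix (minimum-length choice plus alternating orientation) does not manufacture the missing circles. Without well-linkedness, Claim~\ref{claim:thm_wqo_bounded_carvingwidth_knitworks_red_feasible_linkages} in the proof of Theorem~\ref{thm:wqo_bounded_carvingwidth_knitworks_red} fails and the knitwork theorem does not apply.

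The paper avoids this by \emph{not} using the knitwork machinery here. Instead it proves directly (Lemma~\ref{lem:short_decomposition}) that a non-$\theta$-fat embedding admits a \emph{linked} $\theta$-chopping into $\theta$-short pieces, observes that $\theta$-short pieces form a well-quasi-order by the disc case (Corollary~\ref{lem:short_are_good}), and then treats the chopping as a word over this wqo alphabet. The crucial Lemma~\ref{lem:knitting_chubby_choppings} shows that for $\ell$-chubby choppings an order-preserving injection of pieces lifts to a strong immersion: the point is that ``skipped'' pieces in the target are traversed via the $\ell$ linking paths guaranteed by mere \emph{linkedness}, so one never needs the stronger well-linked property. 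Iterating over the possible cut orders via the $\Chubby_\ell(\cdot)$ operator (Corollary~\ref{superchubby_wqo}) finishes the argument. In short: the paper replaces your appeal to Theorem~\ref{thm:wqo_bounded_carvingwidth_knitworks} by a bespoke Higman-on-sequences argument that only needs linkedness, and this is precisely what sidesteps the obstacle you could not resolve.
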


Note that if $k=0$ then $\mathbf{G}(\Sigma,0) \subseteq \mathbf{G}(\Delta,0)$ where $\Delta$ is a disc. In particular we have the following.
\begin{corollary}
    Let $\Sigma$ be a cylinder. The class $\mathbf{G}(\Sigma,0)$ is well-quasi-ordered by strong immersion.
\end{corollary}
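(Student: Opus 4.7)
The plan is to establish the set-theoretic inclusion $\mathbf{G}(\Sigma,0) \subseteq \mathbf{G}(\Delta,0)$ and then invoke \cref{thm:wqo_planar}. Since bad sequences are preserved under taking subclasses, this inclusion alone suffices.

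First, I would unpack what membership in $\mathbf{G}(\Sigma,0)$ means. By definition, a bead-rooted Eulerian digraph $(G,\pi(W_\omega)) \in \mathbf{G}(\Sigma,0)$ satisfies $|W| \leq 0$ and $|\rho(W)| \leq 0$, i.e., $W = \emptyset$ and $\rho(W) = \emptyset$. The rooting data $\pi(W_\omega)$ is therefore trivial, and the associated Eulerian embedding $(\Gamma,\nu,\omega)$ in $\Sigma$ has, by \cref{obs:bead-rooted_edges_on_cuffs}, no edges on either cuff of the cylinder. Thus the image $\Gamma$ lies entirely in the interior $\Sigma \setminus \bd(\Sigma)$.

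Second, I would realize the same graph as an Eulerian embedding in a disc. Let $\zeta_1, \zeta_2$ be the two cuffs of $\Sigma$. Since no vertex and no edge of $G$ is drawn on $\zeta_1$, we may cap off $\zeta_1$ by gluing a closed disc $D$ along $\zeta_1$ inside $\hat\Sigma$. The resulting surface is homeomorphic to a disc $\Delta$ (with remaining cuff $\zeta_2$), and the original embedding extends trivially (adding no new incidences) to an embedding of $G$ in $\Delta$ whose image still avoids $\bd(\Delta) = \zeta_2$. Every vertex not in $W = \emptyset$ was Eulerian embedded in $\Sigma$ and therefore remains Eulerian embedded in $\Delta$; conditions (i)--(iii) of \cref{def:Euler_embedding_upto_beads} hold vacuously for $W$. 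Hence $(G,\pi(W_\omega)) \in \mathbf{G}(\Delta,0)$, proving $\mathbf{G}(\Sigma,0) \subseteq \mathbf{G}(\Delta,0)$.

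Third, I would conclude by monotonicity of well-quasi-ordering. By \cref{thm:wqo_planar}, $\mathbf{G}(\Delta,0)$ is well-quasi-ordered by strong immersion. If $((G_i,\pi(W_\omega^i)))_{i \in \N}$ were a bad sequence in $\mathbf{G}(\Sigma,0)$, the inclusion above would realise it as a bad sequence in $\mathbf{G}(\Delta,0)$, contradicting \cref{thm:wqo_planar}. Therefore $\mathbf{G}(\Sigma,0)$ is well-quasi-ordered by strong immersion. There is no real obstacle: the whole argument is a one-line topological observation (cap off one empty cuff) combined with the already-established planar case.
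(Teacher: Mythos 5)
Your proof is correct and follows essentially the same route as the paper: the paper also observes $\mathbf{G}(\Sigma,0) \subseteq \mathbf{G}(\Delta,0)$ (since with $W=\emptyset$ nothing is drawn on the cuffs, one cuff can be capped off) and then appeals to the disc case. Your explicit invocation of \cref{thm:wqo_planar} is if anything a cleaner reference than the paper's citation of \cref{thm:swirl_embedds_Eulerembeddable_graphs}, which is just the key ingredient of that theorem.
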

\begin{proof}
    This is follow from \cref{thm:swirl_embedds_Eulerembeddable_graphs}.
\end{proof}

We start with a proof of the ``high representativity'' case for the cylinder as we will need to slightly adapt our framework when tackling the ``low representativity'' case.

Recall \cref{def:representativity} of cuff separating curves and note that for the cylinder these are cut-cycles. 

\begin{lemma}\label{lem:cyclinder_high_rep}
 Let $\Sigma$ be the cylinder with cuffs $\zeta_1,\zeta_2$. Let $k \in 2\N$. Then there is a function $f:\N \to \N$ such that the following holds. For $i=1,2$ let $(G_i,\pi_i(W_{\omega_i})) \in \mathbf{G}(\Sigma,k)$ with Eulerian $2$-cell embeddings $(\Gamma_i,\nu_i,\omega_i)$ and partitions $W_{\omega_i}=(W_1^i,W_2^i)$ such that $\Abs{\rho(W_j^1)} = \Abs{\rho(W_j^2)} \geq 1$ for $j \in \{1,2\}$. Let $n \coloneqq \Abs{V(G_1)}$. Let $k_1 \coloneqq |\rho(W_1^1)|$ and $k_2 \coloneqq |\rho(W_2^1)|$. 
    
    Suppose $G_2$ contains $\ell \coloneqq f(n)$ edge disjoint cycles $C_1, \dots, C_\ell$ such that each $C_i$ bounds a disc $\Delta(C_i) \in \hat\Sigma$ containing $\zeta_2$ in its interior
    and for $1 \leq i < j \leq \ell$, $C_j$ is contained in the interior of $\Delta(C_i)$ and $C_1, \ldots, C_\ell$ are oriented in alternating directions.      Furthermore, suppose that there are $\ell$ pairwise edge-disjoint paths between $C_1$ and $C_\ell$ contained in $\Delta(C_1) \setminus \Delta(C_\ell)^\circ$. Finally, suppose that there are $k_1$ edge-disjoint paths from $\rho(W_1^2)$ to $C_{2k_1}$ and $k_2$ edge-disjoint paths from $\rho(W^2_2)$ to $C_{\ell-2k_2}$.

    Then $(G_1,\pi_1(W_{\omega_1})) \hookrightarrow (G_2,\pi_2(W_{\omega_2}))$ by strong immersion. 
\end{lemma}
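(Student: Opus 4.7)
My overall strategy is to generalise \cref{lem:disc-k-immerse-in-swirl} from the disc to the cylinder. The nested, alternately-oriented cycles $C_1,\dots,C_\ell$ together with the $\ell$ edge-disjoint paths between $C_1$ and $C_\ell$ provide, after suitable splitting off (as in the proof of \cref{lem:swirl_immerses_in_eulergrid}), a strongly immersed ``cylindrical swirl'' of large size centred on $\zeta_2$. The additional $k_1$ and $k_2$ edge-disjoint boundary-to-nest paths provide linkages from the root edges of $W_1^2$ and $W_2^2$ into this swirl-like structure, so the whole setup is morally the cylinder-version of the planar swirl used in \cref{lem:disc-k-immerse-in-swirl}.

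\textbf{Step 1: Reducing to roots on one cycle per cuff.} I would first construct, in analogy with \cref{lem:disc-k-immerse-in-swirl}, an auxiliary graph $G_1^+$ by attaching a small ``collar gadget'' to $G_1$ along each cuff: for each cuff $\zeta_j$ of $G_1$, subdivide each root edge a constant number of times and add parallel alternating $4$-cycles running parallel to $\zeta_j$, with cross-edges wired so that the edges of $\rho(W_j^1)$ appear in the order $\pi_1(W_j^1)$ along a single alternating circle. This produces a connected Eulerian digraph $G_1^+$ of maximum degree $4$ which is Euler-embeddable in the cylinder; its vertex count is $O(n + k_1 + k_2)$, and its only ``interface'' to the two cuffs is now through two specific $k_j$-tuples of vertices lying on two parallel inner cycles $D_1, D_2$.

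\textbf{Step 2: Immersing $G_1^+$ into a large cylindrical swirl.} Choosing $f(n)$ large enough, I would discard the $2k_1$ outermost and $2k_2$ innermost cycles of the nest and observe that the remaining middle band, together with the corresponding portions of the $\ell$ radial edge-disjoint paths, strongly immerses a very large cylindrical swirl $\SSS^\Sigma_m$ by the same splitting-off construction as in \cref{lem:swirl_immerses_in_eulergrid}. A direct adaptation of \cref{thm:swirl_embedds_Eulerembeddable_graphs} to the cylinder (by cutting the cylinder along a $\Gamma^+$-normal arc between the two cuffs to reduce to the planar case, embedding the resulting graph into a large planar swirl, and then regluing) shows that any Eulerian digraph of maximum degree $4$ Euler-embeddable in the cylinder of bounded size strongly immerses into a sufficiently large cylindrical swirl. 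Applying this to $G_1^+$ gives a strong immersion of $G_1^+$ into the middle band, with its two collar cycles $D_1, D_2$ landing on designated alternating cycles of $\SSS^\Sigma_m$.

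\textbf{Step 3: Routing the roots via the outer and inner rings.} The two discarded bands $C_1,\dots,C_{2k_1}$ and $C_{\ell-2k_2},\dots,C_\ell$ together with the $k_1$ paths from $\rho(W_1^2)$ to $C_{2k_1}$ and the $k_2$ paths from $\rho(W_2^2)$ to $C_{\ell-2k_2}$ are used as ``routers'' to connect the $k_j$ ports of the immersed $G_1^+$ on $D_j$ to the actual root edges $\rho(W_j^2)$, while respecting the orderings $\pi_2(W_j^2)$ and the in/out parities (which match the parities at $\pi_1(W_j^1)$ by assumption). This is done exactly as in the final paragraph of \cref{lem:disc-k-immerse-in-swirl}: after possibly flipping the swirl so that the orientations of the alternating band cycles agree with the orientations of the boundary paths, $k_j$ successive band cycles plus the $k_j$ boundary paths suffice to realise any matching of collar ports to root edges consistent with the alternating cyclic orders on the cuffs (which are alternating by \cref{lem:cuffs_are_alternating}).

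\textbf{Main obstacle.} The delicate part is the interaction between the two cuffs: unlike the disc case of \cref{lem:disc-k-immerse-in-swirl}, we now have two boundary components whose orderings must be realised simultaneously, and the ``swirl'' lives on a cylinder rather than in a disc. The collar gadget from Step 1 must therefore be designed so that the cyclic order of the ports around each $D_j$ encodes $\pi_1(W_j^1)$ correctly, and the routing in Step 3 on each cuff must match it with $\pi_2(W_j^2)$. Because both $\pi_1$ and $\pi_2$ are alternating by \cref{lem:cuffs_are_alternating} and the in/out pattern on $\pi_2$ is forced by that on $\pi_1$, the two cuffs decouple and the routing can be carried out independently on each band. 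Choosing $f(n) := g(n + c(k_1+k_2)) + 2(k_1+k_2) + O(1)$, where $g$ is the cylinder analogue of the function in \cref{thm:swirl_embedds_Eulerembeddable_graphs} and $c$ bounds the size of the collar gadget, then closes the argument.
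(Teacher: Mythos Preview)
Your overall strategy matches the paper's: attach gadgets at both cuffs of $G_1$, immerse the resulting auxiliary graph into the swirl-like structure provided by the nest in $G_2$, and then use the reserved outer $2k_1$ and inner $2k_2$ bands together with the boundary linkages to route the roots in the correct orders. The paper, however, avoids your Step~2 detour through a ``cylinder analogue of \cref{thm:swirl_embedds_Eulerembeddable_graphs}'' altogether. Its gadgets (exactly those from the proof of \cref{lem:disc-k-immerse-in-swirl}, one copy per cuff) pair up and cap off the root edges completely, so that $G^+$ becomes an \emph{unrooted} Eulerian digraph of maximum degree~$4$ Euler-embeddable in the \emph{disc}, not the cylinder. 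One then applies \cref{thm:swirl_embedds_Eulerembeddable_graphs} directly to immerse $G^+$ into a planar swirl $\SSS$ of order $h(n)=g(5n)$, and that planar swirl sits inside the middle $h(n)$ cycles of the nest in $G_2$. The only additional care needed is to ensure, via $3$-connectivity of the underlying graph (adding $2$-circles if necessary), that the two gadget boundary circles $C_1^\ast$ and $C_2^\ast$ end up nested in the image, with $C_1^\ast$ outermost and $C_2^\ast$ bounding a disc meeting $G^+$ only in $C_2^\ast$; this is precisely what makes the two cuff routings decouple, which you correctly identify as the delicate point.

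Your cut-and-reglue sketch for a cylinder swirl theorem is plausible but would need real work: after cutting the cylinder along an arc and embedding the resulting rooted disc graph into a planar swirl, the two copies of the cut are not a~priori adjacent in the image, so ``regluing'' is not automatic. The paper's reduction to the disc sidesteps this entirely and keeps the function $f(n)=g(5n)+2(k_1+k_2)$ explicit.
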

\begin{proof}
Essentially, we proceed exactly as in the proof of \cref{lem:disc-k-immerse-in-swirl} and replace in $G_1$ the set $W_1^1$ by the gadget depicted in \cref{fig:G^+_swirl_around_roots} and in the same way replace $W_2^1$ by such a gadget. Let $G^+$ be the resulting Eulerian digraph. As in \cref{lem:disc-k-immerse-in-swirl}, $G^+$ embeds into a large enough swirl $\SSS$. This swirl can be mapped to the swirl $\SSS^*$ in $G_2$ such that the gadgets face their respective cuff. We can then delete the gadgets and connect the roots on the cuffs to their corresponding edges as in \cref{lem:disc-k-immerse-in-swirl}.

    We now provide more details of this construction. 
    Let $g$ be the function defined in \cref{lem:disc-k-immerse-in-swirl}. We set $h(n) \coloneqq g(5n)$ and define $f(n) \coloneqq h(n) + 2(k_1+k_2)$. 
    Let $\rho(W_1^1) = (e_1, \dots, e_{k_1})$ be ordered by $\pi_1(W^1_1)$. Then the edges alternate between in- and out-edges. W.l.o.g. we assume that $e_1$ is an out-edge and therefore $e_{k_1}$ is an in-edge. (Note that $k_1, k_2$ must both be even due to the Eulerian-property of $G_1$.) If $e_i$ is an out-edge then let $u_i$ be the head of $e_i$  and otherwise let $u_i$ be the tail of $e_i$, for $1 \leq i 
    \leq k_i$. 
    
    We subdivide each $e_i$ four times. If $e_i$ is an out-edge then let $x^i_1, \dots, x^i_4$ be the new vertices introduced by subdividing $e_i$ in the order $x^i_1, ..., x^i_4$ when following $e_i$ from tail to head. 
    If $e_i$ is an in-edge we number the new vertices $x^i_1, \dots, x^i_4$ in the order $x^i_4, \dots, x^i_1$ when following $e_i$ from tail to head.
    
    For $1 \leq i \leq k_1$ with $i$ odd we add the edges $(x^i_1, x^{i+1}_1)$ and $(x^i_3,x^{i+1}_3)$ as well as $(x^{i+1}_1, x^i_1)$ and $(x^{i+1}_3,x^i_3)$.
    For $1 \leq i \leq k_1$ with $i$ even we add the edges $(x^i_2, x^{i+1}_2)$ and $(x^i_4,x^{i+1}_4)$ as well as $(x^{i+1}_4, x^i_4)$ and $(x^{i+1}_4,x^i_4)$, where we define $k_1+1$ as $1$. 
    Finally, for $1 \leq i \leq k_1$, $i$ odd, we add a new vertex $y^i_{i+1}$ and the edges $(u_i, y^i_{i+1})$ and $(y^i_{i+1}, u_{i+1})$. See \cref{fig:G^+_swirl_around_roots} for an illustration.
    
    The same construction we apply to the edges $f_1, \dots, f_{k_2} \in \rho(W^1_2)$.
        Let $C_1^*$ be the outer circle of the gadget for $W_1$, i.e. the circle containing $x^i_4$, for all $i$, and $y^i_{y+1}$ for all odd $i$. Likewise let $C_2^*$ be the corresponding circle in the gadget for $W_2$.
    
    Let $G^+$ be the resulting digraph. Then $|V(G^+)| \leq 5\cdot |V(G)|$. Thus, by \cref{lem:disc-k-immerse-in-swirl}, $G^+$ can be strongly immersed into a swirl $\SSS$ of order $h(n)$. For the embedding we assume that the underlying graph of $G^+$ is $3$-connected, for otherwise we can add $2$-circles to $G^+$ to ensure this property.
    This implies that we can choose the embedding (see the proof of \cref{lem:disc-k-immerse-in-swirl}) so that $C_1^*$ bounds a disc $\Delta(C_1^*)$ which contains $G^+$ and $C_2^*$ bounds a disc $\Delta(C_2^*)$ such that $\Delta(C_2^*) \cap G^+ = C_2^*$.
    
    As a last step we can now embed $\SSS$ into the swirl contained in $G'$ such that $C_1^*$ bounds a disc containing $\zeta_2$ and $G^+$ and $C_2^*$ bounds a disc $\Delta(C_2^*)$ containing $\zeta_2$ and $\Delta(C_2^*) \cap G^+ = C_2^*$. 
    Furthermore, as $G'$ contains a swirl $\SSS^*$ of order $f(n) = h(n) + 2k_1 + 2k_2$, we can choose this embedding so that the outermost $2k_1$ cycles of $\SSS^*$ bordering $\zeta_1$ are unused as are the $2k_2$ innermost cycles bordering $\zeta_2$. This allows us to use the boundary-linkedness of $\SSS^*$ to connect the edges on $\zeta_1$ to $e_1, \dots, e_{k_1}$ and the edges on $\zeta_2$ to the edges $f_1, \dots, f_{k_2}$.
\end{proof}

The previous lemma together with \cref{lem:boundary_linked_Menger_for_embeddings_in_cylinder} immediately implies the next result.

\begin{corollary}\label{cor:cyl-high-rep}
    Let $\Sigma$ be the cylinder with cuffs $\zeta_1,\zeta_2$. Let $k \in 2\N$. Then there is a function $f:\N \to \N$ such that the following holds. Let $(G_i,\pi_i(W_{\omega_i})) \in \mathbf{G}(\Sigma,k)$ with Eulerian $2$-cell embeddings $(\Gamma_i,\nu_i,\omega_i)$ and partitions $W_{\omega_i}=(W_1^i,W_2^i)$ such that $\Abs{\rho(W_j^1)} = \Abs{\rho(W_j^2)} \geq 1$ for $j=1,2$. Let $n \coloneqq \Abs{V(G_1)}$. If  there is no cuff connecting $\Gamma_2$-tracing curve $F$ with $\delta(F) < f(n)$ and no cuff separating cut-cycle $F'$ for $\zeta_1,\zeta_2$ with $\delta(F') < f(n)$, then $(G_1,\pi_1(W_{\omega_1})) \hookrightarrow (G_2,\pi_2(W_{\omega_2}))$ by strong immersion.
\end{corollary}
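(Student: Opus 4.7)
The plan is to reduce to \cref{lem:cyclinder_high_rep} by constructing in $G_2$ the nested family of cycles and the boundary linkages required by that lemma. Set $k_j := |\rho(W_j^1)| = |\rho(W_j^2)|$ for $j=1,2$, and choose $f(n) := f_{\ref{lem:cyclinder_high_rep}}(n) + 2(k_1+k_2) + 2$, so that once we build a directed nest of length $\ell := f_{\ref{lem:cyclinder_high_rep}}(n)$ around $\zeta_2$ with a ``buffer'' of $2k_1$ cycles next to $\zeta_1$ and $2k_2$ cycles next to $\zeta_2$, the hypothesis of \cref{lem:cyclinder_high_rep} will be met.

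First I would invoke \cref{lem:boundary_linked_Menger_for_embeddings_in_cylinder}: the absence of cuff separating cut-cycles of order $<f(n)$ gives $\kappa(\zeta_1,\zeta_2;\Gamma_2)\geq f(n)$, i.e.\ at least $f(n)$ edge-disjoint paths linking $\rho(W_1^2)$ to $\rho(W_2^2)$ in $G_2$. Next I would extract a \emph{directed} $(\ell,\ell)$-nest around $\zeta_2$ of alternating cycles. Following the strategy of \cref{lem:from_boss_to_nucleus}, I pass to the induced cubic embedding $(\Gamma_{2,\ell},\nu_{2,\ell})$ of the line graph $\ell(\Gamma_2;\Sigma)$ using \cref{obs:properties_of_linegraph_embedding}; this is a cylinder embedding of an undirected graph in which both hypotheses translate to lower bounds on $O$-arc lengths and on vertex-disjoint cuff connecting curves. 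Standard undirected representativity arguments for the cylinder (a cylinder analogue of \cref{lem:isolated}, as used in \cite{GMVII,GMVIII}) then yield an undirected $(2\ell+1,2\ell+1)$-nest of vertex-disjoint cycles around $\zeta_2$ in $\Gamma_{2,\ell}$. By the cylinder analogue of \cref{obs:nest_in_linegraph_yield_directed_nest} this lifts to an undirected $(2\ell+1,2\ell+1)$-nest in $\Gamma_2$, which via \cref{lem:from_undirected_to_directed_nest} produces a directed $(\ell,\ell)$-nest $(C_1,\dots,C_\ell)$ of edge-disjoint cycles in $G_2$, each bounding a disc in $\hat\Sigma$ containing $\zeta_2$ in its interior, with alternating orientations and $\ell$ pairwise edge-disjoint paths between $C_1$ and $C_\ell$ in the annulus they bound.

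Next I would establish the two boundary linkages. Consider the sub-cylinder of $\Sigma$ bounded by $\zeta_1$ and (a cut-cycle slightly shrinking) $C_{2k_1}$; the induced Eulerian embedding has $k_1$ edges on $\zeta_1$ and $\delta(C_{2k_1})\geq 2k_1$ edges on the inner cuff. Any $\{\rho(W_1^2),\rho(C_{2k_1})\}$-cut of order $<k_1$ would, by \cref{lem:boundary_linked_Menger_for_embeddings_in_cylinder} applied to this sub-cylinder, yield a cuff separating cut-cycle of $\Gamma_2$ of order $<k_1\le k < f(n)$, contradicting the hypothesis; hence there are $k_1$ edge-disjoint paths from $\rho(W_1^2)$ to $C_{2k_1}$. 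Symmetrically one obtains $k_2$ edge-disjoint paths from $\rho(W_2^2)$ to $C_{\ell-2k_2}$; the exclusion of short cuff-connecting $\Gamma_2$-tracing curves is what rules out the degenerate case in which both sides are blocked simultaneously by a single short cut. All hypotheses of \cref{lem:cyclinder_high_rep} are now satisfied, so $(G_1,\pi_1(W_{\omega_1})) \hookrightarrow (G_2,\pi_2(W_{\omega_2}))$ by strong immersion.

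The main obstacle is the second step: getting a \emph{directed, alternating, edge-disjoint} nest out of a mere lower bound on $\kappa$. Edge-disjoint linkages between the cuffs do not by themselves produce alternating directed cycles, and one must exploit both the Eulerian constraint (to force alternation of adjacent cut-cycles) and the $2$-cell structure (to pass cleanly between the line-graph's undirected nest and the directed nest in $G_2$) in order to carry the argument through.
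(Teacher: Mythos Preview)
Your overall plan---verify the hypotheses of \cref{lem:cyclinder_high_rep} from those of the corollary---is exactly what the paper does in its one-line proof. But your execution misreads one hypothesis and, as a result, tries to build far more structure than is available.

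The phrase ``cuff separating'' is a defined term (\cref{def:representativity}): a cuff separating curve is one that is \emph{cuff surrounding} or \emph{cuff shortening}. For a closed internal cut-cycle $F'$ on the cylinder this means $F'$ bounds a disc around some $\zeta_j$ with $\delta(F')<|\zeta_j\cap\Gamma_2|=k_j$. So the second hypothesis of the corollary says only that every internal concentric cut-cycle has order $\geq k_j$ (for the cuff it surrounds); it does \emph{not} say $\delta(F')\geq f(n)$ for arbitrary concentric cut-cycles. Your Step~1 claim $\kappa(\zeta_1,\zeta_2;\Gamma_2)\geq f(n)$ is therefore false in general---indeed $\kappa\leq\min(k_1,k_2)\leq k$ always, since there are only $k_j$ root edges on $\zeta_j$. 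This error propagates to Step~2: you cannot extract an undirected $(2\ell+1,2\ell+1)$-nest, because the radial direction supplies at most $\max(k_1,k_2)$ vertex-disjoint (or edge-disjoint) paths, not $2\ell+1$.

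The paper's intended route is more elementary and avoids line graphs entirely. The \emph{concentric} cycles $C_1,\dots,C_\ell$ come from the first hypothesis (no short cuff-connecting curve): this is a layering/duality argument on the face structure of a $2$-cell Eulerian embedding---the same ``easily seen by induction'' step used in the proof of \cref{lem:short_decomposition}---together with \cref{lem:vf-path-euler-circle} to obtain alternating directed circles. The boundary linkages to $C_{2k_1}$ and $C_{\ell-2k_2}$ are exactly where \cref{lem:boundary_linked_Menger_for_embeddings_in_cylinder} enters, applied in the two sub-annuli using the (correctly read) second hypothesis; this part you handled correctly in Step~3.
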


\subsection{Cutting and Knitting the Surfaces}
\label{sec:edge-rooted}

If \cref{lem:cyclinder_high_rep} does not apply, it turns out that we find ``offending curves'' in our embeddings pointing to a lack of generality in the drawing as described in \cref{def:representativity}. Thus we will inductively cut the surface and drawing along such offending curves. Unfortunately, the curves may not lower the genus of the surface: if we start with the cylinder for example, cuff separating curves cut the cylinder into two new cylinders, which is why we need to deal with the cylinder case separately prior to starting the general induction on the surface. Since we will need very similar results when cutting a general surface $\Sigma$, we present the required lemmas for cutting the general surfaces in this section.

\smallskip

Unfortunately, it turns out that cutting and knitting drawings for bead-rooted Eulerian digraphs requires a bit more work than for undirected graphs since we have stricter restrictions for our embeddings and graph class. For ease of argumentation we will change our framework and circumvent beads by switching to pseudo-Eulerian graphs rooted in edges, since we effectively only care about orders on the edges; in essence this new version is obtained by simply ``plugging'' all the incindences out of the bead-vertices and just keeping the resulting edges. (Note that if we cut our graph along a curve with both ends in a cuff that separates our surface, we may otherwise need to split beads and reassign them carefully to the new surfaces.) Besides these technicalities, the results presented are similar in flavor to the results presented in \cref{subsec:knitworks}. Thus, let us switch back to a version of \emph{rooted Eulerian digraphs}---not necessarily rooted in edges coming from an induced cut, see \cref{def:rooted_graph}---to simplify the arguments to come.

\begin{definition}[Eulerian Digraphs with defect $k$]
    Let $G$ be a digraph such that for every $v \in V(G)$ either $v$ is Eulerian and of degree $2$ or $4$, or $v$ is of degree $1$. We define $V^+ \subseteq V(G)$ to be the set of vertices of out-degree $1$ and in-degree $0$ and $V^- \subseteq V(G)$ to be the set of vertices of in-degree $1$ and out-degree $0$. Let $\Abs{V^+} = \Abs{V^-} = k \in \N$. Let $E^* \subset E(G)$ be the set of edges adjacent to vertices in $V^+ \cup V-$. Then we call $G$ \emph{Eulerian with defect $k$} and $E^*$ the \emph{root edges of $G$}. Let $\pi(E^*)$ be an ordering of $E^*$. Then we call $(G,\pi(E^*))$ an \emph{edge-rooted Eulerian digraph (with defect $k$)}.
\end{definition}
\begin{remark}
    Note that by definition the defect is always even. Further, whenever we use the term ``edge-rooted'' we implicitly talk about Eulerian digraphs with respective defect.
\end{remark}

An Eulerian digraph with defect $k$ is \emph{connected} if its underlying undirected graph is connected. As usual we will focus on connected graphs since the general case can be easily reduced it Thus we will tacitly assume Eulerian digraphs with defect $k$ (or edge-rooted Eulerian digraphs) to be connected unless stated otherwise.

We define strong immersion for edge-rooted Eulerian digraphs in the obvious way following \cref{def:rooted_immersion}.
\begin{definition}[Immersion of Edge-Rooted Graphs]\label{def:edge-rooted_immersion}
    Let~$(G_1,\pi(E_1))$ and~$(G_2,\pi(E_2))$ be edge-rooted Eulerian digraphs. We say that~$(G_2,\pi(E_2))$ \emph{(strongly) immerses}~$(G_1,\pi(E_1))$ if both have the same defect~$k \in 2\N$ and additionally there is a map $\gamma: V(G_1)\cup E(G_1) \to G_2$ satisfying the following:
    \begin{enumerate}
        \item $\gamma:G_1 \hookrightarrow G_2$ is a (strong) immersion, and
        \item given the natural bijection~$\eta: \rho_G(E_1) \to \rho_{G}(E_2)$ satisfying~$\eta(\pi(E_1)) = \pi(E_2)$, the path~$\gamma(e)$ for~$e \in E_1$ contains~$e' \in E_2$ if and only if~$\eta(e) = e'$. We say that \emph{$\gamma$ respects the order of the roots}.
    \end{enumerate}
    We call~$\gamma$ a \emph{(strong) edge-rooted immersion} or simply a (strong) immersion and write~$\gamma: (G_1,\pi(E_1)) \hookrightarrow (G_2,\pi(E_2))$.
\end{definition}

There is a canonical definition for embeddings of edge-rooted Eulerian digraphs following  \cref{def:Euler_embedding_upto_beads} of embeddings of bead-rooted Eulerian digraphs.

\begin{definition}[Embeddings fixing edges]\label{def:Euler_embedding_upto_root-edges}
  Let~$\Sigma$ be a surface and let $c(\Sigma) = \{\zeta_1,\ldots,\zeta_t\}$ be its set of cuffs for some~$t \geq 1$ and let~$\Delta_1,\ldots,\Delta_t$ be closed discs bounded by the respective cuffs in~$\hat{\Sigma} \setminus \Sigma$ and denote their interior by~$\Delta_1^\circ,\ldots,\Delta_t^\circ$ respectively. Let~$(G,\pi(E^*))$ be an edge-rooted Eulerian digraph and let $V^+ \subseteq V(G)$ be the set of vertices of out-degree $1$ and in-degree $0$ and $V^- \subseteq V(G)$ be the set of vertices of in-degree $1$ and out-degree $0$. Let~$(\hat \Gamma,\hat \nu)$ be an embedding of~$G$ in~$\hat{\Sigma}$ satisfying the following:

  \begin{itemize}
      \item[(i)] every~$v \in V(G)$ that is not of degree $1$ is Eulerian embedded in~$\Sigma$,
      \item[(ii)] every $e \in E(G) \setminus E^*$ is embedded in $\Sigma$,
      \item[(iii)] let~$W_i \coloneqq \hat\nu^{-1}(\Delta_i^\circ)$, then~$W_\omega=(W_1,\ldots,W_t)$ is a partition of~$V^+ \cap V-$,
      \item[(iv)] for every~$1 \leq i \leq t$ and every~$w \in W_i$ with adjacent edge $e$, it holds~$\hat\nu(e) \in \zeta_i$. Further for~$x \in \hat{\nu}^{-1}(\zeta_i)$ it holds~$x \in E^*$. In particular an element of $G$ is drawn on a cuff if and only if it is an edge in $E^*$.  
    \end{itemize}
      
  We define the map~$\omega: c(\Sigma) \to \bigcup_{1 \leq i \leq t} W_i$ via~$\omega(\zeta_i) = W_i$ and we let~$E^*_\omega \coloneqq(\nu^{-1}(\zeta_i),\ldots,\nu^{-1}(\zeta_t))$. Then, since $G$ is connected, $E^*_\omega$ is a partition of $E^*$. We call an edge of $E^*$ an \emph{in-edge} if its head is drawn in $\Sigma$ and an \emph{out-edge} if its tail is drawn in $\Sigma$. Furthermore, we  assume  that for each $1\leq j \leq t$, $\nu^{-1}(\zeta_i)$ admits equally many in- and out-edges.
  
  Finally let~$\Gamma \coloneqq \hat{\Gamma} \cap \Sigma$ and~$\nu \coloneqq \restr{\hat \nu}{\Gamma}$. Then we call~$(\Gamma,\nu,\omega)$\Symbol{GAMMA@$(\Gamma,\nu,\omega)$} an \emph{Eulerian embedding of~$(G,\pi(E^*))$ in~$\Sigma$}.\Index{Eulerian embedding of bead-rooted digraphs}
\end{definition}
\begin{remark}
    The reason why we did not include the restrictions on $E^*_\omega$ (equally many in and out edges for each part of the partition) as a fifth point is because the assumption follows from $(i)-(iv)$ under mild additional assumptions on the embedding as we will see shortly.
\end{remark}

In particular an Eulerian embedding of $(G,\pi(E^*))$ is Eulerian ``inside'' $\Sigma$ and draws $E^*$ on its cuffs. The remaining definitions for embeddings (such as faces, $2$-cell, $\Gamma$-tracing, cut-cycles, in-edges, out-edges, etc.) from \cref{sec:surface_defs} for general and Eulerian embeddings of bead-rooted digraphs extend verbatim.

We extend the \cref{def:stitching_std,def:stitching_beadrooted} of stitching to edge-rooted Eulerian digraphs as follows.
\begin{definition}\label{def:stitch_edge-rooted}
    Let $(G,\pi(E^*))$ be an edge-rooted Eulerian digraph. Let $w$ be a new element not in $V(G) \cup E(G)$. We define $G' \coloneqq \stitch(G;\pi(E))$ to be the graph obtained from $G$ by identifying all the vertices of degree $1$ to a single vertex $w$, i.e., for every edge $e \in E^*$ and $v \in V(G)$ of degree~$1$, if $(e,v) \in \operatorname{inc}_G$ then we add $(e,w)$ to $\operatorname{inc}_{G'}$ instead, and equivalently if $(v,e) \in \operatorname{inc}_G$ then $(w,e) \in \operatorname{inc}_{G'}$.  Then $\rho(w) = E^*$ by construction and we set $\pi(w) \coloneqq \pi(E^*)$.

    Similarly, given $E^*_\omega=(E_1,\ldots,E_t)$ together with induced orderings $\pi(E_1),\ldots,\pi(E_t) \subseteq \pi(E^*)$ and partition $W_\omega=(W_1\ldots,W_t)$ of the degree-one vertices, we define $(G,\pi(W_\omega)) \coloneqq \stitch(G; \pi(E^*_\omega))$ to be the graph obtained from $G$ by identifying all the vertices in $W_i$ to a single vertex $w_i$ for $1 \leq i \leq t$.
\end{definition}

In particular, by \cref{def:Euler_embedding_upto_root-edges}, $\stitch(G;\pi(E^*))$ is a rooted Eulerian digraph and $\stitch(G;\pi(E^*_\omega))$ is a bead-rooted Eulerian digraphs with $\leq t$ beads of order $\leq k$. 

On the flip side, every rooted and every bead-rooted Eulerian digraph $(G,\pi(W))$ and $(G,\pi(W_\omega))$ corresponds to a unique edge-rooted Eulerian digraph $(G',\pi(E^*))$ by setting $E^* \coloneqq \rho(W)$ and $\pi(E^*)$ (where $\pi(W_i) \subseteq \pi(E^*)$ for every $1 \leq i \leq t$), with a corresponding induced edge-rooted Eulerian embedding in the obvious way. 

\begin{definition} 
    Let $\Sigma$ be a surface and $k \in 2\N$ fixed. We define $\mathbf{G^*}(\Sigma,k)$ to be the class of edge-rooted Eulerian digraphs of defect $k$ that are Euler-embeddable in $\Sigma$.
\end{definition}

The following is imminent by \cref{def:rooted_immersion,def:edge-rooted_immersion}.
\begin{observation}\label{obs:edge-rooted_immersion_is_enough}
    Let $\Sigma$ be a surface and $k \in 2\N$. Let $(G_1,\pi(W_{\omega_1})), (G_2,\pi(W_{\omega_2})) \in \mathbf{G}(\Sigma,k)$ be bead-rooted Eulerian digraphs and let $(G_1',\pi(E_1^*)), (G_2',\pi(E_2^*))$ be the respective edge-rooted Eulerian digraphs. Then
    $$(G_1,\pi(W_1)) \hookrightarrow (G_2,\pi(W_2)) \iff (G_1',\pi(E_1^*)) \hookrightarrow (G_2',\pi(E_2^*)), $$
    by strong immersion and $(G_i,\pi(E_i^*)) \in \mathbf{G^*}(\Sigma,k)$.
\end{observation}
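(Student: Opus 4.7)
The two representations, bead-rooted and edge-rooted, are related by the inverse operations $\stitch(\,\cdot\,;\pi(E^*_\omega))$ and ``unplugging the bead incidences'' described just before the observation. The plan is therefore to show that a strong rooted immersion on one side translates verbatim into one on the other side via this correspondence, and that the embedding survives the translation.

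First I would establish the key structural lemma, which is the only non-trivial step: a strong bead-rooted immersion $\gamma : (G_1,\pi(W_{\omega_1}))\hookrightarrow(G_2,\pi(W_{\omega_2}))$ maps beads to beads respecting the $\omega$-partition, i.e.\ $\gamma(W_j^1)\subseteq W_j^2$ for every $j$, and moreover, for every bead $w\in W_j^1$ and every $e\in\rho(w)$, the corresponding edge $\eta(e)\in\rho(W_j^2)$ is actually the \emph{first} (respectively \emph{last}) edge of $\gamma(e)$ at $\gamma(w)$. The argument is local: the $\deg(w)$ paths $\gamma(e_1),\dots,\gamma(e_d)$ (with $e_i\in\rho(w)$) all have $\gamma(w)$ as an endpoint, and each contains some $\eta(e_i)\in\rho(W_j^2)$; if the first edge of $\gamma(e_i)$ at $\gamma(w)$ were not $\eta(e_i)$, then $\gamma(e_i)$ would have to enter the (distinct) bead $w'$ adjacent to $\eta(e_i)$ in $G_2$, and using Eulerianness together with the edge-disjointness of the $\gamma$-paths and the strongness of $\gamma$ at every vertex of $\gamma(V(G_1))$ one derives a contradiction; a counting argument on the edges of $\rho(W_j^2)$ incident to $\gamma(w)$ then pins down $\gamma(w)$ to be the unique bead of $W_j^2$ matched with $w$ by $\eta$.

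Once this lemma is in hand, the two implications are routine. For $(\Rightarrow)$, given $\gamma$, define $\gamma':V(G_1')\cup E(G_1')\to G_2'$ by $\gamma'=\gamma$ on non-bead vertices and non-root edges, by $\gamma'(v_e)=v_{\eta(e)}$ on each degree-one vertex $v_e$ created when unplugging a bead, and by $\gamma'(e)=(\eta(e))$ (the length-one path) for each $e\in E_1^*$; the structural lemma guarantees that the truncation of $\gamma(e)$ to the single edge $\eta(e)$ is consistent with its endpoints in $G_2'$, and edge-disjointness, injectivity on vertices, strongness and the order-respecting condition of \cref{def:edge-rooted_immersion} transfer immediately. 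For $(\Leftarrow)$, given $\gamma'$ the analogous lemma on the edge-rooted side (each root edge of $G_1'$ must be mapped by $\gamma'$ to the matching root edge of $G_2'$, as a length-one path, simply because both endpoints have degree one) shows that $\gamma'$ maps the set $\{v_e:e\in\rho(w)\}$ of degree-one vertices obtained from a bead $w\in W_j^1$ injectively into $\{v_{e'}:e'\in\rho(w_j^2)\}$ for some bead $w_j^2\in W_j^2$ (consistent across all $e\in\rho(w)$ because the partition is respected). Reversing the unplugging operation then yields $\gamma$ with $\gamma(w)=w_j^2$, extending each $\gamma'(e)$ to a path in $G_2$ by prepending or appending $w_j^2$ as an endpoint; the conditions of strong rooted immersion are inherited.

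Finally, the membership $(G_i',\pi(E_i^*))\in\mathbf{G^*}(\Sigma,k)$ is immediate: the Eulerian embedding $(\Gamma,\nu,\omega)$ of $(G_i,\pi(W_{\omega_i}))$ becomes an Eulerian embedding of $(G_i',\pi(E_i^*))$ in the sense of \cref{def:Euler_embedding_upto_root-edges} by placing each new degree-one vertex $v_e$ inside the disc $\Delta_j$ corresponding to the cuff $\zeta_j$ on which $\nu(e)$ already lies, and extending $\nu$ along a short arc from $\nu(e)$ to $v_e$ in $\Delta_j$; conditions (i)--(iv) are then exactly those already verified for the bead-rooted embedding, and $|E_i^*|=|\rho(W_i)|\leq k$ by hypothesis. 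The main (and only genuine) obstacle is the structural lemma on strong immersions mapping beads to beads respecting the partition; everything else is bookkeeping.
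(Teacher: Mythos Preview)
Your overall strategy is sound and more careful than the paper, which simply declares the equivalence ``imminent from the definitions.'' The structural lemma (beads map to beads, with $\eta(e)$ being the first/last edge of $\gamma(e)$ at the bead) is indeed the heart of the matter, and your argument for it is correct: if $\gamma(e_i)$ reached the bead $w''$ incident to $\eta(e_i)$ only as an internal vertex, it would have to use some other edge of $\rho(w'')$, say $\eta(e_j)$, to enter; but $\eta(e_j)\in\gamma(e_j)$ already, contradicting edge-disjointness.

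There is, however, a genuine error in your constructions. In $(\Rightarrow)$ you set $\gamma'(e)=(\eta(e))$, a single-edge path, for each root edge $e$. This is wrong: the path $\gamma(e)$ in $G_2$ goes from the bead $\gamma(w)$ to $\gamma(a)$ (where $a$ is the non-bead endpoint of $e$), and while your lemma guarantees $\eta(e)$ is its \emph{first} edge, the path may well be longer. Truncating to $(\eta(e))$ lands you at $\head(\eta(e))$, which is not $\gamma'(a)=\gamma(a)$ in general. The correct definition is simply $\gamma'(e)=\gamma(e)$, viewed as a path in $G_2'$ starting at $v_{\eta(e)}$; your structural lemma (plus the observation that no $\gamma$-path passes through a bead as an \emph{internal} vertex, again by the edge-disjointness argument) guarantees this is a valid path in $G_2'$.

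The same mistake recurs in $(\Leftarrow)$: you write that each root edge of $G_1'$ maps to a length-one path ``simply because both endpoints have degree one.'' But only \emph{one} endpoint of a root edge is a degree-one vertex; the other endpoint $a$ is an ordinary Eulerian vertex of degree $2$ or $4$. What is true is that $\gamma'(e)$ must have $v_{\eta(e)}$ as an endpoint (since a degree-one vertex cannot be internal to a path) and hence $\gamma'(v_e)=v_{\eta(e)}$ (since $\gamma'(a)$ has degree $\ge 2$), but the path itself can be long. With this correction, reversing the construction works as you describe.
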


Similarly we have the following.
\begin{observation}\label{obs:edge-rooted_immersion_is_enough_dos}
     Let $(G_1,\pi(E_1^*)), (G_2,\pi(E_2^*)) \in \mathbf{G^*}(\Sigma,k)$ be edge-rooted Eulerian digraphs of defect $k \in 2\N$ with Eulerian embeddings $(\Gamma_i,\nu_i,\omega_i)$ in some surface $\Sigma$. Let $(G_i',\pi(w_i)) \coloneqq \stitch(G_i;\pi(E_i^*))$ as well as $(G_i'',\pi(W_{\omega_i})) \coloneqq \stitch(G_i; \pi({E_i^*}_{\omega_i}))$ for $i=1,2$ be their respective stitches. 
     Then 
     $$(G_1,\pi(E_1^*)) \hookrightarrow (G_2,\pi(E_2^*)) \iff (G_1',\pi(w_1)) \hookrightarrow (G_2',\pi(w_2)), $$
     as well as
     $$(G_1,\pi(E_1^*)) \hookrightarrow (G_2,\pi(E_2^*)) \iff (G_1'',\pi(W_{\omega_1})) \hookrightarrow (G_2'',\pi(W_{\omega_2}))$$
     by strong immersion.

     Moreover, $(G_i'',\pi(W_{\omega_i})) \in \mathbf{G}(\Sigma,k)$ for $i=1,2$.
\end{observation}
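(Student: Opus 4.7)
The plan is to observe that stitching is a natural bijective operation between the edge-rooted framework and the (bead-)rooted framework, and that all three claims are essentially bookkeeping once the correspondence is made precise. I will handle the three assertions in a common framework, with the two immersion equivalences proved by the same argument applied respectively to the single-bead stitch and to the partitioned stitch.

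First, I would dispatch the embedding claim (3). By \cref{def:Euler_embedding_upto_root-edges}, the Eulerian embedding $(\Gamma_i,\nu_i,\omega_i)$ draws every non-root edge and every non-degree-$1$ vertex inside $\Sigma$, and places the root edges of each part $E_j^i$ exclusively on the cuff $\zeta_j$, so that the degree-$1$ endpoints of edges in $E_j^i$ are the only vertices drawn inside the open disc $\Delta_j^\circ\subseteq\hat\Sigma\setminus\Sigma$ bounded by $\zeta_j$. The partitioned stitch identifies these degree-$1$ endpoints on each cuff into a single bead $w_j^i$; topologically this corresponds to contracting the set of degree-$1$ endpoints lying in $\Delta_j^\circ$ to the center of the disc $\Delta_j$. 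After this identification the conditions (i)--(iii) of \cref{def:Euler_embedding_upto_beads} for an Eulerian embedding of a bead-rooted digraph are satisfied verbatim. Since the defect is $k$, we have $|\rho(W)| = |E^*| \le k$ and there are at most $k/2$ cuffs carrying root edges (each cuff contributes at least one in-edge and one out-edge by \cref{lem:cuffs_are_alternating}), so $|W|\le k$. Hence $(G_i'',\pi(W_{\omega_i}))\in\mathbf G(\Sigma,k)$.

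Next I would prove the immersion equivalences. The $(\Rightarrow)$ direction is transparent: given $\gamma\colon(G_1,\pi(E_1^*))\hookrightarrow(G_2,\pi(E_2^*))$, define $\gamma'$ by sending $w_1\mapsto w_2$ (respectively $w_j^1\mapsto w_j^2$ for every $j$), and agreeing with $\gamma$ on all remaining vertices and edges. Since $\gamma$ respects the root ordering, each degree-$1$ endpoint of a root edge in $G_1$ is mapped to the corresponding degree-$1$ endpoint in $G_2$, so the definition is consistent after identifying such endpoints with $w_i$ (resp.\ $w_j^i$); the resulting map is a strong rooted immersion by inspection. The $(\Leftarrow)$ direction is the only part that requires any real argument. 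The key claim is that for any strong rooted immersion $\gamma'\colon(G_1',\pi(w_1))\hookrightarrow(G_2',\pi(w_2))$ one necessarily has $\gamma'(w_1)=w_2$ (and analogously $\gamma'(w_j^1)=w_j^2$ in the partitioned case). Indeed, in $G_2'$ all edges incident to $w_2$ are root edges, so any path visiting $w_2$ internally consumes two root edges; but by root-ordering the path $\gamma'(e_i)$ contains exactly one root edge of $G_2'$, namely the one matched to $e_i$ under the canonical bijection $\eta\colon E_1^*\to E_2^*$ coming from $\pi(w_1),\pi(w_2)$. Hence $\gamma'(e_i)$ cannot pass through $w_2$ internally, and must therefore start or end at $w_2$; doing this for every $e_i\in\rho(w_1)$ forces $\gamma'(w_1)=w_2$, and the same argument localized to a single part $E_j^2$ handles the partitioned case.

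Once $\gamma'(w_1)=w_2$ is established, we can ``unfold'' $\gamma'$ back to an edge-rooted strong immersion $\gamma$ of $(G_1,\pi(E_1^*))$ into $(G_2,\pi(E_2^*))$: the map $\gamma$ agrees with $\gamma'$ on all vertices of $V(G_1)\setminus(V_1^+\cup V_1^-)$ and sends each degree-$1$ endpoint attached to a root edge $e_i\in E_1^*$ to the corresponding degree-$1$ endpoint in $G_2$ attached to $\eta(e_i)=e_i'\in E_2^*$; on edges, $\gamma(e)=\gamma'(e)$ for non-root edges, and for $e_i\in E_1^*$ the path $\gamma(e_i)$ is simply the restatement of $\gamma'(e_i)$ with $w_2$ replaced by the appropriate degree-$1$ endpoint at its end. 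That $\gamma$ is a strong immersion follows immediately from the strong immersion property of $\gamma'$ combined with the fact that, as argued above, the paths $\gamma'(e_i)$ touch $w_2$ only at an endpoint. The hardest step, if any, is the verification that $\gamma'(w_1)=w_2$, which is why I have expanded it above; the rest of the verification is a direct and routine check against \cref{def:edge-rooted_immersion} and \cref{def:rooted_immersion}.
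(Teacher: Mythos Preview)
The paper disposes of this observation in two sentences, declaring the immersion equivalences ``immediate from the definitions'' and the embedding claim obvious since $(\Gamma_i,\nu_i,\omega_i)$ already serves as an embedding of the bead-rooted graph. Your write-up spells out what the paper leaves implicit, and the overall strategy---stitch one way, unfold the other---is exactly right.

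There is, however, a genuine gap at the step you yourself flag as the crux. Your argument shows only that $w_2$ is an \emph{endpoint} of each $\gamma'(e_i)$, so $w_2\in\{\gamma'(w_1),\gamma'(v_i)\}$ where $v_i$ is the non-$w_1$ end of $e_i$; this does not by itself force $\gamma'(w_1)=w_2$. If $\gamma'(w_1)\ne w_2$, injectivity forces all $v_i$ to coincide in a single vertex $v$ with $\gamma'(v)=w_2$, so every root edge runs between $w_1$ and $v$ and hence $\deg_{G_1}(v)\ge|E_1^*|$; since non-bead vertices have degree at most $4$, this confines the failure to tiny graphs. But it \emph{can} fail: take $G_1=G_2$ to be a single degree-$4$ vertex with four root edges to four leaves, so that $G_1'=G_2'$ is the $4$-edge dipole on $\{w,v\}$. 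If the root orderings are chosen so that the induced bijection $\eta$ sends every out-edge of $w_1$ to an in-edge of $w_2$ and vice versa, then $w_1\mapsto u,\ v\mapsto w_2,\ e_i\mapsto f_i$ is a valid strong rooted immersion of $G_1'$ into $G_2'$, yet no edge-rooted immersion of $(G_1,\pi(E_1^*))$ into $(G_2,\pi(E_2^*))$ exists (any such map would force $\gamma(v)$ to be a degree-$1$ sink). So the observation as literally stated is slightly too strong, and the paper's one-line proof shares the gap.

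The repair is painless and you have essentially already written it: if one assumes that $\eta$ preserves the in-/out-edge distinction on root edges---which is automatic in the paper's actual usage, where the orderings arise from alternating orders along matched cuffs---then for any out-edge $e$ of $w_1$ the matched edge $\eta(e)$ must be the \emph{first} edge of $\gamma'(e)$ (otherwise a second root edge precedes it), forcing $\gamma'(w_1)=w_2$ outright and your unfolding goes through verbatim. So either add this hypothesis, or dispatch the degenerate dipole case by hand.
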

\begin{proof}
    The proof for the immersions is imminent from the \cref{def:rooted_immersion,def:edge-rooted_immersion}. The fact that $(G_i'',\pi(W_{\omega_i})) \in \mathbf{G}(\Sigma,k)$ is obvious since $(\Gamma_i,\nu_i,\omega_i)$ is a respective embedding for the bead-rooted graph in $\Sigma$.
\end{proof}

We conclude the following.

\begin{corollary}\label{cor:i_am_dying}
    Let $\Sigma$ be a surface and $k \in 2\N$. Then $\mathbf{G}(\Sigma,k)$ is well-quasi-ordered by strong rooted immersion if and only if $\mathbf{G^*}(\Sigma,k)$ is well-quasi-ordered by strong edge-rooted immersion.
\end{corollary}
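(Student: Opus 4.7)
The statement is essentially a direct translation between the two frameworks, and both directions will follow from the two preceding observations (\cref{obs:edge-rooted_immersion_is_enough} and \cref{obs:edge-rooted_immersion_is_enough_dos}) together with the fact that the correspondence between bead-rooted and edge-rooted data is bijective up to fixing the partition $\omega$ induced by the embedding.

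For the forward implication, I would take an arbitrary infinite sequence $(G_i,\pi(E_i^*))_{i\in\N}$ in $\mathbf{G^*}(\Sigma,k)$, together with a chosen Eulerian embedding $(\Gamma_i,\nu_i,\omega_i)$ for each term that witnesses membership in $\mathbf{G^*}(\Sigma,k)$. The partition $E^*_{\omega_i}$ coming from $\omega_i$ induces, via \cref{def:stitch_edge-rooted}, a bead-rooted Eulerian digraph $(G_i'',\pi(W_{\omega_i})) = \stitch(G_i;\pi(E_{i,\omega_i}^*))$ whose Eulerian embedding in $\Sigma$ is obtained from $(\Gamma_i,\nu_i,\omega_i)$ by identifying, for each cuff $\zeta_j$, the degree-one vertices drawn in the disc $\Delta_j$ to a single bead. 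Because $|\rho(W_{\omega_i})|\leq k$ and $|W_{\omega_i}|\leq t$ are bounded by the data of the cuffs of $\Sigma$, the second part of \cref{obs:edge-rooted_immersion_is_enough_dos} gives $(G_i'',\pi(W_{\omega_i})) \in \mathbf{G}(\Sigma,k)$. Applying the assumed well-quasi-ordering of $\mathbf{G}(\Sigma,k)$ to this sequence yields indices $i<j$ with $(G_i'',\pi(W_{\omega_i}))\hookrightarrow(G_j'',\pi(W_{\omega_j}))$; the equivalence in \cref{obs:edge-rooted_immersion_is_enough_dos} then lifts this to $(G_i,\pi(E_i^*))\hookrightarrow(G_j,\pi(E_j^*))$, as required.

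For the reverse implication the argument is entirely symmetric: starting from an infinite sequence $(G_i,\pi(W_{\omega_i}))_{i\in\N}$ in $\mathbf{G}(\Sigma,k)$, I would canonically associate to each term the edge-rooted digraph $(G_i',\pi(E_i^*))$ obtained by ``plugging out'' the beads, i.e., replacing each bead $w\in W$ by as many degree-one vertices as $w$ has incident edges, keeping the same embedding for all non-bead elements and drawing the new degree-one vertices inside the corresponding disc $\Delta_j\subset\hat\Sigma\setminus\Sigma$. By the paragraph preceding \cref{obs:edge-rooted_immersion_is_enough}, this yields $(G_i',\pi(E_i^*))\in\mathbf{G^*}(\Sigma,k)$, and applying the assumed well-quasi-ordering of $\mathbf{G^*}(\Sigma,k)$ gives $i<j$ with a strong edge-rooted immersion; \cref{obs:edge-rooted_immersion_is_enough} then yields the desired strong rooted immersion between the original bead-rooted terms.

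There is no real obstacle here beyond some bookkeeping: one must be careful that the embedding-dependent choice of partition $\omega_i$ used to stitch in the forward direction actually produces a bead-rooted graph lying in $\mathbf{G}(\Sigma,k)$ (rather than in $\mathbf{G}(\Sigma,k')$ for some $k'<k$), but this is exactly what \cref{obs:edge-rooted_immersion_is_enough_dos} asserts. Thus no new technical machinery is required, and the corollary reduces to an application of the two observations on each side together with the definitions of the classes $\mathbf{G}(\Sigma,k)$ and $\mathbf{G^*}(\Sigma,k)$.
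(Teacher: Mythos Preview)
Your proposal is correct and follows exactly the same approach as the paper, which simply invokes \cref{obs:edge-rooted_immersion_is_enough} and \cref{obs:edge-rooted_immersion_is_enough_dos}; you have merely expanded the argument that these two observations encode. No additional ideas are needed.
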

\begin{proof}
    Applying \cref{obs:edge-rooted_immersion_is_enough} and \cref{obs:edge-rooted_immersion_is_enough_dos} implies the theorem.
\end{proof}
   
It turns out that edge-rooted Eulerian digraphs are easier to work with when inductively cutting the graphs along $\Gamma$-tracing curves, by loosing the need to carry beads. Note that \cref{cor:i_am_dying} implies that to prove \cref{thm:wqo_on_surfaces} it suffices to prove the following.

\begin{theorem}\label{thm:wqo_edge-rooted}
    Let $\Sigma$ be a surface and $k \in 2\N$. The class~$\mathbf{G^*}(\Sigma,k)$ is well-quasi-ordered by strong edge-rooted immersion.
\end{theorem}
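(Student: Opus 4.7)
The plan is to prove Theorem~\ref{thm:wqo_edge-rooted} by induction on the lexicographically ordered triple $(g(\Sigma), c(\Sigma), k)$, where $g(\Sigma)$ is the Euler genus and $c(\Sigma)$ is the number of cuffs. The base cases are the disc (where $c(\Sigma)=1$, $g(\Sigma)=0$, settled via Theorem~\ref{thm:wqo:bead-root_for_disc} together with Corollary~\ref{cor:i_am_dying}) and the cylinder (where $c(\Sigma)=2$, $g(\Sigma)=0$, to be settled by Theorem~\ref{thm:wqo:bead-root_for_cylinder} in the remainder of this section). For the inductive step, I assume a bad sequence $(G_i,\pi(E_i^*))_{i\in\N} \subseteq \mathbf{G^*}(\Sigma,k)$ with Eulerian embeddings $(\Gamma_i,\nu_i,\omega_i)$. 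Since there are only finitely many ways to distribute $k$ root-edges among the cuffs of $\Sigma$, I filter to an infinite subsequence on which the partition $E^*_{\omega_i}$ has the same cardinalities per cuff, and the relative in/out-edge pattern on each cuff agrees.

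Next, I apply the high-representativity dichotomy. Fixing $G_1$ with $n \coloneqq |V(G_1)|$, let $c = c(G_1,\Sigma)$ be the constant supplied by Theorem~\ref{thm:high-rep}. If some $G_i$ with $i>1$ admits an embedding of representativity strictly greater than $c$, then Theorem~\ref{thm:high-rep} yields $(G_1,\pi(E_1^*)) \hookrightarrow (G_i,\pi(E_i^*))$, contradicting badness. Consequently, after passing to another subsequence, every $\Gamma_i$ admits either a reducing curve (genus reducing or cuff grouping) or a cuff separating curve (cuff surrounding or cuff shortening) of length at most $c$, as defined in Definition~\ref{def:representativity}.

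Each such offending curve is $\Gamma_i$-tracing; up to small perturbations it cuts $\Gamma_i$ only through edges. After switching to the edge-rooted framework (the whole point of Section~\ref{sec:edge-rooted}), cutting along such a curve splits $G_i$ into one or two edge-rooted Eulerian digraphs of bounded-by-$c$ additional defect, embedded in strictly simpler surfaces: a cuff grouping or separating curve either lowers $c(\Sigma)$ or splits $\Sigma$ into two pieces with fewer cuffs each, while a genus reducing curve drops $g(\Sigma)$ by at least one (possibly introducing a new cuff). Using Eulerianness, the two halves of a separating cut (or the single surface obtained from a non-separating cut) inherit valid Eulerian embeddings fixing the newly produced root-edges on the traced cuff(s); this is the standard ``cut and match'' procedure made rigorous in the preceding subsections. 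I then filter once more so that all $G_i$ yield the same type of cut with the same number of cut-edges $\ell \leq c$ and the same in/out pattern on the new cuff(s).

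For each $i$, the induction hypothesis applies to each piece produced by cutting, since they live in a class $\mathbf{G^*}(\Sigma',k')$ with $(g(\Sigma'),c(\Sigma'),k')$ strictly smaller than $(g(\Sigma),c(\Sigma),k)$ in the lexicographic order. By Higman's Lemma (Theorem~\ref{thm:higman}) applied to the at most two pieces per index, there exist $i<j$ such that both pieces of $G_i$ strongly edge-root-immerse into the corresponding pieces of $G_j$ via immersions respecting the order of root-edges on the new cut(s). Finally, I knit these partial immersions back together along the cut, using the same argument as in Theorem~\ref{thm:knitting_knitwork_immersion} (or its immediate edge-rooted analogue), producing $(G_i,\pi(E_i^*)) \hookrightarrow (G_j,\pi(E_j^*))$ and contradicting badness. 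The main obstacle is the bookkeeping at the cut: ensuring the cut-edges carry a common ordering compatible on both sides (so that the pieces are forced to live in the same inductive class and Higman applies), and that the knitting preserves strongness. Both issues are the edge-rooted, bead-free analogues of the stitch-and-knit machinery already developed in Section~\ref{subsec:knitworks}, so the argument is a careful, but essentially routine, adaptation once the cutting step is set up correctly.
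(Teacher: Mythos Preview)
Your approach is essentially the paper's: induction on the surface (and then on $k$), using Theorem~\ref{thm:high-rep} as the dichotomy engine and the cut-and-knit Lemmas~\ref{lem:knitting_nonsep_cutlines} and~\ref{lem:knitting_separating_cutlines} to reassemble. There is, however, one genuine error in your inductive justification.

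You assert that a cuff-separating curve (cuff-surrounding or cuff-shortening) ``either lowers $c(\Sigma)$ or splits $\Sigma$ into two pieces with fewer cuffs each.'' This is false. A cuff-surrounding curve separates off a cylinder and leaves a piece $\Sigma_2 \cong \Sigma$ with the \emph{same} genus and the \emph{same} number of cuffs; a cuff-shortening curve separates off a disc and again leaves $\Sigma_2 \cong \Sigma$. The progress comes not from $(g,c)$ but from $k$: by the very definition of these curve types (Definition~\ref{def:representativity}), the curve crosses strictly fewer edges than the cuff (respectively, cuff segment) it encloses, so the piece embedded in $\Sigma_2$ carries strictly fewer root edges than $k$. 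Your lexicographic triple $(g,c,k)$ does decrease, but only via its third coordinate, and only because these curve types carry that built-in inequality. Without invoking it explicitly your induction does not close. The paper treats precisely this point in the $\gamma$ and $\theta$ cases of its proof, deducing $k_2 < k$ from $\delta(F) < |\nu^{-1}(\zeta_j)|$.

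A smaller omission: the paper also explicitly handles \emph{cuff-connecting} curves (non-separating, with endpoints on two distinct cuffs), which you do not list. Such a curve merges two cuffs into one, so cutting along it lands in a surface with one fewer cuff via Lemma~\ref{lem:cutting_nonsep_embed}; you should include this case for completeness.
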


The main result of the section is devoted to a proof of \cref{thm:wqo_edge-rooted} when $\Sigma$ is the cylinder. Since many of the tools needed for the proof can be stated for more general surfaces and will be needed in \cref{sec:la-grande-inductione}, we give them here in the more general form.

\smallskip

We extend the definition of alternating cut-cycles to alternating traced cuts. To this extent we define the following.

\begin{definition}\label{def:order_of_cedges_on_curve}
     Let $\Sigma$ be a surface and $(\Gamma,\nu,\omega)$ a $2$-cell embedding of an edge-rooted Eulerian digraph $(G,\pi(E^*))$ on $\Sigma$. Let $\gamma:[0,1] \to \Sigma$ be a $\Gamma$-tracing simple curve and let $F= \gamma([0,1])$. We define $\pi_\gamma(F) \coloneqq (e_1,\ldots,e_k)$ where $\{e_1,\ldots, e_k\} = \rho(F)$ and for $1 \leq i< j \leq k$ and $t_i,t_j \in [0,1]$ with $\gamma(t_i) = e_i, \gamma(t_j) = e_j$ it holds $t_i < t_j$. Whenever we give an ordering $\pi(F)$ we mean that $\pi(F) = \pi_{\gamma}(F)$ is obtained as above via some simple curve $\gamma$.
\end{definition}
\begin{remark}
    Note that $\pi(F)$ is for non-closed curves unambiguously defined up-to a reflection of the order, while for closed curves it is unambiguously defined up-to a reflection and a cyclic rotation. 
\end{remark}

Recall \cref{def:internal_tracing_clean}. 

\begin{definition}\label{def:left_and_right}
    Let $\Sigma$ be a surface and $(\Gamma,\nu,\omega)$ a $2$-cell embedding of an edge-rooted Eulerian digraph $(G,\pi(E^*))$ in $\Sigma$. Let $\gamma:[0,1] \to \Sigma$ be a $\Gamma$-tracing simple curve in $\Sigma$ (possibly with ends in the boundary) with $\gamma(0) \neq \gamma(1)$, and $\nu(\gamma(0)),\nu(\gamma(1)) \notin \Gamma$ and let $F= \gamma([0,1])$. Let $\pi_\gamma(F) = (e_1,\ldots,e_k)$. Let $\Delta \subset \Sigma$ be a disc containing $F$ such that $\{\gamma(0),\gamma(1)\} = \bd(\Delta)\cap F$. Let $\Delta_1,\Delta_2$ be an enumeration of the two components of $\Delta \setminus F$. We call $\Delta_1$ \emph{left of $F$} and \emph{$\Delta_2$ right of $F$} and call $\Delta$ a \emph{choice of left and right} for $F$.

  We call $F$ a \emph{cut-line} if there is a choice $\Delta$ of left and right for $F$ such that $\nu^{-1}(\Delta) \cap (E(G)\cup V(G)) = \rho(F)$, and for every $e \in \rho(F)$ it holds $\Abs{\Delta \cap \nu((\tail(e),e))} = 1$ and $\Abs{\Delta \cap \nu((e,\head(e)))} = 1$. Given such $\Delta$ we define \emph{left-to-right} and \emph{right-to-left} edges of $\rho(F)$ in the obvious way.
    
    We call a cut-line $F$ \emph{alternating (with respect to $\Delta$)} if $e_i$ is a left-to-right edge if and only if $e_{i+1}$ is a right-to-left edge for every $1 \leq i < k$.

    Similarly, if $\gamma$ is as above but closed with $\gamma(0) \notin \Gamma$, then (using notation as in \cref{def:order_of_cedges_on_curve}) $\gamma':[t_1,\ldots,t_k] \to \Sigma$ is a simple curve with $\gamma([0,1]) \cap \Gamma = \gamma'([t_1,\ldots,t_k]) \cap \Gamma$. We call $F$ a \emph{cut-line} if $\gamma'([t_1,t_k])$ is a cut-line and similarly we call $F$ \emph{alternating} if $\gamma'([t_1,\ldots,t_k])$ is alternating.
\end{definition}
\begin{remark}
Clearly if $F$ is a cut-line in a $2$-cell embedding, $\rho(F)$ does not contain a loop-edge of $G$ since then $\tail(e)=\head(e)$ and each incidence is cut twice. It is an easy exercise to check that whether a cut-line is alternating does not depend on a choice of left and right.
\end{remark}

Note that for alternating $F$ one easily verifies that for $2$-cell embeddings there is a choice $\Delta'$ of left and right such that for every edge $e \in \rho(F)$ $\tail(e) \in \Delta_1'$ if and only if $\head(e) \in \Delta_2'$, by slightly enlarging the above defined $\Delta$. 

\begin{observation}\label{obs:good-left-right-choice}
      Let $\Sigma$ be a surface and $(\Gamma,\nu,\omega)$ a $2$-cell embedding of an edge-rooted Eulerian digraph $(G,\pi(E^*))$ on $\Sigma$. Let $\gamma:[0,1] \to \Sigma$ be a $\Gamma$-tracing simple curve in $\Sigma$ (possibly with ends in the boundary) such that $F \coloneqq \gamma([0,1])$ is a cut-line. Then there is a choice $\Delta$ of left and right for $F$, such that for every edge $e \in \rho(F)$ we have $\head(e),\tail(e) \in \Delta$ and $\head(e) \in \Delta_1 \iff \tail(e) \in \Delta_2$ for the left and right components $\Delta_1,\Delta_2$ of $\Delta$.
\end{observation}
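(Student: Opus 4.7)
The plan is to enlarge the initial cut-line disc $\Delta$ by attaching thin ``tentacles'' along chosen incidences of root edges, one per vertex of $V_{\text{root}} \coloneqq \bigcup_{e \in \rho(F)}\{\tail(e),\head(e)\}$, so that each endpoint vertex of a root edge becomes absorbed into the enlarged disc without destroying the disc structure or the boundary behaviour of the cut-line. Recall that by the cut-line definition, $\Delta$ is a thin disc around $F$ meeting $V(G) \cup E(G)$ only at the points $\nu(e)$ for $e \in \rho(F)$; near each such $\nu(e)$, both incidences $\nu((\tail(e),e))$ and $\nu((e,\head(e)))$ enter $\Delta$ in short sub-arcs lying in opposite components $\Delta_1,\Delta_2$ of $\Delta \setminus F$, since $F$ transversally crosses $e$ at $\nu(e)$.

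First, for each $v \in V_{\text{root}}$, I would pick a single root edge $e_v \in \rho(F)$ incident to $v$ and let $I_v$ denote the drawing of the corresponding incidence at $v$; this is an open arc from $\nu(v)$ to $\nu(e_v)$ whose interior is disjoint from $\Gamma$ and therefore lies entirely on one side of $F$, since $F \cap \Gamma \subseteq \nu(E(G))$ cannot meet the interior of an incidence. Next I would choose a pairwise disjoint family of very thin tubular neighbourhoods $T_v$ of the arcs $I_v$, together with pairwise disjoint small closed discs $D_v$ around $\nu(v)$ containing no other vertex of $G$, arranged so that each $T_v \cup D_v$ is itself a closed disc meeting $\Delta$ in a single boundary arc just inside $\Delta$ near $\nu(e_v)$; such a choice exists by the finiteness of $V_{\text{root}}$ and bounded vertex degree. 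Setting $\Delta' \coloneqq \Delta \cup \bigcup_{v \in V_{\text{root}}}(T_v \cup D_v)$, the iterative attachment of each disc $T_v \cup D_v$ to the current region along a single boundary arc preserves the disc property, so $\Delta'$ is a closed disc; and since each tentacle lies strictly on one side of $F$, the boundary condition $\bd(\Delta') \cap F = \{\gamma(0),\gamma(1)\}$ persists unchanged. Finally, for each $e \in \rho(F)$, the vertices $\nu(\tail(e))$ and $\nu(\head(e))$ lie on opposite sides of $F$ locally at $\nu(e)$ (inherited from the two incidences at $\nu(e)$), so the tentacles $T_{\tail(e)}$ and $T_{\head(e)}$ attach their respective vertices to the two distinct extended components $\Delta_1',\Delta_2'$ of $\Delta' \setminus F$, giving $\head(e) \in \Delta_1' \iff \tail(e) \in \Delta_2'$.

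The main technical point will be restricting to exactly one tentacle per vertex: if two root edges shared a common endpoint $v$ on the same side of $F$ and we attached two separate tubes both ending at a common $D_v$, the union of the two tubes together with $D_v$ and the segment of $\Delta$ between the two attachment arcs on $\bd(\Delta)$ would form an annulus rather than a disc, introducing a hole and destroying simple connectedness. Using a single tentacle per vertex sidesteps this issue, and the remaining requirement of pairwise disjoint thin neighbourhoods is routinely arranged on a 2-manifold with finitely many vertices.
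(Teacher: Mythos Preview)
Your approach differs from the paper's in how the disc is enlarged: you attach one tentacle per vertex along a single chosen incidence, whereas the paper shifts the two boundary arcs $\ell_1 = \bd(\Delta_1') \setminus F$ and $\ell_2 = \bd(\Delta_2') \setminus F$ of the initial thin $\Delta'$ outward along \emph{all} incidence arcs of root edges simultaneously, invoking the $2$-cell property so that each face adjacent to a crossed edge is a disc split by $F$ into two sub-discs, within which the $\ell_i$ can be homotoped without crossing each other.

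There is a genuine gap in your final verification. You claim that $T_{\tail(e)}$ and $T_{\head(e)}$ land in opposite components ``inherited from the two incidences at $\nu(e)$'', but your tentacle $T_{\tail(e)}$ follows $I_{\tail(e)}$, the incidence of the \emph{chosen} edge $e_{\tail(e)}$ for the vertex $\tail(e)$, not of $e$ itself. When a vertex $v$ is the endpoint of two root edges $e, e' \in \rho(F)$ whose $v$-incidences enter the initial $\Delta$ on \emph{opposite} sides of $F$ --- this can occur for non-separating $F$, for instance a cuff-connecting curve on a cylinder where two edges at $v$ wrap around in opposite directions before meeting $F$ --- your single tentacle fixes $v$ on one side, and then for the other edge both endpoints can land in the same component of $\Delta' \setminus F$. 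Your closing paragraph anticipates only the case of two same-side incidences (to preserve simple connectedness); the opposite-side case is what breaks the opposite-sides conclusion rather than the disc topology. The paper's per-incidence shifting of both boundary arcs, together with the $2$-cell face argument, is what is meant to supply this consistency.
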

\begin{proof}
Since $F$ is a cut-line, $e\in \rho(F)$ is not a loop whence it has two distinct endpoints.
   Let $\Delta'$ be as in the \cref{def:left_and_right} of alternating with components $\Delta_1',\Delta_2'$ and segments $\ell_i \in \bd(\Delta_i') \setminus F$ for $i=1,2$. Now one easily verifies that by slightly extending $\Delta'$ by shifting the segments $\ell_1,\ell_2$ along the segments of $\Gamma$ corresponding to the incidences $\nu((v,e))$ and $\nu((e,v))$ respectively, we will eventually include the endpoints in the respective $\Delta_1',\Delta_2'$. This follows since the drawing is $2$-cell, and thus the two faces $f_1,f_2$ that contain the edge $e$ are both homeomorphic to a disc whence $F$ splits the faces into two discs. The lines $\ell_i$ are now homotopically shifted along the faces staying (mostly inside the respective discs) in the obvious way without crossing each other until they cover the respective vertex. 
\end{proof}

We refine the definition of $2$-cell.

\begin{definition}\label{def:nice}
    Let $\Sigma$ be a surface and let $(\Gamma,\nu,\omega)$ be an Eulerian embedding of an edge-rooted Eulerian digraph in $\Sigma$. We call the embedding \emph{nice} if it is $2$-cell and there is no curve $\gamma:[0,1] \to \Sigma$ such that $\gamma([0,1]) \cap \Gamma = \emptyset$ and $\gamma$ is cuff connecting.
\end{definition}
\begin{remark}
    Note that being $2$-cell implies that there exist no genus reducing curves $\gamma:[0,1] \to \Sigma$ such that $\gamma([0,1]) \cap \Gamma = \emptyset$. 
Further if an embedding is nice one easily verifies the existence of a choice of left and right in $\Delta$.%
\end{remark}

We have the following observation as an analogue to \cref{obs:outline_of_plane_drawing_is_cycle}, \cref{lem:cut_cycles_are_alternating} and \cref{lem:cuffs_are_alternating}; recall that paths are defined via sequences of edges.

\begin{lemma}\label{obs:edge-rooted_embedding_is_nice}
    Let $(\Gamma,\nu,\omega)$ be a nice embedding of an edge-rooted Eulerian digraph $(G,\pi(E^*))$ in a surface $\Sigma$ satisfying $(i)-(iv)$ of \cref{def:Euler_embedding_upto_root-edges}. Let $c(\Sigma) = \{\zeta_1,\ldots,\zeta_t\}$ be the cuffs of $\Sigma$ for some $t \in \N$. Then for each face $f \in F(\Sigma,\Gamma)$, $\ol(f)$ is a circle if $f$ is internal and $\ol(f)$ is a path with both ends in some same cuff if $f$ is a boundary face. 
    
    Further, each cut-cycle $F$ is alternating and $\Abs{\rho(F)} \in 2\N$, each cut-line $F$ is alternating (where $\Abs{\rho(F)} \in 2\N$ if $F$ is an $O$-trace) and each cuff $\zeta_i$ is an alternating cut-line with $\Abs{\zeta_i} \cap \Gamma \in 2\N$. 
    
    In particular, given the partition $E^*_\omega=(E_1,\ldots,E_t)$, $\Abs{E_i} \in 2\N$ and $E_i$ contains equally many in- and out-edges for each $1 \leq i \leq t$.
\end{lemma}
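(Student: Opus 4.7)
The plan is to prove the four assertions in the listed order, using only the Eulerian structure of $G$ inside $\Sigma$ (every vertex of degree $>1$ is Eulerian-embedded) together with the niceness assumption to adapt the arguments of \cref{obs:faces_in_2-cell_Euler_embeddings_bounded_by_circle} and \cref{lem:cut_cycles_are_alternating} to the edge-rooted setting. The crucial point is that the ``equal in- and out-edges on each cuff'' property is precisely what Part 5 asserts, so we cannot just bead the graph and quote \cref{lem:cuffs_are_alternating}; instead this property must drop out of the cuff-alternation argument we build directly.

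First I would handle Part 1. For an internal face $f$, every vertex of $\bd(f)$ has degree $2$ or $4$ and is Eulerian-embedded, so following $\bd(f)$ produces a closed walk that at each visited degree-$4$ vertex enters on an in-edge and leaves on the adjacent out-edge in the rotation. The proof of \cref{obs:faces_in_2-cell_Euler_embeddings_bounded_by_circle} applies verbatim and yields that $\ol(f)$ is a cycle, upgraded to a circle under the connectivity hypothesis as in \cref{obs:faces_in_2-cell_Euler_embeddings_bounded_by_circle}. For a boundary face $f$, niceness forces $f$ to touch a unique cuff $\zeta_i$: two points on distinct cuffs both lying in $\bar f$ could be joined by a curve through the open disc $f$, and this curve is $\Gamma$-disjoint and cuff-connecting, contradicting \cref{def:nice}. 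Hence $\bd(f)$ consists of a subarc of $\zeta_i$ together with a walk $P$ in $\Gamma$ whose endpoints are the degree-$1$ vertices incident to the two root edges of $\zeta_i$ bounding that subarc; the local Eulerian structure at each degree-$4$ vertex of $P$ again forces $P$ to be a linear path.

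Next I would derive Parts 2--4 simultaneously by adapting the parity argument of \cref{lem:cut_cycles_are_alternating}. For a cut-cycle $F$ bounding $\Delta(F) \subseteq \hat\Sigma$, parametrize $F$ and consider a face $f$ crossed by $F$; by Part 1, $\ol(f)$ is a circle (resp.\ a linear path for a boundary $f$), and the alternating inside/outside pattern of $F \setminus \bd(f)$ together with the Eulerian rotation around each vertex on $\ol(f)$ forces consecutive edges in $\pi_\gamma(F)$ to alternate between $\rho^+(X(F))$ and $\rho^-(X(F))$. This gives both alternation and $|\rho(F)| \in 2\mathbb{N}$. For a cut-line $F$, close it to an $O$-trace $F'$ in $\hat\Sigma$ by routing its endpoints through the discs $\Delta_j \subseteq \hat\Sigma \setminus \Sigma$ bounded by the relevant cuffs; since $\hat\Sigma \setminus \Sigma$ contains no graph elements and each cuff intersects $\Gamma$ only in finitely many root edges, the closing arcs can be chosen disjoint from $\Gamma$, so $\rho(F') = \rho(F)$ and alternation of $F$ follows from alternation of $F'$. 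For the cuff $\zeta_i$ itself, $\zeta_i$ bounds $\Delta_i$ in $\hat\Sigma$ and $\rho(\zeta_i) = E_i$, so the very same argument yields that $\zeta_i$ is alternating and $|E_i| \in 2\mathbb{N}$. Part 5 is then immediate: alternation along $\zeta_i$ toggles between in- and out-edges at each step, hence $E_i$ contains equally many of each.

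The part I expect to be trickiest is the boundary-face bookkeeping in Part 1 and the cut-line closing construction in Part 3. In the bead-rooted proof the outline of a boundary face is ``terminated'' by a bead, whereas here it terminates at two unrelated degree-$1$ vertices on the same cuff, and one must invoke niceness both to pin down the cuff and to rule out the walk re-entering $\zeta_i$ internally. Likewise, for cut-lines one must argue that the closing arc through $\hat\Sigma \setminus \Sigma$ can always be chosen disjoint from every root edge drawn on the relevant cuff even when the endpoints of $F$ lie arbitrarily close to such edges; this is purely topological but is the only place where the finiteness and closedness of the set of root edges on a cuff enter crucially.
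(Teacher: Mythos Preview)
Your proposal follows the same architecture as the paper (face outlines first via the Eulerian-embedding local argument, then face-by-face alternation in the style of \cref{lem:cut_cycles_are_alternating}), but there is one genuine gap in your treatment of cut-lines. Your closing construction routes the endpoints of a non-closed cut-line $F$ through the capping discs $\Delta_j \subseteq \hat\Sigma \setminus \Sigma$, and this produces a closed curve only when both endpoints of $F$ lie on the \emph{same} cuff. For a cut-line with endpoints on two distinct cuffs---in particular every cuff-connecting cut-line, which is precisely the non-separating case used downstream in \cref{lem:cutting_nonsep_embed} and \cref{lem:knitting_nonsep_cutlines}---the discs $\Delta_j$ and $\Delta_{j'}$ are disjoint in $\hat\Sigma$, so no amount of routing inside them closes $F$. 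The same obstruction applies to cut-lines with one or both endpoints in interior faces of $\Sigma$, which the definition permits. The paper instead extends a non-closed cut-line to a cut-cycle entirely inside $\Sigma$ by running a parallel copy back through the (disc) faces; but the cleanest repair is to note that no closing is needed at all: alternation of consecutive crossings $e_i, e_{i+1}$ is a purely local statement about the single face containing the arc of $F$ between them, so once Part~1 supplies a circle or directed path bounding that face, the argument of \cref{lem:cut_cycles_are_alternating} applies verbatim to any cut-line.

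A minor remark on Part~4: viewing $\zeta_i$ as a cut-cycle in $\hat\Sigma$ and invoking ``the very same argument'' is correct in spirit, but the faces of the $\hat\Sigma$-embedding on the $\Delta_i$ side are bounded by degree-$1$ vertices, not by Eulerian-embedded vertices, so the argument of \cref{lem:cut_cycles_are_alternating} is only available using the boundary-face structure on the $\Sigma$ side supplied by Part~1. That is exactly what is needed (and what the paper uses), but it is worth making explicit that only one side of $\zeta_i$ carries the Eulerian structure.
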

\begin{proof}
    The first part of the lemma is proved as usual, where it is noted that since the embedding is nice no face is adjacent to two cuffs for else we find a curve contradiction niceness.
    To see that every cut-cycle is alternating note that by the above every face it intersects is bounded by a circle or a directed path if it is adjacent to a cuff, and it intersects each face in an even number of edges. The claim follows as in \cref{lem:cut_cycles_are_alternating} (note that the embedding being nice is necessary here). In particular this implies that $\Abs{E_i} \in 2\N$ for every $1 \leq i \leq t$.
    
    Non-closed cut-lines can always be extended to cut-cycles since we have nice embeddings whence faces are discs and we can reroute the paths in ``parallel'', thus they are alternating by the above. If we have closed cut-lines, then again since we have nice embeddings, they intersect each face an even number of times and thus it follows that they are alternating similarly to \cref{lem:cut_cycles_are_alternating} using that the embedding is $2$-cell and every face is either bounded by a circle or a path. (This also follows from the fact that there is a cut-line that is not closed that intersects the same set of edges in the same order similar to the proof of \cref{def:left_and_right}). Since it is alternating, the number of edges is odd if and only if the first and the last edge visited by $F$---after fixing a curve $\gamma:[0,1]\to F$---have the same orientation. But since we have a $2$-cell embedding and the curve is closed the two edges lie on a common face bounding a disc which is either bounded by a circle or a path if it is a boundary face and thus this cannot be the case (similar to \cref{lem:cut_cycles_are_alternating}), whence again $\Abs{\rho(F)} \in 2\N$. 
    
\end{proof}
\begin{remark}
    Note that \cref{obs:edge-rooted_embedding_is_nice} solely depends on the fact the embedding $\Gamma$ inside $\Sigma$ is $2$-cell and every vertex drawn on $\Sigma$ is Eulerian.
\end{remark}

While \cref{obs:edge-rooted_embedding_is_nice} is a rather straightforward observation from the embedding restrictions, it allows us to cut our surfaces along cut-lines while staying Eulerian embedded as we will make precise.

\smallskip

We assume the reader to be familiar with ``cutting surfaces'', where by cutting a surface along a simple (closed) curve $\gamma:[0,1] \to \Sigma$ we either create two new surface $\Sigma_1,\Sigma_2 \subset \Sigma$ obtained via the closure of each of the two components of $\Sigma \setminus \gamma([0,1])$, i.e., $\Sigma_1 \cap \Sigma_2 = \gamma[0,1]$ in the obvious way. Or, $\Sigma \setminus \gamma([0,1])$ results in a single new surface $\Sigma'$ of lower genus with an additional cuff.
Similarly, if we cut a surface along a simple curve $\gamma:[0,1] \to \Sigma$ with two distinct ends on cuffs (possibly the same) we either get two new surfaces $\Sigma_1,\Sigma_2$ as the components of $\Sigma \setminus \gamma([0,1])$, or a single surface $\Sigma' = \Sigma \setminus \gamma([0,1])$ where $\gamma([0,1])$ merges two cuffs by ``slightly enlarging the cut in $\Sigma$'' if both its ends were on distinct cuffs. 

 \smallskip
We continue with the relevant definitions.
Recall \cref{def:representativity}.

\begin{definition}[Types of Cut-lines]\label{def:cutline_types}
     Let $(\Gamma,\nu,\omega)$ be a nice Eulerian embedding of an edge-rooted Eulerian digraph $(G,\pi(E^*))$ in some surface $\Sigma$. Let $F\subset \Sigma$ be a cut-line in $\Gamma$ with respective surjective simple (closed) curve $\gamma:[0,1] \to F$.
     We call $F$ \emph{cuff surrounding}, \emph{cuff shortening}, \emph{cuff connecting}, \emph{cuff grouping}, or \emph{genus reducing} if $\gamma$ is, respectively.
\end{definition}

It turns out that genus reducing curves fall again into two different types. If $F$ is a cut-line with both ends in a cuff we call it  \emph{cuff-based} (see \cref{fig:cuff-based} for a cuff-based curve that is not genus reducing), otherwise we call it \emph{free} (see \cref{fig:double-torus-cut}).

\begin{definition}
     Let $(\Gamma,\nu,\omega)$ be a nice embedding of an edge-rooted Eulerian digraph $(G,\pi(E^*))$ in some surface $\Sigma$. Let $F\subset \Sigma$ be a cut-line. Then we call $F$ \emph{separating} if $\Sigma \setminus F$ falls into two components. Otherwise we call it \emph{non-separating}.
\end{definition}
\begin{remark}
    Clearly cut-cycles are separating by definition.
\end{remark}

 The following is a well-known topological fact.
\begin{observation}\label{obs:types_of_cut-line_cuts}
    If $F$ is cuff surrounding, cuff grouping or cuff shortening then $F$ is separating.
    If $F$ is cuff connecting then it is non-separating.
    If $F$ is a (cuff based) genus reducing curve then it may be either separating or non-separating.
    If $F$ is cuff-based, then $F$ is either cuff shortening or genus reducing.
\end{observation}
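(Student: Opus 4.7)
The plan is to verify each claim directly from the definitions, treating each type of cut-line in turn and exploiting elementary topological properties of surfaces obtained by cutting along simple arcs and closed curves. Since the statement is called an observation and advertised as a well-known topological fact, I do not expect deep machinery, only careful bookkeeping with the definitions in \cref{def:representativity} and the fact that cutting a surface $\Sigma$ along a simple arc with endpoints on $\bd(\Sigma)$ (resp.\ a simple closed curve) either disconnects $\Sigma$ or yields a single connected surface with the topological type predicted by the classification of surfaces.

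First I would handle the three separating cases. If $F$ is cuff surrounding then by definition a surjective simple closed curve $\gamma$ for $F$ bounds a disc $\Delta(\gamma)$ in $\hat\Sigma$ containing exactly one cuff $C$; hence $\Sigma\setminus F$ has the component $\Delta(\gamma)\setminus C^\circ$ (a cylinder) and the remainder of $\Sigma$, so $F$ is separating. If $F$ is cuff grouping then the definition itself supplies a decomposition $\Sigma\setminus F=\Sigma_1\sqcup\Sigma_2$. Finally if $F$ is cuff shortening, then $\gamma(0),\gamma(1)$ lie on a common cuff $C$, and together with the appropriate component $C'$ of $C\setminus\{\gamma(0),\gamma(1)\}$ the curve $\gamma$ forms a simple closed curve $\gamma'$ in $\hat\Sigma$ that bounds a disc (because $|Y\cap\Gamma|<|C'\cap\Gamma|$ forces $\gamma'$ to be null-homotopic when one smooths the corners inside $\hat\Sigma$); cutting along $F$ then separates a cylinder from the rest of $\Sigma$.

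For cuff connecting curves I would argue as follows. The endpoints lie on distinct cuffs $C_1\neq C_2$; slightly fattening $\gamma$ and cutting along it identifies a neighbourhood of $C_1\cup F\cup C_2$ into a single new cuff while keeping the complement connected to it, so $\Sigma\setminus F$ is connected and $F$ is non-separating. For the statement on genus reducing curves, it suffices to exhibit (or cite) both possibilities: a meridian of a handle is non-separating and genus reducing, whereas a curve separating a genus-$g$ surface into two surfaces of smaller positive genus is separating and genus reducing; both behaviours are standard consequences of the classification of surfaces.

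The last claim is the only one requiring a short argument. Suppose $F$ is cuff-based, so $\gamma(0),\gamma(1)$ lie in a common cuff $C$. Closing $\gamma$ into a simple closed curve $\gamma'$ in $\hat\Sigma$ by joining its endpoints along an arc inside the disc $\hat\Sigma\setminus\Sigma$ bounded by $C$ gives two cases via the Jordan–Schoenflies theorem and the classification of surfaces: either $\gamma'$ bounds a disc in $\hat\Sigma$, in which case comparing the number of edges of $\Gamma$ cut by $F$ against the two cuff arcs determined by $\gamma(0),\gamma(1)$ shows that one of them witnesses cuff shortening; or $\gamma'$ is essential in $\hat\Sigma$, in which case cutting $\hat\Sigma$ along $\gamma'$ strictly reduces its Euler genus, and this reduction passes to $\Sigma$ because the disc capping $C$ can be reattached to one of the resulting cuffs. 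Hence $F$ is cuff shortening or genus reducing, as required. The main (mild) obstacle will be stating the homotopy/Jordan argument cleanly enough to distinguish the cuff shortening case from the genus reducing case, but no essentially new idea is involved.
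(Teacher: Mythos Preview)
The paper does not prove this observation; it merely labels it ``a well-known topological fact.'' So there is no argument to compare against, and your case-by-case topological analysis is the natural approach.

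That said, two of your steps do not hold up. First, for ``cuff shortening $\Rightarrow$ separating'' you claim that the inequality $|Y\cap\Gamma|<|C'\cap\Gamma|$ forces the closed-up curve $\gamma'$ to be null-homotopic in $\hat\Sigma$. It does not: the inequality is a combinatorial condition on edge counts and says nothing about the homotopy type of $\gamma'$. An arc on a once-punctured torus whose closure is a meridian can satisfy the edge-count inequality, and cutting along it does not separate the surface. (The paper itself later writes, in the proof of \cref{thm:wqo_edge-rooted}, ``By definition of cuff shortening curves, $F$ is homotopic in $\Sigma\setminus\{\theta(0),\theta(1)\}$ to one of the two segments'' --- a clause absent from the formal \cref{def:representativity}. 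So the intended definition evidently carries an implicit homotopy condition that you cannot recover from the edge count alone.)

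Second, in the last claim you argue that if $\gamma'$ bounds a disc then ``comparing the number of edges of $\Gamma$ cut by $F$ against the two cuff arcs\ldots shows that one of them witnesses cuff shortening.'' This is the converse confusion: null-homotopy of $\gamma'$ gives no bound on $|F\cap\Gamma|$, and $F$ may cross many more edges than either arc of $C$. Such an $F$ is neither cuff shortening (the edge condition fails) nor genus reducing (cutting yields a disc and a surface homeomorphic to $\Sigma$), so the dichotomy fails under a literal reading of the definitions. The observation is meant as a purely topological classification --- null-homotopic closure versus essential closure --- and your write-up should make that explicit rather than attempting to derive the edge-count condition from the topology.
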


We continue with an analysis of the different cases when cutting edge-rooted Eulerian embedded graphs along cut-lines;  recall that we write the concatenation of two orderings as $\pi \circ \pi'$.

\begin{figure}
    \centering
    \includegraphics[width=0.5\linewidth]{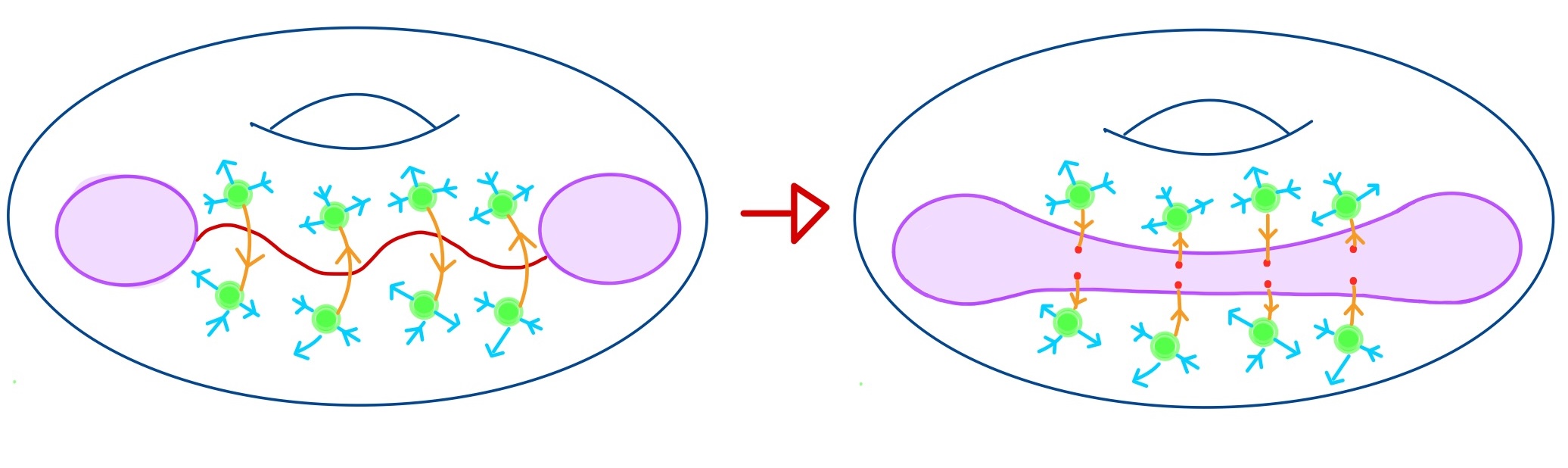}
    \caption{A schematic illustration of cutting a surface and the respective Eulerian embedding along a cuff connecting cut-line }
    \label{fig:cuff_conn_cut}
\end{figure}

\begin{definition}[Cutting a graph along non-separating cut-lines] \label{def:cutting_nonsep_lines}
    Let $(\Gamma,\nu,\omega)$ be a nice Eulerian embedding of an edge-rooted Eulerian digraph $(G,\pi(E^*))$ in some surface $\Sigma$ where $G=(V,E,\operatorname{inc})$. Let $F\subset \Sigma$ be a non-separating cut-line in $\Gamma$ with respective simple curve $\gamma: [0,1] \to F$. Let $\pi_\gamma(F) = (e_1,\ldots,e_k)$. Then by \cref{def:left_and_right} $F$ admits a left-to-right and right-to-left choice for each edge in $\rho(\Gamma)$.

    Let $e_i^+,e_i^-$ and $x_i^+,x_i^-$ be new elements for every $1 \leq i \leq k$, i.e., they are not in $V \cup E$. We define $G'=(V',E',\operatorname{inc}')$ as follows.
    \begin{align*}
        V' & \coloneqq V \cup \{x_i^+,x_i^-\mid 1 \leq i \leq k\},\\
        E' & \coloneqq E\setminus\rho(F) \cup \{e_i^+, e_i^- \mid 1 \leq i \leq k\},\\
        (e,v) \in \operatorname{inc}' &:\iff \begin{cases}
            (e,v) \in \operatorname{inc},& e\notin \rho(F),\\
            e=e_i^+, & (e_i,v) \in \operatorname{inc} , 1\leq i \leq k,\\ 
            e=e_i^-, v=x_i^- &, 1\leq i \leq k,\\ 
        \end{cases}\\
        (v,e) \in \operatorname{inc}' &:\iff \begin{cases}
             (v,e) \in \operatorname{inc},& e\notin \rho(F),\\
            e=e_i^-, & (v,e_i) \in \operatorname{inc} , 1\leq i \leq k,\\ 
            e=e_i^+, v=x_i^+ &, 1\leq i \leq k.\\ 
        \end{cases}\\
    \end{align*}

Let $E'_* \coloneqq E^* \cup \{e_i^+,e_i^- \mid 1 \leq i \leq k\}$. Finally set $\pi(E_*') \coloneqq \pi(E^*)\circ(e_1^+,e_1^-,\ldots,e_k^+,e_k^-)$. We define $\cut(G,\pi_\gamma(F)) \coloneqq (G',\pi(E_*'))$. 
\end{definition} %
\begin{remark}
Since $F$ is a cut-line, it is in particular $\Gamma$-tracing whence we do not ``cut'' through vertices and we do not ``touch'' edges, and by the left-to-right choice given by \cref{def:left_and_right} respecting the incidences of the the edges the above is a well defined graph, i.e., every edge has exactly one head and one tail (see \cref{fig:cuff_conn_cut} for an example).

Further, note that the definition formally depends on an embedding, but the embedding will always be clear when we are defining the graphs arising from cutting along cut-lines.
Intuitively this corresponds to the idea of cutting open along $F$ and splitting each edge on $\rho(F)$ into two edges, where $e_i^+$ become in-edges and $e_i^-$ become out-edges in the respective drawing.
\end{remark}

It is slightly tedious but easy to verify that we get Eulerian embeddings after cutting along cut-lines.

\begin{lemma}\label{lem:cutting_nonsep_embed}
    Let $\Sigma$ be a surface and $k \in 2\N$. Let $(G,\pi(E^*)) \in \mathbf{G^*}(\Sigma,k)$ with a nice embedding $(\Gamma,\nu,\omega)$. Let $F$ be a non-separating cut-line in $\Gamma$ with $\delta(F) = \ell \in \N$ and let $\Sigma'$ be obtained from $\Sigma$ by cutting along $F$. Then $\cut(G,\pi(E^*)) \in \mathbf{G^*}(\Sigma',k+2\ell)$ with a nice embedding $(\Gamma',\nu',\omega')$. Further it holds that $\Gamma'\subset \Gamma$ and $\nu'$ agrees with $\nu$ on $G \cap G'$.
\end{lemma}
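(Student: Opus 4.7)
The proof will proceed in three main steps. First, I would verify that $(G', \pi(E_*'))$ is indeed an edge-rooted Eulerian digraph with defect $k+2\ell$. Each new vertex $x_i^+$ is, by the construction in \cref{def:cutting_nonsep_lines}, incident to exactly one edge, namely $e_i^+$ with $(x_i^+, e_i^+) \in \operatorname{inc}'$, giving in-degree $0$ and out-degree $1$; symmetrically, $x_i^-$ has in-degree $1$ and out-degree $0$. For any $v \in V \setminus (V^+ \cup V^-)$ and any $e_i \in \rho(F)$ incident to $v$, the split replaces the incidence of $e_i$ with $v$ by an incidence of $e_i^+$ (if $v = \head(e_i)$) or $e_i^-$ (if $v = \tail(e_i)$) of the same in/out type, so both the degree and the in/out-balance at every preserved vertex are unchanged. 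The original degree-one vertices in $V^+ \cup V^-$ remain of degree one since $F$ is $\Gamma$-tracing and therefore avoids $\nu(E^*)$, which is drawn on $\bd(\Sigma)$. Counting root edges, $|E_*'| = |E^*| + 2\ell = k + 2\ell$, as required.

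Second, I would construct the Eulerian embedding $(\Gamma', \nu', \omega')$ in $\Sigma'$. Using \cref{obs:good-left-right-choice}, I pick a tubular neighbourhood $\Delta$ of $F$ with left and right sides $\Delta_1, \Delta_2$, such that each $e_i \in \rho(F)$ has $\head(e_i)$ and $\tail(e_i)$ in the two different sides. Outside $\Delta$ I set $\nu' \coloneqq \nu$. Inside $\Delta$, each arc $\nu(e_i)$ is split by $F$ into two sub-arcs which become the embeddings of $e_i^+$ and $e_i^-$; their new degree-one endpoints $x_i^+$ and $x_i^-$ are placed in the interiors of the discs of $\hat{\Sigma}'$ capping the new cuff(s) of $\Sigma'$ arising from $F$ on the two sides. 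Finally, $\omega'$ extends $\omega$ to the new cuff(s) by assigning the appropriate subset of $\{x_i^{\pm}\}_i$. By construction $\Gamma' \subset \Gamma$ and $\nu'$ agrees with $\nu$ on $G \cap G'$.

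Third, I would verify that $(\Gamma', \nu', \omega')$ is nice. The Eulerian-embedded condition at each $v \in V$ is inherited because the cyclic order of edges around $v$ is unchanged, with each $e_i$ replaced by its oriented counterpart $e_i^{\pm}$. The $2$-cell property is preserved because each face $f$ of $(\Gamma, \nu, \omega)$ is a disc, $F \cap f$ consists of finitely many disjoint simple arcs between consecutive elements of $\nu(\rho(F)) \cap \bar{f}$, and cutting $f$ along these arcs yields sub-discs. These sub-discs, together with the faces of $\Gamma$ disjoint from $F$, form the faces of $(\Gamma', \nu', \omega')$. That each new cuff hosts equally many in- and out-edges of $E_*'$ (and that they alternate) follows from $F$ being alternating, which holds by \cref{obs:edge-rooted_embedding_is_nice}. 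For the non-existence of a cuff-connecting $\Gamma'$-disjoint curve $\gamma'$ in $\Sigma'$, I would argue that any such $\gamma'$ lies in the union of faces of $\Gamma'$ and, viewed inside $\Sigma$ via the natural inclusion $\Sigma' \subset \Sigma$, becomes a $\Gamma$-disjoint curve that either joins two distinct cuffs of $\Sigma$---contradicting niceness of $(\Gamma, \nu, \omega)$---or constitutes a non-trivial loop inside a disc-face of $\Gamma$, contradicting $2$-cellness.

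The hardest part will be the last verification, specifically ruling out cuff-connecting $\Gamma'$-disjoint curves, since it requires a careful topological case analysis based on whether $F$ is a closed non-separating curve, a cuff-connecting curve, or a non-closed genus-reducing curve with endpoints on a single cuff, together with an understanding of how the new cuff(s) of $\Sigma'$ are formed from $F$ and pieces of old cuffs. Every other verification is a routine check against the definitions.
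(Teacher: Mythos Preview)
Your proposal is correct and follows essentially the same approach as the paper's proof: both construct $(\Gamma',\nu',\omega')$ by keeping $\nu$ away from $F$, placing the split edges $e_i^\pm$ on the two sides of a left/right neighbourhood of $F$, and then checking the axioms of \cref{def:Euler_embedding_upto_root-edges}. Your treatment is in fact more thorough than the paper's, which handles only the cuff-connecting case explicitly and dispatches the niceness of $(\Gamma',\nu',\omega')$ with the single sentence ``Clearly the embeddings remain nice concluding the proof''; your careful case analysis for ruling out $\Gamma'$-disjoint cuff-connecting curves fills in exactly what the paper leaves implicit.
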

\begin{proof}
      Since $F$ is a non-separating cut-line, $F$ is alternating by \cref{obs:edge-rooted_embedding_is_nice}. By \cref{obs:types_of_cut-line_cuts} $F$ is either (cuff-based) genus reducing or cuff connecting. We discuss the case where $F$ is cuff connecting for the other cases are analogous. Let $\gamma:[0,1]\to F$ be the respective simple curve for $F$ and fix $\pi_\gamma(F) = (e_1,\ldots,e_\ell)$. Let $(G',\pi(E_*')) \coloneqq \cut((G,\pi(E^*), \pi_\gamma(F))$; note that $\Abs{E_*'} = k+2\ell$ by \cref{def:cutting_nonsep_lines}.

     Assume that $F$ is cuff connecting, let $\zeta_1,\zeta_2 \in c(\Sigma)$ be the respective cuffs. By cutting $\Sigma$ along $F$ we obtain a connected surface $\Sigma' \subset \Sigma$ (by identifying the boundary of the new cuff respectively); let $\gamma(i) \in \zeta_i$ with $\gamma(i) \cap \Gamma = \emptyset$ for $i\in \{0,1\}$ and denote by $\zeta$ the cuff obtained by merging $\zeta_1$ and $\zeta_2$. That is, let $\zeta_i' = \zeta_i \setminus \gamma(i)$ for $i\in \{0,1\}$, then $\zeta_i'$ is a topologically open segment for $i=1,2$. Now $\zeta$ is obtained by slightly enlarging the cut $F$ resulting in two segments $\ell_1,\ell_2$ ``parallel'' to $F$ (using the left-right-definition) such that $\ell_1,\ell_2$ have no point in common and $\zeta= \zeta_1' \cup \ell_1 \cup \zeta_2' \cup \ell_2$ by identifying the ends of $\ell_i$ with one end of $\zeta_1'$ and $\zeta_2'$ respectively in the obvious way. More precisely let $\Delta \subset \Sigma$ be a choice of left and right for $F$ verifying that $F$ is alternating. In particular $F$ has both ends on $\Sigma$ and is otherwise disjoint from it and such that $\nu^{-1}(\Delta) \cap (E(G)\cup V(G)) = \rho(F)$. Let $\Delta_1,\Delta_2$ be the two components of $\Delta \setminus F$ marking left and right of $F$, let $\ell_i \coloneqq \bd(\Delta_i)$ for $i=1,2$. Further, there are  natural maps $\nu_i:\rho(F) \to \ell_i$ where $\nu_i(e_j) \in \ell_i$ such that $\nu_i(e_j) = \nu(e_j) \in F$ with the natural bijection using the homotopy between $F$ and $\ell_1$ in $\Delta_1$ for $1 \leq j \leq \ell$ and $i=1,2$. 
     
     We define an embedding $(\Gamma',\nu',\omega')$ for $(G',\pi(E_*'))$ as follows. Let $\omega'(\zeta) \coloneqq \omega(\zeta_1) \cup \omega(\zeta_2) \cup \{x_i^*, x_i^- \mid 1 \leq i \leq \ell\}$ and for the remaining cuffs of $\Sigma')$, i.e., $c(\Sigma') \cap c(\sigma)$, defined as $\omega$ for $\Sigma$. Let $\nu'$ agree with $\nu$ on $V(G)$---note that we do not embed $\{x_i^*, x_i^- \mid 1 \leq i \leq \ell\}$ on $\Sigma'$---and on $E(G) \setminus \rho(F)$ which can be achieved for $F \cap \Gamma = \nu(\rho(F))$. Let $e_i \in \rho(F)$ be left-to-right, then we define $\nu'(e_i^-) \coloneqq \nu_1(e_i)$ and $\nu'(e_i°-) = \nu_2(e_i)$ and vice versa if $e_i$ is right-to-left. 
         
     Since $F$ is alternating by \cref{obs:edge-rooted_embedding_is_nice} we derive that the newly defined cut-lines $\ell_1,\ell_2$ are alternating, since we draw in-edges and out-edges alternately on $\ell_i$ for $i=1,2$. 

     It is easily verified that every vertex of $V(G)$ is still Eulerian embedded in $\Sigma'$, for we did not alter the drawing for internal vertices locally, and we did not change their incidences in the \cref{def:cutting_nonsep_lines} up to a renaming of the edges in $\rho(F)$. Thus $(i)-(iv)$ of \cref{def:Euler_embedding_upto_root-edges} are satisfied, concluding the proof using \cref{obs:edge-rooted_embedding_is_nice}. See \cref{fig:cuff_conn_cut} for a schematic illustration.
         
\smallskip

    Finally note that if $F$ is a free or cuff-based genus reducing curve, the proof is the analogous. In the first case we create a new cuff $\zeta^*$, in the second case we enlarge a cuff $\zeta$. The embedding is obtained by duplicating all of the edges in $\rho(F)$ keeping the incidences of the edges on the respective left or right side of $F$ (following the \cref{def:left_and_right}) where incidences of the form $(e,v)$ are covered by new in-edges $(e_i^+,v)$ and incidences of the form $(v,e)$ are covered by new out-edges $(v,e_i^-)$. Again, note that locally around vertices drawn on $\Sigma$ we didn't alter the rotation of the incidences and thus the embedding remains Eulerian.

        Clearly the embeddings remain nice concluding the proof.
\end{proof}

Finally we have the following ``knitting Lemma" which guarantees us that we can lift immersions obtained after cutting back to the original surface. The lemma is in the spirit of and proved fairly similarly to \cref{thm:knitting_knitwork_immersion}.

\begin{lemma}[Cut-And-Knit Part 1.]\label{lem:knitting_nonsep_cutlines}
   Let $\Sigma$ be a surface not the sphere, disc or cylinder with $t \in \N$ cuffs and let $k \in 2\N$. Let $(G_i,\pi(E_i^*)) \in \mathbf{G}(\Sigma,k)$ be edge-rooted Eulerian digraphs with respective nice Eulerian embeddings $(\Gamma_i,\nu_i,\omega_i)$ for $i =1,2$.  For $i=1,2$ let $F_i \subset \Sigma$ be non-separating cutlines with alternating orderings $\pi(F_i)$ of $\rho(F_i)$ such that $\Sigma_1 \cong \Sigma_2$ for the surfaces resulting from cutting $\Sigma$ along $F_1,F_2$ respectively. 
   Let $(G_i',\pi(E_*^i) \coloneqq \cut((G_i,\pi(E_i^*)),\pi(F_i))$ $i=1,2$ respectively.

   If there exists a strong immersion $\gamma':(G_1',\pi(E_*^1)) \hookrightarrow (G_2',\pi(E_*^2))$, then there exists a strong immersion $\gamma:(G_1,\pi(E_1^*)) \hookrightarrow (G_2,\pi(E_2^*))$.
\end{lemma}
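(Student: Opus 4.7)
The plan is to invert the cutting operation: given $\gamma'\colon(G_1',\pi(E_*^1))\hookrightarrow(G_2',\pi(E_*^2))$, I will construct $\gamma\colon(G_1,\pi(E_1^*))\hookrightarrow(G_2,\pi(E_2^*))$ by ``re-gluing'' the two half-edges at each cut site. First, since $\gamma'$ preserves the defect, the two cut-lines must have equal length, say $\delta(F_i)=\ell$, and by \cref{obs:edge-rooted_embedding_is_nice} $\ell\in 2\N$. Write $\pi(F_i)=(e_{i,1},\ldots,e_{i,\ell})$, so that the extended root-orderings in $\cut$ are $\pi(E_*^i)=\pi(E_i^*)\circ(e_{i,1}^+,e_{i,1}^-,\ldots,e_{i,\ell}^+,e_{i,\ell}^-)$ and the root-respecting bijection induced by $\gamma'$ sends $e_{1,j}^{\pm}\mapsto e_{2,j}^{\pm}$. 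Note also $\rho(F_1)\cap E_1^*=\emptyset$, since $F_1$ is $\Gamma$-tracing in $\Sigma^\circ$ while $E_1^*$ sits on the cuffs.

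Next, I will extract three structural facts from $\gamma'$ being a strong edge-rooted immersion:
\begin{enumerate}
\item[(a)] For each $v\in V(G_1)\subseteq V(G_1')$, $\gamma'(v)\in V(G_2)$, because $\deg_{G_1'}(v)\geq 2$ forces $\gamma'(v)$ to have degree $\geq 2$ in $G_2'$, ruling out the degree-one vertices $x_{2,j}^{\pm}$.
\item[(b)] For each $e\in E(G_1)\setminus\rho(F_1)$, the path $\gamma'(e)$ uses no new root edge $e_{2,j}^{\pm}$ (by root-respecting), and hence lies entirely in $G_2$ (since $x_{2,j}^\pm$ are degree one and can only appear as endpoints).
\item[(c)] For each $j$, since $x_{1,j}^-$ is a sink in $G_1'$ and $x_{2,j}^-$ is a sink in $G_2'$, and since $\gamma'(e_{1,j}^-)$ contains $e_{2,j}^-$, the edge $e_{2,j}^-$ must be the last edge of $\gamma'(e_{1,j}^-)$ and $\gamma'(x_{1,j}^-)=x_{2,j}^-$. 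Symmetrically, $e_{2,j}^+$ is the first edge of $\gamma'(e_{1,j}^+)$ and $\gamma'(x_{1,j}^+)=x_{2,j}^+$.
\end{enumerate}

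Now define $\gamma$ by: $\gamma(v):=\gamma'(v)$ for $v\in V(G_1)$; $\gamma(e):=\gamma'(e)$ for $e\in E(G_1)\setminus\rho(F_1)$; and for $e_{1,j}\in\rho(F_1)$ let $P_j^-$ be $\gamma'(e_{1,j}^-)$ with its terminal edge $e_{2,j}^-$ removed and $P_j^+$ be $\gamma'(e_{1,j}^+)$ with its initial edge $e_{2,j}^+$ removed, then set $\gamma(e_{1,j}):=P_j^-\circ e_{2,j}\circ P_j^+$. The concatenation makes sense in $G_2$ because $P_j^-$ ends at $\tail(e_{2,j}^-)=\tail(e_{2,j})\in V(G_2)$ and $P_j^+$ begins at $\head(e_{2,j}^+)=\head(e_{2,j})\in V(G_2)$, and by (b) the other parts already live in $G_2$.

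Finally I will verify the four conditions of strong rooted immersion. Injectivity on vertices and root-respecting are inherited from $\gamma'$, the former via (a) and the latter because for $e\in E_1^*$ the image $\gamma(e)=\gamma'(e)$ is unchanged. Edge-disjointness holds because the edges of $\gamma(e),\gamma(e')$ for distinct $e,e'\in E(G_1)\setminus\rho(F_1)$ come from edge-disjoint $\gamma'$-paths; the bridge edges $e_{2,j}$ are distinct for distinct $j$ and are not touched by any $\gamma'(e)$ with $e\in E(G_1)\setminus\rho(F_1)$ (by (b)). The strong property is the only delicate check: an internal vertex of $\gamma(e_{1,j})$ is either an internal vertex of $P_j^\pm$ (hence of $\gamma'(e_{1,j}^\pm)$, hence outside $\gamma'(V(G_1'))\supseteq\gamma(V(G_1))$ by strongness of $\gamma'$), or one of $\tail(e_{2,j}),\head(e_{2,j})$. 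For these junction vertices, if the corresponding $\gamma'$-path has length $\geq 2$ they are internal vertices of a $\gamma'$-path and the same argument applies; if the path has length exactly one, then the junction vertex coincides with $\gamma(\tail(e_{1,j}))$ or $\gamma(\head(e_{1,j}))$, which is an \emph{endpoint} of $\gamma(e_{1,j})$, not an internal one.

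The only genuinely subtle point is this last length-one boundary case in the strong-property check; everything else is straightforward bookkeeping once the bijection $e_{1,j}^\pm\leftrightarrow e_{2,j}^\pm$ and facts (a)--(c) are in hand. The hypothesis that $\Sigma$ is not the sphere, disc, or cylinder and that $\Sigma_1\cong\Sigma_2$ plays no active role in the lifting argument itself (it ensures non-separating cut-lines exist and that the cut embeddings are compatible), so the proof is essentially a careful combinatorial unwinding of the cutting construction of \cref{def:cutting_nonsep_lines}.
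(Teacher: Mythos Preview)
Your proof is correct and follows essentially the same route as the paper's: match up $\delta(F_1)=\delta(F_2)$ from the equality of defects, observe that $\gamma'(e_{1,j}^{\pm})$ must begin/end in $e_{2,j}^{\pm}$, define $\gamma$ on $V(G_1)$ and $E(G_1)\setminus\rho(F_1)$ by restricting $\gamma'$, and glue the two half-paths across $e_{2,j}$. The paper compresses your facts (a)--(c) and the strong-property check into ``one easily verifies'', whereas you spell them out, including the length-one boundary case; so your argument is strictly more detailed but not different in substance. One minor point: your appeal to \cref{obs:edge-rooted_embedding_is_nice} for $\ell\in 2\N$ is not quite right (that lemma only gives evenness for $O$-traces, and a cuff-connecting cut-line need not be one), but this is harmless since the defect after cutting is $k+2\ell$ regardless of the parity of $\ell$.
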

\begin{proof}
    Let $\pi(F_1) = (e_1\ldots,e_p)$ and $\pi(F_2) = (f_1,\ldots,f_q)$ for $p,q \in \N$. Let $F\subset \Sigma$ be a non-separating cut-line in $\Gamma$ with respective simple curve $\gamma: [0,1] \to F$. By \cref{def:cutting_nonsep_lines} we derive that $\pi(E_*^1) = \pi(E_1^*)\circ (e_1^+,e_1',\ldots,e_p^+,e_p^-)$ and $\pi(E_*^2) = \pi(E_2^*)\circ (f_1^+,f_1',\ldots,f_q^+,f_q^-)$. 
    By \cref{def:edge-rooted_immersion}, $\gamma'$ respects roots, and thus, since $\Abs{E_1^*} = \Abs{E_2^*} = k$ we derive that $p = q$ and $f_i^+ \in \gamma'(e_i^+)$ as well as $f_i^- \in \gamma'(e_i^-)$ for every $1 \leq i \leq p$. In particular the path $\gamma'(e_i^+)$ starts in $f_i^+$ and the path $\gamma'(e_i^-)$ ends in $f_i^-$ sice the edges are in- and out-edges respectively (note also that one of their endpoints has degree $1$ by construction).

Defining $\gamma$ on $V(G)$ via $\gamma(v) \coloneqq \gamma(v')$ and defining $\gamma(e) \coloneqq \gamma'(e)$ for every $e \notin \rho(F_1)$ and finally letting $\gamma(e_i) \coloneqq \gamma'(e_i^-) \circ \gamma'(e_i^+)$ by identifying $f_i^-$ and $f_i^+$ to $f_i$ for every $1 \leq i \leq k$ does the trick. One easily verifies that $\gamma$ is a strong immersion by construction and it is clearly rooted for $\gamma(e)=\gamma'(e)$ for every $e \in E^*$ where $\gamma'$ is a rooted strong immersion.
\end{proof}

We continue with a second Cutting and Knitting Lemma for separating cut-lines (and thus for cut-cycles) in the same spirit to cover the remaining types of curves in \cref{def:cutline_types}. (Note that this case is conceptually very similar to stitching rooted graphs along rooted cuts as discussed in \cref{subsec:knitworks}). The construction and respective lemmas are all fairly similar. Fortunately, for separating curves there is no need to keep track of left and right any more since by the \cref{def:left_and_right} of cut-lines, every edge that is part of $\rho(F)$ has its incidence split into both resulting surfaces.   

\begin{figure}
    \centering
    \includegraphics[width=0.4\linewidth]{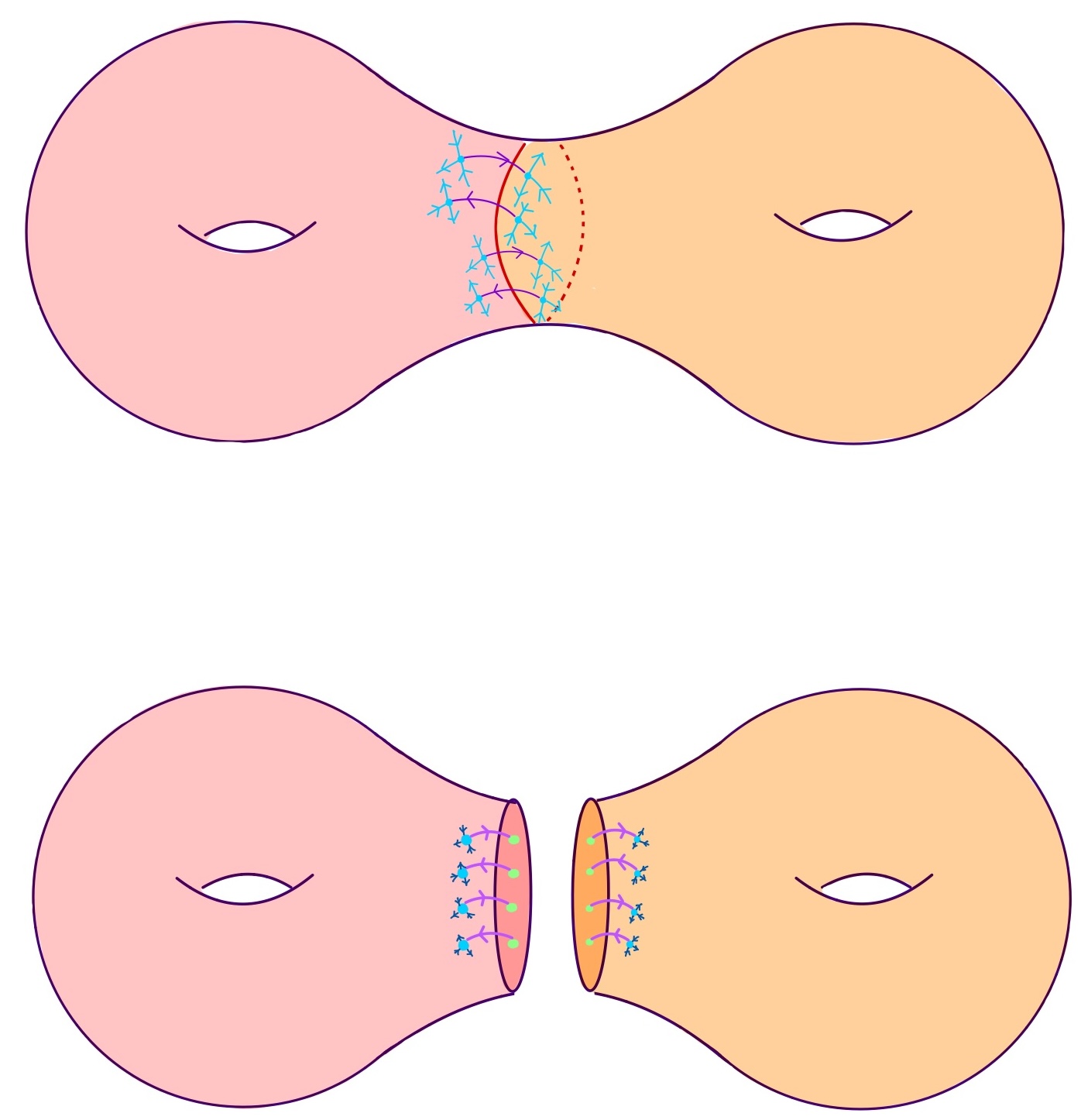}
    \caption{A schematic illustration of the surface and respective Eulerian embedding arising from cutting along a free separating (genus reducing) cut-line.}
    \label{fig:double-torus-cut}
\end{figure}
\begin{definition}[Cutting a graph along separating cut-lines] \label{def:cutting_sep_lines}
    Let $(\Gamma,\nu,\omega)$ be a nice embedding of an edge-rooted Eulerian digraph $(G,\pi(E^*))$ in some surface $\Sigma$ where $G=(V,E,\operatorname{inc})$. Let $F\subset \Sigma$ be a separating cut-line in $\Gamma$ with respective simple curve $\gamma: [0,1] \to F$. Let $\pi_\gamma(F) = (e_1,\ldots,e_k)$.

    Let $\Sigma_1,\Sigma_2$ be the two closed surfaces obtained from cutting $\Sigma$ along $F$. We set $X_i(F) \coloneqq \nu^{-1}(\Sigma_i) \cap V(G)$ and $E_i \coloneqq  \nu^{-1}(\Sigma_i) \cap E(G)$; note that $\rho(F) \subset E_i$ for $i=1,2$ and $E_1 \cap E_2 = \rho(F)$. Let $E_i^* \coloneqq E^* \cap E_i$ for $i=1,2$.

    Let $x_i^+,x_i^-$ be new elements for every $1 \leq i \leq k$, i.e., they are not in $V \cup E$.  We define $G_i=(V_i,E_i,\operatorname{inc}_i)$ such that $X_i(F) \subset V_i$ and, if for $e \in E_i$ and $v\in V_i$ it holds $(e,v) \in \operatorname{inc}$, then $(e,v) \in \operatorname{inc}_i$ and if $(v,e) \in \operatorname{inc}$ then $(v,e) \in \operatorname{inc}_i$ for $i =1,2$. Further for $\{p,q\}=\{1,2\}$ and for every $1 \leq i \leq k$ if $\head(e_i) \in \Sigma_p$ we let $(e_i,x_i^-) \in \operatorname{inc}_q$, and if $\tail(e_i) \in \Sigma_p$ we let $(x_i^+,e_i) \in \operatorname{inc}_q$.

    Let $E_*^i \coloneqq E_i^* \cup \rho(F)$ for $i=1,2$. And finally set $\pi(E_*^i) \coloneqq \pi(E_i^*)\circ (e_1, \ldots,e_k)$ for $i=1,2$. 
    
    We define $\cut(G,\pi(E^*)),\pi_\gamma(F)) \coloneqq \big((G_1,\pi(E_*^1)), (G_2,\pi(E_2^*))\big)$.
\end{definition}
\begin{remark}
    See \cref{fig:double-torus-cut} for a schematic illustration of cutting along a separating cut-line.
\end{remark}

Again we may verify that the resulting edge-rooted Eulerian digraphs from cutting along separating cut-lines are Eulerian embeddable in the respective surfaces by construction using \cref{obs:edge-rooted_embedding_is_nice}.

\begin{lemma}\label{lem:cutting_sep_embedding}
    Let $\Sigma$ be a surface and $k \in 2\N$. Let $(G,\pi(E^*)) \in \mathbf{G^*}(\Sigma,k)$ with a respective nice embedding $(\Gamma,\nu,\omega)$. Let $F$ be a separating cut-line in $\Gamma$ with $\delta(F) = \ell \in \N $ and let $\Sigma_1,\Sigma_2 \subset \Sigma$ be the surfaces obtained by cutting $\Sigma$ along $F$. Let $\cut(G,\pi(E^*)) = \big((G_1,\pi(E_*^1)), (G_2,\pi(E_2^*))\big)$ then 
    $(G_1,\pi(E_*^1))\in \mathbf{G^*}(\Sigma_1,\ell_1)$ and  $(G_2,\pi(E_*^2))\in \mathbf{G^*}(\Sigma_2,\ell_2)$ with nice embeddings $(\Gamma_i,\nu_i,\omega_i)$ for $i=1,2$, where $\ell_1,\ell_2 \in 2\N$ and $\ell_1 + \ell_2 = k+2\ell$. Further $\Gamma_i \subset \Gamma$ and $\nu_i$ agrees with $\nu$ on $G \cap G_i$ for $i=1,2$.
\end{lemma}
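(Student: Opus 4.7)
The plan is to proceed in direct analogy with \cref{lem:cutting_nonsep_embed}, the only conceptual change being that a separating cut-line yields two surfaces rather than one, so every piece of data is partitioned between them. First I would fix $\pi_\gamma(F) = (e_1, \ldots, e_\ell)$; since $(\Gamma, \nu, \omega)$ is nice and $F$ is a cut-line, \cref{obs:edge-rooted_embedding_is_nice} gives at once that $F$ is alternating and $\ell \in 2\N$. Depending on the topological type of $F$ (cuff-surrounding, cuff-shortening, cuff-grouping, free genus reducing, or cuff-based genus reducing, using \cref{obs:types_of_cut-line_cuts}), cutting $\Sigma$ along $F$ produces $\Sigma_1, \Sigma_2$ whose cuffs are either original cuffs of $\Sigma$ inherited unchanged, original cuffs joined with a segment of $F$ (when $F$ is cuff-based), or entirely new cuffs traced by $F$. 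All cases can be handled uniformly by taking $\Sigma_1, \Sigma_2$ to be the closures of the two components of $\Sigma \setminus F$ and using \cref{obs:good-left-right-choice} to realise an $\epsilon$-neighbourhood of $F$ on each side.

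Next I would define $(\Gamma_i, \nu_i, \omega_i)$ on $\Sigma_i$ by setting $\Gamma_i \coloneqq \Gamma \cap \Sigma_i$, letting $\nu_i$ agree with $\nu$ on $X_i(F) \cup (E_i \setminus \rho(F))$ and on the image of every $e_j \in \rho(F)$ (which already lies on a cuff of $\Sigma_i$, namely the one containing that segment of $F$), and extending $\omega_i$ so that each newly introduced degree-one vertex $x_j^+$ or $x_j^-$ of $G_i$ is assigned to the cuff of $\Sigma_i$ on which its unique incident edge is drawn. Verifying (i)--(iv) of \cref{def:Euler_embedding_upto_root-edges} is then routine: each vertex in $X_i(F)$ keeps the same local rotation as in $(\Gamma, \nu, \omega)$ and is hence still Eulerian-embedded; each non-root edge of $G_i$ is drawn in $\Sigma_i$; and by construction the only elements drawn on cuffs of $\Sigma_i$ are the root edges of $(G_i, \pi(E_*^i))$. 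Niceness of $(\Gamma_i, \nu_i, \omega_i)$ follows because any cuff-connecting curve disjoint from $\Gamma_i$ would lift through $\Sigma$ to a reducing or cuff-connecting curve disjoint from $\Gamma$ (as $F$ is $\Gamma$-tracing, so any path avoiding $\Gamma_i$ and crossing $F$ avoids $\Gamma$ in $\Sigma$), contradicting niceness of the original embedding. The inclusions $\Gamma_i \subset \Gamma$ and $\restr{\nu_i}{G \cap G_i} = \restr{\nu}{G \cap G_i}$ are immediate from the construction.

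Finally, for the parity and counting statements: since each $(\Gamma_i, \nu_i, \omega_i)$ is now itself a nice edge-rooted embedding, \cref{obs:edge-rooted_embedding_is_nice} applied to every cuff of $\Sigma_i$ gives that the number of roots drawn on that cuff is even, and summing over all cuffs of $\Sigma_i$ yields $\ell_i \in 2\N$. The identity $\ell_1 + \ell_2 = k + 2\ell$ then follows by a direct count: the partition $E^* = E_1^* \cup E_2^*$ contributes $|E_1^*| + |E_2^*| = k$, while $\rho(F)$, of size $\ell$, appears in $E_*^i$ for both $i=1,2$, contributing $2\ell$. I expect the main (albeit mild) obstacle to be the topological bookkeeping in the cuff-based case, where a cuff $\zeta$ of $\Sigma$ is split by the endpoints of $F$ into two arcs each of which is then completed by a copy of $F$ into a new cuff of $\Sigma_1$ respectively $\Sigma_2$; confirming that each of these new cuffs is a cut-line of $\Gamma_i$ so that the alternating-cuff argument applies is exactly the point at which one uses \cref{obs:good-left-right-choice} together with the fact that $F$ and the original cuff were both alternating.
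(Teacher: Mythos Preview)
Your proposal is correct and follows essentially the same line as the paper: define the embeddings by restriction, verify (i)--(iv) of \cref{def:Euler_embedding_upto_root-edges} locally (nothing changes around internal vertices), and then deduce the parity $\ell_i\in 2\N$. The paper obtains the parity slightly differently---it observes that $X_1(F)$ induces a cut in the Eulerian graph $\stitch(G;\pi(E^*))$, hence of even order---whereas you re-apply \cref{obs:edge-rooted_embedding_is_nice} to the new embeddings; both routes are fine. One small slip: you write that \cref{obs:edge-rooted_embedding_is_nice} gives $\ell\in 2\N$, but that lemma only guarantees this when $F$ is an $O$-trace; for a cuff-based separating cut-line (e.g.\ cuff-shortening) $\delta(F)$ may well be odd, and indeed the statement only asserts $\ell\in\N$. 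This does not affect anything downstream.
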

\begin{proof}
Since $F$ is a cut-line it is in particular $\Gamma$-tracing and $X_1(F) \cup X_2(F)$ is a partition of $V(G)$ by \cref{obs:edge-rooted_embedding_is_nice}. Note further that by the same \cref{obs:edge-rooted_embedding_is_nice} it suffices to prove $(i) - (iv)$ of \cref{def:Euler_embedding_upto_root-edges}, i.e., that there exist respective embeddings of $G_1,G_2$ where $\pi(E_*^1)$ and $\pi(E_2^*)$ are the only edges drawn on the cuffs of $\Sigma_1,\Sigma_2$ respectively where every vertex that is not of degree one is Eulerian embedded inside $\Sigma_1,\Sigma_2$. To see this, note that since all of the vertices $X_1(F)$ drawn inside $\Sigma_1$ induce a cut in $G$, and since $\stitch(G;\pi(E^*))$ is Eulerian, the cut has even order. Given the planar drawing of $G$ in $\Sigma$ each of these cut-edges must be drawn on $\bd(\Sigma_1)$, thus $\ell_1$ is even, and analogously for $\ell_2$. Thus $\ell_1,\ell_2 \in 2\N$, where $\ell_1 + \ell_2 = k+ 2\ell$ follows from the fact $\pi(E_*^i)$ are the only edges drawn on the cuffs for $i=1,2$ where $\Abs{E_*^1} + \Abs{E_*^2} = k + 2 \ell$ by definition.

    Thus we are left to prove the existence of the respective (nice) embeddings. This is fairly straightforward since $\Sigma_1,\Sigma_2$ partition the vertices of $\nu^{-1}(\Gamma)$ in a natural way without altering their local embedding. We discuss it briefly for $\Sigma_1$. By \cref{def:cutting_sep_lines} we may define $\nu_1:G \to \Sigma_1\subset \Sigma$ by letting $\restr{\nu_1}{X_1(F)} = \nu$ using the obvious injective inclusion map $\iota: \Sigma_1 \to \Sigma$ for $\Sigma_1 \subset \Sigma$. Similarly, since $E(G_1) = \nu_1^{-1}(\Sigma_1)\cap E(G)$ we may keep $\nu_1=\nu$ on $E(G_1)$. This completely defines $\nu_1$. We set $\omega_1$ to be defined on $c(\Sigma_1)$ by letting it agree with $\omega$ on $c(\Sigma_1) \cap c(\Sigma)$ and letting $c(\zeta) \coloneqq \rho(F) \cup \nu^{-1}(\ell_1)$. This completely defines the embedding and since we did not alter any of the incidences or rotations around vertices in $V(G_1)$ (note that we did not introduce any new edges), the embedding satisfies $(i) - (iv)$. This concludes the proof with analogous arguments for $\Sigma_2$. 
    
\end{proof}

\begin{figure}
\centering
\begin{subfigure}{0.4\textwidth}
\centering
    \includegraphics[width=0.8\linewidth]{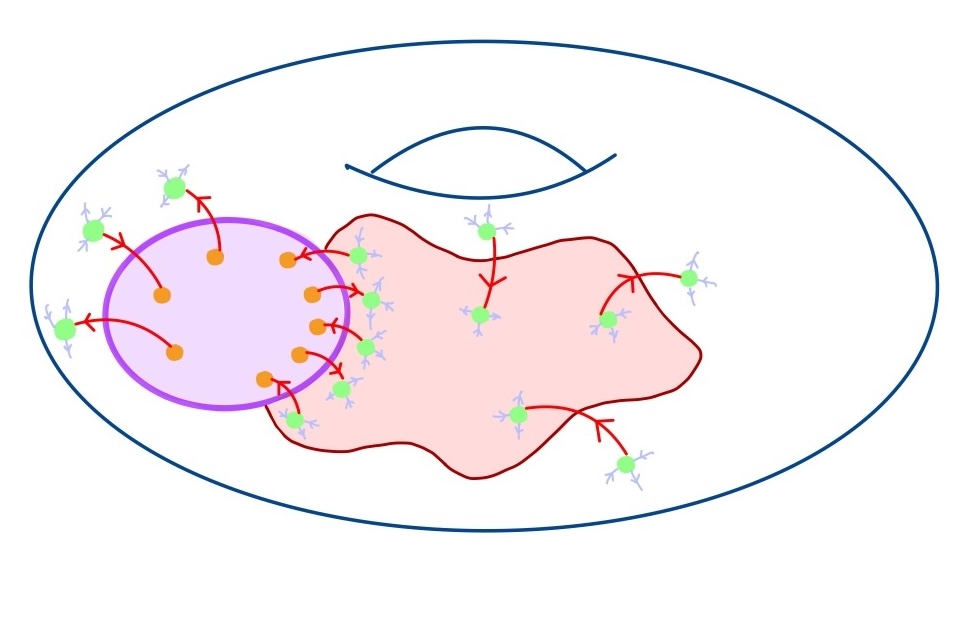}
    
    \
\end{subfigure}
\begin{subfigure}{0.4\textwidth}
\centering
    \includegraphics[width=0.9\linewidth]{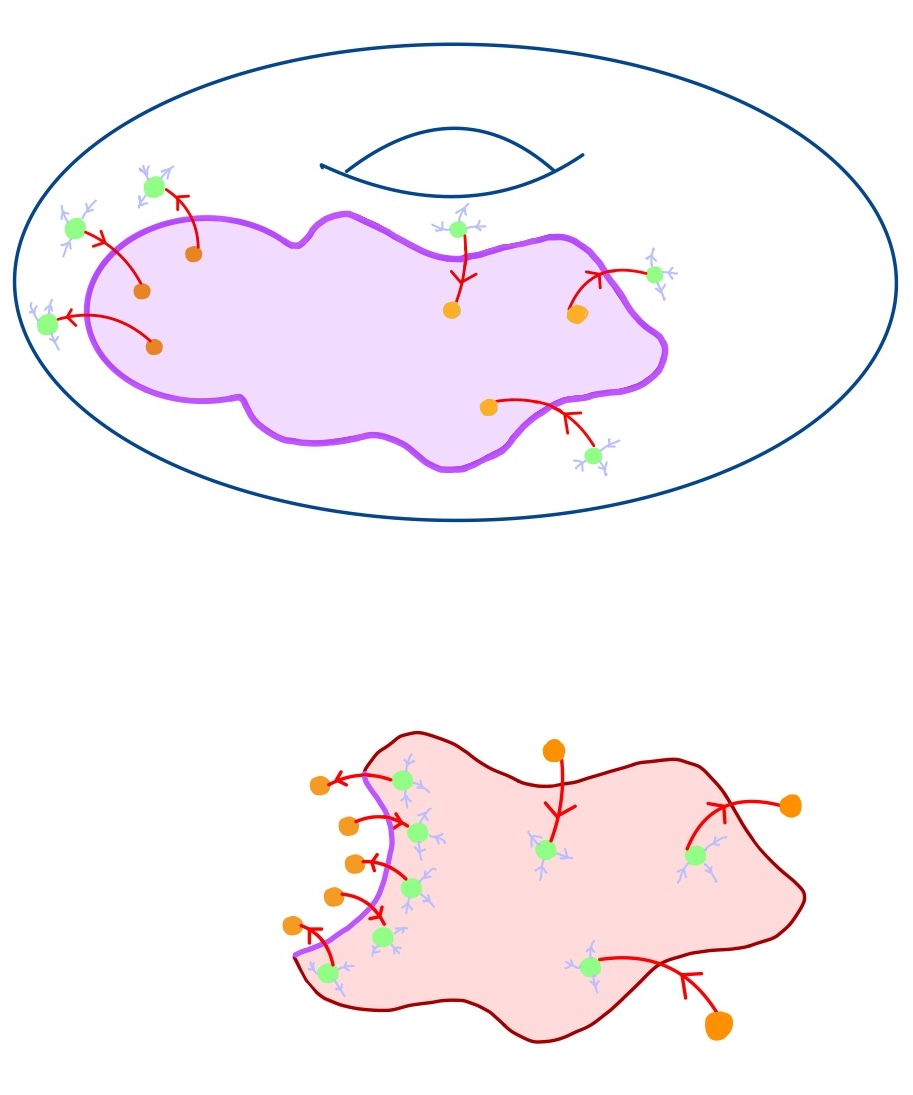}
    
\end{subfigure}
\caption{A schematic illustration of cutting a surface an d the respective Eulerian embedding along a cuff-base separating cut-line.}
\label{fig:cuff-based}
\end{figure}

We again prove a ``Knitting-Lemma'' allowing us to lift strong immersions.

\begin{lemma}[Cut-And-Knit Part 2.]\label{lem:knitting_separating_cutlines}
   Let $\Sigma$ be a surface not the sphere, disc or cylinder with $t \in \N$ cuffs and let $k \in 2\N$. Let $(G_i,\pi(E_i^*)) \in \mathbf{G}(\Sigma,k)$ be edge-rooted Eulerian digraphs with respective nice Eulerian embeddings $(\Gamma_i,\nu_i,\omega_i)$ for $i =1,2$.  For $i=1,2$ let $F_i \subset \Sigma$ be separating cutlines with alternating orderings $\pi(F_i)$ of $\rho(F_i)$. Let $\Sigma_1^i,\Sigma_2^i$ be the two surfaces resulting from $\Sigma$ by cutting along $F_i$ for $i=1,2$. Further assume that $\Sigma_1^i \cong \Sigma_2^i$ for $i=1,2$. Let $\big((G_1^i,\pi(E_1^i)), (G_2^i,\pi(E_2^i))\big) \coloneqq \cut((G_i,\pi(E_i^*)),\pi(F_i))$ for $i=1,2$ respectively. 
   
   If there exist strong immersions $\gamma_i:(G_1^i,\pi(E_1^i)) \hookrightarrow (G_2^i,\pi(E_2^i))$ for $i=1,2$, then there exists a strong immersion $\gamma:(G_1,\pi(E_1^*)) \hookrightarrow (G_2,\pi(E_2^*))$.

\end{lemma}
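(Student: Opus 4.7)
The plan is to mirror the proof of Cut-And-Knit Part~1 (\cref{lem:knitting_nonsep_cutlines}), now knitting two strong immersions along a separating cut. First I would interpret the hypothesis in the natural way: the cut operation on $G_i$ yields matched pairs of pieces, and the two given maps $\gamma_j$ strongly immerse piece~$j$ of $G_1$ into piece~$j$ of $G_2$ for $j=1,2$. Both pieces of $G_1$ carry the ordered cut edges $\pi(F_1)=(e_1,\dots,e_k)$ appended to the restricted original rooting, and likewise for $G_2$ with $\pi(F_2)=(f_1,\dots,f_k)$, so $\gamma_1$ and $\gamma_2$ induce the \emph{same} bijection $e_\ell\mapsto f_\ell$ on cut roots. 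Moreover, by \cref{obs:edge-rooted_embedding_is_nice} together with the alternating property of $\pi(F_1),\pi(F_2)$, the endpoints of $f_\ell$ sit on the sides of $F_2$ corresponding to those of $e_\ell$ on the sides of $F_1$, so the ``in-edge/out-edge'' classification of a cut root in each piece is preserved by the bijection.

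Next I would read off the local shape of $\gamma_j(e_\ell)$ at the cut. Because the fake endpoint of $e_\ell$ inside a piece is a degree-one source or sink, and a strong rooted immersion can only begin a path at a degree-one source (or terminate at a degree-one sink) via the designated root edge, $\gamma_j(e_\ell)$ must begin or end with $f_\ell$. Letting $p$ denote the side of $G_1$ containing $\tail(e_\ell)$ and $q$ the other, this gives
\[
\gamma_p(e_\ell)=(h_1^\ell,\dots,h_{n_\ell}^\ell,f_\ell) \text{ in piece }p \text{ of }G_2, \quad \gamma_q(e_\ell)=(f_\ell,g_1^\ell,\dots,g_{m_\ell}^\ell)\text{ in piece }q\text{ of }G_2,
\]
with all $h_i^\ell\in E(G_2[X_p(F_2)])$ and all $g_i^\ell\in E(G_2[X_q(F_2)])$.

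I would then construct $\gamma\colon (G_1,\pi(E_1^*))\hookrightarrow(G_2,\pi(E_2^*))$ by pasting: set $\gamma(v)\coloneqq\gamma_j(v)$ for $v\in X_j(F_1)$, $\gamma(e)\coloneqq\gamma_j(e)$ for $e\in E(G_1[X_j(F_1)])$, and for each $e_\ell\in\rho(F_1)$ take
\[
\gamma(e_\ell)\coloneqq(h_1^\ell,\dots,h_{n_\ell}^\ell,f_\ell,g_1^\ell,\dots,g_{m_\ell}^\ell),
\]
viewed as a directed walk in the reassembled $G_2$. By \cref{obs:concat_paths} this is a path in $G_2$ from $\gamma_p(\tail(e_\ell))$ to $\gamma_q(\head(e_\ell))$ traversing $f_\ell$ exactly once, because the two halves are edge-disjoint (their interior edges live in disjoint pieces of $G_2$) and meet only at $\head(f_\ell)$. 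Vertex injectivity follows from $V(G_1)=X_1(F_1)\sqcup X_2(F_1)$ together with injectivity of each $\gamma_j$ into the disjoint sets $X_1(F_2),X_2(F_2)$; root preservation is automatic since $\pi(E_1^*)$ splits cleanly across sides and each $\gamma_j$ respects its restricted rooting.

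The main obstacle is checking the strong immersion property at the concatenation vertex of each $\gamma(e_\ell)$. The internal vertices of $\gamma(e_\ell)$ fall into three classes: those internal to $\gamma_p(e_\ell)$, those internal to $\gamma_q(e_\ell)$, and the single extra vertex $\head(f_\ell)=\tail(g_1^\ell)$. The first two classes cannot coincide with any $\gamma(w)$ by the strength of $\gamma_p,\gamma_q$, noting that images under $\gamma_p$ and $\gamma_q$ land in the disjoint sets $X_p(F_2)$ and $X_q(F_2)$ so no cross-side collision can occur. For the concatenation vertex itself, $\head(f_\ell)$ is also an internal vertex of $\gamma_q(e_\ell)$, hence not in $\gamma_q(V(\text{piece }q))$ by the strength of $\gamma_q$, and it cannot lie in the image of $\gamma_p$ since $\head(f_\ell)\in X_q(F_2)$. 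Edge-disjointness of the family $\{\gamma(e)\mid e\in E(G_1)\}$ follows by the same side-partition argument, with each $f_\ell$ consumed only by $\gamma(e_\ell)$. The hard part is this bookkeeping around the concatenation points; everything else is essentially inherited from the two given immersions.
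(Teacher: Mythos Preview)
Your approach is correct and is the natural direct argument: glue the two piece-immersions along the shared cut edges by concatenating the images of each $e_\ell$ across $f_\ell$, then verify injectivity, edge-disjointness, strength, and root preservation using the side partition $X_1(F_2)\sqcup X_2(F_2)$ of $V(G_2)$. One small bookkeeping remark: your ``three classes'' description is slightly off. Every internal vertex of the concatenated path $(h_1,\dots,h_n,f_\ell,g_1,\dots,g_m)$ is already an internal vertex of one of the two piece paths (with the fake endpoints removed); in particular $\tail(f_\ell)$ is internal to $\gamma_p(e_\ell)$ and $\head(f_\ell)$ is internal to $\gamma_q(e_\ell)$ whenever the respective half has length at least two, and is the correct endpoint of $\gamma(e_\ell)$ otherwise. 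So there is no genuinely ``extra'' vertex to handle, and your subsequent sentence showing $\head(f_\ell)$ is covered by the strength of $\gamma_q$ is exactly the right observation---the same works symmetrically for $\tail(f_\ell)$ on the $p$ side.

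The paper takes a different route: rather than redoing this concatenation by hand, it packages everything into the $\Omega$-knitwork language of \cref{subsec:knitworks}. It forms knitworks $\HHH_j$ from the stitched graphs $\stitch(G_j;\pi(E^*_{\omega_j}))$, observes that the two pieces obtained by cutting along $F_j$ are precisely the down- and up-stitches of $\HHH_j$ at the rooted cut $\rho(F_j)$, and then invokes \cref{thm:knitting_knitwork_immersion} (together with \cref{lem:stitch-and-knit}) to knit the two given immersions into one. Your argument is essentially an inlining of the relevant part of the proof of \cref{thm:knitting_knitwork_immersion} specialised to this situation, in the same style as the paper's own proof of \cref{lem:knitting_nonsep_cutlines}. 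The paper's reduction avoids repeating that concatenation bookkeeping, at the cost of the overhead of translating into knitworks; your version is more elementary and self-contained. Either is perfectly acceptable here.
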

\begin{proof}
    This follows analogously to \cref{thm:knitting_knitwork_immersion}. To see this, let $E_{\omega_i}^* = (E_i^1,\ldots,E_i^t)$ and $W_{\omega_i}=(W_i^1,\ldots,W_i^t)$ be the respective partitions for $i=1,2$ as in \cref{def:Euler_embedding_upto_root-edges}. Recall \cref{def:stitch_edge-rooted} and let $H_j$ be the directed Eulerian digraph of $\stitch(G_j;\pi(E_{\omega_j}^*))$ and let $W_j \coloneqq \{w_1^j,\ldots,w_t^j\}$ for $j=1,2$. Let $\Omega$ be any well-quasi-order and define a reliable  well-linked $\Omega$-knitwork $\HHH_j=\big((H_j,\pi(W_j)),\mu_j,\m_j,\Phi_j)$ in the obvious way by choosing $\pi(W_j)$ arbitrary, setting $\dom(\mu_j) = W_j$ and letting $\mu_j(w_i) = \pi(E_j^i) \subseteq \pi(E_j^*)$, $\dom(\m_j) = \emptyset$ and fixing $x \in V(\Omega)$ and letting $\Phi_j(v)= x$ for every $v\in V(H_j)$ and $j=1,2$ and $1 \leq i \leq t$. One immediately verifies that $(G_1,\pi(E_1^*)) \hookrightarrow (G_2,\pi(E_2^*))$ if and only if $\HHH_1 \hookrightarrow \HHH_2$.
    
    Similarly, let $H_j^i$ denote the graph obtained from $G_j^i$ by stitching the respective partition induced by $E_j^i$ for $i=1,2$. That is, let $\omega_j^i$ be the respective maps of the induced embedding as given by \cref{lem:cutting_sep_embedding} for $j=1,2$ and $i=1,2$, and let $E_{\omega_j^i}=(E^j_{\ell^i_1},\ldots,E^j_{\ell^i_{p_i}},E_{i_*}^j)$ be the respective partitions for $E_j^i$, where $\{\ell_1^1,\ldots,\ell^1_{p_1}\},\{\ell_1^2,\ldots,\ell^2_{p_2}\}$ is a partition of $\{1,\ldots,t\}$, in particular $p_1 + p_2 = t$, and $E_{i_*}^j = \rho(F_j)$. Note that $H_j^1 \cap H_j^2 = \rho(F_j)$ by construction for $j=1,2$. Then $H_j^i$ is the graph of $\stitch(G_j^i;\pi(E_{\omega_j^i}))$. Let $W_j^i=(w^j_{\ell_1^i},\ldots,w^j_{\ell_{p_i}^i},w^j_{i_*})$ for $j=1,2$ and $i=1,2$.  Finally we define the reliable well-linked $\Omega$-knitworks $\HHH_j^i=\big((H_j,\pi(w^j_{i_*})),\mu_j^i,\m_j^i,\Phi_j^i)$ analogously to above by setting $\mu_j^i(w^j_{\ell_r^i}) = \pi(E^j_{\ell^i_r})\subseteq \pi(E_j^*)$ for $1 \leq r \leq p_i$ and $j=1,2$ as well as $i=1,2$, and $\dom(\m_j^i) = \emptyset$ and $\Phi_j^i(v)=x$ where defined.
    
    Then the claim follows by \cref{thm:knitting_knitwork_immersion}, noting that there are a rooted $k$-cuts $X_1,X_2$ in $\bar H_1,\\bar H_2$ respectively, such that $\rho(X_i) = \rho(F_i)$ and such that $\stitch(\HHH_j; \pi(X_j)) = \HHH_j^1$ and $\stitch(\HHH_j;\pi(\bar X_j)) = \HHH_j^2$ for $\pi(X_j) = \pi(F_j) = \pi(\bar X_j)$ and $j=1,2$. (See \cref{fig:cuff-based} and \cref{fig:double-torus-cut} for schematic illustrations of two cases).
\end{proof}

\subsection{Chopping up the Cylinder}
    Using the results of the previous section, the general idea is now as follows. Given a Eulerian embedding $(\Gamma,\nu,\omega)$ of an edge-rooted Eulerian digraph $(G,\pi(E^*))$ in a cylinder $\Sigma$ with cuffs $\zeta_1,\zeta_2$, we use  \cref{lem:boundary_linked_Menger_for_embeddings_in_cylinder} in order to inductively decompose $(G,\pi(E^*))$ along all cuff separating cut-cycles that have lower or equal order than the number of edges on the respective cuff. This results in a sequence $(F_1, \ldots,F_t)$ of cuff separating cut-cycles with $\zeta_1 \subset \Delta(F_i) \subseteq \Delta(F_j) \in \Sigma + \Delta_1$  for $i<j$ where $\Sigma + \Delta_1$ is the surface obtained by gluing the disc $\Delta_1$ to cap the cuff $\zeta_1$. Then every two consecutive cut-cycles induce a \emph{piece of $\Gamma$} that is again an edge-rooted Eulerian digraph Euler-embeddable in the cylinder. We prove that the pieces fall into different categories, where each category is well-quasi-ordered by strong immersion on its own. Finally we use arguments similar to  Higman's \cref{thm:higman} and in the proof of \cref{thm:wqo_bounded_carvingwidth_knitworks} to ``knit'' the strong immersions of the pieces back together. In essence, ``chopping'' is similar to ``carving'' (see \cref{def:carving}) where we will end up with a path of chopped pieces instead of a tree of carved pieces: Unfortunately the definition of carvings does not readily lift to paths, whence we introduce \emph{choppings}.

 Throughout this section let $\Sigma$ denote the cylinder with cuffs $\zeta_1,\zeta_2$ such that $\hat \Sigma = \Sigma_1 + \Delta_1 + \Delta_2$ where $\Delta_i$ are discs and $\hat\Sigma$ is obtained by gluing $\Delta_i$ to $\Sigma$ along the cuff $\zeta_i$ for $i=1,2$ as usual. Further, to simplify notation, given an edge-rooted Eulerian digraph $(G,\pi(E^*)) \in \mathbf{G^*}(\Sigma,k)$ with a respective Eulerian embedding $(\Gamma,\nu,\omega)$ and parirtion $E_\omega^*=(E_1,E_2)$ we will simply write $(G,\pi(E_1), \pi(E_2))$ where $(E_1,E_2) = E^*_\omega$ to simplify notation, where $\pi(E_1),\pi(E_2) \subset \pi(E^*)$ are the respective induced orderings. Since we always work with embeddings we will simply write $(G,\pi(E_1), \pi(E_2))$ for edge-rooted Eulerian digraphs to mean the obvious and write $(\Gamma,\nu)$ for the embedding where the definition of $\omega$ is clear from the respective partition $(E_1,E_2)$. Note that $\nu(E_i) \subset \zeta_i$ by definition and $\Abs{E_1},\Abs{E_2} \in 2\N$ by \cref{obs:edge-rooted_embedding_is_nice}; recall that $\zeta_i$ is an alternating cut-line by \cref{lem:cuffs_are_alternating}.
  Further we refine the definition of $\mathbf{G^*}(\Sigma,k)$ for the cylinder to match the new notation in the obvious way. 
  
 \begin{definition}
We define $\mathbf{G^*}(\Sigma,k_1,k_2)$ to be the class of edge-rooted Eulerian digraphs $(G,\pi(E_1),\pi(E_2))$ with Eulerian embeddings in the cylinder where $\Abs{E_i} = k_i \in 2\N$ for $i=1,2$.  
 \end{definition}

We have the following base case capturing a special case for ``low representativity''.
\begin{lemma}\label{lem:wqo_cylinder_no_small_cuts}
    Let $\Sigma$ be the cylinder with cuffs $\zeta_1,\zeta_2$. Let $k_1,k_2 \in 2\N$ and let $(G_i,\pi(E_1^i),\pi(E_2^i)) \in \mathbf{G}(\Sigma,k_1,k_2)$ with Eulerian $2$-cell embeddings $(\Gamma_i,\nu_i)$ for every $i \in \N$. Assume that $((G_i,\pi(E_1^i),\pi(E_2^i)))_{i \in \N}$ is a bad sequence with respect to strong immersion. 
    Then for every function $f:\N \to \N$ there is an infinite index set $I \subseteq \N$ such that $\Abs{E_1^i},\Abs{E_2^i} \geq 1$, and there is no cuff connecting cut-line $F_i$ in $\Gamma_i$ such that $\delta_{\Gamma_i}(F_i) < f(k_1 + k_2)$.
\end{lemma}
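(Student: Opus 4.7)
The plan is a proof by contradiction: if infinitely many $\Gamma_i$ admit a short cuff-connecting cut-line, then cutting along such a curve reduces each $G_i$ to an edge-rooted Eulerian digraph embedded in a disc with a uniformly bounded number of roots, and the disc case is already well-quasi-ordered. First I would dispose of the degenerate cases $k_1=0$ or $k_2=0$: capping the empty cuff with a disc embeds the sequence into the disc class $\mathbf{G^*}(\Delta,k_1+k_2)$, which is well-quasi-ordered by \cref{thm:wqo:bead-root_for_disc} together with \cref{cor:i_am_dying}, contradicting badness. So we may assume $k_1,k_2\ge 2$, and then $|E_1^i|,|E_2^i|\ge 1$ holds trivially for every $i$; it remains to achieve the second condition on the subsequence.

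Suppose towards a contradiction that no infinite $I\subseteq\N$ satisfies the second condition. Then for infinitely many $i$ there is a cuff-connecting cut-line $F_i\subset\Gamma_i$ with $\delta_{\Gamma_i}(F_i)<f(k_1+k_2)$. Pass to this infinite subsequence, and then by the pigeonhole principle refine once more so that $\delta_{\Gamma_i}(F_i)=\ell$ is a fixed integer for every $i$, for some $\ell<f(k_1+k_2)$. Note that $2$-cellness on the cylinder automatically yields niceness (any cuff-connecting curve disjoint from $\Gamma$ would force a single face to touch both cuffs, impossible for an open disc), so the tools of \cref{sec:edge-rooted} apply. By \cref{obs:types_of_cut-line_cuts} every cuff-connecting cut-line is non-separating in the cylinder, and the result of cutting along it is a single disc. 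Applying \cref{lem:cutting_nonsep_embed} to $F_i$ with its induced alternating order produces
\[
H_i \;\coloneqq\; \cut\big((G_i,\pi(E_1^i),\pi(E_2^i)),\pi(F_i)\big)\;\in\;\mathbf{G^*}(\Delta,\,k_1+k_2+2\ell),
\]
each carrying a nice Eulerian embedding in a disc, with induced root order $\pi(E_1^i)\circ\pi(E_2^i)\circ(e_1^+,e_1^-,\ldots,e_\ell^+,e_\ell^-)$.

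Since all $H_i$ lie in the common class $\mathbf{G^*}(\Delta,k_1+k_2+2\ell)$, and this class is well-quasi-ordered by strong edge-rooted immersion via \cref{thm:wqo:bead-root_for_disc} and \cref{cor:i_am_dying}, there exist $i<j$ with $H_i\hookrightarrow H_j$ as edge-rooted digraphs. Applying the Cut-and-Knit construction then lifts this to a strong immersion $(G_i,\pi(E_1^i),\pi(E_2^i))\hookrightarrow (G_j,\pi(E_1^j),\pi(E_2^j))$, contradicting badness. The main obstacle is that \cref{lem:knitting_nonsep_cutlines} is explicitly stated for surfaces other than the cylinder; however its proof is purely local in the half-edges $e_1^\pm,\dots,e_\ell^\pm$ created by the cut, and the root-respecting property of $H_i\hookrightarrow H_j$ forces the pairing of $e_r^+$ and $e_r^-$ to match on both sides, so that concatenating $\gamma'(e_r^-)\circ\gamma'(e_r^+)$ at the identified edge $e_r$ yields a strong immersion on the uncut graphs. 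This adaptation (together with the verification that the source and target cut-surfaces are isomorphic discs, which is automatic here) is the one step that must be checked carefully; all other inputs are direct consequences of results from earlier sections.
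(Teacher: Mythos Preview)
Your approach is essentially the same as the paper's: dispose of the empty-cuff case via the disc theorem, pass to a subsequence with $\delta(F_i)=\ell$ constant, cut open along $F_i$ to land in $\mathbf{G^*}(\Delta,k_1+k_2+2\ell)$, apply the disc WQO, and knit back. Your observation about \cref{lem:knitting_nonsep_cutlines} being stated only for surfaces other than the cylinder is apt --- the paper uses it here anyway, and as you say the proof is purely local in the split half-edges and works verbatim.

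There is one small error in your niceness justification. You claim that $2$-cellness on the cylinder automatically yields niceness, arguing that an open-disc face cannot touch both cuffs. This is false: take the cylinder $[0,1]\times S^1$ and embed a single arc $[0,1]\times\{p\}$; the unique face $(0,1)\times(S^1\setminus\{p\})$ is an open disc whose closure meets both cuffs, and a cuff-connecting curve disjoint from $\Gamma$ lives inside it. The paper handles this differently: if infinitely many $\Gamma_i$ admit a cuff-connecting curve of length $0$, cut along it (this is just slicing the cylinder open without touching the graph) to obtain a disc embedding with $k_1+k_2$ roots and apply \cref{thm:wqo:bead-root_for_disc} directly; otherwise pass to the cofinite subsequence where no such curve exists, and \emph{then} niceness holds by definition. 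With that adjustment your argument goes through.
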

\begin{proof}
Let $k \coloneqq k_1 + k_2$.
    Assume the contrary and let $f:\N \to \N$ be a respective function. First assume that there is an infinite index $I \subseteq \N$ such that $E_1^i = \emptyset$ (the case for $E_2^i$ is analogous). Then $(\Gamma_i,\nu_i)$ induces a respective embedding in the disc $\Sigma + \Delta_1$ for every $i \in I$. Thus, $(G_i,\pi(E_2^i)) \in \mathbf{G^*}(\Delta,k_2)$ for some disc $\Delta$ for every $i \in I$; a contradiction to \cref{thm:wqo:bead-root_for_disc} together with \cref{cor:i_am_dying}.

    Thus assume that there is an infinite index $I \subseteq \N$ such that for every $i \in I$ there is a cuff connecting cut-line $F_i$ such that $\ell_i \coloneqq \delta_{\Gamma_i}(F_i) < f(k)$. In particular note that $f(k) \neq 0$, since otherwise we can cut $\Sigma$ along the respective curve without altering the embedding, resulting in $\Sigma'\cong \Delta$ yielding an embedding in the disc, a contradiction to \cref{thm:wqo:bead-root_for_disc} once more. Thus, we may assume that, by switching to a respective subsequence, for every $i \in I$ there is no cuff connecting curve of length $0$. In particular since the embeddings are assumed to be $2$-cell, the embedding is nice. 
    
    Let $i \in I$ be fixed and let $G_i = (V_i,E_i,\operatorname{inc}_i)$.  Recall that $F_i$ is alternating (by \cref{obs:edge-rooted_embedding_is_nice}) and let $\pi(F_i) = (e_1,\ldots,e_{\ell_i})$ be a respective ordering of $\rho(F_i)$, where $e_i$ is left-to-right if and only if $e_{i+1}$ is right-to-left for $1 \leq i < k$. Note that cutting $\Sigma$ along $F_i$ results in a new surface $\Sigma' \cong \Delta$. Let $(G_i',\pi(\tilde E_1^i), \pi(\tilde E_2^i))$ be the edge-rooted Eulerian digraphs resulting from $\cut\Big(\big(G_i,\pi(E_1^i),\pi(E_2^i)\big), \pi(F_i)\Big)$ as in \cref{def:cutting_sep_lines}. %

We have the following.
\begin{claim}
    There is an infinite index set $I' \subseteq I$ such that $\Abs{\tilde E_p^i} = \Abs{\tilde E_p^j}=\ell$ for $p =1,2$ and every $i,j \in I'$.
    
\end{claim}
\begin{claimproof}
 Since $\delta(F_i) \leq f(k)$ for every $i \in I$, by the pigeonhole principle there is an infinite index $I' \subseteq I$ such that $\delta(F_i) = \delta(F_j)$ for $i, j \in I'$. Since $\Abs{\tilde E_1^i} + \Abs{\tilde E_2^i} = k + 2\delta(F_i)$ by \cref{lem:cutting_nonsep_embed} for every $i \in I$, $I'$ satisfies the claim.
\end{claimproof}

Let $\ell \coloneqq \delta(F_i)$ for $i \in I'$. By \cref{lem:cutting_sep_embedding}  $(G_j',\pi(\tilde E_1^j),\pi(\tilde E_2^j)) \in \mathbf{G}(\Sigma',k + 2\ell)$ for every $j \in I'$. By \cref{thm:wqo:bead-root_for_disc} there are $i,j \in I'$ with $i<j $ such that $\gamma: (G_i',\pi(\tilde E_1^i),\pi(\tilde E_2^i))  \hookrightarrow (G_j',\pi(\tilde E_1^j),\pi(\tilde E_2^j)) $ is a strong immersion. By \cref{lem:knitting_nonsep_cutlines} then $(G_i,\pi( E_1^i),\pi( E_2^i)) \hookrightarrow (G_j,\pi( E_1^j),\pi( E_2^j))$; contradiction to the choice of bad sequence.
\end{proof}

The remainder of the section is devoted to chopping up embeddings on the cylinder along cut-cycles into pieces---using the Menger-type \cref{lem:boundary_linked_Menger_for_embeddings_in_cylinder} for the cylinder---that either satisfy \cref{cor:cyl-high-rep} or \cref{lem:wqo_cylinder_no_small_cuts}.

\begin{definition}[Choppings and Pieces of $\Gamma$]
        Let $(G,\pi(E_1),\pi(E_2))$ be an edge-rooted Eulerian digraph with Eulerian embedding $(\Gamma,\nu)$ in $\Sigma$. Let $\tau ,t \geq 1$.  Let $F_0\coloneqq \zeta_1, F_{t+1} \coloneqq \zeta_2$. 
        
        Let $\FFF \coloneqq (F_0,F_1,\ldots,F_t,F_{t+1})$ be a sequence of cut-cycles for $\Gamma$ such that $\delta(F_i) \leq \tau$ for every $0 \leq i \leq t+1$ and such that $\zeta_1 \subset \Delta(F_i) \subset \Delta(F_j)$ are strict subsets for every $1 \leq i < j\leq t$ and some $t\geq 1$, where $\Delta(F_i) \subset \Sigma +\Delta_1$ is the respective disc (note that this is well-defined for $F_0,F_{t+1}$). Then we call $\FFF$ \emph{a $\tau$-chopping of $\Gamma$ (of length $t$)}.

        Let $\Pi_\FFF \coloneqq (\pi(F_1),\ldots,\pi(F_t))$ where $\pi(F_i)$ are orderings of $\rho(F_i)$ for every $1 \leq i \leq t$ respectively. Let $\big((G_0,\pi(E_1^0),\pi(E_2^0)),(H_1,\pi(F_1^1),\pi(F_2^1))\big) \coloneqq \cut((G,\pi(E_1),\pi(E_2)), \pi(F_1))$. 
        
        Inductively we define $((G_i,\pi(E_1^i),\pi(E_2^i)),(H_{i+1},\pi(F_1^{i+1}),\pi(F_2^{i+1})) \coloneqq \cut((H_i,\pi(F_1^i),\pi(F_2^i)), \pi(F_{i+1}))$ for every $1 \leq i <t$. Finally, let $(G_t,\pi(E_1^t), \pi(E_2^t)) \coloneqq (H_t,\pi(F_1^t),\pi(F_2^t))$. 
        
        We call $(G_0, \pi(E_1^0),\pi(E_2^0)),\ldots,(G_t, \pi(E_1^t),\pi(E_2^t))$ \emph{the $\FFF$-pieces of $\Gamma$ (with respect to $\Pi_\FFF$)} and refer to a single piece $(G_i, \pi(E_1^i),\pi(E_2^i))$ as the \emph{$i$-th piece (of $\Gamma$ with respect to $\FFF$)} for $0 \leq i \leq t$.

        We call $(G,\pi(E_1),\pi(E_2))$ \emph{linked} if there is a $\{E_1,E_2\}$-linkage of order $\min(\Abs{E_1},\Abs{E_2})$. Analogously, for every $0 \leq i \leq t$ we call the $i$-th piece  \emph{linked} if there is a $\{\rho(F_i),\rho(F_{i+1})\}$-linkage in $(G_i, \pi(E_1^i),\pi(E_2^i))$ of order $\min(\delta(F_i),\delta(F_{i+1}))$. 
        
        We call the chopping \emph{$\ell$-chubby} if every piece is linked and further $\delta(F_i) = \ell$ for every $0 \leq i \leq t+1$.
\end{definition}
Henceforth, whenever we give a chopping $\FFF=(F_0,\ldots,F_{t+1})$ we implicitly assume $F_0 = \zeta_1$ and $F_{t+1} = \zeta_2$.

Next we define the relevant types of pieces suited for \cref{lem:cyclinder_high_rep} and \cref{lem:wqo_cylinder_no_small_cuts} respectively; we start with \emph{fat} pieces.

\begin{definition}[$\theta$-fat]\label{def:theta-fat}
     Let $(G,\pi(E_1),\pi(E_2))$ be an edge-rooted Eulerian digraph with Eulerian embedding $(\Gamma,\nu)$ in the cylinder $\Sigma$. Let $\theta \geq 1$.
     We call~$(G,\pi(E_1),\pi(E_2))$ \emph{$\theta$-fat} if it is linked, and there exists $\theta' \geq \theta$  and a linked $\theta'$-chopping $(F_0,F_1,\ldots,F_\theta,F_{\theta+1})$ of $\Gamma$ of length $\theta$ such that $F_i \cap F_j = \emptyset$ for any two distinct $1 \leq i,j \leq \theta$ and further $\delta(F_i) \geq \theta'$ for every $1 \leq i \leq \theta$. 
\end{definition}

We have the following reformulation of \cref{lem:cyclinder_high_rep}.
\begin{corollary}\label{lem:fat_are_good}
    Let $(G_i,\pi(E_1^i),\pi(E_2^i)) \in \mathbf{G^*}(\Sigma,k_1,k_2)$ for $k_1,k_2 \in 2\N$. Then there exists a function $f:\N\to \N$ such that the following holds. Let $n_1\coloneqq \Abs{V(G_1)}$. If $(G_2,\pi(E_1^2),\pi(E_2^2))$ is $f(n)$-fat, then $(G_1,\pi(E_1^1),\pi(E_2^1)) \hookrightarrow (G_2,\pi(E_1^2),\pi(E_2^2))$ by strong immersion.
\end{corollary}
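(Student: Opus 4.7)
The plan is to reduce \cref{lem:fat_are_good} directly to \cref{lem:cyclinder_high_rep} by extracting, from the $\theta$-fat structure witnessed by the chopping, the configuration of nested alternating directed cycles and boundary-linkedness required by that earlier lemma. Via \cref{obs:edge-rooted_immersion_is_enough_dos} it suffices to work in the bead-rooted setting: so let $(G_i', \pi_i(W_{\omega_i}))$ denote the bead-rooted digraph corresponding to $(G_i,\pi(E_1^i),\pi(E_2^i))$ for $i=1,2$. Set $n \coloneqq \Abs{V(G_1)}$, let $g$ be the function supplied by \cref{lem:cyclinder_high_rep} for defect $k$, and define
\[
f(n) \;\coloneqq\; 2\bigl(g(n) + 2(k_1+k_2)\bigr) + 2.
\]

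Let $\theta = f(n)$ and let $\FFF = (F_0,F_1,\ldots,F_\theta,F_{\theta+1})$ be a linked $\theta'$-chopping witnessing $\theta$-fatness, with $\theta' \geq \theta$, each $F_i$ pairwise disjoint for $1 \leq i \leq \theta$, $\delta(F_i) = \theta'$ for $1 \leq i \leq \theta$, and every piece linked. For each $1 \leq i \leq \theta-1$, the $i$-th piece (restricted to the closed annulus $A_i \coloneqq \Delta(F_{i+1}) \setminus \Delta(F_i)^\circ \subset \Sigma+\Delta_1$) carries an Eulerian $2$-cell embedding with $F_i,F_{i+1}$ as the two cuffs. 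Because $A_i$ is a cylinder, its vertex-face-graph (as in \cref{def:vertex-face-graph}) contains an undirected cycle $C^\star_i$ all of whose faces lie in $A_i$ and are disjoint from the interior of $\Delta(F_i)$. Applying \cref{lem:vf-path-euler-circle} with $C = \bd(\Delta(F_{i+1}))$ and $C' = \bd(\Delta(F_i))$ produces two directed circles $S_i^+,S_i^-$ in $G_2$, oriented in opposite directions, both bounding discs in $\hat\Sigma$ that contain $\zeta_1 \cup \Delta(F_i)^\circ$. Crucially, the edges of $S_i^\pm$ all lie in the subgraph induced by $A_i$, so circles coming from distinct pieces $A_i,A_j$ are edge-disjoint.

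From each piece $i$ pick $C_i \in \{S_i^+, S_i^-\}$ with the choice alternating in orientation as $i$ increases, yielding a sequence of $\theta-1$ pairwise edge-disjoint nested directed cycles around $\zeta_2$ whose orientations alternate, with $C_i$ strictly inside $\Delta(C_{i-1})^\circ$ (since $F_i \subsetneq \Delta(F_{i+1})^\circ$ after a small perturbation inside the annulus). Reserving the outermost $2k_1$ and innermost $2k_2$ of these cycles as "boundary layers" and applying \cref{lem:boundary_linked_Menger_for_embeddings_in_cylinder} (twice) together with the linkedness of $G_2$ gives us $k_1$ edge-disjoint paths from the edges on $\zeta_1$ into the appropriate chosen cycle, and $k_2$ edge-disjoint paths from the edges on $\zeta_2$ into the cycle on the other side; note the Menger hypothesis is met because every cuff-separating cut-cycle $F^\star$ intersecting $A_i$ must cross $F_i$, hence has order at least $\theta' \geq \theta \geq k_1+k_2$. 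Once the remaining central $g(n)$ cycles and the boundary-linkage paths are in place, all hypotheses of \cref{lem:cyclinder_high_rep} are satisfied for $(G_1',\pi_1(W_{\omega_1}))$ and $(G_2',\pi_2(W_{\omega_2}))$, yielding the required strong rooted immersion, which translates back via \cref{obs:edge-rooted_immersion_is_enough_dos}.

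The main obstacle is the simultaneous control of \emph{three} properties of the extracted cycles: strict nesting, pairwise edge-disjointness, and alternating orientation. Edge-disjointness is essentially free once the $F_i$ are pairwise disjoint, since each $C_i$ lives in its own annulus $A_i$; strict nesting requires checking that the circles produced by \cref{lem:vf-path-euler-circle} can be chosen close to $F_i$ (so that $C_i \subseteq \Delta(F_{i+1})^\circ$ and $C_{i+1}$ sits strictly inside $\Delta(C_i)^\circ$), which follows from the freedom in choosing $C^\star_i$ in the vertex-face graph; the alternating orientation requirement is precisely why \cref{lem:vf-path-euler-circle} supplies \emph{both} directions at each $i$, allowing us to pick consistently. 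The additional subtlety --- absorbed into the constant $2(k_1+k_2)$ in the definition of $f$ --- is making sure that after reserving boundary layers to route the cuff-edges to the central stack of alternating cycles, enough cycles remain to swallow the image of $G_1$ as guaranteed by \cref{lem:cyclinder_high_rep}.
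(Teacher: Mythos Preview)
Your proof is correct and follows essentially the same approach as the paper's: reduce to \cref{lem:cyclinder_high_rep} by using \cref{lem:vf-path-euler-circle} in each annulus between consecutive disjoint cut-cycles of the fat chopping to produce nested alternating directed circles, and use the linkedness of the chopping (and of $G_2$) to supply the required boundary linkages. The paper's own proof is a three-sentence sketch to this effect; you have simply unpacked the constants and the extraction of the cycles more carefully. One small point: the $\ell$ edge-disjoint paths between $C_1$ and $C_\ell$ in the hypothesis of \cref{lem:cyclinder_high_rep} should also be named explicitly --- they come directly from concatenating the piecewise linkages of order $\theta'\geq\theta$ guaranteed by the linkedness of the chopping --- but this is implicit in your argument.
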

\begin{proof}
    Let $f(k) \coloneqq f_{\ref{lem:cyclinder_high_rep}}(k) + 1$ then the claim follows by definition of $f(n)$-fatness. Note that since the cut-cycles are disjoint, \cref{lem:vf-path-euler-circle} provides the necessary alternating circles, where the existence of the disjoint paths follows from the linkedness of the chopping.
\end{proof}

We next deal with \emph{short} pieces.

\begin{definition}[$\theta$-short] \label{def:theta-short}
     Let $(G,\pi(E_1),\pi(E_2))$ be an edge-rooted Eulerian digraph with Eulerian embedding $(\Gamma,\nu)$ in a cylinder $\Sigma$. Let $\theta \geq 1$.
     We call~$(G,\pi(E_1),\pi(E_2))$ \emph{$\theta$-short} if it is linked, and there is a cuff connecting curve $F$ in $\Gamma$ such that $\delta(F) \leq \theta$.
\end{definition}

Again we have the following corollary due to \cref{lem:wqo_cylinder_no_small_cuts}; for $k_1,k_2,\theta \in \N$ define $\AAA(\theta,k_1,k_2) \subset \mathbf{G}(\Sigma,k_1,k_2)$ to be the maximal subclass such that every $(G,\pi(E_1),\pi(E_2)) \in \AAA(\theta,k_1,k_2)$ is $\theta$-short. 

\begin{corollary}\label{lem:short_are_good}
    Let $k_1,k_2 \in 2\N$ and $\theta \in \N$. Then the class $\AAA(\theta,k_1,k_2)$ is well-quasi-ordered by strong immersion, and every element of the class is $\theta$-short.
\end{corollary}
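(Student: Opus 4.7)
The second assertion is tautological: every element of $\AAA(\theta,k_1,k_2)$ is $\theta$-short by definition. So I only need to prove the well-quasi-ordering part, and the argument follows the blueprint of \cref{lem:wqo_cylinder_no_small_cuts} very closely, essentially reversing the roles: instead of deriving a short cut-line from the failure of a bound, I already have one for free from the definition.

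The plan is to argue by contradiction. Suppose $((G_i,\pi(E_1^i),\pi(E_2^i)))_{i\in\N}$ is a bad sequence in $\AAA(\theta,k_1,k_2)$ with Eulerian embeddings $(\Gamma_i,\nu_i)$ in $\Sigma$. By $\theta$-shortness, for every $i\in\N$ there is a cuff connecting cut-line $F_i$ in $\Gamma_i$ with $\ell_i\coloneqq\delta(F_i)\leq \theta$. Since cuff-connecting curves are non-separating (\cref{obs:types_of_cut-line_cuts}) and $\Sigma$ is the cylinder, cutting $\Sigma$ along $F_i$ produces a single new surface homeomorphic to a disc $\Delta$. By \cref{lem:cutting_nonsep_embed}, applying \cref{def:cutting_nonsep_lines} gives a nice Eulerian embedding of the resulting edge-rooted digraph $(G_i',\pi(\tilde E^i))$ in $\Delta$, where $\pi(\tilde E^i)=\pi(E_1^i)\circ\pi(E_2^i)\circ(e_1^+,e_1^-,\ldots,e_{\ell_i}^+,e_{\ell_i}^-)$, so $(G_i',\pi(\tilde E^i))\in \mathbf{G^*}(\Delta,k_1+k_2+2\ell_i)$.

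Since $\ell_i\leq \theta$ for all $i$, by the pigeonhole principle there is an infinite index set $I\subseteq\N$ and some $\ell\leq\theta$ such that $\ell_i=\ell$ for every $i\in I$. Thus $(G_i',\pi(\tilde E^i))\in \mathbf{G^*}(\Delta,k_1+k_2+2\ell)$ for every $i\in I$. By \cref{thm:wqo:bead-root_for_disc} together with \cref{cor:i_am_dying}, the class $\mathbf{G^*}(\Delta,k_1+k_2+2\ell)$ is well-quasi-ordered by strong edge-rooted immersion. Hence there exist $i,j\in I$ with $i<j$ and a strong edge-rooted immersion
\[
\gamma'\colon (G_i',\pi(\tilde E^i))\hookrightarrow (G_j',\pi(\tilde E^j)).
\]

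Finally, I invoke the knit-back argument of \cref{lem:knitting_nonsep_cutlines} to lift $\gamma'$ to a strong edge-rooted immersion $\gamma\colon (G_i,\pi(E_1^i),\pi(E_2^i))\hookrightarrow (G_j,\pi(E_1^j),\pi(E_2^j))$, contradicting that the original sequence is bad. The only minor subtlety is that \cref{lem:knitting_nonsep_cutlines} is stated for surfaces other than the cylinder; however its proof is purely combinatorial (it only uses that $\gamma'$ respects the induced order on the cut roots so that in-halves $e_\ell^+$ and out-halves $e_\ell^-$ are mapped in bijection, after which the paths $\gamma'(e_\ell^-)\circ\gamma'(e_\ell^+)$ are concatenated to recover $\gamma(e_\ell)$), and this works verbatim here. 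This is the same usage already made inside the proof of \cref{lem:wqo_cylinder_no_small_cuts}, so no additional work is needed. The only step requiring any care is ensuring that the orderings on $\tilde E^i$ and $\tilde E^j$ line up so that the concatenation yields a valid path at each original cut edge; this is built into \cref{def:cutting_nonsep_lines} (the new edges are appended in the same canonical alternating order for both $i$ and $j$), which is where $\Abs{E_1^i}=\Abs{E_1^j}$ and $\Abs{E_2^i}=\Abs{E_2^j}$ is used.
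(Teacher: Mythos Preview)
Your proof is correct and takes essentially the same approach as the paper. The paper's proof simply cites \cref{lem:wqo_cylinder_no_small_cuts} as a black box (its contrapositive immediately gives the result, since $\theta$-shortness supplies the forbidden short cuff-connecting cut-line), whereas you unfold that lemma's argument—cut along the short line, pigeonhole on $\ell_i$, apply the disc case, and knit back—which is exactly what happens inside \cref{lem:wqo_cylinder_no_small_cuts}.
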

\begin{proof}
 The claim follows immediately from \cref{lem:wqo_cylinder_no_small_cuts} where the linkedness follows by \cref{def:theta-short}.
\end{proof}

Next we analyse the structure of embeddings that are not $\theta$-fat, proving that they decompose into a chopping of linked $\theta$-short pieces.

\begin{lemma}\label{lem:short_decomposition}
         Let $k_1,k_2 \in 2\N$. Let $(G,\pi(E_1),\pi(E_2)) \in \mathbf{G^*}(\Sigma,k_1,k_2)$ be an edge-rooted Eulerian digraph with Eulerian embedding $(\Gamma,\nu)$ in $\Sigma$. Let $\theta \geq 1$ and assume that $(G, \pi(E_1),\pi(E_2))$ is linked but not $\theta$-fat. Then there exist $t \in \N$ and a $\theta$-chopping $(F_0,F_1,\ldots,F_t,F_{t+1})$ of $\Gamma$ of length $t$ such that every piece is $\theta$-short.
\end{lemma}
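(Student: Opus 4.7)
\emph{Approach.} The plan is to construct the chopping inductively, starting from $\zeta_1$ and pushing outward toward $\zeta_2$. At each step we add the next cut-cycle $F_{i+1}$ so that the piece between $F_i$ and $F_{i+1}$ is both linked and $\theta$-short. The non-$\theta$-fat hypothesis will bound how often we can be forced to introduce a ``large'' cut-cycle, ensuring the process terminates. I would proceed by strong induction on $\Abs{V(G)}$. If $(G,\pi(E_1),\pi(E_2))$ is already $\theta$-short, the trivial chopping $(\zeta_1,\zeta_2)$ of length $0$ works. Otherwise, among all cuff-surrounding cut-cycles $F$ in $\Gamma$ with $\delta(F) \leq \theta$ such that the piece between $\zeta_1$ and $F$ is $\theta$-short and linked, I would choose $F_1$ with $\Delta(F_1)$ maximal (i.e., pushing the cut-cycle as close to $\zeta_2$ as possible).

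\emph{Existence of $F_1$.} The existence of \emph{some} such cut-cycle $F$ is the first thing to check. Linkedness of the triple together with \cref{lem:boundary_linked_Menger_for_embeddings_in_cylinder} yields a cuff-surrounding cut-cycle of order exactly $\min(k_1,k_2) \leq \theta$. Taking such a cut-cycle $F$ with $\Delta(F)$ minimal (closest to $\zeta_1$) yields a thin strip as the piece between $\zeta_1$ and $F$, which, using that the embedding is nice and $2$-cell together with \cref{obs:edge-rooted_embedding_is_nice}, carries a short cuff-connecting curve and is linked by Menger; this gives $\theta$-shortness of the first piece.

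\emph{Recursion and preservation of hypotheses.} Once $F_1$ is selected, cut the cylinder along $F_1$ (invoking \cref{def:cutting_sep_lines} and \cref{lem:cutting_sep_embedding}) to obtain a smaller sub-cylinder $\Sigma'$ bounded by $F_1$ and $\zeta_2$, together with a piece $(G',\pi(F_1),\pi(E_2)) \in \mathbf{G^*}(\Sigma',\delta(F_1),k_2)$ with strictly fewer vertices than $G$. Linkedness of $(G',\pi(F_1),\pi(E_2))$ follows from the maximality choice of $F_1$ together with a Menger-style argument via \cref{lem:boundary_linked_Menger_for_embeddings_in_cylinder} applied inside $\Sigma'$. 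Non-$\theta$-fatness of $(G',\pi(F_1),\pi(E_2))$ is inherited from the original: any linked $\theta'$-chopping of length $\theta$ witnessing $\theta$-fatness inside $\Sigma'$ would combine with the cuff $\zeta_1$ of $\Sigma$ (and a suitable pruning of one inner cut-cycle) to produce a witness in $\Sigma$, contradicting the hypothesis on $(G,\pi(E_1),\pi(E_2))$. Applying the induction hypothesis to $(G',\pi(F_1),\pi(E_2))$ then yields a $\theta$-chopping $(F_1,F_2,\ldots,F_t,F_{t+1}=\zeta_2)$ of $\Gamma'$ with all pieces $\theta$-short; prepending $F_0=\zeta_1$ gives the desired chopping of $\Gamma$.

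\emph{Main obstacle.} The technical heart of the argument is precisely the last step of the previous paragraph, together with showing that the maximally chosen $F_1$ actually enables the recursion, rather than leaving behind a sub-cylinder where no further suitable cut-cycle exists. Concretely, the danger is that pushing $F_1$ outward might produce a sub-cylinder that is neither $\theta$-short nor admits a good first cut-cycle. The argument ruling this out is dual in spirit to \cref{lem:boundary_linked_Menger_for_embeddings_in_cylinder}: if every attempt to place the next cut-cycle fails, one iterates and extracts a sequence of $\theta$ pairwise disjoint cut-cycles of common order $\theta' \geq \theta$ forming a linked chopping, which is exactly the witness to $\theta$-fatness ruled out by the hypothesis. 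Making this extraction precise, while simultaneously controlling the orders of the cuts and the linkedness of all pieces, requires careful combinatorial book-keeping exploiting the Eulerian $2$-cell embedding of $\Gamma$ and the alternating structure of cut-cycles from \cref{obs:edge-rooted_embedding_is_nice,lem:cut_cycles_are_alternating}.
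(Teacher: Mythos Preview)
Your approach differs from the paper's in two structural ways, and the differences are where your gaps lie.

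\textbf{Selection of the cut-cycle.} You choose $F_1$ by maximising $\Delta(F_1)$ subject to the first piece being $\theta$-short and linked. The paper instead chooses $F_1$ with $\delta(F_1)$ \emph{minimal} among all cuff-surrounding cut-cycles of order $<\theta$ (with no constraint on the first piece). This single change buys linkedness of \emph{both} resulting pieces for free via \cref{lem:boundary_linked_Menger_for_embeddings_in_cylinder}: any smaller cut in either piece would be a smaller cut in $\Gamma$, contradicting minimality. Your maximality criterion does not yield this; if some $F'$ between $F_1$ and $\zeta_2$ has $\delta(F')<\min(\delta(F_1),k_2)$ but the piece between $\zeta_1$ and $F'$ happens not to be $\theta$-short, then $F'$ was never a candidate in your maximisation, so maximality of $F_1$ says nothing about linkedness of the right side.

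\textbf{One-sided versus two-sided recursion.} You recurse only on the right piece, which forces you to guarantee up front that the left piece is $\theta$-short. Your existence argument for this---that a cut-cycle of order $\min(k_1,k_2)$ taken ``closest to $\zeta_1$'' bounds a thin strip carrying a short cuff-connecting curve---is not justified; closeness of a cut-cycle to a cuff in the nesting order says nothing about the existence of a short cuff-connecting curve in the enclosed annulus. The paper sidesteps this entirely by recursing on \emph{both} pieces: the base case ``the current piece is $\theta$-short'' is simply one branch of a trichotomy ($\theta$-short / admits a cut-cycle of order $<\theta$ / else $\theta$-fat), and the third branch is ruled out because a $\theta$-fat sub-piece lifts to a $\theta$-fat witness for the original (the boundary pieces are linked since $\delta(F_1)\geq k_1$ by linkedness of $G$). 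Your ``main obstacle'' paragraph is exactly this trichotomy, and the paper's minimum-order cut plus two-sided recursion dissolves it rather than confronting it.
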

\begin{proof}
    We construct the desired chopping iteratively as follows.
    Without loss of generality assume that $k_1 \leq k_2$. Since $(G,\pi(E_1),\pi(E_2))$ is linked either it is $\theta$-short, or we find some cut-cycle $F_1$ in $\Gamma$ with $\delta_1 = \delta(F_1) < \theta$ such that $\zeta_1 \in \Delta(F_1) \subset \Sigma + \Delta_1$ and $\delta_1 $ minimal. Note that if it is neither $\theta$-short, nor does there exist a respective cut-cycle $F_1$ of order $< \theta$ as claimed---whence $\kappa(\zeta_1,\zeta_2;\Gamma) \geq \theta$ by \cref{lem:boundary_linked_Menger_for_embeddings_in_cylinder}---$(G,\pi(E_1),\pi(E_2))$ is $\theta$-fat as a contradiction to our assumption. To see this note that it is linked by assumption, and since it is not $\theta$-short, there is no cuff connecting curve $F^*$ in $\Gamma$ with $\delta(F) \leq \theta$. The latter implies the existence of $(\theta+1)$ pairwise non-intersecting cut-cycles $F_1,\ldots,F_{\theta+1}$ with $\zeta_1 \subset \Delta(F_1) \subset \ldots \subset \Delta(F_{\theta+1}) \subset \Sigma + \Delta_1$---this is easily seen by induction---each satisfying $\delta(F_i) \geq \theta$ for $1 \leq i \leq \theta+1$. By \cref{lem:boundary_linked_Menger_for_embeddings_in_cylinder} then---since $\kappa(\zeta_1,\zeta_2;\Gamma) \geq \theta$---there is a $\theta$-linkage connecting $F_1$ to $F_{\theta+1}$. %
    
    Let $\pi(F_1)$ be an ordering of $\rho(F_1)$. Since $(G,\pi(E_1),\pi(E_2)$ is linked we derive $\delta_1 \geq k_1$. 
    
    Let $\big((G_1,\pi(E_1^1),\pi(E_2^1), (G_2,\pi(E_2^1),\pi(E_2^2))\big) = \cut(G,\pi(E_1),\pi(E_2)),\pi(F_1))$ with respective embeddings $(\Gamma_i,\nu_i)$ as in \cref{lem:cutting_sep_embedding} for $i=1,2$. 

    \begin{claim}\label{claim:linked_cut_linked}
        $(G_i,\pi(E_1^i),\pi(E_2^i))$ is linked.
    \end{claim}
    \begin{claimproof}
        Sine we chose $F_1$ such that $\delta_1$ is minimal, the claim follows by \cref{lem:boundary_linked_Menger_for_embeddings_in_cylinder}, since cutting $\Sigma$ along $F_1$ results again in a cylinder and $(\Gamma_i,\nu_i)$ are induced Eulerian embeddings in the cylinders $\Sigma \cap \Delta(F_1)$ and $\Sigma \setminus \Delta(F_1)^\circ$ respectively, such that $\Gamma_i$ agrees with $\Gamma$ for $i=1,2$. Thus if $(G_1,\pi(E_1^1),\pi(E_2^1))$  were not linked, say, we find a cut-cycle $F_1'$ in $\Gamma_1$ with $\delta(F_1') < \delta_1$. But then $F_1'$ is a cut-cycle in $\Gamma$ refuting our choice of $F_1$; contradiction.
    \end{claimproof}
    By \cref{claim:linked_cut_linked} both sides of the cut are again linked whence we may continue the construction by finding new cut-cycles $F_2^1,F_2^2$ in the respective pieces.

    Assume by recursion that we end up with a linked $\theta$-chopping $(\zeta_1,\ldots,F_{t_1},F_1)$ for $\Gamma_1$ and  $(F_1,\ldots,F_{t_2},\zeta_2)$ for $\Gamma_2$ where each piece is $\theta$-short.

    \begin{claim}\label{claim:linked_chopping_linked}
        $\FFF=(\zeta_1,\ldots, F_{t_1},F_1,\ldots,F_{t_2},\zeta_1)$ is a linked $\theta$-chopping of $\Gamma$.
    \end{claim}
    \begin{claimproof}
        By construction $\FFF$ is a $\theta$-chopping of $\Gamma$ since every cut-cycle in $\Gamma_1$ or $\Gamma_2$ is also a cut-cycle in $\Gamma$ of equal order. It is clearly linked since $G_1 \cap G_2 = \rho(F_1)$, and every piece is part of either $\Gamma_1 \subset \Gamma$ or $\Gamma_2 \subset \Gamma$ where they are linked by assumption.
    \end{claimproof}

   In particular all the pieces of $\FFF$ are still $\theta$-short (for we have effectively the same pieces); this concludes the proof.
\end{proof}

We will inductively prove, that the graphs that are not $\theta$-fat are well-quasi-ordered. The main ingredient is the following version of Higman's \cref{thm:higman} for $\ell$-chubby choppings.

\begin{lemma}\label{lem:knitting_chubby_choppings}
    Let $\ell \in 2\N$. Let $(G_i,\pi(E_1^i),\pi(E_2^i)) \in \mathbf{G^*}(\Sigma,\ell,\ell)$ be edge-rooted Eulerian digraphs with Eulerian embeddings $(\Gamma_i,\nu_i)$ in $\Sigma$ for $i=1,2$. Let $\FFF_i=(F_0^i,F_1^i,\ldots,F_{t_i}^i,F_{t_i+1}^i)$ be $\ell$-chubby choppings for $\Gamma_i$ of length $t_i \in \N$ for $i=1,2$ respectively. Then there exist orderings $\Pi_i=(\pi(F_0^i),\ldots,\pi(F_{t_i+1}^i))$ for $i=1,2$ such that the following holds.
    Let $0 \leq j_0 < \ldots <j_{t_1} \leq t_2+1$. If the $i$-th piece of $\FFF_1$ (with respect to $\Pi_1$) strongly immerses in the $j_{i}$-th piece of $\FFF_2$ (with respect to $\Pi_2$) for every $0 \leq i \leq t_1$, then $(G_1,\pi(E_1^1),\pi(E_1^2)) \hookrightarrow (G_2,\pi(E_1^2),\pi(E_2^2))$ by strong immersion.
\end{lemma}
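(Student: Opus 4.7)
The plan is to construct the global strong immersion $\gamma : (G_1,\pi(E_1^1),\pi(E_2^1)) \hookrightarrow (G_2,\pi(E_1^2),\pi(E_2^2))$ by knitting the given local strong immersions $\gamma_i \colon P_1^i \hookrightarrow P_2^{j_i}$ together, using the linkages furnished by the $\ell$-chubby assumption to ``bridge'' the pieces of $\FFF_2$ that are skipped (those with index not in $\{j_0,\ldots,j_{t_1}\}$). First I would fix the orderings $\Pi_i$ inductively, starting from $\pi(F_0^i) = \pi(E_1^i)$. Since each piece is linked in a cylinder with Eulerian embedding, the $\ell$-linkage between $\rho(F_j^i)$ and $\rho(F_{j+1}^i)$ is pairwise homotopic by \cref{obs:linkages_in_euler_embedded_cylinder_are_homotopic}, which induces a canonical bijection between the two cut-cycles. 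I would define $\pi(F_{j+1}^i)$ so that the $k$-th edge corresponds under this bijection to the $k$-th edge of $\pi(F_j^i)$; by \cref{obs:edge-rooted_embedding_is_nice} all cut-cycles are alternating, so this choice is consistent with the alternating structure. A preliminary cyclic alignment ensures that the final ordering $\pi(F_{t_i+1}^i)$ coincides with the prescribed $\pi(E_2^i)$.

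Next I would analyse how the local immersions interact at each internal cut-cycle $F_{i+1}^1$. Because $\gamma_i$ and $\gamma_{i+1}$ are rooted strong immersions respecting the orderings $\Pi_1,\Pi_2$, for every $1 \le k \le \ell$ the $k$-th edge of $\rho(F_{i+1}^1)$, viewed as a root of $P_1^i$, is sent by $\gamma_i$ to a path hitting the $k$-th edge of $\rho(F_{j_i+1}^2)$, and its twin, viewed as a root of $P_1^{i+1}$, is sent by $\gamma_{i+1}$ to a path hitting the $k$-th edge of $\rho(F_{j_{i+1}}^2)$. When $j_{i+1} = j_i + 1$ these two boundaries coincide, so the paths glue directly. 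Otherwise, using that every skipped piece $P_2^j$ (for $j_i < j < j_{i+1}$) is linked and that $\Pi_2$ was chosen compatibly, I would choose an $\{\rho(F_j^2),\rho(F_{j+1}^2)\}$-linkage $\LLL_j$ of order $\ell$ in $P_2^j$ whose $k$-th path connects the $k$-th edges of the two cut-cycles, and concatenate the $\LLL_j$ for $j_i < j < j_{i+1}$ to produce an edge-disjoint bridging path $B_k^{i+1}$ from the $k$-th edge of $\rho(F_{j_i+1}^2)$ to the $k$-th edge of $\rho(F_{j_{i+1}}^2)$.

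The global immersion $\gamma$ is then assembled as follows. On vertices, $\gamma$ is the disjoint union of the restrictions of the $\gamma_i$ to the non-boundary vertices of $G_1^i$, which is well-defined because the pieces partition $V(G_1)$ (the cuts do not touch internal vertices). On edges internal to a single piece $P_1^i$, set $\gamma(e) \coloneqq \gamma_i(e)$; on an edge $e \in \rho(F_{i+1}^1)$ that was cut into twins $e^-,e^+$ (with indices determined by orientation using \cref{def:cutting_sep_lines}), set $\gamma(e) \coloneqq \gamma_i(e^-) \circ B_k^{i+1} \circ \gamma_{i+1}(e^+)$, where $k$ is the position of $e$ in $\pi(F_{i+1}^1)$. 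Edge-disjointness of the images follows from edge-disjointness within each local immersion, edge-disjointness of the bridging linkages within each skipped piece, and the fact that images of distinct pieces $P_2^{j_i}$ share only boundary edges. That $\gamma$ respects the global root ordering is immediate from the boundary conditions $\pi(F_0^i)=\pi(E_1^i)$ and $\pi(F_{t_i+1}^i)=\pi(E_2^i)$.

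The main obstacle is verifying that the knitted immersion is \emph{strong}. Concatenating paths across pieces is safe for internal vertices of each individual piece because the local $\gamma_i$ are strong, but one must check that no vertex in $\gamma_i(V(G_1^i))$ appears as an internal vertex of a bridging path, and that no vertex of $\gamma_{i'}(V(G_1^{i'}))$ for $i' \neq i,i+1$ lies on the bridging paths. The first point holds because the bridging paths live entirely inside skipped pieces $P_2^j$ with $j \notin \{j_0,\ldots,j_{t_1}\}$, which are disjoint from every $\gamma_{i'}(V(G_1^{i'}))$; the second holds for the same reason. A secondary subtlety lies in choosing the orderings $\Pi_1,\Pi_2$ so that the ``identity matching'' across every consecutive cut-cycle is simultaneously alternating, which is the step where the Eulerianness of the embedding and the homotopy lemma \cref{obs:linkages_in_euler_embedded_cylinder_are_homotopic} for cylinders are both essential.
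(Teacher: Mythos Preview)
Your proposal is correct and follows essentially the same approach as the paper. The only notable difference is in how the orderings $\Pi_i$ are defined: the paper fixes a single global $\{\rho(F_0^i),\rho(F_{t_i+1}^i)\}$-linkage $\LLL_i = \{P_1^i,\ldots,P_\ell^i\}$ of order $\ell$ (which exists because every piece is linked and every cut has order exactly $\ell$) and sets $\pi(F_j^i) = (e_1^{j,i},\ldots,e_\ell^{j,i})$ so that $e_p^{j,i}$ is the unique edge of $\rho(F_j^i)$ on $P_p^i$, then invokes \cref{lem:knitting_separating_cutlines} iteratively with $\LLL_2$ bridging the skipped pieces. Your inductive piece-by-piece construction via \cref{obs:linkages_in_euler_embedded_cylinder_are_homotopic} yields the same orderings but is slightly more laborious; the global-linkage definition dispenses with the need to argue compatibility of local bijections, since consistency across all cut-cycles is automatic from a single path system. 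Your explicit verification of strongness (images in skipped pieces cannot meet $\gamma(V(G_1))$) is a detail the paper leaves implicit in its appeal to \cref{lem:knitting_separating_cutlines}.
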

\begin{proof}
    This follows by \cref{lem:knitting_separating_cutlines} similar to the proof of \cref{thm:bounded_carving_with_non_labelled}. To see this note that by definition of $\ell$-chubby chopping, there exists a $\{\rho(F_0^i),\rho(F_{t_{i+1}}^i)\}$-linkage~$\LLL_i=\{P_1^i,\ldots,P_\ell^i\}$ of order $\ell$ for $i=1,2$. Since $\delta(F_j^i) = \ell$ for every $0 \leq j \leq t_{i+1}^i$ there exists an ordering $\pi(F_j^i)=(e_1^{j,i},\ldots,e_\ell^{j,i})$ such that $e_p^{j,i} \in P_p^i$ for $i=1,2$. We claim that $\Pi_i = (\pi(F_0^i),\ldots,\pi(F_{t_{i+1}}^i))$ for $i=1,2$ does the trick. This follows easily by iterative application of \cref{lem:knitting_separating_cutlines}, noting that the linkage $\LLL_2$ can be used to extend the respective immersion of the pieces respecting the orderings.%
\end{proof}

In light of the above we define the following.

\begin{definition}\label{def:chubby_class}
    Let $\ell\in 2\N$, $\Sigma$ a cylinder and let $\AAA \subseteq \mathbf{G^*}(\Sigma,\ell,\ell)$ be a well-quasi-ordered class of edge-rooted Eulerian digraphs. We define $\operatorname{Chubby}_\ell(\AAA)\subseteq \bigcup_{2\ell_1,2\ell_2 \leq \ell}\mathbf{G^*}(\Sigma,2\ell_1,2\ell_2)$ to be the class of edge-rooted Eulerian digraphs $(G, \pi(E_1),\pi(E_2))$ with Eulerian embedding $(\Gamma,\nu)$ in $\Sigma$ such that there is an $\ell$-chopping $(F_0,F_1,\ldots,F_{t},F_{t+1})$ of $\Gamma$ for any length $t \in \N$ with $\delta(F_i) = \ell$ for every $1 \leq i \leq t$, and such that every piece is linked, the $i$-th piece is in $\AAA$ for every $1 \leq i \leq t-1$ and the first and last pieces are $\ell$-short.
\end{definition}

Using the newly introduced notation, the following is an easy extension to \cref{lem:knitting_chubby_choppings}.

\begin{lemma}\label{lem:wqo+chubby=wqo}
    Let $\AAA \subseteq \mathbf{G^*}(\Sigma,\ell,\ell)$ well-quasi-ordered such that every element in $\AAA$ is linked. Then $\operatorname{Chubby}_\ell(\AAA)$ is well-quasi-ordered and every element of $\operatorname{Chubby}_\ell(\AAA)$ is linked.
\end{lemma}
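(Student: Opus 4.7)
The plan is to assume a bad sequence $((G_i,\pi(E_1^i),\pi(E_2^i)))_{i\in\N}$ in $\operatorname{Chubby}_\ell(\AAA)$ and derive a contradiction via Higman-style combinatorics over the chopping decompositions. First I would unpack, for each index $i$, the data witnessing $\operatorname{Chubby}_\ell(\AAA)$-membership: an $\ell$-chopping $\FFF_i=(F_0^i,\ldots,F_{t_i+1}^i)$ with interior cut-cycles of order $\ell$, orderings $\Pi_{\FFF_i}$, and pieces $P_0^i,\ldots,P_{t_i}^i$ where the extremal pieces are linked and $\ell$-short and the middle pieces lie in $\AAA$. Using \cref{obs:linkages_in_euler_embedded_cylinder_are_homotopic} on homotopy of edge-disjoint linkages in an Eulerian-embedded cylinder, I would align $\Pi_{\FFF_i}$ with a witnessing $\ell$-linkage $\LLL_i$ from $\rho(F_0^i)$ to $\rho(F_{t_i+1}^i)$, so that the cut-cycle orderings are precisely those required for the knitting results of \cref{lem:knitting_chubby_choppings,lem:knitting_separating_cutlines}.

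Next I would apply the pigeonhole principle to pass to an infinite subsequence where $\Abs{E_1^i}=2\ell_1$ and $\Abs{E_2^i}=2\ell_2$ are fixed. The first and last pieces then become linked $\ell$-short elements of fixed subclasses $\AAA(\ell,2\ell_1,\ell)$ and $\AAA(\ell,\ell,2\ell_2)$, which are well-quasi-ordered by \cref{lem:short_are_good}. The middle pieces form finite words over $\AAA$, so the word version of Higman's theorem (\cref{thm:higman}) yields, after a further thinning, indices $i<j$ together with an order-preserving injection $\phi\colon\{1,\ldots,t_i-1\}\to\{1,\ldots,t_j-1\}$ such that $P_0^i\hookrightarrow P_0^j$, $P_{t_i}^i\hookrightarrow P_{t_j}^j$, and $P_m^i\hookrightarrow P_{\phi(m)}^j$ for each interior $m$, all respecting the fixed root orderings.

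Finally, I would knit these piecewise strong immersions into a single strong immersion $(G_i,\pi(E_1^i),\pi(E_2^i))\hookrightarrow(G_j,\pi(E_1^j),\pi(E_2^j))$ by iteratively applying \cref{lem:knitting_separating_cutlines} along the cut-cycles $F_m^i$ and $F_{\phi(m)}^j$. The middle pieces of $G_j$ skipped by $\phi$ are absorbed using segments of the linkage $\LLL_j$ restricted to each gap subcylinder; because the restricted linkage is determined up to homotopy in its subcylinder, it matches the canonical orderings on the intermediate cut-cycles, so the bridging paths extend each piecewise immersion in an order-respecting manner. Chaining these extensions produces the desired global immersion, contradicting badness. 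Linkedness of every element of $\operatorname{Chubby}_\ell(\AAA)$ is then immediate from the chopping itself: concatenating the per-piece $\{\rho(F_j),\rho(F_{j+1})\}$-linkages along homotopy-matched endpoints at each interior cut-cycle yields a $\{E_1,E_2\}$-linkage of order $\min(\Abs{E_1},\Abs{E_2})$.

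The hardest technical point will be handling the gaps created by $\phi$: one must verify that splicing $\LLL_j$-segments through skipped pieces into the piecewise immersion $P_m^i\hookrightarrow P_{\phi(m)}^j$ produces a bona fide strong immersion whose routes respect the cut-cycle orderings, and that these extensions compose compatibly with \cref{lem:knitting_separating_cutlines}. The homotopy uniqueness of cylinder linkages (\cref{obs:linkages_in_euler_embedded_cylinder_are_homotopic}) is the key topological ingredient that keeps all orderings aligned and lets the per-cut knitting go through without collisions.
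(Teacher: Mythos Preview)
Your proposal is correct and takes essentially the same approach as the paper. The paper's proof is a terse two-line appeal to \cref{lem:knitting_chubby_choppings} for the interior $\ell$-chubby part and two applications of \cref{lem:knitting_separating_cutlines} for the $\ell$-short end pieces (using \cref{lem:short_are_good}); you unpack exactly this mechanism, spelling out the Higman step, the linkage-based alignment of cut-cycle orderings, and the gap-bridging that the paper hides inside the proof of \cref{lem:knitting_chubby_choppings}. One small caveat: the paper's \cref{thm:higman} is stated for finite \emph{sets}, whereas you need the sequence (word) version with an order-preserving injection---this is of course the standard Higman lemma, but strictly speaking not what \cref{thm:higman} records.
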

\begin{proof}
     The proof follows by \cref{lem:knitting_chubby_choppings} together with two applications of \cref{lem:knitting_separating_cutlines} for the first and last piece, noting that they are $\ell$-short whence we may use \cref{lem:short_are_good}. %
\end{proof}

Note that the choppings produced by \cref{lem:wqo+chubby=wqo} are not $\ell$-chubby, since the first and last cut-cycles may have a differing order, i.e., $\delta(F_0),\delta(F_{t_i+1})$ may be smaller than $\ell$. 

And by \cref{lem:short_are_good} we immediately derive the following.
\begin{corollary}
    Let $\ell \in 2\N$ and $\theta \in \N$. Then $\operatorname{\Chubby}_\ell(\AAA(\theta,\ell,\ell))$ is well-quasi-ordered and every element is linked; in particular the class is a subclass of $\mathbf{G^*}(\Sigma,\ell,\ell)$.
\end{corollary}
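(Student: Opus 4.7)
The plan is to combine the two results immediately preceding the statement, since this corollary is little more than a packaged specialisation of them. First I would apply \cref{lem:short_are_good} with $k_1 = k_2 = \ell$, which yields that $\AAA(\theta,\ell,\ell)$ is well-quasi-ordered by strong immersion and that every member of $\AAA(\theta,\ell,\ell)$ is $\theta$-short. Crucially, ``$\theta$-short'' bakes linkedness into its definition (see \cref{def:theta-short}), so every such element is in particular linked, and hence the hypotheses of \cref{lem:wqo+chubby=wqo} are satisfied by $\AAA = \AAA(\theta,\ell,\ell)$.

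Next I would invoke \cref{lem:wqo+chubby=wqo} with this $\AAA$, whose conclusion is exactly that $\Chubby_\ell(\AAA(\theta,\ell,\ell))$ is well-quasi-ordered by strong immersion and that every one of its elements is linked. The ambient inclusion $\Chubby_\ell(\AAA(\theta,\ell,\ell)) \subseteq \bigcup_{2\ell_1, 2\ell_2 \le \ell} \mathbf{G^*}(\Sigma, 2\ell_1, 2\ell_2)$ is already recorded in \cref{def:chubby_class}; this substantiates the parenthetical claim that the class sits inside $\mathbf{G^*}(\Sigma,\ell,\ell)$ in the sense that both cuff interfaces of any element carry at most $\ell$ root edges (the two boundary pieces being $\ell$-short and linked, and therefore of index at most $\ell$ on each side).

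There is essentially no obstacle to overcome here: the real mathematical content has been discharged in \cref{lem:short_are_good} (which itself rests on the Menger-type argument of \cref{lem:wqo_cylinder_no_small_cuts}) and in \cref{lem:wqo+chubby=wqo} (which knits chubby choppings together via \cref{lem:knitting_chubby_choppings} and \cref{lem:knitting_separating_cutlines}). The only bookkeeping point is the trivial verification that ``$\theta$-short'' implies ``linked'' in the sense \cref{lem:wqo+chubby=wqo} requires, and this is immediate from \cref{def:theta-short}.
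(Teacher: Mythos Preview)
Your proposal is correct and matches the paper's approach exactly: the paper offers no explicit proof beyond the sentence ``And by \cref{lem:short_are_good} we immediately derive the following,'' and your two-step argument (first \cref{lem:short_are_good} to verify the hypotheses on $\AAA(\theta,\ell,\ell)$, then \cref{lem:wqo+chubby=wqo} to conclude) is precisely the intended derivation. Your observation that linkedness is baked into \cref{def:theta-short} is the only non-automatic check, and you handle it correctly.
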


Finally we have the following.

\begin{corollary}\label{cor:thinning_sequences}
    Let $\ell \in 2\N$.  If $\AAA \subseteq \mathbf{G^*}(\ell,\ell)$ is well-quasi-ordered and every element in $\AAA$ is linked, then $\Chubby_{\ell-1}(\Chubby_\ell(\AAA))$ is well-quasi-ordered.
\end{corollary}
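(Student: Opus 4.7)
The plan is a direct iterative application of Lemma \ref{lem:wqo+chubby=wqo}. Since $\AAA \subseteq \mathbf{G^*}(\Sigma,\ell,\ell)$ is well-quasi-ordered with every element linked, Lemma \ref{lem:wqo+chubby=wqo} immediately gives that $\BBB \coloneqq \Chubby_\ell(\AAA)$ is well-quasi-ordered and that every element of $\BBB$ is linked. The remaining task is to apply the same lemma a second time, with $\ell$ replaced by $\ell-1$, to $\BBB$ in place of $\AAA$.

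The one technicality is that Lemma \ref{lem:wqo+chubby=wqo} is phrased for an input class whose elements share a single fixed defect $(m,m)$, whereas $\BBB$ is contained in $\bigcup_{2\ell_1, 2\ell_2 \leq \ell}\mathbf{G^*}(\Sigma, 2\ell_1, 2\ell_2)$ and therefore contains elements of varying defects. I would handle this by partitioning $\BBB$ into the finitely many sub-classes $\BBB_{k_1,k_2} \coloneqq \BBB \cap \mathbf{G^*}(\Sigma,k_1,k_2)$ for $k_1,k_2 \leq \ell$ even. Each $\BBB_{k_1,k_2}$ is well-quasi-ordered as a subclass of the well-quasi-order $\BBB$, and every element of it is linked. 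Given any bad sequence $((G_i,\pi(E_1^i),\pi(E_2^i)))_{i\in\N}$ in $\Chubby_{\ell-1}(\BBB)$, by the pigeonhole principle one may pass to an infinite subsequence along which the defects of the first and last pieces of the witnessing $(\ell-1)$-chopping $\FFF_i$ are constant, and along which the middle pieces all live inside a single subclass $\BBB_{k_1,k_2}$.

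On this thinned subsequence one is then exactly in the situation of Lemma \ref{lem:wqo+chubby=wqo}: an infinite sequence of linked edge-rooted digraphs, each admitting a chopping with linked pieces whose middle entries come from a well-quasi-order and whose end pieces are $(\ell-1)$-short. The proof of Lemma \ref{lem:wqo+chubby=wqo} proceeds via Higman's Theorem \ref{thm:higman} on the sequence of middle-piece types together with two end applications of \cref{lem:short_are_good} (for the two $\ell$-short end pieces), followed by the knitting step of Lemmas \ref{lem:knitting_chubby_choppings} and \ref{lem:knitting_separating_cutlines}. None of these ingredients depend on the outer defects of the ambient graph being equal; they only use that the cut-cycles $F_i$ of the chopping satisfy $\delta(F_i)=\ell-1$ and that the pieces are linked, so the argument transfers verbatim to each $\BBB_{k_1,k_2}$. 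This yields a pair $i<j$ with $(G_i,\pi(E_1^i),\pi(E_2^i)) \hookrightarrow (G_j,\pi(E_1^j),\pi(E_2^j))$, contradicting badness of the sequence.

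The main obstacle is purely bookkeeping: one must track defects through two layers of chopping, check carefully that the orderings chosen on each layer of cut-cycles agree so that the knit in Lemma \ref{lem:knitting_chubby_choppings} is consistent across the nested choppings, and verify that when an outer chopping is assembled, the first and last pieces (which are only $(\ell-1)$-short and not necessarily in $\BBB$) are still covered by \cref{lem:short_are_good}. No new structural idea beyond Lemma \ref{lem:wqo+chubby=wqo} is needed.
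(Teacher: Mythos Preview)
Your approach is essentially the same as the paper's: the paper's proof is the single sentence ``This follows by iterative application of \cref{lem:wqo+chubby=wqo} and the \cref{def:chubby_class} of $\Chubby_\ell(\cdot)$.'' You are simply being more careful than the paper about the type mismatch (that $\Chubby_\ell(\AAA)$ lands in a union over defects rather than a single $\mathbf{G^*}(\Sigma,m,m)$), which the paper silently absorbs into the phrase ``iterative application''; your pigeonhole-on-defects fix is exactly the right way to make that step precise.
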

\begin{proof}
    This follows by iterative application of \cref{lem:wqo+chubby=wqo} and the \cref{def:chubby_class} of $\Chubby_\ell(\cdot)$.%
\end{proof}

Finally combining \cref{cor:thinning_sequences} and \cref{lem:wqo+chubby=wqo} we derive the following.
\begin{corollary}\label{superchubby_wqo}
    Let $\ell,k_1,k_2 \in 2\N$ and $\theta \geq 1$. The class $\Chubby_0(\Chubby_1(\ldots(\Chubby_\ell(\AAA(\theta,k_1,k_2))))$ is well-quasi-ordered by strong immersion.
\end{corollary}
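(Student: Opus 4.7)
The plan is to prove the statement by straightforward iteration of Lemma \ref{lem:wqo+chubby=wqo}, using Corollary \ref{lem:short_are_good} as the base case. First I would observe that, by Corollary \ref{lem:short_are_good}, the class $\AAA(\theta,k_1,k_2)$ is well-quasi-ordered by strong immersion and each of its elements is $\theta$-short; by Definition \ref{def:theta-short}, every $\theta$-short edge-rooted Eulerian digraph is in particular linked. Thus both hypotheses required to invoke Lemma \ref{lem:wqo+chubby=wqo} are satisfied for $\AAA(\theta,k_1,k_2)$.

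For the induction step I would proceed by downward induction on the index. Assuming that $\CCC_{j+1} \coloneqq \Chubby_{j+1}(\Chubby_{j+2}(\ldots(\Chubby_\ell(\AAA(\theta,k_1,k_2)))))$ is well-quasi-ordered by strong immersion and every element of $\CCC_{j+1}$ is linked, one applies Lemma \ref{lem:wqo+chubby=wqo} to $\CCC_{j+1}$ with parameter $j$, obtaining that $\CCC_j \coloneqq \Chubby_j(\CCC_{j+1})$ is well-quasi-ordered and that every element of $\CCC_j$ is again linked. Iterating this inductive step from $j = \ell$ down to $j = 0$ then yields the claim, since the final class in the chain is exactly $\Chubby_0(\Chubby_1(\ldots(\Chubby_\ell(\AAA(\theta,k_1,k_2)))))$.

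The only mild bookkeeping issue is that Lemma \ref{lem:wqo+chubby=wqo} is stated for a subclass of $\mathbf{G^*}(\Sigma,\ell,\ell)$, whereas by Definition \ref{def:chubby_class} the class $\Chubby_\ell(\AAA)$ a priori lives in $\bigcup_{2\ell_1,2\ell_2 \leq \ell}\mathbf{G^*}(\Sigma,2\ell_1,2\ell_2)$, i.e.\ the boundary cuff orders of its elements can be any even number up to $\ell$. I would handle this in the standard way: for each intermediate class $\CCC_{j+1}$, partition it into the finitely many subclasses indexed by the pair of cuff orders $(2\ell_1,2\ell_2)$ with $2\ell_1,2\ell_2 \leq j+1$, apply Lemma \ref{lem:wqo+chubby=wqo} to each subclass separately, and then appeal to the elementary fact that a finite union of well-quasi-ordered classes is well-quasi-ordered. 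Note also that, by Lemma \ref{obs:edge-rooted_embedding_is_nice}, every cut-cycle in a nice Eulerian embedding has even order, so the levels indexed by odd $j$ contribute no new intermediate cuts and the chopping there is essentially trivial; this is purely cosmetic and does not affect the argument.

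Since every step above is a direct invocation of a lemma already proved in this subsection, I do not anticipate any substantive obstacle; the corollary is essentially the formal closure of Corollary \ref{cor:thinning_sequences} under arbitrarily many applications of the $\Chubby$ operator, with the partition-into-finitely-many-cuff-orders step being the only administrative care required.
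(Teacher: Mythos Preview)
Your proposal is correct and follows essentially the same approach as the paper: the paper's ``proof'' is just the sentence ``Finally combining \cref{cor:thinning_sequences} and \cref{lem:wqo+chubby=wqo} we derive the following,'' which amounts to the same downward iteration of Lemma~\ref{lem:wqo+chubby=wqo} starting from Corollary~\ref{lem:short_are_good}. You are in fact more careful than the paper about the type mismatch between the input requirement $\AAA \subseteq \mathbf{G^*}(\Sigma,\ell,\ell)$ in Definition~\ref{def:chubby_class} and the output living in the finite union over smaller cuff orders; your partition-and-finite-union argument is the right way to make this precise, and the paper simply elides it.
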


We gathered all the results needed for the proof of \cref{thm:wqo:bead-root_for_cylinder}; recall that by \cref{cor:i_am_dying} it suffices to prove \cref{thm:wqo_edge-rooted} for the cylinder.

\begin{proof}[of \cref{thm:wqo:bead-root_for_cylinder}]
    The proof is by induction on $(k_1,k_2)$, where we assume it to be true for all $(k_1',k_2')$ with either $k_1' < k_1$ or $k_2' < k_2$ since it follows for $\mathbf{G^*}(0,k)$ and $\mathbf{G^*}(k,0))$ by \cref{thm:wqo:bead-root_for_disc}.
    
    Let $(G_i,\pi(E_1^i),\pi(E_2^i)) \in \mathbf{G^*}(k_1,k_2)$ with respective Eulerian embedding $(\Gamma_i,\nu_i)$ for every $i \in I$. We derive that there is an infinite $I\subseteq \N$ such that they are linked; if not there is an infinite sequence $I' \subseteq I$ and for every $i \in I'$ we find a a cut-cycle $F_i^*$ in $\Gamma_i$ along which we can cut the graph using \cref{def:cutting_sep_lines} resulting in two graphs $\big((G_i^1,\pi(\tilde E_i^1), \pi(\tilde E_i^2)),(G_i^2,\pi(\tilde F_i^1), \pi(\tilde F_i^2))) = \cut\big((G_i,\pi( E_i^1), \pi( E_i^2)), \pi(F_i^*)\big)$ given a respective ordering $\pi(F_i^*)$ of $\rho(F_i^*)$. The pieces are now in $\mathbf{G^*}(k_1^i,k_2^i)$ for either $k_1^i < k_1$ or $k_2^i < k_2$ for every $i \in \N$, and since $k_1 + k_2 $ is fixed and finite, by Higman's \cref{thm:higman} applied to the arising tuples and the pigeonhole principle, we are done by induction and \cref{lem:knitting_separating_cutlines}. %

    Now assume that the sequence is bad towards a contradiction. Let $\theta \coloneqq f_{\ref{lem:fat_are_good}}(\Abs{V(G_1)})$. By \cref{lem:fat_are_good} we derive that, for every $j \geq 2$, $(G_j,\pi(E_1^j),\pi(E_2^j))$ is not $f_{\ref{lem:fat_are_good}}(\Abs{V(G_1)})$-fat.

    By \cref{lem:short_decomposition} we derive that there are $\tau_j$-choppings $\FFF_j = (F_1,\ldots,F_{t_j})$ of $\Gamma_j$ for some $\tau_j \leq f_{\ref{lem:short_decomposition}}(\theta)$ and every $j \geq 2$ such that all the pieces are linked and $\theta$-short.  Using the pigeonhole principle we may switch to a subsequence $I' \subseteq I$ such that they are all $\ell$-choppings with linked $\theta$-short pieces for some fixed $\ell \leq f_{\ref{lem:short_decomposition}}(\theta)$. In particular then $(G_i,\pi(E_i^1),\pi(E_i^2))\in \Chubby_0(\Chubby_1(\ldots (\Chubby_\ell(\AAA(\theta,k_1,k_2))))$ and thus by \cref{superchubby_wqo} we derive that there are $i,j \in I'$ with $i<j$ and $(G_i,\pi(E_i^1),\pi(E_i^2)) \hookrightarrow (G_j,\pi(E_j^1),\pi(E_j^2))$ as a contradiction to the assumption. The proof follows.
\end{proof}

\section{La Grande Inductione}\label{sec:la-grande-inductione}

With \cref{sec:high-rep}, \cref{sec:disc} and \cref{sec:cylinder} we have finally gathered all the necessary results to prove the main \cref{thm:wqo_on_surfaces_upto_beads} of this paper; we restate it for completion.
\begin{equation*}
    \text{Let $\Sigma$ be a surface and $k \in 2\N$. The class~$\mathbf{G}(\Sigma,k)$ is well-quasi-ordered by rooted strong immersion.}
\end{equation*}

The remaining proof technique for the theorem is now fairly standard by induction on the genus, where either we can apply \cref{thm:high-rep} and are done, or we find tracing cuts (and cut-cycles), i.e., simple (closed) curves that intersect our graph only in edges, offending high representativity of the embeddings along which we can cut the surface to either reduce the genus, the number of cuffs or the number of root-edges and work by induction on either. We then apply \cref{lem:knitting_nonsep_cutlines} and \cref{lem:knitting_separating_cutlines} to knit the respective immersions back together. Note that most of the additional work has been done in \cref{sec:edge-rooted}; recall \cref{def:representativity}.

\begin{proof}[of \cref{thm:wqo_edge-rooted}]

    The proof is by induction on $\Sigma$ and subsequently $k \in 2\N$. For $\Sigma$ being the disc the claim follows for every $k \in 2\N$ by \cref{thm:wqo:bead-root_for_disc} together with \cref{cor:i_am_dying}. If $\Sigma$ is the cylinder, the claim follows for every $k \in 2\N$ by \cref{thm:wqo:bead-root_for_cylinder} and again \cref{cor:i_am_dying}. Thus, let $\Sigma$ be some surface not the disc, sphere or cylinder, and let $k \in 2\N$. Suppose further that we already established that $\mathbf{G^*}(\Sigma',k')$ is well-quasi-ordered by strong immersion for every surface $\Sigma'$ ``smaller'' than $\Sigma$---lower genus or less cuffs---and every $k' \in 2\N$, and if $\Sigma' \cong \Sigma$ then suppose we established it for every $0\leq k' < k$. Let $c(\Sigma)=\{\zeta_1,\ldots,\zeta_t\}$ be the set of cuffs of $\Sigma$.

    By \cref{cor:i_am_dying} it suffices to prove the theorem for edge-rooted Eulerian digraphs.
   Let $((G_i,\pi(E_i^*))_{i \in \N}$ be a sequence with $(G_i,\pi(E_i^*)) \in \mathbf{G^*}(\Sigma,k)$ and assume the $G_i$ to be weakly connected; otherwise it follows by an additional application of \cref{thm:higman}. For every $i \in \N$ let $(\Gamma_i,\nu_i,\omega_i)$ be an Eulerian embedding of $(G_i,\pi(E_i^*))$ in $\Sigma$ and let $W_{\omega_i} = (W_1^i,\ldots,W_t^i)$ be the respective partitions of degree one vertices in $G_i$ given by $\omega_i$. By repeated application of the pigeonhole principle we may assume that  $\omega_i(\zeta_\ell) \neq \emptyset$ for all $1 \leq \ell \leq t$; note that if there were an infinite index $I_\ell \subseteq \N$ with $\omega_i(\zeta_\ell) = \emptyset$ for any $1 \leq \ell \leq t$, then the edge-rooted graphs $(G_i,\pi(E_i^*))$ admit an Eulerian embedding in $\Sigma + \Delta_\ell$ obtained by gluing a disc to the cuff $\zeta_\ell$ whence the claim follows by induction on $\Sigma$.

   \begin{claim}
       Let $I \subseteq \N$ be infinite, then there exists $j \in I$ such that the embedding $(\Gamma_j,\nu_j,\omega_j)$ is nice.
   \end{claim}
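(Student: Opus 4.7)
The plan is to argue by contraposition within the outer induction on $\Sigma$: if no $j \in I$ admitted a nice embedding, I would construct $i < j$ in $I$ with $(G_i,\pi(E_i^*)) \hookrightarrow (G_j,\pi(E_j^*))$, which is precisely the conclusion the overall theorem is trying to reach for this sequence. For each $j \in I$, failure of niceness (\cref{def:nice}) provides a $\Gamma_j$-disjoint simple curve $\gamma_j$ in $\Sigma$: either the embedding is not $2$-cell and some non-disc face contains an essential simple closed $\Gamma_j$-disjoint curve, or the embedding is $2$-cell but admits a cuff-connecting $\Gamma_j$-disjoint arc running across a single disc-face whose boundary meets two distinct cuffs. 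The earlier reduction $\omega_j(\zeta_\ell) \neq \emptyset$ for every cuff rules out cuff-surrounding $\gamma_j$ (such a curve would force the enclosed cuff to carry no root edges), so $\gamma_j$ is genuinely genus-reducing, cuff-connecting, or cuff-grouping.

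I would then cut $\Sigma$ along $\gamma_j$. Since $\gamma_j$ avoids $\Gamma_j$, no edge of $G_j$ is severed: the graph and its ordered root set $\pi(E_j^*)$ survive unchanged, and the induced drawing is an Eulerian embedding of $(G_j,\pi(E_j^*))$ in the cut surface. If $\gamma_j$ separates $\Sigma$, the weak connectivity of $G_j$ together with $\omega_j(\zeta_\ell) \neq \emptyset$ forces all of $\Gamma_j$ into a single component of $\Sigma \setminus \gamma_j$, and the orphan component (which carries no graph and no old cuff) is discarded. In every subcase the resulting surface $\Sigma_j'$ containing $G_j$ has strictly smaller Euler genus or strictly fewer cuffs than $\Sigma$, while the total number of root edges remains $k$.

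Because a simple curve in $\Sigma$ has only finitely many homeomorphism-distinct cut-outcomes, and the number of ways the old cuffs of $\Sigma$ (together with the new cut-cuff, when one is created) can be distributed across the cuffs of the cut-surface is likewise finite, I would invoke the pigeonhole principle to refine $I$ to an infinite $I' \subseteq I$ on which all $\gamma_j$ produce the same simpler surface $\Sigma'$ with the same induced distribution of root edges. Thus $(G_j,\pi(E_j^*)) \in \mathbf{G^*}(\Sigma',k)$ for every $j \in I'$. By the outer induction hypothesis on $\Sigma$---valid because $\Sigma'$ is strictly simpler---the class $\mathbf{G^*}(\Sigma',k)$ is well-quasi-ordered by strong edge-rooted immersion, so some $i < j$ in $I'$ satisfy $(G_i,\pi(E_i^*)) \hookrightarrow (G_j,\pi(E_j^*))$ in $\mathbf{G^*}(\Sigma',k)$. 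Since strong edge-rooted immersion is a property of the rooted digraph and not of its embedding, this immersion is equally valid in $\mathbf{G^*}(\Sigma,k)$, giving the comparable pair that closes the claim.

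The main technical obstacle I anticipate is the pigeonhole bookkeeping: one must track not only the homeomorphism type of $\Sigma'$ but also, for every old cuff $\zeta_\ell$ of $\Sigma$, which cuff of $\Sigma'$ it becomes (and, for separating cuts, which side of $\gamma_j$ it ends up on), in order to ensure that the ordered partition $\pi(E_j^*)$ induces a uniform cuff-labelling across $j \in I'$. The hypothesis $\omega_j(\zeta_\ell) \neq \emptyset$ is crucial here: it eliminates cuff-surrounding cuts and forces separating cuts to have trivial orphan components, so that $\Sigma'$ is genuinely simpler in the sense of the outer induction and the root-edge partition on its cuffs is uniquely determined.
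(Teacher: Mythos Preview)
Your proposal is correct and follows essentially the same approach as the paper: obtain a $\Gamma_j$-disjoint reducing curve from the failure of niceness, cut along it (without touching the graph or its roots) to land in a strictly simpler surface, and invoke the outer induction hypothesis to produce a comparable pair. The paper is terser, observing directly that connectivity together with $\omega_j(\zeta_\ell)\neq\emptyset$ forces the curve to be non-separating, so your orphan-component analysis and cuff-labelling bookkeeping are not actually needed (strong edge-rooted immersion depends only on $(G,\pi(E^*))$, not on how the roots distribute over cuffs), though they do no harm.
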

   \begin{claimproof}
       Otherwise we find an infinite sub-sequence of not nice embeddings, in particular then either the embeddings are not $2$-cell or there is a cuff-connecting curve that does not intersect $\Gamma$. In either case we find a non-separating curve $\gamma_j:[0,1] \to \Sigma$ such that $\gamma_j([0,1]) \cap \Gamma_j = \emptyset$ and cutting along $\gamma_j$ results in a simpler surface $\Sigma'$: if the embedding is not $2$-cell there is a face that is not a disc and thus there is such a curve. Note that $\gamma_j$ is non-separating for else we decompose the graph into two parts refuting the graphs being connected (every cuff $\zeta_\ell$ witnesses at least one edge drawn on it since $\omega_i(\zeta_\ell) \neq \emptyset$ for $1 \leq \ell \leq t$).
   \end{claimproof}

   By the above we may assume our embeddings to be nice by moving to a respective subsequence. Again, by repeatedly applying the pigeonhole principle, there is an infinite index set $I \subseteq N$ such that $W_{\omega_i} = W_{\omega_j}$ (after possibly renaming the vertices) and $\Abs{\rho(W_i^\ell)} = \Abs{\rho(W_j^\ell)}$ for all $1 \leq \ell \leq t$ and all $i,j \in I$; let $I = \N$ for simplicity . 

    Let $n \coloneqq \Abs{V(G_1)}$. By \cref{obs:edge-rooted_immersion_is_enough_dos} we derive that $\stitch(G_i, \pi({E_i^*}_{\omega_i})) \in \mathbf{G}(\Sigma,k)$ and by \cref{thm:high-rep} there either exists $j \in \N$ with $\stitch(G_1, \pi({E_1^*}_{\omega_1})) \hookrightarrow \stitch(G_j, \pi({E_j^*}_{\omega_j}))$, in which case we are done using \cref{obs:edge-rooted_immersion_is_enough_dos}, or there exists a function $f:\N \to \N$ such that $\fw(\Gamma_i) \leq f(k)$ for all $i \in \N$. In particular, since we assume the former to not hold, \cref{def:representativity} implies the existence of either
    \begin{itemize}
        \item a cuff surrounding curve $\gamma:[0,1] \to \Sigma$,
        
        \item a cuff shortening curve $\theta:[0,1] \to \Sigma$,
        \item a cuff connecting curve $\xi: [0,1] \to \Sigma$,
        
        \item a cuff grouping curve $\lambda:[0,1] \to \Sigma$, or
         
        \item a genus reducing curve $\eta:[0,1] \to \Sigma$
    \end{itemize}
in $\Gamma_j$ of length $< f(k)$ for every $j > 1$ (note that the embeddings of the respective stitches agree with those of $(G_j, \pi(E_j^*))$ inside $\Sigma$). Note further that the offending curves are disjoint from $\bd(\Sigma)$ up-to possibly there endpoints (if they are cuff-based for example), and clearly by slightly moving the end-point to the left or right on the respective cuff, we can thus assure that $\chi([0,1]) \cap \bd(\Sigma) \cap \Gamma = \emptyset$, in particular $\chi$ is clean for every $\chi \in \{\gamma, \theta, \xi, \lambda, \eta)$.

We briefly discuss each of the cases separately for completion; recall \cref{obs:types_of_cut-line_cuts}. 
\begin{itemize}
    \item[$\gamma$] Let $\gamma:[0,1] \to \Sigma$ be a cuff surrounding curve for $\zeta_j \in c(\Sigma)$, i.e., $F \coloneqq \gamma([0,1])$ is a clean cuff surrounding cut-line (since the embedding is nice) with $F \cap \bd(\Sigma) \subseteq \zeta_j$ bounding a unique disc $\Delta(F) \subseteq \hat \Sigma$ such that $\ell \coloneqq \delta(F) < \Abs{\nu^{-1}(\zeta_j)}$ by \cref{def:representativity} of cuff surrounding curves.

    Let $\Sigma_1,\Sigma_2$ be the two surfaces obtained after cutting $F$, then $\Sigma_1$ is a cylinder and $\Sigma_2 \cong \Sigma$ say. Let $\cut(G_i,\pi(E_i^*)) = \big((G^i_1,\pi(E_*^1)), (G^i_2,\pi(E_2^*))\big)$, with $(G^i_1,\pi(E_*^1))\in \mathbf{G^*}(\Sigma_1,k_1)$ and  $(G^i_2,\pi(E_2^*))\in \mathbf{G^*}(\Sigma_2,k_2)$ for some $k_1,k_2 \in 2\N$ satisfying $k_1+k_2 = k + 2\ell$ by \cref{lem:cutting_sep_embedding}. Further, since $\ell < \Abs{\nu^{-1}(\zeta_j)}$, we derive that $k_1 \geq 1$ and $k_2 < k$. In particular induction is applicable to both, for the former one is embedded on a cylinder and the latter one admits less roots.

     \item[$\theta$] Let $\theta:[0,1] \to \Sigma$ be cuff shortening, let $F \coloneqq \theta([0,1])$ and let $\zeta_j \in c(\Sigma)$ be the cuff containing both endpoints of $F$; notably $F$ is a clean cut-line. By definition of cuff shortening curves, $F$ is homotopic in $\Sigma\setminus \{\theta(0),\theta(1)\}$ to one of the two segments. Let $\ell \coloneqq \delta(F)$.
     
     Let $\Sigma_1,\Sigma_2$ be the two surfaces obtained after cutting $F$, then $\Sigma_1$ is a disc and $\Sigma_2 \cong \Sigma$, say. Again, let $\cut(G_i,\pi(E_i^*)) = \big((G^i_1,\pi(E_*^1)), (G^i_2,\pi(E_2^*))\big)$, then by \cref{lem:cutting_sep_embedding} $(G^i_1,\pi(E_*^1))\in \mathbf{G^*}(\Sigma_1,k_1)$ and  $(G^i_2,\pi(E_2^*))\in \mathbf{G^*}(\Sigma_2,k_2)$ for $k_1,k_2 \in 2\N$ with $k_1+k_2 = k+2\ell$. Again by \cref{def:representativity} of cuff shortening curves we derive $k_2 < k$ and the induction is applicable to both pieces, for one is embedded in the disc and the other has less roots.

     \item[$\xi$] If $\xi$ is a cuff connecting curve, $F \coloneqq \xi([0,1])$ is a clean cuff connecting  cut-line with both its ends in different cuffs $\zeta_p,\zeta_q$ for $1 \leq p,q \leq t$ (notably not in edges $\nu(\pi(W))$) in which case cutting $\Sigma$ along $F$ results in a smaller surface $\Sigma'$. Let $\ell = \delta(F)$. Applying \cref{lem:cutting_nonsep_embed} we derive $\cut(G_i,\pi(E_i^*)) \in \mathbf{G^*}(\Sigma', k+2\ell)$; induction is applicable.
     
    \item[$\lambda,\eta$] Both curves can be either separating or non-separating, in either case all resulting surfaces are of lower genus by \cref{obs:types_of_cut-line_cuts} and the remainder follows as above using \cref{lem:cutting_nonsep_embed} and \cref{lem:cutting_nonsep_embed} where induction is a gain applicable.
\end{itemize}

Finally then, since there are up to isomorphism only finitely many surfaces $\Sigma'$ such that $\hat{\Sigma}'$ has a lower genus than $\Sigma$, and (up to isomorphism) only finitely many surfaces of equal genus but less cuffs, there is an index set $I \subseteq \N$ such that the cut-lines $F_i$ are of the same type for every $i \in \N$ and cutting $\Sigma$ along $F_i$ for every $i \in \N$ results in the same surface or the same pair of surfaces up to isomorphism. Similarly, we may use the pigeonhole principle to guarantee that the number of roots is again equal for the respective surfaces (or pairs), thus effectively either passing to one or two sequences of edge-rooted Eulerian digraphs of common defect Euler-embeddable on the same surface. The theorem now follows by induction, applying \cref{lem:knitting_nonsep_cutlines} and \cref{lem:knitting_separating_cutlines} respectively depending on the exact cut-line we cut the surface along.%
\end{proof}

\section*{Acknowledgments}
The authors thank Jil D. for the beautiful pictures she drew.

\bibliographystyle{alphaurl}

\end{document}